\newlength{\fighskip} \fighskip=2pt
\newlength{\figvskip} \figvskip=3pt
\newcommand*{\figbox}[2]{{
  \def\figscale{#1}
  \def\arraystretch{0.8}
  \arraycolsep=0pt
  \begin{array}{c}
    \vbox{\vskip\figscale\figvskip
      \hbox{\hskip\figscale\fighskip
        \includegraphics[scale=\figscale]{#2}}}
  \end{array}}}
\newcommand\void[1]       {}
\newcommand{\A}[2]{A_{\ v_{#2}}^{v_{#1}}}
\newcommand{\Amm}[2]{A^{\ v_{#2}}_{v_{#1}}}
\newcommand{\As}[3]{{A_{#3}}_{v_{#1} v_{#2}}}
\newcommand{\Aw}[2]{A_{v_{#1} v_{#2}}^{e_{#1#2}}} 
\newcommand{\B}[3]{B^{e_{#1#2} e_{#2#3}}_{v_{#1} v_{#2} v_{#3}; e_{#1#3} }}
\newcommand{\Bmm}[3]{B_{e_{#1#2} e_{#2#3}}^{v_{#1} v_{#2} v_{#3}; e_{#1#3} }}
\newcommand{\Bs}[4]{{B_{#4}}^{e_{#1#2} e_{#2#3} e_{#1#3}}_{v_{#1} v_{#2} v_{#3}}}
\newcommand{\C}[4]{C^{e_{#1#2} e_{#1#3} e_{#1#4} e_{#2#3} e_{#2#4} e_{#3#4};\phi_{#1#2#3}\phi_{#1#3#4}}_{v_{#1} v_{#2} v_{#3} v_{#4}; \phi_{#1#2#4}\phi_{#2#3#4}}}
\newcommand{\Cmm}[4]{C_{e_{#1#2} e_{#1#3} e_{#1#4} e_{#2#3} e_{#2#4} e_{#3#4};\phi_{#1#2#3}\phi_{#1#3#4}}^{v_{#1} v_{#2} v_{#3} v_{#4}; \phi_{#1#2#4}\phi_{#2#3#4}}}
\newcommand{\Cs}[5]{{C_{#5}}^{e_{#1#2} e_{#1#3} e_{#1#4} e_{#2#3} e_{#2#4} e_{#3#4}}_{v_{#1} v_{#2} v_{#3} v_{#4}; \phi_{#2#3#4}\phi_{#1#2#4}\phi_{#1#3#4}\phi_{#1#2#3}}}
\newcommand\hBF[1]           {BF$^\text{H}_{#1}$}     
\newcommand\lBF[1]           {BF$^\text{L}_{#1}$}     
\newcommand\BF           {\mathrm{BF}}     
\newcommand\MBF        {\mathrm{MBF}}
\newcommand\TO           {\mathrm{TO}}     
\newcommand\aTO           {\mathrm{aTO}}     
\newcommand\id            {\mathrm{id}}
\newcommand\op          {\mathrm{op}}
\newcommand\ob          {\mathrm{Ob}}
\newcommand\ev          {\mathrm{ev}}
\newcommand\coev      {\mathrm{coev}}
\newcommand\End    {\mathrm{End}}
\newcommand\aut      {\mathcal{A}\mathrm{ut}}
\newcommand\hilb   {\EuScript{H}\mathrm{ilb}}
\newcommand\vect    {\EuScript{V}\mathrm{ect}}
\newcommand\rep     {\EuScript{R}\mathrm{ep}}
\newcommand\fun     {\EuScript{F}\mathrm{un}}
\newcommand\Set    {\EuScript{S}\mathrm{et}}
\newcommand\TC    {\mathrm{TC}}
\newcommand\fact   {\mathrm{Fact}}
\newcommand\lw     {\mathrm{LW}}
\newcommand\prebf  {\mathrm{BF}^{pre}}
\newcommand\prembf {\mathrm{MBF}^{pre}}
\newcommand\core   {\mathrm{core}}
\newcommand{\bulk}   {\underline{\sl bulk} }
\newtheoremstyle{wenthm}
  {3pt}
  {3pt}
  {\slshape}
  {}
  {\bfseries}
  {:}
  {.5em}
  {}
\theoremstyle{wenthm}
\newtheorem{thm}{Theorem}
\newtheorem{prop}{Proposition}
\newtheorem{cor}{Corollary}
\newtheorem{lemma}{Lemma}
\newtheorem{conj}{Conjecture}
\theoremstyle{definition}
\newtheorem{defn}{Definition}
\newtheorem{rema}{Remark}
\newtheorem{expl}{Example}
\newtheorem{app}{Application}
\newcommand\nn             {\nonumber \\}
\newcommand\bea           {\begin{eqnarray}}
\newcommand\eea         {\end{eqnarray}}
\newcommand\bnu          {\begin{enumerate}}
\newcommand\enu          {\end{enumerate}}
\newcommand\bearll        {\begin{array}{ll}\displaystyle}
\newcommand\eear          {\end{array}}
\newcommand{\pf}{\begin{proof}}
\newcommand{\epf}{\end{proof}}
\newcommand\Zb            {\mathbb{Z}}
\newcommand\EA           {\EuScript{A}}
\newcommand\EB           {\EuScript{B}}
\newcommand\EC           {\EuScript{C}}
\newcommand\ED           {\EuScript{D}}
\newcommand\EE          {\EuScript{E}}
\newcommand\EF          {\EuScript{F}}
\newcommand\EL          {\EuScript{L}}
\newcommand\EM          {\EuScript{M}}
\newcommand\EN         {\EuScript{N}}
\newcommand\ER         {\EuScript{R}}
\newcommand\cM          {\mathcal{M}}
\newcommand\rH            {\mathrm{H}}
\newcommand\rw            {\mathrm{w}}
\newcommand\ksim        {\overset{k}{\sim}}
\newcommand\rwsim      {\overset{\mathrm{w}}{\sim}}
\newcommand\gwsim     {\overset{\mathrm{gw}}{\sim}}
\begin{document}

\begin{titlepage}

\title{Braided fusion categories, gravitational anomalies, and\\
the mathematical framework for topological orders in any dimensions}

\author{Liang Kong} 
\affiliation{Department of Mathematics \& Statistics, University of New Hampshire, Durham, NH, 03824, USA}
\affiliation{Institute for Advanced Study, Tsinghua University, Beijing, 100084, P. R. China}
\author{Xiao-Gang Wen} 
\affiliation{Perimeter Institute for Theoretical Physics, Waterloo, Ontario, N2L 2Y5 Canada} 
\affiliation{Department of Physics, Massachusetts Institute of Technology, Cambridge, Massachusetts 02139, USA}

\begin{abstract} 
Topological order describes a new kind of order in gapped quantum liquid states
of matter that correspond to patterns of long-range entanglement, while
gravitational anomaly describes the obstruction that a seemingly consistent low
energy effective theory cannot be realized by any well defined quantum model in
the same dimension.  Amazingly, topological order and gravitational anomaly
have a very direct relation: gravitational anomalies can be realized on the
boundary of topologically ordered states in one higher dimension and are
described by topological orders in one higher dimension.  
In this paper, we try to develop a general theory for topological order and
gravitational anomaly in any dimensions.  (1) We introduce the notion of BF
category to describe the braiding and fusion properties of topological
excitations that can be point-like, string-like, etc. A subset of BF categories
-- closed BF categories -- classify topological orders in any dimensions, while
generic BF categories classify (potentially) anomalous topological orders that
can appear at a boundary of a gapped quantum liquid in one higher dimension.
(2) We introduce topological path integral based on tensor network to realize
those topological orders.  (3) Bosonic topological orders have an important
topological invariant: the vector bundles of the degenerate ground states over
the moduli spaces of closed spaces with different metrics. They may fully
characterize topological orders.  (4) We conjecture that a topological order
has a gappable boundary iff the above mentioned vector bundles are flat.  (5)
We find a holographic phenomenon that every topological order with a gappable boundary can be uniquely determined by
the knowledge of the boundary. As a consequence, 
BF categories in different dimensions form a (monoid) cochain
complex, that reveals the structure and relation of topological orders and
gravitational anomalies in different dimensions. We also studied the
simplest kind of bosonic topological orders that have no non-trivial topological
excitations.  We find that this kind of
topological orders form a $\Zb$ class in 2+1D (with gapless edge), a $\Zb_2$ class in 4+1D (with gappable boundary),
and a $\Zb\oplus \Zb$ class in 6+1D  (with gapless boundary).

\end{abstract}

\pacs{11.15.-q, 11.15.Yc, 02.40.Re, 71.27.+a}

\maketitle

\end{titlepage}

{\small \setcounter{tocdepth}{1} \tableofcontents }

\section{Introduction of topological order}
\label{intro}

In 1989, through a theoretical study of chiral spin
liquid,\cite{KL8795,WWZ8913} we realized that there exists a new kind of order
-- topological order\cite{Wtop,WNtop,Wrig} -- beyond Landau symmetry breaking
theory.  Topological order cannot be characterized by the local order
parameters associated with the symmetry breaking.  However, topological order
can be characterized/defined by (a) the topology-dependent ground state
degeneracy\cite{Wtop,WNtop} and (b) the non-Abelian geometric phases of the
degenerate ground states\cite{Wrig,KW9327}, where both of them are \emph{robust
against any local perturbations} that can break any symmetries.\cite{WNtop}
This is just like superfluid order is characterized/defined by zero-viscosity
and quantized vorticity that are robust against any local perturbations that
preserve the $U(1)$ symmetry.  

We know that, microscopically, superfluid order is originated from boson or
fermion-pair condensation.  Then, what is the  microscopic origin of
topological order?  What is  the  microscopic origin of robustness against
\emph{any} local perturbations?  Recently, it was found that, microscopically,
topological order is related to long-range entanglement.\cite{LW0605,KP0604} In
fact, we can regard topological order as pattern of long-range entanglement in
many-body ground states,\cite{CGW1038} which is defined as the equivalent
classes of stable gapped  quantum liquid\cite{ZW14} states under local unitary
transformations.\cite{LWstrnet,VCL0501,V0705}
The notion of topological orders and many-body
quantum entanglement leads to a new point of view of quantum phases and quantum
phase transitions:\cite{CGW1038} for bosonic Hamiltonian quantum systems
without any symmetry, their gapped quantum liquid phases\cite{ZW14} can be
divided into two classes: short-range entangled (SRE) states and long-range
entangled (LRE) states.

SRE states are states that can be transformed into tensor-product states via
local unitary transformations. All SRE states belong to the same phase.  LRE
states are states that cannot be transformed into  tensor-product states via
local unitary transformations. LRE states can belong to different quantum
phases, which are nothing but the topologically ordered phases.  Chiral spin
liquids\cite{KL8795,WWZ8913}, integral/fractional quantum Hall
states\cite{KDP8094,TSG8259,L8395}, $\Zb_2$ spin
liquids\cite{RS9173,W9164,MS0181}, non-Abelian fractional quantum Hall
states\cite{MR9162,W9102,WES8776,RMM0899}, etc., are examples of topologically
ordered phases.

Topological order and long-range entanglement are truly new phenomena. They
require new mathematical language to describe them.  It appears that tensor
category theory\cite{FNS0428,LWstrnet,CGW1038,GWW1017,KK1251,GWW1332} and
simple current algebra\cite{MR9162,BW9215,WW9455,LWW1024} (or pattern of zeros
\cite{WW0808,WW0809,BW0932,SL0604,BKW0608,SY0802,BH0802,BH0802a,BH0882}) may be
part of the new  mathematical language.  Using tensor category theory, we have
developed a systematic and quantitative theory for topological orders with
gapped edge for 2+1D interacting boson and fermion
systems.\cite{LWstrnet,CGW1038,GWW1017,GWW1332} For 2+1D topological orders
(with gapped or gapless edge) that have only Abelian statistics, we find that
we can use integer $K$-matrices to classify them and use the following $U(1)$
Chern-Simons theory to describe
them\cite{BW9045,R9002,FK9169,WZ9290,BM0535,KS1193}
\begin{align}
\label{csK}
 {\cal L}= \frac{K_{IJ}}{4\pi} a_{I\mu} \prt_\nu a_{J\la}\eps^{\mu\nu\la} .
\end{align}

\section{A summary of main results/conjectures}

In this paper, we try to develop a general theory for topological order and
gravitational anomaly for local bosonic quantum systems in any
dimensions.\cite{W1313}  We would like to consider the following basic issues:
\begin{enumerate}
\item
How to classify topological orders in any dimensions.  (Previous works have
classified topological orders in 1+1 space-time dimensions: there is no
nontrivial topological order in 1+1D.\cite{VCL0501,CGW1107}.  The 2+1D
topological orders with gappable boundary are classified by unitary fusion
category.\cite{LWstrnet,CGW1038,KK1251, kong-anyon,LW1384}) 
\item
How to classify anomalous topological orders that can only appear on a boundary
of a gapped system, but cannot be realized by any well-defined system in the
same dimension?  (In this paper, we follow the tradition to use the term
``topological order'' to mean anomaly-free topological order.)
\item
Given a low energy effective theory
 (for example given the data that describes the fusion
and braiding of topological excitations), how to determine if it is anomalous or
anomaly-free? Can we realize a given set of fusion and braiding properties 
by a well defined model in the same dimension?
\item
Given an anomaly-free topological order, how to determine if its boundary can
be gapped or not? (See \Ref{KS1193,WW1263,L1355,K1354} for discussions about
the gappable boundary of 2+1D Abelian topological orders and \Ref{kong-anyon}
for general 2+1D topological orders.)
\item
Given two bosonic Hamiltonians, how to determine if their ground states have
the same topological order or
not?\cite{Wrig,KW9327,W1221,ZGT1251,TZQ1251,ZMP1233,CV1308,HMW1457}
\end{enumerate}
In this paper, we try to address the above issues.  Let us first summarize the
main results of this paper. They include
\begin{enumerate}
\item We define BF category, closed BF category, and exact BF category in
various dimensions based on higher category theory. We will explain why the
structures of an higher category automatically encode the information of the
fusion and braiding of topological excitations which can be point-like,
string-like, etc (see Sections \ref{topexc} and \ref{sec:math-def}). These
definitions are based on many intuitive physical consideration, and are
conjectural and incomplete.  The precise definition is not important to us at
the current stage.  What is important to us is the general framework we
provide, and how conjectures and physically important questions can be
formulated in this framework.  These conjectures and questions will serve as a
blueprint for future studies. 

We argue that the closed BF categories (defined in higher category theory)
classify anomaly-free topological orders (defined in many-body wave functions). 
The exact BF categories classify topological orders with gappable boundary, and
the BF categories classify all potentially anomalous topological orders
(that can be realized on the boundary of well defined quantum models in one
higher dimension), except a small class of anomalous topological orders
described by unquantized gravitational Chern-Simons terms in $4k+3$ space-time
dimensions.  See Section \ref{ceBF}.
\item 
We show that the above three kinds of BF categories in different dimensions
form a (commutative-monoid-valued) cochain complex (just like closed and exact
differential forms form a cochain complex). 
As a result, all the topological orders with gappable boundary are fully
characterized (in a many-to-one fashion) and classified by anomalous
topological orders in one lower dimension.  
\item The perturbative and global gravitational anomalies (except those
described by unquantized gravitational Chern-Simons terms in $4k+3$ space-time 
dimensions) 
are classified by closed BF categories in one higher dimension (see Section
\ref{ganm}).
\item We develop a tensor network approach that produce a large class of exact
topological orders \emph{in any
dimensions}.  We also use tensor networks in one higher dimension to produce (a
large class of) topological orders and anomalous topological orders in any
dimensions (see Section \ref{TNappr}). 
\item
As an application of the developed theory, we studied the simplest bosonic
topological orders that have no non-trivial topological excitations.  We find
that this kind of topological orders form a $\Zb$ class in 2+1D, a $\Zb_2$
class in 4+1D,\cite{K1467,K1459} and a $\Zb\oplus \Zb$ class in 6+1D (see
Section \ref{invTop}).\cite{FT1292,F14,freed2014} The boundary of $\Zb$-class
topological orders must be gapless (with perturbative gravitational anomalies),
while the boundary of $\Zb_2$-class topological orders can be gapped. But such
a gapped boundary must be topological, which contains non-trivial topological
excitations and has global gravitational anomalies.
\item
As another application, we show that, for a 2+1D bosonic topological order, the
chiral central charge $c$ of the edge state must satisfy $c D_g/2 \in \Z$ for
$g>2$, where $D_g$ is the ground state degeneracy on genus $g$ surface.  
\end{enumerate}

The above main results are built upon many new concepts and results.
In the following, we will summarize them in detail.

\subsection{Braided fusion category}

In this paper, we will only consider local (short-range interacting) bosonic
quantum systems with a finite gap.  
To develop a theory of topological order in
$n+1$-dimensional space-time, we assume that such \emph{a topologically ordered
phase (a gapped phase) is characterized by the gravitational responses, as well
as the topological properties of its topological excitations of spatial
dimension $p$ for $0\leq p \leq n-1$ (such as particle-like, string-like, and
membrane-like excitations).} The gravitational responses includes the thermal
Hall effect in 2+1D (which is related to the chiral central charge of the edge
states).  The topological properties of the topological excitations include
their fusion and braiding properties.  The collection of all those topological
properties defines a categorical notion, which generalizes the usual
mathematical notion of braided tensor category and will be called a
\textbf{BF$_{n+1}$ category} (see Section \ref{topexc}), where ``B" stands for
``braiding", ``F" stands for fusion and the subscript always means the
space-time dimension.  In physics, the term ``BF category'' is synonymous to
``gapped effective theory''.  
In other word, a ``gapped effective theory'' is really a collection of data
that describes the fusion and the braiding of topological excitations.

The main result of this paper is to develop an mathematical definition of
BF$_{n+1}$ category, which will allow us to develop a general theory for
topological order and gravitational anomaly (perturbative and global) in any
dimensions. We will first try to define BF$_{n+1}$ category physically, trying
to bring in relevant concepts for the definition  (see Section
\ref{sec:univ-prop-hbfcat}).  Then we will define BF$_{n+1}$ category
mathematically using the $n$-category theory\cite{bd,baez,K1007,KT1321} (see
Section \ref{sec:math-def}). 

To have a simple understanding of the mathematical definition of BF$_{n+1}$
category, we can start with a class of 0-categories -- Hilbert spaces. A
1-category is a category enriched by 0-categories. Namely, it has a set of objects and 
a hom space $\hom(a,b)$ (or a space of arrows $\{ a \to b \}$) for each ordered pair of objects $(a, b)$, 
and each hom space is a 0-category, i.e. a Hilbert space. A 1-category with only one
object $\ast$ can describe a 0+1D quantum system, and the space of morphisms
$\hom(\ast,\ast)$ is the local operator (observables) algebra of the quantum
system (see Section \ref{0dTO}). An morphism in $\hom(a,b)$ (or an arrow $a\to b$) 
can also be viewed as a defect in the time direction, i.e. an instanton. 
A 2-category is a category enriched by
1-categories. More precisely, a 2-category consists of a set of objects (or 0-morphisms), a set of 1-morphisms $\{ a \to b \}$ from object $a$ to object $b$ and a set of 2-morphisms $\{ x\Rightarrow y\}$ for 1-morphisms $x,y:a \to b$. The full hom space $\hom(a,b)$ between $a$ and $b$, consisting all 1-morphisms from $a$ to $b$ and all 2-morphisms between these 1-morphisms, form a 1-category. 
A 2-category with one object $\ast$ and additional assumptions on unitarity
describes a 1+1D gapped quantum systems.  It contains the information of the
fusion of point-like excitations in the 1+1D systems.  The point-like
excitations are described by the 1-morphisms from $\ast \to \ast$ and the fusion of point-like excitations are described by composition of arrows (see Section\ref{1dTO}).  
2-morphisms are fusion/splitting channels of the point-like excitations and can also be viewed as instantons as they are defects in the time direction. 
A 3-category is a category enriched by 2-categories. A 3-category
with one object and additional assumptions describes a 2+1D gapped quantum
system.  It contains the information of the fusion and the braiding of string-like excitations (1-morphisms), point-like excitations (2-morphisms) and instantons (3-morphisms) in the 2+1D systems (see Section \ref{preM}).  
More generally, the notion of an $(n+1)$-category automatically
includes the fusion and braiding structures of excitations of all codimensions in an $(n+1)$ space-time dimensional system (see Section\,\ref{sec:fb-in-ncat}). 
One does not need to mention fusion and braiding at all. 

In addition to BF$_{n+1}$ category, we will also introduce the notion of a
$\prebf_{n+1}$-category, which can describe (not in a minimal way) many
interesting constructions of topological orders from concrete models, and that
of a $\prembf_{n+1}$-category, which can describe multiple phases connected by
gapped domain walls, including the gapped boundary cases. 


\medskip \noindent {\bf Terminology of dimensions}: Both space-time dimensions
and spatial dimension will be used in this work. To avoid confusion, we will
always try to make it clear which one we mean. In general, by a $p$-dimensional
topological excitation or defect, we always mean the spatial dimension; by an
$n$-dimensional topological order, we always mean the space-time dimension.  To
avoid confusion, we will also use the term: an $l$-codimensional excitation.
The subscript $n$ in $\BF_n$-category always means the space-time dimension.
Sometimes we will use $n+1$ for space-time dimensions instead of $n$ for the
obvious reason. 

\subsection{Simple, composite, and elementary topological excitations}
\label{simpcomp}

The excitations above a topologically ordered ground state play a key role in
developing a definition  $\BF_n$-category (and topological order).  To use
those excitations to define a $\BF_n$-category, we introduced the notion of
\emph{topological} excitations which can be point-like, string-like,
membrane-like, etc.  We discussed the notions of simple and composite
topological excitations (see Section \ref{topexc})  We also introduced the
notion of elementary topological excitations (see Section \ref{unbele}.) 

We conjecture that topological orders are fully determined via the fusion and
braiding properties of the elementary topological excitations alone (plus the
gravitational responses). (See Section \ref{sec:math-def}.) This allows us to
identify a special type of $n$-categories -- $\BF_n$-categories -- that
describe/define the topological orders.

\subsection{Stacking operation and tensor product $\boxtimes$} 
\label{tprod}

Let us use $\TO_n$ to denote a (anomaly-free) topologically ordered phase and
$\aTO_n$ to denote a potentially anomalous topologically ordered phase, in
$n$-dimensional space-time.  Clearly, the set of (anomaly-free) topologically
ordered phases $\{\TO_n\}$ is a subset of potential anomalous topologically
ordered phases $\{\aTO_n\}$.  Both sets $\{\TO_n\}$ and $\{\aTO_n\}$ admit a
multiplication operation: we can stack two physical systems that realize two
topological orders ($\EC_n^1$ and $\EC_n^2$) to obtain a double layer system
that realizes another topological order.  Such a stacking operation is a
symmetric tensor product $\boxtimes$: $\EC_n^1\boxtimes \EC_n^2=\ED_n$.  In
general, a topological order may not have an inverse.  So the two sets
$\{\TO_n\}$ and $\{\aTO_n\}$, with the stacking $\boxtimes$, form commutative
monoids. (A monoid is like a group except that some elements may not have
inverse.  See Section \ref{mnd}).

We like to point out that some topological orders do have an inverse under the
stacking operation $\boxtimes$, which are called invertible.\cite{FT1292,F14}
The collection of all invertible topological orders form a group under
the stacking $\boxtimes$.


An example of 
invertible \emph{anomalous} topological orders, is described by an effective
theory given by a gravitational Chern-Simons 3-form with a \emph{unquantized}
coefficient $\ka_{gCS}$.  $\ka_{gCS}$ generates a unquantized thermal Hall
conductivity.\cite{KF9732,HLP1242} We denote such anomalous topological orders
as $\aTO_3^{\ka_{gCS}}$ and call them gCS anomalous topological orders.  Under
the stacking, we have $\aTO_3^{\ka_{gCS}}\boxtimes \aTO_3^{\t
\ka_{gCS}}=\aTO_3^{\ka_{gCS}+\t \ka_{gCS}}$.  So such invertible anomalous
topological orders form an Abelian group isomorphic to the real numbers.  \

The gCS anomalous topological orders only appear in $4k+3$ space-time
dimensions.  They are all described by  gravitational Chern-Simons forms which
exist only in $4k+3$ space-time dimensions.  It is not entirely clear to us how
to include such gCS anomalous topological orders in our BF category approach.
So in this paper,
we will take a quotient. More precisely, we will use the term ``anomalous
topological orders'' to refer to the quotient $\{\aTO_n\}/\{\text{gCS anomalous
topological orders}\}$, which is also a monoid. The set of BF categories,
defined as higher categories, form a monoid as well. We conjecture that the two
monoids are isomorphic:
\begin{align}
\label{topoBF}
&\ \ \ \ 
 \{\text{BF categories}\} 
\nonumber\\
&\simeq  \frac{\{\text{potentially anomalous topological orders}\}}{\{\text{gCS anomalous topological orders}\}}.
\end{align}
This is a key expression of this paper. It relates a mathematical construction
(BF category) to a physical phenomenon (topological order on a boundary). 


Group structure can also be recovered on the certain quotient of the monoid of
the $\BF_n$ categories.  We introduce two equivalence relations $\sim$ and
$\rwsim$ between two $\BF_n$ categories.  Two $\BF_n$ categories $\EC_n$ and
$\ED_n$ are called quasi-equivalent if $\EC_n \sim \ED_n$ and Witt equivalent
if $\EC_n \rwsim \ED_n$. Witt equivalence $\EC_n \rwsim \ED_n$ means that two
corresponding phases $\EC_n$ and $\ED_n$ can be connected by a gapped domain
wall.  If the domain wall is not only gapped, but its topological excitations
also all come from the dimension reduction of the topological excitations in an
$n$-dimensional BF categories, 
then we say  $\EC_n \rwsim \ED_n$. 
Under $\boxtimes$ operation, the equivalence classes of BF$_{n}$ categories
under  $\sim$ or $\rwsim$ form Abelian groups (see Section \ref{wgrp}).

\subsection{Two versions of quantum theories}

We point out that local bosonic quantum theory has two versions: Hamiltonian
version and the Lagrangian version. The two versions are really different
theories.  The  Hamiltonian version will be called {\it local bosonic
Hamiltonian quantum} (lbH) {\it theory} which are described by lattice bosonic
Hamiltonian with short range interactions.  The Lagrangian version will be
referred to as {\it local bosonic Lagrangian quantum} (lbL) {\it theory}, which
are described by local bosonic path integral with short range interactions (see
Appendix \ref{path}).  As a result, there are two version of BF categories,
which will be referred to as H-type \hBF{n} category (for lbH theory) and
L-type \lBF{n} category (for lbL theory).

In this paper, \emph{topological order} and \emph{long-range entanglement}
belong to the Hamiltonian version of quantum theory and are described by
\hBF{n} category.  While topological quantum field theories (TQFT) studied in
high energy physics and mathematics mostly belong to the Lagrangian version of
quantum theory and are associated with \lBF{n}  category.  





\subsection{The boundary-bulk relation}

Not all the locally consistent sets of topological properties (\ie not
all \hBF{n} categories or not all gapped effective theories) can be realized by
lattice qubit models (\ie the lbH systems) in the same dimension (see Section
\ref{ganm}).  Those \hBF{n} categories, which are not realizable by lattice
models in the same dimension, are called \textit{anomalous} \hBF{n} categories.
We argue that (1) a generic \hBF{n} category $\EC_n$ (or a potentially
anomalous gapped effective theory) in $n$-dimensional space-time can always be
realized by a boundary of a lattice qubit model in $(n+1)$-dimensional
space-time whose bulk realizes another \hBF{n+1} category $\EC_{n+1}$.  (2)
$\EC_{n+1}$ is uniquely determined by $\EC_n$. Therefore, we introduce the
notion of the \bulk of $\EC_n$ (see Definition\,\ref{def:bulk}), denoted by
$\cZ_n(\EC_n)$ and defined by $\cZ_n(\EC_n):=\EC_{n+1}$ (see
Lemma\,\ref{lemma:unique-bulk}). Clearly, under such a definition, the \bulk of
a \bulk is trivial: $\cZ_{n+1}(\cZ_n(\EC_n))=\one_{n+2}$ (see Section
\ref{cfun}). 

Physically, an topological order in $n+1$-dimensional space-time is defined as
an equivalent class of many-body wave functions (see Section \ref{topdef}).  If
a topological order can have a gapped boundary, then it can have many different
types of gapped boundary, described by different anomalous topological orders
in $n$-dimensional space-time.  However, for a given boundary anomalous
topological order, there can be only one unique bulk topological order.  This
has a flavour of holographic principle: the topological class of the surface
part of a many-body wave functions determines the topological class of the
whole bulk many-body wave functions.  Thus we have a mapping \bulk: boundary
topological orders $\to$ bulk topological orders. We see that the \bulk
operator has a geometric meaning of describing a boundary-bulk relation of a
many-body wave function. 

Since topological order can be described by an algebraic structure -- BF
categroy.  The geometrically or physically defined \bulk operator corresponds to 
an algebraic construction of {\it center} in
category theory.  In fact, we will show in \Ref{kong-wen-zheng} that the notion
of the \bulk is equivalent to a mathematical and a purely algebraic notion of
the center.  Such a connection between a geometric notion of \bulk and an
algebraic notion of center is quite amazing and deep, and was confirmed in
2+1D.\cite{KK1251,LW1384,fsv}

Similarly, we can also define a notion of the \bulk for \lBF{n}
categories, which also satisfy $\cZ_{n+1}(\cZ_n(\EC_n))=\one_{n+2}$.

\subsection{Closed and exact BF categories}

If a \hBF{n} category $\EC_n$ can be realized by a lattice model in the same
dimension (\ie the \bulk $\EC_{n+1}$ is trivial), such a \hBF{n} category is
said to be \emph{closed} (and the corresponding gapped effective theory is said
to be \emph{free of anomaly}).  In other words, $\EC_n$ is closed iff
$\cZ_n(\EC_n)=\one_{n+1}$.  If the qubit model that realizes the closed \hBF{n}
category $\EC_n$ also has a gapped boundary, which is described by a \hBF{n-1}
category $\EC_{n-1}$ in one lower dimension, then the \hBF{n} category $\EC_n$
is said to be exact.  In other words, $\EC_n$ is exact iff there exists a
$(n-1)$-dimensional \hBF{n-1} category $\EC_{n-1}$ such that
$\EC_n=\cZ_{n-1}(\EC_{n-1})$ (see Section \ref{ceBF}).  Similarly, we can also
define closed/exact \lBF{n}  categories.  

\begin{rema} \label{LHclosed}
A closed/exact \lBF{n} category is automatically a closed/exact \hBF{n} category.  More precisely, we have the monoid homomorphism
\begin{align} 
 \{\text{closed/exact }\BF^{\text{L}}_n \text{ Cat.}\} \to \{ \text{closed/exact }\BF^{\text{H}}_n  \text{ Cat.} \}   \nonumber
\end{align}
A non-trivial closed  \lBF{} category might correspond to a trivial  closed
\hBF{n} category.  
%
Mathematically, a closed \lBF{n} category may correspond to an
$n$-$(n-1)$-$\cdots$-$1$-$0$ fully extended TQFT, while a closed \hBF{d}
category may correspond to an ``x-$(n-1)$-$\cdots$-$1$-$0$ extended
TQFT'',\cite{W06,morrison-walker-1108,morrison-walker-1009,S1430} 
although the definitions of the those concepts are quite different.
An ``x-$(n-1)$-$\cdots$-$1$-$0$ extended
TQFT'' is defined as a theory where
we can assign a Hilbert space (the space of
degenerate ground states) to every closed orientable $d$-manifold, but we do
not require the path integral to be well defined for every closed orientable
$d$-dimensional space-time manifold.  We only require the path integral to be
well defined for every closed orientable $d$-dimensional mapping torus (a
mapping torus is a fiber bundle over $S^1$, where the fiber is the space and
$S^1$ is the time).  In \Ref{W06,morrison-walker-1108,morrison-walker-1009}, it was shown that any finite unitary
x-$(n-1)$-$\cdots$-$1$-$0$ extended TQFT extends to an
$n$-$(n-1)$-$\cdots$-$1$-$0$ fully extended TQFT.  So it is also possible that
a closed \lBF{d} category is the same as a closed \hBF{d} category.
\end{rema}

Although the BF categories sounds abstract, in low dimensions, they correspond
to some well known tensor categories.  Let us give some examples.  In 1+1D, the
closed and the exact \hBF{2} categories are always trivial (see Section
\ref{TN2D}).  The generic \hBF{1+1} categories are unitary fusion 1-categories
(UFC) which are always anomalous except the trivial UFC, i.e. the category
$\hilb$ of finite dimensional Hilbert spaces (see Section \ref{FZ2} and Example
\ref{C2FZ2}).  

In 2+1D, we believe that the closed \hBF{3} categories
are classified by the unitary modular tensor categories (UMTC), up to
some $E_8$ quantum Hall states $\EC_3^{E_8}$.\cite{WW1132}
Or more precisely, there is a many-to-one surjective map that maps the set of
closed \hBF{3} categories to the set of  UMTCs, and the kernel of
the map is the set of the $E_8$ quantum Hall states:
\begin{align}
& \one_3 \to \{ (\EC_3^{E_8})^{\boxtimes n}\} 
 \to \{\text{closed }\BF^{\text{H}}_3\text{ categories}\} \nonumber \\
&\hspace{4.5cm}  \to \{\text{UMTCs}\} \to \one_3  \label{diag:E8-umtc}
\end{align}
where the arrows are monoid homomorphisms. 

The exact \hBF{3} and \lBF{3} categories are the monoidal center of UFC's.  The
monoidal center $\cZ$ actually maps a \hBF{2} category to an exact \hBF{3}
category.  The unitary braided fusion categories (which may not be modular) are
examples of generic (potentially anomalous) \hBF{3} and \lBF{3} categories (see
Section \ref{preM}).

\subsection{Gravitational anomaly and its classification}

In this paper, as in \Ref{W1313}, we define gravitational anomaly as the
obstruction that a set of seemly consistent low energy properties cannot be
realized by any well defined quantum models in the same dimension.  If the  low
energy properties cannot be realized by any well defined bosonic Hamiltonian
models, we say there is a H-type gravitational anomaly.  If the  low energy
properties cannot be realized by any well defined local bosonic path integrals,
we say there is a L-type gravitational anomaly (see Section \ref{ganm}).  

Because a potentially anomalous theory (gapped or gapless) can always be
realized as a boundary of a gapped state in one-higher dimension (see
Corollary
\ref{cor:dim-reduction}), and because
the theory in one-higher dimension are described by closed BF category,  we see
that gravitational anomalies and closed BF categories (\ie topological orders)
in one-higher dimension are closely related.\cite{W1313} More precisely,
anomaly-free topological orders (or closed BF categories), gravitational anomalies, and gCS anomalous  topological orders
form three monoids, and we have a short exact sequence of
monoid homomorphisms
\begin{align}
\one \to 
& \{ d+1\text{D gCS anomalous  topological orders} \}
\nonumber\\
& \to \{ d+1\text{D gravitational anomalies} \}
\nonumber\\
& \to \{ \text{closed BF}_{d+2} \text{ categories} \}\to \one
\end{align}
Note that $d+1$D gCS anomalous topological orders only appear in $d+1=4k+3$.
In other dimensions, $d+1$D gravitational anomalies are fully classified by
closed BF$_{d+2}$ categories.  In 2+1D, the only gravitational anomaly that is
not classified by closed BF$_{4}$ categories is the one described by a
unquantized gravitational Chern-Simons term.

%
 
\subsection{A classification of topological order}

Since all possible topological orders in lattice qubit models are described by
the closed \hBF{n} categories, the closed \hBF{n} categories classify bosonic
topological orders (and anomaly-free gapped effective theories).  Restricting
\eqn{topoBF} to closed \hBF{} categories, we obtain a monoid isomorphism
\begin{align}
\{ \text{Topological orders}\} \simeq \{\text{closed }\BF^H_n \text{ categories} \}.
\end{align} 
Again, we have an expression that relates a mathematical construction (closed
\hBF{} category) to a physical phenomenon (topologically ordered phases or
long-range entanglement). 

Similarly, the exact \hBF{n} categories classify topological orders with
gappable boundary.  We have a monoid isomorphism
\begin{align}
&\ \ \ \
 \{ \text{Topological orders with gappable boundary} \}
\nonumber\\
&\hspace{2cm} \simeq  \{\text{exact }\BF^H_n \text{ categories} \}.
\end{align}

\subsection{Monoid-cochain complex and cochain complex}

The sequence $\cdots \overset{\cZ_{n-1}}{\rightarrow} \EC_n
\overset{\cZ_n}{\rightarrow} \EC_{n+1} \overset{\cZ_{n+1}}{\rightarrow} \cdots
$ and the fact that $\cZ_{n}\cZ_{n-1}=0$ imply that the BF categories (\ie
\hBF{} categories or \lBF{} categories) in different dimensions form a
monoid-cochain complex,\cite{F1291} where taking the \bulk (or the center)
$\cZ_n(\cdot)$ acts like the ``differential'' operator that maps a BF category
to another BF category in one-higher dimension.  We also show that the
equivalence classes of BF categories in different dimensions (under the
equivalence relation $\sim$ mentioned in Section \ref{tprod}) form a cochain
complex.  Such monoid-cochain complex and cochain complex reveal the structure
and connection between gravitational anomalies, BF categories, and topological
orders in different dimensions (see Section \ref{mcc}).  The cohomology classes
of the cochain complex, $\rH^n=\text{ker}(\cZ_n)/\text{img}(\cZ_{n-1})$,
describe types of boundaries (including gapless ones) 
of the topological states.

\subsection{Tensor network approach to topological order in any dimensions}

We also develop tensor network path integrals, hopefully to realize all
generic/closed/exact \lBF{n}  categories.  This in turn allows us to realize a
large subset of \hBF{n} categories.  Here we collect some important
conjectures:
\begin{enumerate}
\item 
All exact \lBF{n}  categories can be realized via \emph{topological path
integrals} on space-time complex, which can be expressed as tensor network (see
Section \ref{TNeBF}).  
(\emph{Topological path integrals} are defined as path integrals that produce
\emph{topological invariant partition functions} for arbitrary closed
space-time.  They are the fixed points of renormalization flow for gapped
quantum liquids.  \emph{Topological path integrals} can always be described by
tensor network.)  
\item 
Different exact \lBF{n} categories always have different topological partition
functions.  But the reverse is not true.  Partition functions differ by a
factor $W^{\ch(M^n)} \ee^{\ii \sum_{\{n_i\}} \phi_{n_1n_2\cdots} \int_{M^n}
P_{n_1n_2\cdots}} $ describe the same \lBF{} category.  Here $\ch(M^n)$ is the
Euler character and $\int_{M^n} P_{n_1n_2\cdots}$ are the Pontryagin numbers of
the space-time $M^n$.  The trivial \lBF{n} category is described by partition
functions of form $Z(M^n)=W^{\ch(M^n)} \ee^{\ii \sum_{\{n_i\}}
\phi_{n_1n_2\cdots} \int_{M^n} P_{n_1n_2\cdots}} $ (see \eqn{ZElBF}). 
\item 
Not all  topological path integrals describe exact \lBF{n}  categories.  Only
stable topological path integrals describe  exact \lBF{n}  categories.  A
topological path integral in $(n+1)$D space-time is stable iff $|Z(S^1\times
S^n)|=1$ (see Section \ref{TNeBF}).\cite{KW13}
\item 
$n$-dimensional (potentially anomalous) \lBF{n} categories can be
described by \emph{topological path integrals} in one higher dimensions (see
Section \ref{TNBF}).
\item 
$n$-dimensional closed \lBF{n} categories can be described by
``trivial'' \emph{topological path integrals} 
in one higher dimensions.  There can be
many different ``trivial'' topological path integrals that describe the same
trivial  \lBF{n} category. Those different ``trivial'' topological path
integrals describe different closed \lBF{n} categories in one lower dimension
(see Section \ref{TNcBF}).\cite{WW1132,KBS1307,BCFV1372,CFV1350}
\end{enumerate}

We note that the above results give us a concrete, practical, and constructive
definition of exact \lBF{} category via \emph{topological path integorals} (or
tensor network) in any dimensions.  Then the generic (closed) \lBF{} categories
can be defined as the boundary of exact \lBF{} category (trivial  \lBF{}
category) in one higher dimension.

\subsection{Probing and measuring topological orders}

We propose some ways to probe and measure  topological orders (see Section
\ref{PMtopo}):
\begin{enumerate}
\item 
For a L-type quantum system defined by a path integral, we can compute its
imaginary time partition function on closed space-time $M^{d+1}$.  The system
describes an exact \lBF{{d+1}} category (\ie the topological order has a
gappable boundary) iff the corresponding volume independent part of the
partition function 
$Z_0(M^{d+1})$ is a topological invariant of the space-time $M^{d+1}$.  
\item
For a H-type quantum system described by a Hamiltonian on a closed space $\Si^d$,
the degenerate ground states form a vector space $V$. As we change the metrics
on $\Si^d$, we obtain the moduli space $\cM_{\Si^d}$ of $\Si^d$.  Together with the vector
space $V$ for each point in $\cM_{\Si^d}$, we obtain a vector bundle 
 on the moduli space $\cM_{\Si^d}$.  For different space topologies, we
will get different vector bundles.  The collection of those  vector bundles
should fully characterize the closed \hBF{d+1} category (\ie the topological
order).
\item
A \hBF{d+1}  category is exact iff the above mentioned vector bundle
is flat (see Conjecture \ref{flatVB}).
\item
A \hBF{n}  category is closed iff every nontrivial topological excitation in it
has a nontrivial mutual braiding property (or a nontrivial mutual statistics)
with at least one topological excitation.  This is the condition for a gapped
effective theory to be free of H-type gravitational anomaly (see Conjecture
\ref{msta}). This principle was discussed in detail in \Ref{L1309}.
\end{enumerate}


As an application of the above conjectures, let us consider the simplest
topological orders that have no non-trivial topological excitations and no
degenerate ground states.  We find that this kind of topological orders have
two defining properties: i) their partition functions on closed space-time can
always be chosen to be a pure $U(1)$ phase; ii) they are invertible under the
stacking $\boxtimes$ operation. Using those properties, one can try to classify
those invertible topological orders.\cite{FT1292,F14,freed2014}  
We find that there is no non-trivial invertible
\lBF{} categories in 3+1D, and 5+1D.  The invertible \lBF{}
categories in 2+1D form an Abelian group $\Zb$ generated by the $E_8$ bosonic
fractional quantum Hall state.\cite{PMN1372} The invertible \lBF{}
categories in 6+1D form an Abelian group $\Zb\times \Zb$.  The boundary of
those $\Zb$-class topological orders must be gapless.  The 
invertible \lBF{} categories in 4+1D form an Abelian group $\Zb_2$.  (This
result has been obtained in \Ref{K1459}).  The boundary of the non-trivial
$\Zb_2$-class topological order can be gapped but must carry a anomalous
topological order with non-trivial topological excitations on the boundary.
The invertible \lBF{} categories are also invertible \hBF{} categories (see
Remark\,\ref{LHclosed}), and we show that the non-trivial  invertible \lBF{} categories
are also non-trivial when viewed as \hBF{} categories.  Thus the invertible
\hBF{} categories contains a $\Zb$ class in 2+1D, a $\Zb_2$ class in 4+1D, and
a $\Zb\oplus \Zb$ class in 6+1D.

\section{Physical definition of topologically ordered phase}
\label{topdef}

The topologically ordered states that we will discuss in this paper are ground
states of bosonic local \emph{gapped} Hamiltonians.  However, not all gapped
ground states are topologically ordered states.  Only the special gapped ground
states, called gapped quantum liquids,\cite{ZW14} 
are  topologically ordered states. 
For example, 2+1D FQH states and 3+1D $\Zb_2$ gauge theory are gapped quantum
liquids (\ie topologically ordered states), while the 3+1D gapped state formed
by layers of 2+1D FQH states is not a gapped quantum liquids.  The notion of
gapped quantum liquids (\ie topologically ordered states) are discussed in
detail in \Ref{ZW14}.  We will not repeat them here.  

Now we are ready to define topological order (or topologically ordered phase).
We will give two definitions:\\
(1)
Let us call the Hamiltonian that realizes a gapped quantum liquid \emph{an
l-gapped Hamiltonian}.  If the ground state degeneracy of a  l-gapped
Hamiltonian is robust against any perturbations, then the l-gapped Hamiltonian
is said to be stable.  Let $M_{slgH}$ be the space of stable l-gapped
Hamiltonians.  Then the elements of $\pi_0(M_{slgH})$ define the topologically
ordered phases.  \\
(2)
We can also use local unitary transformations to define topologically ordered
phases: two \emph{stable} gapped quantum liquids belong to the same
topologically ordered phase iff they are connected by a  local unitary
transformation.\cite{CGW1038} 

Local unitary (LU) transformation\cite{LWstrnet,VCL0501,V0705,CGW1038} is an
important concept which is directly related to the definition of quantum phases
.\cite{CGW1038}  To explain LU transformation, let us first introduce 
\emph{local unitary evolution}.  A LU evolution is defined
as the following unitary operator that act on the degrees of freedom in a
lbH system:
\begin{align}
\label{LUdef}
  \cT[e^{-i\int_0^1 dg\, \t H(g)}]
\end{align}
where $\cT$ is the path-ordering operator and $\t H(g)=\sum_{\v i} O_{\v i}(g)$
is a sum of local Hermitian operators.  Two \emph{gapped} quantum states belong
to the same phase if and only if they are related by a LU
evolution.\cite{HW0541,BHM1044,CGW1038} Note that, in this paper, the term ``a
gapped quantum state'' really means ``the subspace of the degenerate ground
states''.

\begin{figure}[tb]
\begin{center}
\includegraphics[scale=0.5]{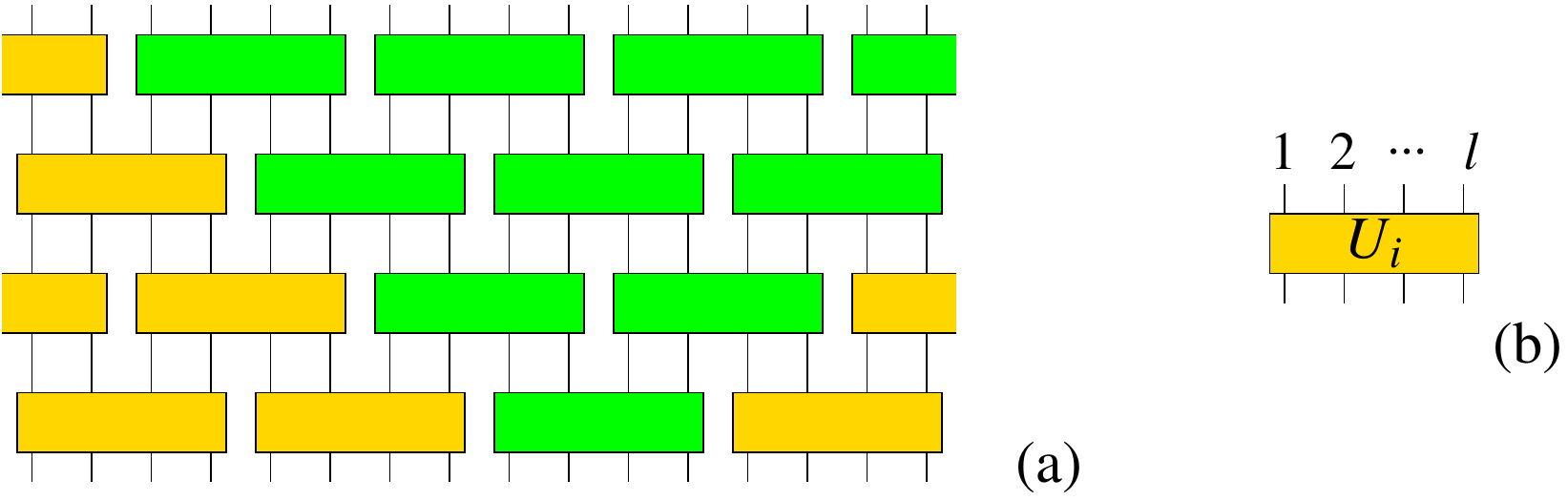}
\end{center}
\caption{
(Color online)
(a) A graphic representation of a quantum circuit, which is form by (b) unitary
operations on blocks of finite size $l$. The green shading represents a causal
structure.
}
\label{qc}
\end{figure}

The LU evolutions are closely related to \emph{quantum circuits with finite
depth}.  To define quantum circuits, let us introduce  piecewise local unitary
operators.  A piecewise local unitary operator has a form
\begin{equation*}
 U_{pwl}= \prod_{i} U^i
\end{equation*}
where $\{ U^i \}$ is a set of unitary operators that act on non overlapping
regions. The size of each region is less than a finite number $l$. 
A quantum circuit with depth $M$ is given by
the product of $M$ piecewise local unitary operators:
\begin{equation*}
 U^M_{circ}= U_{pwl}^{(1)} U_{pwl}^{(2)} \cdots U_{pwl}^{(M)}
\end{equation*}
We will call $U^M_{circ}$ a LU transformation.  In quantum information theory,
it is known that a finite time unitary evolution with a local Hamiltonian (a LU
evolution defined above) can be simulated with a constant depth quantum circuit
(\ie a  LU transformation) and vice-verse:
\begin{align}
  \cT[e^{-i\int_0^1 dg\, \t H(g)}] =U^M_{circ}.
\end{align}
So two gapped quantum states belong to the same phase if and only if they are
related by a LU transformation.

Using the LU transofrmations, we can define the concept of short-range
and long-range entanglement.\cite{CGW1038}
\begin{defn} \textbf{Short-range entanglement}\\
A state is short-range entangled (SRE) if it can be transformed into product
state by a LU transformation of a fixed depth regardless how large the system
is.
\end{defn} \noindent
We can show that all short-range entangled states belong to the same phase:
\begin{cor} 
All short-range entangled states can be transformed into each other via
LU transformations.
\end{cor}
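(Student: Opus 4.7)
The plan is to chain together the LU transformations that take each SRE state to a product state, using the elementary fact that any two product states are related by a depth-one local unitary. Because this is a purely formal consequence of the definition of SRE, the proof amounts to verifying that the class of LU transformations is closed under inversion and composition.

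First I would unpack the definition: given two SRE states $|\psi_1\rangle$ and $|\psi_2\rangle$, there exist LU transformations $U_1$ and $U_2$ of finite depth, and product states $|p_1\rangle = \bigotimes_i |\alpha_i\rangle$ and $|p_2\rangle = \bigotimes_i |\beta_i\rangle$, such that $U_1 |\psi_1\rangle = |p_1\rangle$ and $U_2 |\psi_2\rangle = |p_2\rangle$. Next I would construct a depth-one piecewise local unitary $V = \bigotimes_i V_i$ that implements $V |p_1\rangle = |p_2\rangle$ by choosing each $V_i$ to be any single-site unitary with $V_i |\alpha_i\rangle = |\beta_i\rangle$. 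Such a $V$ is a LU transformation in the sense of Eq.~\eqref{LUdef}, since each $V_i$ can be generated by a constant-depth piecewise local Hamiltonian supported on site $i$.

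The third step is to assemble the full transformation $W := U_2^{-1} V U_1$, which by construction satisfies $W |\psi_1\rangle = |\psi_2\rangle$. To finish, I would verify that $W$ is itself a LU transformation. This uses two closure properties of LU transformations: (i) if $U$ is a LU evolution generated by $\widetilde H(g)$ on $g \in [0,1]$, then $U^{-1}$ is generated by $-\widetilde H(1-g)$ on $[0,1]$ and hence is itself a LU evolution; (ii) the concatenation of two finite-depth LU evolutions is again a finite-depth LU evolution, equivalently, the product of two quantum circuits of depths $M_1$ and $M_2$ is a quantum circuit of depth $M_1 + M_2$.

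The only step requiring any genuine thought is (i), the closure under inversion, since one must check that the path-ordered exponential $\cT[e^{-i\int_0^1 dg\, \widetilde H(g)}]$ really has an inverse of the same form with a local Hermitian generator; this is standard but is the one spot where a skeleton argument would need to cite the equivalence between LU evolutions and constant-depth quantum circuits recalled in the text. Everything else is bookkeeping: composition of finite-depth circuits is finite depth, and the depth-one product-state rotation $V$ is manifestly local.
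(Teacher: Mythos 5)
Your proof is correct and is exactly the standard argument that the paper leaves implicit (the corollary is stated there without proof): disentangle both states to product states, rotate one product state into the other by a depth-one on-site unitary, and use closure of finite-depth LU transformations under inversion and composition. No gaps.
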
 \noindent
We can also show that
\begin{cor} 
For any short-range entangled state $|\Psi\>$, 
there exists a gapped local Hamiltonian $H$
such that  $|\Psi\>$ is the only ground state of $H$.
\end{cor}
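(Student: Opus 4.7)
The plan is to unpack the definition of short-range entanglement and build the desired Hamiltonian by conjugating a trivial reference Hamiltonian by the disentangling circuit. By the definition of SRE, there exists a finite-depth LU transformation $U = U^M_{circ}$ (or equivalently a LU evolution of unit time) and a product state $|\Phi_0\rangle = \bigotimes_i |0\rangle_i$ such that $|\Psi\rangle = U |\Phi_0\rangle$.

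First, I would introduce the reference parent Hamiltonian $H_0 = -\sum_i P_i$, where $P_i = |0\rangle_i\langle 0|_i$ is the on-site projector onto the product state component. This $H_0$ is manifestly local, gapped (gap equal to $1$), and has $|\Phi_0\rangle$ as its unique ground state with energy $-N$, where $N$ is the number of sites. Second, I would define
\begin{equation*}
H := U\, H_0\, U^\dagger = -\sum_i U P_i U^\dagger.
\end{equation*}
Since $U$ is unitary, $H$ and $H_0$ are isospectral; in particular, $H$ is gapped with the same gap, and its unique ground state is $U|\Phi_0\rangle = |\Psi\rangle$.

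The key remaining point is to verify that $H$ is local, i.e.\ that each term $U P_i U^\dagger$ is supported on a bounded region around site $i$. This follows from the causal/light-cone structure of a depth-$M$ quantum circuit built from unitaries acting on blocks of size at most $l$ (cf.\ Fig.~\ref{qc}): conjugation by such a circuit spreads the support of a single-site operator into a region of diameter at most $(2l)^M$, which is a constant independent of system size. Thus $U P_i U^\dagger$ is a local operator in the required sense, and $H$ is a sum of uniformly local terms.

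The main (and essentially only) obstacle is this locality/light-cone argument; once it is in place, unitary equivalence immediately transfers the spectral gap and uniqueness of the ground state from $H_0$ to $H$. This argument goes through identically whether one presents the LU transformation as a finite-depth quantum circuit or as a finite-time evolution under a local Hamiltonian, since the two formulations are equivalent, as noted in the text preceding the corollary.
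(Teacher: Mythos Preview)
The paper states this corollary without proof (it is preceded only by ``We can also show that''), so there is nothing to compare against. Your argument is the standard one and is correct: conjugate a trivial on-site parent Hamiltonian for the product state by the finite-depth circuit, and use the light-cone of the circuit to guarantee locality of the resulting terms while unitarity preserves the gap and the uniqueness of the ground state.

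One small quantitative slip: the support growth under conjugation by a depth-$M$ circuit of block size $l$ is additive, of order $O(Ml)$, not multiplicative like $(2l)^M$. This does not affect the argument, since all that matters is that the bound is an $N$-independent constant.
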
 \noindent

\begin{defn} \textbf{Long-range entanglement}\\
A stable gapped state is long-range entangled if it is not short-range entangled.
\end{defn} \noindent
Here ``stable'' means that the ground state degeneracy is 
robust against any small perturbations.
\begin{defn} \textbf{Topologically ordered states}\\
Topologically ordered states are LRE gapped liquid states.  
In other words, a gapped liquid state has a
nontrivial topological order iff it cannot be transformed to a product state by
any LU transformations of finite depth.
\end{defn} \noindent
Not all long-range entangled (LRE) gapped liquid states can be transformed into each other
via LU transformations.  Thus LRE states can belong to different phases: \ie
the LRE states that are not connected by LU transformations belong to different
phases.  Those different phases are nothing but the topologically ordered
phases:\cite{Wtop,WNtop,Wrig,KW9327}
\begin{defn} \textbf{Topologically ordered phases}\\
LU transformations are equivalence relations.  Topologically ordered phases are 
equivalence classes of topologically ordered states under the LU 
transformations.
\end{defn} \noindent
In this paper, we plan to study topologically ordered phases (\ie topological
orders) in any dimensions. We would like to find a way to classify all
topologically ordered phases. As we have stressed above, our study is limited
to gapped quantum liquids, and does not apply to other more complicated gapped
quantum states.

\section{Excitations in topologically ordered states} \label{topexc}

Topological orders (or patterns of long-range entanglement) can be
characterized by the appearance of the ``topological excitations''.  In this
section, we will discuss/define the notion of topological excitations.  

In higher dimensions, the allowed topological excitations can be quite
complicated.  They can be ``particle-like'', ``string-like'',
``membrane-like'', etc.  Many concepts need to be introduced to describe and
understand these excitations.

\begin{figure}[tb]
  \centering
  \includegraphics[scale=0.5]{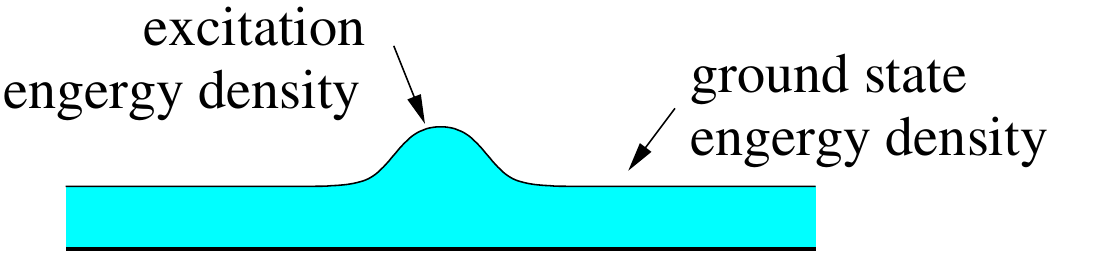}
  \caption{The energy density distribution of a particle-like excitation.}
  \label{exceng}
\end{figure}

\subsection{Particle-like excitations}

%

First we define the notion of ``particle-like'' excitations.  Consider a gapped
system with translation symmetry.  The ground state has a uniform energy
density.  If we have a state with an excitation, we can measure the energy
distribution of the state over the space.  If for some local area, the energy
density is higher than ground state, while for the rest area the energy density
is the same as ground state, one may say there is a ``particle-like''
excitation, or a quasiparticle, in this area (see Figure \ref{exceng}).
Quasiparticles defined like this can be further divided into two types.  The
first type can be created or annihilated by local operators, such as a spin
flip.  So the first type of the particle-like excitations is called local
quasiparticle excitations.  The second type cannot be  created or annihilated
by any finite number of local operators (in the infinite system size limit).
In other words, the higher local energy density cannot be created or removed by
\emph{any} local operators in that area.  The second type of the particle-like
excitations is called topological quasiparticle excitations. They are characterized by the modules over the local operator algebras\cite{KK1251, kong-icmp12, LW1384}


From the notions of local quasiparticles and topological quasiparticles, we can
also introduce a notion of topological quasiparticle type, or simply,
quasiparticle type.  We say that local quasiparticles are of the trivial type,
while topological quasiparticles are of nontrivial types.  Also two
topological quasiparticles are of the same type if and only if they differ by
local quasiparticles.  In other words, we can turn one topological
quasiparticle into the other one of the same type by applying some local
operators.

\subsection{$p$-dimensional topological excitations}

In the above, we only discussed the notion of ``particle-like'' topological
excitations. Similarly, we can also introduce  the notion of ``string-like''
topological excitations, or even more general $p$-dimensional topological
excitations, where $p$ is the spatial dimension. 
To define a $p$-dimensional topological excitations, let us first
define 
\begin{defn} \label{def:p-excitation}\textbf{$p$-dimensional excitations:}\\
Consider a gapped lbH system defined by a local bosonic Hamiltonian $H_0$ in
$n$ spatial dimensions.  For $p<n$, a $p$-dimensional excitation is the
\emph{gapped} ground state of $H_0+\Del H$ where $\Del H$ is a local hermitian
operator which is non-zero only on a $p$-dimensional subspace $M^p$ and is
almost uniform on $M^p$ in the large $M^p$ limit.
\end{defn} \noindent
\begin{rema} 
(1) A $p$-dimensional excitation is defined only for large $M^p$ if $p>0$.\\
(2) If the ground state of  $H_0+\Del H$ has gapless modes for large $M^p$,
then, by definition, $\Del H$ does not create a $p$-dimensional excitation.
\end{rema} \noindent

We note that we can view a $p$-dimensional excitation as a gapped system with
$p$ spatial dimensions, which has a thermal dynamical limit when $M^p$ is
large.  This allows us to define the equivalence relation between
$p$-dimensional excitations:
\begin{defn} \label{def:3-eq-rel} 
Two  $p$-dimensional excitations on $M^p$ are equivalent if they can be\\
(1) transformed into each other via a local unitary transformation of finite
depth (see Section \ref{topdef})\cite{CGW1038} on a neighborhood of $M^p$ or\\
(2) transformed into each other
via a tensor product of an unentangled state on $M_p$.\\
The equivalence class is called the type of topological excitations.
\end{defn} 

When we refer to a topological excitation, we usually mean its equivalence
class (or type). We have the following conjecture. 
\begin{conj}
Two $p$-dimensional excitations localized on the two submanifolds $M^p$ and on $N^{p}$ are equivalent if we can deform them into each other smoothly without phase transition in the large $M_p$ and $N_p$ limit.
\end{conj}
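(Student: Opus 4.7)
The plan is to promote the smooth, gap-preserving geometric deformation of the perturbation $\Delta H$ into an explicit finite-depth local unitary (LU) transformation, and then to use clause (2) of Definition~\ref{def:3-eq-rel} (tensor product with unentangled states) to reduce its support from a neighborhood of the full swept region down to a neighborhood of $M^p$. First I would formalize ``deform smoothly without phase transition'' as a one-parameter family $\{\Delta H(s)\}_{s\in[0,1]}$ of local Hermitian perturbations of $H_0$ with $\Delta H(0)$ producing the excitation on $M^p(0)=M^p$, $\Delta H(1)$ producing the one on $M^p(1)=N^p$, the supports being smoothly varying $p$-submanifolds $M^p(s)$, and the full Hamiltonians $H(s):=H_0+\Delta H(s)$ remaining gapped uniformly in $s$ in the large-$M^p(s)$ limit. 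Let $|\Psi(s)\rangle$ denote the corresponding gapped ground state.

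Next, using the uniform gap, I would invoke quasi-adiabatic (spectral-flow) continuation to construct a quasi-local unitary $\mathcal{U}=\mathcal{T}\exp\!\bigl(-\mathrm{i}\int_0^1 K(s)\,ds\bigr)$ satisfying $\mathcal{U}|\Psi(0)\rangle=|\Psi(1)\rangle$, where $K(s)=\sum_{\mathbf{i}} O_{\mathbf{i}}(s)$ is a sum of local terms whose norms decay superpolynomially in the distance from the support of $\partial_s\Delta H(s)\subset M^p(s)$. I would then truncate $K(s)$ outside a thickened tubular neighborhood $N_R$ of the swept region $\Sigma:=\bigcup_{s\in[0,1]} M^p(s)$: the decay estimate makes the truncation error exponentially small in $R$, and standard Lieb--Robinson arguments approximate the truncated quasi-local evolution by a finite-depth quantum circuit $U^M_{\mathrm{circ}}$ supported on $N_R$. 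This produces exactly the type of transformation allowed by clause~(1) of Definition~\ref{def:3-eq-rel}, but on a neighborhood of $\Sigma\supset M^p\cup N^p$ rather than of $M^p$ alone.

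To match Definition~\ref{def:3-eq-rel} precisely, I would exploit clause~(2): outside a tubular neighborhood of $N^p$ the final state $|\Psi(1)\rangle$ locally coincides with the ground state of the unperturbed $H_0$, so the reduced state on $N_R\setminus \mathrm{nbhd}(M^p)$ is short-range entangled and hence connected to an unentangled product state by a further LU supported there. Combining this disentangling LU with $U^M_{\mathrm{circ}}$ and with a tensor product by an unentangled ancilla on the ``extra'' region exhibits the required equivalence in the sense of Definition~\ref{def:3-eq-rel}.

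The main obstacle is this last step: showing that the portion of $\Sigma$ lying between $M^p$ and $N^p$ (the ``sweeping bridge''), which can have substantial extent and possibly nontrivial topology, is genuinely left in an unentangled product state after the quasi-adiabatic evolution. This reduces to a locality assertion --- that quasi-adiabatic continuation along a moving $p$-dimensional perturbation does not generate persistent entanglement away from the instantaneous support of $\Delta H(s)$ --- which is intuitively clear from the decay estimates but is subtle because the ``distance to the support'' varies with $s$. A plausible route is to discretize the deformation into steps smaller than a Lieb--Robinson length, so that at each step only a narrow slab of the bridge is traversed, and to assemble the finite-depth LUs slab by slab, each one disentangling the slab vacated at the previous step. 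Modulo this locality refinement, the conjecture should follow from the standard quasi-adiabatic / LU machinery already used implicitly throughout Section~\ref{topdef}.
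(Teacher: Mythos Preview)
The paper states this as a \emph{Conjecture} and offers no proof whatsoever; there is nothing in the paper to compare your argument against. So the relevant question is whether your proposal stands on its own.

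Your steps 1--3 are the natural and correct strategy: quasi-adiabatic (spectral-flow) continuation along a gapped path $H(s)=H_0+\Delta H(s)$ does produce a quasi-local unitary, and with Lieb--Robinson bounds one can approximate it by a finite-depth circuit supported on a tubular neighborhood $N_R$ of the swept region $\Sigma=\bigcup_s M^p(s)$. That much is standard and would be accepted.

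The real problem is your step~4, and it is more serious than you indicate. Definition~\ref{def:3-eq-rel} is formulated for two excitations on the \emph{same} submanifold $M^p$: clause~(1) allows an LU on a neighborhood of $M^p$, and clause~(2) allows tensoring with an unentangled state on $M^p$. Neither clause gives you license to act on the bridge region $\Sigma\setminus M^p$ or to add/remove product ancillas there. Your attempted fix---disentangling the vacated bridge back to a product state and invoking clause~(2)---misreads clause~(2), which only permits tensoring on $M^p$ itself, not on an arbitrary region. So as written the argument does not close, and the obstacle is not merely the locality refinement you flag but a genuine mismatch between what you construct (an LU on $N_R(\Sigma)$) and what the definition literally allows.

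This mismatch is very likely why the authors left the statement as a conjecture rather than a lemma. A cleaner way forward is to argue that Definition~\ref{def:3-eq-rel} should be read as defining a \emph{type} that is independent of the supporting submanifold, and that your steps 1--3 already establish the physically meaningful content: the two excitations are connected by a finite-depth LU supported near the swept world-region. If you want to stay within the letter of the definition, you would first need to show that any $p$-dimensional excitation on $N^p$ can be transported to one on $M^p$ by an LU on a neighborhood of a connecting region \emph{without changing its type in the sense of Definition~\ref{def:3-eq-rel}}; but that is essentially restating the conjecture, so some extension of the definition is unavoidable.
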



\subsection{Wall and pure excitations}

\begin{figure}[tb]
  \centering
  \includegraphics[scale=0.6]{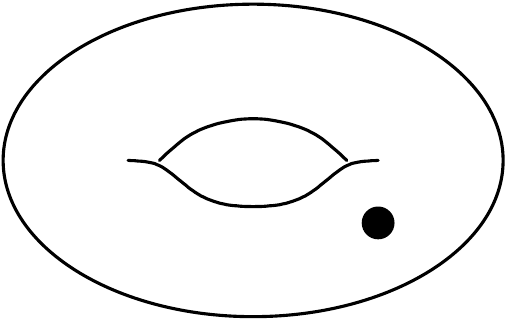}
  \caption{
The 2-dimensional excitation has a torus topology
and carries a 0-dimensional (point-like) wall excitation (or a sub-defect).
}
  \label{torus}
\end{figure}

We also like to point out that the $p$-dimensional topological excitations
discussed above are not all the topological excitations that can appear in a
topologically ordered state.  We can have more general
topological excitations, such as a $p$-dimensional topological excitation nested
in a $p'$-dimensional topological excitations with $p'>p$ (see Fig.
\ref{torus}).  We will 
call the $p$-dimensional topological excitation as a wall excitation (or a
sub-defect) of the $p$-dimensional topological excitation.  We will call
$p$-dimensional topological excitations that  only nested in a trivial higher
dimensional excitation as pure excitations.  In this paper, we usually use the
term ``topological excitation'' to refer to pure excitation.

\section{Universal low energy properties and a physical definition of \hBF{n}  category}
\label{sec:univ-prop-hbfcat}

\begin{figure}[tb]
  \centering
  \includegraphics[scale=0.6]{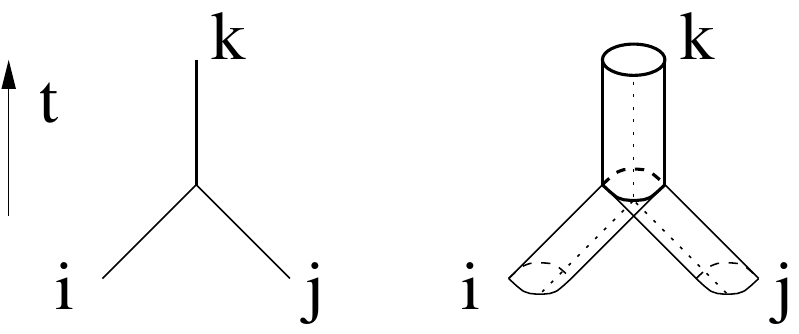}
  \caption{The fusion of topological particles (the 0-branes)
and  topological loops (the 1-branes). Here $i$- and $j$-types
of topological excitations are fused into a $k$-type  topological excitation.
The time direction points upwards.}
  \label{fuse}
\end{figure}

\begin{figure}[tb]
  \centering
  \includegraphics[scale=0.6]{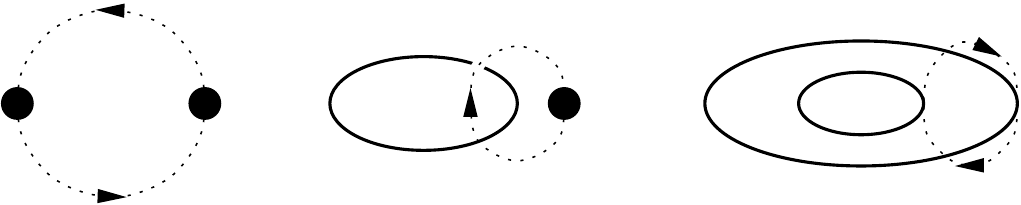}
  \caption{The braiding between topological particles
and topological loops in 3 dimensional space.}
  \label{braid}
\end{figure}

\subsection{A physical definition of \hBF{}  category}

The $p$-dimensional topological excitations can have
some universal properties (or topological properties), which, by definition,
are robust against any local perturbations and can be used to physically
characterize topological phases.
\begin{defn} \textbf{\hBF{}  category}\\
The collection of all topological (or universal) properties of all the
topological excitations in $n$ space dimensions, as well as the perturbative
gravitational responses defines a $(n+1)$-dimensional braided fusion (\hBF{}
or \hBF{n+1}) category.  (Here $n+1$ is the space-time dimension.)
\end{defn} \noindent
So, in physics, the term ``\hBF{}  category'' and the term ``the set of topological
properties'' can be used interchangeably.  In physics, a set of topological properties also defines a gapped low energy effective theory, so the term ``
\hBF{} category'' and the term ``gapped low energy effective theory'' can also
be used interchangeably.  In this paper, we will mainly use the term  ``\hBF{}
category''.  

In Section \ref{sec:math-def}, we will give a more detailed definition of
\hBF{}  category, trying to describe a subset of universal properties that
completely specify a \hBF{}  category, \ie completely specify all other
universal (or topological) properties.  In the following, we will describe some
of the simple universal (or topological) properties for $p$-dimensional
topological excitations.  

\subsection{Universal low energy properties}
\label{uniprop}

What is the subset of universal properties that completely specify a
topological order? Here we propose that
\begin{conj} 
The fusion and the braiding properties of topological excitations, plus the
universal correlations of energy-momentum tensor completely
characterize the topological order (\ie the \hBF{} category.)
\end{conj}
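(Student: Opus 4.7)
The plan is to split the conjecture into two logically independent claims: (i) the fusion and braiding data of all topological excitations (of every codimension) determine the topological order up to the subgroup of invertible topological orders, which possess no non-trivial topological excitations; (ii) the universal correlations of the energy-momentum tensor distinguish the invertible topological orders (and more generally capture the gravitational responses, including thermal Hall response and higher-dimensional analogues). Modulo this split, the conjecture is equivalent to the statement that the $(n{+}1)$-categorical data of a closed $\BF^{\text{H}}_{n+1}$ category, together with the energy-momentum correlators, forms a complete invariant of the LU-equivalence class.

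For part (i) I would attempt a reconstruction argument using the tensor-network framework from Section \ref{TNappr}. The strategy is: given two stable l-gapped Hamiltonians $H_0$ and $H_0'$ whose topological excitations of every codimension have identical fusion and braiding data, build a topological path integral in one higher dimension whose input is precisely this categorical data. By the conjectures on topological path integrals, this input uniquely determines a stable fixed-point tensor network and hence (through its boundary state) a representative ground state; both $H_0$ and $H_0'$ flow under renormalization to this representative, so their ground states are connected by an LU transformation, i.e.\ they realize the same topological order. This reduces (i) to the statement that the $(n{+}1)$-category built from topological excitations (Section \ref{sec:math-def}) contains no hidden data beyond fusion and braiding of elementary excitations, which is the content of the conjecture of Section \ref{simpcomp}.

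For part (ii) I would use the classification of invertible topological orders discussed in Section \ref{invTop}. Those orders have no non-trivial topological excitations, so fusion and braiding data cannot separate them; however, the $E_8$ state in 2+1D carries nonzero chiral central charge $c$, and the 6+1D $\Zb \oplus \Zb$ generators carry analogous quantized gravitational responses. These responses appear as universal (cutoff-independent) terms in correlators of the stress tensor, e.g.\ the two-point function encoding $c$, and thus separate the invertible classes.

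The hard part will be the completeness statement hidden in (i): proving that no additional locally measurable universal invariants exist beyond fusion, braiding, and gravitational responses. In 2+1D this is known via the classification of unitary modular tensor categories and the central charge (up to the $E_8$ ambiguity of diagram \eqref{diag:E8-umtc}), but in higher dimensions the appropriate higher-categorical formalism is still conjectural, and one must also handle subtleties where higher-codimension excitations braid with each other in ways not visible at the level of particle-like excitations alone. A secondary obstacle is ensuring that the energy-momentum correlator data is genuinely independent of, and complementary to, the categorical data rather than partially encoded within it; this would likely require a careful analysis of which central charges and Pontryagin-class couplings (cf.\ the partition function ambiguity $W^{\ch(M^n)} \ee^{\ii \sum \phi_{n_1 n_2 \cdots} \int_{M^n} P_{n_1 n_2 \cdots}}$ from Section \ref{TNeBF}) are detectable by fusion/braiding versus by stress-tensor correlators.
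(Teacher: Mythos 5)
The statement you set out to prove is presented in the paper as a conjecture, and the paper offers no proof of it: the subsections that follow it (on fusion spaces, the Yoneda-style ``local description'' of an excitation, simple versus composite types, braiding, the universal perturbative gravitational responses, and elementary/finite excitations) only unpack what the ingredients of the conjecture mean and argue for their plausibility. So there is no argument in the paper to compare yours against, and your proposal should be read as a strategy sketch rather than a proof. As a sketch it is broadly consistent with the paper's viewpoint, but it is circular as a proof: your part (i) explicitly reduces the claim to Conjecture~\ref{conj:ele-topological-order} and to the (also conjectural) statements that every exact L-type BF category is realized by a stable topological path integral and that the categorical data determines that path integral uniquely up to the Euler-characteristic and Pontryagin-number ambiguity of the partition function. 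You have rearranged which conjectures imply which, not discharged any of them; in particular the step ``both $H_0$ and $H_0'$ flow under renormalization to this representative'' presupposes exactly the kind of completeness the conjecture asserts.

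There is also a concrete gap in part (ii). You assert that universal stress-tensor correlators separate the invertible topological orders. That is plausible for the $\Zb$ classes in $2+1$D and $6+1$D, which carry gravitational Chern--Simons responses, but gravitational Chern--Simons terms exist only in $4k+3$ space-time dimensions. The nontrivial $\Zb_2$ invertible order in $4+1$D is detected by the Stiefel--Whitney number $\int_{M^5} w_2\wedge w_3$, a global, $\Zb_2$-valued, cobordism-type invariant seen only on manifolds such as $SU(3)/SO(3)$ or the mapping torus of complex conjugation on $\Cb P^2$; it has no perturbative gravitational response and no nontrivial bulk excitations. It is therefore not separated from the trivial order by local energy-momentum correlators in any evident sense, and your part (ii) argument fails precisely for a class of examples where part (i) gives you nothing. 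An honest completion would have to either broaden ``universal correlations of the energy-momentum tensor'' to include such global/torsion data (e.g., partition-function phases on mapping tori), or supply a separate detection mechanism, such as the fact that any gapped boundary of this state must carry a nontrivial anomalous topological order.
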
\noindent
In this section, we will explain the above conjecture, in particular, what are
the fusion and the braiding properties.  Also, we only need to include the
fusion and the braiding of a subset of topological excitations to fully define
the topological order. We will discuss what this subset is.

\subsubsection{The fusion space}

The first and the most important universal property is 
\begin{defn} \textbf{the generalized Fusion space:}\\
If we put $p$-dimensional topological
excitations (labeled by $i,j,k,\cdots$) on an $n$-dimensional closed space
$M^n$, a generalized fusion space is the $D(M^n, \cT_{ijk\cdots}; i,j,k,\cdots)$ dimensional (nearly) degenerate space 
$\cV^F(M^n, \cT_{ijk\cdots}; i,j,k,\cdots)$ of the lowest energy eigenstates.
\end{defn}\noindent
A $p$-dimensional topological excitation can have a nontrivial topology
and linking with other topological excitations described by $\cT_{ijk\cdots}$.
Also $i,j,k,\cdots$ label different types of topological excitations which can
have different dimensions $p$.  We also fixed the locations of the topological
excitations, and assume that the topological excitations have a large size and
are well separated.  In this limit, the (near) degeneracy is well defined.

$ D(M^n, \cT_{ijk\cdots}; i,j,k,\cdots)$ reduces to the ground state degeneracy
$\text{GSD}(M^n)$ on a closed topological space $M^n$, when there is no
topological excitation:
\begin{align}
 \text{GSD}(M^n)= D(M^n).
\end{align} 
On a sphere $S^n$, 
$ D(M^n, \cT_{ijk\cdots}; i,j,k,\cdots)$
reduces to
\begin{align}
 D_\one (\cT_{ijk\cdots}; i,j,k,\cdots)
= D(S^n, \cT_{ijk\cdots}; i,j,k,\cdots)
\end{align} 
which is called the dimension of the fusion space of topological excitations
$i,j,k,\cdots$ into the trivial one $\one$ (see Fig. \ref{fuse}).  The
\emph{fusion space} is the degenerate space of the lowest energy eigenstates
with topological excitations $i,j,k,\cdots$ on a sphere $S^n$, with the fixed positions and shapes of the topological excitations.

\subsubsection{A ``local'' description of a topological excitation}
\label{local}

The key to understand topological ordered states is to understand the fusion
space $\cV^F(M^d,\cT_{i_1,i_2,\cdots};i_1,i_2,\cdots )$.  To understand the
structure of the  fusion space, it s very tempting to assign a vector space
$\cV^F_{i_k}$ for each excitation $i_k$, and a  vector space $\cV^F_{M^d}$ for
the closed space, and view the fusion space as a tensor product of those spaces
$ \cV^F(M^d,\cT_{i_1,i_2,\cdots};i_1,i_2,\cdots ) =\cV^F_{M^d} \otimes_k
\cV^F_{i_k}$.  If this is true, we may regard $\cV^F_{i_k}$ as the space for
the local degree of freedom carried by the excitation $i_k$.  However, in
general,  the fusion space $\cV^F(M^d,\cT_{i_1,i_2,\cdots};i_1,i_2,\cdots )$
does not have the above tensor product structure.  This makes it very difficult
to understand the structure of the fusion space from the local properties of
the excitation $i_k$.

However, we still insist on using local properties of the excitation $i_1$ to
understand and to construct the total fusion space
$\cV^F(M^d,\cT_{i_1,i_2,\cdots};i_1,i_2,\cdots )$.  To achieve this, we simply
define the ``local properties'' of the excitation $i_1$ as a map that defines the notion of a topological excitation as follows:
\begin{defn} 
A topological excitation $i$ is a map that maps a collection
of  topological excitations $i_2,i_3,\cdots$ to a fusion space:
$i: i_2,i_3,\cdots \to \cV^F(M^d,\cT_{i,i_2,\cdots};i,i_2,\cdots )$.
\end{defn} \noindent
Such a map $i$ represents a ``local'' description of topological excitation
$i$. In mathematical language, it is nothing but the Yoneda Lemma. 

\subsubsection{Simple type and composite type}

To understand the notion of simple type and composite type, remember that a
gapped lbH system is defined by a local Hamiltonian $H_0$ in
$d$ dimensional space $X^d$ without boundary.  A collection of excitations
labeled by $i_1$, $i_2$, \etc and located at $M_1$, $M_2$ \etc can be produced
as \emph{gapped} ground states of $H_0+\Delta H$ where $\Delta H$ is non-zero
only near $M_i$'s. Here $M_i$ is a sub manifold of $X^d$.  By choosing
different $\Delta H$ we can create all kinds of excitations.  

The gapped ground states of $H_0+\Delta H$ may have a degeneracy $ D(M^d,
\cT_{i_1,i_2,\cdots}; i_1,i_2,\cdots)$ as discussed above The degeneracy is not
exact, but becomes exact in the large $M_i$ space and large excitation
separation limit.  

If the Hamiltonian $H_0+\Delta H$ is not gapped, we will say $D( M^d,,
\cT_{i_1,i_2,\cdots};i_1,i_2,\cdots )=0$ ({\it i.e.} $\mathcal V(M^d,
\cT_{i_1,i_2,\cdots};i_1,i_2,\cdots )$ has zero dimension).  If $H_0+\Delta H$ is
gapped, but if $\Delta H$ also creates excitations away from $M_i$'s
(indicated by the bump in the energy density away from $M_i$'s), we will also
say $D( M^d, \cT_{i_1,i_2,\cdots};i_1,i_2,\cdots )=0$.  (In this case
excitations at $M_i$'s do not fuse to trivial excitations.) So if $D(
M^d,\cT_{i_1,i_2,\cdots};i_1,i_2,\cdots )>0$, $\Delta H$ only creates
excitations at $M_i$'s.

\begin{defn} \textbf{Simple and composite types:}\\
If the degeneracy $D(M^d,\cT_{i_1,i_2,\cdots};i_1,i_2,\cdots)$ cannot be lifted
by \emph{any} small local perturbation near $M_1$, then the topological type
$i_1$ at $M_1$ is said to be simple. Otherwise, the  topological type $i_1$ at
$M_1$ is said to be composite.  
\end{defn} \noindent
The degeneracy
$D(M^d,\cT_{i_1,i_2,\cdots};i_1,i_2,\cdots)$ for simple topological types $i_i$
is a universal property ({\it i.e.} a topological invariant) of the
topologically ordered state.

When $i_1$ is composite, then the space of the degenerate ground states
$\cV^F(M^d;i_1,i_2,i_3,\cdots)$ will have a direct sum
decomposition:
\begin{align}
\label{comp}
&\ \ \ \cV^F(M^d;\cT_{i_1,i_2,\cdots};i_1,i_2,i_3,\cdots)
\nonumber\\
&\hspace{1cm} =
 \cV^F(M^d;\cT_{j_1,i_2,\cdots};j_1,i_2,i_3,\cdots)
\\
& \hspace{1.5cm} \oplus
 \cV^F(M^d;\cT_{k_1,i_2,\cdots};k_1,i_2,i_3,\cdots)
\nonumber\\
& \hspace{1.5cm} \oplus
 \cV^F(M^d;\cT_{l_1,i_2,\cdots};l_1,i_2,i_3,\cdots) \oplus \cdots
\nonumber 
\end{align}
where $j_1$, $k_1$, $l_1$, {\it etc.} are simple types.  To see the above
result, we note that when $i_1$ is composite the ground state degeneracy can be
split by adding some small perturbations near $M_1$.  After splitting, the
original degenerate ground states become groups of degenerate  states, each
group of degenerate  states span the space $\mathcal
V(M^d;\cT_{j_1,i_2,\cdots};j_1,i_2,i_3,\cdots)$ or $\mathcal
V(M^d;\cT_{k_1,i_2,\cdots};k_1,i_2,i_3,\cdots)$ etc., where $j_1$ and $k_1$ correspond to simple quasiparticle types at $M_1$.  
The decomposition \eqn{comp} is valid for any choices of $i_2,i_3,\cdots$.
From the discussion in Section \ref{local}, we see that a composite excitation
$i_1$ can be written as $i_1=j_1\oplus k_1 \oplus\cdots$, where $j_1\oplus k_1
\oplus\cdots$ is a map that maps  a collection of  topological excitations
$i_2,i_3,\cdots$ to a fusion space $
\cV^F(M^d,\cT_{j_1,i_2,\cdots};j_1,i_2,\cdots ) \oplus
\cV^F(M^d,\cT_{k_1,i_2,\cdots};k_1,i_2,\cdots ) \oplus \cdots $ .

\subsubsection{Quasiparticle fusion algebra}


When we fuse two topological excitations, $i$ and $j$, of simple types together,
it may become a topological excitation of a composite type:
\begin{align}
 i\otimes j=q=k_1\oplus k_2 \oplus \cdots,
\end{align}
where $i,j,k_i$ are simple types and $q$ is a composite type.  Here the fusion
is denoted as $i\otimes j$ which represents a map that maps a
collection of  topological excitations $i_2,i_3,\cdots$ to a fusion space:
$\cV^F(M^d,\cT_{i,j,i_2,\cdots};i,j,i_2,i_3,\cdots )$.

We can also use an integer tenser $N_{ij}^k$ to describe the
quasiparticle fusion, where $i$,$j$,$k$ label simple types.  When $N_{ij}^k=0$,
the fusion of $i$ and $j$ does not contain $k$.
When $N_{ij}^k=1$, the fusion of $i$ and $j$ contains one
$k$: $i\otimes j=k \oplus k_1
\oplus k_2 \cdots$.  When $N_{ij}^k=2$, the fusion of $i$
and $j$ contains two $k$'s: $i\otimes j
=k \oplus k \oplus k_1  \oplus k_2
\cdots$.  This way, we can denote that fusion of simple types as 
\begin{align} 
i\otimes j=\oplus_k N_{ij}^k k .  
\end{align} 

In physics, the quasiparticle types always refer to simple types. The fusion
tensor $N_{ij}^k$ is another universal property of the topologically ordered
state.  The degeneracy $D(S^d;i_1,i_2,\cdots)$ is determined completely by the
fusion tensor $N_{ij}^k$ if we only have quasiparticles.

\subsubsection{Braiding properties}

If the spatial dimension is higher than 1, we can braid the 
topological excitations of codimension 2 or higher (see Fig. \ref{braid}), which will
induce an non-Abelian geometric phase described by
$N^{i,j,k,\cdots}_{\one}$-dimensional unitary matrix.  Since the overall phase
of the  unitary matrix is path dependent (which may depend on the size of the
excitations $i,j,k$, etc.,), the unitary matrices from different
braidings form a \emph{projective} unitary representation of the ``braid
group'' of  the $p$-dimensional excitations.  Such a   \emph{projective}
unitary representation is also an universal property which is independent of
(homologous) braiding paths and local perturbations to the Hamiltonian.  

For particle-like
topological excitations, even the overall phase of the unitary matrix is well
defined and path independent.  The projective unitary representation of the
braid group becomes a unitary representation, which describes the statistics of
the topological quasiparticles.

\subsubsection{Universal perturbative gravitational responses}  \label{sec:univ-grav-response}

The above fusion and braiding properties of topological excitations are not
enough to characterize topological orders.  The 2+1D $E_8$ bosonic quantum Hall
state (see Example \ref{E8}), containing no non-trivial 0-dimensional and
1-dimensional topological excitations, is a counter example.  However, if we
put the $E_8$ bosonic quantum Hall state on a curved space-time and integrate
out the bosons, we will obtain an effective theory that contains gravitational
Chern-Simons term, whose coefficient is proportional to the chiral central
charge $c_R-c_L$ of the edge state of the  $E_8$ bosonic quantum Hall state.
The  gravitational Chern-Simons term is an example of the universal
perturbative gravitational responses. We also need such gravitational responses
to characterize an topological order.

This consideration motivates us to introduce
\begin{defn} \textbf{universal perturbative gravitational responses}:\\
Putting a system on curved space-time and integrating out all matter fields
will produce an effective Lagrangian that depends on the vielbein 1-form
and the Lorentz connection 1-form.\cite{Z1253}
The universal perturbative gravitational responses correspond to terms
that only depend on the  Lorentz connection 1-form and independent of
the vielbein 1-form.
\end{defn} \noindent
The universal perturbative gravitational responses are given by the
Chern-Simons forms of the gravity.  They correspond to the volume independent
but shape dependent partition function discussed in Section \ref{PMtopo}.  They
describe the perturbative gravitational anomalies of the corresponding boundary
theory.  

It is known that gravitational Chern-Simons terms exist only in $4k+3$
space-time dimensions.  In 2+1D, there is only one kind of gravitational
Chern-Simons term, which correspond to the thermal Hall effect.  In  6+1D,
there are two kinds of gravitational Chern-Simons terms.

\subsubsection{Elementary and finite topological excitations}
\label{unbele}

We like to point out that, according to the Definition \ref{def:3-eq-rel} for
type of topological excitations, even trivial topological states (\ie the
product states) can have (infinitely many) non-trivial types of $p$-dimensional
topological excitations for $p>1$, since non-trivial topologically ordered
states can exist for spatial dimension $p>1$.\cite{CGW1038}  Adding (stacking)
a nontrivial topologically ordered state (such as a FQH state) defined on the
subspace $M^p$ to the $d$ dimensional ground state will create a nontrivial
type of $p$-dimensional topological excitation.  
This makes the description and definition of BF category very difficult.

To fix this problem, we note that most of the topological excitations are
descendant. They come from other lower dimensional topological excitations.  We
can exclude them without hurting our ability to characterize the topologically
ordered state.  So in the following, we will describe ways to exclude those
``descendant'' topological excitations.

There are three way to create ``descendant'' topological excitations:\\
(A) Adding a $p$-dimensional topological state of a qubit system to a
$p$-dimensional subspace $M^p$ creates a ``descendant'' $p$-dimensional
excitation.\\
(B) Proliferating pure topological excitations with dimensions less than $p$ on
subspace $M^p$ creates a ``descendant'' $p$-dimensional excitation.\\
(C) Assume we have $p$-dimensional topological excitation on $M^p$ which may
carry wall excitations with  dimensions less than $p$.  Proliferating  those
wall excitations on $M^p$  creates a ``descendant'' $p$-dimensional excitation.
\\
So the goal is to exclude the above three types of ``descendant'' topological
excitations.  In fact, if we allow to proliferate trivial wall excitations, the
case C include the case A and case B.  So in the following, we will only
discuss the case C.

Let consider a $p$-dimensional topological excitation labeled by $i$ on a
$p$-dimensional subspace $M^p$.  We can create a  ``descendant''
$p$-dimensional excitation $i'$ on $M^p$ by proliferating the wall excitations (or sub-defects) of $i$ on $M^p$. Since we can always choose to  proliferate the wall
excitations (or sub-defects) of $i$ in a subregion of  $M^p$, the excitations $i$ and $i'$ must have a property that they can be connected by a $(p-1)$-dimensional domain wall. Such excitations $i$ and $i'$ are also called ``Witt equivalent". It indeed defines an equivalence relation and the associated equivalence class will be called the Witt class. So we can exclude the ``descendant'' excitations by not allowing the
excitations that can be connected by a domain wall.  This will solve our
problem.

However, since the domain wall between $i$ and $i'$ can be gapless, it is hard
to describe/define condition within tensor category theory, which only deal
with gapped states. So we have to use a weaker condition:\\
(1) we do not allow excitations that can be joined alone a \emph{gapped}
domain wall.

But the condition (1) is too weak to exclude all ``descendant'' excitations.
So we will add another condition:\\
(2) we only allow finite excitations.  \\
Here the notions of \emph{finite} is defined below:
\begin{defn} 
A topological excitation $x$ is finite if 
the set $\{ x^{\otimes n} \}_{n=1}^\infty$ contains only finite number of simple excitations as direct summands. 
\end{defn} \noindent

The finiteness is a powerful condition, as implied by the following
corollary:
\begin{cor} 
\label{topinf}
Stacking a non-invertible topologically ordered state repeatedly always generate
infinitely many different non-trivial topological orders. 
\end{cor}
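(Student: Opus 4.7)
The plan is to exhibit a multiplicative topological invariant of BF categories that grows unboundedly when one stacks a non-invertible topological order with itself. The natural choice is $N(\EC)$, the number of distinct simple topological excitations of $\EC$ (or, equivalently, the ground state degeneracy on an appropriate closed spatial manifold such as $S^{d-1}\times S^{1}$). This quantity is a topological invariant of the underlying $\BF_{d+1}$-category and is multiplicative under the stacking operation $\boxtimes$.

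First I would establish the key step: if $\EC$ is non-invertible in the monoid $(\{\TO_{n+1}\},\boxtimes)$, then $\EC$ carries at least one nontrivial simple topological excitation, so $N(\EC)\geq 2$. The argument is by contrapositive. If the only simple topological excitation of $\EC$ were the trivial one, then $\EC$ would have no nontrivial topological excitations at all, placing it in the class of invertible topological orders identified in Section~\ref{tprod} (and in the $\Zb$, $\Zb_2$, $\Zb\oplus\Zb$ classifications summarized in the main results). One then obtains an inverse under $\boxtimes$ from the ``orientation-reversed'' BF category associated to $\EC$.

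Next I would invoke multiplicativity $N(\EC\boxtimes\ED)=N(\EC)\cdot N(\ED)$, which reflects the fact that simple objects of a stacked BF category are $\boxtimes$-products of simples of the factors (the analog of the Deligne tensor product for higher categories). Combining these, $N(\EC^{\boxtimes n})\geq 2^{n}$, so the members of $\{\EC^{\boxtimes n}\}_{n\geq 1}$ are pairwise distinguished by a topological invariant and hence pairwise inequivalent. This yields infinitely many distinct nontrivial topological orders in the sequence, as claimed.

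The main obstacle is the first step: rigorously justifying that having only the trivial simple topological excitation forces invertibility in a general $\BF_{n+1}$-category. In low dimensions (e.g.\ 2+1D UMTCs) this is classical, and the $E_8$ state provides the prototype of an invertible but non-trivial topological order; in higher dimensions one must rely on the (partly conjectural) identification of invertible topological orders with those having no nontrivial topological excitations, together with the existence of a conjugate $\BF_{n+1}$-category furnishing the inverse. A secondary concern is ensuring $N(\EC)$ is finite within the finite-topological-excitation framework of Section~\ref{unbele}; if $N$ should fail to be finite, one can substitute a real-valued multiplicative invariant such as the total quantum dimension or the partition function $|Z(S^{d-1}\times S^{1})|$, which is strictly greater than $1$ on non-invertible orders and grows as its $n$-th power under stacking.
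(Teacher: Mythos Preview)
The paper does not actually give a proof of this corollary; it is stated without argument and then immediately \emph{used} to justify the finiteness condition on topological excitations. In fact, the closely related statement \eqref{eq:C=0=D} in Section~\ref{mnd} explicitly cross-references Corollary~\ref{topinf} rather than the other way around. So you are supplying a proof where the paper supplies none.

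Your argument is the natural one and is essentially sound at the level of rigor the paper operates at. A few remarks. First, your step ``no nontrivial simple excitation $\Rightarrow$ invertible'' is exactly what the paper asserts just after \eqref{eq:C=0=D}, but note that this is itself a physical claim in the paper rather than a theorem; you correctly flag this as the main obstacle. Second, the multiplicativity $N(\EC\boxtimes\ED)=N(\EC)N(\ED)$ is uncontroversial for the stacking of decoupled layers and is implicit throughout the paper's treatment of $\boxtimes$. Third, your fallback invariant $|Z(S^{d-1}\times S^1)|$ is well chosen: the paper repeatedly identifies this with ground-state degeneracy and uses $|Z(S^1\times S^n)|=1$ as the criterion for stability (Conjecture~\ref{Zstable}), so the multiplicative growth under $\boxtimes$ is immediate.

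One small caution: be careful not to create a circular dependency within the paper's own logic, since \eqref{eq:C=0=D} points back to Corollary~\ref{topinf}. Your argument avoids this as long as you treat ``invertible $\Leftrightarrow$ no nontrivial excitations'' as an independent input (which is how the paper treats it in Section~\ref{invTop}).
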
 \noindent
So most topological excitations generated by proliferating local excitations
(\ie trivial excitations) are not finite, and can be excluded by the finiteness condition.
  
More generally, we like to conjecture that
\begin{conj} \label{elem}
If topological excitations $i$ and $i'$ described above are both finite, then
the domain wall between  $i$ and $i'$ must be gappable.  
\end{conj}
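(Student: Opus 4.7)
I would argue by the contrapositive, combined with a folding-trick reduction. A $(p-1)$-dimensional domain wall between two $p$-dimensional excitations $i$ and $i'$ on a shared subspace $M^p$ can be folded to a boundary condition of the stacked excitation $i\boxtimes \overline{i'}$, where $\overline{i'}$ denotes the orientation-reversed phase and the folded phase lives on a one-sided neighborhood of $M^p$. Under this folding, the original wall is gappable precisely when the folded boundary is gappable, i.e.\ when $i\boxtimes \overline{i'}$ is exact in the sense of Section~\ref{ceBF}. The problem is therefore reduced to showing that finiteness of $i$ and $i'$ forces $i\boxtimes \overline{i'}$ to admit a gapped boundary.

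Next I would exploit Witt equivalence as defined just above the conjecture: $i'$ is produced from $i$ by proliferating a collection of sub-defects on $M^p$, and symmetrically $i$ from $i'$. Categorically, such a proliferation is a condensation, encoded by a (higher) algebra object $A$ in the $\BF$ data of $i$ whose module theory realizes $i'$, together with a dual algebra $A'$ in $i'$. Any such $(A, A')$ automatically yields a candidate boundary of the folded phase $i\boxtimes \overline{i'}$; the remaining task is to show that, under the finiteness hypothesis, $A$ can be chosen \emph{finite} (finitely many simple summands and compact in the appropriate higher-categorical sense), which would make the candidate boundary gapped and hence the original wall gapped.

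The decisive step is therefore the implication: $i$ and $i'$ finite $\Rightarrow$ the minimal condensing algebra $A$ producing $i'$ from $i$ is finite. I would establish this via the contrapositive using Corollary~\ref{topinf}: if the minimal $A$ carried an unbounded multiplicity of simple summands, then iterated self-fusion $A^{\otimes n}$ (equivalently, iterated stacking of tubular neighborhoods of the wall) would embed an unbounded spectrum of simple excitations into $i^{\otimes n}$ or $i'^{\otimes n}$, contradicting the assumption that both $\{i^{\otimes n}\}$ and $\{i'^{\otimes n}\}$ decompose into finitely many simple excitations. The main obstacle, and the reason this is posed as a conjecture rather than a theorem, is that the underlying machinery --- higher algebra objects, their condensation theory, and the precise equivalence ``gappable wall $\Leftrightarrow$ finite Lagrangian-type algebra'' --- is only conjecturally defined for $\BF_{n+1}$ with $n\ge 3$ in the framework of this paper. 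In low dimensions the argument should collapse to established results on Lagrangian algebras in braided fusion $1$- and $2$-categories; the content of the conjecture is precisely that the same structural principle persists in arbitrary dimension once one restricts to finite excitations.
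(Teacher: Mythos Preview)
The paper does not prove this statement --- it is explicitly a conjecture --- but it does give a short heuristic justification immediately after, and that justification follows a different line from yours. The paper's argument runs: $i'$ is obtained from $i$ by proliferating wall excitations on $M^p$, so those wall excitations assemble into some $p$-dimensional state. If that state is short-range entangled, the domain wall is gappable directly. If the wall is forced to be gapless, the state must be long-range entangled; but then, for $i'$ to be finite, that LRE state must be \emph{invertible} (this is where Corollary~\ref{topinf} enters: a non-invertible stacked layer would produce infinitely many inequivalent phases under repetition). Finally the paper appeals to Section~\ref{invTop}: the $\Zb_2$-class invertible orders, which are the first nontrivial ones and appear starting in 4+1D, always admit gappable boundaries, so the wall is gappable after all.

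Your folding reduction to a boundary of $i\boxtimes\overline{i'}$ is sound and is indeed a maneuver the paper uses elsewhere (e.g.\ Lemma~\ref{lem:dual-Z}), and recasting the problem in terms of a condensing algebra $A$ is natural. The gap is in your contrapositive step. You claim that if the minimal $A$ has unboundedly many simple summands, then $A^{\otimes n}$ embeds an unbounded spectrum into $i^{\otimes n}$ or $i'^{\otimes n}$. But $A$ is an internal object in the sub-defect category living \emph{on} $M^p$; its size governs the internal structure of the condensate on $M^p$, not the fusion powers of $i$ or $i'$ as excitations in the ambient bulk. There is no direct mechanism by which simple summands of $A$ become simple summands of $i^{\otimes n}$. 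The paper sidesteps this by reasoning about the condensate \emph{as a topological phase} on $M^p$ rather than as an algebra object: finiteness of $i'$ is used to constrain the phase type of the condensate layer (invertible vs.\ non-invertible), and the gappability then follows from the known structure of invertible phases. If you want to salvage your algebra-object approach, the bridge you need is not ``infinite $A$ $\Rightarrow$ infinite $i^{\otimes n}$'' but rather ``infinite $A$ $\Rightarrow$ the condensate layer is a non-invertible closed phase,'' which then feeds into Corollary~\ref{topinf} the same way the paper's argument does.
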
 \noindent 
To understand the above conjecture, let us assume that the wall excitations
that create $i'$ form a ``short-range entangled''\cite{CGW1038,ZW14} state on
$M_p$.  In this case, the domain wall between $i$ and $i'$ can be gapped.  If
the domain wall must be gapless, then the wall excitations on $M^p$ must form a
``long-range entangled''\cite{CGW1038,ZW14} state, and such a state must be
invertible in order for $i'$ to be finite.  The invertible topological orders
that belong to $Z_2$-class are discussed in Section \ref{invTop}, which first
appear in 4+1D and always have a gappable boundary.  So the finiteness of $i'$
implies that the domain wall can be gapped.


The above discussion allow us to introduce
\begin{defn}  \label{def:elementary-excitation}
\textbf{Elementary topological excitation:}\\
The set of all topological excitations is closed under the fusion operations.  Let us consider the maximal subset of topological excitations that is  closed under the fusion operations, and also satisfies
the following conditions:\\
(1) Any two different simple topological excitations with the same dimension
in the subset cannot be joined by a gapped  domain wall.\\
(2) All the  topological excitations in the subset are finite.\\
The simple topological excitations in  the subset are called elementary
topological excitations.
\end{defn} \noindent
We believe that all the topological excitations can be obtained by fusing the
elementary topological excitations with the ``descendant'' topological
excitations.  
This motivates the following conjecture
\begin{conj}  \label{conj:ele-topological-order}
The fusion and the braiding properties of the elementary topological
excitations (plus the universal perturbative gravitational responses) fully
characterize the topological order (or the corresponding BF category).
\end{conj}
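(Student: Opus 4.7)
The plan is to turn Conjecture \ref{conj:ele-topological-order} into a reconstruction theorem: starting from the data of elementary topological excitations together with their fusion, braiding, and the universal perturbative gravitational responses, I want to rebuild the full $\BF_{n+1}$ category, i.e.\ every topological excitation of every codimension together with all of its topological data. The argument would be organized around a dichotomy: every simple topological excitation is either elementary, in which case it is included in the hypothesis, or \emph{descendant} in the sense of the three constructions (A), (B), (C) listed above Definition \ref{def:elementary-excitation}. So the core content is to show that every descendant excitation, and its fusion and braiding with everything else, is determined by the elementary data plus lower-dimensional topological data already present in the category.

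First, I would formalize descendants via Conjecture \ref{elem}: by definition, the excitations that are Witt-equivalent to a given $p$-dimensional excitation $i$ form a class that, under the finiteness hypothesis, is connected to $i$ through gappable $(p-1)$-dimensional domain walls. I would then try to show that the set of such Witt-equivalents of $i$ is in bijection with the set of ``condensable'' structures living on $i$ built from $(<p)$-dimensional wall excitations; i.e.\ each descendant $i'$ is specified by a choice of gapped wall from $i$ to $i'$ together with its wall excitations, and these are entirely controlled by the sub-BF category of wall excitations on $i$. Since, by induction on codimension, those walls are themselves determined by elementary-excitation data in one lower dimension, I would fold this into an inductive reconstruction on $p$ from $p=0$ upward.

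In parallel I would handle case (A), stacking of lower-dimensional topological orders on a submanifold $M^p$, by showing that the stacked topological order must itself have no nontrivial excitation that is not already a wall excitation of some elementary object, or else $i'$ would not be finite (invoking Corollary \ref{topinf} and the classification of invertible topological orders summarized in the Introduction). This is where the universal perturbative gravitational responses enter: invertible descendants such as those related to the $E_8$-class in $2{+}1$D carry no nontrivial topological excitations and are distinguished from the trivial order only by the gravitational Chern-Simons data of Section \ref{sec:univ-grav-response}. So the final step would be to show that, up to these invertible gravitational pieces, the reconstruction of descendants from elementary data is unique, giving a monoid isomorphism between the abstract data (elementary fusion/braiding, gravitational responses) and the $\BF_{n+1}$ category.

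The hard part will be Conjecture \ref{elem} itself: the proof above takes it as input, but without it a Witt-equivalent pair of finite excitations could in principle be separated only by a gapless domain wall, putting the analysis outside higher tensor category theory and breaking the inductive step. A secondary obstacle is establishing that the ``local operator algebra'' viewpoint from Section \ref{local} is strong enough to reconstruct the full fusion spaces $\cV^F(M^n,\cT_{i_1,\dots};i_1,\dots)$ from the Yoneda-style local description of each elementary excitation, which at present is really a working hypothesis about how much information is encoded in the higher-categorical hom-spaces. Both obstacles suggest that a full proof will require first proving Conjecture \ref{elem} in a categorical form and then checking that the Yoneda embedding into BF categories is essentially surjective onto descendants modulo the gravitational part.
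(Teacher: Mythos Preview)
The paper does not prove this statement; it is explicitly labeled a \emph{Conjecture} and is used only as motivation for restricting attention to elementary excitations when defining BF categories. There is therefore no ``paper's own proof'' to compare against. The text surrounding the conjecture simply argues heuristically that descendant excitations come from the three processes (A), (B), (C) and then states the conjecture as a belief, moving on immediately to use it as a working hypothesis.

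Your outline is a reasonable strategy for how one \emph{might} attack the conjecture, and you correctly identify the genuine obstacles: the whole inductive reconstruction hinges on Conjecture~\ref{elem} (gappability of domain walls between finite Witt-equivalent excitations), which the paper also leaves unproven, and on an unverified completeness of the Yoneda-style local description. But you should be clear that what you have written is a proof \emph{sketch contingent on other conjectures}, not a proof, and that the paper itself makes no claim to have established this result. In particular, the step where you assert that every simple excitation is either elementary or descendant in the sense of (A)--(C) is itself not proved in the paper; it is stated there only as a belief (``We believe that all the topological excitations can be obtained by fusing the elementary topological excitations with the `descendant' topological excitations''). So your argument rests on at least three unproven inputs from the paper rather than one.
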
 \noindent
Therefore, we can limit ourselves to consider only  elementary topological
excitations, and use their  fusion and braiding properties to define BF
category. We believe that all the topological orders contain only a finite
number of  elementary topological excitations.  This makes the task of defining the BF category
a finite problem. In other words, we can use a finite amount of data to
define a BF category.

\subsubsection{Examples}

We see that topological excitations in high dimensions can be very complicated.
In this section, let us give some simple examples of topological excitations.

\begin{expl} 
\label{Z2exc3D}
Consider a $\Zb_2$ topologically ordered state\cite{RS9173,W9164,MS0181,K032} in 2+1 dimensions whose effective theory is a
$\Zb_2$ gauge theory. The $\Zb_2$ charge,
denoted as $e$, is a particle-like topological excitation.  The $\Zb_2$ vortex,
denoted as $v$, is another particle-like topological excitation.  The bound
state of $e$ and $v$, denoted as $\eps$, is the third  particle-like
topological excitation.  Since the number of $e$-excitations (the $\Zb_2$ gauge
charge) is conserved mod 2, which leads to an effective $\Zb_2$ symmetry, so if
the $e$ excitations form a 1D gas, such a 1D system may spontaneously break the
$\Zb_2$ effective symmetry.  In this case, the 1D gas of $e$ becomes an
string-like topological excitation of a nontrivial type.  Similarly, the 1D gas
of $v$ and 1D gas of $\eps$ can also form two other non-trivial string-like
excitations.  The above three nontrivial string-like excitations are all Witt equivalent to the trivial string-like excitation. Thus, the 2+1D $\Zb_2$ topological order has only three nontrivial elementary topological excitations, $e$, $v$, and $\eps$, which are all ``particle-like''. It does not have any nontrivial elementary string-like topological excitations. But it has (at least) three nontrivial non-elementary string-like topological
excitations.
\end{expl}

\begin{expl} 
\label{Z2exc4D}
Consider a $\Zb_2$ topologically ordered state in 3+1 dimensions. The $\Zb_2$
charge, denoted as $e$, is a particle-like topological excitation.  The $\Zb_2$
vortex-line, denoted as $s$, is a string-like topological excitation.  There is
also a trivial  string-like excitation, denoted as $\one_1$.  We note that the
non-trivial string $s$ and the trivial string $\one_1$ cannot be connected by a gapped domain wall. In other words, the non-trivial string $s$ cannot have an end. Thus $s$ is an elementary excitation.  In fact, $e$ and $s$ are the only two pure elementary
topological excitations which are non-trivial.  On the other hand, the 3+1D
$\Zb_2$ topologically ordered state can have many nontrivial non-elementary
string-like and membrane-like topological excitations, generated by the
condensation of $e$ or $s$ on the string or membrane.
\end{expl}

In the rest of this paper (except Section\,\ref{sec:math-def}), when we consider topological excitations, we will
only consider  elementary topological excitations.  When we say there are $N$
topological excitations in a topologically ordered state, we mean there are $N$
elementary topological excitations.


\subsection{A complete characterization of topological order}

In the above, we discussed several universal properties of pure $p$-dimensional
topological excitations:\\
(1)  the number of elementary topological types for each $p$,\\
(2)  the fusion spaces,\\
(3)  the  projective unitary representations of the ``braid
group'' acting on the fusion spaces,
as well as\\
(4) the gravitational Chern-Simons terms.\\
Those topological data are needed to define a \hBF{}  category (or a gapped low
energy effective theory).  We hope that the above definition is complete.
Certainly, our description is not rigorous. It just illustrates the physical
ideas to develop a rigorous definition.  In Section \ref{sec:math-def}, we will
give a more rigorous definition of \hBF{}  category. In the Section
\ref{BFexample}, we will discuss some simple examples.


\subsection{Examples of \hBF{}  categories}
\label{BFexample}

Now let us list some examples of \hBF{}  categories, to gain a more intuitive
understanding of \hBF{}  category (or set of topological properties).  In those
examples (and in the rest of this paper), we will only consider pure elementary
topological excitations.  So when we say ``topological excitations'', we mean
``pure elementary topological excitations''.  Those examples are both \hBF{}
and \lBF{} categories.

\subsubsection{Examples of \hBF{2}  categories in 1+1D}

First let us consider 1+1D gapped systems. In this case, we can only have
particle-like topological excitations.
\begin{expl} 
\label{C2FZ2}
A 1+1D system with only one type of particle-like topological excitation
labeled by $e$ (not including the trivial type).  The fusion of two $e$'s gives
rise to a trivial excitations $e\otimes e = \one$.  In 1-dimensional space,
particle-like excitations cannot braid and there is no braiding property.  We
will denote such a \hBF{1+1} category as $\EC_2^{F\Zb_2}$.  
\end{expl}

\subsubsection{Examples of \hBF{3}  categories in 2+1D}

In 2+1D, we can have both particle-like and string-like topological
excitations.

\begin{expl} 
\label{E8}
A 2+1D system that contains no non-trivial particle-like or string-like
topological excitations.
Such a system correspond to a few copies of $E_8$ bosonic quantum Hall states.
The $E_8$ bosonic quantum Hall state is described by the following wave function with 8 kinds of bosons:
\begin{align}
\prod_{I;i<j} (z_i^I-z_j^I)^{K_{II}}
\prod_{I<J;i,j} (z_i^I-z_j^J)^{K_{IJ}}
\ee^{-\frac14 \sum_{i,I} |z_i^I|^2},
\end{align}
whose low energy effective theory is given by
\eqn{csK} with
\begin{align}
 K=\begin{pmatrix}
2&1&0&0&0&0&0&0\\
1&2&1&0&0&0&1&0\\
0&1&2&1&0&0&0&0\\
0&0&1&2&1&0&0&0\\
0&0&0&1&2&1&0&0\\
0&0&0&0&1&2&0&0\\
0&1&0&0&0&0&2&1\\
0&0&0&0&0&0&1&2\\
\end{pmatrix}.
\end{align}
Despite there is no non-trivial topological excitations (since det$(K)=1$), the
system has a non-trivial thermal Hall effect\cite{KF9732} and chiral edge
states\cite{Wedge,Wtoprev} with chiral central charge $c_R-c_L=8\times$
integer.  In other word, the system has a non-trivial perturbative gravitational
response (\ie the non-trivial thermal Hall effect).  (In some
papers,\cite{PMN1372} the $E_8$ bosonic quantum Hall state is called
short-range entangled state. However, according to our definition based on the
local unitary or local invertible transformations,\cite{CGW1038,ZW14} the $E_8$
bosonic quantum Hall state is long-range entangled.)
\end{expl}

\begin{expl} 
\label{C3FZ2b}
A 2+1D system whose only topological excitation is particle-like which is
labeled by $e$.  The fusion of two $e$'s gives rise to a trivial excitation
$e\otimes e = \one$.  The braiding property is given by the Bose statistics of
$e$.  There are no other string-like topological excitations, except the one
formed by 1D gas of $e$'s.  In fact, the $\Zb_2$ conserved boson $e$ may form a
$\Zb_2$ symmetry breaking state.  We will say such a string is formed by the
condensation of $e$ and denoted by $s$. It corresponds to the trivial Witt class 
of string-like topological excitation.  The fusion of $s$ is
given by $s\otimes s=\one$.  The only nontrivial type of topological excitation
is $e$.  We will denote such a \hBF{2+1} category as $\EC_3^{F\Zb_2b}$.
\end{expl}

\begin{expl} 
\label{C3FZ2f}
A 2+1D system whose only particle-like topological excitation is labeled by
$e$.  The fusion of two $e$'s gives rise to a trivial excitation $e\otimes e =
\one$.  The braiding property is given by the Fermi statistics of $e$.  There
are no other string-like topological excitations, except the one formed by 1D
gas of $e$'s.  In fact, the Majorana fermions $e$ may form a 1D BCS p-wave
condensed state, with Majorana zero-modes at the ends of the 1D
system.\cite{K0131} Such a string can be viewed as formed by 
the condensation of $e$\cite{B1003,YJW1306} and denoted by $s$.  
It is Witt equivalent to the trivial string-like topological excitation.  
The fusion of $s$ is given by
$s\otimes s=\one$.  Again, the only nontrivial type of topological excitation is
$e$.  We will denote such a \hBF{2+1} category as $\EC_3^{F\Zb_2f}$.
\end{expl}

\begin{expl} 
\label{C3FZ2s}
A 2+1D system whose only topological excitation is the particle-like
topological excitation labeled by $e$.  There are no other string-like
topological excitations.  The fusion of two $e$'s gives rise to a trivial
excitation $e\otimes e = \one$.  The braiding property is given by the semion
statistics of $e$.  We will denote such a \hBF{2+1} category as
$\EC_3^{F\Zb_2s}$.
\end{expl}

In the above three examples, the only particle-like topological excitation $e$
is assigned a Bose, a Fermi, or a semion statistics.  Such three choices are
consistent with the fusion rule $e\otimes e = \one$, since the bond state of
two bosons,  two fermions, or two semions is a boson.  We also discussed the
string-like topological excitations, which are all Witt equivalent to the trivial excitation.  
In the following, we will only discuss excitations in nontrivial Witt classes, which are sufficient to characterize \hBF{3}  categories (up to $E_8$
bosonic quantum Hall states in Example \ref{E8}).

\begin{expl} 
\label{C3Z2}
A 2+1D system whose only topological excitations
are three types of particle-like topological
excitations labeled by $e$, $v$, and $\eps$.  The fusion rules of the
particle-like excitations are given by $e\otimes e = v\otimes v = \eps \otimes
\eps =\one$, $e\otimes v = \eps$, $e\otimes \eps = v$, and $v\otimes \eps = e$.
The braiding properties are described by (1) $e$ and $v$ are bosons and $\eps$
is a fermion; (2) moving $\eps$ around $e$ or $v$ will induce a phase factor
$-1$.  
We will
denote such a \hBF{2+1} category as $\EC_3^{\Zb_2}$.
\end{expl}

\begin{expl} 
\label{C3Z2ds}
A 2+1D system whose only topological excitations are three types of
particle-like topological excitations labeled by $e$, $v$, and $\eps$.  The
fusion of those excitations is given by $e\otimes e = v\otimes v = \eps \otimes
\eps =\one$, $e\otimes v = \eps$, $e\otimes \eps = v$, and $v\otimes \eps = e$.
The braiding properties are described by (1) $e$ and $v$ are semions with
statistics $\pm \pi/2$ respectively and $\eps$ is a boson; (2) moving $\eps$
around $e$ or $v$ will induce a phase factor $-1$.  We will denote such a
\hBF{2+1} category as $\EC_3^{\Zb_2ds}$.
\end{expl}

\begin{expl} 
\label{C3Z2f3}
A 2+1D system whose only topological excitations are three types of
particle-like topological excitations labeled by $e$, $v$, and $\eps$.  The
fusion of those excitations is given by $e\otimes e = v\otimes v = \eps \otimes
\eps =\one$, $e\otimes v = \eps$, $e\otimes \eps = v$, and $v\otimes \eps = e$.
The braiding properties are described by (1) $e$, $v$, and $\eps$ are all
fermions; (2) they all have a mutual $\pi$ statistics.\cite{VS1306}  We will
denote such a \hBF{2+1} category as $\EC_3^{\Zb_2f^3}$.
\end{expl}

\begin{expl} 
\label{C3sFZ2}
A 2+1D system whose only topological excitations are one type of
string-like topological excitations denoted as $s$, and there is no
particle-like topological excitations. The fusion of the two string excitations
give rise to a trivial string $s\otimes s =\one$.  There is no nontrivial
braiding property between the strings.  We will denote such a \hBF{2+1}
category as $\EC_3^{sF\Zb_2}$.
\end{expl}

\subsubsection{Examples of \hBF{4}  categories in 3+1D}

In 3+1D, we can have particle-like, string-like and membrane-like (2-brane-like)
topological excitations.

\begin{expl} 
\label{C4Z2}
A 3+1D system whose only topological excitations are one type of
particle-like topological excitations denoted as $e$ and one type of
string-like topological excitations denoted as $s$.  The fusion of those
excitations is given by $e\otimes e = s\otimes s =\one$.  The only nontrivial
braiding property is the phase factor $-1$ as we move the particle $e$ around
the string $s$.  We will denote such a \hBF{3+1} category as $\EC_4^{\Zb_2}$.
\end{expl}

\begin{expl} 
\label{C4sFZ2}
A 3+1D system whose only topological excitations are one type of
string-like topological excitations denoted as $s$. The fusion of the two
string excitations give rise to a trivial string $s\otimes s =\one$.  There is
no nontrivial braiding property between the strings.  We will denote such a
\hBF{3+1} category as $\EC_4^{sF\Zb_2}$.
\end{expl}

\begin{expl} 
\label{C4mFZ2}
A 3+1D system whose only topological excitations are one type of
membrane-like topological excitations denoted as $m$.  The fusion of the two
membrane-like excitations give rise to a trivial membrane $m\otimes m =\one$.
There is no nontrivial braiding property between the membranes.  We will denote
such a \hBF{3+1} category as $\EC_4^{mF\Zb_2}$.
\end{expl}

\subsubsection{Summary}

In the above simple examples in various dimensions, we have described some
topological properties of particle-like, string-like, and membrane-like
excitations. Those properties are needed to define a \hBF{}  category.  We may
also need some additional topological properties, such as the non-Abelian
geometric phases of the degenerate ground
states\cite{Wrig,KW9327,BM0535,KS1193} and the linear relation between the
fusion spaces (see Section \ref{GBFexample}), to completely define a
\hBF{} category.

A natural question is ``can those topological properties be realized by well
defined lbH model in the same space-time dimensions?''.  In fact, some
topological properties can be realized in the same dimension and the
corresponding \hBF{} category (or topological phase) is anomaly-free, while
some other topological properties cannot be realized in the same dimension and
the correspond \hBF{} category (or topological phase) is anomalous.  This is a
issue of gravitational anomaly, which will be discussed in the next section.

\section{A general discussion of gravitational anomaly}

\subsection{Closed and exact \hBF{}  categories}
\label{ceBF}

We are now ready to have a general discussion of
gravitational anomaly.  Usually,  gravitational anomaly is defined through
the variance of the path integral under the homeomorphic transformations
of the space-time.  Here, we will introduce a more general
definition as in \Ref{W1313}, which are closely related to anomaly inflow (the
first examples were discovered in \Ref{L8132,CH8527}).

Let us consider an $(n+1)$-dimensional topologically ordered state $x$ and a
spatial $p$-dimensional ($p<n$) defect $y$ in $x$ (e.g. a gapped boundary of
$x$). Let us assume that the excitations on the defect are also gapped.  Such a
$p$-dimensional defect $y$ can have lower dimensional wall excitations (or
sub-defects).  These sub-defects in $y$ have the universal properties, which
again are described by a gapped low energy effective theory, i.e a \hBF{p+1}
category.  We now ask, can we realize such a $(p+1)$-dimensional effective
theory on the defect $y$ by a well-defined local $p$-dimensional lattice model
without a higher dimensional bulk?  The answer can be yes or no. This leads to
two kinds of \hBF{p+1} categories (or two kinds of gapped low energy effective
theories).  This line of thinking leads to the notion of a \emph{closed}
\hBF{n+1}  category. 
\begin{defn} \textbf{Closed \hBF{n+1}  category}\\ 
If the topological properties
of a gapped state in $n$ space-time dimensions, described by a \hBF{n+1}
category $\EC_{n+1}$, can be realized by a well-defined lbH system in the same
dimension, then $\EC_{n+1}$ is said to be closed.  
\end{defn}\noindent 

\begin{conj} 
The closed \hBF{n+1}  categories classify the topological orders (\ie the
patterns of long-range entanglement). In other words, the topological
excitations and gravitational responses in two gapped states are described by
the same closed \hBF{n+1}  category iff the two gapped states are in the same
phase. 
\end{conj}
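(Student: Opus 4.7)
The plan is to prove the biconditional by handling the two implications separately, as they require quite different tools.

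For the forward implication (same phase $\Rightarrow$ same closed \hBF{n+1} category), I would argue directly from the definitions. A finite-depth LU transformation $U^M_{circ}=U^{(1)}_{pwl}\cdots U^{(M)}_{pwl}$ relating two ground states $|\Psi_1\rangle$ and $|\Psi_2\rangle$ is, by the causal structure shown in Fig.~\ref{qc}, effectively local on length scales large compared to $Ml$. Applied to a $p$-dimensional excitation of $|\Psi_1\rangle$ supported on a submanifold $M^p$, it produces a $p$-dimensional excitation of $|\Psi_2\rangle$ in the same equivalence class of Definition \ref{def:3-eq-rel}. This induces a bijection of elementary topological types that preserves all fusion spaces $\mathcal{V}^F$ and all braiding matrices by unitarity. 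The perturbative gravitational responses of Section \ref{sec:univ-grav-response} are low-energy effective Lagrangian data on curved backgrounds, hence also LU-invariant.

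For the reverse implication (same closed \hBF{n+1} category $\Rightarrow$ same phase), I would take a constructive approach via the tensor-network realization program of Section \ref{TNeBF}. Given a closed \hBF{n+1} category $\EC_{n+1}$, one builds a canonical lattice fixed-point state $|\Psi_\EC\rangle$ from the categorical data, using the topological path integral when $\EC_{n+1}$ is exact, and the embedding as a boundary of a trivial \lBF{n+2} theory in the non-exact case (Section \ref{TNcBF}). Given any gapped state $|\Psi\rangle$ whose excitation and gravitational-response data realize $\EC_{n+1}$, one would then exhibit an LU transformation built from entanglement renormalization flowing $|\Psi\rangle$ to its fixed point. Since this fixed point depends only on $\EC_{n+1}$, any two such states are connected by a composition of LU transformations, establishing the claim.

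The main obstacle is the completeness of the BF-category data, which is precisely why the statement is phrased as a conjecture rather than a theorem. The argument for the hard direction rests on Conjecture \ref{conj:ele-topological-order}, that elementary topological excitations together with perturbative gravitational responses leave no residual topological invariant. One must rule out the existence of hidden invariants such as global gravitational anomalies or discrete $\theta$-like parameters that would distinguish two states with identical BF data. The chiral central charge appearing in Example \ref{E8} already demonstrates that excitation-theoretic data alone is insufficient and motivates the inclusion of gravitational responses; ensuring that no further such invariants exist in arbitrary dimension is the deep open problem underlying the conjecture.
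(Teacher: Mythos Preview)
The paper offers no proof of this statement: it is explicitly labeled a \emph{Conjecture} and is left unproven throughout. There is therefore nothing in the paper to compare your argument against. Your final paragraph correctly recognizes this, so the honest summary is that you have written a heuristic outline of why the conjecture is plausible, not a proof.

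That said, a few remarks on the outline itself. Your forward direction is essentially the robustness philosophy the paper adopts implicitly: topological data (excitation types, fusion spaces, braiding phases, gravitational Chern--Simons responses) are by construction invariant under finite-depth LU transformations, so states in the same phase carry the same \hBF{n+1} data. This direction is widely believed and your sketch is in the right spirit, though making it precise for higher-dimensional excitations and for the perturbative gravitational responses would require real work.

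Your reverse direction, however, leans on machinery that is itself conjectural in the paper. The tensor-network fixed-point construction of Section~\ref{TNeBF} is stated only for \emph{exact} \lBF{} categories, and the extension to general closed categories via a trivial bulk in one higher dimension (Section~\ref{TNcBF}) is again a conjecture. Even granting those, the claim that entanglement renormalization flows an arbitrary gapped liquid with prescribed BF data to a \emph{canonical} fixed point determined solely by that data is precisely the content of the conjecture you are trying to prove, so this step is circular unless supplemented by an independent argument. You correctly flag Conjecture~\ref{conj:ele-topological-order} as the crux, but the TN reconstruction step hides the same completeness assumption in a different guise.
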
\noindent
Since a closed \hBF{n+1}  category $\EC_{n+1}$ can be realized by a lbH system
in the same dimension, this allows us to consider the boundary of the lbH
system.  If the boundary of such a system can be gapped, the topological
properties of the boundary will define a \hBF{n}  category $\EC_{n}$ in
one-lower dimension. This leads to the concept of 
\begin{defn} \textbf{Exact \hBF{n+1}  category}\\
If a $(n+1)$ space-time dimensional gapped lbH system, which realizes a closed \hBF{n+1} category $\EC_{n+1}$, can have a gapped boundary, then the \hBF{n+1}  category $\EC_{n+1}$ is said to be exact.
\end{defn}\noindent

Similarly, we can also use \emph{lbL system} to define the notions of a \lBF{} category and a closed/exact \lBF{}  category.  
\begin{defn} \textbf{\lBF{}  category}\\
The collection of all topological (or universal) properties of 
the instantons, the world line of particle-like topological excitations,
the world sheet of string-like topological excitations, etc., in an $(n+1)$-dimensional
space-time defines a $(n+1)$-dimensional \lBF{}  or \lBF{n+1}
category.  
\end{defn} \noindent
\begin{defn} \textbf{Closed/exact \lBF{n+1}  category}\\
If the topological properties in $n+1$ space-time dimensions, described by a
\lBF{n+1}  category $\EC^L_{n+1}$, can be realized by a well-defined \emph{lbL system} in the same dimension, then $\EC^L_{n+1}$ is said to
be closed.  If the lbL system also has a short-range
correlated boundary, then $\EC^L_{n+1}$ is said to be exact.
\end{defn}\noindent


\subsection{A definition of gravitational anomaly}
\label{ganm}

In the above, we have discussed whether a gapped low energy effective theory
(\ie the fusion and braiding properties of gapped topological excitations) can
be realized by a lbH system in the same dimension or has to
appear as a boundary theory of a gapped lbH system in one-higher
dimension.  More generally, a ``low energy effective theory'' is a  collection
of all the low energy properties, which may or may not be gapped.  We want to
consider when a low energy effective theory can be realized by a local
Hamiltonian system in the same dimension or has to appear as a boundary theory
of a gapped lbH system in one-higher dimension. 
This leads to the following concept:
\begin{defn} \textbf{H-type gravitational anomaly}\\
If we can realize a low energy effective theory (gapped or gapless) by a
lbH system in the same space-time dimension, we say the
low energy effective theory is free of H-type gravitational anomaly.  
\end{defn}\noindent
Let us assume that 
\begin{conj} 
A potentially anomalous $n$-dimensional low energy effective theory (gapped or
gapless) can always be realized on an $n$-dimensional defect in a $\t
n$-dimensional lbH system with an energy gap, where $\t n$ is finite and $\t
n>n$.
\end{conj}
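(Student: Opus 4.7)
The plan is to construct, for any potentially anomalous $n$-dimensional low energy effective theory $\EC$, an explicit gapped lbH model in some finite higher dimension $\t n$ whose $n$-dimensional defect realizes $\EC$. I would proceed in three steps, modeled on the boundary-bulk philosophy already used for the gapped case in Section \ref{ceBF}, extending it to theories that may be gapless.

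First I would package the anomaly of $\EC$. Using the framework of Section \ref{sec:univ-grav-response}, couple $\EC$ to a background vielbein and Lorentz connection and extract its gravitational response. The non-invariant part of the effective action descends, via the standard anomaly/inflow procedure, to a closed form on an $(n+1)$-dimensional extension. In parallel, the global (non-perturbative) data are packaged into the would-be boundary category structure of $\EC$. Together these give a candidate closed BF$_{n+1}$ or BF$_{\t n}$ category $\cZ(\EC)$ that must be realized by the bulk.

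Second, I would realize $\cZ(\EC)$ by a gapped lbH model. For the finite/topological part of the anomaly this is exactly the tensor-network and topological-path-integral construction of Section \ref{TNappr}, which produces a lbH model in one higher dimension whose bulk is described by any prescribed closed BF category (up to the known $E_8$-type ambiguity). For perturbative gravitational Chern-Simons anomalies, which only live in $4k+3$ space-time dimensions, the naive $\t n = n+1$ bulk need not be gapped, and I would instead pass to $\t n = n+2$ or higher, where the Chern-Simons form can be written as the boundary of a gapped bulk Pontryagin density. The upshot is a finite $\t n$ and a gapped lbH model whose universal low-energy response matches $\cZ(\EC)$ exactly.

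Third, I would engineer the defect. On the constructed bulk, choose boundary conditions for the topological degrees of freedom that match the BF data of $\EC$, and add a boundary local Hamiltonian that supplies the remaining (possibly gapless) non-topological content of $\EC$; that the latter can always be appended follows because any local $n$-dimensional Hamiltonian with short range interactions can be put on the boundary lattice without touching the bulk gap. Matching universal properties on both sides then identifies the defect theory with $\EC$.

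The main obstacle will be the second step in full generality. For gapped $\EC$ the required closed BF$_{\t n}$ category is essentially $\cZ_n(\EC)$ and its existence is the content of the boundary-bulk relation already conjectured in the paper. For gapless $\EC$ one must argue that only the anomaly data controls bulk existence, and that every anomaly polynomial (perturbative plus global) can be matched by a gapped lbH model in \emph{some} finite codimension. The delicate point is keeping $\t n$ finite: a priori a tower of higher-dimensional anomalies could push $\t n \to \infty$, but one expects, using the classification of invertible topological orders sketched in Section \ref{invTop} (the $\Zb$, $\Zb_2$, $\Zb\oplus\Zb$ results) together with iterated application of the $\cZ$ construction, that the process terminates at finite codimension. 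Making this termination rigorous is where the real work lies.
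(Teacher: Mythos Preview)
The paper does not prove this statement: it is explicitly labeled a \textbf{Conjecture} and left unproven. There is no ``paper's own proof'' to compare against. The paper merely states the conjecture, then immediately uses it (together with the dimensional reduction of Fig.~\ref{boundary}) to derive Corollary~\ref{cor:dim-reduction}. So you are attempting to prove something the authors deliberately left open.

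Your outline is a reasonable research plan, but it is not a proof, and you correctly identify the obstruction yourself. A few specific points where your sketch overreaches relative to what the paper actually establishes: (i) You invoke the tensor-network construction of Section~\ref{TNappr} as producing a gapped lbH model for ``any prescribed closed BF category,'' but the paper only conjectures that stable topological path integrals realize \emph{exact} \lBF{} categories, and the passage from \lBF{} to \hBF{} is itself conjectural (Remark~\ref{LHclosed}). (ii) For gapless $\EC$, your claim that ``only the anomaly data controls bulk existence'' is precisely the content of the conjecture you are trying to prove; packaging the anomaly into a candidate $\cZ(\EC)$ does not, by itself, guarantee that a gapped lbH realization of that bulk exists. (iii) The termination argument for finite $\t n$ appeals to the classification of invertible topological orders in Section~\ref{invTop}, but that classification is itself partial and conjectural in the paper, so it cannot serve as a foundation for a rigorous bound on $\t n$.

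In short: your three-step strategy captures the intended physical picture behind the conjecture, but each step rests on further conjectures from the same paper. That is why the authors stated it as a conjecture rather than a theorem.
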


\begin{figure}[tb] 
\begin{center} 
\includegraphics[scale=0.4]{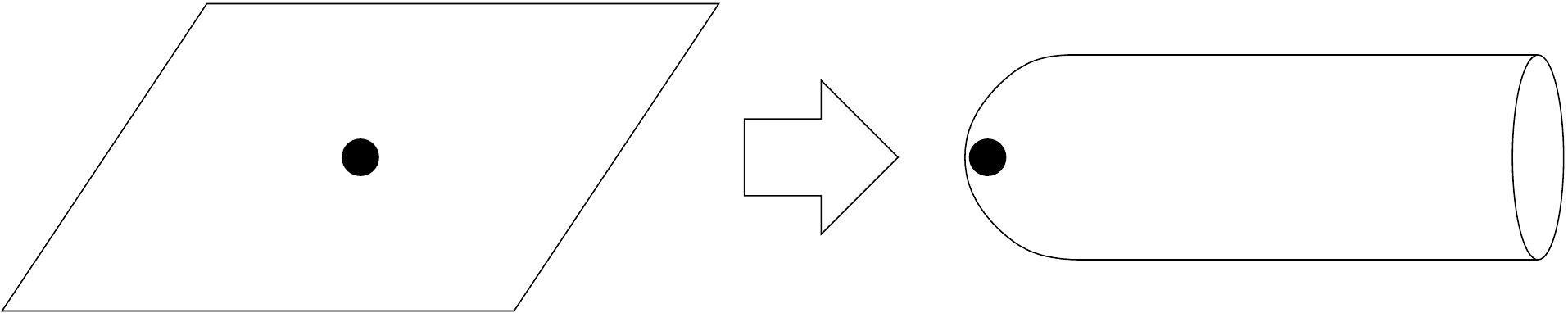} \end{center}
\caption{Dimensional reduction: 
A point defect in 2D looks like a boundary
of an effective 1D system, if we wrap the 2D
space into a cylinder. 
} 
\label{boundary} 
\end{figure}

Note that given an $n$-dimensional defect $M^n$ in a higher $\t n$-dimensional
space $\t M^{\t n}$, we can always deform the higher dimensional space $\t
M^{\t n}$ so that the defect looks like a boundary when viewed from far away
(see Fig.  \ref{boundary}).\cite{W1313} This process will be called {\it
dimensional reduction}. We have the following result of dimensional reduction. 
\begin{cor}  \label{cor:dim-reduction}
A potentially anomalous $n$-dimensional low energy effective theory (gapped or
gapless) can always be realized by a boundary of a $(n+1)$-dimensional local
Hamiltonian system with an energy gap.  
\end{cor}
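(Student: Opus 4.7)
The plan is to combine the preceding conjecture with the dimensional reduction idea illustrated in Figure~\ref{boundary}. The conjecture hands us, for any potentially anomalous $n$-dimensional effective theory, a realization on an $n$-dimensional defect $M^n$ inside some ambient gapped lbH system on $\tilde M^{\tilde n}$ with $\tilde n > n$ finite. The corollary then asks us to collapse $\tilde n$ down to exactly $n+1$ and to convert ``defect'' into ``boundary''. Both conversions should be accomplished geometrically, without changing the phase of the ambient bulk.

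First I would pick a tubular neighborhood of $M^n$ in $\tilde M^{\tilde n}$, which has $\tilde n - n$ normal directions. I would single out one of these as a ``radial'' direction transverse to $M^n$, and compactify the remaining $\tilde n - n - 1$ transverse directions on small circles to form an effective $(n+1)$-dimensional bulk. Because the ambient Hamiltonian is local, bosonic and gapped, compactifying transverse directions on a short cycle yields an effective local bosonic Hamiltonian in one lower dimension whose spectrum is obtained by keeping the lowest Kaluza--Klein sector fiberwise; this sector inherits the bulk gap (up to corrections that are parametrically smaller than the bulk gap when the cycles are small compared with the correlation length). Iterating (or doing this reduction in one geometric step) until only one radial transverse direction remains, $M^n$ appears as the codimension-$1$ submanifold at the end of the radial coordinate, i.e.\ as a boundary of an $(n+1)$-dimensional gapped lbH system. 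This is precisely the passage from the left to the right picture in Figure~\ref{boundary}.

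Next I would argue that the anomalous $n$-dimensional low energy theory on $M^n$ is preserved by this dimensional reduction. The topological data carried by $M^n$ --- the fusion and braiding of sub-defects, the ground state spaces on various closed spaces, and the universal gravitational responses discussed in Section~\ref{sec:univ-prop-hbfcat} --- depend only on physics at energy scales below the bulk gap. Since the compactification keeps us within the same gapped phase of the ambient system, these data are transported unchanged from the original $\tilde n$-dimensional realization to the new $(n+1)$-dimensional boundary realization. Thus the boundary effective theory of the new $(n+1)$-dimensional lbH system agrees with the original potentially anomalous $n$-dimensional theory we started from.

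The main obstacle I expect is the stability statement underlying the geometric reduction: one must ensure that shrinking the $\tilde n - n - 1$ transverse cycles does not close the bulk gap or induce an unwanted low-energy mode localized on $M^n$ that would modify its effective theory. Within the framework of this paper this is built into the definition of a stable gapped quantum liquid in Section~\ref{topdef} --- the phase is by assumption robust under local perturbations of the Hamiltonian, and compactification on a small transverse torus is such a deformation provided one stays within the stable gapped regime. Granting this (and granting the preceding conjecture, as the statement instructs), the corollary follows by applying that conjecture to the given anomalous theory and then performing the one-step reduction from $\tilde n$ to $n+1$ described above.
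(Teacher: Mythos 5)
Your proposal is correct and follows essentially the same route as the paper: the paper's entire argument is the one-sentence ``dimensional reduction'' remark preceding the corollary, namely that the ambient $\tilde n$-dimensional space around the defect $M^n$ can be wrapped up (as in Fig.~\ref{boundary}) so that, viewed from far away, the defect becomes the boundary of an effective $(n+1)$-dimensional gapped system. Your tubular-neighborhood/Kaluza--Klein elaboration and the discussion of gap stability and preservation of the sub-gap data are just a more explicit rendering of that same geometric step.
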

\begin{rema}
We see that a $p$-dimensional excitation can be viewed as an anomalous
\hBF{p+1} category in $(p+1)$ dimensional space-time. A simple
$p$-dimensional excitation may, however, correspond to an composite \hBF{p+1} category.
\end{rema}

It is clear that
the low energy effective theory on the boundary of short-range entangled state
can be realized as a pure boundary theory without the bulk.  So the low energy
effective theory on the boundary of short-range entangled state is always free
of gravitational anomaly, while the low energy effective theory on the boundary
of long-range entangled state always have gravitational anomaly. This line of
thinking allows us to show that
\begin{cor} 
(1) The H-type gravitational anomalies in $n$ space-time dimensions are
classified by topological orders\cite{Wtop,Wrig} (\ie patterns of long-range
entanglement\cite{CGW1038}) in one-higher dimension. In other words, the H-type
gravitational anomalies in $n$ space-time dimensions are classified by closed
\hBF{n+1}  categories  $\EC^\text{closed}_{n+1}$ in
one-higher dimension.  \\ 
(2) The gapped H-type gravitational anomalies in $n$ space-time dimensions are
classified by exact \hBF{n+1}  categories
$\EC^\text{exact}_{n+1}$ in one-higher dimension.\\ 
(3) A gapped system described by a \hBF{n}  category $\EC_n$ has a H-type
gravitational anomaly if $\EC_n$ is not closed.  So a non-closed  \hBF{n}  category
$\EC_n$ describes a gravitationally anomalous theory of H-type.  We also call a
non-closed  \hBF{n}  category as an anomalous \hBF{n}  category.
\end{cor}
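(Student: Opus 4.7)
The plan is to prove the three parts together by combining the dimensional reduction principle (Corollary \ref{cor:dim-reduction}) with the definitions of closed and exact \hBF{} categories and the holographic uniqueness of the \bulk indicated by Lemma \ref{lemma:unique-bulk}. The overall idea is that an H-type gravitational anomaly is, by definition, exactly the obstruction to realizing an effective theory in its own dimension, and by dimensional reduction this obstruction is encoded in a gapped bulk one dimension higher. Thus, classifying H-type anomalies reduces to classifying the allowed bulks, which are precisely closed \hBF{n+1} categories.

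For part (1), I would construct mutually inverse assignments between H-type anomalies in $n$ space-time dimensions and closed \hBF{n+1} categories. Given an anomalous effective theory $T$ in $n$ dimensions, Corollary \ref{cor:dim-reduction} exhibits $T$ as a boundary of some gapped lbH system in $n+1$ dimensions whose bulk topological order is a closed \hBF{n+1} category $\EC_{n+1}$. The uniqueness asserted in the discussion of the \bulk (the statement that $\EC_{n+1}$ is uniquely determined by $T$) ensures that the assignment $T \mapsto \EC_{n+1} = \cZ_n(T)$ is well-defined, independent of the particular bulk extension chosen. Conversely, every closed \hBF{n+1} category $\EC_{n+1}$ supports a collection of possible boundary theories; a nontrivial bulk $\EC_{n+1} \neq \one_{n+1}$ forces any such boundary theory to be non-realizable in $n$ dimensions, i.e.\ anomalous, while the trivial bulk corresponds exactly to the anomaly-free case.

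Part (2) specializes this correspondence to gapped boundaries. A gapped H-type anomaly is an $n$-dimensional gapped theory $\EC_n$ not realizable in its own dimension; by part (1) it corresponds to a closed bulk $\EC_{n+1}$, and the existence of a gapped boundary realizing $\EC_n$ is by definition the statement that $\EC_{n+1}$ is exact. Conversely, every exact $\EC_{n+1}$ admits a gapped boundary \hBF{n} category by definition, and this boundary is anomalous precisely when $\EC_{n+1}$ is nontrivial. Part (3) is then the contrapositive of the definition of closed: if $\EC_n$ were realizable by a lbH system in its own dimension it would be closed, so a non-closed $\EC_n$ must occur only as the boundary of a nontrivial bulk, which is by definition an H-type gravitational anomaly.

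The main obstacle lies in justifying the well-definedness step in part (1), namely that the bulk assigned to a boundary theory is independent of the chosen extension. This is the content of the holographic uniqueness principle captured formally by Lemma \ref{lemma:unique-bulk}, and at the present stage its proof in arbitrary dimensions rests on the conjectural characterization of topological orders by the fusion and braiding of elementary excitations (Conjecture \ref{conj:ele-topological-order}). Consequently, the corollary is rigorous modulo these conjectures; a fully rigorous derivation would require establishing the uniqueness of the \bulk in all dimensions from first principles, which at this stage is taken as an axiomatic input to the \hBF{} framework rather than proved.
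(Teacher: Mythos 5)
Your proposal is correct and follows essentially the same route as the paper: the corollary is obtained by combining the realizability conjecture and dimensional reduction (Corollary \ref{cor:dim-reduction}) with the definitions of closed and exact \hBF{} categories, the paper's own justification being the preceding observation that boundaries of short-range entangled bulks are anomaly-free while boundaries of long-range entangled bulks are anomalous. You make explicit the role of the uniqueness of the \bulk (Lemma \ref{lemma:unique-bulk}) in ensuring the classification is well-defined, which the paper leaves implicit at this point (the lemma is only established in a later section), and you correctly flag that the whole statement is rigorous only modulo the conjectural inputs of the framework.
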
\noindent



\medskip
Similarly, we can also define
\begin{defn} \textbf{L-type gravitational anomaly}\\
If we can realize a low energy effective theory (gapped or gapless) by a
lbL system in the same space-time dimension, we say the
low energy effective theory is free of L-type gravitational anomaly.  
\end{defn}\noindent
We also have 
\begin{conj} 
a potentially anomalous $n$-dimensional low energy effective theory (gapped or
gapless) can always be realized by a boundary of a $(n+1)$-dimensional lbL system with an energy gap.  
\end{conj}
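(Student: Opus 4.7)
The plan is to mirror the H-type argument used to derive Corollary \ref{cor:dim-reduction}, but in the Lagrangian setting. Step one is to establish (or assume, as the H-type version is assumed in the excerpt) the $\tilde n$-dimensional realization: any potentially anomalous $n$-dimensional low energy effective theory can be realized on an $n$-dimensional defect inside some $\tilde n$-dimensional gapped lbL system with $\tilde n>n$ finite. For lbL theories this is natural because a path integral on a space-time containing an $n$-dimensional defect locus is still a local path integral with short-range interactions, and adjusting the integrand only along the defect does not spoil locality in the surrounding bulk. One should therefore first argue that any low energy effective theory has some lbL ``thickening'' in which it appears as an embedded defect, possibly by embedding the Hamiltonian realization (whose existence is the H-type conjecture) into an lbL model via the standard transfer-matrix/path-integral construction.

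Step two is the dimensional reduction trick illustrated in Fig.\,\ref{boundary}. Given the defect $M^n\subset \tilde M^{\tilde n}$, I would deform $\tilde M^{\tilde n}$ into a long cylinder-like region that narrows away from $M^n$, so that $\tilde M^{\tilde n}$ collapses in the transverse directions and, viewed from far away, the defect looks like a boundary of an effective $(n+1)$-dimensional region. Because the bulk is gapped, the transverse collapse can be carried out at fixed correlation length, producing an effective lbL model in $(n+1)$ space-time dimensions with a boundary on which the original anomalous theory lives. Compactifying all but one transverse direction on small circles yields an effective $(n{+}1)$-dimensional path integral whose short-range locality follows from that of the original $\tilde n$-dimensional integrand.

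Step three is to verify that the properties preserved through these manipulations are exactly those the conjecture asserts: the bulk remains gapped (since compactifying transverse directions of a gapped lbL model with a suitable choice of boundary conditions produces another gapped lbL model), the defect data $M^n$ become boundary data, and the low energy content on the boundary is unchanged because the collapse happens at scales much larger than the correlation length.

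The main obstacle is making Step one precise in the L-type setting. In the Hamiltonian version one can invoke gapped defects directly, but for lbL theories we must ensure that every potentially anomalous effective theory admits some local path-integral completion as a defect, with a genuinely gapped bulk. This is nontrivial because there exist gravitational anomalies (e.g.\ the unquantized gravitational Chern-Simons sector in $4k+3$ dimensions discussed in Section \ref{ceBF}) whose status as L-type defects is delicate, and because many known low energy effective actions do not come equipped with a manifest lattice path-integral bulk. A conservative version of the conjecture therefore restricts to anomalies modulo the gCS class, and the remaining content should follow from the tensor-network constructions promised in Section \ref{TNappr}, which are designed precisely to furnish such bulk lbL models.
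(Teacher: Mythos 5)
The statement you are addressing is labeled a \emph{conjecture} in the paper, and the paper supplies no proof of it: for the L-type case the authors simply assert the boundary realization directly, in parallel with the H-type case where they first conjecture realizability as an $n$-dimensional defect in a finite-dimensional gapped lbH bulk and then obtain the one-higher-dimensional boundary statement (Corollary~\ref{cor:dim-reduction}) by the dimensional-reduction picture of Fig.~\ref{boundary}. Your proposal reconstructs exactly that two-step logic in the Lagrangian setting, so in terms of \emph{approach} you are aligned with what the authors intend.

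However, you should be clear that what you have written is not a proof but a reduction of one conjecture to another. Your Step one --- that every potentially anomalous $n$-dimensional effective theory admits \emph{some} lbL completion as a gapped defect in finite codimension --- is precisely the content that is conjectural; the paper makes the analogous assumption explicitly for the H-type case and implicitly here. Your suggestion to obtain the L-type defect realization by ``embedding the Hamiltonian realization into an lbL model via the transfer-matrix construction'' does not work in general: the paper emphasizes repeatedly that the H-type and L-type notions are genuinely different (a closed \hBF{} category need only have a well-defined path integral on mapping tori, whereas an lbL system must be defined on arbitrary orientable space-time topologies), so an H-type bulk does not automatically furnish an L-type one. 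You correctly flag the gCS sector in $4k+3$ dimensions as an obstruction that must be quotiented out, which matches the paper's treatment in Section~\ref{ceBF}, and you correctly point to the tensor-network constructions of Section~\ref{TNappr} as the intended source of explicit lbL bulks. With those caveats made explicit --- i.e., presented as a plausibility argument conditional on the defect-realization assumption rather than as a derivation --- your write-up is a faithful account of the paper's (conjectural) reasoning.
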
\noindent
Thus
\begin{cor} 
(1) The L-type gravitational anomalies in $n$ space-time dimensions are
classified by closed \lBF{n+1}  categories $\EC^\text{L,closed}_{n+1}$ in one-higher
dimension.  \\ 
(2) The short-range correlated L-type gravitational anomalies in $n$ space-time
dimensions are classified by exact \lBF{n+1}  categories $\EC^\text{L,exact}_{n+1}$
in one-higher dimension.\\ 
(3) A system described by a \lBF{n}  category $\EC^L_n$ has a gravitational
anomaly if $\EC^L_n$ is not closed.  So a non-closed  \lBF{n}  category $\EC_n$
describes a gravitationally anomalous theory of L-type.  
\end{cor}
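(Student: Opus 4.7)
The plan is to mirror the argument already used for H-type anomalies, replacing lbH systems with lbL systems throughout. First I would fix the conjecture just stated (that every potentially L-anomalous $n$-dimensional low energy effective theory is realized on some codimension-$k$ defect inside a gapped $\tilde n$-dimensional lbL system) and upgrade it to the $k=1$ statement, i.e.\ realization on a boundary. This is exactly the dimensional reduction move used in Corollary~\ref{cor:dim-reduction}: take the defect $M^n \subset \tilde M^{\tilde n}$, view a tubular neighborhood as a product, and contract the transverse directions so that the defect appears as a codimension-$1$ interface from far away. The same geometric deformation applies verbatim to an lbL system because the local path integral structure is preserved along the deformation; the only thing one must check is that the bulk partition function stays short-range correlated under the contraction, which follows from the bulk gap assumption used in the conjecture.

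Once boundary realization is established, I would organize the classification map explicitly. Given an $n$-dimensional low energy effective theory $T$ free to have an L-type anomaly, pick any gapped lbL bulk in $(n+1)$-dimensions on whose boundary $T$ lives, and let $\EC^L_{n+1}$ be the closed \lBF{n+1} category describing the bulk topological excitations, instantons, and gravitational responses. For well-definedness, one uses the boundary-bulk relation recalled in the paper: $\EC^L_{n+1}$ is uniquely determined by the boundary data as $\cZ_n(\text{boundary})$, so different bulk realizations give the same closed \lBF{n+1} category. Conversely, every closed \lBF{n+1} category is by definition realized by an lbL system in $(n+1)$-dimensions, and any of its boundaries (gapped or gapless) provides a representative of an $n$-dimensional L-type anomalous theory whose associated bulk is $\EC^L_{n+1}$. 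This gives the bijection asserted in part~(1).

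For part~(2), I would restrict the construction to L-anomalous theories that are themselves short-range correlated. By hypothesis such a theory lives on a gapped boundary of some closed \lBF{n+1} category $\EC^L_{n+1}$; but possessing a gapped boundary is exactly the definition of an exact \lBF{n+1} category. Conversely, every exact \lBF{n+1} category has, by definition, a gapped boundary, which realizes a short-range correlated L-anomalous theory one dimension down. Hence the classification narrows to exact \lBF{n+1} categories. Part~(3) is then immediate: if $\EC^L_n$ is closed, it is realizable in its own dimension and free of L-anomaly; contrapositively, a non-closed $\EC^L_n$ has a nontrivial bulk $\cZ_n(\EC^L_n) \ne \one_{n+1}$, so by part~(1) it carries a nontrivial L-anomaly.

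The main obstacle, as in the H-type case, is not combinatorial but foundational: it is the claim that an \emph{arbitrary} (potentially gapless) $n$-dimensional lbL effective theory can be realized on a codimension-$1$ defect of a gapped $(n+1)$-dimensional lbL bulk. Reducing higher-codimension defects to boundaries via the dimensional-reduction move is clean, but the initial embedding into some gapped bulk is a physical assumption rather than a theorem, and for lbL systems one also has to ensure the bulk path integral remains well-defined (in particular, that there are no unquantized gravitational Chern-Simons type obstructions) after the reduction. I would flag this step as the one relying on the preceding conjecture rather than on a fully rigorous derivation, consistent with the paper's convention of treating the boundary-bulk correspondence as the working axiom on which the corollary rests.
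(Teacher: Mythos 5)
Your proposal is correct and follows essentially the same route as the paper: the corollary is obtained by combining the preceding conjecture (boundary realization of any potentially L-anomalous theory via dimensional reduction), the uniqueness of the bulk from the boundary--bulk relation, and the definitions of closed/exact \lBF{} categories, with part (3) as the contrapositive of closedness. Your explicit flagging of the boundary-realization conjecture as the non-rigorous input, and of the unquantized gravitational Chern--Simons caveat, matches the paper's own caveats stated elsewhere.
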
\noindent


We have listed many examples of BF categories in Section \ref{BFexample}.  In
Appendix \ref{app:examples}, we will discuss those simple examples further to
illustrate the notions of exact, closed, and anomalous BF categories, and to
see how those simple examples fit into the above three classes of BF
categories. 

\section{Boundary-bulk relation for BF categories in different dimensions} 
\label{sec:boundary-bulk-relation}

The results in this section apply to both \hBF{}  and \lBF{}  categories.  We
will refer them as BF categories.

\subsection{The boundary of a given bulk}

We have introduced $\BF_n$ category to describe a set of topological
excitations in $n$-dimensional space-time.  
Those topological excitations have a property that they are closed under the
local fusion and braiding operations (see Fig.  \ref{fuse} and  Fig.
\ref{braid}).  Their fusion rules braiding properties are consistent among
themselves.  We also introduced the notions of closed/exact $\BF_n$  category.


The (generic) BF categories in $n$ space-time dimension are closely related to
the exact BF categories in $n+1$ space-time dimension. In this section, we will
explore this relation in details. 

Consider a well-defined gapped state in $n+1$ space-time dimension, whose
topological excitations are described by an exact BF category $\EC_{n+1}$.
Since  $\EC_{n+1}$ is exact, the gapped state in $n+1$ space-time dimension has
a gapped boundary of $n$ space-time dimension.  Some of the  topological
excitations on the  gapped boundary come from the topological excitations in
the bulk, while others are confined on the boundary and only appear on the
boundary.  Since the boundary topological excitations can still fuse and braid
within the $n$-dimensional boundary, they are described by a BF category
$\EC_n$. In general, such $\EC_n$ is not unique. Moreover, even if $\EC_n$ is
fixed, we still can not fix the boundary type. For example, in toric code
model, there are two types of gapped boundaries: a rough boundary and a smooth
boundary (see Fig.\,\ref{toric}). The boundary excitations in both cases are given by the unitary
fusion category $\rep_{\Zb_2}$, which is the category of representations of
$\Zb_2$ group. 

\subsection{The bulk-to-boundary map}

The additional data that is needed to determine the boundary is the so-called
the bulk-to-boundary map, which is a functor $f: \EC_{n+1} \to \EC_n$ that maps
the bulk topological excitations into a subset of boundary topological
excitations. Those bulk topological excitations, that are mapped into the
trivial excitations on the boundary, are said to be condensed on the boundary.
For example, in the toric code model, the smooth boundary in Fig.\,\ref{toric}
corresponds to the condensation of $e$-particles; the rough boundary in
Fig.\,\ref{toric} corresponds to the condensation of $m$-particles. In general,
there might be different sets of boundary excitations $\EC_n$ for a given
$\EC_{n+1}$ as in Levin-Wen types of lattice models\cite{KK1251}.  
So an $n$-dimensional boundary of a given $(n+1)$-dimensional $\EC_{n+1}$ is
determined by a pair $(\EC_n, \EC_{n+1} \xrightarrow{f} \EC_{n})$. The functor
$f$ can not be arbitrary. It must satisfy some consistency conditions. We will
return to this point later. 

\begin{rema}
As we will argue later that $\EC_n$ determines the \bulk $\EC_{n+1}$ uniquely up to isomorphisms. If $\EC_{n+1}$ has non-trivial automorphisms, then a bulk-to-boundary map can be twisted by these automorphisms to give different bulk-to-boundary maps. The non-trivial automorphism is physically detectable. So we will treat $\EC_n$ associated to a different bulk-to-boundary maps as different boundary types.   There is no contradiction between different boundary types associated to the same $\EC_n$ and the uniqueness of the \bulk up to isomorphisms in Lemma\,\ref{lemma:unique-bulk}.
\end{rema}

Another way to characterize a boundary is to specify the condensation (of the
bulk excitations) that can create a trivial condensed phase $\one_{n+1}$ and a
gapped boundary. For example, in the Levin-Wen models, the boundary types can
be classified by Lagrangian algebras in the tensor category of bulk
excitations\cite{kong-anyon}. Two types of boundary in toric model corresponds to two different condensations\cite{kong-anyon}. 
What is really important to us is that if an $n+1$ dimensional BF category
$\ED_{n+1}$ is exact, it is reasonable that one can always create the trivial phase and a gapped
boundary via a condensation of the bulk excitations in $\ED_{n+1}$.  We assume this for the rest of this section. We will use it, in particular, in the proof of Theorem\,\ref{thm:w=gw}. 

\subsection{The bulk of a given boundary}
\label{cfun}

Now we consider the bulk-boundary relation in the reversed order.
Given an $n$ space-time dimensional boundary BF category $\EC_n$, it turns out
that it determines uniquely an $(n+1)$ space-time dimensional \bulk BF category
$\EC_{n+1}$. Here, we must make it very clear what we mean by ``a bulk". A
given $n$ space-time dimensional $\BF$ category can always be realized as an
$n$-dimensional defect in a higher dimensional (possibly trivial) topological
order. But such realization is almost never unique. However, by the dimensional
reduction given in Fig.\,\ref{boundary} and Corollary\,\ref{cor:dim-reduction},
we can always reduce such a realization down to an exact $(n+1)$ space-time
dimensional BF category $\EC_{n+1}$ with a gapped boundary given by $\EC_n$.
Such BF category $\EC_{n+1}$ is unique.  This will be our first important
result (Lemma\,\ref{lemma:unique-bulk}), which leads to many interesting
consequences. 

Before we state Lemma\,\ref{lemma:unique-bulk}, let us first state a generalization of the results in \Ref{CGW1038} to the anomalous topological phases.
\begin{lemma}  \label{lemma:H-H'}
If there are two $n+1$-dimensional lbH systems $H$ and $H'$ realizing the same
$n$-dimensional topological phase as their boundaries, then there is a
neighborhood $U$ of the boundary such that the restriction of $H$ in $U$,
denoted by $H|_U$, can be deformed smoothly to $H'|_U$ without closing the gap.
In other words, there is a smooth family $H_t$ for $t\in [0,1]$ without closing
the gap such that $H_0=H$, and $H_r$ and $H_s$ differ only in $U$ for $s,t\in
[0,1]$ and $H_1|_U=H'|_U$. 
\end{lemma}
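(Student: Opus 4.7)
The plan is to combine the local-unitary characterization of gapped phases (Section \ref{topdef}) with a boundary localization argument. I would break the proof into three steps: (i) argue that $H$ and $H'$ must in fact realize the same bulk topological order (not only the same boundary phase); (ii) modify $H$ by a bulk-supported smooth deformation so that its ground state agrees with that of $H'$ away from a collar neighborhood $U$ of $\partial X$; and (iii) convert the residual boundary-supported finite-depth circuit relating the two ground states into a smooth, gap-preserving Hamiltonian path supported in $U$.

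For step (i), since $H$ and $H'$ realize the same boundary phase by hypothesis and the boundary determines the bulk (by the forthcoming Lemma~\ref{lemma:unique-bulk}), they realize the same bulk phase deep in the interior. Applying the standard CGW equivalence to the bulk then supplies a finite-depth quantum circuit $V$ supported strictly inside $X\setminus U$ that maps the ground state of $H$ to a state agreeing with that of $H'$ on $X\setminus U$. Writing $V$ as a local unitary evolution $V_s = \mathcal T \exp(-i\int_0^s dg\, \tilde H_{\mathrm{bulk}}(g))$ and pre-conjugating $H$ by $V_s$ gives a smooth gap-preserving deformation whose support lies outside $U$; this completes step (ii), after which we may assume $H$ and $H'$ already agree outside $U$ and that their ground states coincide there.

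For step (iii), the full boundary phase equivalence provides a finite-depth circuit $U_{\text{circ}}$ mapping the (modified) ground state of $H$ to that of $H'$. Since the two states now agree outside $U$, standard Lieb--Robinson truncation lets us replace $U_{\text{circ}}$ by a circuit supported inside a bounded thickening of $U$, which we absorb into $U$. Writing this truncated circuit as $U_t = \mathcal T \exp(-i\int_0^t dg\, \tilde H(g))$ with $\tilde H(g)$ supported in $U$, and setting $H_t := U_t H U_t^\dagger$, we obtain a smooth one-parameter family of Hamiltonians differing from $H$ only within $U$, with gap uniformly bounded below because $U_t$ is unitary. At $t=1$ the Hamiltonian $H_1$ shares the ground space and local low-energy structure of $H'$ inside $U$; a final straight-line interpolation inside $U$ between $H_1|_U$ and $H'|_U$, stable by a Bravyi--Hastings--Michalakis-type argument for local Hamiltonians with coinciding ground spaces, produces the desired path with $H_1|_U = H'|_U$.

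The main obstacle is step (iii): rigorously truncating $U_{\text{circ}}$ to $U$ and justifying that the final local adjustment between two Hamiltonians sharing a ground space preserves the gap. Both are standard moves in the quasi-adiabatic continuation literature for closed systems, but in the present setting the boundary itself carries topological order (possibly anomalous), so the Lieb--Robinson estimates must be applied carefully near $\partial X$, and the stability theorem must be invoked in a form that accommodates the potentially degenerate boundary ground space. Essentially, the substantive content of the lemma is that once the bulk phases are matched via step (ii), all remaining reshuffling is local to the boundary and can be implemented adiabatically without closing the gap.
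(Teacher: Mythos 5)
The paper offers no proof of Lemma~\ref{lemma:H-H'}: it is asserted as ``a generalization of the results in \Ref{CGW1038} to the anomalous topological phases'' and then used as input to Lemma~\ref{lemma:unique-bulk}. So your proposal must be judged on its own, and it has two concrete problems.

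First, your step (i) is circular. You invoke Lemma~\ref{lemma:unique-bulk} (``the boundary determines the bulk'') to conclude that $H$ and $H'$ realize the same bulk phase; but in the paper Lemma~\ref{lemma:unique-bulk} is \emph{proved from} Lemma~\ref{lemma:H-H'}. The entire point of the logical order is that Lemma~\ref{lemma:H-H'} must hold while the two bulks are still a priori allowed to be different --- it is the tool by which one later shows they cannot be. Any proof that begins by matching the bulks has assumed the downstream conclusion.

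Second, step (ii) violates the statement being proved. The lemma requires that $H_r$ and $H_s$ differ \emph{only in $U$} for all $r,s\in[0,1]$, with $H_0=H$; your bulk-supported pre-conjugation by $V_s$ is a deformation supported in $X\setminus U$, which is exactly what the conclusion forbids. The intended content of the lemma is that no bulk manipulation is needed at all: the hypothesis that the two boundaries realize the same topological phase should, by the LU-transformation definition of ``same phase'' in Section~\ref{topdef} (extended to anomalous boundary theories), directly supply a finite-depth circuit supported in a collar of the boundary relating the two boundary states, and that circuit is then promoted to a smooth gap-preserving Hamiltonian path inside $U$. Your step (iii) machinery --- conjugation by $U_t=\mathcal{T}\exp(-i\int_0^t dg\,\tilde H(g))$ with $\tilde H(g)$ supported in $U$, Lieb--Robinson truncation, and a final interpolation between Hamiltonians sharing a ground space --- is the right kind of argument and should constitute essentially the whole proof, applied directly to the boundary neighborhood. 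Even there, be aware that a straight-line interpolation between two local Hamiltonians with coinciding ground spaces is not automatically gapped; one needs a stability input of Bravyi--Hastings--Michalakis type adapted to a boundary region carrying a (possibly anomalous, possibly degenerate) topological order, which is precisely the unproved physical content the paper is sweeping into the phrase ``generalization of \Ref{CGW1038}.''
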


\begin{lemma} \label{lemma:unique-bulk}
Two exact BF categories $\EC_{n+1}$ and $\EC_{n+1}'$ must be equivalent $\EC_{n+1}=\EC_{n+1}'$ if they can have gapped boundaries described by the same BF category $\EC_n$.
\end{lemma}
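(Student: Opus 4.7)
The plan is to leverage Lemma \ref{lemma:H-H'} to produce a gap-preserving interpolation between the two bulk realizations, and then invoke invariance of the bulk phase under such interpolations. First, I would use exactness of $\EC_{n+1}$ and $\EC_{n+1}'$ to realize them concretely: pick lbH systems $H$ and $H'$ on a half-space $\Rb^n\times[0,\infty)$ with bulks $\EC_{n+1}$ and $\EC_{n+1}'$ respectively, and both with physical boundary at $\Rb^n\times\{0\}$ realizing the common boundary phase $\EC_n$.

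Next, I would apply Lemma \ref{lemma:H-H'} to the pair $H,H'$: since their boundary phases agree, there exist a neighborhood $U=\Rb^n\times[0,L]$ of the boundary and a smooth one-parameter family $H_t$, $t\in[0,1]$, with $H_0=H$, $H_t\equiv H$ outside $U$, $H_1|_U=H'|_U$, and no gap closing along the path. The upshot is a single gapped Hamiltonian $H_1$, reached from $H$ by a genuinely gap-preserving deformation, that agrees with $H'$ in a thick slab near the boundary and with $H$ in the deep interior.

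The key step is then to read off the bulk phase locally deep inside the slab. Taking $L$ large (using the freedom in the choice of $U$, or, if needed, iterating the lemma to thicken the slab), fix a point $p=(0,L/2)$ in $U$ that is far from both the physical boundary at $x=0$ and the inner face at $x=L$. The local physics of $H_t$ near $p$ defines a bulk phase, which is a local topological invariant of the Hamiltonian; it is constant in $t$ because $H_t$ never closes the gap. At $t=0$ this local phase is that of $H$, namely $\EC_{n+1}$; at $t=1$ the Hamiltonian near $p$ is literally $H'$, and $p$ sits far enough from all boundaries that the local phase is that of $H'$, namely $\EC_{n+1}'$. Hence $\EC_{n+1}=\EC_{n+1}'$.

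The main obstacle, as I see it, is justifying this last step rigorously: one needs ``bulk phase at an interior point'' to be robust enough that (i) it really is an invariant of gap-preserving deformations, and (ii) at $p$, with $L$ taken sufficiently large, the local phase reads off unambiguously as the respective bulk. The cleanest way to secure this is to argue that Lemma \ref{lemma:H-H'} can be applied with $U$ of arbitrary thickness $L$ -- either directly, or by iterating the lemma and treating the inner face of $U$ as a new effective $\EC_n$-boundary via a folding construction, thereby pushing the $H'$-region deeper at each step -- so that in the limit the entire half-space is deformed from $H$ to $H'$ through a gap-preserving path, at which point the equality $\EC_{n+1}=\EC_{n+1}'$ is immediate.
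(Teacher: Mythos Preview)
Your proposal is correct in spirit and shares with the paper both the key input (Lemma~\ref{lemma:H-H'}) and the idea of iterating to push the region of agreement arbitrarily deep. However, the paper's conclusive step is organized differently and avoids precisely the obstacle you identify.

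The paper does not try to read off a ``local bulk phase at an interior point.'' Instead, after using Lemma~\ref{lemma:H-H'} to make $H$ and $H'$ agree in a slab $U$ near the boundary, it glues the two half-space systems side by side: the slab $V$ where they agree is used as the seam, and the remaining bulks are joined by brute force. This produces a single gapped system realizing $\EC_{n+1}$ on one side and $\EC_{n+1}'$ on the other, separated by a putative domain wall $X$ that terminates at a codimension-$2$ junction near the top of $V$. The point of the smooth interpolation in $V$ is that all observables pass continuously across it, so there can be no macroscopically detectable defect between the boundary and the junction; hence the wall $X$ is topologically trivial there. One then slides $X$ upward (this is the paper's iteration), removing it entirely and concluding $\EC_{n+1}\simeq\EC_{n+1}'$.

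The difference in payoff: the paper's gluing-and-wall picture replaces your appeal to a ``local bulk invariant'' by a concrete statement about a domain wall in a single gapped system, which is easier to make precise. Your route is conceptually cleaner (stay with one Hamiltonian, invoke invariance under gapped paths) but, as you note, requires controlling what ``local phase at $p$'' means when $p$ sits between two interfaces; your suggested fix of iterating or folding to enlarge $U$ is essentially the same move as the paper's sliding of $X$, so the two arguments converge at that point.
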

\begin{proof}
We need use Lemma\,\ref{lemma:H-H'}. Let $H$ be a local Hamiltonian qubit system that realizes the topological bulk phase $\EC_{n+1}$, 
and $H'$ be the one that realizes the topological bulk phase $\EC_{n+1}'$. 
By Lemma\,\ref{lemma:H-H'}, we are able to deform $H_0=H$ smoothly only in a neighborhood $U$ of the boundary such that $H_t$ does not close the gap for all $t\in [0,1]$ and $H_1|_U=H'|_U$.
Therefore, we can connect two local Hamiltonian qubit systems $H$ and $H'$ by adding a region which contains only a neighborhood $V$ of the boundary depicted in Fig.\,\ref{fig:unique-bulk} as the dotted box. 
In the region $V$, the lbH system is smoothly deformed from $H|_U$ to $H'|_U$. The remaining bulk are glued by brutal force, which creates a domain wall labeled by $X$ between two bulk phases as shown in Fig.\,\ref{fig:unique-bulk}. 

 If the domain wall is trivial, then we are done. Assume that the domain $X$ is non-trivial. Then the domain wall
must end near the boundary but outside the region $V$ and create a non-trivial defect junction (a
defect of codimension 2). Since, in the dotted neighborhood of the boundary (see Fig.\,\ref{fig:unique-bulk}), 
$H_t$ does not close the gap, all observables $\langle O\rangle(t)$, including the topological excitations, 
can cross from the left side of $V$ to the right side of $V$ smoothly (without crossing any singularities).
Consequentially, there is no macroscopic detectable defects between the boundary
and the defect junction. 

Since the bulk phase and the wall phase are
topological, we can move the $X$-wall up and create a larger neighborhood 
and continue this move until the domain wall $X$ is completely removed. As a consequence, two
exact BF categories $\EC_{n+1}$ and $\EC_{n+1}'$ must be equivalent.  
\end{proof}

\begin{figure}[t] 
$$
 \begin{picture}(120, 70)
   \put(0, 0){\scalebox{1}{\includegraphics{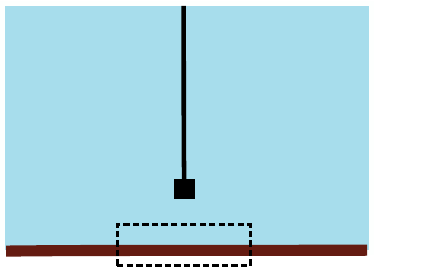}}}
   \put(0, 0){
     \setlength{\unitlength}{.75pt}\put(-18,-19){
     \put(125, 67)     { $\EC_{n+1}'$ } 
     \put(80, 125)   { $ X $}
     \put(30, 67)     { $\EC_{n+1}$ }
     \put(2, 20)   { $\EC_{n}$ }     
     \put(162, 20)  { $\EC_n$}
     }\setlength{\unitlength}{1pt}}
  \end{picture}
$$  
  \caption{{\small The bulk phase of a given boundary is unique. The dotted box $V$ on the boundary represents the region where the Hamiltonian is smoothly deformed from $H_0$ to $H_1$.}}
  \label{fig:unique-bulk}
\end{figure}

In other words, two gapped topological states belong to the same phase if they
can have gapped boundaries which are of the same type (\ie described by the
same BF category). As a consequence, we introduce the following definition. 

\begin{defn}  \label{def:bulk}
If an $n$-dimensional BF category $\EC_n$ describes the topological excitations
on the boundary of a gapped lbH system in  $(n+1)$-dimensional space-time, then we call the unique $\BF_{n+1}$ category determined by this $(n+1)$-dimensional space-time gapped lbQH system as the \bulk of $\EC_n$, denoted by $\cZ_n(\EC_n)$. 
\end{defn}

\begin{rema}  \label{rema:bulk=center}
We choose the letter ``$\cZ$" because it is also used in algebra for the
notion of center. In particular, $\cZ_n$ is somewhat similar to the so-called
$E_n$-center\cite{lurie2}. We will show in the next paper\cite{kong-wen-zheng}
that the \bulk is exactly equivalent to the mathematical notion of
center\cite{lurie2}. In this work, we don't need this result in such
generality. We only need it in a few lower dimensional cases. In the case of
$3$ space-time dimensional topological phases with a gapped boundary, this
result has already been rigorously proved. Indeed, consider a Levin-Wen type of
lattice model with a bulk lattice constructed from a unitary fusion category
$\EC$ and a boundary lattice from a $\EC$-module $\EM$. It was proved
rigorously in \Ref{KK1251} that the excitations on the boundary constructed on
an ${}_\EC \EM$-lattice,  are given by the unitary fusion category
$\EC_\EM^\vee:=\fun_\EC(\EM, \EM)$ of $\EC$-module functors. The bulk
excitations is given by $\cZ(\EC)$ which is the monoidal center of
$\EC$.\cite{LWstrnet,KK1251}. And we have $\cZ(\EC) \simeq \cZ(\EC_\EM^\vee)$
as unitary modular tensor categories\cite{ostrik}. A model independent proof of
this boundary-bulk relation for $3$ space-time dimensional theories was also
given in \Ref{fsv}. For this reason, it is harmless for readers to take the
\bulk $\cZ$ simply as a synonym of the center.  
\end{rema}

It is clear that $\cZ_n(\one_n) \simeq \one_{n+1}$, where $\one_{n/n+1}$ is the
$n/n+1$-dimensional trivial phase. Using this notation, a closed $\BF_n$
category $\EC$ means $\cZ_n(\EC_n)\simeq \one_{n+1}$, and an exact $\BF_{n+1}$
category is equivalent to $\cZ_n(\EC_n)$ for some $\BF_n$ category $\EC_n$. 


The following result follows from Lemma\,\ref{lemma:unique-bulk} immediately. 
\begin{cor} \label{cor:Z2=0}
The \bulk of the \bulk of a $\BF_n$ category $\EC_n$ is trivial, i.e. $\cZ_{n+1}(\cZ_n(\EC_{n}))= \one_{n+2}$.
\end{cor}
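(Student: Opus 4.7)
The plan is to reduce the corollary directly to Lemma \ref{lemma:unique-bulk} by exhibiting the trivial $(n+2)$-dimensional phase $\one_{n+2}$ as \emph{a} bulk of $\cZ_n(\EC_n)$, and then invoking uniqueness of the bulk.

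First I would unpack the data. Write $\EC_{n+1} := \cZ_n(\EC_n)$. By Definition \ref{def:bulk}, $\EC_{n+1}$ is realized by a gapped lbH system in $(n+1)$ space-time dimensions which has $\EC_n$ as a gapped boundary. In particular, $\EC_{n+1}$ is exact, so it is realizable by a well-defined local Hamiltonian quantum system in its own space-time dimension, without appeal to any higher-dimensional bulk.

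The key step is a dimensional-reduction observation, which is the reverse of the construction used in Corollary \ref{cor:dim-reduction} (cf.\ Fig.\,\ref{boundary}). Given the stand-alone $(n+1)$-dimensional lbH realization of $\EC_{n+1}$, place it as a codimension-$1$ defect inside an $(n+2)$-dimensional lbH system whose bulk on both sides of the defect is the trivial phase $\one_{n+2}$. Folding one of the two sides as in Fig.\,\ref{boundary} presents this configuration as a gapped boundary, described by the $\BF_{n+1}$ category $\EC_{n+1}$, of the trivial $(n+2)$-dimensional bulk $\one_{n+2}$. Hence $\one_{n+2}$ is \emph{a} bulk of $\EC_{n+1}$ in the sense of Definition \ref{def:bulk}.

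Finally I would apply Lemma \ref{lemma:unique-bulk}: the bulk of the boundary $\EC_{n+1}$ is determined up to equivalence, so any $\BF_{n+2}$ category serving as such a bulk must equal $\cZ_{n+1}(\EC_{n+1})$. Since $\one_{n+2}$ is one such bulk, the conclusion $\cZ_{n+1}(\cZ_n(\EC_n)) = \cZ_{n+1}(\EC_{n+1}) = \one_{n+2}$ follows. The only nontrivial point is the dimensional-reduction/fold step, i.e.\ making rigorous the passage from ``$\EC_{n+1}$ is realized on its own in $(n+1)$ dimensions'' to ``$\EC_{n+1}$ is a gapped boundary of $\one_{n+2}$''; but this is exactly the $(n+1)$-dimensional version of Corollary \ref{cor:dim-reduction}, already assumed in the paper, so no new input is required.
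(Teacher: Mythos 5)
Your proof is correct and follows the same route the paper intends: the paper simply asserts that the corollary ``follows from Lemma\,\ref{lemma:unique-bulk} immediately,'' the implicit argument being exactly yours — $\cZ_n(\EC_n)$ is realizable by a stand-alone $(n+1)$-dimensional lbH system, hence serves as a gapped boundary of the trivial $(n+2)$-dimensional phase, and uniqueness of the bulk forces $\cZ_{n+1}(\cZ_n(\EC_n))=\one_{n+2}$. Your folding step just makes explicit the detail the paper leaves unstated.
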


\begin{rema} \label{rema:center2=0}
As we remarked in Remark\,\ref{rema:bulk=center} that the \bulk is equivalent to the center, Corollary\,\ref{cor:Z2=0} also means that the center of a center is trivial. This is a very interesting and non-trivial result, which is the dual of the statement that the boundary of a boundary is empty. As the physical or geometric intuition is so obvious, its mathematical meaning is very non-trivial\cite{kong-wen-zheng}. We believe that the triviality of the center of a center is also a robust phenomena which can be proved in many different contexts using different notions of center in mathematics. For example, it seems also plausible to establish this result in the framework of factorization algebras\cite{lurie2}. 
\end{rema}

\void{
\begin{figure}[t] 
$$
 \begin{picture}(120, 70)
   \put(-25, 0){\scalebox{1}{\includegraphics{pic-ZZ-zero}}}
   \put(-25, 0){
     \setlength{\unitlength}{.75pt}\put(-18,-19){
     \put(165, 67)     { $\one_{n+2}$ } 
     \put(122, 122)   { $ \ED_{n+1}' $}
     \put(80, 67)     { $\EC_{n+2}$ }
     \put(200, 20)  { $\one_n$}
     \put(28, 20)   { $\ED_{n+1}$ }
     \put(122, 8) { $\EE_{n}$}
     }\setlength{\unitlength}{1pt}}
  \end{picture}
$$  
  \caption{{\small $\cZ_{n+1}(\cZ_n(-))=\one_{n+2}$.}}
  \label{fig:ZZ-zero}
\end{figure}
}


\section{Monoidal and group structure of $\BF_n$ categories in the same
dimension}

\subsection{A tensor product of $\BF_n$ categories}
\label{mnd}

Let a closed BF category $\EC^1_n$ be realized by a gapped lbH system
$\La_1$, and a closed BF category $\EC^2_n$ by another gapped lbH system
$\La_2$.  If we stack the two lbH systems together to form an lbH system
$\La_{12}$, then the gapped model $\La_{12}$ will give rise to a new closed BF
category, denoted by $\EC_n^{1}\boxtimes \EC_n^{2}$.

\void{
\begin{defn} 
The closed BF category $\EC_n^{12}$ constructed above is the tensor product of
$\EC^1_n$ and $\EC^2_n$, and we denote such a tensor product as
$\EC_n^{12}:=\EC_n^{1}\boxtimes \EC_n^{2}$.
\end{defn}
}

This defines a tensor product among all closed BF categories. It turns out that such tensor product can be generalized to generic BF categories.  Let an $n$-dimensional BF category $\EC^1_n$ be realized by the boundary of a gapped
lbH system $\La_1$ in $(n+1)$-dimensional space-time.  The
topological excitations in the model $\La_1$ is described by an exact
$(n+1)$-dimensional BF category $\EC^1_{n+1}$. Similarly, let another
$n$-dimensional BF category $\EC^2_n$ be realized by the boundary of a gapped
lbH system $\La_2$.  The topological excitations in $\La_2$
is described by an exact BF category $\EC^2_{n+1}$.  If we stack the two local
Hamiltonian systems together to form the third lbH system
$\La_{12}$, then the gapped boundary of $\La_{12}$ will give rise to a BF
category denoted by $\EC_n^{1}\boxtimes \EC_n^{2}$.

\void{
\begin{defn} 
The BF category $\EC_n^{12}$ constructed above is the tensor product of
$\EC^1_n$ and $\EC^2_n$, and we denote such a tensor product as
$\EC_n^{12}:=\EC_n^{1}\boxtimes \EC_n^{2}$.
\end{defn}
}

\medskip
Let $\cM^n$ be the set of 
$\BF_n$ categories. The tensor product $\boxtimes$ defines a multiplication on the set $\cM^n$. It is clear that $\one_n \boxtimes \EC_n \simeq \EC_n \simeq \EC_n \boxtimes \one_n$, and the multiplication is associative, i.e. $(\EC_n \boxtimes \ED_n) \boxtimes \EE_n \simeq 
\EC_n \boxtimes (\ED_n \boxtimes \EE_n)$, and commutative, i.e. $\EC_n \boxtimes \ED_n \simeq \ED_n \boxtimes \EC_n$. 
\begin{lemma}
The multiplication $\boxtimes$ and the unit $\one_n$ provide the set $\cM^n$ a structure of commutative monoid. 
\end{lemma}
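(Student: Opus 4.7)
The plan is to verify the three commutative monoid axioms (unit, associativity, commutativity) directly from the physical construction of $\boxtimes$, since the three required isomorphisms have already been stated verbally just before the lemma. The work is to justify each of them from the stacking construction and to do so uniformly for both the closed case (where $\boxtimes$ is a stacking of lbH systems) and the generic case (where $\boxtimes$ is defined through stacking the $(n+1)$-dimensional bulks $\EC^i_{n+1}$ and reading off the combined boundary). By the uniqueness of the bulk (Lemma\,\ref{lemma:unique-bulk}), either description yields a well-defined element of $\cM^n$, so all axioms can be checked on the level of lbH systems.

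First, for the unit axiom I would take the trivial lbH system $\La_0$ (a product state of decoupled qubits with trivial Hamiltonian), whose boundary realizes $\one_n$. Stacking $\La_0$ with a system $\La$ realizing (the bulk of) $\EC_n$ gives an lbH system whose decoupled Hamiltonian is unitarily equivalent to $\La$ itself under the obvious local unitary transformation that traces out the $\La_0$ degrees of freedom. Hence $\one_n \boxtimes \EC_n \simeq \EC_n$, and the same argument with layers swapped gives $\EC_n \boxtimes \one_n \simeq \EC_n$.

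Next, for associativity I would take three lbH systems $\La_1,\La_2,\La_3$ realizing $\EC_n^1,\EC_n^2,\EC_n^3$ (possibly with their bulks). The stacked system has Hilbert space $\cH_1\otimes \cH_2\otimes \cH_3$ with decoupled Hamiltonian $H_1+H_2+H_3$, which does not depend on how we parenthesize the construction. Hence the resulting boundary BF category $(\EC_n^1\boxtimes\EC_n^2)\boxtimes \EC_n^3$ is realized by the same lbH system as $\EC_n^1\boxtimes (\EC_n^2\boxtimes \EC_n^3)$, giving the associativity isomorphism. Commutativity is the analogous statement for two layers, using the permutation of factors, which is a local unitary of depth one.

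The main (minor) obstacle will be to make sure these arguments are consistent with the definition in the anomalous case: the stacking is done on the $(n+1)$-dimensional bulk lbH systems, and the boundary of the stack must be identified with the stack of boundaries. This requires that gluing two $(n+1)$-dimensional Hamiltonians that have gapped boundaries described by $\EC_n^1$ and $\EC_n^2$ yields an $(n+1)$-dimensional Hamiltonian whose boundary is described by a well-defined $\BF_n$ category, independent of all auxiliary choices. This is precisely guaranteed by the uniqueness of the \bulk (Lemma\,\ref{lemma:unique-bulk}): if two such constructions gave different boundaries, they would have different bulks, contradicting the fact that the stacked bulk is uniquely $\EC_{n+1}^1 \boxtimes \EC_{n+1}^2$. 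With that observation, all three axioms follow and $(\cM^n, \boxtimes, \one_n)$ is a commutative monoid.
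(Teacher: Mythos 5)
Your proposal is correct and takes essentially the same route as the paper: the paper simply asserts the unit, associativity, and commutativity isomorphisms as ``clear'' from the stacking construction and packages them into the lemma without further argument. Your verification via decoupled Hamiltonians on tensor-product Hilbert spaces, together with the appeal to Lemma~\ref{lemma:unique-bulk} to handle well-definedness in the anomalous case, is exactly the justification the paper leaves implicit.
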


Notice that the tensor product commutes with $\cZ$. More precisely, we have 
\begin{equation} \label{eq:Z-boxtimes}
\cZ_n(\EC_n \boxtimes \ED_n) \simeq \cZ_n(\EC_n) \boxtimes \cZ(\ED_n)
\end{equation}
as $\BF_{n+1}$ categories. In other words, $\cZ_n: \cM^n \to \cM^{n+1}$ is a homomorphism between monoids.

\begin{rema} \label{rema:compBF}
We also like to make a remark on the action of the center functor on composite BF categories $\EC_n\oplus\ED_n$.  In general, $\cZ_n(\EC_n\oplus\ED_n) \neq \cZ_n(\EC_n)\oplus \cZ_n(\ED_n)$.
When $\EC_n = \ED_n$, we have $\cZ_n(\EC_n\oplus\EC_n) = \cZ_n(\EC_n) \times M_{2\times 2}$,\cite{kong-wen-zheng}   where $M_{2\times 2}$ is the $2\times 2$ matrix algebra. The phase $\cZ_n(\EC_n) \times M_{2\times 2}$ is unstable and can flow to the stable one $\cZ_n(\EC_n)$. 
The composition $\oplus$ and the tensor product $\boxtimes$, together with the tensor unit $\one_n$ and zero category $0_n$, give a commutative ring structure to all BF categories. We will not discuss it further in this work. More details will be given \Ref{kong-wen-zheng}. 
\end{rema}

\begin{rema}  \label{rema:general-boxtimes}
A general tensor product between two $n$-dimensional BF categories can be
defined. We can stack one topological order $\EC_n$ on the top of the other
$\ED_n$ and glue them by inserting between them a $(n+1)$-dimensional ``glue",
the topological type of which is given by a $(n+1)$-dimensional BF category
$\EE_{n+1}$. Such a physical gluing process creates a new (possibly anomalous)
$n$-dimensional topological phase, denoted by $\EC_n\boxtimes_{\EE_{n+1}}
\ED_n$, where $\boxtimes_{\EE_{n+1}}$ defines a new type of tensor product. It
actually contains the old tensor product $\boxtimes$ as a special case, i.e.
$\boxtimes = \boxtimes_{\one_{n+1}}$. 
\end{rema}

Let $\cM^n_{\text{closed}}$ and $\cM^n_{\text{exact}}$ be the subsets of $\cM^n$ consisting of the equivalence classes of closed and exact $\BF_n$ categories, respectively. Clearly, the multiplication $\boxtimes$ is closed on the subsets $\cM^n_{\text{closed}}$ and $\cM^n_{\text{exact}}$. Therefore, $\cM^n_{\text{closed}}$ and $\cM^n_{\text{exact}}$ are two sub-monoids of $\cM^n$. 

\smallskip
A monoid is not a group since the inverse may not exist. In our case,
there is no group structure on $\cM^n$. Because the long range entanglement on
different layer can not cancel each other,  a double layer system $\EC_n
\boxtimes \ED_n$ has no long range entanglement if and only if each factor has
no long range entanglement. In other words (see also Conjecture \ref{topinf}), 
\begin{align}
\label{eq:C=0=D}
\EC_n \boxtimes \ED_n &\simeq \one_n \ \ \mbox{iff } \EC_n \text{ and } \ED_n \text{ have no non-trivial}
\nonumber\\
&
\text{elementary topological excitations}.
\end{align}
Equivalently, all elements in $\cM^n$, that have non-trivial elementary
topological excitations, are not invertible.  On the other hand, the  elements
in $\cM^n$, that have no non-trivial elementary topological excitations, are
invertible. 

In order to obtain a group structure, we have to consider certain quotient sets
of $\cM^n$, and to obtain the  quotient sets, we need to introduce a few new
concepts. 

%


\subsection{Dimension reduction of a BF category}

\begin{figure}[tb]
  \centering
  \includegraphics[scale=0.6]{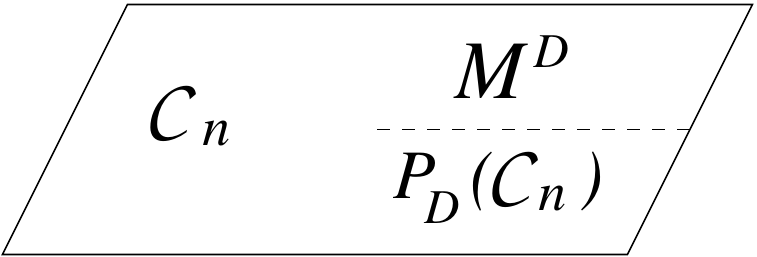}
  \caption{Consider a BF category $\EC_n$ realized by a boundary $M^n$ of a
system in $(n+1)$-dimensional space-time. Now consider a $D$-dimensional sub
space-time $M^D$ on the $n$-dimensional boundary. The topological excitations
in the $D$-dimensional sub space-time $M^D$ all come from the BF category
$\EC_n$.  All  topological excitations and their topological properties define
a $D$-dimensional category $P_D(\EC_n)$.
}
  \label{FCn}
\end{figure}

A $n$-dimensional BF category (\ie a \hBF{n} or \lBF{n} category)
can be viewed as a $\BF_p$ category for $p<n$:
\begin{defn} \textbf{The project functor $P_D$}\\
\label{FDCn}
Consider an $n$-dimensional BF category $\EC_n$ on $X^n$ which is a boundary of
$(n+1)$ dimensional space-time. Let $M^D$ be a $D$-dimensional sub space-time
$M^D \subset M^n$ (see Fig. \ref{FCn}).  If we view $M^D$ as a subsystem whose
topological excitations all come from the $n$-dimensional $\EC_n$,  then
topological excitations on $M^D$ define a $D$-dimensional BF category which is
denoted as $P_D(\EC_n)$.  $P_D$ is a functor that maps $\EC_n$ to
$\EC_D$. 
\end{defn}\noindent
We call $P_D(\EC_n)$ a projection of $\EC_n$ from $n$-dimensions to
$D$-dimensions.  We know that if $M^D$ contains no topological excitations,
then the BF category on $M^D$ is trivial.  The BF category $P_D(\EC_n)$ on
$M^D$ does contain topological excitations and thus nontrivial. But all the
topological excitations come trivially from its higher dimensional parent.  So,
in some sense, $P_D(\EC_n)$ is ``trivial''.  In the following, we are going to
introduce an equivalence relation $\sim$ between BF categories that makes
$P_D(\EC_n)$ equivalent to  trivial BF category when $\EC_n$ is closed.



\subsection{Dual BF category}

\begin{figure}[tb]
\begin{center}
\includegraphics[scale=0.35]{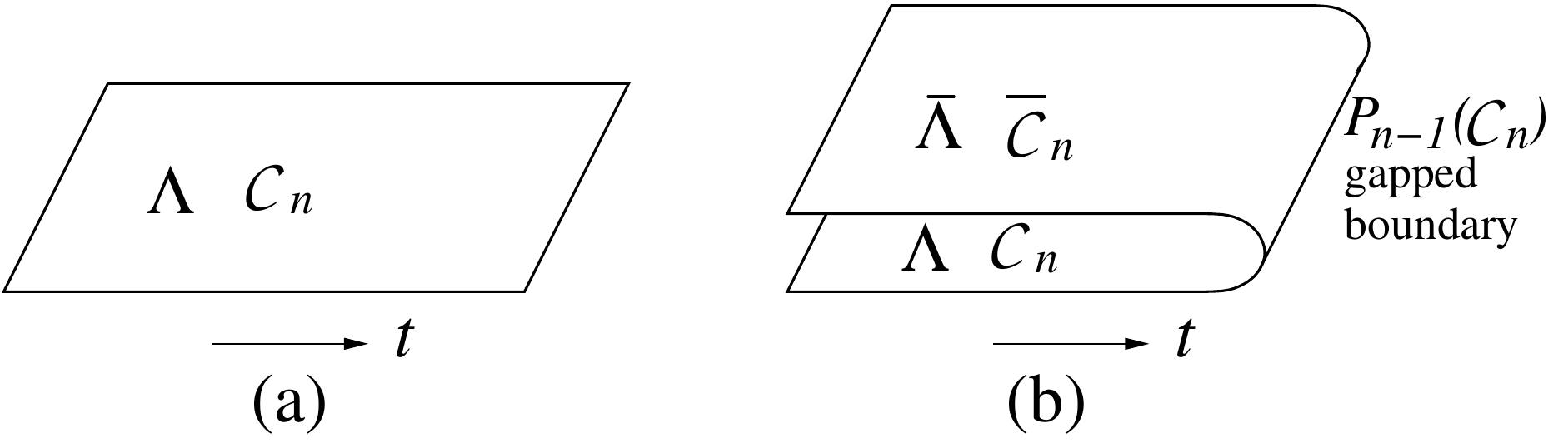}
\end{center}
\caption{
(a) A local Hamiltonian qubit system $\La$ defined by a path integral in $n$
space-time dimensions.  Its topological excitations are described by a
$n$-dimensional BF category $\EC_n$.
(b) If we fold the time direction, we obtain a local Hamiltonian qubit system
$\overline{\La}$ which is the time-reversal transformation of the local Hamiltonian
qubit system $\La$.  Its topological excitations are described by a
$n$-dimensional BF category $\bar C_n$.  The gapped boundary of
$\EC_n\boxtimes \overline{\EC}_n$ is described by  a
$(n-1)$-dimensional BF category $P_{n-1}(\EC_n)$:
$\cZ_n[P_{n-1}(\EC_n)]=\EC_n\boxtimes \overline{\EC}_n$.  
}
\label{TConj}
\end{figure}

\begin{figure}[tb]
\begin{center}
\includegraphics[scale=0.35]{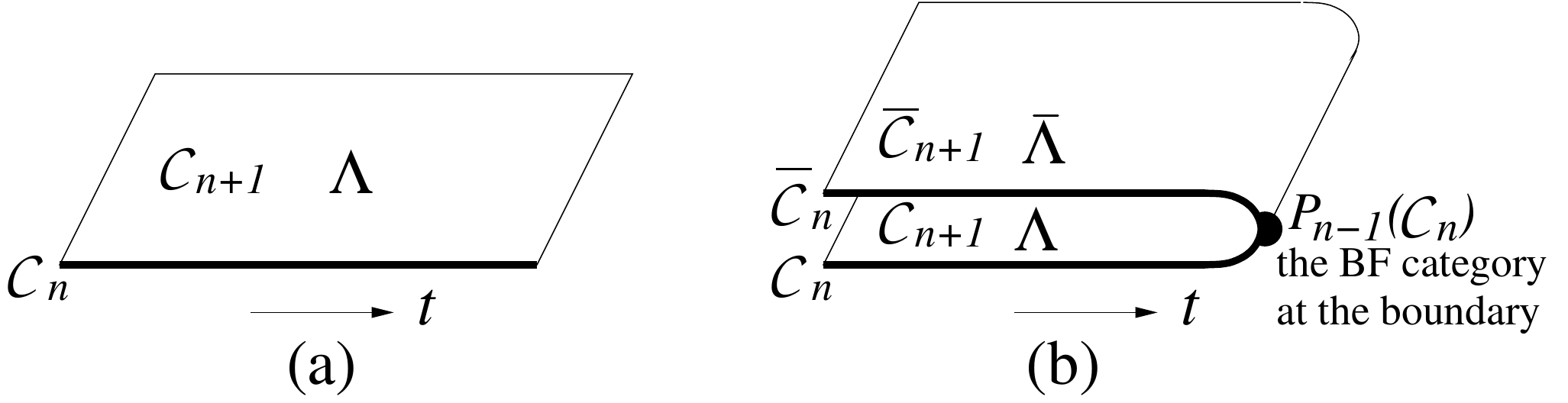}
\end{center}
\caption{(a) A local Hamiltonian qubit system $\La$ defined by a path integral
in $n+1$ space-time dimensions.  Its topological excitations are described by
an exact $n+1$-dimensional BF category $\EC_{n+1}$.  The topological
excitations on its gapped boundary (represented by the thick line) are
described by an $n$-dimensional BF category $\EC_n$.  
(b) If we fold the time direction, we obtain a local Hamiltonian qubit system
$\overline{\La}$ which is the time-reversal transformation of the local Hamiltonian
qubit system $\La$.  Its topological excitations are described by an exact
$n+1$-dimensional BF category $\overline{\EC}_{n+1}$.  The topological excitations on
its gapped boundary (represented by the thick line) are described by a
$n$-dimensional BF category $\overline{\EC}_n$.  If we stack the two boundaries
$\EC_n$ and $\overline{\EC}_n$ together, we will obtain a new boundary $\EC_n
\boxtimes \overline{\EC}_n$.  We see that the boundary between $\EC_n \boxtimes \overline{\EC}_n$ and $\one_n$ is described by $P_{n-1}(\EC_n)$. 
}
\label{TConjG}
\end{figure}

In order to obtain a group structure, 
we have to consider certain quotient sets of $\cM^n$. 
Before we do that, we need first introduce the dual of a BF category. 
\begin{defn} \textbf{Dual BF category}\\
(1) Let $\La$ be a lbH system in an $(n+1)$-dimensional
space-time.  The lbH system can always be described by a
path integral.  Then the dual lbH system $\overline{\La}$ is the
lbH system described by the time-reversal transformation of
the path integral that defines the first lbH system $\La$
(see Fig. \ref{TConj}). For more details, see Appendix \ref{path}.\\
(2) Let $\EC_n$ be the BF category realized by a boundary of a lbH system $\La$.  Then the dual $\overline{\EC}_n$ is the BF category
realized by the boundary of the dual lbH system $\overline{\La}$
(see Fig. \ref{TConjG}).
\end{defn}

We collect a few basic properties of the dual BF categories below. 
\begin{lemma} \label{lem:dual-Z}
Let $\EC_n$ and $\ED_n$ be two $\BF_n$ categories.  We have: 
\bnu
\item $\overline{\one}_n = \one_n$,
\item $\cZ_n(\overline{\EC}_n) = \overline{\cZ_n(\EC_n)}$, 
\item If $\ED_n$ is closed, then 
$\ED_n \boxtimes \overline{\ED}_n$ is exact and
$\cZ_{n-1}[P_{n-1}(\ED_n)] =  \ED_n \boxtimes \overline{\ED}_n$
\enu
\end{lemma}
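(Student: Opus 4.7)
The plan is to reduce all three assertions to manipulations of the defining lbH systems (together with their path integrals) and then invoke the uniqueness of the bulk, Lemma~\ref{lemma:unique-bulk}.

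Assertion (1) is essentially tautological: the trivial phase $\one_n$ is realized by a product-state lbH system whose path integral is manifestly invariant under the time-reversal operation defined in Appendix~\ref{path}, so its dual is again $\one_n$. For (2), I would choose a lbH system $\La$ in $(n{+}1)$-dimensional space-time with a gapped boundary realizing $\EC_n$; by construction the bulk of $\La$ realizes $\cZ_n(\EC_n)$. Applying time reversal to the whole path integral yields $\overline{\La}$, whose boundary realizes $\overline{\EC}_n$ (by the definition of the dual) and whose bulk realizes $\overline{\cZ_n(\EC_n)}$. Since the \bulk of a BF category is unique up to equivalence, Lemma~\ref{lemma:unique-bulk} forces $\cZ_n(\overline{\EC}_n)=\overline{\cZ_n(\EC_n)}$.

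For (3), the workhorse is the folding trick. Because $\ED_n$ is closed, it is realized by an $n$-dimensional lbH system $\La$ with no higher-dimensional bulk. Put $\La$ on a space-time of the form $N^{n-1}\times\Rb$ and fold the time axis about $t=0$, identifying the half-line $t>0$ with the mirror image of $t<0$ under time reversal. The folded theory lives on $N^{n-1}\times[0,\infty)$ with doubled degrees of freedom: one copy realizes $\La$ and the other copy realizes $\overline{\La}$, so the bulk of the folded system realizes $\ED_n\boxtimes\overline{\ED}_n$. The fold itself is an $(n{-}1)$-dimensional boundary, and it is automatically gapped because the identity matching between the two layers is a gapped boundary condition. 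This shows $\ED_n\boxtimes\overline{\ED}_n$ is exact.

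It then remains to identify the BF category on the folded boundary as $P_{n-1}(\ED_n)$, and this is the step I expect to be the main obstacle. The physical intuition is transparent: because the fold is just a slice through the unfolded system $\La$, every topological excitation localized on the slice is a dimensional reduction of an excitation of $\ED_n$ to $(n{-}1)$ dimensions, and conversely every defect on the slice lifts uniquely to a defect of $\ED_n$ across the fold. This matches exactly the description of $P_{n-1}(\ED_n)$ given in Definition~\ref{FDCn}. Making this identification precise requires showing that the fusion and braiding data on the folded boundary agree with those inherited from $\ED_n$ by dimensional reduction, which is where the argument is delicate. Once this matching is established, applying the uniqueness of the \bulk once more yields $\cZ_{n-1}[P_{n-1}(\ED_n)]=\ED_n\boxtimes\overline{\ED}_n$, completing (3).
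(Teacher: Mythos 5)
Your proof is correct and follows essentially the same route as the paper, which defines the dual via time reversal of the path integral and disposes of assertion (3) with the one-line remark that it ``can be proved by folding an $n$-dimensional topological phase defined by $\ED_n$ along a codimension 1 hyperplane'' (Fig.~\ref{TConj}), exactly the folding trick you carry out. Your write-up is in fact more detailed than the paper's, in particular in making explicit the identification of the fold with $P_{n-1}(\ED_n)$ via Definition~\ref{FDCn} and the final appeal to Lemma~\ref{lemma:unique-bulk}.
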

The last result can be proved by folding an $n$-dimensional topological phase
defined by $\ED_n$ along a codimension 1 hyperplane (see Fig. \ref{TConj}).

From Lemma\,\ref{lem:dual-Z}, we can easily see that if $\EC_n$ is a closed BF
category realized by a lbH system $\La$, then the dual $\overline{\EC}_n$ is
also closed which can be realized by the time-reversal transformed lbH system $\bar
\La$. By folding a $\EC_{n+1}$-phase and by (\ref{eq:C=0=D}), we obtain an
interesting corollary of Lemma\,\ref{lem:dual-Z}. 
\begin{cor}
For a $\BF_{n+1}$ category $\EC_{n+1}$, $\cZ(\EC_{n+1})$ is invertible if $\cZ_n[P_n(\EC_{n+1})] \simeq \EC_{n+1} \boxtimes \overline{\EC}_{n+1}$. 
\end{cor}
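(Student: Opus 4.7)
The plan is to apply the center functor $\cZ_{n+1}$ to both sides of the hypothesized folding identity and then read off an explicit $\boxtimes$-inverse for $\cZ(\EC_{n+1})$ in the monoid $\cM^{n+2}$.

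First I would evaluate $\cZ_{n+1}$ on the left-hand side $\cZ_n[P_n(\EC_{n+1})]$: the expression becomes $\cZ_{n+1}\bigl[\cZ_n[P_n(\EC_{n+1})]\bigr]$, and Corollary \ref{cor:Z2=0} (the bulk of a bulk is trivial) collapses it immediately to $\one_{n+2}$, with no further assumption on $P_n(\EC_{n+1})$. Next I would evaluate $\cZ_{n+1}$ on the right-hand side $\EC_{n+1} \boxtimes \overline{\EC}_{n+1}$. By the monoid-homomorphism property \eqref{eq:Z-boxtimes} this splits as $\cZ_{n+1}(\EC_{n+1}) \boxtimes \cZ_{n+1}(\overline{\EC}_{n+1})$, and by Lemma \ref{lem:dual-Z}(2) the second factor equals $\overline{\cZ_{n+1}(\EC_{n+1})}$. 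Equating the two computations yields
\[
\cZ(\EC_{n+1}) \boxtimes \overline{\cZ(\EC_{n+1})} \simeq \one_{n+2},
\]
which exhibits $\overline{\cZ(\EC_{n+1})}$ as a $\boxtimes$-inverse of $\cZ(\EC_{n+1})$, so the latter is invertible as an object of $\cM^{n+2}$.

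The argument is essentially a one-line consequence of three structural facts established earlier, so I expect no serious obstacle. The only point that really needs checking is that each of the three inputs -- namely $\cZ_{n+1}\circ\cZ_n \simeq \one$, the tensor compatibility $\cZ(\EC\boxtimes\ED)\simeq\cZ(\EC)\boxtimes\cZ(\ED)$, and the commutation $\cZ(\overline{\EC})\simeq\overline{\cZ(\EC)}$ -- applies to the possibly anomalous category $\EC_{n+1}$ and not only to closed ones; since all three were stated earlier at this level of generality, nothing extra is required. In particular one never has to analyze the internal structure of the projection $P_n(\EC_{n+1})$, which would otherwise be the painful step.
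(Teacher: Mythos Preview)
Your proof is correct and follows essentially the same route as the paper: apply $\cZ_{n+1}$ to both sides of the hypothesis and use $\cZ_{n+1}\circ\cZ_n\simeq\one$ together with the monoid-homomorphism property of $\cZ$. The only minor difference is in the very last step: the paper invokes \eqref{eq:C=0=D} (a trivial $\boxtimes$-product forces each factor to have no nontrivial elementary excitations, hence to be invertible), whereas you use Lemma~\ref{lem:dual-Z}(2) to identify the second factor as $\overline{\cZ(\EC_{n+1})}$ and thereby exhibit the inverse directly. Your version is slightly more self-contained, since it avoids the physical characterization of invertibility, but the two arguments are otherwise the same.
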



\begin{rema}
If the only invertible $\BF_{n+2}$ category is the trivial one for certain $n$
(possibly for $n<6$ and $n\neq 1,3$ see Section\,\ref{invTop}), $\EC_{n+1}$ is
closed if and only if $\cZ_n[P_n(\EC_{n+1})] \simeq \EC_{n+1} \boxtimes
\overline{\EC}_{n+1}$. When $n=2$, this result reproduces a well-known
mathematical result, which says that a premodular category $\EC$ is modular if
and only if $Z(\EC) \simeq \EC\boxtimes \overline{\EC}$, where $Z(\EC)$ is the
monoidal center of $\EC$, and $\overline{\EC}$ is the same category as $\EC$
but with the braiding given by the anti-braiding in $\EC$. 
This also justifies our notation $\cZ(\cdot)$ in this case.  
\end{rema}

\subsection{Quasi-equivalence relation and a group structure}
\label{wgrp}

In an attempt to obtain a group structure, we introduce an equivalence relation
$\sim$ between BF categories,
hoping that the equivalence classes of such equivalence relation form a group.
\begin{defn} 
\label{equvCC}
Two $\BF_n$ categories, $\EC_n$ and $\EC'_n$, are called quasi-equivalent, denoted by $\EC_n
\sim \EC'_n$, if there exist $n$-dimensional closed BF categories $\ED_n$ and
$\ED_n'$ such that $ \EC_n\boxtimes \ED_n \boxtimes \overline{\ED}_n
=\EC_n'\boxtimes \ED_n' \boxtimes \overline{\ED}_n' $
\end{defn} \noindent
$\sim$ is indeed an equivalence relation since it satisfies
\begin{lemma} ~\\
(1) $\EC_n \sim \EC_n$.\\
(2) $\EC_n \sim \EC_n'$ implies that $\EC_n' \sim \EC_n$.\\
(3) $\EC_n \sim \EC_n'$ and $\EC_n' \sim \EC_n''$ implies that $\EC_n \sim
\EC_n''$.
\end{lemma}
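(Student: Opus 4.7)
The plan is to verify the three axioms directly from Definition \ref{equvCC}, using only the commutative monoid structure of $(\cM^n,\boxtimes,\one_n)$, the fact that the subset $\cM^n_{\text{closed}}$ is closed under $\boxtimes$, and the compatibility of duality with stacking.

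For reflexivity, I would simply take $\ED_n=\ED_n'=\one_n$, which is closed (since $\cZ_n(\one_n)\simeq \one_{n+1}$) and satisfies $\overline{\one}_n=\one_n$ by Lemma \ref{lem:dual-Z}. Then both sides of the defining equation reduce to $\EC_n$. For symmetry, the definition is manifestly symmetric in the pair $(\EC_n,\ED_n)$ vs.\ $(\EC_n',\ED_n')$; swapping these witnesses gives $\EC_n'\sim \EC_n$.

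The only step requiring actual work is transitivity. Assume $\EC_n\sim\EC_n'$ via closed $\ED_1,\ED_1'$ and $\EC_n'\sim \EC_n''$ via closed $\ED_2,\ED_2'$, so that
\begin{align}
\EC_n\boxtimes \ED_1\boxtimes \overline{\ED}_1 &= \EC_n'\boxtimes \ED_1'\boxtimes \overline{\ED}_1',\nn
\EC_n'\boxtimes \ED_2\boxtimes \overline{\ED}_2 &= \EC_n''\boxtimes \ED_2'\boxtimes \overline{\ED}_2'.\nonumber
\end{align}
Stacking the first equation with $\ED_2\boxtimes\overline{\ED}_2$ and using commutativity and associativity of $\boxtimes$, I can rewrite the right-hand side using the second equation, obtaining
\begin{align}
\EC_n\boxtimes(\ED_1\boxtimes\ED_2)\boxtimes(\overline{\ED}_1\boxtimes\overline{\ED}_2)
= \EC_n''\boxtimes(\ED_2'\boxtimes\ED_1')\boxtimes(\overline{\ED}_2'\boxtimes\overline{\ED}_1').\nonumber
\end{align}
Setting $\tilde\ED:=\ED_1\boxtimes\ED_2$ and $\tilde\ED':=\ED_2'\boxtimes\ED_1'$, both are closed since by \eqref{eq:Z-boxtimes} the center of a stack is the stack of the centers. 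The remaining ingredient is $\overline{\EA\boxtimes\EB}\simeq \overline{\EA}\boxtimes\overline{\EB}$, which follows from the definition of the dual as the time reversal of the underlying lbH path integral: reversing time acts factorwise on a stacked system. Plugging this in, the displayed equation becomes $\EC_n\boxtimes\tilde\ED\boxtimes\overline{\tilde\ED}=\EC_n''\boxtimes\tilde\ED'\boxtimes\overline{\tilde\ED'}$, which is exactly the witness for $\EC_n\sim\EC_n''$.

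The main obstacle I foresee is the compatibility $\overline{\EA\boxtimes\EB}\simeq\overline{\EA}\boxtimes\overline{\EB}$. At the level of BF categories it is not formalized in the excerpt, so I would justify it by going back to the Lagrangian/Hamiltonian realization: stacking two lbH systems and then reversing time gives the same theory as reversing time in each factor separately, and this descends to an equivalence of the associated BF categories. With this step in hand, transitivity is essentially an exercise in commutativity and associativity in the commutative monoid $\cM^n$.
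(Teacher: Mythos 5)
Your proof is correct and follows essentially the same route as the paper: the paper likewise proves transitivity by stacking the two witness equations and regrouping via commutativity and associativity of $\boxtimes$ (treating reflexivity and symmetry as immediate). The only difference is that you explicitly flag and justify $\overline{\EA\boxtimes\EB}\simeq\overline{\EA}\boxtimes\overline{\EB}$, a step the paper uses implicitly when recognizing $\ED_n\boxtimes\overline{\ED}_n\boxtimes\EE_n\boxtimes\overline{\EE}_n$ as being of the required form $\tilde\ED\boxtimes\overline{\tilde\ED}$ — a reasonable bit of extra care.
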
 \noindent
\pf
We give the proof of the third condition. 
$\EC_n \sim \EC_n'$ and $\EC_n' \sim \EC_n''$ implies that $\EC_n \boxtimes
\ED_n \boxtimes \overline{\ED}_n= \EC_n'\boxtimes \ED_n' \boxtimes
\overline{\ED}_n'$ and $\EC_n' \boxtimes \EE_n \boxtimes \overline{\EE}_n=
\EC_n''\boxtimes \EE_n' \boxtimes \overline{\EE}_n'$.  So, we have  $\EC_n
\boxtimes \ED_n \boxtimes \overline{\ED}_n\boxtimes \EE_n \boxtimes
\overline{\EE}_n= \EC_n'\boxtimes \ED_n' \boxtimes \overline{\ED}_n'\boxtimes
\EE_n \boxtimes \overline{\EE}_n=\EC_n''\boxtimes \EE_n' \boxtimes
\overline{\EE}_n'\boxtimes \ED_n' \boxtimes \overline{\ED}_n'$. Thus $ \EC_n
\sim \EC_n''$.
\epf

We denote the set of equivalence classes of $n$-dimensional BF categories under
the quasi-equivalence relation $\sim$ by $\cA^n$, i.e. $\cA^n:= \cM^n/\sim$. 
We denote the equivalence class of $\EC_n$ in $\cA^n$ by $[\EC_n]$. 
We hope that the set $\cA^n$ can form a group under the tensor
product $\boxtimes$. 

From the definition of $\sim$ we can easily see that
\begin{lemma}~\\
(1) If $\EC_n \sim \EC'_n$ and $\ED_n \sim \ED'_n$, then $\EC_n \boxtimes \ED_n
\sim \EC'_n \boxtimes \ED'_n$.\\
(2) For a closed BF category $\EC_n$, $\EC_n\boxtimes \overline{\EC}_n \sim \one_n$.\\
\end{lemma}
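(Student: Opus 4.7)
The plan is to reduce both parts to a direct unpacking of the definition of $\sim$, using two auxiliary facts about the interplay of $\boxtimes$ and the duality $\overline{(\,\cdot\,)}$: (a) duality commutes with stacking, i.e.\ $\overline{\EC_n\boxtimes\ED_n}=\overline{\EC}_n\boxtimes\overline{\ED}_n$, which is immediate from the path-integral definition of duality in Appendix~\ref{path} since time-reversing a double-layer lbH system is the same as time-reversing each layer separately (cf.\ Fig.~\ref{TConj}); and (b) the stack of two closed BF categories is again closed, which follows from \eqref{eq:Z-boxtimes} together with $\cZ_n(\one_n)\simeq\one_{n+1}$, since $\cZ_n(\EC_n\boxtimes\ED_n)\simeq\cZ_n(\EC_n)\boxtimes\cZ_n(\ED_n)\simeq\one_{n+1}\boxtimes\one_{n+1}\simeq\one_{n+1}$ whenever $\EC_n,\ED_n$ are closed.

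For part (1), I would first extract witnesses from the two hypotheses: $\EC_n\sim\EC'_n$ supplies closed $\EF_n,\EF'_n$ with $\EC_n\boxtimes\EF_n\boxtimes\overline{\EF}_n = \EC'_n\boxtimes\EF'_n\boxtimes\overline{\EF}'_n$, and $\ED_n\sim\ED'_n$ supplies closed $\EG_n,\EG'_n$ with the analogous identity. Then I would take the $\boxtimes$-product of these two identities and rearrange using associativity and symmetry of $\boxtimes$ together with fact (a) to rewrite the result as
\[
(\EC_n\boxtimes\ED_n)\boxtimes(\EF_n\boxtimes\EG_n)\boxtimes\overline{\EF_n\boxtimes\EG_n}
=
(\EC'_n\boxtimes\ED'_n)\boxtimes(\EF'_n\boxtimes\EG'_n)\boxtimes\overline{\EF'_n\boxtimes\EG'_n}.
\]
By fact (b) the new witnesses $\EF_n\boxtimes\EG_n$ and $\EF'_n\boxtimes\EG'_n$ are closed, so this exhibits the required quasi-equivalence $\EC_n\boxtimes\ED_n\sim\EC'_n\boxtimes\ED'_n$.

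For part (2) the witnesses can be chosen in essentially the most trivial way: take the left-hand witness to be $\one_n$ (closed trivially) and the right-hand witness to be $\EC_n$ itself (closed by hypothesis). Using $\overline{\one}_n=\one_n$ from Lemma~\ref{lem:dual-Z}, both sides of the defining equation collapse on the nose to $\EC_n\boxtimes\overline{\EC}_n$, so $\EC_n\boxtimes\overline{\EC}_n\sim\one_n$. The genuine content of the lemma lives in facts (a) and (b) above, and I expect no serious obstacle: (a) is a definition chase via the folding picture of Fig.~\ref{TConj}, and (b) is an immediate corollary of the already-stated identity \eqref{eq:Z-boxtimes}.
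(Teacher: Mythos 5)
Your proof is correct and follows the same route the paper intends: the paper states this lemma as an immediate consequence of Definition~\ref{equvCC} without writing out a proof, and your argument is exactly the expected definition-unpacking, with the witnesses chosen as you describe. The two auxiliary facts you isolate --- that $\overline{\EC_n\boxtimes\ED_n}=\overline{\EC}_n\boxtimes\overline{\ED}_n$ and that a stack of closed BF categories is closed via \eqref{eq:Z-boxtimes} --- are indeed the points the paper leaves implicit, and both hold in this framework.
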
\noindent
Thus, the tensor product $\boxtimes$ is compatible with the equivalence
relation. In other words, $[\EC_n] \boxtimes [\ED_n] :=[\EC_n \boxtimes \ED_n]$ is a well-defined binary multiplication $\boxtimes: \cA^n \times \cA^n \to \cA^n$, and the quotient map $\cM^n \to \cA^n$ is a homomorphism between two commutative monoids. Moreover, for a closed BF category $\EC_n$, the inverse of $[\EC_n]$ is given by $[\overline{\EC}_n]$. 

\smallskip
Also, we note that taking the bulk $\cZ_n(\cdot)$ is compatible with the
equivalence relation that defines $\cA^n$:
\begin{lemma} 
If $\EC_n \sim \EC_n'$, then $\cZ_n(\EC_n) = \cZ_n(\EC_n')$.
\end{lemma}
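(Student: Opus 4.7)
The plan is to apply the bulk operator $\cZ_n$ to both sides of the defining equation of $\sim$ and then use the multiplicativity property \eqref{eq:Z-boxtimes} to simplify. All the needed tools are already packaged in the lemmas preceding the statement, so the argument should be short and largely formal.

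First, I would unfold the definition: assume $\EC_n \sim \EC_n'$, so there exist closed $\BF_n$ categories $\ED_n$ and $\ED_n'$ with
\begin{equation*}
\EC_n \boxtimes \ED_n \boxtimes \overline{\ED}_n \;=\; \EC_n' \boxtimes \ED_n' \boxtimes \overline{\ED}_n'.
\end{equation*}
Applying $\cZ_n$ to both sides and invoking \eqref{eq:Z-boxtimes} (which says $\cZ_n$ is a monoid homomorphism with respect to $\boxtimes$) gives
\begin{equation*}
\cZ_n(\EC_n) \boxtimes \cZ_n(\ED_n) \boxtimes \cZ_n(\overline{\ED}_n) \;=\; \cZ_n(\EC_n') \boxtimes \cZ_n(\ED_n') \boxtimes \cZ_n(\overline{\ED}_n').
\end{equation*}

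Next I would collapse the auxiliary factors. Since $\ED_n$ and $\ED_n'$ are closed, $\cZ_n(\ED_n) \simeq \one_{n+1} \simeq \cZ_n(\ED_n')$ by definition of ``closed''. For the dual factors I would use part (2) of Lemma~\ref{lem:dual-Z}, which gives $\cZ_n(\overline{\ED}_n) \simeq \overline{\cZ_n(\ED_n)} \simeq \overline{\one}_{n+1}$, combined with part (1) of the same lemma to conclude $\overline{\one}_{n+1} \simeq \one_{n+1}$; and likewise for $\overline{\ED}_n'$. Substituting these in and using that $\one_{n+1}$ is the unit for $\boxtimes$, both sides collapse to $\cZ_n(\EC_n) \simeq \cZ_n(\EC_n')$, which is the desired conclusion.

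There is essentially no obstacle here since all heavy lifting has been done earlier: the multiplicativity of the center in \eqref{eq:Z-boxtimes}, the compatibility with duals in Lemma~\ref{lem:dual-Z}, and the unit property of $\one_{n+1}$ together reduce the statement to bookkeeping. The only subtlety worth flagging is that equalities of $\BF$ categories here are really equivalences, so one should state the conclusion up to equivalence; this matches the convention already used throughout Section~\ref{sec:boundary-bulk-relation}. I would conclude by remarking that, as a consequence, $\cZ_n$ descends to a well-defined homomorphism of commutative monoids $\cA^n \to \cM^{n+1}$, preparing the ground for the subsequent discussion of the induced cochain-complex structure.
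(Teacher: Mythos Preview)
Your proposal is correct and follows essentially the same approach as the paper: unfold the definition of $\sim$, apply $\cZ_n$, use multiplicativity \eqref{eq:Z-boxtimes}, and then kill the auxiliary factors using closedness. The only cosmetic difference is that the paper keeps $\ED_n \boxtimes \overline{\ED}_n$ grouped as a single factor and observes directly that $\cZ_n(\ED_n \boxtimes \overline{\ED}_n)=\one_{n+1}$, whereas you split all three factors and invoke Lemma~\ref{lem:dual-Z}(1)(2) explicitly to handle $\cZ_n(\overline{\ED}_n)$; both routes are equivalent.
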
\noindent
\pf 
$\EC_n \sim \EC_n'$ implies that $\EC_n \boxtimes \ED_n \boxtimes
\overline{\ED}_n = \EC_n'\boxtimes \ED_n' \boxtimes \overline{\ED}_n'$. Thus
$\cZ_n(\EC_n \boxtimes \ED_n \boxtimes \overline{\ED}_n)= \cZ_n(\EC_n'\boxtimes
\ED_n' \boxtimes \overline{\ED}_n')$, which implies that $\cZ_n(\EC_n)
\boxtimes \cZ_n(\ED_n \boxtimes \overline{\ED}_n)= \cZ_n(\EC_n')\boxtimes
\cZ_n(\ED_n' \boxtimes \overline{\ED}_n')$.  Since $\ED_n$ and $\ED_n'$ are
closed, we have $\cZ_n(\ED_n \boxtimes \overline{\ED}_n)=\cZ_n(\ED_n'
\boxtimes \overline{\ED}_n')=\one_{n+1}$. Thus $\cZ_n(\EC_n) = \cZ_n(\ED_n)$.  
\epf

As a result, $\cZ_n$ canonically induces a monoid homomorphism, also denoted by $\cZ_n$, from $\cA^n$ to $\cA^{n+1}$, such that the following diagram:
$$
\xymatrix{
\cM^n \ar[r]^{\cZ_n} \ar[d]_{\sim} & \cM^{n+1} \ar[d]^{\sim} \\
\cA^n \ar[r]^{\cZ_n} & \cA^n 
}
$$
is commutative. 

\smallskip
From the definition of $\sim$, we can easily see that
if $\EC_n \sim \ED_n$ and $\EC_n$ is closed, then $\ED_n$ is also closed.
This allows us to show that
\begin{lemma}
Let $\cA^n_{\text{closed}}$ be a subset consisting of the quasi-equivalence
classes of all closed $\BF_n$ categories. Then, $\cA_\text{closed}^n$ is an
Abelian group under the stacking $\boxtimes$ operation. 
\end{lemma}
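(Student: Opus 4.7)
The plan is to verify the three monoid axioms on $\cA^n_{\text{closed}}$ and then produce inverses using the duality $\overline{(\cdot)}$, leveraging the lemmas already established in this section.

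First I would check that $\cA^n_{\text{closed}}$ is well-defined as a subset of $\cA^n$, that is, the property of being closed is compatible with the equivalence relation $\sim$. For this, suppose $\EC_n \sim \EC'_n$ with $\EC_n$ closed. By the lemma just proved, $\cZ_n(\EC_n) = \cZ_n(\EC'_n)$, and since $\EC_n$ is closed we have $\cZ_n(\EC_n) = \one_{n+1}$, so $\cZ_n(\EC'_n) = \one_{n+1}$, i.e.\ $\EC'_n$ is closed. Hence $\cA^n_{\text{closed}} \subset \cA^n$ is a well-defined subset, and the quotient map $\cM^n_{\text{closed}} \to \cA^n_{\text{closed}}$ is surjective.

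Next I would verify that $\boxtimes$ restricts to a binary operation on $\cA^n_{\text{closed}}$. This uses the homomorphism property \eqref{eq:Z-boxtimes}: if $\EC_n$ and $\ED_n$ are closed, then $\cZ_n(\EC_n \boxtimes \ED_n) \simeq \cZ_n(\EC_n) \boxtimes \cZ_n(\ED_n) \simeq \one_{n+1} \boxtimes \one_{n+1} \simeq \one_{n+1}$, so $\EC_n \boxtimes \ED_n$ is closed. Associativity, commutativity, and the unit property (with $[\one_n]$ as the identity, noting $\one_n$ is trivially closed) are all inherited from the commutative monoid structure on $\cA^n$ established above.

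The final and main step is to produce inverses. For any closed $\BF_n$ category $\EC_n$, the dual $\overline{\EC}_n$ is also closed by Lemma~\ref{lem:dual-Z}(2), since $\cZ_n(\overline{\EC}_n) = \overline{\cZ_n(\EC_n)} = \overline{\one}_{n+1} = \one_{n+1}$. The previous lemma (part (2) of the second lemma in this subsection) directly gives $\EC_n \boxtimes \overline{\EC}_n \sim \one_n$, so $[\overline{\EC}_n]$ is a two-sided inverse for $[\EC_n]$ in $\cA^n_{\text{closed}}$. The main conceptual obstacle here is the non-trivial identity $\EC_n \boxtimes \overline{\EC}_n \sim \one_n$; but this is already available, as it is established by the folding argument illustrated in Figure~\ref{TConj} and packaged into Lemma~\ref{lem:dual-Z}(3), which shows $\cZ_{n-1}[P_{n-1}(\EC_n)] = \EC_n \boxtimes \overline{\EC}_n$, so $\EC_n \boxtimes \overline{\EC}_n$ is in the image of $\cZ_{n-1}$ and differs from $\one_n$ by a quantity of the form $\ED_n \boxtimes \overline{\ED}_n$, exactly matching Definition~\ref{equvCC}. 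Combining the four items, $\cA^n_{\text{closed}}$ is an Abelian group under $\boxtimes$.
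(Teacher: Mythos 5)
Your proof is correct and follows essentially the same route the paper intends: you check that closedness is preserved under $\sim$ (via $\cZ_n(\EC_n)=\cZ_n(\EC_n')$), that $\boxtimes$ restricts to closed categories using $\cZ_n(\EC_n\boxtimes\ED_n)\simeq\cZ_n(\EC_n)\boxtimes\cZ_n(\ED_n)$, and that $[\overline{\EC}_n]$ inverts $[\EC_n]$ via $\EC_n\boxtimes\overline{\EC}_n\sim\one_n$, which are precisely the ingredients the paper assembles from the preceding lemmas. The only minor remark is that the last step needs nothing as heavy as Lemma~\ref{lem:dual-Z}(3): $\EC_n\boxtimes\overline{\EC}_n$ is itself of the form $\ED_n\boxtimes\overline{\ED}_n$ with $\ED_n=\EC_n$ closed, so it is quasi-equivalent to $\one_n$ directly from Definition~\ref{equvCC}.
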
 \noindent

\begin{figure}[tb] 
\begin{center} 
\includegraphics[scale=0.5]{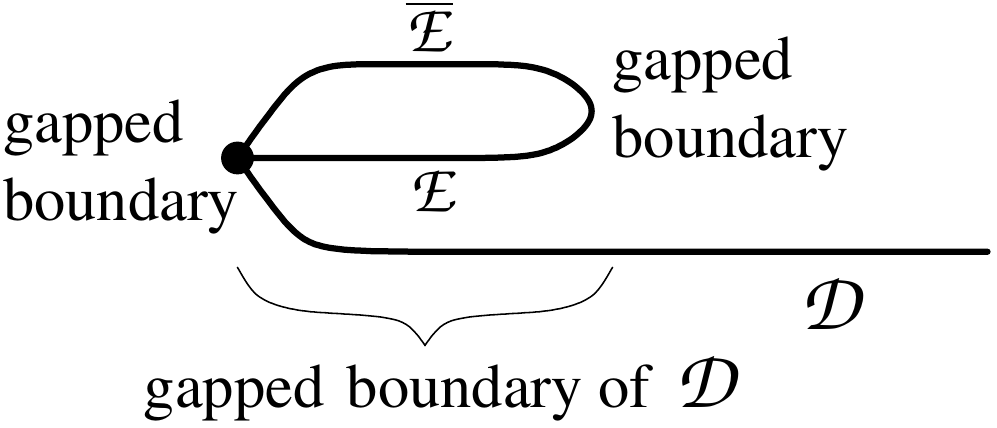} \end{center}
\caption{
$ \ED_n\boxtimes\EE_n \boxtimes
\overline{\EE}_n $ has a gapped boundary implies that
$ \ED_n$ has a gapped boundary if $\EE_n$ is closed.
} 
\label{DEbarE} 
\end{figure}

We can also show that
\begin{lemma} If $\EC_n \sim \ED_n$ and $\EC_n$ is exact, then $\ED_n$ is exact.
\end{lemma}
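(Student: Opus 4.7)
The plan is to first convert the definition of $\EC_n\sim\ED_n$ into an equation between exact $\BF_n$ categories, and then to perform a geometric ``cancellation'' of the closed auxiliary factors, leaving a gapped boundary for $\ED_n$ alone.

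Unpacking $\EC_n\sim\ED_n$, we obtain closed BF categories $\EF_n$ and $\EF_n'$ with
\begin{equation*}
\EC_n\boxtimes\EF_n\boxtimes\overline{\EF}_n = \ED_n\boxtimes\EF_n'\boxtimes\overline{\EF}_n'.
\end{equation*}
By the previous lemma, $\ED_n$ is already known to be closed, so it is realized by an $n$-dimensional lbH system. By Lemma\,\ref{lem:dual-Z}(3), both $\EF_n\boxtimes\overline{\EF}_n$ and $\EF_n'\boxtimes\overline{\EF}_n'$ are exact, with explicit gapped boundaries given by the ``folding" walls $P_{n-1}(\EF_n)$ and $P_{n-1}(\EF_n')$ respectively. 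Since $\EC_n$ is exact by hypothesis, stacking its gapped boundary with the gapped boundary of $\EF_n\boxtimes\overline{\EF}_n$ yields a gapped boundary for the left-hand side. By the equality above, $\ED_n\boxtimes\EF_n'\boxtimes\overline{\EF}_n'$ is therefore also exact; call $\EY_{n-1}$ the resulting gapped boundary.

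The key geometric step is to ``deflate'' the auxiliary factor $\EF_n'\boxtimes\overline{\EF}_n'$. Realize $\ED_n$ on a half-space with physical boundary at $y=0$. In the thin slab $0\le y\le 1$, stack additional $\EF_n'$ and $\overline{\EF}_n'$ layers on top of the $\ED_n$ layer, so that the slab realizes $\ED_n\boxtimes\EF_n'\boxtimes\overline{\EF}_n'$. Terminate the slab by the following two gapped interfaces: at $y=0$, impose the gapped boundary $\EY_{n-1}$ of $\ED_n\boxtimes\EF_n'\boxtimes\overline{\EF}_n'$; at $y=1$, impose a gapped domain wall between $\ED_n\boxtimes\EF_n'\boxtimes\overline{\EF}_n'$ (below) and $\ED_n$ (above), obtained by leaving the $\ED_n$ factor untouched and capping the $\EF_n'\boxtimes\overline{\EF}_n'$ factor with its gapped boundary $P_{n-1}(\EF_n')$. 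The slab together with these two gapped walls gives a gapped boundary of $\ED_n$ in the ambient system, so $\ED_n$ is exact.

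The hard part is not the algebra but justifying the geometric construction at $y=1$: we must argue that the gapped boundary of the auxiliary factor $\EF_n'\boxtimes\overline{\EF}_n'$ to vacuum extends to a gapped interface between $\ED_n\boxtimes\EF_n'\boxtimes\overline{\EF}_n'$ and $\ED_n$ when the $\ED_n$ factor is carried through trivially. This is the boundary-level analogue of the folding trick (Lemma\,\ref{lem:dual-Z}(3)) and is consistent with the boundary-bulk framework developed earlier in the paper (in particular the way gapped walls and stacking interact in Section\,\ref{sec:boundary-bulk-relation}); a fully rigorous treatment would invoke the bulk-to-boundary functor $f$ and verify that the composite boundary condition is a legitimate gapped boundary, but the physical picture should suffice at the level of this paper.
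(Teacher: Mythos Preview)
Your proposal is correct and follows essentially the same approach as the paper: unpack the quasi-equivalence, observe that the stacked system $\ED_n\boxtimes\EF_n'\boxtimes\overline{\EF}_n'$ inherits a gapped boundary from the left-hand side, and then cap off the auxiliary $\EF_n'\boxtimes\overline{\EF}_n'$ layer using its own folding boundary $P_{n-1}(\EF_n')$ to isolate a gapped boundary for $\ED_n$. The paper's proof compresses your slab construction into a single figure (Fig.~\ref{DEbarE}), but the geometric content is the same; your explicit description of the $y=1$ interface is precisely what that figure encodes.
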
\noindent
\pf 
We note that  $\EC_n \sim \ED_n$ implies that
there exist closed $\EE_n$ and $\EE_n'$ such that
that $\EC_n \boxtimes \EE_n' \boxtimes
\overline{\EE}_n'= \ED_n\boxtimes \EE_n \boxtimes \overline{\EE}_n$.
Since $\EC_n$ is closed, so is $\ED_n$.
Since $\EC_n \boxtimes \EE_n' \boxtimes
\overline{\EE}_n'$ is exact, so is $\ED_n\boxtimes\EE_n \boxtimes
\overline{\EE}_n $ (\ie has a gapped boundary).
Then from Fig. \ref{DEbarE}, we can show that $\ED_n$ is exact.
\epf \noindent
Since the center of the dual is the dual of the center: $\cZ(\overline{\EC}_n)=\overline{\cZ(\EC_n)}$, therefore, for exact BF categories, we also have:
\begin{lemma}
Let $\cA_\text{exact}^n$ be the quasi-equivalence classes of all exact BF
categories.  Then, $\cA_\text{exact}^n$ is an Abelian group under the stacking
$\boxtimes$ operation.
\end{lemma}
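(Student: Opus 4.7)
The strategy is to verify the group axioms for $(\cA^n_{\text{exact}}, \boxtimes, [\one_n])$ viewed as a subset of the commutative monoid $(\cA^n, \boxtimes, [\one_n])$ that was already established above. Associativity and commutativity are inherited automatically, and $[\one_n]$ lies in $\cA^n_{\text{exact}}$ because $\one_n$ has a trivially gapped boundary. The preceding lemma shows that exactness is preserved under $\sim$, so $\cA^n_{\text{exact}}$ is well-defined as a subset of $\cA^n$; what remains is closure of $\boxtimes$ on this subset and the existence of inverses.

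For closure, the plan is to unpack exactness physically. Given exact $\EC_n$ and $\ED_n$, realize them as the gapped boundaries of lbH systems $\La$ and $\La'$ in one higher space-time dimension. Stacking $\La$ and $\La'$ produces an lbH system whose boundary is the stack of the two original boundaries; this stacked boundary is still gapped and realizes $\EC_n \boxtimes \ED_n$ by the very definition of $\boxtimes$. Hence $\EC_n \boxtimes \ED_n$ is exact, and closure descends to $\cA^n_{\text{exact}}$.

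For inverses, the natural candidate for the inverse of $[\EC_n]$ is $[\overline{\EC}_n]$, and I will proceed in two steps. First, I will show that $\overline{\EC}_n$ is exact whenever $\EC_n$ is: writing $\EC_n = \cZ_{n-1}(\EC_{n-1})$ for some $\BF_{n-1}$ category $\EC_{n-1}$ and applying the identity $\cZ_{n-1}(\overline{\EC}_{n-1}) = \overline{\cZ_{n-1}(\EC_{n-1})}$ from Lemma~\ref{lem:dual-Z}(2), one gets $\overline{\EC}_n = \cZ_{n-1}(\overline{\EC}_{n-1})$, exhibiting $\overline{\EC}_n$ as a center and hence as exact. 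Second, I will verify $\EC_n \boxtimes \overline{\EC}_n \sim \one_n$. The key observation is that an exact BF category is automatically closed, since Corollary~\ref{cor:Z2=0} yields $\cZ_n(\EC_n) = \cZ_n \cZ_{n-1}(\EC_{n-1}) = \one_{n+1}$. With $\EC_n$ closed, the relation from Definition~\ref{equvCC} is verified by the choice $\ED_n := \one_n$ and $\ED_n' := \EC_n$ (both closed), since $\EC_n \boxtimes \overline{\EC}_n \boxtimes \one_n \boxtimes \overline{\one}_n$ and $\one_n \boxtimes \EC_n \boxtimes \overline{\EC}_n$ agree on the nose.

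The only genuinely subtle step is the first half of the inverse argument, namely that duality preserves exactness. The cleanest route is the one sketched above via $\cZ(\overline{\cdot}) = \overline{\cZ(\cdot)}$; a direct physical alternative is to observe that if $\La$ has a gappable boundary realizing $\EC_n$, then the time-reversed partner $\overline{\La}$ has a gappable boundary realizing $\overline{\EC}_n$, since time reversal is a symmetry of the notion of an energy gap. Everything else—closure, the identity element, and the descent from $\EC_n \boxtimes \overline{\EC}_n \sim \one_n$ to the equation $[\EC_n] \boxtimes [\overline{\EC}_n] = [\one_n]$ in $\cA^n_{\text{exact}}$—reduces to bookkeeping once these two ingredients are in hand.
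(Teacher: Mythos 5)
Your overall route is the same as the paper's: the paper's entire justification for this lemma is the preceding lemma (exactness is preserved under $\sim$) together with the single observation that the center of the dual is the dual of the center, $\cZ(\overline{\EC}_n)=\overline{\cZ(\EC_n)}$ — which is exactly how you show that $\overline{\EC}_n$ is exact and hence that $[\overline{\EC}_n]$ is the inverse of $[\EC_n]$. Your verification that $\EC_n\boxtimes\overline{\EC}_n\sim\one_n$ (exact $\Rightarrow$ closed via $\cZ_n\cZ_{n-1}=\one_{n+2}$, then the relation witnessed by $\ED_n=\one_n$, $\ED_n'=\EC_n$) is likewise the intended argument.

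One step is misstated, however. For closure you ``realize $\EC_n$ and $\ED_n$ as the gapped boundaries of lbH systems in one higher space-time dimension'' and conclude from the gappedness of the stacked boundary that $\EC_n\boxtimes\ED_n$ is exact. As written this only shows that $\EC_n\boxtimes\ED_n$ is realizable as a gapped boundary of an $(n+1)$-dimensional bulk, which is true of \emph{every} (possibly anomalous) BF category and is just the definition of $\boxtimes$ for generic BF categories; it does not establish exactness. Exactness of $\EC_n$ means it is realized by a system \emph{in the same dimension} which itself admits a gapped boundary, i.e.\ $\EC_n=\cZ_{n-1}(\EC_{n-1})$. The correct one-line closure argument is
$$
\EC_n\boxtimes\ED_n=\cZ_{n-1}(\EC_{n-1})\boxtimes\cZ_{n-1}(\ED_{n-1})\simeq\cZ_{n-1}(\EC_{n-1}\boxtimes\ED_{n-1}),
$$
using Eq.~(\ref{eq:Z-boxtimes}) — equivalently, stack the two $n$-dimensional realizations together with their gapped boundaries. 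With that repair the proof is complete and matches the paper's.
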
\noindent

Recall an elementary result in mathematics. 
\begin{lemma}
Let  $f: X \to Y$ be a surjective homomorphism between two commutative monoids $X$ and $Y$. If both $\ker f$ and $Y$ are Abelian groups, so is $X$. 
\end{lemma}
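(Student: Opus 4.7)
The plan is to show directly that every $x\in X$ admits a two-sided inverse, using surjectivity to pass an inversion from $Y$ up to $X$ modulo $\ker f$, and then using the group structure of $\ker f$ to correct the discrepancy. Write $\ker f := f^{-1}(1_Y)$, which is assumed to be an abelian group with the multiplication inherited from $X$.

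Here are the steps I would carry out, in order. First, pick an arbitrary $x\in X$; I will construct an element $x^{-1}\in X$ with $x\cdot x^{-1}=1_X$, which suffices since $X$ is commutative. Second, since $Y$ is an abelian group, $f(x)$ has an inverse $y'\in Y$; by surjectivity of $f$ choose $x'\in X$ with $f(x')=y'$. Then $f(x\cdot x') = f(x)\cdot f(x') = 1_Y$, so $x\cdot x' \in \ker f$. Third, since $\ker f$ is a group, there exists $z\in \ker f\subset X$ with $(x\cdot x')\cdot z = 1_X$. Fourth, set $x^{-1} := x'\cdot z \in X$ and compute
\[
x \cdot (x'\cdot z) \;=\; (x\cdot x')\cdot z \;=\; 1_X,
\]
using the associativity of the monoid. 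This produces the desired inverse, and commutativity of $X$ makes it two-sided, so $X$ is an abelian group.

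There is essentially no serious obstacle: the only thing to be careful about is the meaning of ``kernel'' in the monoid setting, where in general $f^{-1}(1_Y)$ need not detect enough information to reconstruct $f$; but the hypothesis already supplies the group structure on $\ker f$ that we need, so this subtlety is sidestepped. The proof needs only surjectivity of $f$ (to lift inverses from $Y$), the group property of $\ker f$ (to absorb the residual factor), and associativity in $X$ (to combine the two corrections). Commutativity of $X$ is not strictly required for the one-sided inverse but is convenient for promoting it to a two-sided inverse; in the present commutative context it is free. Hence the lemma follows directly.
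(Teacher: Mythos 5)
Your proof is correct and follows essentially the same route as the paper's: lift the inverse of $f(x)$ from $Y$ via surjectivity, note that the product lands in $\ker f$, and use the group structure of $\ker f$ to supply the correcting factor. The only differences are notational (the paper also records the easy uniqueness of the inverse, which you omit but do not need).
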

\begin{proof} It is enough to show that any element $x$ in $X$ has a right
inverse $x'$, i.e.  $x x' = 1$ (by the commutativity,  $x' x = x x' = 1$).
Notice that if such $x'$ exists, it must be unique. Otherwise, let $x'$ and
$x''$ be so that     $x x' =1 = x x''$.  Then we have $x' = x' x x'' = x''$
(using the associativity and commutativity).
For any $x$ in $X$, let $y$ be the inverse of $f( x )$ and let $z$
be an element in $X$ such that $f( z ) = y$. Then we have $f(x z) =
f( x ) f( z ) = f( x ) y =1$.  Therefore, $x z\in \ker f$. Since $\ker f$ is an Abelian group, there is an element $d$ such that $xzd = 1$.  Hence $zd$ is the right inverse of $x$. 
\end{proof}

Therefore, we obtain an important result. 
\begin{prop}
The set $\cA^n$ of quasi-equivalence classes of $\BF_n$ categories form
an Abelian group under the stacking $\boxtimes$ operation. Moreover, $\cZ_n: \cA^n \to \cA^{n+1}$ is a group homomorphism. 
\end{prop}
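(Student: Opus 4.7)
My plan is to apply the algebraic lemma stated immediately before the proposition: a commutative monoid $X$ with a surjective monoid homomorphism $f : X \to Y$ onto an abelian group $Y$ whose kernel is also an abelian group is itself an abelian group. I take $X = \cA^n$, $Y = \cA^{n+1}_{\text{exact}}$, and $f = \cZ_n$. The monoid structure on $\cA^n$ has already been verified, and the monoid homomorphism $\cZ_n: \cA^n \to \cA^{n+1}$ is the one induced by the preceding lemma. Its image lies in $\cA^{n+1}_{\text{exact}}$ because $\cZ_n(\EC_n)$ is always exact by the definition of the \bulk, and this map is surjective onto $\cA^{n+1}_{\text{exact}}$ since every exact $(n+1)$-BF category is by definition the bulk of some $\BF_n$ category. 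The codomain $\cA^{n+1}_{\text{exact}}$ is an abelian group by the preceding lemma.

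The key step is to identify $\ker f$ with $\cA^n_{\text{closed}}$, which is already known to be an abelian group. The inclusion $\cA^n_{\text{closed}} \subseteq \ker f$ is immediate: if $\EC_n$ is closed, then $\cZ_n(\EC_n) = \one_{n+1}$. For the reverse inclusion, suppose $[\cZ_n(\EC_n)] = [\one_{n+1}]$ in $\cA^{n+1}$, i.e.\ there exist closed $\ED_{n+1}, \ED_{n+1}'$ with
\[
\cZ_n(\EC_n) \boxtimes \ED_{n+1} \boxtimes \overline{\ED}_{n+1} \;=\; \ED_{n+1}' \boxtimes \overline{\ED}_{n+1}'.
\]
Using Lemma \ref{lem:dual-Z}(3), each term of the form $\ED \boxtimes \overline{\ED}$ is exhibited as the bulk of its diagonal boundary $P_n(\ED)$, so both sides are $\cZ_n$ of a $\BF_n$ category. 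Combining the homomorphism property \eqref{eq:Z-boxtimes} with the uniqueness of the bulk given by Lemma \ref{lemma:unique-bulk}, I then aim to conclude that $\EC_n \boxtimes P_n(\ED_{n+1})$ and $P_n(\ED_{n+1}')$ must be quasi-equivalent, which upon absorbing the closed stabilizers into the definition of $\sim$ yields that $\EC_n$ is quasi-equivalent to a closed BF category.

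The main obstacle I anticipate is the last step: the bulk does not determine the boundary uniquely (different gapped boundaries of the same exact bulk exist), so the identification $\ker f = \cA^n_{\text{closed}}$ is not a formal consequence of Lemma \ref{lemma:unique-bulk} alone. The quasi-equivalence $\sim$ is designed precisely to absorb this non-uniqueness by stacking with closed doubles $\ED_n \boxtimes \overline{\ED}_n$, so the argument must leverage the exactness of $\ED \boxtimes \overline{\ED}$ to pass from an equality of bulks to a quasi-equivalence of boundaries. Once $\ker f$ is shown to be the group $\cA^n_{\text{closed}}$, the algebraic lemma gives that $\cA^n$ is an abelian group. The final clause, that $\cZ_n$ is a group homomorphism, is then automatic: any monoid homomorphism between abelian groups is a group homomorphism.
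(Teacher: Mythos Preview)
Your strategy—apply the algebraic lemma with $X = \cA^n$, $f = \cZ_n$, and $Y = \cA^{n+1}_{\text{exact}}$—is exactly the paper's intended argument; the paper's entire proof is the word ``Therefore'' following that lemma. You are also right to flag the reverse inclusion $\ker f \subseteq \cA^n_{\text{closed}}$ as the delicate step. From $\cZ_n(\EC_n)\sim\one_{n+1}$ you correctly extract $\cZ_n\bigl(\EC_n \boxtimes P_n(\ED_{n+1})\bigr) = \cZ_n\bigl(P_n(\ED'_{n+1})\bigr)$, but Lemma~\ref{lemma:unique-bulk} runs the wrong way (equal boundaries force equal bulks, not conversely), so this does not yield a quasi-equivalence of the boundaries. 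Moreover, even if it did, $P_n(\ED'_{n+1})$ is not closed—its bulk is $\ED'_{n+1}\boxtimes\overline{\ED'}_{n+1}$ by Lemma~\ref{lem:dual-Z}(3)—so landing near it would not place $[\EC_n]$ in $\cA^n_{\text{closed}}$.

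The paper does not close this gap either; it asserts the proposition at the same physics level of rigor as the surrounding material. So your proof is no less complete than the paper's—you have merely been explicit about where the argument is soft. If you want a cleaner route within the paper's own framework, note that $\cZ_n$ actually descends to a well-defined map $\cA^n \to \cM^{n+1}$ (equality of bulks, not just $\sim$), whose literal kernel is $\cA^n_{\text{closed}}$ on the nose; the price is that one must then argue that the image $\cM^{n+1}_{\text{exact}}$ is a group, which the paper has not shown independently.
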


\subsection{Witt equivalence relation}  \label{sec:witt-eq-relation}

The quasi-equivalence relation is not the only equivalence relation available. In this subsection, we will discuss more equivalence relations, among which Witt equivalence relation is the most important one. 

\begin{defn}  
\label{equvCC}
Two $n$-dimensional BF categories $\EC_n$ and $\EC'_n$ are called
$k$-equivalent for $k\geq n-1$ and denoted by $\EC_n\ksim \EC'_n$ if there
exist $k$-dimensional BF categories $\ED_k$ and $\ED_k'$ such that 
$$\EC_n\boxtimes \cZ_{n-1}[P_{n-1}(\ED_k)] \simeq \EC_n'\boxtimes
\cZ_{n-1}[P_{n-1}(\ED_k')].$$
\end{defn}

An immediate consequence of above definition is $\cZ_n(\EC_n) \simeq \cZ_n(\ED_n)$ if $\EC_n \ksim \ED_n$.  Namely, two $k$-equivalent BF categories $\EC_n$ and $\ED_n$ must share the same \bulk. 

\begin{rema}
When $n=3$, the $2$-equivalence $\overset{2}{\sim}$ for closed $\BF_3$-categories is the usual Witt equivalence relation\cite{dmno, fsv}. 
\end{rema}

The $k$-equivalence $\ksim$ is indeed a well-defined equivalence relation follows from the following Lemma. 
\begin{lemma} \label{lem:3-prop-er}~ For $k\geq n$, we have \\
(1) $\EC_n \ksim \EC_n$.\\
(2) $\EC_n \ksim \EC_n'$ implies that $\EC_n' \ksim \EC_n$.\\
(3) $\EC_n \ksim \EC_n'$ and $\EC_n' \ksim \EC_n''$ imply that $\EC_n \ksim \EC_n''$.
\end{lemma}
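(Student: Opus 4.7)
The plan is to verify the three properties directly from the definition, exploiting the fact that $\boxtimes$ is a commutative and associative monoidal operation on BF categories and that the operations $P_{n-1}$ and $\cZ_{n-1}$ are both compatible with $\boxtimes$. The key algebraic facts I will use are equation \eqref{eq:Z-boxtimes}, i.e.\ $\cZ_{n-1}(\EA\boxtimes\EB)\simeq\cZ_{n-1}(\EA)\boxtimes\cZ_{n-1}(\EB)$, and the analogous compatibility of the projection functor $P_{n-1}(\ED_k\boxtimes\EE_k)\simeq P_{n-1}(\ED_k)\boxtimes P_{n-1}(\EE_k)$, which is immediate from Definition \ref{FDCn} (restricting the stacked theory to the same sub-spacetime $M^{n-1}$ just stacks the restrictions).

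For \emph{reflexivity} (1), I would simply take $\ED_k=\ED_k'=\one_k$, so that $\cZ_{n-1}[P_{n-1}(\one_k)] \simeq \cZ_{n-1}(\one_{n-1}) \simeq \one_n$ and the defining relation reduces to the tautology $\EC_n\boxtimes\one_n\simeq\EC_n\boxtimes\one_n$. For \emph{symmetry} (2), if $\EC_n\ksim\EC_n'$ is witnessed by $(\ED_k,\ED_k')$, then $(\ED_k',\ED_k)$ witnesses $\EC_n'\ksim\EC_n$ directly.

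The main content is \emph{transitivity} (3). Suppose $\EC_n\ksim\EC_n'$ via $(\ED_k,\ED_k')$ and $\EC_n'\ksim\EC_n''$ via $(\EE_k,\EE_k')$, so that
\begin{align}
\EC_n\boxtimes \cZ_{n-1}[P_{n-1}(\ED_k)] &\simeq \EC_n'\boxtimes \cZ_{n-1}[P_{n-1}(\ED_k')], \nonumber\\
\EC_n'\boxtimes \cZ_{n-1}[P_{n-1}(\EE_k)] &\simeq \EC_n''\boxtimes \cZ_{n-1}[P_{n-1}(\EE_k')]. \nonumber
\end{align}
Tensoring the first relation with $\cZ_{n-1}[P_{n-1}(\EE_k)]$, using commutativity of $\boxtimes$ to bring $\EC_n'\boxtimes\cZ_{n-1}[P_{n-1}(\EE_k)]$ together, and then applying the second relation, yields
\[
\EC_n\boxtimes \cZ_{n-1}[P_{n-1}(\ED_k)]\boxtimes \cZ_{n-1}[P_{n-1}(\EE_k)]
\simeq \EC_n''\boxtimes \cZ_{n-1}[P_{n-1}(\ED_k')]\boxtimes \cZ_{n-1}[P_{n-1}(\EE_k')].
\]
Applying \eqref{eq:Z-boxtimes} and the compatibility of $P_{n-1}$ with $\boxtimes$ collapses each side into the required form, and I would conclude by taking the witnessing pair $(\ED_k\boxtimes\EE_k,\,\ED_k'\boxtimes\EE_k')$.

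The only non-routine step is the compatibility $P_{n-1}(\ED_k\boxtimes\EE_k)\simeq P_{n-1}(\ED_k)\boxtimes P_{n-1}(\EE_k)$, which the paper does not state as a numbered lemma; I would justify it in one sentence by observing that stacking $\ED_k$ and $\EE_k$ physically and then restricting the topological excitations to a common $(n-1)$-dimensional sub-spacetime gives the same effective theory as restricting each factor separately and then stacking. Everything else is formal manipulation of a symmetric monoidal structure, so I do not expect any real obstacle beyond stating this compatibility cleanly.
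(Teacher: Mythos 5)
Your proof is correct and follows essentially the same route the paper takes: the paper gives no separate proof of this lemma, but its proof of the analogous statement for the quasi-equivalence relation $\sim$ is exactly your transitivity argument — tensor the two witnessing relations and take the product of the witnesses. Your one extra observation, that collapsing $\cZ_{n-1}[P_{n-1}(\ED_k)]\boxtimes\cZ_{n-1}[P_{n-1}(\EE_k)]$ into $\cZ_{n-1}[P_{n-1}(\ED_k\boxtimes\EE_k)]$ requires not only \eqref{eq:Z-boxtimes} but also the compatibility $P_{n-1}(\ED_k\boxtimes\EE_k)\simeq P_{n-1}(\ED_k)\boxtimes P_{n-1}(\EE_k)$, correctly identifies a step the paper leaves implicit (and also relies on tacitly in Lemma~\ref{lem:sim}).
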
 \noindent

The $k$-equivalence relation is different for different $k$. A $k$-equivalence class is larger than a $k+1$-equivalence class. Namely, two $(k+1)$-equivalent $\BF_n$-categories is automatically $k$-equivalent. Namely, 
$$
\cdots \,\, \overset{k+1}{\sim} \,\, \Rightarrow \,\, \ksim \,\, \Rightarrow \,\, \cdots \,\, \Rightarrow \,\, \overset{n}{\sim} \,\, \Rightarrow \,\, \overset{n-1}{\sim}.
$$
Note that the $n$-equivalence relation is not the quasi-equivalence relation $\sim$. Instead, we have $\sim$ $\Rightarrow$ $\overset{n}{\sim}$. The $\overset{n-1}{\sim}$ is also called the Witt equivalence, also denoted by $\overset{\rw}{\sim}$. 

Using equation (\ref{eq:Z-boxtimes}) and Lemma\,\ref{lem:dual-Z}, we obtain immediately a few results from Definition\,\ref{equvCC}:
\begin{lemma}  \label{lem:sim}  ~ For $k\geq n-1$, \\
(1) If $\EC_n \ksim \EC'_n$ and $\ED_n \ksim \ED'_n$, then $\EC_n \boxtimes \ED_n
\ksim \EC'_n \boxtimes \ED'_n$.\\
(2) $\cZ_{n-1}(P_{n-1}(\EC_k)) \ksim \one_n$.\\
(3) If $\EC_n$ is closed and $\EC_n \ksim \ED_n$, then $\ED_n$ is closed.\\
(4) If $\EC_n$ is closed, then $\EC_n\boxtimes \overline{\EC}_n \ksim \one_n$ for $k=n, n-1$.\\
(5) If $\EC_n$ is exact, i.e. $\EC_n\simeq \cZ_{n-1}(\ED_{n-1})$ for some $\ED_{n-1}$,  then $\EC_n \boxtimes \cZ_{n-1}(\overline{\ED_{n-1}}) \ksim \one_n$
for $k=n, n-1$.
\end{lemma}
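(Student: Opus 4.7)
The plan is to handle the five parts as a connected sequence, since each later part leans on the earlier ones together with the two pieces of external input: the monoidal property $\cZ_{n-1}(\EA \boxtimes \EB) \simeq \cZ_{n-1}(\EA) \boxtimes \cZ_{n-1}(\EB)$ from equation (\ref{eq:Z-boxtimes}), the dualization identities from Lemma~\ref{lem:dual-Z}, and the remark immediately after Definition~\ref{equvCC} that $\EC_n \ksim \ED_n$ implies $\cZ_n(\EC_n) \simeq \cZ_n(\ED_n)$. I would also use the physically obvious fact that $P_{n-1}$ sends the trivial $k$-category to $\one_{n-1}$ and commutes with $\boxtimes$ (since stacking a subregion of two stacked bulks is the stacking of the subregions).

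For (1), the plan is purely diagrammatic: witnesses $(\ED_k,\ED_k')$ for $\EC_n \ksim \EC'_n$ and $(\EE_k,\EE_k')$ for $\ED_n \ksim \ED'_n$ combine via $\boxtimes$ to give witnesses $(\ED_k \boxtimes \EE_k,\, \ED_k' \boxtimes \EE_k')$ for $\EC_n \boxtimes \ED_n \ksim \EC'_n \boxtimes \ED'_n$, using $P_{n-1}$ and $\cZ_{n-1}$ commuting with $\boxtimes$ and the symmetry of $\boxtimes$. For (2), I would verify directly from Definition~\ref{equvCC} with the choice $\ED_k=\one_k$ and $\ED_k'=\EC_k$: the left-hand side becomes $\cZ_{n-1}(P_{n-1}(\EC_k))\boxtimes \one_n$, the right becomes $\one_n \boxtimes \cZ_{n-1}(P_{n-1}(\EC_k))$, and they agree. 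Part (3) is one line: apply $\cZ_n$ to the defining equivalence; since $\ksim$ preserves $\cZ_n$, we obtain $\cZ_n(\ED_n) \simeq \cZ_n(\EC_n) \simeq \one_{n+1}$.

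For (4), the key input is Lemma~\ref{lem:dual-Z}(3), which gives $\EC_n \boxtimes \overline{\EC}_n \simeq \cZ_{n-1}(P_{n-1}(\EC_n))$ whenever $\EC_n$ is closed. To get $\overset{n}{\sim}\one_n$, I would apply (2) with $\EC_k := \EC_n$ (so $k=n$). To get the Witt version $\overset{n-1}{\sim}\one_n$, I would instead set $\EC_k := P_{n-1}(\EC_n)$, an $(n{-}1)$-category, and again invoke (2) with $k=n-1$, noting that $P_{n-1}$ of an $(n{-}1)$-category is itself. For (5), the observation is that $\EC_n \boxtimes \cZ_{n-1}(\overline{\ED_{n-1}}) \simeq \cZ_{n-1}(\ED_{n-1}) \boxtimes \overline{\cZ_{n-1}(\ED_{n-1})}$ by (\ref{eq:Z-boxtimes}) and Lemma~\ref{lem:dual-Z}(2); since exactness implies closedness, (4) then finishes the job for both $k=n$ and $k=n-1$.

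The routine bookkeeping (monoidality, identities, symmetry) is not what I expect to cost effort. The genuine conceptual obstacle is the compatibility of $P_{n-1}$ with $\boxtimes$ and with the trivial category: $P_{n-1}$ is defined physically as restriction of excitations to a lower-dimensional stratum, and while its distribution over stacking is clear from the picture (Fig.~\ref{FCn}), it has not been formalized here. I would either take these compatibilities as additional structural axioms of $P_{n-1}$ or prove them by inspection of the stacking process; with that in hand, every claim reduces to a short application of (\ref{eq:Z-boxtimes}), Lemma~\ref{lem:dual-Z}, and Definition~\ref{equvCC}.
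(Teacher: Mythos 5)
Your proof is correct and follows essentially the same route as the paper, which gives no written argument at all and simply asserts that the lemma follows immediately from equation~(\ref{eq:Z-boxtimes}), Lemma~\ref{lem:dual-Z}, and Definition~\ref{equvCC}. The compatibilities of $P_{n-1}$ with $\boxtimes$, with $\one_k$, and with $(n{-}1)$-categories that you flag are indeed used tacitly by the paper, so spelling them out as structural assumptions is the right call.
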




%
We give a sufficient condition for the equivalence relation $\overset{n}{\sim}$ among closed $BF$ categories below. 
\begin{prop} \label{prop:gw-w}
Let $\EC_n$ and $\ED_n$ be $n$-dimensional BF categories. If $\EC_n$ (or $\ED_n$) is closed and there exists an $n$-dimensional BF category $\EE_n$ such that $\EC_n \boxtimes \overline{\ED}_n \simeq \cZ_{n-1}(P_{n-1}(\EE_n))$, then $\ED_n$ (or $\EC_n$) is closed and $\EC_n \overset{n}{\sim} \ED_n$. 
\end{prop}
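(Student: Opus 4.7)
My plan is to prove the two assertions in sequence, with the closedness of $\ED_n$ (or $\EC_n$) established first and the Witt/$n$-equivalence deduced as an immediate bookkeeping consequence.

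\textbf{Step 1: $\ED_n$ is closed.} I will apply $\cZ_n$ to the given equivalence $\EC_n \boxtimes \overline{\ED}_n \simeq \cZ_{n-1}(P_{n-1}(\EE_n))$. The right-hand side is exact by construction, and any exact BF category is in particular closed, so its bulk is $\one_{n+1}$ by Corollary\,\ref{cor:Z2=0}. Using the multiplicativity of the center \eqref{eq:Z-boxtimes} together with Lemma\,\ref{lem:dual-Z}(2), the left-hand side has bulk $\cZ_n(\EC_n)\boxtimes \overline{\cZ_n(\ED_n)}$. Since $\EC_n$ is assumed closed, $\cZ_n(\EC_n)\simeq \one_{n+1}$, so $\overline{\cZ_n(\ED_n)}\simeq \one_{n+1}$. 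Taking duals once more (Lemma\,\ref{lem:dual-Z}(1)) yields $\cZ_n(\ED_n)\simeq \one_{n+1}$, i.e.\ $\ED_n$ is closed. (The symmetric case is identical.)

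\textbf{Step 2: $\EC_n \overset{n}{\sim} \ED_n$.} Now that $\ED_n$ is closed, Lemma\,\ref{lem:dual-Z}(3) supplies the key identity
\begin{equation*}
\cZ_{n-1}(P_{n-1}(\ED_n)) \simeq \ED_n\boxtimes \overline{\ED}_n.
\end{equation*}
Tensoring the hypothesis $\EC_n \boxtimes \overline{\ED}_n \simeq \cZ_{n-1}(P_{n-1}(\EE_n))$ on the right with $\ED_n$ and substituting the above gives
\begin{equation*}
\EC_n \boxtimes \cZ_{n-1}(P_{n-1}(\ED_n)) \;\simeq\; \ED_n \boxtimes \cZ_{n-1}(P_{n-1}(\EE_n)),
\end{equation*}
which is exactly Definition\,\ref{equvCC} for $k=n$, with witnesses $\ED_n$ and $\EE_n$ playing the roles of the required $n$-dimensional BF categories. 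Hence $\EC_n \overset{n}{\sim} \ED_n$.

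\textbf{Main obstacle.} There is no deep obstacle once Step\,1 is settled: the whole argument is algebraic manipulation in the commutative monoid of $\BF_n$-categories, using only the multiplicativity of $\cZ$ under $\boxtimes$ and the folding identity of Lemma\,\ref{lem:dual-Z}(3). The only subtle point is ensuring that Lemma\,\ref{lem:dual-Z}(3) is available, which is precisely why Step\,1 must be carried out first: the folding identity $\cZ_{n-1}(P_{n-1}(\ED_n))\simeq \ED_n\boxtimes \overline{\ED}_n$ requires $\ED_n$ to be closed (otherwise the folded boundary would itself be anomalous and would not realize the \bulk of $P_{n-1}(\ED_n)$). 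With closedness in hand, the rest is a one-line substitution.
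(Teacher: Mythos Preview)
Your proof is correct and follows essentially the same approach as the paper: apply $\cZ_n$ to the hypothesis to deduce closedness of $\ED_n$, then tensor the hypothesis by $\ED_n$ and invoke Lemma\,\ref{lem:dual-Z}(3) to obtain the defining equation for $\overset{n}{\sim}$. Your Step~1 is in fact slightly more careful than the paper's own argument, since you explicitly track the dual via Lemma\,\ref{lem:dual-Z}(1)--(2) rather than silently identifying $\cZ_n(\ED_n)$ with $\cZ_n(\EC_n\boxtimes\overline{\ED}_n)$.
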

\begin{proof}
If $\EC_n$ is closed, then we have
$$
\cZ_n(\ED_n) \simeq \cZ_n(\EC_n \boxtimes \overline{\ED}_n) \simeq \cZ_n(\cZ_{n-1}(P_{n-1}(\EE_n))) \simeq \one_{n+1}.
$$
Namely, $\ED_n$ is closed. 
Multiplying $\ED_n$ on the both side of the condition $\EC_n \boxtimes \overline{\ED}_n \simeq \cZ_{n-1}(\EE_n)$ and applying the 3rd result in Lemma\,\ref{lem:dual-Z}, we obtain
$$
\EC_n \boxtimes \cZ_{n-1}(P_{n-1}(\ED_n)) 
\simeq \ED_n \boxtimes \cZ_{n-1}(P_{n-1}(\EE_n)),
$$ 
i.e. $\EC_n \overset{n}{\sim} \ED_n$. 
\end{proof}

For Witt equivalence relation, we denote the set of equivalence classes of $n$-dimensional BF categories under
the equivalence relation $\rwsim$ as $\cA_\rw^n$, i.e. $\cA_\rw^n:= \cM^n/\rwsim$. The set $\cA_\rw^n$ is a quotient of $B^n$. We denote the equivalence class of $\EC_n$ in $\cA_\rw^n$ by $[\EC_n]_\rw$. 
\begin{prop} \label{prop:Aw-group}
The set $\cA_\rw^n$, together with the multiplication given by the stacking $\boxtimes$ operation and the unit element $[\one_n]_\rw$, is an Abelian group. The inverse is given by the dual of a BF category. Moreover, $\cZ: \cA_\rw^n \to \cA_\rw^{n+1}$ is a group homomorphism.
\end{prop}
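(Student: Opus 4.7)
The plan is to adapt the proof of the earlier Proposition for $(\cA^n,\sim)$, which leveraged the elementary lemma that a surjective monoid homomorphism with Abelian kernel and Abelian codomain forces the source to be an Abelian group. First I will verify the basic monoidal structure on $\cA_\rw^n$. Lemma~\ref{lem:3-prop-er} makes $\rwsim$ a well-defined equivalence relation, and Lemma~\ref{lem:sim}(1) shows the stacking $\boxtimes$ descends to a binary operation on $\cA_\rw^n$; associativity, commutativity, and the unit $[\one_n]_\rw$ are inherited from the commutative monoid $\cM^n$. To descend $\cZ_n$ to a monoid homomorphism $\cA_\rw^n\to \cA_\rw^{n+1}$, I need that $\EC_n\rwsim \EC_n'$ implies $\cZ_n(\EC_n)\simeq \cZ_n(\EC_n')$; this follows by applying $\cZ_n$ to the defining equation $\EC_n\boxtimes \cZ_{n-1}(\ED_{n-1})\simeq \EC_n'\boxtimes \cZ_{n-1}(\ED_{n-1}')$ and using \eqref{eq:Z-boxtimes} together with Corollary~\ref{cor:Z2=0} to kill the $\cZ_{n-1}$-factors on both sides.

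Next I apply the commutative-monoid lemma with $X=\cA_\rw^n$, $f=\cZ_n$, and codomain $Y:=f(\cA_\rw^n)\subseteq \cA_\rw^{n+1}$, so that $f$ is surjective onto $Y$. The set $Y$ is an Abelian group: any element is of the form $[\cZ_n(\EC_n)]_\rw$ with $\cZ_n(\EC_n)$ closed (Corollary~\ref{cor:Z2=0}), so Lemma~\ref{lem:sim}(4) gives $\cZ_n(\EC_n)\boxtimes \overline{\cZ_n(\EC_n)}\rwsim \one_{n+1}$, and Lemma~\ref{lem:dual-Z}(2) identifies $\overline{\cZ_n(\EC_n)}=\cZ_n(\overline{\EC}_n)\in Y$, so the inverse of $[\cZ_n(\EC_n)]_\rw$ in $Y$ is $[\cZ_n(\overline{\EC}_n)]_\rw$.

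The main obstacle is showing that $\ker f\subseteq \cA_\rw^n$ is an Abelian group. A class $[\EC_n]_\rw$ lies in $\ker f$ precisely when the (automatically closed) bulk $\cZ_n(\EC_n)$ is Witt-trivial as an $(n{+}1)$-dimensional BF category. I expect every such class to admit a \emph{closed} representative: if $\cZ_n(\EC_n)\boxtimes \cZ_{n}(\EF_n)\simeq \cZ_n(\EF_n')$ for some $n$-dimensional BF categories $\EF_n,\EF_n'$, then the anomaly of $\EC_n$ can be absorbed by stacking against $\EF_n$ and a boundary construction, producing $\EC_n'$ with $\EC_n\rwsim \EC_n'$ and $\cZ_n(\EC_n')\simeq \one_{n+1}$. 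Making this rigorous will use the uniqueness of the bulk (Lemma~\ref{lemma:unique-bulk}) together with the assumption, used elsewhere in this section, that any exact BF category in one higher dimension is realized by a condensation. Once closed representatives exist, Lemma~\ref{lem:sim}(4) supplies inverses inside $\ker f$ via the dual, so $\ker f$ becomes an Abelian group. The commutative-monoid lemma then upgrades $\cA_\rw^n$ to an Abelian group. For a general class $[\EC_n]_\rw$, the inverse is $[\overline{\EC}_n]_\rw$ corrected by a kernel element; this correction collapses to the identity whenever $\EC_n$ is closed, recovering the statement that the inverse is the dual in that case. Finally, group-homomorphism of $\cZ_n\colon \cA_\rw^n\to \cA_\rw^{n+1}$ is automatic: being already a monoid homomorphism between Abelian groups, it preserves inverses.
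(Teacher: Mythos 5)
There is a genuine gap. Your reduction through the surjective-monoid-homomorphism lemma, which worked for the quasi-equivalence $\sim$, degenerates for $\rwsim$: every bulk $\cZ_n(\EC_n)$ is by construction exact, and Lemma~\ref{lem:w-implies-exact} says an exact $\BF_{n+1}$ category is Witt-trivial, so $Y=\cZ_n(\cA_\rw^n)=\{[\one_{n+1}]_\rw\}$ (the paper states exactly this, ``$\cZ_n(\cA_\rw^n)=\one_{n+1}$'', shortly after the proposition) and $\ker\cZ_n$ is all of $\cA_\rw^n$. The lemma therefore reduces the problem to itself. Moreover, the step you propose in order to close the resulting ``main obstacle''---that every kernel class admits a closed representative---is false in this framework: the paper notes, immediately after the definition of $\ksim$, that $\EC_n\rwsim\ED_n$ forces $\cZ_n(\EC_n)\simeq\cZ_n(\ED_n)$, so an anomalous $\EC_n$ is never Witt-equivalent to a closed one. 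For the same reason the dual does not invert an anomalous class: $\cZ_n(\EC_n\boxtimes\overline{\EC}_n)\simeq\cZ_n(\EC_n)\boxtimes\overline{\cZ_n(\EC_n)}$ is exact but in general nontrivial unless $\cZ_n(\EC_n)$ is invertible, whereas $\EC_n\boxtimes\overline{\EC}_n\rwsim\one_n$ would force that bulk to be $\one_{n+1}$.

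The paper itself offers no proof of this proposition, and the only argument its lemmas support is the one for closed $\BF_n$ categories, which is also the only case in which the proposition is subsequently used (in the proof of Proposition~\ref{thm:w=gw}): compatibility of $\boxtimes$ with $\rwsim$ is Lemma~\ref{lem:sim}(1), the unit and associativity descend from $\cM^n$, and Lemma~\ref{lem:sim}(4) gives $\EC_n\boxtimes\overline{\EC}_n\rwsim\one_n$ directly whenever $\EC_n$ is closed---no kernel argument is needed at all. Your verifications that $\cZ_n$ descends to $\cA_\rw^n$ and that a monoid homomorphism between groups automatically preserves inverses are correct. But as written your proof does not produce inverses for anomalous classes, and indeed no such inverse can exist when $\cZ_n(\EC_n)$ carries nontrivial elementary excitations; the statement has to be read (or restricted) as applying to closed $\BF_n$ categories for the dual to serve as the inverse.
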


\medskip
The Witt equivalence between two closed $\BF_n$ categories can also be understood in a different way. We introduce another equivalence relation. 
\begin{defn}
Two $\BF_n$ categories $\EC_n$ and $\ED_n$ are called $\mathrm{gw}$-equivalent, denoted by $\EC_n \gwsim \ED_n$, if there exists an $(n-1)$-dimensional BF category $\EE_{n-1}$ such that 
\begin{equation} \label{eq:gw-eq}
\EC_n \boxtimes \overline{\ED}_n \simeq 
\cZ_{n-1}(\EE_{n-1}).
\end{equation}
\end{defn}

\begin{rema}  {\rm
When both $\EC_n$ and $\ED_n$ are closed, the physical meaning of $\mathrm{gw}$-equivalence is clear. It means that $\EC_n$ and $\ED_n$ are $\mathrm{gw}$-equivalent if and only if they can be connected by a gapped domain wall. 
}
\end{rema}

\begin{lemma}
Three defining properties of an equivalence relation (recall Lemma\,\ref{lem:3-prop-er}) hold. Namely, $\gwsim$ is a well-defined equivalence relation. 
\end{lemma}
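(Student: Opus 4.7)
The plan is to verify the three defining properties of an equivalence relation---reflexivity, symmetry, and transitivity---in turn, using the bulk--boundary dictionary and the folding argument from the preceding sections.

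For reflexivity, I would apply Lemma\,\ref{lem:dual-Z}(3) (the folding argument of Fig.\,\ref{TConj}): for any $\BF_n$ category $\EC_n$, folding along a codimension-one hyperplane gives $\EC_n\boxtimes\overline{\EC}_n\simeq\cZ_{n-1}[P_{n-1}(\EC_n)]$, so $\EE_{n-1}:=P_{n-1}(\EC_n)$ witnesses $\EC_n\gwsim\EC_n$. For symmetry, starting from $\EC_n\boxtimes\overline{\ED}_n\simeq\cZ_{n-1}(\EE_{n-1})$, I would dualize both sides, use that $\overline{\,\cdot\,}$ distributes over $\boxtimes$ and is involutive, and invoke Lemma\,\ref{lem:dual-Z}(2) to obtain $\ED_n\boxtimes\overline{\EC}_n\simeq\cZ_{n-1}(\overline{\EE}_{n-1})$, whence $\overline{\EE}_{n-1}$ witnesses $\ED_n\gwsim\EC_n$.

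For transitivity, suppose $\EC_n\gwsim\ED_n$ and $\ED_n\gwsim\EF_n$ with witnesses $\EE_{n-1}$ and $\EE'_{n-1}$. The idea is to unfold each witness into a gapped domain wall---$\EE_{n-1}$ into a wall between $\EC_n$ and $\ED_n$, and $\EE'_{n-1}$ into a wall between $\ED_n$ and $\EF_n$---and then compose these walls by inserting a thin strip of $\ED_n$ between them and collapsing the strip. The resulting composite wall $\EE''_{n-1}$, written schematically as $\EE_{n-1}\boxtimes_{P_{n-1}(\ED_n)}\EE'_{n-1}$ via the gluing tensor product of Remark\,\ref{rema:general-boxtimes}, separates $\EC_n$ from $\EF_n$; refolding and applying Lemma\,\ref{lemma:unique-bulk} would then yield $\cZ_{n-1}(\EE''_{n-1})\simeq\EC_n\boxtimes\overline{\EF}_n$, proving $\EC_n\gwsim\EF_n$.

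As a sanity check, tensoring the two defining relations and invoking both multiplicativity of $\cZ_{n-1}$ (equation (\ref{eq:Z-boxtimes})) and reflexivity yields
\[
\EC_n\boxtimes\overline{\EF}_n\boxtimes\cZ_{n-1}[P_{n-1}(\ED_n)]\simeq\cZ_{n-1}(\EE_{n-1}\boxtimes\EE'_{n-1}),
\]
which is consistent with, but weaker than, the desired statement: the parasitic factor $\cZ_{n-1}[P_{n-1}(\ED_n)]$ cannot be eliminated by formal cancellation, since the monoid $\cM^n$ is not a group under $\boxtimes$. The main obstacle is therefore the rigorous construction of the composite wall $\EE''_{n-1}$ and the verification that its \bulk equals $\EC_n\boxtimes\overline{\EF}_n$; this step requires the gluing tensor product together with the boundary--bulk dictionary of Section\,\ref{sec:boundary-bulk-relation} rather than purely algebraic manipulation in $\cM^n$.
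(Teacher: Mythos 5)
The paper states this lemma without proof, so there is no official argument to compare against; judged on its own terms, your reconstruction is sound for \emph{closed} $\BF_n$ categories, which is the only setting in which the paper subsequently uses $\gwsim$ (Lemma\,\ref{lemma:gw-implies-w} and Proposition\,\ref{thm:w=gw}). Your symmetry step is clean. Your reflexivity and transitivity steps, however, silently assume closedness, and this assumption is not removable: Lemma\,\ref{lem:dual-Z}(3) is stated only for closed $\ED_n$, and for an anomalous $\EC_n$ reflexivity genuinely fails. Concretely, for $\EC_2^{F\Zb_2}$ of Example\,\ref{C2FZ2} one has $\cZ_2(\EC_2^{F\Zb_2}\boxtimes\overline{\EC_2^{F\Zb_2}})\simeq\EC_3^{\Zb_2}\boxtimes\overline{\EC_3^{\Zb_2}}\neq\one_3$, so $\EC_2^{F\Zb_2}\boxtimes\overline{\EC_2^{F\Zb_2}}$ is not closed, hence cannot equal $\cZ_1(\EE_1)$ for any $\EE_1$ (since $\cZ_2\circ\cZ_1$ is trivial), and $\EC_2^{F\Zb_2}\not\gwsim\EC_2^{F\Zb_2}$. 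You should therefore state explicitly that the lemma, and your proof of it, concern $\gwsim$ restricted to closed $\BF_n$ categories.

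On transitivity your diagnosis is exactly right: the purely algebraic manipulation only yields $\EC_n\boxtimes\overline{\EF}_n\boxtimes\cZ_{n-1}[P_{n-1}(\ED_n)]\simeq\cZ_{n-1}(\EE_{n-1}\boxtimes\EE'_{n-1})$, i.e.\ $\EC_n\boxtimes\overline{\EF}_n\rwsim\one_n$, and upgrading this to exactness is precisely the content of Lemma\,\ref{lem:w-implies-exact}. Your direct wall-composition argument (unfold each witness into a gapped wall, insert a strip of $\ED_n$, view from far away, refold, and invoke Lemma\,\ref{lemma:unique-bulk}) is the same physics as the unfold-and-condense proof of that lemma, and has the advantage of not forward-referencing a result the paper establishes only afterwards. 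Two minor points: the glue in your composite wall should be the $n$-dimensional strip itself, i.e.\ $\EE_{n-1}\boxtimes_{\ED_n}\EE'_{n-1}$ in the notation of Remark\,\ref{rema:general-boxtimes}, rather than $\boxtimes_{P_{n-1}(\ED_n)}$; and your symmetry step tacitly uses $\overline{\EC_n\boxtimes\ED_n}\simeq\overline{\EC}_n\boxtimes\overline{\ED}_n$ and $\overline{\overline{\EC}_n}\simeq\EC_n$, which are immediate from the time-reversal definition in Appendix\,\ref{path} but are not among the properties listed in Lemma\,\ref{lem:dual-Z} and deserve a sentence of justification.
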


Proposition\,\ref{prop:gw-w} implies the following result. 
\begin{lemma} \label{lemma:gw-implies-w}
If $\EC_n$ (or $\ED_n$) is closed, then $\EC_n \gwsim \ED_n$ implies that $\ED_n$ (or $\EC_n$) is closed and $\EC_n \rwsim \ED_n$. 
\end{lemma}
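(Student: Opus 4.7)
The plan is to adapt the argument of Proposition \ref{prop:gw-w} directly to the $\gwsim$ setup. Given $\EC_n$ closed together with $\EC_n \gwsim \ED_n$, i.e.\ $\EC_n \boxtimes \overline{\ED}_n \simeq \cZ_{n-1}(\EE_{n-1})$ for some $(n-1)$-dimensional BF category $\EE_{n-1}$, I will first establish that $\ED_n$ is closed, and then upgrade the $\gwsim$-relation to a Witt equivalence $\EC_n \rwsim \ED_n$ by folding $\ED_n$ against itself.

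For closedness of $\ED_n$, the step is to apply $\cZ_n$ to both sides of the defining relation. The multiplicativity of $\cZ_n$ with respect to $\boxtimes$ (equation~(\ref{eq:Z-boxtimes})) rewrites the left-hand side as $\cZ_n(\EC_n) \boxtimes \cZ_n(\overline{\ED}_n)$, while the right-hand side collapses to $\one_{n+1}$ by Corollary~\ref{cor:Z2=0}. Using closedness of $\EC_n$ together with Lemma~\ref{lem:dual-Z}(2), which gives $\cZ_n(\overline{\ED}_n) \simeq \overline{\cZ_n(\ED_n)}$, I then read off $\overline{\cZ_n(\ED_n)} \simeq \one_{n+1}$, and hence $\cZ_n(\ED_n) \simeq \one_{n+1}$. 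The argument in the symmetric case (where $\ED_n$, not $\EC_n$, is hypothesized closed) is identical after swapping roles.

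For the Witt equivalence, the plan is to multiply the $\gwsim$-relation by $\ED_n$, obtaining $\EC_n \boxtimes (\ED_n \boxtimes \overline{\ED}_n) \simeq \ED_n \boxtimes \cZ_{n-1}(\EE_{n-1})$. Because $\ED_n$ is now known to be closed, Lemma~\ref{lem:dual-Z}(3) applies and replaces $\ED_n \boxtimes \overline{\ED}_n$ by $\cZ_{n-1}(P_{n-1}(\ED_n))$. Substituting produces
\begin{equation*}
\EC_n \boxtimes \cZ_{n-1}(P_{n-1}(\ED_n)) \simeq \ED_n \boxtimes \cZ_{n-1}(\EE_{n-1}),
\end{equation*}
and since $P_{n-1}(\ED_n)$ and $\EE_{n-1}$ are both $(n-1)$-dimensional BF categories (with $P_{n-1}$ acting as the identity on $(n-1)$-dimensional input, so $\cZ_{n-1}(\EE_{n-1}) \simeq \cZ_{n-1}(P_{n-1}(\EE_{n-1}))$), this display is exactly the witness required by Definition~\ref{equvCC} with $k=n-1$, establishing $\EC_n \rwsim \ED_n$.

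No serious obstacle arises: all key ingredients are already available, namely the double-center triviality of Corollary~\ref{cor:Z2=0}, the compatibility of $\cZ$ with $\boxtimes$ and with duals from Lemma~\ref{lem:dual-Z}(2), and the folding identity of Lemma~\ref{lem:dual-Z}(3). The only mild subtlety is recognizing that the $(n-1)$-dimensional data appearing in the definition of $\gwsim$ plugs directly into the Witt equivalence $\overset{n-1}{\sim}$, without any need to lift to $n$-dimensional categories as in the proof of Proposition~\ref{prop:gw-w}; this is why the conclusion strengthens from $\overset{n}{\sim}$ to $\rwsim$.
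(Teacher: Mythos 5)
Your proof is correct and follows essentially the same route as the paper, which disposes of this lemma in one line by citing Proposition~\ref{prop:gw-w}; your argument is just that proposition's proof rerun with the $(n-1)$-dimensional witness $\EE_{n-1}$ of the $\gwsim$-relation in place of $P_{n-1}(\EE_n)$. If anything, your version is slightly more careful, since it directly produces the $(n-1)$-dimensional witnesses $P_{n-1}(\ED_n)$ and $\EE_{n-1}$ required by Definition~\ref{equvCC} for $\rwsim$, rather than passing through the proposition's conclusion $\EC_n \overset{n}{\sim} \ED_n$ and the implication $\overset{n}{\sim}\Rightarrow\overset{n-1}{\sim}$.
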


\begin{figure}[t] 
$$
 \begin{picture}(120, 100)
   \put(0, 0){\scalebox{1.5}{\includegraphics{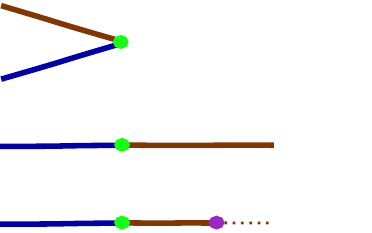}}}
   \put(0, 0){
     \setlength{\unitlength}{.75pt}\put(0,0){
     \put(25, 130)  {$\cZ_n(\overline{\ED}_n)$ } 
     \put(110, 57)     { $\cZ_n(\ED_n)$ } 
     \put(60, 60)      { $ \EE_n $}
     \put(75, 108)   { $ \EE_n $}
     \put(60, 15)     { $ \EE_n $}
     \put(116,15)    { $\ED_n$ }
     \put(20, 83)     { $\EC_{n+1}$ }
     \put(0,57)      { $\EC_{n+1}$ } 
     \put(0, 12)   { $\EC_{n+1}$ }     
     \put(157, 2)  { $\one_{n+1}$ }
     \put(70, -10)     { $\cZ_n(\ED_n)$ } 
     }\setlength{\unitlength}{1pt}}
  \end{picture}
$$  
\caption{{\small The three steps in the proof of Lemma\,\ref{lem:w-implies-exact}, i.e. $\EC_n \rwsim \one_n$ iff $\EC_n$ is exact.}}
  \label{fig:w=gw}
\end{figure}

Conversely, we will show that $\rwsim$ also implies $\gwsim$. We will start first prove an important lemma. 
\begin{lemma}  \label{lem:w-implies-exact}
$\EC_{n+1} \rwsim \one_{n+1}$ if and only if $\EC_{n+1}$ is exact. 
\end{lemma}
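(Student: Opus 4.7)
The plan is to split the biconditional into two directions and handle each in turn, using the definition of $\rwsim$ (which, for $\BF_{n+1}$-categories, is the relation $\overset{n}{\sim}$) together with the uniqueness of the \bulk (Lemma\,\ref{lemma:unique-bulk}) and the gapped-boundary interpretation of $\cZ_n(-)$. Unpacking the definition, $\EC_{n+1} \rwsim \one_{n+1}$ means that there exist $n$-dimensional BF categories $\EF_n$ and $\EF_n'$ with $\EC_{n+1} \boxtimes \cZ_n(\EF_n) \simeq \cZ_n(\EF_n')$ (noting $P_n(\EF_n)=\EF_n$ tautologically).

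The easy direction is $(\Leftarrow)$. If $\EC_{n+1}$ is exact, then $\EC_{n+1} \simeq \cZ_n(\EG_n)$ for some $\BF_n$-category $\EG_n$. I would simply take $\EF_n := \one_n$ and $\EF_n' := \EG_n$ as the witnesses: both sides of the Witt condition reduce to $\cZ_n(\EG_n) \simeq \EC_{n+1}$, using $\cZ_n(\one_n)\simeq \one_{n+1}$ and the unit property of $\boxtimes$.

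The nontrivial direction is $(\Rightarrow)$, and here the plan is geometric. Starting from $\EC_{n+1} \boxtimes \cZ_n(\EF_n) \simeq \cZ_n(\EF_n')$, I would realize $\cZ_n(\EF_n')$ on a half-space with its canonical gapped boundary $\EF_n'$, and reinterpret the bulk as the stacked pair of layers $\EC_{n+1}$ and $\cZ_n(\EF_n)$. Then I would deform the $\cZ_n(\EF_n)$-layer so that it occupies only a slab parallel to the boundary and terminates on a second, interior gapped wall $\EF_n$ (which exists because $\EF_n$ is the canonical gapped boundary of $\cZ_n(\EF_n)$). After this deformation, a thin region occupied only by $\EC_{n+1}$ sits between the outer wall $\EF_n'$ and the inner wall $\EF_n$. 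Squeezing that slab to zero width fuses the two walls along a region of $\EC_{n+1}$ into a single codimension-one gapped defect separating $\EC_{n+1}$ from the vacuum; this defect is, by definition, a gapped boundary $\EG_n$ of $\EC_{n+1}$. By Lemma\,\ref{lemma:unique-bulk}, $\cZ_n(\EG_n)\simeq \EC_{n+1}$, so $\EC_{n+1}$ is exact. This is precisely the three-panel picture of Fig.\,\ref{fig:w=gw}.

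The main obstacle is justifying the squeeze step, i.e.\ that fusing the two parallel gapped walls $\EF_n'$ and $\EF_n$ across a thin slab of bulk $\EC_{n+1}$ yields a well-defined gapped boundary of $\EC_{n+1}$ rather than something potentially gapless or ill-defined. This is a statement about composition of complementary gapped codimension-one defects along a common bulk, which I would justify by invoking the physical assumption made earlier (stated just before Section\,\ref{cfun}) that a gapped bulk admits condensations realizing trivial phases on either side, together with the fact that both walls in question are already gapped and topological. With the squeeze justified, the uniqueness of the \bulk (Lemma\,\ref{lemma:unique-bulk}) does the final identification $\EC_{n+1} \simeq \cZ_n(\EG_n)$ in one line.
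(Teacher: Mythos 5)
Your strategy coincides with the paper's. The forward direction is the same trivial substitution ($\EF_n=\one_n$, $\EF_n'=\EG_n$), and the reverse direction is the geometric argument of Fig.~\ref{fig:w=gw}: exploit the gapped boundary of the auxiliary factor $\cZ_n(\EF_n)$ to terminate it, fuse the resulting pair of parallel gapped walls into a single gapped wall between $\EC_{n+1}$ and the vacuum, and read off exactness from the definition of the \bulk. The paper performs the move after an unfolding step, placing $\EC_{n+1}$ and $\cZ_n(\ED_n)$ side by side joined by the domain wall $\EE_n$ and then condensing $\cZ_n(\ED_n)$ to the trivial phase so that only a narrow band of $\cZ_n(\ED_n)$ bounded by $\EE_n$ and $\ED_n$ survives, yielding the boundary $\EE_n\boxtimes_{\cZ_n(\ED_n)}\ED_n$; your stacked version is the folded image of the same construction, and the condensation assumption you cite (stated just before Section~\ref{cfun}) is exactly the one the paper invokes at that step.

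There is one bookkeeping slip in your reverse direction. In the stacked geometry the thin slab between the outer wall $\EF_n'$ and the interior wall $\EF_n$ is filled by the full double layer $\EC_{n+1}\boxtimes\cZ_n(\EF_n)\simeq\cZ_n(\EF_n')$, not by $\EC_{n+1}$ alone: $\EF_n'$ is by construction a boundary of the double layer and cannot abut a region of pure $\EC_{n+1}$, while the region carrying only $\EC_{n+1}$ lies on the far side of the interior wall $\EF_n$, deep in the bulk. Consequently the fusion is taken over $\cZ_n(\EF_n')$ rather than over $\EC_{n+1}$, and the resulting gapped boundary is of the type ``($\EF_n$ extended trivially through the $\EC_{n+1}$ layer) $\boxtimes_{\cZ_n(\EF_n')}\EF_n'$'' (the stacked counterpart of the paper's $\EE_n\boxtimes_{\cZ_n(\ED_n)}\ED_n$). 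The conclusion is unaffected, since after squeezing the fused wall still separates the pure-$\EC_{n+1}$ region from the vacuum and is therefore a gapped boundary of $\EC_{n+1}$; but the sentence identifying the slab's contents should be corrected.
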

\begin{proof}
If $\EC_{n+1}$ is exact, it is obvious that $\EC_{n+1} \rwsim \one_{n+1}$. Conversely, by definition, $\EC_{n+1} \rwsim \one_{n+1}$ means that there are $\ED_n$ and $\EE_n$ such that 
$$
\EC_{n+1} \boxtimes \cZ_n(\overline{\ED}_n) \simeq \cZ_n(\EE_n). 
$$ 
It implies $\EC_{n+1}$ is closed. Moreover, its physical meaning is that the $(n+1)$-dimensional topological order $\cZ_n(\EE_n)$, as the \bulk of an $n$-dimensional boundary $\EE_n$, can be factorized as a double-layered system $\EC_{n+1} \boxtimes \cZ_n(\ED_n)$ (see Fig.\,\ref{fig:w=gw}). By unfolding this double-layered system along its $\EE_m$-boundary, we obtain two $(n+1)$-dimensional topological orders $\EC_{n+1}$ and $\cZ_n(\ED_n)$, which are connected by a gapped $n$-dimensional domain wall $\EE_n$. Since the topological order $\cZ_n(\ED_n)$ itself allows a gapped boundary, we are able to condense $\cZ_n(\ED_n)$ to the trivial phase $\one_{n+1}$. This condensation creates a gapped boundary given by $\ED_n$ and a narrow band bounded by $\EE_n$ and $\ED_n$. This narrow band connects the topological order $\EC_n$ to the trivial order $\one_{n+1}$ (see Fig.\,\ref{fig:w=gw}). Therefore, this narrow band should be viewed as an $n$-dimensional gapped boundary, which is of type $\EE_n \boxtimes_{\cZ_n(\ED_n)} \ED_n$, of an $(n+1)$-dimensional bulk phase $\EC_{n+1}$. Therefore, we must have
$\EC_{n+1} \simeq \cZ_n(\EE_n \boxtimes_{\cZ_n(\ED_n)} \ED_n)$. 
\end{proof}

\begin{prop}  \label{thm:w=gw}
For two closed BF categories $\EC_n$ and $\ED_n$, $\EC_n \rwsim \ED_n$ if and only if $\EC_n \gwsim \ED_n$, or equivalently, if and only if they are connected by an $(n-1)$-dimensional gapped domain wall. 
\end{prop}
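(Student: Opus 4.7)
The plan is to combine Lemma~\ref{lemma:gw-implies-w} (which already gives $\gwsim \Rightarrow \rwsim$) with the key structural result Lemma~\ref{lem:w-implies-exact} (that Witt-triviality is the same as exactness, applied in dimension $n$ instead of $n+1$). The whole proposition reduces to transporting the proof of $\rwsim \Rightarrow \gwsim$ through the group structure of $\cA_\rw^n$ established in Proposition~\ref{prop:Aw-group}.

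Concretely, suppose $\EC_n$ and $\ED_n$ are closed with $\EC_n \rwsim \ED_n$. I would first multiply both sides by $\overline{\ED}_n$ using Lemma~\ref{lem:sim}(1) to get $\EC_n \boxtimes \overline{\ED}_n \rwsim \ED_n \boxtimes \overline{\ED}_n$. Since $\ED_n$ is closed, Lemma~\ref{lem:dual-Z}(3) tells us that $\ED_n \boxtimes \overline{\ED}_n$ is exact, and in particular $\ED_n \boxtimes \overline{\ED}_n \simeq \cZ_{n-1}(P_{n-1}(\ED_n))$; by Lemma~\ref{lem:w-implies-exact} this forces $\ED_n \boxtimes \overline{\ED}_n \rwsim \one_n$. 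Transitivity of $\rwsim$ (Lemma~\ref{lem:3-prop-er}) then yields $\EC_n \boxtimes \overline{\ED}_n \rwsim \one_n$. Applying Lemma~\ref{lem:w-implies-exact} a second time (in the other direction), we conclude that $\EC_n \boxtimes \overline{\ED}_n$ is exact, i.e. there exists an $(n-1)$-dimensional BF category $\EE_{n-1}$ with $\EC_n \boxtimes \overline{\ED}_n \simeq \cZ_{n-1}(\EE_{n-1})$. This is exactly the defining condition for $\EC_n \gwsim \ED_n$.

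The converse direction $\gwsim \Rightarrow \rwsim$ is Lemma~\ref{lemma:gw-implies-w}, which also guarantees that the second category is closed once we know the first is. The final physical statement---that $\EC_n$ and $\ED_n$ are $\rwsim$-equivalent iff they can be connected by an $(n-1)$-dimensional gapped domain wall---is just the physical interpretation of $\gwsim$ recorded in the Remark preceding Lemma~\ref{lemma:gw-implies-w}: a gapped wall between $\EC_n$ and $\ED_n$ is, after folding, exactly a gapped boundary of $\EC_n \boxtimes \overline{\ED}_n$, hence an exactness witness $\EE_{n-1}$.

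I expect that the main obstacle is not in this proof itself (which is a short algebraic manipulation in the monoid $\cM^n$) but is already absorbed into the proof of Lemma~\ref{lem:w-implies-exact}: there one has to unfold a double-layered bulk $\EC_{n+1} \boxtimes \cZ_n(\ED_n)$ along a common gapped boundary and then condense the $\cZ_n(\ED_n)$ factor down to $\one_{n+1}$, producing a gapped boundary of $\EC_{n+1}$ of the form $\EE_n \boxtimes_{\cZ_n(\ED_n)} \ED_n$ via the relative tensor product of Remark~\ref{rema:general-boxtimes}. Everything else here is bookkeeping with the associative/commutative monoid structure of $\boxtimes$, the behavior of duality under $\cZ$ (Lemma~\ref{lem:dual-Z}), and the fact that $\rwsim$ descends to a group structure.
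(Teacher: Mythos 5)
Your proof is correct and follows essentially the same route as the paper: both directions reduce to showing $\EC_n \boxtimes \overline{\ED}_n \rwsim \one_n$ and then invoking Lemma~\ref{lem:w-implies-exact} to produce the exactness witness $\EE_{n-1}$, with the converse handled by Lemma~\ref{lemma:gw-implies-w}. The only cosmetic difference is that you derive $\EC_n \boxtimes \overline{\ED}_n \rwsim \one_n$ by explicitly multiplying by $\overline{\ED}_n$ and using transitivity, whereas the paper invokes the group structure of $\cA_\rw^n$ (Proposition~\ref{prop:Aw-group}) directly, which encodes the same computation.
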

\begin{proof}
By Lemma\,\ref{lemma:gw-implies-w}, it is enough to show that $\rwsim$ implies $\gwsim$. By Proposition\,\ref{prop:Aw-group}, $\EC_n \rwsim \ED_n$ implies that 
$[\EC_n]_\rw \boxtimes [\overline{\ED}_n]_\rw = [\one_n]_\rw$. By Lemma\,\ref{lem:w-implies-exact}, there exists an $(n-1)$-dimensional BF category $\EE_{n-1}$ such that
$\EC_n \boxtimes \overline{\ED}_n \simeq \cZ_{n-1}(\EE_{n-1})$, which means $\EC_n \gwsim \ED_n$.  
\end{proof}

\void{
\begin{lemma}
A subset $\cA^\text{closed}_n$ of $\cA^n$, which contains all the $n$-equivalence
classes of all closed $\BF_n$ categories is an Abelian group
under the stacking $\boxtimes$ operation.
\end{lemma}
We note that $\cZ_n$ is compatible with the equivalence relation that defines $\cA^n$:
\begin{lemma} 
If $\EC_n \sim \ED_n$, then $\cZ_n(\EC_n) \simeq \cZ_n(\ED_n)$.
\end{lemma}\noindent
Thus, for exact BF categories, we also have:
\begin{lemma}
Let $\cA^\text{exact}_n$ be the equivalence classes of exact BF categories.
$\cA^\text{exact}_n$ is an Abelian group
under the stacking $\boxtimes$ operation.
\end{lemma}
}

\void{
\begin{defn} 
Two $n$-dimensional BF categories, $\EC_n$ and $\EC'_n$, are Witt equivalent
$\EC_n \rwsim \EC'_n$ if there exist $(n-1)$-dimensional BF
categories $\ED_{n-1}$ and $\ED_{n-1}'$ such that $ \EC_n\boxtimes
\cZ(\ED_{n-1}) =\EC_n'\boxtimes \cZ(\ED_{n-1}') $.  
\end{defn} \noindent 
The above is a generalization of a definition introduced in \Ref{dmno}
for $n=3$.  The equivalence class of $\sim$ is smaller than that of
$\rwsim$.  For example the 2+1D $Z_2$ topological order (\ie the
exact BF$_3$ category $\EC^{Z_2}_3$ in Example \ref{C3Z2}) is equivalent to the
trivial  BF$_3$ category under $\rwsim$, but not  equivalent to the
trivial  BF$_3$ category under $\sim$.  On the other hand, the 2+1D double
semion topological order (\ie the exact BF$_3$ category $\EC^{Z_2ds}_3$ in
Example \ref{C3Z2ds}) is equivalent to the trivial  BF$_3$ category under both
$\rwsim$ and  $\sim$.
}

\section{The cochain complex of the BF
categories}

\label{mcc}



The BF categories in the same dimension form a commutative monoid, which is
denoted by $\cM^n$. The bulk of a boundary defines a homomorphism $\cZ_n: \cM^n \to \cM^{n+1}$ between commutative monoids.  These homomorphisms $\cZ_n$ for all non-negative integers $n$ satisfy the property that $\cZ_{n+1}(\cZ_n(\cM^n))\simeq \one_{n+2} \in \cM^{n+2}$. In other words, $\cZ_n$ is a differential operator in a cochain complex. Therefore, we obtain a commutative-monoid-valued cochain complex 
\begin{align} \label{diag:cochain}
\cdots \overset{\cZ_{n-1}}{\longrightarrow}
  \cM^n
\overset{\cZ_n}{\longrightarrow}
  \cM^{n+1}
\overset{\cZ_{n+1}}{\longrightarrow}
  \cM^{n+2}
\overset{\cZ_{n+2}}{\longrightarrow} \cdots .
\end{align}
Because a non-trivial BF category does not have an inverse in $\cM^n$, $\cM^n$
is not an Abelian group in general. Therefore, (\ref{diag:cochain}) is not a
usual cochain complex (which is valued in Abelian groups). 

\medskip
However, the sets $\cA^n$ of the equivalence classes of BF$_n$ categories in
different dimensions do form cochain complex because $\cA^n$ are Abelian groups:
\begin{thm}
The sets $\cA^n$ of the equivalence classes of $\BF$ categories for all $n$, together with the group homomorphisms $\cZ_n$, form a cochain complex:
\begin{align}
\cdots \overset{\cZ_{n-1}}{\longrightarrow}
  \cA^n
\overset{\cZ_n}{\longrightarrow}
  \cA^{n+1}
\overset{\cZ_{n+1}}{\longrightarrow}
  \cA^{n+2}
\overset{\cZ_{n+2}}{\longrightarrow} \cdots .
\end{align}
\end{thm}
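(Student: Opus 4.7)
The plan is to assemble three ingredients that have already been built up in the previous sections; there is essentially nothing new to prove beyond observing how they fit together. First, the Abelian group structure on each $\cA^n$ under the stacking $\boxtimes$ was established in the proposition immediately preceding this theorem (the proof that the quotient $\cM^n \to \cA^n$ of a commutative monoid by the quasi-equivalence relation inherits inverses from the closedness of $\ED_n \boxtimes \overline{\ED}_n$). Second, the earlier lemma showing $\EC_n \sim \EC_n'$ implies $\cZ_n(\EC_n) \simeq \cZ_n(\EC_n')$, together with the fact that $\cZ_n$ respects the stacking (equation \eqref{eq:Z-boxtimes}), guarantees that $\cZ_n$ descends to a well-defined group homomorphism $\cA^n \to \cA^{n+1}$.

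The only remaining condition for a cochain complex is the differential identity $\cZ_{n+1}\circ\cZ_n = 0$, where $0$ denotes the identity element $[\one_{n+2}]$ of the Abelian group $\cA^{n+2}$. For this I would invoke Corollary \ref{cor:Z2=0}, which asserts at the level of $\BF$ categories themselves (\emph{before} passing to quasi-equivalence classes) that $\cZ_{n+1}(\cZ_n(\EC_n)) \simeq \one_{n+2}$ for every $\BF_n$ category $\EC_n$. Applying the quotient homomorphism $\cM^{n+2} \twoheadrightarrow \cA^{n+2}$, which carries the monoidal unit $\one_{n+2}$ to the identity element $[\one_{n+2}]$ of $\cA^{n+2}$, gives $\cZ_{n+1}(\cZ_n([\EC_n])) = [\one_{n+2}]$ in $\cA^{n+2}$ for every class $[\EC_n]\in\cA^n$, which is exactly the cochain condition.

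There is no genuine obstacle at this stage, since all the physical/geometric work was already absorbed into Corollary \ref{cor:Z2=0}, which itself rests on Lemma \ref{lemma:unique-bulk} (uniqueness of the \bulk given a gapped boundary) and Lemma \ref{lemma:H-H'} (smooth deformation of lbH systems realizing the same boundary phase). The only subtlety worth flagging in the writeup is the compatibility of the quotient map with the differentials, i.e.\ the commutative square
\[
\begin{array}{ccc}
\cM^n & \overset{\cZ_n}{\longrightarrow} & \cM^{n+1} \\
\downarrow & & \downarrow \\
\cA^n & \overset{\cZ_n}{\longrightarrow} & \cA^{n+1},
\end{array}
\]
which is precisely the content of the lemma that $\sim$-equivalent categories have $\simeq$-equivalent bulks. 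Once this is noted, the cochain complex structure propagates from $\cM^\bullet$ to $\cA^\bullet$ automatically, the novelty being that on the quotient side one obtains an honest (Abelian-group-valued) cochain complex rather than merely a commutative-monoid-valued one.
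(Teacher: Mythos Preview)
Your proposal is correct and matches the paper's approach exactly: the paper does not give an explicit proof of this theorem, treating it as an immediate consequence of the previously established Proposition (that $\cA^n$ is an Abelian group and $\cZ_n: \cA^n \to \cA^{n+1}$ is a group homomorphism), together with Corollary~\ref{cor:Z2=0} ($\cZ_{n+1}\circ\cZ_n = \one_{n+2}$). Your writeup simply makes explicit the assembly of these ingredients, including the commutative square showing $\cZ_n$ descends to the quotient, which the paper had already drawn in Section~\ref{wgrp}.
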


\begin{rema} 
Since we don't yet have a precise mathematical definition of a BF category, above theorem should be understood as a physical theorem of the equivalence classes of topological orders. 
\end{rema}

We can define the $n$-th cohomology group as usual.
\begin{defn}
$\rH^n:=\text{ker}(\cZ_n)/\text{img}(\cZ_{n-1})$. 
\end{defn} 

\medskip
Similarly, we have the following result for $\cA_\rw^n$: 
\begin{thm}
The sets $\cA_\rw^n$ of the equivalence classes of $\BF$ categories for all $n$, together with the group homomorphisms $\cZ_n$, form a cochain complex:
\begin{align}
\cdots \overset{\cZ_{n-1}}{\longrightarrow}
  \cA_\rw^n
\overset{\cZ_n}{\longrightarrow}
  \cA_\rw^{n+1}
\overset{\cZ_{n+1}}{\longrightarrow}
  \cA_\rw^{n+2}
\overset{\cZ_{n+2}}{\longrightarrow} \cdots .
\end{align}
\end{thm}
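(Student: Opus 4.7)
The plan is to verify the two defining conditions of a cochain complex: that each $\cZ_n$ descends to a well-defined group homomorphism $\cA_\rw^n \to \cA_\rw^{n+1}$, and that the composite $\cZ_{n+1} \circ \cZ_n$ vanishes on $\cA_\rw^{n+2}$. The Abelian group structure on $\cA_\rw^n$ is already supplied by Proposition \ref{prop:Aw-group}, so the theorem is largely a packaging of previously established facts; only one genuine compatibility check is needed.

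First I would verify that $\cZ_n$ descends to Witt-equivalence classes. Suppose $\EC_n \rwsim \ED_n$; unpacking Witt equivalence (the case $k = n-1$ of Definition \ref{equvCC}), there exist $(n-1)$-dimensional BF categories $\EF_{n-1}, \EF_{n-1}'$ with
\[
\EC_n \boxtimes \cZ_{n-1}(\EF_{n-1}) \simeq \ED_n \boxtimes \cZ_{n-1}(\EF_{n-1}').
\]
I would then apply $\cZ_n$ to both sides, distribute using the monoid homomorphism identity (\ref{eq:Z-boxtimes}), and collapse each $\cZ_n(\cZ_{n-1}(\cdot))$ factor to $\one_{n+1}$ via Corollary \ref{cor:Z2=0}. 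What survives is $\cZ_n(\EC_n) \simeq \cZ_n(\ED_n)$, which in particular implies $\cZ_n(\EC_n) \rwsim \cZ_n(\ED_n)$. Hence the class $[\cZ_n(\EC_n)]_\rw$ depends only on $[\EC_n]_\rw$, and the homomorphism property then follows directly from (\ref{eq:Z-boxtimes}) applied on classes.

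Second, the cocycle relation in $\cA_\rw^{n+2}$ asks that $\cZ_{n+1}(\cZ_n(\EC_n))$ lie in the Witt class of $\one_{n+2}$ for every $\EC_n$. This is delivered, in a strictly stronger form, by Corollary \ref{cor:Z2=0}, which gives $\cZ_{n+1}(\cZ_n(\EC_n)) \simeq \one_{n+2}$ already at the level of isomorphism of BF categories, hence a fortiori at the level of Witt classes.

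The only substantive step is the compatibility check in the first paragraph above, and even there the crucial input—the identity $\cZ_n \circ \cZ_{n-1} \simeq \one$, which is the algebraic incarnation of ``the boundary of a boundary is empty''—has already been established via the uniqueness of the bulk (Lemma \ref{lemma:unique-bulk}) and its consequence Corollary \ref{cor:Z2=0}. The main obstacle, had those earlier results not been in place, would have been Lemma \ref{lemma:unique-bulk} itself (the physically intuitive but category-theoretically delicate assertion that a gapped boundary determines its bulk up to isomorphism). With that lemma in hand, the present theorem reduces to a clean bookkeeping exercise, and I would present it as such without further new input.
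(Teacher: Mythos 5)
Your proposal is correct and follows essentially the same route as the paper: the theorem is an immediate assembly of Proposition~\ref{prop:Aw-group} (which already asserts that $\cZ_n$ descends to a group homomorphism $\cA_\rw^n \to \cA_\rw^{n+1}$, via exactly the computation you spell out using (\ref{eq:Z-boxtimes}) and Corollary~\ref{cor:Z2=0}) together with Corollary~\ref{cor:Z2=0} for the cocycle condition. The only point worth adding is the paper's subsequent observation that the differential is in fact identically zero on Witt classes (since $\cZ_n(\EC_n)$ is always exact, hence Witt-trivial by Lemma~\ref{lem:w-implies-exact}), so that $\rH_\rw^n = \cA_\rw^n$; but that is a remark beyond the theorem itself, not a gap in your argument.
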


We can define the $n$-th cohomology group as usual.
\begin{defn}
$\rH_\rw^n:=\text{ker}(\cZ_n)/\text{img}(\cZ_{n-1})$. 
\end{defn} 
In fact $\cZ_n(\cA_\rw^n)=\one_{n+1}$ and $\rH_\rw^n=\cA_\rw^n$.  By
Theorem\,\ref{thm:w=gw}, two closed BF$_n$ categories $\EC_n$ and $\ED_n$
belong to the same class in $\rH_\rw^n$ iff the boundary between $\EC_n$ and
$\ED_n$ can be gapped. Then it is clear that the $n$-th cohomology group
$\rH_\rw^n$ classify the types of $(n-1)$ space-time dimensional gapless
boundaries.

When $n=3$, closed $\BF_3$ categories are unitary modular tensor categories
(UMTC), and exact $\BF_3$ categories are those monoidal centers of unitary
fusion categories. So the cohomology group $\rH_\rw^3$ is nothing but the Witt
group\cite{dmno} for UMTCs. It classifies the types
of $2$ space-time dimensional gapless boundaries. It is not surprising that the
Witt group was originally introduced to classify $2$-dimensional rational
conformal field theories\cite{dmno}.

\section{General examples of low dimensional \hBF{}  categories with only particle-like excitations}  
\label{GBFexample}

In this section, we will discuss some general examples of \hBF{} categories in
low dimensions. Since only defects of codimension 2 are detectable by braiding
with other excitations, in the cases that spatial dimension is not more than 2,
we can only detect particle-like excitations via braiding. So we will restrict
ourselves to only those particle-like excitations in a \hBF{} category and
ignore higher dimensional defects. For simplicity, we will also ignore all
those $0$-dimensional defect nested on non-trivial higher dimension defects,
such as the one depicted as the blue point in Fig.\ref{toric}. Higher
dimensional excitations or defects will be studied in
Section\,\ref{sec:math-def}.

\subsection{0+1D topological phases}
\label{0dTO}
A 0+1D topological phase is just a quantum mechanics system. It is given by a
finite dimensional Hilbert space $V$ equipped with the local operator algebra
$A=\mathrm{End}(V)$. The data $V$ is redundant and can be recovered from $A$.
Therefore, a 0+1D topological phase can be described by a category with a
single object $\ast$ and $\hom(\ast, \ast)=A$. Equipping $A$ with the operator
normal, we can turn it into a $C^\ast$-algebra. 

\subsection{Particles in 1+1D topological phases}
\label{1dTO}

In a 1+1D phase,  there is no braiding between topological particles. So this phase is characterized by the following data: 
(for a detailed discussion see \Ref{LWstrnet,RSW0777,Wang10}):\\
(1) An integer $N$ that describes the number of nontrivial types of
particle-like topological excitations.\\
(2) An one-to-one map $i \to i^*$, $i,i^*=0,1,\cdots ,N$
that satisfy $0=0^*$ and $(i^*)^*=i$.\\
(3) A rank-3 tensor $N^{ij}_k$ that describes dimension of the fusion
spaces of the topological excitations. Moreover, $N_{0i}^j=\delta_{ij} = N_{i0}^j$ and $N^{ij}_k$ satisfies an associativity property.\\
(4) A rank-10 tensor $F^{ijk,m\al\bt}_{l,n\ga\la}$ satisfies the pentagon identities. It describes the linear
relations between the fusion spaces of the topological excitations.\\

Using categorical language, above data amounts to a unitary fusion category $\EC$
of topological excitations (simple types or or composite types), with only finite many 
simple types $i, j, k \in I$ where $|I|=N+1$. The hom space $\hom_\EC (X, Y)$ is a 
finite dimensional Hilbert space for any two excitations $X$ and $Y$. 
Moreover, fusion of two simple excitations $i$ and $j$ give the tensor product $i\otimes j$. 
This fusion product $i\otimes j$ is completely determined by 
\begin{align}
N_{ij}^k = \dim \hom_\EC(i\otimes j, k).
\end{align}
In particular, $N_{0i}^j=\delta_{ij} = N_{i0}^j$ implies that $0 \otimes i = i
= i\otimes 0$. $0$ is the tensor unit $\one$ of $\EC$.  In particular,
$N_{00}^0=1$ means that the vacuum degeneracy of the vacuum is trivial. In more
categorical language, we have
\begin{equation} \label{eq:vacuum-degeneracy}
\dim \hom_\EC(\one, \one) = 1.
\end{equation}
These structures $(\EC, \otimes, 0)$, together with the rank-10 tensor
$F^{ijk,m\al\bt}_{l,n\ga\la}$ satisfying the pentagon identities, equip the
unitary category $\EC$ with a structure of a monoidal or tensor category. The
existence of anti-particle $i^*$ for all $i$ further implies that $\EC$ is also
rigid. Combining all of these results, we have shown that $\EC$ has a structure
of a unitary fusion category (UFC).  Notice that the hom space should be viewed
as instantons in time direction.

\smallskip
Among all UFC's, the most trivial one is the category $\hilb$ of finite
dimensional Hilbert spaces. By \Ref{KK1251}, any UFC $\EC$ can be realized as
the boundary excitations of a Levin-Wen type of lattice model with bulk
excitations given by the category $\cZ(\EC)$, which is the monoidal center of
$\EC$.  Mathematically, it is well-known that $\cZ(\EC)\simeq \hilb$ if and
only if $\EC \simeq \hilb$. According to our general theory of bulk-boundary
relation in Section\,\ref{sec:boundary-bulk-relation}, the only anomalous free
(or closed) $\BF_2$ category is $\hilb$.  It is also clear that the trivial
$1+1$-dimensional phase is also the bulk of a trivial $0+1$-dimensional phase.
Therefore, $\hilb$ is the only closed and exact $\BF_2$ category (which can be
composite  $\BF_2$ categories. See Appendix \ref{SCBF}).

\void{
\subsection{Particles in 0+1D topological phases}

A BF category $\EM$ in 0+1D can be defined by the boundary theory of a
anomaly-free 1+1D topological phase $\EC$. As a particle in the 1D bulk move
closer to the boundary, it fuses into the boundary and becomes a particle on
the boundary. This implies that there is an action $\otimes: \EC \times \EM \to
\EM$, i.e. for $i\in \EC$ and $m\in \EM$, $i\otimes m \in \EM$. Moreover, this
action is associative, i.e. $i\otimes (j\otimes m) = (i\otimes j) \otimes m$,
and $\one \otimes m =  m = m \otimes \one$. In other words, $\EM$ is a module
category over $\EC$ or a $\EC$-module. We now assume that $\EM$ is
indecomposable as a $\EC$-module. Since the only anomaly-free
$\BF_{1+1}$-category is $\hilb$ (see Section\,\ref{GBFexample}), $\EM$, as an
indecomposable $\hilb$-module, is nothing but $\hilb$. Notice that the elements
in the hom spaces in $\hilb$ label the instantons in time direction. 
}

\subsection{Particles in 2+1D topological phases}
\label{preM}

In a 2+1D topological phase, particles can fuse with each other and also braid
with each other.  We will now list the ingredients in these fusion and braiding
structures (for a more physical description of some of the following
properties, see Section \ref{uniprop}): 

1. There is a finite set $I$ of anyons. An anyon $i\in I$ corresponds to a
simple object in a category $\EC$. 
A generic object is a direct sum of simple objects, e.g. $i\oplus j \oplus k$ for $i,j,k\in I$, 
which corresponds to superposition of anyons. 

2. Between two generic objects $X$ and $Y$, there are fusion-splitting channels
which forms a vector space over $\Cb$: $\hom(X, Y)$. In particular, the vector
space $\hom(i, X)$ tells us how many ways the simple anyon $i$ can fuse into a
generic object $X$ and the vector space $\hom(X, i)$ tells us how many
splitting channels from $X$ to $i$. 

3. The unitarity of the anyon system is a physical requirement. It immediately
implies that the category $\EC$ has to be semisimple. 

4. The fusion of two objects $X$ and $Y$ gives arise to a tensor product $X
\otimes Y$ which must be associative and unital. The tensor unit $\one$ is given by the vacuum. 

5. The mutual statistics among anyons is given by the braiding $c_{X, Y}: X\otimes Y \to Y\otimes X$ for all $X, Y\in \EC$. This information of braiding is encoded in
the physically measurable linear map: $c_{X,Y}: \hom(i, X\otimes Y) \to \hom(i,
Y\otimes X)$. 

6. Anyons can be created or annihilated from the vacuum in pairs. In
particular, we need a dual object $X^\vee$ for each anyon $X \in \EC$, together
with morphisms $\ev_X: X^\vee \otimes X \to \one$, $\coev_X: \one \to X \otimes
X^\vee$ and their adjoints $\ev_X^\ast$ and $\coev_X^\ast$, satisfying the some
natural condition. This says that $\EC$ must be a rigid tensor category. 

7. Each anyon has spins. It amounts to an automormphism $\theta_X: X \to X$
satisfying some properties. This requires $\EC$ to be a ribbon category. 


In summary, above braiding-fusion structures of a system of anyons amount to a
{\it unitary premodular category}.  The unitary premodular categories are
BF$_3$ categories.  However, unitary premodular categories only represent a
subset of 3-dimensional BF$_3$ category.  The anomalous BF$_3$ category
$\EC_3^{sF\Zb_2}$ with only string-like topological excitations (see Example
\ref{C3sFZ2} and Section \ref{C3sFZ2long}) is not a unitary premodular
category. We will discuss those string-like excitations in
Section\,\ref{sec:math-def}.


In general, a unitary premodular category is anomalous (\ie not realizable by
2+1D qubit models).  We have the following result characterizing the anomalous
free unitary premodular categories. 
\begin{thm}
If a unitary premodular category $\EC$ is not anomalous, \ie if it can be
realized by a $2+1$D local Hamiltonian qubit system, then $\EC$ must be
modular. 
\end{thm}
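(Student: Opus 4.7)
The plan is to combine the folding identity of Lemma~\ref{lem:dual-Z}(3) with the algebraic characterization of modularity quoted in the remark following Proposition~\ref{prop:gw-w}: a premodular $\EC$ is modular if and only if its Drinfeld center satisfies $Z(\EC) \simeq \EC \boxtimes \overline{\EC}$. The key physical input is that folding a closed $2{+}1$D phase along a codimension-$1$ hyperplane presents $\EC \boxtimes \overline{\EC}$ as the bulk of an explicit gapped $1{+}1$D boundary, and this boundary is nothing but $\EC$ viewed as a mere unitary fusion category, with its braiding forgotten.

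First, since $\EC$ is closed by hypothesis, Lemma~\ref{lem:dual-Z}(3) with $n=3$ gives
\[
\cZ_2\bigl(P_2(\EC)\bigr) \simeq \EC \boxtimes \overline{\EC}
\]
as $\BF_3$ categories. Physically, $P_2(\EC)$ is the $1{+}1$D effective theory living on the fold-line of the doubled phase; its excitations are in natural correspondence (with their fusion, but with no residual braiding) with the anyons of $\EC$. Thus $P_2(\EC) \simeq \EC$ as a unitary fusion category, and under the identification of $\cZ_2$ with the algebraic Drinfeld center of a UFC (Remark~\ref{rema:bulk=center}) the displayed equation becomes $Z(\EC) \simeq \EC \boxtimes \overline{\EC}$. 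The algebraic characterization then forces $\EC$ to be modular.

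One can package the second step differently via a M\"uger-center computation: the M\"uger center of a Deligne product factorizes as $(\EC \boxtimes \overline{\EC})' \simeq \EC' \boxtimes \overline{\EC'}$, while the Drinfeld center of any unitary fusion category is modular (M\"uger). Hence the right-hand side of the displayed equation is modular, forcing $\EC' \simeq \hilb$, i.e.\ $\EC$ modular.

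The main obstacle is the identification $P_2(\EC) \simeq \EC$ as a fusion category together with the equivalence of the physical bulk operator $\cZ_2$ with the algebraic monoidal center; this is a genuine input of the boundary-bulk dictionary rather than something one derives along the way, and it is precisely the $n{=}2$ content of Remark~\ref{rema:bulk=center}. A more model-independent route is to invoke Conjecture~\ref{msta} directly: a non-modular premodular category contains a nontrivial transparent object, which is undetectable by braiding and therefore signals an H-type gravitational anomaly, contradicting closedness. The folding argument above is preferable because it reduces the physical statement to a precisely formulated algebraic theorem already quoted in the text.
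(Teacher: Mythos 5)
Your main argument is correct at the level of rigor of the paper, but it is not the route the paper takes: the paper's proof is exactly the ``model-independent route'' you dismiss in your last paragraph. The paper argues directly that in an anomaly-free $2{+}1$D system every anyon must be detectable by braiding, so a transparent object $X$ (one with $c_{Y,X}\circ c_{X,Y}=\id_{X\otimes Y}$ for all $Y$) is indistinguishable from, hence equal to, the vacuum $\one$ --- i.e.\ the M\"uger center is trivial and $\EC$ is modular. That is a two-line application of the detectability principle (Conjecture~\ref{msta} in spirit) and needs no boundary--bulk machinery. Your folding argument instead reduces the claim to a precisely stated algebraic theorem (M\"uger's characterization $Z(\EC)\simeq\EC\boxtimes\overline{\EC}$ iff $\EC$ modular, which the paper itself quotes), at the price of leaning on the two weakest links of the dictionary: the identification $P_2(\EC)\simeq\EC$ as a fusion category and the equivalence of the physical \bulk\ $\cZ_2$ with the monoidal center, both of which the paper only asserts in Remark~\ref{rema:bulk=center} and defers to \Ref{kong-wen-zheng}. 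Two remarks that would tighten your version: (i) you do not actually need $P_2(\EC)\simeq\EC$ --- Lemma~\ref{lem:dual-Z}(3) already says $\EC\boxtimes\overline{\EC}$ is \emph{exact}, hence the monoidal center of \emph{some} UFC, hence modular, and your M\"uger-center factorization $(\EC\boxtimes\overline{\EC})'\simeq\EC'\boxtimes\overline{\EC'}$ then forces $\EC'\simeq\hilb$ without ever identifying the fold-line category; (ii) when invoking ``$Z(\EC)\simeq\EC\boxtimes\overline{\EC}$ implies modular'' for an abstract braided equivalence (rather than for the canonical functor), your second paragraph is the right way to close that gap, since $Z(\EC)$ is always nondegenerate. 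So both approaches are sound; the paper's buys brevity and independence from the center conjecture, yours buys a clean reduction to established tensor-category results.
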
\noindent
\pf
If a $2+1$ anyon system can be defined in $2+1$-dimension, all its particles
should be detectable by the braiding among themselves. As a consequence, if an
anyon $X$ is such that its mutual braidings with all other anyons are trivial,
i.e. $c_{X,Y} \circ c_{Y,X} =\id_{X\otimes Y}$ for all $Y\in \EC$, then $X$
must be uniquely fixed by this property. On the other hand, we know that the
mutual braidings between the vacuum $\one$ and any other object $Y$ is trivial.
Therefore, we must have $X=\one$. In other words, $\EC$ is modular. 
\epf

\begin{rema} 
We note that the UMTC's are closed \hBF{3} categories (the path integral can be
defined on mapping tori -- fiber bundles over $S^1$). They are also closed \lBF{3} categories (the path integral can be defined for
any oriented space-time topologies). This fact is very subtle and will be discussed in Sections \ref{WWmdl} and \ref{WWmdl1}. 
\end{rema}\noindent


\section{A mathematical definition of a BF$_{n+1}$-category}  \label{sec:math-def}

We have described the examples and the general structures of a BF category
without giving it a precise mathematical definition.  In this section, we will
try to outline a mathematical definition of a BF category.  Since our
understanding of topological orders in high dimensions at the current stage is
very limited, many assumptions and conjectures are imposed in order to proceed.
The mathematical definition we obtained is conjectural and far from being
complete. In particular, coherence morphisms in higher categories and their
properties are ignored. As we will show, many physically interesting and
important questions, which can be studied or formulated in this framework, are
quite irrelevant to the structures we have ignored. Indeed, by focusing on the
fusion and braiding structures alone, we are leading to a very rich theory. The
precise definition of a BF category is not important to us at this stage. But
many important questions and conjectures formulated in our framework will serve
as a blueprint for future studies.


\subsection{Defects of all dimensions and higher categories}  \label{sec:defect-ncat}

Topological excitations can also be viewed as defects in a topological phase. In this section, we will use the term ``excitation" and ``defect" interchangeably. We will use ``a domain wall" to refer to a 1-codimensional defect, or more generally, a defect of 1-lower dimension (1-higher codimension), and domain walls between domain walls for defects of 2-lower dimension. 

The main difficulty in describing an $(n+1)$-dimensional $\BF$-category precisely is the existence of topological excitations in different dimensions. Excitations in different dimensions carry different level of richness of structures. For example, a $p$-dimensional excitation can have particle-like excitations nested in it. Moreover, they can be fused and braided within the $p$-dimensional excitation. An higher dimensional excitation has much richer structures than a particle-like excitation. So it is clear that defects of different dimensions should belong to different layers in a multi-layered structure. It suggests us to arrange topological excitations according to their codimensions: at the 0-th level, there is a unique $n$-spatial dimensional bulk phase; the first level, there are domain walls or defects of codimension 1; at the second-level, there are walls between walls (or defects of codimension 2); at the $n$-th level, there are particle-like excitations (or $n$-codimensional defects); at the $(n+1)$-th level, there are instantons (or $(n+1)$-codimensional defects). This multi-layered structure coincides exactly with that of an $(n+1)$-category with one object. More precisely, the unique object (or 0-morphism) corresponds to the bulk-phase; 1-morphisms correspond to the domain walls; 2-morphisms between a pair of 1-morphisms correspond to walls between walls; ..., $(n+1)$-morphisms correspond to instantons.

\begin{rema}
The only reason that the notion of a phase (or order) of matter was invented is because there are phase transitions. A topological order $x$ should be uniquely determined by its relation or ``phase transition" to all topological orders, including $x$ itself. This relation between two $(n+1)$-dimensional phases $x$ and $y$ can be characterized by all possible $n$-dimensional domain walls. In this work, we only consider gapped domain walls because gapless phases are much richer than the finite category theory. In category theory, the relation between two objects is encoded by morphisms between them, and an object $x$ in a category $\EC$ can be determined uniquely (up to isomorphisms) by a family of sets of morphisms $\{ \hom_\EC(y,x) \}_{y\in \EC}$ and maps between them. This is called Yoneda Lemma in category theory. Therefore, we should consider the category of $(n+1)$-dimensional topological orders with morphisms given by domain walls. According to the philosophy of Yoneda lemma, a topological order should be characterized completely by the domain walls between itself and all topological orders. Moreover, notice that a domain wall is itself a topological phase. There are domain wall between domain walls and domain walls between domain walls between domain walls. Each time the dimension of the domain walls is reduced by one until we reach the instantons which is a localized defect in the time direction. As a consequence, the category of all $(n+1)$-dimensional topological orders, denoted by $\EB\EF_{n+1}$, must be an $(n+1)$-category with 1-morphisms given by domain walls, 2-morphisms given by domain walls between domain walls, ..., $(n+1)$-morphisms given by instantons. To determine a given topological order $x$, the information of all domain walls, although sufficient, are too large to work with. A small part of it is given by all the domain walls between a phase $x$ and itself, or the full subcategory of $\EB\EF_{n+1}$ supported on $x$, denoted by $\hat{x}$. This small part is nothing but a phase $x$ with gapped defects of all dimensions discussed in the previous paragraph. We believe that it is rich enough to characterize the topological phase $x$ uniquely. 
\end{rema}
 
At the current stage, an $n$-category is nothing but a name for a multi-layered structure. This mathematical notion contains many more structures. But whether these extra structures are relevant to topological order is not entirely clear. We will explain what additional structures are needed for a topological order by first looking at a simple example: the toric model\cite{K032} (a $\Zb_2$ spin liquid\cite{RS9173,W9164,MS0181}).

\subsection{The toric code model enriched by defects}

In this section, we will illustrate the additional structures that are needed, in particular, the fusion and braiding structures, in the toric code model. We will also explain its relation to other topological phases in $\EB\EF_{2+1}$ such as the trivial phase. For convenience, we will refer to a subcategory of $\EB\EF_{n+1}$ consists of a finite many objects as a {\it multi-$\BF_{n+1}$-category} or an $\MBF_{n+1}$-category. 

\medskip
The toric code model is a 2-dimensional lattice model depicted in Fig.\,\ref{toric}. In this subsection, we will review the results in \Ref{KK1251} in terms of a 3-category. We will also use the language used in \Ref{KK1251} freely. Let $\rep_{\Zb_2}$ be the category of representations of the $\Zb_2$ group. It is a unitary fusion category. 
In the language of Levin-Wen model, the bulk lattice is, by construction, determined by the unitary fusion category $\rep_{\Zb_2}$, thus will be referred to as an $\rep_{\Zb_2}$-bulk. If there is a domain wall, it was shown in \Ref{KK1251} that the lattice model near the domain wall can be constructed from an indecomposable $\rep_{\Zb_2}$-bimodule category $X$. All (bi)modules over a unitary fusion category are assumed to be semisimple.  We will refer to such lattice near the domain wall as an $X$-wall. Similarly, if there is a gapped boundary, or equivalently, a gapped domain wall between the toric code model and the trivial phase, the lattice model near the boundary can be constructed from a $\rep_{\Zb_2}$-module $Y$ and will be referred to as a $Y$-boundary. 

The trivial phase can be viewed as a Levin-Wen model based on the unitary fusion category $\hilb$ of finite dimensional Hilbert spaces. 


The toric code model gives a $3$-category, denoted by $\TC_3$, in its full
complexity. Actually, it is quite convenient and perhaps more illustrative to
describe a slightly larger $3$-category, $\TC_3^b$, which contains two objects:
the toric code model and the trivial phase. $\TC_3$ can be obtained as
sub-category of $\TC_3^b$. There are four layers of structures in $\TC_3^b$. 

\newcommand\tc {\mathrm{tc}}

\smallskip
\noindent $\bullet$ {\it Objects or 0-morphisms}:  the trivial phase, denoted by $\one$ and toric code denoted by $\tc$.


\smallskip
\noindent $\bullet$ {\it $1$-morphism}: 
There are 4 types of 1-morphisms given by various types of defect lines or domain walls:
\begin{enumerate} 
\item 1-morphisms $\tc \to \tc$ are given by domain walls between two $\rep_{\Zb_2}$-bulks. They are classified by $\rep_{\Zb_2}$-bimodules. The trivial wall is the $\rep_{\Zb_2}$-wall, where $\rep_{\Zb_2}$ is viewed as an $\rep_{\Zb_2}$-bimodule. An example of non-trivial domain wall is given by the $\rep_{\Zb_2}$-bimodule: 
$$
{}_{\rep_{\Zb_2}} \hilb_{\rep_{\Zb_2}},
$$ 
which is the category of finite dimensional Hilbert spaces and is depicted as the dotted line in Fig.\,\ref{toric}. The trivial $\rep_{\Zb_2}$-wall can be any line other than the dotted line in the lattice, in particular it can be the vertical line connecting to the dotted line via the blue point in Fig.\,\ref{toric}. 

There are more simple $\rep_{\Zb_2}$-bimodules. For example, there is another $\rep_{\Zb_2}$-bimodule structure on $\hilb$.

\begin{figure}[t] 
  \centerline{\includegraphics[scale=1]{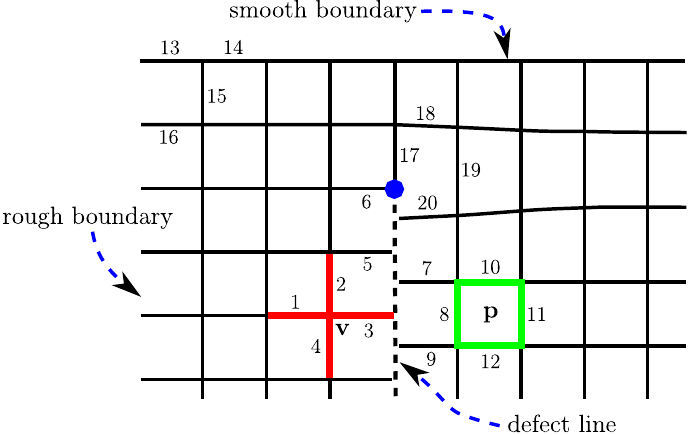}}
  \caption{Toric code with boundaries of two types, a transparent domain wall (the dotted line) and  a $0$-d defect depicted by the blue point at the corner of edge labeled by $6$ and $17$.}
  \label{toric}
\end{figure}

\item There are two simple 1-morphisms $\tc \to \one$. These two 1-morphisms correspond to two types of boundaries: the ${}_{\rep_{\Zb_2}}\hilb_{\hilb}$-boundary, which is called ``rough boundary" in Fig.\,\ref{toric}, and the ${}_{\rep_{\Zb_2}}(\rep_{\Zb_2})_{\hilb}$-boundary, which is called ``smooth boundary" in Fig.\,\ref{toric}. 

\item There are two simple 1-morphisms from $\one\to \tc$ given by the bimodule ${}_{\hilb}\hilb^\op_{\rep_{\Zb_2}}$ and ${}_{\hilb}(\rep_{\Zb_2}^\op)_{\rep_{\Zb_2}}$.

\item 1-morphisms $\one \to \one$ are domain walls in the trivial phase. The only simple one is the trivial wall given by the $\hilb$-bimodule ${}_{\hilb} \hilb_{\hilb}$. 

\end{enumerate}

\smallskip
\noindent {\it $\bullet$ 2-morphisms}: 2-morphisms are defects of codimension 2. These point-like defects (for example the bulk point in Fig.\,\ref{toric}) are completely classified in \Ref{KK1251} by bimodule functors. For example, 
\bnu
\item a 2-morphism from the 1-morphism ${}_{\rep_{\Zb_2}} (\rep_{\Zb_2})_{\rep_{\Zb_2}}$ to itself is given by a bimodule functor from $\rep_{\Zb_2}$ to $\rep_{\Zb_2}$. Since ${}_{\rep_{\Zb_2}} (\rep_{\Zb_2})_{\rep_{\Zb_2}}$ is a trivial defect line, a point-like defect on such defect line should be nothing but a bulk excitation. Therefore, the bulk excitations, or defects of codimension 2 on the trivial domain wall $\rep_{\Zb_2}$, are classified by the objects in the category $\fun_{\rep_{\Zb_2}|\rep_{\Zb_2}}(\rep_{\Zb_2}, \rep_{\Zb_2})$
of bimodule functors from $\rep_{\Zb_2}$ to $\rep_{\Zb_2}$. There are 4 such bimodule functors that are simple. They are denoted by $1,e,m,\epsilon$. The category $\fun_{\rep_{\Zb_2}|\rep_{\Zb_2}}(\rep_{\Zb_2}, \rep_{\Zb_2})$ is also called {\it monoidal center} of the monoidal category $\rep_{\Zb_2}$, often denoted by $Z(\rep_{\Zb_2})$. 

\item Similarly, 2-morphisms from 1-morphism ${}_{\rep_{\Zb_2}} \hilb_{\rep_{\Zb_2}}$ to itself is given by objects in $\fun_{\rep_{\Zb_2}|\rep_{\Zb_2}}(\hilb, \hilb)$, which is actually equivalent to $Z(\rep_{\Zb_2})$ as monoidal categories. Namely, it also contains four simple objects, corresponding to four simple wall excitations. 

\item 2-morphisms from 1-morphism $\rep_{\Zb_2}$ to $\hilb$ are given by objects in $\fun_{\rep_{\Zb_2}|\rep_{\Zb_2}}(\rep_{\Zb_2}, \hilb)$. An example of such 2-morphism is depicted as the lattice configuration around the blue point in Fig.\,\ref{toric}. There is a stabilizer operator 
\begin{equation} \label{eq:Q-op}
Q=\sigma_6^x \sigma_{17}^y\sigma_{18}^z \sigma_{19}^z \sigma_{20}^z,
\end{equation}
which commutes with other stabilizers (see eq.(8) in \Ref{KK1251}). Two eigenvalues of $Q$ correspond to two distinct simple 2-morphisms, or two simple objects $F_\pm$ in $\fun_{\rep_{\Zb_2}|\rep_{\Zb_2}}(\rep_{\Zb_2}, \hilb)$. 

\item 2-morphisms from 1-morphism $\hilb$ to $\rep_{\Zb_2}$ are given by objects in 
$\fun_{\rep_{\Zb_2}|\rep_{\Zb_2}}(\hilb, \rep_{\Zb_2})$. There are again two simple 2-morphisms given by $\overline{F}_\pm$ which is the two-sides adjoint of $F_\pm$. 

\item 2-morphisms from 1-morphisms ${}_{\rep_{\Zb_2}}(\ER)_{\hilb}$ to itself, if $\ER=\rep_{\Zb_2}$ or $\hilb$, are given by objects in $\fun_{\rep_{\Zb_2}}(\ER, \ER) \simeq \rep_{\Zb_2}$ for both cases. 

\enu

\medskip
\noindent {\it $\bullet$ 3-morphisms}: 3-morphisms are given by instantons which can be viewed as defects in the time direction. More precisely, in this case, they are natural transformation between bimodule functors. We recall a typical example of an instanton given in \Ref{KK1251}. Imagine a dotted vertical interval $\hilb$-wall in Fig.\,\ref{toric}, with a upper end $F_+$ and a lower end $\overline{F}_+$, shrinking in the time direction and finally disappeared. This is given by an instanton. More precisely, two defect junction $F_+$ and $\overline{F}_+$, when viewed from far away, fuse into a single defect junction on the trivial defect, i.e. a bulk excitation which was shown to be $1\oplus \epsilon$. Then the instanton describe above is the morphism $1 \oplus \epsilon \to 1$ in $Z(\rep_{\Zb_2})$. Moreover, it is clear that all morphisms in the categories: $Z(\rep_{\Zb_2})$, $\fun_{\rep_{\Zb_2}|\rep_{\Zb_2}}(\hilb, \hilb)$ and $\fun_{\rep_{\Zb_2}|\rep_{\Zb_2}}(\rep_{\Zb_2}, \hilb)$, etc. are instantons or 3-morphisms in $\TC_3^b$.

\bigskip
In addition to above 4 layers of structures: 0-,1-,2-,3-morphisms, there are much more structures naturally required by physics. We will illustrate them one by one.

\medskip
\noindent {\it $\bullet$ Composition of morphisms} 

\bnu

\item Physically, when two instantons move closer to each other in the time direction, they can be viewed as a single instanton. This says that 3-morphisms can be composed. Since they are given by morphisms in an ordinary 1-category, they can be composed just as usual. 

\item 2-morphisms, or the defect junctions, can be fused as particles. Indeed, mathematically, these particle-like excitations are given by module functors. So the fusion among these particles is exactly given by the composition of module functors. For example, the four bulk excitations, $1,e,m,\epsilon$ fuse exactly as the composition of functors in $Z(\rep_{\Zb_2})=\fun_{\rep_{\Zb_2}|\rep_{\Zb_2}}(\rep_{\Zb_2}, \rep_{\Zb_2})$. This gives arise to a monoidal structure on $Z(\rep_{\Zb_2})$. Notice that when two particles fuse, the two instantons living on the time line (about the same time), which ends at these two particles, also move close to each other. This process gives arise to a potentially new composition of 3-morphisms. 
$$
(e\xrightarrow{f} e, m\xrightarrow{g} m) \mapsto (e\otimes m \xrightarrow{f\otimes g} e\otimes m). 
$$
Mathematically, it is achieved by the fact that $\otimes$ is a functor which automatically fuse the instantons.

Similarly, the fusion of particles also give each of the following categories of 2-morphisms $\fun_{\rep_{\Zb_2}|\rep_{\Zb_2}}(\rep_{\Zb_2}, \rep_{\Zb_2})$, $\fun_{\rep_{\Zb_2}}(\rep_{\Zb_2}, \rep_{\Zb_2})$, $\fun_{\rep_{\Zb_2}}(\hilb, \hilb)$ a structure of a monoidal category. 

Moreover, a defect junction from $\rep_{\Zb_2}$ to $\hilb$ can fuse with a
defect junction from $\hilb$ to $\rep_{\Zb_2}$ to give a defect junction from
$\rep_{\Zb_2}$ to $\rep_{\Zb_2}$ (a bulk excitation), or from $\hilb$ to $\hilb$. More precisely,
by Eq.\,(35) in \Ref{KK1251}, we have
$$
F_+ \circ \overline{F}_+ = F_-\circ \overline{F}_- \simeq 1 \oplus \epsilon, 
$$
\vspace{-.7cm}
$$
F_+ \circ \overline{F}_- = F_-\circ \overline{F}_+ \simeq e \oplus m. 
$$

\item 1-morphisms can be composed. For example, consider two domain walls ${}_{\rep_{\Zb_2}} \EM_{\rep_{\Zb_2}}$ and ${}_{\rep_{\Zb_2}} \EN_{\rep_{\Zb_2}}$ sitting parallel and next to each other. Then, when viewed from far away, they simply fuse into a single domain wall, which is given by
$$
\EM \boxtimes_{\rep_{\Zb_2}} \EN
$$
where the tensor product $\boxtimes_{\rep_{\Zb_2}}$ is well-defined mathematically\cite{eno} and the resulting category is again an $\rep_{\Zb_2}$-bimodule. Notice that $\rep_{\Zb_2}$ is the trivial domain wall. It is trivial in the sense that if we replace $\EM$ by $\rep_{\Zb_2}$, then viewed from far away the fused wall $\rep_{\Zb_2}\boxtimes_{\rep_{\Zb_2}} \EN$ must be the same as a single $\EN$-wall. 
Yes, indeed, $\rep_{\Zb_2}\boxtimes_{\rep_{\Zb_2}} \EN \simeq \EN$ is guaranteed mathematically by the defining properties of the tensor product $\boxtimes_{\rep_{\Zb_2}}$. In other words, under the tensor product the trivial $\rep_{\Zb_2}$-wall acts like an identity 1-morphism. It is also clear that the composition of 1-morphisms are associative, i.e.
\begin{align}
(\EL \boxtimes_{\rep_{\Zb_2}} \EM) \boxtimes_{\rep_{\Zb_2}} \EN &  \nn
 & \hspace{-2cm} \simeq  
\EL \boxtimes_{\rep_{\Zb_2}} (\EM \boxtimes_{\rep_{\Zb_2}} \EN).
\end{align}
Also notice that fusion of domain wall also fuse excitations on different wall horizontally. This process provides a (potentially) new composition of 2-morphisms, 
and at the same time, it provides a (potentially) new composition of 3-morphisms.

\enu

\medskip
\noindent {\it $\bullet$ Bulk-to-wall maps}: This structure can be viewed as a substructure of the composition of morphisms. But due to its importance in our study later, it is beneficial to discuss them in detail now. 

Let $\EM$ be a $\rep_{\Zb_2}$-bimodule. The $1,e,m.\epsilon$-particles can fuse into the $\EM$-wall and becoming wall excitations. This process gives arise to two maps, called left/right bulk-to-wall maps, which are given by two monoidal functors $L$ and $R$: 
\begin{equation} \label{diag:cospan}
\hspace{0.6cm} Z(\rep_{\Zb_2}) \xrightarrow{L} Z(\EM) \xleftarrow{R} Z(\rep_{\Zb_2}),
\end{equation}
where $Z(\EM): =\fun_{\rep_{\Zb_2}|\rep_{\Zb_2}}(\EM, \EM)$ 
is a unitary fusion category of $\rep_{\Zb_2}$-bimodule functors from $\EM$ to $\EM$ and describes the excitations on the $\EM$-wall. More precisely, the functor $L$ and $R$ are defined as follows: 
\begin{align}
L: \quad 1/e/m/\epsilon  &\mapsto 1/e/m/\epsilon \otimes_{\rep_{\Zb_2}} \id_\EM. \nn
R: \quad 1/e/m/\epsilon  &\mapsto  \id_\EM \otimes_{\rep_{\Zb_2}} 1/e/m/\epsilon  \nonumber
\end{align}
where we have used the fact that the anyons $1,e,m,\epsilon$ can be viewed as bimodule endo-functors on $\rep_{\Zb_2}$ and that their images can be viewed as an endo-functors on $\rep_{\Zb_2} \otimes_{\rep_{\Zb_2}} \EM\simeq \EM$ for $L$ and
$\EM \otimes_{\rep_{\Zb_2}} \rep_{\Zb_2} \simeq \EM$ for $R$. The functors $L$ and $R$ can be combined into a two-side bulk-to-wall map: 
\begin{equation}  \label{eq:bulk-to-wall-0}
Z(\rep_{\Zb_2}) \boxtimes Z(\rep_{\Zb_2}) \xrightarrow{L \boxtimes R}
Z(\EM).
\end{equation}
The bulk-to-wall map (\ref{eq:bulk-to-wall-0}) is a {\it dominant functor}, which means that any object in $Z(\EM)$ appear as a subobject of an object in the image of $L \boxtimes R$. 
Moreover, the functors $L$ and $R$ are also central\cite{dmno,fsv,kong-anyon}.



\void{
\medskip
\noindent {\it $\bullet$ Bulk-to-boundary maps}:  ${}_{Z(\rep_{\Zb_2})}  ((\rep_{\Zb_2})_\ER^\vee)_{\hilb}$, where $\EC_\ER^\vee=\fun_{\rep_{\Zb_2}}(\ER, \ER)$ and $\ER$ is a right $\rep_{\Zb_2}$-module, i.e. either $\hilb$ or $\rep_{\Zb_2}$. It includes the following cospan as a substructure: 
$$
\hilb \hookrightarrow (\rep_{\Zb_2})_\ER^\vee \xleftarrow{R_\ER} Z(\rep_{\Zb_2})
$$

The corresponding 1-morphisms from $Z(\rep_{\Zb_2})$ to $\hilb$ is given by the $E_1$-bimodule
${}_{Z(\rep_{\Zb_2})}  ((\rep_{\Zb_2})_\EL^\vee)_{\hilb}$, 
where $\EL$ is an indecomposable left $\rep_{\Zb_2}$-module, i.e. either $\hilb$ or $\rep_{\Zb_2}$, and $(\rep_{\Zb_2})_\EL^\vee=\fun_{\rep_{\Zb_2}}(\EL, \EL)$ is a unitary fusion category with the tensor product given by the opposite composition of functors. It includes the following cospan as a substructure: 
$$
Z(\rep_{\Zb_2}) \xrightarrow{L_\EL} (\rep_{\Zb_2})_\EL^\vee \hookleftarrow \hilb
$$
In both cases, $(\rep_{\Zb_2})_\EL^\vee \simeq \rep_{\Zb_2}$. Their difference lies in the bulk-to-boundary maps. When $\EL=\hilb$, the bulk-to-boundary map $L_{\hilb}$ is the monoidal functor given by
$$
e \mapsto 1, \quad\quad m\mapsto m; 
$$
when $\EL=\rep_{\Zb_2}$, the bulk-to-boundary map $L_{\rep_{\Zb_2}}$ is the monoidal functor given by
$$
e \mapsto e, \quad\quad m \mapsto 1.
$$
}

\medskip
\noindent {\it $\bullet$ Braidings}: The bulk excitations can be braided. It is true for all defect junctions living in a trivial defect line. We will show later that the braiding structure is automatic for endo 2-morphisms of the identity 1-morphism in an $n$-category for $n\geq 2$. 

\medskip
\noindent {\it $\bullet$ Half Braidings}: A bulk excitation can be half-braided with a wall-excitation. The general braiding between bulk excitations $e/m/\epsilon$ and $F_\pm$ (or $\overline{F}_\pm$) is encoded in the following commutative diagrams up to isomorphisms $\phi_L$ and $\phi_R$:
\begin{equation} \label{diag:2-cospan}
\raisebox{4.5em}{
\xymatrix@R=2.5em@C=0.9em{
& Z(\rep_{\Zb_2})  \ar[d]^{F_\pm \circ -}  &  \\
Z(\rep_{\Zb_2}) \ar[ur]^{\id} \ar[dr]_{L_2^{[1]}} & 
\phi_L \Downarrow \,\,\,\,\,\,\ \EE^{[0]} \,\,\,\,\,\, \Downarrow \phi_R &
Z(\rep_{\Zb_2}) \ar[ul]_{\id} \ar[dl]^{R_2^{[1]}} \\
&  Z_{\rep_{\Zb_2}}(\hilb) \ar[u]^{-\circ F_\pm} & 
}}
\end{equation}
where $\EE^{[0]}=\fun_{\rep_{\Zb_2}|\rep_{\Zb_2}}(\rep_{\Zb_2},\hilb)$ and both functors $L_2^{[1]}$ and $R_2^{[1]}$ are invertible. These isomorphisms $\phi_L$ and $\phi_R$ give the so-called half-braiding between the bulk excitation and 
defects $F_\pm$ in $\EE^{[0]}$.  

\begin{rema}  \label{rema:Z2-crossed}
In above case, since both functors $L_2^{[1]}$ and $R_2^{[1]}$ are invertible,  the isomorphisms $\phi_L$ and $\phi_R$ actually gives the following full braiding: 
\begin{equation} \label{eq:double-braiding-toric}
F_\pm \otimes e \mapsto m \otimes F_\pm, \quad
F_\pm \otimes m \mapsto e \otimes F_\pm,
\end{equation}
which has an interesting $\Zb_2$-crossed braiding structure. The toric code enriched by the transparent domain wall as shown in Figure\,\ref{toric} actually gives an example of symmetry enriched topological order\cite{CGW1038}. The $\Zb_2$-crossed braiding in (\ref{eq:double-braiding-toric}) is a part of the structure of a $\Zb_2$-crossed braided fusion category, which is obtained by a $\Zb_2$-extension of $Z(\rep_{\Zb_2})$ and describes an symmetry enriched topological order. 
Both the modular tensor category $Z(\rep_{\Zb_2})$ and its $\Zb_2$-extension can be viewed as two different minimal descriptions of the toric code model from two different points of view. 
\end{rema}

\medskip
In summary, the defects in the toric code model form a 4-layered structure: only one object or 0-morphism, which can be labeled by $\rep_{\Zb_2}$, 1-morphisms given by $\rep_{\Zb_2}$-bimodules, 2-morphisms given by bimodule functors and 3-morphisms given by natural transformations between bimodule functors. This 4-layered structure needs to be enriched in order to describe a physical topological order. 
In particular, the composition of $i$-morphisms should be introduced for $i>0$, and all compositions are associative and unital. Certain braiding structures, including the half-braiding, are needed to describe a physical topological order. We will show in later sections that the notion of $n$-category automatically encode these structures, thus can be used as a proper mathematical language to model the topological properties of excitations in a topological order. 

\begin{rema}  \label{rema:fus-3-cat}
For general Levin-Wen models, one simply replace the only 0-morphism $\rep_{\Zb_2}$ by a unitary fusion category $\EC$, everything else remains the same. We obtain a new 3-category with one object $\EC$. More generally, we have a 3-category $\EF\mathrm{us}$ with objects given by unitary fusion categories, 1-morphisms by bimodules, 2-morphisms by bimodule functors and 3-morphisms by natural transformations between bimodule functors. 
\end{rema}

\void{
\subsection{$E_n$-operads and $E_n$-categories}  \label{sec:E-n}

In Section\,\ref{GBFexample}, we described lower dimensional BF categories including only particle-like excitations. 
The only $\BF_1$ category is $\hilb$. The $\BF_2$-categories are unitary fusion categories. The $\BF_3$ categories are unitary premodular tensor categories. 

When we move on to higher dimensional topological phases, fusion between particles is still similar to the low dimensional cases. But the braiding becomes more subtle. In order to formulate the higher dimensional braidings, we need some mathematical language which might not look familiar to physicists. 

\medskip
An $E_n$-operad is a collection of sets $\{ E_n(k) \}_{k=0}^{\infty}$, where the set $E_n(k)$ is given by 
$$
E_n(k) := \text{Map}( [0,1]^n \times \{1, 2, \cdots, k\}, [0,1]), 
$$ 
together with compositions and identity element in $E_n(0)$ satisfying some natural axioms (associativity and unit properties)\cite{lurie2}. 
\begin{itemize}
\item an $E_1$-category characterizes the fusion (no braiding) of particles in a 1-dimensional disk. Mathematically, it is formulated as an algebra over the little interval operad (see Fig.\,\ref{fig:E1}) valued in a category of $(\infty, m)$-categories\cite{lurie2} in general. Examples of $E_1$-category are monoidal categories for $m=1$ and monoidal 2-categories for $m=2$.

\begin{figure}[tb]
  \centering
  \includegraphics[scale=0.6]{E-1-operad}
  \caption{The fusion of particles in $1$-d can be modeled by the so-called little interval operad (or an $E_1$-operad). There is no braiding for a $1$-d system.}
  \label{fig:E1}
\end{figure}

\item an $E_2$-category characterizes the fusion and braiding of particles in a 2-dimensional disk. Mathematically, it is formulated as an algebra over the little 2-disk operad or $E_2$-operad (see Fig.\,\ref{fig:E2}) in the category of $(\infty, m)$-categories in general. Examples of $E_2$-categories are braided monoidal 1-categories for $l=1$ and braided monoidal 2-categories for $m=2$. 

\begin{figure}[tb]
  \centering
  \includegraphics[scale=0.6]{E-2-operad}
  \caption{The fusion and braiding of particles in a $2$-dimensional disk can be modeled by the so-called little cubic operad (or an $E_2$-operad).} 
  \label{fig:E2}
\end{figure}

\item an $E_3$-category characterizes the fusion and braiding of particles in a 3-dimensional disk. Mathematically, it is formulated as an algebra over the little 3-disk operad or $E_3$-operad valued in the category of $(\infty,m)$-categories. Example of $E_3$-categories are symmetric monoidal categories for $m=1$ and sylleptic monoidal 2-categories for $m=2$. 

\item an $E_n$-category characterizes the fusion and braiding of particles in $n$-dimensional ball. An $E_n$-category is an algebra over the little $n$-disk operad or $E_n$-operad valued in the category of $(\infty, m)$-categories. In general, an $E_{n+1}$-category is more symmetric than $E_n$-category because particles can braid in one more direction. 

\void{
\begin{figure}[tb]
  \centering
  \includegraphics[scale=0.6]{E3}
  \caption{The fusion and braiding of particles in $3$-d can be modeled by the so-called little 3-disk operad (or an $E_3$-operad).}
  \label{fig:E3}
\end{figure}
}

\end{itemize}

An $E_{n+1}$-category is automatically an $E_n$-category. Conversely, it is not true in general. All examples of $E_n$-categories given above belong to a special class of $E_n$-categories, which is called $k$-tuply monoidal $m$-categories\cite{bd}. For this special class, there is a stability phenomenon\cite{bd}. For example, a symmetric monoidal 1-category is an $E_3$-category. It is automatically an $E_4$-, $E_5$-, ..., and $E_\infty$-category. When $m=2$, examples of $E_4$-categories in this special class are symmetric monoidal 2-categories, which are automatically $E_{5,6, \cdots}$- and $E_\infty$-categories.

\smallskip
The notion of $E_n$-category is often used in mathematics to catch higher homotopy information. For example, in general, an $E_1$-category is not a monoidal category but an $A_\infty$-monoidal category. It is unclear to us at this stage whether such subtle higher homotopy information is physically detectable. In the most topological systems we had in mind, this information is not detectable at all. If it is indeed not or not entirely detectable, then only a special kind of (partially) homotopy stable $E_n$-categories is what we need. Moreover, we have not yet discussed other natural and necessary structures of among these excitations. For example, the requirement that a pair of particles can be created from (or annihilated into) the vacuum demands the existence of the anti-particle and certain duality structures; unitarity gives some additional structures. At the moment, the precise physical qualifier of $E_n$-category is unknown to us now. What is really important to us, however, is that the notion of $E_n$-category catches the subtle difference between braidings in a $k$-dimensional disk and that in a $k+1$-dimensional disk. In order to take advantage of this fact, we will assume that such mysterious qualifier exists and refer to it as a ``unitary $E_n$-category". 

\begin{rema} 
Notice that an $E_n$-category characterizes braidings between particles in an $n$-dimensional disk. If it is not in an $n$-disk but a space $\Sigma$ with non-trivial topology, then we need replace the local notion of $E_n$-category by a global one called ``factorization algebra" valued in the category of $(\infty, m)$-categories\cite{lurie2}. We will denote such global notion by $\fact_n^{\Sigma}$. Such factorization algebra can be obtained from gluing $E_n$-categories associated to $n$-disks that covers $\Sigma$. 
\end{rema}
}

\subsection{Defects in an $(n+1)$-dimensional topological order} \label{sec:defect-n-toporder}

In this subsection, we will generalize topological properties of defects in toric code model to an arbitrary $(n+1)$-dimensional topological order. These properties summarized in this subsection should serve as a guide to formulate a mathematical definition of a $\BF_{n+1}$-category. 

\medskip
\noindent 1. {\it Defects of all codimensions}: In each codimension $l$ ($1\leq l \leq n+1$), there are only finite types of simple topological excitations (or defects), labeled by $i^{[l]}, j^{[l]}, k^{[l]}$, etc.  Notice that the superscript of $i^{[l]}$ represent the codimension of the defect and will be omitted if it is clear from the context. The trivial pure $l$-codimensional defect is denoted by $1^{[l]}$. We assume that $1^{[l]}$ is simple. A general defect can be composite (not simple). These topological excitations are instantons for $l=n+1$; particle-like excitations for $l=n$; string-like excitations for $l=n-1$; surface-like excitations for $l=n-2$, etc. This is a very rough way of labeling these excitations. In general, a $(l+1)$-codimensional defect can be a domain wall between two (not necessarily different) $l$-codimensional defects, each of which again can be a domain wall between two $(l-1)$-codimensional defects, so on and so forth. 

All the gapped domain walls between two $l$-codimensional defect $x^{[l]}$ and $y^{[l]}$ and domain wall between domain walls, etc., form an interesting mutli-layered structure, which will be denoted by $\hom(x,y)$. It will become clear later that $\hom(x^{[l]},y^{[l]})$ is an $(n-l)$-category.

\void{
\begin{rema}
Since a defect of codimension 1 in an $n$-spatial dimensional topological phase can not be detected by other excitations from far away via braiding in $n$ dimension, a topological phase with non-trivial codimension 1 excitations are anomalous in general. In other word, a closed topological order does not contain any non-trivial codimension 1 excitations. For example, in a 2+1 dimensional phase, a small closed string viewed by an excitation from far away is nothing but a point-like particle. There is no way to braiding them along a path which has a non-trivial linking number with the closed string. But a non-trivial closed string can be potentially detected via braiding of particles from a 3+1 dimensional bulk. Therefore, a 2+1D phase with a non-trivial string-like excitation need have a non-trivial 3+1 dimensional bulk theory. 
\end{rema}
}
  
\medskip
\noindent 2. {\it Ground-state degeneracy}: 
We fixed the locations of all the topological defects, and assume that the topological excitations are well separated. In this limit, the topological degeneracy is robust against any local perturbations of the Hamiltonian.

The ground state is obtained by decorating the space $\Sigma$ with trivial defects $1^{[l]}$ for all $0\leq l\leq n$. Then we obtain the space of ground states, denoted by $\hom_{\Sigma}(1^{[n]}, 1^{[n]})$, which is also called {\it ground-state degeneracy}. It depends on the topology of $\Sigma$ in general. When $\Sigma=S^n$,  if the ground-state degeneracy is trivial, i.e.
\begin{equation} \label{eq:gs-deg}
\hom_{S^n}(1^{[n]}, 1^{[n]}) \simeq \Cb,
\end{equation}
then the theory is called {\it stable}. By the stable condition (\ref{eq:gs-deg}), this degeneracy is independent of how many trivial defect $1^{[n]}$ we introduce. Otherwise, it is clearly not true. This tells us why the stable condition (\ref{eq:gs-deg}) is natural. We will return to this point later. 

We can also define a general space of lowest energy states with the appearance of  nontrivial higher dimensional defects $i_1, \cdots, i_k$. The ground-state degeneracy in this case is given by the Hilbert space: $\otimes_{j=1}^k \hom_\Sigma(1^{[n]}_{1\cdots i_j \cdots}, 1^{[n]}_{1\cdots i_j \cdots})$, where $1^{[n]}_{1\cdots i_j \cdots}$ represents the trivial sub-defect on $i_j$. 

\smallskip
\noindent 3. {\it Fusion between defects of the same codimension}: Defects of the same codimension can be fused (see Fig.~\ref{fuse}). For example, any two adjacent domain walls, when viewed from far away, simply fuse to a single domain wall. This gives arise to a fusion product $i^{[l]} \otimes j^{[l]}$. The trivial domain walls
act like the unit under the fusion product. More precisely, if $f$ is domain wall between 
two defects $x^{[l]}$ and $y^{[l]}$, and the trivial domain walls $1_{x^{[l]}}, 1_{y^{[l]}}$ inside the defect $x^{[l]}$ and $y^{[l]}$, respectively, act like the units for the fusion product, i.e. $1_{x^{[l]}} \otimes f  = f$ and $f \otimes 1_{y^{[l]}} = f$. These structures match exactly with the composition of higher morphisms in a higher category. The trivial domain wall behave like the identity higher morphism in a higher category. 

The information of fusion product is encoded in the so-called ``hom space" or ``fusion rules", denoted by $\hom(i^{[l]} \otimes j^{[l]}, k^{[l]})$, which contains the information of all gapped domain walls between $i\otimes j$ and $k$ and domain walls between domain walls, etc. This generalized fusion rule $\hom(i^{[l]} \otimes j^{[l]}, k^{[l]})$ is actually an $(n-l)$-category. For example, for $l=n+1$ (instantons), the ``fusion rules" $\hom(i^{[n+1]} \otimes j^{[n+1]}, k^{[n+1]})$ is empty; for $l=n$ (particle-like defects), the fusion rules $\hom(i^{[n]} \otimes j^{[n]}, k^{[n]})$ is a finite dimensional Hilbert space; for $l=n-1$ (defect lines), the fusion rule $\hom(i^{[n-1]} \otimes j^{[n-1]}, k^{[n-1]})$ is a unitary $1$-category. 

For example, in Levin-Wen type of lattice models constructed in \Ref{KK1251}, if three domain walls, associated to three bimodule categories ${}_{\EC_1}(\EM_{13})_{\EC_3}, {}_{\EC_1}(\EM_{12})_{\EC_2}$ and ${}_{\EC_2} (\EM_{23})_{\EC_3}$, respectively, are connected by a defect junction as follows:
\begin{equation} \label{eq:3-way}
\figbox{1.0}{pic-3defects}
\end{equation}
then the defect junctions are classified by the unitary 1-category $\fun_{\EC_1|\EC_3}(\EM_{12}\otimes_{\EC_2} \EM_{23}, \EM_{13})$ of $\EC_1$-$\EC_3$-bimodule functors from $\EM_{12}\otimes_{\EC_2} \EM_{23}$ to $\EM_{13}$.\cite{KK1251}

In general, the hom space $\hom(i^{[l]} \otimes j^{[l]}, k^{[l]})$ is an $(n-l)$-category. But once we select the particle-like excitations $a,b,c$ on defects $i^{[l]}, j^{[l]}$ and $k^{[l]}$, we should always obtain a finite dimensional Hilbert space
$$
\hom(a^{[n]}_{\cdots i^{[l]} \cdots} \otimes  b^{[n]}_{\cdots j^{[l]} \cdots}, 
c^{[n]}_{\cdots k^{[l]} \cdots}) \cong  \Cb^N
$$
for some finite $N\in \Zb_{\geq 0}$. 
These ``$(n-l)$-categorical fusion rules" define a fusion (or tensor) product $\otimes$ among $l$-codimensional defects. By natural physical requirements, these fusion products must be associative, i.e. existing an associator: $\alpha_{i,j,k}^{[l]}: i^{[l]} \otimes (j^{[l]} \otimes k^{[l]}) \xrightarrow{\simeq} (i^{[l]} \otimes j^{[l]}) \otimes k^{[l]}$. Moreover, the trivial type $1^{[l]}$ fuses as a tensor unit, i.e. $1^{[l]} \otimes i^{[l]} \simeq i^{[l]} \simeq i^{[l]} \otimes 1^{[l]}$. We require that these associators and unit isomorphisms are unitary. By that we mean, for arbitrary decoration of particle-like excitations on $l$-dimensional excitations $i^{[l]}, j^{[l]}$ and $k^{[l]}$, the data of the associators and unit isomorphisms boil down to finite number of linear maps between finite dimensional Hilbert spaces. We require these linear maps to be unitary. In categorical language, these fusion rules provide a monoidal structure among all $l$-codimensional defects. 

\medskip
\noindent 4. {\it General fusions}: Topological defects of different dimensions can also fuse. For example, for $l\geq l'$, a pure $l$-codimensional defect $x^{[l]}$ in the bulk can fuse into an $l'$-codimensional defect $y^{[l']}$ and becomes an $l$-codimensional defect nested in $y$. The information of this kind of fusion is automatically included in the fusion between $1_{1\cdots 1}^{[l']}$ and $y^{[l']}$ in the same dimension. 

\medskip
\noindent 5. {\it 1-dimensional bulk-to-wall maps}: A special case of the general fusion will be important to us later. It is called {\it bulk-to-wall maps}. Consider two gapped domain wall $f^{[l+1]}, g^{[l+1]}\in \hom(x^{[l]}, y^{[l]})$ between $x^{[l]}$ and $y^{[l]}$. An $(l+2)$- (or higher) codimensional defect nested in $x^{[l]}$ or $y^{[l]}$ can fuse into the wall $f$ and become a defect nested in $f$. Notice that non-trivial domain walls in $x^{[l]}$ or $y^{[l]}$ can not be included because they might change the type of the domain wall $f$. This process give left/right 1-dimensional bulk-to-wall maps: 
$$
\hom^{> 1}(x,x) \xrightarrow{L_f} \hom(f,f) \xleftarrow{R_f} \hom^{> 1}(y,y),
$$
or equivalently, the two-side bulk-to-wall map:
\begin{equation} \label{eq:bulk-to-wall}
L_f\boxtimes R_f: \hom^{> 1}(x,x)\boxtimes \hom^{> 1}(y,y) \to \hom(f,f). 
\end{equation}
Note that the notation $\hom^{> 1}(\cdot,\cdot)$ simply means that all non-trivial $(l+1)$-codimensional defects are excluded. 

The set of $(l+2)$- (and higher) codimensional defects nested in $x^{[l]}$ or $y^{[l]}$ acts on the set of defects between $f^{[l+1]}$ and $g^{[l+1]}$ (including domain walls and wall between walls). More explicitly, for a sub-defect $a^{[k]}$ in $x$, $b^{[k]}$ in $y$ and a sub-defect $m^{[k]}$ in $\hom(f,g)$ and $k\geq l+2$, the action 
\begin{align}
 \hom^{>1}(x,x)\boxtimes \hom(f,g) \boxtimes \hom^{>1}(y,y) &\to \hom(f,g) \nn
(a, m, b)  &\mapsto a\otimes m \otimes b, \nonumber
\end{align}
The action is clearly associative and unital. Therefore, $\hom(f,g)$ as a topological phase is a $\hom^{>1}(x,x)$-$\hom^{>1}(y,y)$-bimodule in some sense. The two-side bulk-to-wall map in (\ref{eq:bulk-to-wall}) can be recovered from this action by taking $m$ to be the trivial sub-defect $1_{1\cdots_f}^{[k]}$ in $f$, i.e. $g=f$ and
\begin{equation}  \label{eq:bulk-to-wall-2}
L\boxtimes R:\, (a,b) \mapsto a\otimes 1_{1\cdots_f}^{[k]} \otimes b.
\end{equation}
In general, for fixed $m$, this action defines the left/right 1-dimensional bulk-to-wall maps: 
\begin{equation} \label{diag:L-m-R-m}
 \hom^{>1}(x,x) \xrightarrow{L_m} \hom^{>1}(f,g) \xleftarrow{R_m} \hom^{>1}(y,y), 
\end{equation}
and a two-side 1-dimensional bulk-to-wall map 
\begin{equation}  \label{diag:two-side-bulk-to-wall}
 \hom^{>1}(x,x) \boxtimes  \hom^{>1}(y,y) \xrightarrow{L_m\boxtimes R_m}
  \hom^{>1}(f,g).
\end{equation}

\medskip
\noindent 6. {\it $k$-dimensional bulk-to-wall maps}: 
In general, all $(l+k+1)$- and higher codimensional defects nested in a defect $x^{[l]}$ can be fused into a defect $z^{[l+k]}$ directly as long as $z$ is sitting adjacent to $x$ in the sense that either $z$ is nested in $x$ or $z$ lies in a boundary of $x$ connected to another defect $y^{[l]}$. We assume the later situation as it includes the former one  as a special case. 
For simplicity, we will refer to the fusion map
\begin{align} \label{eq:k-bulk-to-wall}
\hom^{> k}(x^{[l]},x^{[l]}) \boxtimes \hom^{> k}(y^{[l]},y^{[l]}) &  \nn
&  \hspace{-2cm} \to  \hom(z^{[l+k]}, z^{[l+k]})
\end{align}
as the {\it two-side $k$-dimensional bulk-to-wall map from $x^{[l]}\otimes y^{[l]}$ to $z^{[l+k]}$}. 
The two-side $1$-dimensional bulk-to-wall map in (\ref{diag:two-side-bulk-to-wall}) can also be generalized to higher dimensions by replacing the second $z^{[l+k]}$ in (\ref{eq:k-bulk-to-wall}) by another defect $u^{[l+k]}$.  


\medskip
\noindent 7. {\it Anti-excitations}: For each $l$-codimensional defect $x$, there is an $l$-dimensional anti-excitation $\bar{x}$ such that a pair of such excitations can be created from (or annihilated to) the vacuum $1^{[l]}$. In categorical language, this amounts to a rigidity or duality structure on the fusion among all $l$-codimensional defects. For example, in the case of a $2+1$-topological phase, an anyon has a dual given by the anti-anyon. The category of anyons is, in particular, a rigid tensor category or tensor category with duals. 

\medskip
\noindent 6. {\it Braiding between defects}: Any defects of codimension 2 or higher nested in the trivial lower codimensional defects can be braided. Moreover, any $(l+2)$- (or higher) codimensional defects nested in an $l$-codimensional defect $x$ can be braided within $x$.

\medskip
\noindent 7. {\it Half braidings}: If $f^{[l+1]}$ is a gapped domain wall between $x^{[l]}$ and $y^{[l]}$. Then an $(l+2)$- (or higher) codimensional defect nested in $x^{[l]}$ or $y^{[l]}$ can be half-braided with defects in $f$. More explicitly, we consider two gapped domain wall $f^{[l+1]}, g^{[l+1]}\in \hom(x^{[l]}, y^{[l]})$ between $x$ and $y$. For a gapped domain wall $\psi^{[l+2]}$ in $\hom(f,g)$, defects in $\hom(f,f)$ and $\hom(g,g)$ can fuse onto $\psi$. This gives arise to two maps denoted by $\psi_\ast$ and $\psi^\ast$. Thus we obtain the following diagram (recall the diagram (\ref{diag:2-cospan})): 
\begin{equation} \label{diag:id-f-g-phi}
\raisebox{0em}{
\xymatrix@R=1.8em@C=0.3em{ 
&  & \hom(f,f) \ar[d]^{\psi_\ast} & & \\
\hom^{>1}(x,x)  \ar[urr]^{L_f} \ar[drr]_{L_g}  & \Downarrow \beta_L & \hom(f,g) & \Downarrow \beta_R & \hom^{>1}(y,y) \ar[dll]^{R_g} \ar[ull]_{R_f} \\
& & \hom(g,g) \ar[u]^{\psi^\ast} &  &
}}
\end{equation}
where $\beta_L$ represents the physical process of deforming the fusion path from $\psi_\ast \circ L_f$ to 
$\psi^\ast \circ L_g$ and $\beta_R$ is similar. These processes are nothing but the (left/right) half-braidings. 

\smallskip
Notice that each domain wall $\psi$ between $f$ and $g$ defines a 1-dimensional bulk-to-wall map 
\begin{equation}  \label{diag:L-phi-R-phi}
\hom^{>1}(x,x) \xrightarrow{L_\psi} \hom(f,g) \xleftarrow{R_\psi} \hom^{>1}(y,y),
\end{equation}
where $L_\psi:= \psi_\ast \circ L_f \simeq \psi^\ast \circ L_g$ and 
$R_\psi:= \psi_\ast \circ R_g \simeq \psi^\ast \circ R_f$. The defects nested in $x^{[l]}$ or $y^{[l]}$ are not allowed to cross the domain wall, unless the wall $f$ is invertible (or transparent), e.g. $x=y$ and $f$ is the trivial domain wall $1_x^{[l+1]}$.

\medskip
\noindent 8. {\it Half braiding between wall excitations}: The most general half-braiding occurs between two wall excitations. It is illustrated in Figure.\,\ref{fig:half-braiding-2-wall-excitations} (see \Ref{dkr} for a discussion of this braiding in 1+1D conformal field theory). The half braiding discussed before is a special case of this general half braiding. It is actually a special case of equ. (\ref{eq:h-v-compatibility}). Half braidings generate all full braidings.

\begin{figure}[bt]
\centerline{\begin{tabular}{@{}c@{\quad}c@{\quad}c}
\raisebox{-50pt}{
  \begin{picture}(105,120)
   \put(0,8){\scalebox{0.6}{\includegraphics{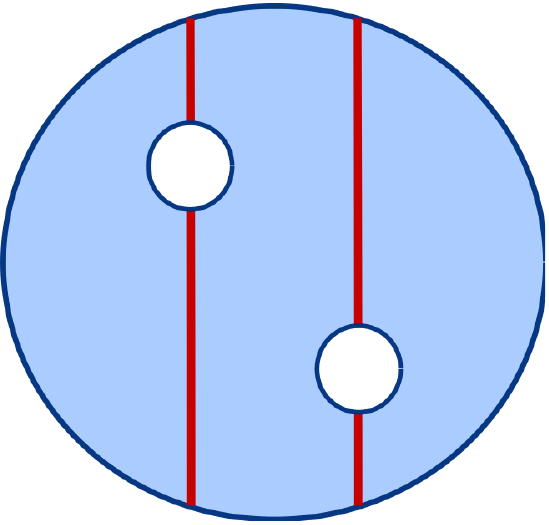}}}
   \put(0,8){
     \setlength{\unitlength}{.75pt}\put(-18,-19){
     \put(50, 32)      {\scriptsize $ g $}
     \put(47, 120)       {\scriptsize $ g' $}
     \put(50, 75)       {\scriptsize $ g $} 
     \put(88, 120)       {\scriptsize $ f' $}
     \put( 88, 75)       {\scriptsize $ f' $}
     \put( 90, 30)       {\scriptsize $ f $}
     \put(37, 55)        {\scriptsize $ x $}
     \put(71, 55)       {\scriptsize $ y $}
     \put(118, 55)    {\scriptsize $ z $} 
     \put(59, 100)     {\scriptsize $\phi$}
     \put(97, 53)     {\scriptsize $\psi$}
     }\setlength{\unitlength}{1pt}}
  \end{picture}}
& $\rightsquigarrow$ &
\raisebox{-50pt}{
  \begin{picture}(105,120)
   \put(0,8){\scalebox{0.6}{\includegraphics{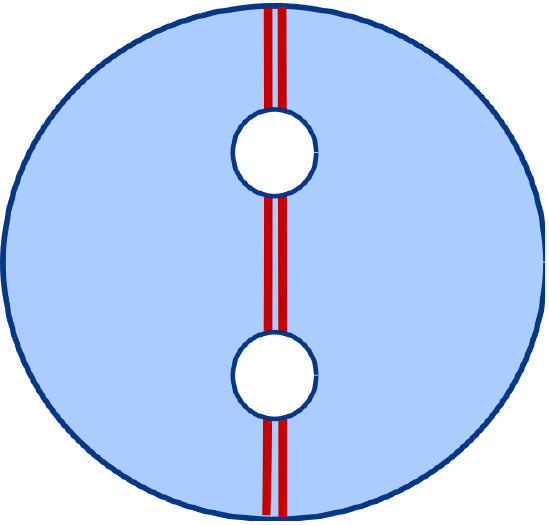}}}
   \put(0,8){
     \setlength{\unitlength}{.75pt}\put(-18,-19){
     \put(53, 120)       {\scriptsize $ g'f' $}
     \put(58, 75)       {\scriptsize $ gf' $} 
     \put(60, 32)      {\scriptsize $ gf $}
      \put(37, 55)        {\scriptsize $ x $}
     \put(118, 55)    {\scriptsize $ z $} 
     \put(77, 103)     {\scriptsize $\phi$}
     \put(77, 52)     {\scriptsize $\psi$}
     }\setlength{\unitlength}{1pt}}
  \end{picture}}
\\[5pt]
\raisebox{-50pt}{
  \begin{picture}(105,120)
   \put(0,8){\scalebox{0.6}{\includegraphics{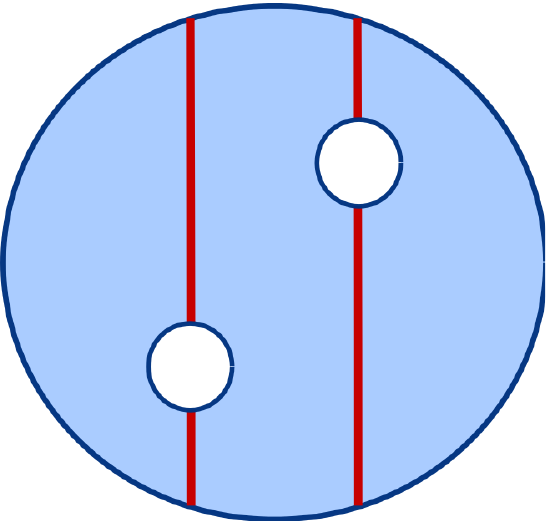}}}
   \put(0,8){
     \setlength{\unitlength}{.75pt}\put(-18,-19){
     \put(50, 32)      {\scriptsize $ g $}
     \put(47, 120)       {\scriptsize $ g' $}
     \put(47, 75)       {\scriptsize $ g' $} 
     \put(88, 120)       {\scriptsize $ f' $}
     \put( 90, 75)       {\scriptsize $ f $}
     \put( 90, 30)       {\scriptsize $ f $}
      \put(37, 55)        {\scriptsize $ x $}
     \put(78, 55)       {\scriptsize $ y $}
     \put(118, 55)    {\scriptsize $ z $} 
     \put(97, 100)     {\scriptsize $\psi$}
     \put(59, 53)     {\scriptsize $\phi$}
     }\setlength{\unitlength}{1pt}}
  \end{picture}}
& $\rightsquigarrow$ &
\raisebox{-50pt}{
  \begin{picture}(105,120)
   \put(0,8){\scalebox{0.6}{\includegraphics{comm-def-fields-a2}}}
   \put(0,8){
     \setlength{\unitlength}{.75pt}\put(-18,-19){
     \put(53, 120)       {\scriptsize $ g'f' $}
     \put(57, 75)       {\scriptsize $ g'f $} 
     \put(60, 32)      {\scriptsize $ gf $}
      \put(37, 55)        {\scriptsize $ x $}
     \put(118, 55)    {\scriptsize $ z $} 
     \put(77, 103)     {\scriptsize $\psi$}
     \put(77, 52)     {\scriptsize $\phi$}
     }\setlength{\unitlength}{1pt}}
  \end{picture}}
\end{tabular}}
\caption{Above four picture illustrate the half braiding between two walls between walls. More precisely, let $x^{[l]}$, $y^{[l]}$, $z^{[l]}$ be three $l$-codimensional defects, $g, g'$  domain walls between $x$ and $y$, $f,f'$ are walls between $y$ and $z$, and $\phi^{[l+2]}$ is a wall between $g$ and $g'$ and $\psi^{[l+2]}$ a wall between $f$ and $f'$. The two squigarrows $\rightsquigarrow$ represents fusion of $(l+1)$-codimensional walls. The compatibility between the first column implies the compatibility of the second column, which is a half braiding. 
}
\label{fig:half-braiding-2-wall-excitations}
\end{figure}

\medskip
\noindent 9. {\it Compatibility of fusions in different directions}: 
Actually, the half braiding depicted in Figure\,\ref{fig:half-braiding-2-wall-excitations} is a consequence of the compatibility of between the horizontal and vertical fusions, which is illustrated in Figure\,\ref{fig:h-v-compatibility}. If we denote the vertical fusion by $\bullet$ and the horizontal fusion by $\circ$, we must have 
\begin{equation}  \label{eq:h-v-compatibility}
(\phi' \circ \psi') \bullet (\phi \circ \psi) \simeq (\phi' \bullet \phi) \circ (\psi' \bullet \psi). 
\end{equation}

\begin{figure}[bt]
\centerline{\begin{tabular}{@{}c}
\raisebox{-50pt}{
  \begin{picture}(105,120)
   \put(0,8){\scalebox{0.6}{\includegraphics{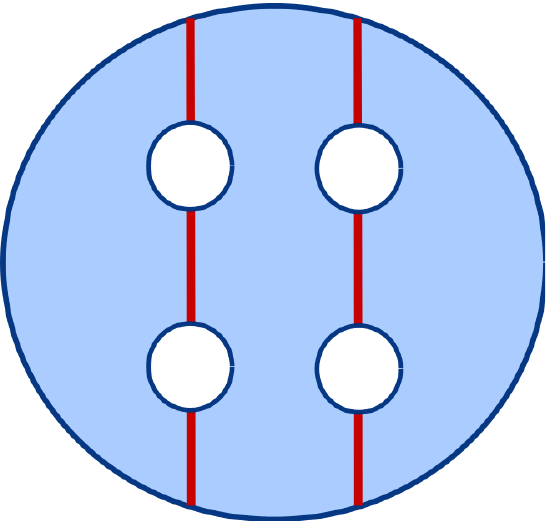}}}
   \put(0,8){
     \setlength{\unitlength}{.75pt}\put(-18,-19){
     \put(50, 32)      {\scriptsize $ g $}
     \put(47, 120)       {\scriptsize $ g'' $}
     \put(50, 75)       {\scriptsize $ g' $} 
     \put(88, 120)       {\scriptsize $ f' $}
     \put( 88, 75)       {\scriptsize $ f' $}
     \put( 90, 30)       {\scriptsize $ f $}
     \put(37, 55)        {\scriptsize $ x $}
     \put(81, 55)       {\scriptsize $ y $}
     \put(118, 55)    {\scriptsize $ z $} 
     \put(57, 99)     {\scriptsize $\phi'$}
     \put(97, 53)     {\scriptsize $\psi$}
     \put(59, 53)     {\scriptsize  $\phi$}
     \put(95, 99)   {\scriptsize $\psi'$}
     }\setlength{\unitlength}{1pt}}
  \end{picture}}
\end{tabular}}
\caption{Let $x^{[l]}$, $y^{[l]}$, $z^{[l]}$ be three $l$-codimensional defects, $g, g',g''$  domain walls between $x$ and $y$, $f,f',f''$ are walls between $y$ and $z$, and $\phi,\phi'$ and $\psi, \psi'$ are $(l+2)$-codimensional defects. First fusing horizontally then vertically must be compatible with first fusing vertically then horizontally. This compatibility leads to the equation (\ref{eq:h-v-compatibility}). 
}
\label{fig:h-v-compatibility}
\end{figure}

\medskip
In summary, an $(n+1)$-dimensional topological order with defects of all codimensions has an $(n+2)$-layered structure similar to that of an $(n+1)$-category, together with additional fusion structures, which also correspond to the composition of $i$-morphisms in an higher category, and certain (half-)braiding structures. We will show in the next subsection that these structures are automatically encoded in an $(n+1)$-category. 


\void{

\subsection{Fusion and braiding structures in an $(n+1)$-dimensional topological order}

In this subsection, we will discuss the fusion and braiding structures among all defects of all codimensions inside an $(n+1)$-dimensional topological order. These structures incorporate all macroscopically measurable data in a topological order.

In order to understand this tower of structures in different dimensions, we first organize these structures into higher morphisms in a higher category according to their codimensions. I hope that the general fusion and braiding can be encoded automatically in this hierarchal structures once we include the data of fusion between adjacent dimensions and the relations between these fusions. We will do it below in steps, each of which is motivated and explained in physical language. 

\medskip
\noindent 1.~{\it Cell decomposition}: First, we decompose an $n$-dimensional disk $\Sigma$ with defects into disconnected (open) $n$-cells, each of which is homeomorphic to an $n$-disk and free of any defects. These $n$-cells are connected by defects of codimension 1, including the trivial defect (or non-defect). Similarly, a $1$-codimensional defect can be decomposed into disconnected $n-1$-cells which are connected by defects of codimension 2, so on and so forth. By considering these $i$-cells for $0\leq i\leq n$, we obtain an $n$-category-like structure  on $\Sigma$. More precisely, objects are $n$-cells; $1$-morphisms are $(n-1)$-cells; $2$-morphisms are $(n-2)$-cells; ...; $n$-morphisms are $0$-cells. One can add $n+1$-morphisms which are data in time-direction. 

\medskip
\noindent 2.~{\it Particle-like excitations in $i$-cells}: Consider a system of particle-like excitations living in an $i$-cell. These excitations can fuse and braid according to the topology of an $i$-dimensional disk with two holes. So the fusion and braiding of these excitations give an $E_i$-category. 

\medskip
\noindent 3.~We would like to view each $i$-cell as a bulk-phase with a system of particle-like excitations. This amounts to assign each $i$-cell a system of particle-like excitations living in it, i.e. an unitary $E_i$-category. In other words, we associate to $i$-cells a labeling set of $E_i$-categories: $\EC^{[i]}$, $\ED^{[i]}$, $\EE^{[i]}$, etc. If an $(i-1)$-cell corresponds to a trivial domain wall between two $i$-cells, both of which are assigned to an $E_i$-category $\EC^{[i]}$, then this trivial domain wall is assigned to $\EC^{[i]}$ which is viewed as an $E_{i-1}$-category. In other words, $\one_{\EC^{[i]}|\cdots}^{[i-1]} = \EC_{\EC^{[i]}|\cdots}^{[i]}$.

\medskip
\noindent 4.~{\it Bulk-to-wall maps}: If two $(i+1)$-cells are connected by an $i$-cell (a domain wall), then particle-like excitations in each $(i+1)$-cell can fuse into the wall and become wall-excitations. Such fusion process gives arise to an $E_{i}$-functor. If we denote the system of particle-like excitations on the left (right) $i$-cell by $\EC_L^{[i+1]}$ ($\EC_R^{[i+1]}$) and that on the $i$-cell by $\ED^{[i]}$, we obtain the following structure: 
\begin{equation} \label{diag:i-cospan}
\raisebox{1em}{\xymatrix@R=0.5em{  
  & \ED^{[i]} & \\ \EC_L^{[i+1]} \ar[ur]^{L^{[i]}} & & \EC_R^{[i+1]} \ar[ul]_{R^{[i]}}
  }}
\end{equation}
where $L^{[i]}$ and $R^{[i]}$ are $E_i$-functors. Such structure is called a cospan in category theory. We will refer to it as a cospan at level $i$ (or an $i$-cospan). The highest level of cospan is $i=0$.

The $i$-cospan (\ref{diag:i-cospan}) endows $\ED^{[i]}$ with a structure of a $\EC_L^{[i+1]}$-$\EC_R^{[i+1]}$-bimodule. Furthermore, notice that particle-like excitations in the $i$-cell on the left (right) side of the wall can be ``half-braided" in the $(i+1)$-th direction with particle-like wall excitations in $\ED^{[i]}$\cite{fsv}\cite{kong-anyon}. This ``commutativity" of $L^{[i]}$ and $R^{[i]}$ in the $(i+1)$-th direction will allow us to define the composition of cospans mathematically\cite{dkr}. We will encode this commutativity in an efficient way in term of $E_i$-bimodule so that the composition is easy to define.  

The identity $i$-cospan is given by the trivial $i$-dimensional defect between two $(i+1)$-cells, i.e.
\begin{equation} \label{eq:identity-i-morphism}
\raisebox{1em}{\xymatrix@R=0.5em{  
  & \EC^{[i+1]} & \\ \EC^{[i+1]} \ar[ur]^{\id^{[i]}} & & \EC^{[i+1]} \ar[ul]_{\id^{[i]}}
  }}
\end{equation}
where the identity functor $\id^{[i]}$ is viewed as an $E_i$-functor.

\medskip
\noindent 5.~{\it Coherence isomorphisms for cospans}: Consider three adjacent levels $i-1$, $i$ and $i+1$ for $i>0$, we have
the following nested structures of cospans: 
\begin{equation} \label{diag:2-cospans}
\raisebox{4.5em}{\xymatrix@R=2.5em{
& \ED_1^{[i]}  \ar[d]^{L^{[i-1]}}  &  \\
\EC_L^{[i+1]} \ar[ur]^{L_1^{[i]}} \ar[dr]_{L_2^{[i]}} & 
\phi_L \Downarrow \,\,\,\, \EE^{[i-1]} \,\,\,\, \Downarrow \phi_R &
\EC_R^{[i+1]} \ar[ul]_{R_1^{[i]}} \ar[dl]^{R_2^{[i]}} \\
& \ED_2^{[i]}  \ar[u]^{R^{[i-1]}} & 
}}
\end{equation}
where the diagram is commutative up to isomorphisms 
\begin{equation}
\phi_L:  L^{[i-1]} \circ L_1^{[i]} \xrightarrow{\simeq} R^{[i-1]} \circ L_2^{[i]}
\end{equation} 
\vspace{-.7cm}
\begin{equation}
\phi_R:  L^{[i-1]} \circ R_1^{[i]} \xrightarrow{\simeq} R^{[i-1]} \circ R_2^{[i]}
\end{equation} 
between $E_{i-1}$-functors. Similar to the toric code model, we would like to show below that such $\phi_{L/R}$ already contain some information of braiding between particle-like excitations.

Actually, $\phi_L$ tell us how particle-like excitations in $\EC_L^{[i+1]}$ on left-hand side are braided with particle-like excitations in $\EE^{[i-1]}$. In general, it is only a partial braiding because $i$-d defects $\ED_k^{[i]}$ for $k=1,2$ are not transparent in general so that the excitations in $\EC_L^{[i+1]}$ can not cross the un-transparent wall to the other side and further cross back. 

But if both of the $i$-d defects $\ED_k^{[i]}$ for $k=1,2$ are transparent, i.e. all functors $L$ and $R$ are invertible, then we obtain a double braiding:
\begin{equation} \label{eq:double-braiding}
L^{[i-1]} \circ L_1^{[i]} \circ (L_2^{[i]})^{-1} \circ R_2^{[i]}   \xrightarrow{\phi_R^{-1} \circ \phi_L} L^{[i-1]} \circ R_1^{[i]}.
\end{equation}
When $i=1$, above double braiding is reduced to the usual double braiding of anyons. For example, when $\EC_L^{[2]}=\ED_1^{[1]}=\EE^{[0]}=\EC_R^{[2]}=
\ED_2^{[1]}=Z(\text{Rep}_{\Zb_2})$, the double braiding (\ref{eq:double-braiding}) exactly tell us what happens if an $e/m/\epsilon$-particle is moved around another $e/m/\epsilon$-particle along a circle in the toric code model. Note that, even in the case $i=1$, the braiding (\ref{eq:double-braiding}) is very general, it also includes the case in which an $e$-particle is moved around a general $0$-d defect depicted as the blue point in Fig.\,\ref{toric} if we choose categories and functors properly.

\medskip
\noindent 7.~{\it Upgrading $i$-cospans to $E_i$-bimodules}: Actually, the $i$-cospan given in (\ref{diag:i-cospan}) does not catch all the relations between $\EC_{L/R}^{[i+1]}$ and $\ED^{[i]}$. The $E_i$-functors $L^{[i]}$ and $R^{[i]}$ provide a $\EC_L^{[i+1]}$-$\EC_R^{[i+1]}$-bimodule structure on $\ED^{[i]}$ with the left and right actions given by:
$$
\rho_{L/R}^{[i]}: \EC_{L/R}^{[i+1]} \times \ED^{[i]} \xrightarrow{(L/R)^{[i]}} \ED^{[i]} \times \ED^{[i]} \to \ED^{[i]}. 
$$
Because objects in $\EC_{L/R}^{[i+1]}$ can ``half-braid" in $(i+1)$-th direction with objects in $\ED^{[i]}$, it provides a commutativity property of the actions $\rho_{L/R}^{[i]}$. A beautiful way to properly encode this extra commutativity is to require that the action $\rho_{L/R}^{[i]}$ is actually an $E_i$-functor\cite{lurie2}. Note that the self-action $\ED^{[i]} \times \ED^{[i]} \to \ED^{[i]}$ is not an $E_i$-functor in general. We will call a $\EC_L^{[i+1]}$-$\EC_R^{[i+1]}$-bimodule with $E_i$-actions as an $E_i$-bimodule. On the other hand, for $i>0$,  two $E_i$-functors $L^{[i]}$ and $R^{[i]}$ can be recovered from two $E_i$-actions: 
$$
\rho_{L/R}^{[i]}: \EC_{L/R}^{[i+1]} \times \ED^{[i]} \to \ED^{[i]}
$$
by the restriction $\rho_{L/R}^{[i]}(-, \one_{\ED^{[i]}})$. Therefore, we will upgrade $i$-cospans in (\ref{diag:i-cospan}) to an $E_i$-bimodules $\ED^{[i]}$ to encode more information. We will denote this $E_i$-bimodule by 
\begin{equation}  \label{eq:D-wall}
{}_{\EC_{L}^{[i+1]}} (\ED^{[i]})_{\EC_{R}^{[i+1]}}. 
\end{equation}
When $\ED^{[i]} = \EC^{[i+1]}$, where $\EC^{[i+1]}$ is viewed as an $E_i$-category, (\ref{eq:D-wall}) is the trivial bimodule which is assigned to trivial domain wall. 

We will use $E_i$-bimodules to define $(n-i)$-morphisms for our $\BF$-category. We want to remark that the language of $i$-cospan is still very convenient for certain discussion. 

\medskip
\noindent 8.~{\it Top level $i=0$}: At the top level $i=0$, the category $\ED^{[0]}$ is an ordinary unitary categories, and is equipped with an ordinary $\EC_L^{[1]}$-$\EC_R^{[1]}$-bimodule structures. The action
$$
\rho_{L/R}^{[0]}:  \EC_{L/R}^{[1]} \times \ED^{[0]} \to \ED^{[0]}
$$
is just an ordinary functor. In general, $\ED^{[0]}$ has more than one simple objects and does not have a canonical vacuum. Actually, all simple objects in $\ED^{[0]}$ can be viewed as candidates for the vacuum. For example, the blue point in Fig.\,\ref{toric} is a particle-like defect in toric code model. The physical degrees of freedom on this defect has two super-selection sectors which is given by two different eigenspaces of the stabilizer operator $Q$ defined in (\ref{eq:Q-op}). Which of these two sectors should be viewed as a vacuum is only a matter of convention as we can always replace the stabilizer operator $Q$ by $-Q$. 

Since we have bimodule categories at this top level, mathematically, it is very natural to introduce bimodule functors between these bimodule categories and even another layer: natural transformations between bimodule functors. But the physical meanings of these bimodule functors and natural transformations are unclear. For this reason, we will not add additional layers of structures. The standard way to truncate higher categorical structures is to require that each 0-cell is not assigned to an $E_0$-category but to the equivalence class of this $E_0$-bimodule. 

\medskip 
\noindent 9.~{\it Instantons}: Actually, in time direction, there is indeed another layer of physically relevant structures, which are nothing but instantons. The information of instantons has already encoded in the hom spaces of the $E_0$-bimodule that is assigned to a 0-cell. More precisely, if $x,y$ be objects in $\ED^{[0]}$, the hom space $\hom_{\ED_1^{[0]}}(x_1, x_2)$ is the space of instantons that link $x$ with $y$. 

\medskip 
\noindent 10.~{\it Tensor product of $E_i$-bimodules}: consider three adjacent $i$-cells labeled by three $E_{i+1}$-categories $\EC_1^{[i]}$, $\EC_2^{[i]}$, $\EC_3^{[i]}$ and connected by two $i$-cells, which are labeled by two $E_i$-bimodules:
$$
{}_{\EC_1^{[i]}} (\ED_{12}^{[i-1]})_{\EC_2^{[i]}} \quad \mbox{and} \quad 
{}_{\EC_2^{[i]}} (\ED_{23}^{[i-1]})_{\EC_3^{[i]}}.
$$
Assume that these two $i$-cells are aligned parallel to and not far away from each other (in $(i+1)$-th direction). When viewed from far away, these two $i$-cells can be viewed as a single $i$-cell given by the tensor product of two $E_i$-categories: 
\begin{equation} \label{eq:tensor-product}
{}_{\EC_1^{[i+1]}} (\ED_{12}^{[i]}\otimes_{\EC_2^{[i+1]}} \ED_{23}^{[i]})_{\EC_3^{[i+1]}},
\end{equation}
which is automatically an $E_i$-bimodule. 

\medskip 
\noindent 11.~{\it Strict associativity and unit properties}: The tensor product of $E_i$-bimodules should satisfy an associativity law. However, a ``non-trivial'' associator is not really physically detectable. 
Therefore, we can assume that the associativity of the compositions of morphisms hold strictly. Note that the higher categorical analogues of F-matrices can still be defined and non-trivial as they describe relations between different ``basis" of the same space of morphisms. Similarly, we can also assume that unit properties hold strictly. However, we can not replace $E_i$-bimodules by their equivalence classes because we still need braiding isomorphisms.

\medskip
\noindent 12.~{\it Upgrading $\phi_{L/R}$ to $\psi_{L/R}$}: Recall the coherence isomorphisms in (\ref{diag:2-cospans}). We need upgrade the coherence isomorphisms for $i$-cospans to those for $E_i$-bimodules. Notice that the excitations in $i+1$-cell $\EC_L^{[i]}$ acts on those on $i-1$-cell $\EE^{[i-1]}$ in different ways. This leads to the following diagram:
\begin{equation}  \label{diag:psi}
\xymatrix{
\EC_{L/R}^{[i+1]} \times \EE^{[i-1]} \ar[rr]^{(L/R)_1^{[i]}} \ar[dd]_{\simeq} && \ED_1^{[i]} \times \EE^{[i-1]} \ar[d]   \\
&   \psi_{L/R} \Downarrow \,\,\, \simeq    & \EE^{[i-1]} \\
\EE^{[i-1]} \times \EC_{L/R}^{[i+1]} \ar[rr]^{(L/R)_2^{[i]}} && \EE^{[i-1]} \times \ED_2^{[i]} \ar[u]  
}
\end{equation}
which is required to be commutative up to a natural isomorphism $\psi_{L/R}$. When we restrict the initial domain of two composed functors in (\ref{diag:psi}) to $\EC_{L/R}^{[i+1]} \times \one_{\EE^{[i-1]}}$, these two natural isomorphisms $\psi_{L/R}$ are nothing but $\phi_{L/R}$ in the diagram (\ref{diag:2-cospans}). Actually, $\phi_{L/R}$ has already encoded some braiding information as we will show later. But it is not enough for the general braiding between higher dimensional topological excitations.

\medskip
\noindent 13.~{\it Braiding between two excitations of the same codimension}: Consider two $(i-1)$-dimensional topological excitations $E_1^{[i]}$ and $E_2^{[i]}$ nested in a trivial $i$-dimensional excitation $C^{[m]}$ for $m\geq i+2$. Without lose generality, we can assume that $m=n$.  Since these two excitations are at least 2-codimensional, we can arrange $E_1^{[i]}$ and $E_2^{[i]}$ in a particular 2-dimensional plane, called $(jk)$-plane, according to the left diagram below (we hide the superscript ${}^{[i]}$ and ${}^{[n]}$ in diagrams): 
\begin{equation} \label{diag:braiding-eq-1}
\raisebox{2.5em}{
\xymatrix@R=1em@C=0.2em{ 
C \ar@/^1.2pc/[rr]^C \ar@/_1.2pc/[rr] & E_1 \Downarrow &  
C  \ar@/^1.2pc/[rr]^C  \ar@/_1.2pc/[rr] & C \Downarrow & C   \\
 & & & & \\
C \ar@/^1.2pc/[rr]^C \ar@/_1.2pc/[rr]_C & C \Downarrow &  
C  \ar@/^1.2pc/[rr]^C  \ar@/_1.2pc/[rr]_C & E_2 \Downarrow & C    
}}
\rightsquigarrow \raisebox{0.2em}{\xymatrix@R=1em@C=0.4em{
C \ar@/^1.2pc/[rr]^C \ar@/_1.2pc/[rr]_C & E_1\otimes_C E_2 \Downarrow & C
}}
\end{equation}
where the arrow $\rightsquigarrow$ represents the vertical composition followed by a horizontal composition. The $j$ and $k$ refer to the $j$-th direction (horizontal) and the $k$-th direction (vertical) in a local frame. By using the strictness of unit isomorphisms, the left diagram in above equation equals to the left diagram in the following equation: 
\begin{equation}   \label{diag:braiding-eq-2}
\raisebox{2.5em}{
\xymatrix@R=1em@C=0.2em{ 
C \ar@/^1.2pc/[rr]^C \ar@/_1.2pc/[rr] & C \Downarrow &  
C  \ar@/^1.2pc/[rr]^C  \ar@/_1.2pc/[rr] & E_2 \Downarrow & C   \\
 & & & & \\
C \ar@/^1.2pc/[rr]^C \ar@/_1.2pc/[rr]_C & E_1 \Downarrow &  
C  \ar@/^1.2pc/[rr]^C  \ar@/_1.2pc/[rr]_C & C \Downarrow & C    
}}
\rightsquigarrow \raisebox{0.2em}{\xymatrix@R=1em@C=0.4em{
C \ar@/^1.2pc/[rr]^C \ar@/_1.2pc/[rr]_C & E_2\otimes_C E_1 \Downarrow & C
}}
\end{equation}
in which the arrow $\rightsquigarrow$ is again the vertical composition followed by a horizontal composition. Physically, it is clear that two topological excitations $E_1^{[i]} \otimes_{\EC^{[n]}} E_2^{[i]}$ and $E_2^{[i]}\otimes_{\EC^{[n]}} E_1^{[i]}$ are exactly the same. Therefore, we introduce an braiding isomorphism in $(jk)$-plane: 
$$
\beta_{(jk)}^{[i]}: E_1^{[i]} \otimes_{\EC^{[n]}} E_2^{[i]} \xrightarrow{\simeq} E_2^{[i]}\otimes_{\EC^{[n]}} E_1^{[i]},
$$ 
which is an $E_i$-bimodule functor. Combining all such braidings for different $j,k$ in a local frame, we obtain a higher dimensional braiding $\beta^{[i]}$.

\medskip
\noindent 13.~{\it Braiding between excitations of different dimensions}: First, we consider the braiding between two adjacent dimensions. Let $E_1^{[i]}$ in (\ref{diag:braiding-eq-1}) be the trivial excitation $\EC^{[m]}$. Without lose generality, we can assume that $m=n$. Let $E_1^{[i-1]}$ be an $(i-1)$-dimensional excitation nested in the trivial $i$-dimensional excitation $E_1^{[i]}=\EC^{[m]}$. Let $E_2^{[i]}$ be the trivial $(i-1)$-dimensional defect nested in the $i$-dimensional defect $E_2^{[i]}$ in (\ref{diag:braiding-eq-1}). Let us imagine that all diagrams in (\ref{diag:braiding-eq-1}) are equipped with $i$-dimensional defects in the 3rd direction (vertical to the paper $(jk)$-plane) given by $E^{[i-1]}$ in the top-left eye-shape subdiagram, $\EC^{[m]}$ in the top-right eye-shape subdiagram, $\EC^{[m]}$ in the bottom-left eye-shape subdiagram and $E_2^{[i]}$ in the bottom-right eye-shape subdiagram. Then the same composition map $\rightsquigarrow$ will produce an $(i-1)$-dimensional excitation
\begin{equation} \label{eq:braiding-3}
{}_{E_2^{[i]}} (E_1^{[i-1]} \otimes_{\EC^{[m]}} E_2^{[i]})_{E_2^{[i]}} ,
\end{equation}
which is an $E_{i-1}$-bimodule, in the 3rd direction and nested in an $i$-dimensional defect $E_2^{[i]}$.

If we modify diagram (\ref{diag:braiding-eq-2}) in the same way, 
after we apply the composition map $\rightsquigarrow$, we obtain 
\begin{equation} \label{eq:braiding-4}
{}_{E_2^{[i]}} ( E_2^{[i]}  \otimes_{\EC^{[m]}} E_1^{[i-1]} )_{E_2^{[i]}}\, .
\end{equation}
It is clear that (\ref{eq:braiding-3}) and (\ref{eq:braiding-4}) describe the same topological excitations nested in excitation $E_2^{[i]}$. Therefore, we introducing a braiding isomorphism: 
$$
\beta_{(jk)}^{[i-1]}: E_1^{[i-1]} \otimes_{\EC^{[m]}} E_2^{[i]}  \xrightarrow{\simeq} E_2^{[i]} \otimes_{\EC^{[m]}} E^{[i-1]} ,
$$
which is an $E_{i-1}$-bimodule functor. Combining all such braidings for different $j,k$, we obtain a higher dimensional braiding $\beta^{[i]}$. 

Above discussion can be easily generalized to the braiding between two excitations of different dimensions. We will obtain a braiding in $(jk)$-plane, for $i'<i$, 
$$
\beta_{(jk)}^{[i']}: E_1^{[i']} \otimes_{\EC^{[m]}} E_2^{[i]}  \xrightarrow{\simeq} E_2^{[i]} \otimes_{\EC^{[m]}} E_1^{[i']} ,
$$
which is an $E_{i'}$-bimodule functor. Combining all such braidings for different $j,k$, we obtain a higher dimensional braiding $\beta^{[i']}$. 

\medskip
\noindent 14.~{\it Coherence isomorphisms and properties}: There should be a lot of coherence properties among all higher dimensional braidings (and with associators and unit isomorphisms). The complete set of such properties is out of reach now. However, the notion of higher dimensional braiding  and fusion can be encoded in the definition of higher category automatically. 
For example, in a 3-category $\EC$, for a given object $X$ and the identity 1-morphisms $\id_X$, the category $\hom_\EC(\id_X, \id_X)$
is automatically a braiding tensor category.

\bigskip
With all above structural analysis in mind, we are ready to give a provisional and intuitive mathematical description (not really a definition) of a $\BF_{n+1}$-category. For all categories, we defined or used in this paper, we assume that the hom spaces in all 1-categories appeared in this work are all finite dimensional Hilbert spaces and the associators, unit isomorphisms and braidings are all unitary. We will not state this finite and unitary properties explicitly in all definitions.

\begin{defn} {\rm
A $\BF_{n+1}$-category is an $n$-category with a single object given by a unitary $E_n$-category, $i$-morphisms given by $E_{n-i}$-bimodules categories for $i=1, \cdots, n-1$ and $0$-morphisms by the equivalence classes of $E_0$-bimodule categories. The identity morphisms are given by the trivial bimodules and the compositions of $i$-morphisms are given by the tensor products of $E_{n-i}$-bimodules. The associativity of the composition of morphisms and unit properties are strictly hold. It is also equipped with all higher dimensional braiding isomorphisms $\beta^{[i]}$ for all $i\leq n-2$ and many coherence isomorphisms like $\psi_{L/R}$ satisfying necessary coherence properties. 
}
\end{defn}

\begin{rema} {\rm
The hierarchical structure of $1,2,\cdots, n$ morphisms are designed to catch the information of fusions between topological excitations of different dimensions. The higher dimensional braidings are put in by hand based on the hierarchical structure of these morphisms. 
}
\end{rema}

\begin{rema} {\rm
As we discussed before, the information of instantons is encoded in the $n$-morphisms in above definition of a $\BF_{n+1}$-category. Introducing $(n+1)$-morphisms in order to catch the information of instantons is not physically natural.
}
\end{rema}

\begin{rema}  {\rm
To catch the enormous complexity and richness of higher dimensional phenomena, we must introduce new powerful mathematical language. A possible precise definition of $\BF$ category might be obtained by using the language used in \Ref{lurie, lurie2}. But the enormous complexity and technicality of the abstract language used there will hide some simple physical intuition. We will leave it to the future. 
}
\end{rema}

A $\BF$ category only describes a single $n$ space dimensional bulk phase. We would also like define another $n$-category that can describe multiple topological $n$-dimensional bulk phases. 
\begin{defn}[$\MBF_{n+1}$-category] {\rm
A $\MBF_{n+1}$-category is a strict $n$-category, with objects given by unitary $E_n$-categories, $i$-morphisms given by the equivalence classes of $E_{n-i}$-bimodules categories for $i=1, \cdots, n-1$ and $0$-morphisms by the equivalence classes of $E_0$-bimodule categories. The identity morphisms are given by the trivial bimodules and the compositions of $i$-morphisms are given by the tensor products of $E_{n-i}$-bimodules. The associativity of the composition of morphisms and unit properties are strictly hold. It is also equipped with all higher dimensional braiding isomorphisms $\beta^{[i]}$ for all $i\leq n-2$ and coherence isomorphisms like $\psi_{L/R}$ satisfying necessary coherence conditions. 
}
\end{defn}

\begin{rema} {\rm
In an $\MBF_{n+1}$-category, if the hom spaces between two objects are empty, it means that any domain wall between two topological phases associated to these two objects must be gapless. 
}
\end{rema}

}


\void{
\subsection{More examples of $\BF_{n+1}$-categories}

In this subsection, we will give some examples of $\BF_{n+1}$-categories for $n\leq 2$. 

\medskip
\noindent {\bf $\BF_{0+1}$-categories}: What is a $\BF_{0+1}$ category? As we will show later that such a category should describe a boundary of a $(1+1)$-topological phase. It was known that there is no 
non-trivial topological phase\cite{VCL0501,CGW1107}. Therefore, a $\BF_{0+1}$-categories should be an indecomposable unitary (implies semisimple) $\hilb$-module, which is nothing but $\hilb$ itself. The morphisms in $\hilb$ are instantons. 

Note that if the bimodule is not indecomposable. The corresponding $0+1$-phase is not stable. It will flows to the only stable one: the trivial phase given by $\hilb$. Unstable $0+1$-phase will often occur when we reduce a $0$-dimensional defect in higher dimensional bulk phase to a boundary of a $1+1$-phase via dimensional reduction as we will discuss later.  

\medskip
\noindent {\bf $\MBF_{0+1}$-categories}: There is only one $0+1$ topological phase. Therefore, the only $\MBF_{0+1}$-category is $\hilb$. 
 
\medskip 
\noindent {\bf $\BF_{1+1}$-categories}: Since the only $1+1$-topological phase is the trivial one, it corresponds to a trivial $\BF_{1+1}$-category which has a unique object $\hilb$ and a unique $1$-morphism $\hilb$, i.e. the identity isomorphism. As boundaries of $2+1$-phase, there are many non-trivial $\BF_{1+1}$ categories.  
\begin{enumerate}

\item A unitary fusion category $\EC$ can be viewed as a $\BF_{1+1}$-category which is a 1-category with a unique object $\EC$ and 1-morphism is given by the equivalence class of $\EC$ itself. The composition of 1-morphisms is given by the tensor product $\otimes_\EC$. This $\BF_{1+1}$-category can be viewed as a physical system with a unique 1-d bulk phase without any non-trivial higher dimensional defects (such as boundaries) other than the usual particle-like excitations in the bulk.

\item More generally, a $\BF_{1+1}$-category can be a 1-category with a unique object $\EC$ and 1-morphisms are given by the equivalence classes of unitary $\EC$-$\EC$-bimodule categories. 

\item $\EC$ can be a unitary multi-fusion category. In this case, the corresponding 1-dimensional topological phase is unstable because it create non-trivial ground state degeneracy even on a space with trivial topology. Such a 1-d system can occur as a defect line in a stable higher dimensional topological phase (??). 

\end{enumerate}

\medskip 
\noindent {\bf $\MBF_{1+1}$-categories}: 
\begin{enumerate}

\item A 1-category with objects being  a set of Morita equivalent unitary fusion categories $\EC$, $\ED$, $\EE$, etc.. The only 1-morphism between $\EC$ and $\ED$ is given by the equivalence class of the unique $\EC$-$\ED$-bimodule category ${}_\EC \EM_\ED$ which defines the Morita equivalence, i.e. $\EM\otimes_\ED \EM^\op \simeq \EC$ and $\EM^\op \otimes_\EC \EM \simeq \ED$. 

\item A 1-category with objects being unitary tensor categories $\EC$, $\ED$, $\EE$, etc. and 1-morphisms between $\EC$ and $\ED$ being the equivalence classes of $\EC$-$\ED$-bimodules.  

\end{enumerate}

\medskip
\noindent {\bf $\BF_{2+1}$-categories}: A unitary modular tensor category $\EC$ can be viewed as a $\BF_3$-category which is a 2-category with a unique object $\EC$, a unique 1-morphism given by the trivial $E_1$-bimodule ${}_{\EC} \EC_{\EC}$ and a unique $2$-morphism given by the equivalence class of the trivial $E_0$-bimodule ${}_{\EC} \EC_{\EC}$. This is just a physical system of anyons without any non-trivial defects of codimension 1 and 2. 

More generally, we can have a $\BF_{2+1}$-category which has a unique object $\EC$, 1-morphisms given by $E_1$-bimodules ${}_{\EC} \EM_{\EC}$ and $2$-morphisms given by the equivalence classes of $E_0$-bimodules. 

A typical example of such $\BF_{2+1}$-category comes from the Levin-Wen type of lattice models constructed in \Ref{KK1251} and based on a given unitary fusion category $\EC$. More precisely, this $\BF$-category, denoted by $\lw_3^{\EC}$, has 
\begin{enumerate}
\item a unique object $Z(\EC)$, which is the monoidal center of a unitary fusion category $\EC$; 
\item 1-morphisms, which are given by the unitary (multi-)fusion categories $Z(\EM):=\fun_{\EC|\EC}(\EM, \EM)$ of $\EC$-$\EC$-bimodule functors from $\EM$ to $\EM$, where $\EM$ is an (indecomposable) $\EC$-$\EC$-bimodule; 
\item 2-morphisms, which are given by the equivalence classes of the categories $Z(\EM, \EN):=\fun_{\EC|\EC}(\EM, \EN)$ for unitary $\EC$-$\EC$-bimodules $\EM$ and $\EN$. 
\end{enumerate}

\medskip
\noindent {\bf $\MBF_{2+1}$-categories}:
\begin{enumerate}

\item {\it A single bulk phase with a single boundary}: A unitary fusion category $\EC$ (for example, $\EC=\rep_G$ for a finite group $G$), together with the category $\hilb$ and an indecomposable unitary $\EC$-module $\EM$, forms a $\MBF_3$-category as follows: the objects are $Z(\EC)$ and $\hilb$; the only 1-morphism between these two objects is given by the unitary fusion category $\EC_\EM^\ast:=\fun_\EC(\EM, \EM)$; the only 2-morphism is given by the category $\fun_\EC(\EM, \EM)$ viewed as an $E_0$-biomdule category. The $E_1$-module structure on $\EC_\EM^\ast$ is defined by the monoidal functor $Z(\EC) \to \EC_\EM^\ast$, which is also called $\alpha$-induction\cite{ostrik}. 

This $\BF_3$-category describes a system of anyons $\EC$ in a gapped 2-d bulk-phase together with a gapped boundary with boundary excitations given by $\ED$.

\item {\it A single bulk-phase with multiple boundaries}: A unitary fusion category $\EC$, together with the category $\hilb$ and a set of indecomposable unitary $\EC$-modules $\EM, 
\EN, \cdots \in I$, forms a $\MBF_3$-category as follows: the objects are $\EC$ and $\hilb$; 1-morphisms are $E_1$-modules $\EC_\EM^\ast$ over $Z(\EC)$ for $\EM \in I$;  2-morphisms are given by the categories $\fun_\EC(\EM, \EN)$ for $\EM, \EN \in I$. 

This $\MBF_3$-category describes a single bulk-phase given by $Z(\EC)$ with a family of boundary theories given by $\EM$, $\EN$, etc., and $0$-d defects on the boundaries. 

\item {\it A disconnected $\MBF_{2+1}$-category}: it contains two disconnected $\BF_{2+1}$-subcategories defined by a chiral modular tensor category $\EC$ and the category $\hilb$. There is no morphisms between these two $\BF_{2+1}$ subcategories because the boundary of a $2+1$-dimensional chiral topological order must be gapless. 

\item {\it Multi-phases from Levin-Wen models}: Given a category of unitary fusion categories $\EC, \ED, \EE, \cdots$ with bimodules $\EM, \EN$ as morphisms. A $\MBF_3$ category consists of
\bnu
\item objects are the monoidal center of unitary fusion categories: $Z(\EC), Z(\ED), Z(\EE), \cdots$, which are bulk excitations in a $\EC$-lattice, $\ED$-lattice, $\EE$-lattice, respectively;

\item 1-morphisms between $Z(\EC)$ and $Z(\ED)$ are the unitary fusion categories: $Z(\EM):=\fun_{\EC|\ED}(\EM, \EM)$ for all $\EC$-$\ED$-bimodules. They describe the excitations on a domain wall (or a defect line) defined by an $\EM$-lattice on the wall. These categories are $E_1$ $Z(\EC)$-$Z(\ED)$-bimodules. The associated $1$-cospan is given by bulk-to-wall maps $Z(\EC) \to Z(\EM) \leftarrow Z(\ED)$.

\item The unique 2-morphism between $Z(\EM)$ and $Z(\EN)$ is $Z(\EM, \EN):=\fun_{\EC|\ED}(\EM, \EN)$ which is automatically a $Z(\EM)$-$Z(\EN)$-bimodule. 
\enu
\end{enumerate}
}

\subsection{Fusion and braiding structures in $n$-categories} \label{sec:fb-in-ncat}

In this subsection, we will show how the notion of $n$-categories automatically encodes all the fusion and braiding structures. This result is well-known in mathematics. We will follows Baez-Dolan in \Ref{bd} and Baez's note \Ref{baez}. A physics oriented reader should keep in mind the following dictionary: an $(n+1)$-category with only one object roughly corresponds to an $(n+1)$-dimensional topological order; 1-morphisms correspond to defects of codimension 1; and 2-morphisms correspond to defects of codimension 2, so on and so forth, $(n+1)$-morphisms are defects of codimension $(n+1)$ lying in the time direction and are also called instantons. 

For convenience, we introduce some notations. Let $x^{[l]}$ (or $x$ for simplicity) be an $l$-morphism in an $n$-category ($n>l$). We denote the full sub-$(n-l)$-category consists of one object $x$ and $\hom(x,x)$ as $\hat{x}$. The notation $\id_x^k$ means $\id_x^1:=\id_x$, $\id_x^2:=\id_{\id_x}$, so on and so forth. For $l>0$, an $l$-morphism $x^{[l]}: \id_\ast^{l-1} \to \id_\ast^{l-1}$ is called {\it a pure $l$-morphism}. A $1$-morphism is automatically pure. 

\medskip
A $0$-category is just a set. We define a $\Cb$-linear $0$-category to be a set with a $\EC$-linear structure, i.e. a vector space over $\Cb$. 

\medskip
In a $1$-category $\EC$ (see also Section\,\ref{CatT}), there are the set of objects $\ob(\EC)$ and the set of morphisms $\hom_\EC(x,x)$ for $x\in \ob(\EC)$. The composition of 1-morphisms are required to satisfy the associativity and unit properties, i.e. $(h\circ g) \circ f = h\circ (g\circ f)$, $f\circ \id_x = f$, $\id_y \circ f = f$ for $f:x\to y$, $g: y\to z$ and $h: z \to u$. Therefore, the set $\hom_\EC(x,x)$ is a monoid with the unit given by the identity 1-morphism $\id_x$ and the multiplication given by the composition of 1-morphisms. 1-categories that are relevant to physics are often $\Cb$-linear Abelian 1-categories, which requires, in particular, each hom space $\hom_\EC(x,y)$ to be a finite dimensional vector space over $\Cb$, and the direct sum (or the coproduct) $x\oplus y$ is well-defined. The direct sum is characterized by the fact that $\hom_\EC(z, x\oplus y) \simeq \hom_\EC(z, x) \oplus \hom_\EC(z, y)$ as vector spaces. In a $\Cb$-linear 1-category with one object $\ast$, $\hom(\ast,\ast)$ is a 0-category with a composition, i.e. an algebra over $\Cb$. 

Examples of $\Cb$-linear Abelian 1-categories are abundant. We list a few that are familiar to physicists. 
\begin{enumerate}
\item The category of vector spaces over $\Cb$, denoted by $\vect$ with 1-morphisms given by linear map, i.e. $\hom_{\vect}(x,x) = \End_\Cb(x)$. 
\item The category of representations of a group $G$, denoted by $\rep_G$, with 1-morphisms given by linear maps that intertwine the $G$-action. 
\end{enumerate}

In a $2$-category $\EC$ (see also Section\,\ref{CatT}), there are the set of object $\ob(\EC)$ and a set of 1-morphisms $\{ f: x \to y\}$ for $x,y\in \ob(\EC)$ and a set of 2-morphisms $\{ \phi: f\Rightarrow g\}$ for two 1-morphisms $f,g: x\to y$ (see the diagram below). 
$$
\raisebox{1em}{
\xymatrix@R=1em@C=0.2em{ 
x \ar@/^1.2pc/[rr]^f \ar@/_1.2pc/[rr]_g & \Downarrow \phi &  y 
}}
$$
The 2-morphisms can be composed. Namely, for $\phi: f \Rightarrow g$ and $\psi: g\Rightarrow h$, their composition is also called {\it vertical composition} and will be denoted by $\psi \bullet \phi$, i.e.
$$
\xy 
(-8,0)*+{x}="1"; 
(8,0)*+{y}="2"; 
{\ar "1";"2"}; 
{\ar@/^1.75pc/ "1";"2"}; 
{\ar@/_1.75pc/ "1";"2"}; 
{\ar@{=>}^{\phi} (-1,5)*{};(-1,2)*{}} ; 
{\ar@{=>}^{\psi} (-1,-2)*{};(-1,-5)*{}} ; 
\endxy \quad \xrightarrow{\bullet} \quad 
\xy 
(-8,0)*+{x}="1"; 
(8,0)*+{y}="2"; 
{\ar@/^1.25pc/ "1";"2"}; 
{\ar@/_1.25pc/ "1";"2"}; 
{\ar@{=>}^{\psi\bullet \phi} (-2,3)*{};(-2,-3)*{}} ; 
\endxy
\,\, .
$$
The associativity and unit properties hold on the nose. Namely, we have
$$
\psi \bullet (\phi \bullet \chi) = (\psi \bullet \phi) \bullet \chi, \quad \psi \bullet \id_f = \psi, \quad \id_g \bullet \phi = \phi. 
$$
Therefore, these 1-morphisms and 2-morphisms form a 1-category $\hom_\EC(x,y)$ with objects being 1-morphisms in $\EC$ and 1-morphisms being 2-morphisms in $\EC$. 

The 1-morphisms in $\EC$ can also be composed. This composition is, by definition, a functor
$$
\hom_\EC(x,y) \times \hom_\EC(y,z) \xrightarrow{\circ} \hom_\EC(x,z)
$$
which is illustrated in the following diagrams: 
\begin{equation} \label{diag:h-composition}
\raisebox{0.3em}{
\xymatrix@R=1em@C=0.2em{ 
x \ar@/^1.2pc/[rr]^f \ar@/_1.2pc/[rr]_{f'} & \Downarrow \phi &  y 
\ar@/^1.2pc/[rr]^g \ar@/_1.2pc/[rr]_{g'} & \Downarrow \psi & z
}}
\quad \mapsto \quad
\raisebox{0.3em}{
\xymatrix@R=1em@C=0.2em{ 
x \ar@/^1.2pc/[rr]^{g\circ f} \ar@/_1.2pc/[rr]_{g' \circ f'} & \Downarrow \psi \circ \phi &  z
}}
\end{equation}
which induces a new composition, called {\it horizontal composition}, among 2-morphisms, and is denoted by $\psi \circ \phi$. 

The composition $\circ$ of 1-morphisms also satisfies similar associativity and unit properties. The difference is that they do not hold on the nose, but hold up to invertible 2-morphisms. For example, for $x,y,z\in \ob(\EC)$ and $x\xrightarrow{f} y \xrightarrow{g} z \xrightarrow{h} u$, there are an associator 2-isomorphism 
$$
\alpha_{h,g,f}: h \circ (g\circ f) \Rightarrow (h\circ g) \circ f
$$ 
and unit 2-isomorphisms 
$$
l_f: \id_y \circ f \Rightarrow f, \quad 
r_f: f \circ \id_x \Rightarrow f.
$$ 
The associator and unit 2-isomorphisms satisfy the usual pentagon relations and triangle relations. Then it is easy to see that the 1-category $\hom_\EC(x,x)$ is a monoidal category with the tensor unit given by $\id_x$ and the tensor product $\otimes$ given by the composition of 1-morphisms. The unit properties of horizontal composition $\circ$ does not play an explicit role in physics. For convenience, we assume that the unit properties of horizontal composition hold on the nose, i.e.
\begin{equation} \label{eq:unit}
\id_y \circ f = f = f \circ \id_x 
\end{equation}

We give a few examples of 2-categories below: 
\begin{enumerate}
\item Any 1-category can always be lifted to a 2-category by adding identity 2-morphisms. 
\item Any monoidal 1-category can be viewed as a 2-category with a single object. 
\item The category of 1-categories is a 2-category with 1-morphisms given by functors and 2-morphisms by natural transformations. 
\item The category of algebras in $\vect$ with 1-morphisms given by bimodules and 2-morphisms by bimodule maps. More explicitly, for algebras $a, b, c$ in $\vect$, the bimodule ${}_am_b, {}_bn_c$ are two 1-morphisms $a \to b$ and $b\to c$, respectively. The composition of $m$ and $n$ is defined by the tensor product $m\otimes_b n$.  
\end{enumerate}

In a 2-category, the set of 2-morphisms satisfying an interesting commutativity property, which follows from an argument of Eckmann-Hilton for the commutativity of the higher homotopy groups and generalizes to higher categories\cite{bd}. Consider four 2-morphisms in the following diagram:
\begin{equation} \label{diag:braiding-eq-1}
\xy 
(-16,0)*+{x}="4"; 
(0,0)*+{x}="6"; 
{\ar "4";"6"}; 
{\ar@/^1.75pc/^{\id_x} "4";"6"}; 
{\ar@/_1.75pc/_{\id_x} "4";"6"}; 
{\ar@{=>}^<<<{\scriptstyle \phi} (-9,5)*{};(-9,2)*{}} ; 
{\ar@{=>}^<<{\scriptstyle \id_{\id_x}} (-10,-2)*{};(-10,-5)*{}} ; 
(0,0)*+{x}="4"; 
(16,0)*+{x}="6"; 
{\ar "4";"6"}; 
{\ar@/^1.75pc/^{\id_x} "4";"6"}; 
{\ar@/_1.75pc/_{\id_x} "4";"6"}; 
{\ar@{=>}^<<<{\scriptstyle \id_{\id_x}} (6,5)*{};(6,2)*{}} ; 
{\ar@{=>}^<<{\scriptstyle \psi} (8,-2)*{};(8,-5)*{}} ; 
\endxy 
\rightsquigarrow \raisebox{0.2em}{\xymatrix@R=1em@C=0.4em{
x \ar@/^1.2pc/[rr]^{\id_x} \ar@/_1.2pc/[rr]_{\id_x} & \Downarrow \psi \circ \phi & x
}}
\end{equation}
where the arrow $\rightsquigarrow$ represents two vertical composition followed by a horizontal composition. By the unit property, we can switch the order of the 2-morphisms in each row in above diagram, we obtain
\begin{equation} \label{diag:braiding-eq-2}
\xy 
(-16,0)*+{x}="4"; 
(0,0)*+{x}="6"; 
{\ar "4";"6"}; 
{\ar@/^1.75pc/^{\id_x} "4";"6"}; 
{\ar@/_1.75pc/_{\id_x} "4";"6"}; 
{\ar@{=>}^<<<{\scriptstyle \id_{\id_x}} (-10,5)*{};(-10,2)*{}} ; 
{\ar@{=>}^<<{\scriptstyle \psi} (-9,-2)*{};(-9,-5)*{}} ; 
(0,0)*+{x}="4"; 
(16,0)*+{x}="6"; 
{\ar "4";"6"}; 
{\ar@/^1.75pc/^{\id_x} "4";"6"}; 
{\ar@/_1.75pc/_{\id_x} "4";"6"}; 
{\ar@{=>}^<<<{\scriptstyle \phi} (8,5)*{};(8,2)*{}} ; 
{\ar@{=>}^<<{\scriptstyle \id_{\id_x}} (6,-2)*{};(6,-5)*{}} ; 
\endxy 
\rightsquigarrow \raisebox{0.2em}{\xymatrix@R=1em@C=0.4em{
x \ar@/^1.2pc/[rr]^{\id_x} \ar@/_1.2pc/[rr]_{\id_x} & \Downarrow \phi \circ \psi & x
}}
\end{equation}
where the arrow $\rightsquigarrow$ represents again two vertical composition followed by a horizontal composition. The axiom of 2-category requires certain compatibility between the vertical and horizontal composition. For $x\xleftarrow{g,g',g''} y \xleftarrow{f,f',f''} z$ and $g\xrightarrow{\phi} g'\xrightarrow{\phi'} g''$, it says 
\begin{equation} \label{eq:h-v-compatibility-2}
(\psi \circ \phi) \bullet (\psi' \circ \phi') = (\psi \bullet \psi' ) \circ (\phi \bullet \phi'),
\end{equation}
which is nothing but the equation (\ref{eq:h-v-compatibility}) and further leads to the following identities: 
\begin{align}
\psi \circ \phi &= (\psi \bullet 1)\circ (1 \bullet \phi) = (\psi \circ 1) \bullet (1\circ \phi)  \nn
&= \psi \bullet \phi  \nn
&= (1\circ \psi) \bullet (\phi \circ 1) = (1\bullet \phi) \circ (\psi \bullet 1)   \nn
&= \phi \circ \psi.   
\end{align}
As a consequence, the set $\hom_\EC(\id_x, \id_x)$ must be a commutative monoid. We will refer to this commutativity as 2-dimensional commutativity.

\medskip
In a $3$-category $\EC$, there are $0,1,2,3$-morphisms (see the diagram below),
$$
\xy 0;/r.22pc/:
(0,15)*{};
(0,-15)*{};
(0,8)*{}="A";
(0,-8)*{}="B";
{\ar@{=>}@/_1pc/ "A"+(-4,1) ; "B"+(-3,0)_{\phi}};
{\ar@{=}@/_1pc/ "A"+(-4,1) ; "B"+(-4,1)};
{\ar@{=>}@/^1pc/ "A"+(4,1) ; "B"+(3,0)^{\psi}};
{\ar@{=}@/^1pc/ "A"+(4,1) ; "B"+(4,1)};
{\ar@3{->} (-4,0)*{} ; (6,0)*+{}^{i}};
(-15,0)*+{x}="1";
(15,0)*+{y}="2";
{\ar@/^2.75pc/ "1";"2"^{f}};
{\ar@/_2.75pc/ "1";"2"_{g}};
\endxy 
$$
where $0$-morphisms are objects. Note that, for $x,y\in \ob(\EC)$, $\hom_\EC(x,y)$ is a 2-category; and, for $f,g: x\rightarrow y$, $\hom_\EC(f,g)$ is a 1-category; 
and, for $\phi, \psi: f\Rightarrow g$, $\hom_\EC(\phi, \psi)$ is a 0-category.

The 3-morphisms can be composed with the associativity and unit properties that
hold on the nose. The 2-morphisms can be composed so that the associativity and
unit properties hold up to 3-isomorphisms that satisfy the pentagon and
triangle relations. The 1-morphisms can be composed and satisfy a new kind of
associativity and unit properties such that the associator and unit
isomorphisms satisfying the pentagon relations and the triangle relations only
up to 3-isomorphisms which satisfy further coherence properties\cite{KK1251, kong-icmp12, LW1384}. 

The composition of 1-morphisms (or 2-morphisms) defines a tensor product of any pair of 1-morphisms (or 2-morphisms). Therefore, in a 3-category $\EC$, the 2-category $\hom_\EC(x,x)$ is a monoidal 2-category for $x\in \ob(\EC)$, and the 1-category $\hom_\EC(f,f)$, for $f: x\rightarrow y$, is a monoidal 1-category. We denote $\psi \circ \phi$, for $\phi, \psi: f \Rightarrow f$, by $\phi\otimes \psi$. Similar to 2-category case, 2-morphisms in 3-category also satisfy some commutativity when $f=\id_x$. The difference is that, due to the existence of 3-morphisms, the 2-dimensional commutativity does not hold on the nose but only up to a 3-isomorphism. In other words, there is an 3-isomorphism 
$$
c_{\phi, \psi}:   \phi \otimes \psi \xrightarrow{\simeq} \psi \otimes \phi
$$
for each $\phi,\psi: \id_x \Rightarrow \id_x$. These 3-isomorphisms are required to satisfy certain coherence properties, which includes the usual hexagon relations for a braiding tensor category. As a consequence, the 1-category $\hom_\EC(\id_x, \id_x)$ is a braided tensor category. 

We give two examples of 3-category below: 
\begin{enumerate}
\item the category of 2-categories is a 3-category with objects given by 2-categories, 1-morphisms given by 2-functors,  2-morphisms given by 2-natural transformations and 3-morphisms given by modifications.
\item the category $\EF\mathrm{us}$ of fusion categories is a 3-category with objects given by fusion categories, 1-morphisms by bimodule categories, 2-morphisms by bimodule functors and 3-morphisms by natural transformations between bimodule functors. 

In the 3-category $\EF\mathrm{us}$, let $\EC$ be an object, or equivalently, a fusion category. Then the identity 1-morphism $\id_\EC$ is simply the trivial $\EC$-$\EC$-bimodule ${}_\EC\EC_\EC$. Then the 1-category $\hom_{\EF\mathrm{us}}(\id_\EC, \id_\EC)$ is nothing but the category $\fun_{\EC|\EC}(\EC, \EC)$ of $\EC$-$\EC$-bimodule functors from ${}_\EC\EC_\EC$ to ${}_\EC\EC_\EC$. This category, denoted by $Z(\EC) := \fun_{\EC|\EC}(\EC, \EC)$, is a braided monoidal category, which is also called the monoidal center of $\EC$. 
\end{enumerate}

For an object $x$ in a 3-category $\EC$, the 3-morphisms in $\hom_\EC(\id_{\id_x}, \id_{\id_x})$ satisfy a new kind of commutativity. More precisely, two 3-morphisms can be composed in 3 different ways, in which $\circ_3$ arise from the usual composition of 3-morphisms, $\circ_2$ from the composition of 2-morphisms and $\circ_1$ from the composition of 1-morphisms. If we illustrate these three compositions by a diagram similar to the first diagram in (\ref{diag:braiding-eq-1}), it will give a 3-dimensional diagram in which $\circ_1$ is horizontal ($x$-direction), $\circ_2$ is vertical ($y$-direction) and $\circ_3$ is in the third direction ($z$-direction). This leads to a new kind of commutativity which contains three independent 2-dimensional commutativity's in $xy$-plane, $xz$-plane and $yz$-plane. We will refer to this kind of commutativity as 3-dimensional commutativity. 

\medskip
An $n$-category $\EE$ can be viewed as a category enriched by $(n-1)$-categories. Namely, the hom spaces of an $n$-category are $(n-1)$-categories. All $i$-morphisms for $i>0$ can be composed. Therefore, $\hom_\EE(x,x)$ is a monoidal $(n-1)$-category for $x\in \ob(\EE)$; and the $(n-2)$-category $\hom_\EE(\id_x, \id_x)$ is monoidal and equipped with a 2-dimensional braiding structure which arises from the compositions of 1,2-morphisms; and the $(n-3)$-category $\hom_\EE(\id_{\id_x}, \id_{\id_x})$ is monoidal and equipped with a 3-dimensional braiding structure which arises from the compositions $1,2,3$-morphisms. 
For example, 
\begin{enumerate}
\item when $n=4$, $\hom_\EE(\id_x, \id_x)$ is a braided monoidal 2-category\cite{kapranov-voevodsky, baez-neuchl}, 
\item when $n=4$, the 3-morphisms in $\hom_\EE(\id_{\id_x}, \id_{\id_x})$ form a 1-category equipped with a 3-dimensional braiding structure. This 1-category is nothing but symmetric monoidal 1-category. 
\item when $n=5$, $\hom_\EE(\id_{\id_x}, \id_{\id_x})$ is a 2-category equipped with a 3-dimensional braiding structure. This 2-category is called a weakly involuntary monoidal 2-category\cite{bd}. 
\end{enumerate}
More examples of this type was shown in \Ref{bd} and was called $k$-tuply monoidal $n$-categories. 

\medskip
For us, we will use $(n+1)$-category to describe a $(n+1)$-dimensional topological order. The 1-morphisms correspond to defects of codimension 1; and the 2-morphisms correspond to defects of codimension 2, so on and so forth. Notice that only defects of codimension $2$ or higher can be braided. This coincides with the commutativity of 2- or higher morphisms in $\hom_\EE(\id_x, \id_x)$ in an $n$-category $\EE$. 

Note that braiding between defects of different dimensions are automatically encoded in the braiding between defects of the same dimension. For example, one can braid a particle-like excitation with a string-like excitation. In order to see the braiding, the particle should be viewed as a $0$-dimensional defect nested in a trivial string; and the string-like excitation should be viewed as a string equipped with the trivial particle. Then this braiding can be encoded by the braiding between the string and the trivial string. 

We will illustrate this structure in the case when the spatial dimension $n=3$. In this case, string-like excitations, or 2-codimensional defects nested in the trivial 1-codimensional defect, form a braided monoidal 2-category $\ED$. The braiding in $\ED$ is a pseudo-natural isomorphism $R: \otimes \to \otimes^\op$ between two tensor product functors, where $\otimes^\op$ is defined by $a \otimes^\op b:= b\otimes a$ for $a,b\in \ob(\ED)$. The defining data of such a pseudo-natural isomorphism includes\cite{kapranov-voevodsky,baez-neuchl}:
\begin{enumerate}
\item equivalences $R_{a,b}: a\otimes b \xrightarrow{\simeq} b\otimes a$ for $a,b\in \ob(\ED)$ 
\item invertible 2-morphisms $R_{f,g}$: 
\begin{equation} \label{diag:R-ab}
\raisebox{1em}{
\xymatrix@R=1.2em@C=1.2em{ 
&  a' \otimes b' \ar[dr]^{R_{a',b'}} & \\
a \otimes b \ar[ur]^{f\otimes g} \ar[dr]_{R_{a,b}}  & \Downarrow R_{f,g} & b' \otimes a' \\
& b \otimes a \ar[ur]_{g\otimes f} &  
}}
\end{equation}
for each pair of 1-morphisms $(a\xrightarrow{f} a', b\xrightarrow{g} b')$. 
\end{enumerate}
$R_{f,g}$ is the braiding between the particle-like excitation $f$, which connects the string $a$ and $a'$, and
the particle-like excitation $g$, which connects the string $b$ and $b'$. The braiding discussed in the previous paragraph is a special case of the pair $(R_{a,b}, R_{f,g})$ when $a=1_\ED=a'$, $b=b'$ and $g=\id_b$. 

Similar to $R_{f,g}$, in an $(n+1)$-topological order, the braiding of an $i$-codimensional defect $x$ and a $j$-dimensional defect $y$ with arbitrary decoration of higher codimensional defects for $i\geq j$ is encoded in the braiding between the trivial $j$-codimensional defect and $y$. If no decoration of higher codimensional defect is specified, it automatically means that the default decorations by the trivial higher dimensional defects up to codimension $n$ (excluding the instantons) are chosen. Therefore, all the information of braiding is encoded in the braiding among defects of the same codimension.

\begin{rema} \label{rema:n-braiding}
Notice that the data in the braiding $R_{x,y}: x\otimes y \to y\otimes x$ boils down to a family of $k$-isomorphisms for $j<k\leq n+1$. Since, for $j<k<i$, the $k$-morphisms are trivial due to our convention (\ref{eq:unit}), the braiding data further boils down to a family of $k$-isomorphisms for $i\leq k\leq n+1$. If we assume that Yoneda lemma holds for any properly defined $m$-categories for all $m\in \Zb_+$, then we can further reduce the data of a $k$-isomorphisms to a  set of data of $(n+1)$-isomorphisms, which is an uncontrollable large set in general. But if we assume certain finiteness to our $(n+1)$-category as we will do in the next subsection, the braiding data can be reduced to a finite number of $(n+1)$-isomorphisms, which are directly measurable in physics. More precisely, these finite number of $(n+1)$-isomorphisms are invertible linear maps between two spaces of lowest energy states associated to two configurations of excitations including at least the $i$-codimensional defect $x$ and $j$-codimensional defect $y$, and these two configurations can be deformed from one to the other by moving $x$ around $y$ without crossing other defects. 
\end{rema}

\medskip
Recall the half-braiding discussed in Section\,\ref{sec:defect-n-toporder}. This structure is also automatically encoded in the structure of higher category. To see this, consider the following diagrams. The composition is obtained by first composed horizontally then vertically, 
\begin{equation} \label{diag:half-braiding-1}
\xy 
(-16,0)*+{x}="4"; 
(0,0)*+{x}="6"; 
{\ar "4";"6"}; 
{\ar@/^1.75pc/^{\id_x} "4";"6"}; 
{\ar@/_1.75pc/_{\id_x} "4";"6"}; 
{\ar@{=>}^<<<{\scriptstyle \phi} (-9,5)*{};(-9,2)*{}} ; 
{\ar@{=>}^<<{\scriptstyle \id_{\id_x}} (-10,-2)*{};(-10,-5)*{}} ; 
(0,0)*+{x}="4"; 
(16,0)*+{y}="6"; 
{\ar "4";"6"}; 
{\ar@/^1.75pc/^{f} "4";"6"}; 
{\ar@/_1.75pc/_{g} "4";"6"}; 
{\ar@{=>}^<<<{\scriptstyle \id_f} (6,5)*{};(6,2)*{}} ; 
{\ar@{=>}^<<{\scriptstyle \psi} (8,-2)*{};(8,-5)*{}} ; 
\endxy 
\rightsquigarrow \raisebox{0.2em}{\xymatrix@R=1em@C=0.4em{
x \ar@/^1.2pc/[rr]^{\id_x} \ar@/_1.2pc/[rr]_{\id_x} & \Downarrow \psi \bullet \phi & x
}}
\end{equation}
where $\phi$ in the second diagram is actually a abbreviation for $\id_f \circ \phi$. 
Due to the unit property of the identity $(l+2)$-morphisms composed vertically, we can exchange the $(l+2)$-morphisms in the first row in the first diagram in (\ref{diag:half-braiding-1}) with the second row. We have he following diagrams and composition. 
\begin{equation} \label{diag:half-braiding-2}
\xy 
(-16,0)*+{x}="4"; 
(0,0)*+{x}="6"; 
{\ar "4";"6"}; 
{\ar@/^1.75pc/^{\id_x} "4";"6"}; 
{\ar@/_1.75pc/_{\id_x} "4";"6"}; 
{\ar@{=>}^<<<{\scriptstyle \id_{\id_x}} (-10,5)*{};(-10,2)*{}} ; 
{\ar@{=>}^<<{\scriptstyle \phi} (-9,-2)*{};(-9,-5)*{}} ; 
(0,0)*+{x}="4"; 
(16,0)*+{y}="6"; 
{\ar "4";"6"}; 
{\ar@/^1.75pc/^{f} "4";"6"}; 
{\ar@/_1.75pc/_{g} "4";"6"}; 
{\ar@{=>}^<<<{\scriptstyle \psi} (8,5)*{};(8,2)*{}} ; 
{\ar@{=>}^<<{\scriptstyle \id_g} (6,-2)*{};(6,-5)*{}} ; 
\endxy 
\rightsquigarrow \raisebox{0.2em}{\xymatrix@R=1em@C=0.4em{
x \ar@/^1.2pc/[rr]^{\id_x} \ar@/_1.2pc/[rr]_{\id_x} & \Downarrow \phi \bullet \psi & x
}}
\end{equation}
where $\phi$ in the second diagram is an abbreviation for $\id_g \circ \phi$. 
The axioms of the $(n+1)$-category requires that there is a half-braiding 
\begin{equation}  \label{eq:half-braiding}
c_{\phi, \psi}: \phi \bullet \psi \xrightarrow{\simeq} \psi \bullet \phi.
\end{equation}
More precisely, $c_{\phi, \psi}$ is a natural transformation between the two functors 
$\hom(\id_x, \id_x) \boxtimes \hom(f,g) \to \hom(f,g)$ described by (\ref{diag:braiding-eq-1}) and (\ref{diag:braiding-eq-2}). 

Another way to visualize the half braiding is to look at the following diagram: 
\begin{equation} \label{diag:id-f-g-phi-2}
\raisebox{3.7em}{
\xymatrix@R=1.8em@C=0.3em{ 
&  & \hom(f,f) \ar[d]^{\psi_\ast} & & \\
\hom(\id_x,\id_x)  \ar[urr]^{L_f} \ar[drr]_{L_g}  & \Downarrow \beta_L & \hom(f,g) & \Downarrow \beta_R & 
\hom(\id_y,\id_y) \ar[dll]^{R_g} \ar[ull]_{R_f} \\
& & \hom(g,g) \ar[u]^{\psi^\ast} & &  
}}
\end{equation}
where $L_{f/g}$ and $R_{f/g}$ are functors defined by the composition (also called left/right bulk-to-wall maps) and
$\psi_\ast = \psi \circ -$ and $\psi^\ast = - \circ \psi$. Above diagram is commutative up to two isomorphisms
\begin{align}  
\beta_L: \psi_\ast \circ L_f  & \xrightarrow{\simeq} \psi^\ast \circ L_g  \label{eq:left-right-half-braiding-ncat-1} \\
\beta_R: \psi_\ast \circ R_g & \xrightarrow{\simeq} \psi^\ast \circ R_f  \label{eq:left-right-half-braiding-ncat-2} 
\end{align}
which are nothing but the half-braidings. 

Notice that $c_{\psi, \phi}$ can not be defined in general. That is why it is called the half-braiding. Another important difference with a full braiding, comparing it to (\ref{diag:braiding-eq-1}) and (\ref{diag:braiding-eq-2}), is that only the vertical composition $\bullet$ (not the horizontal one!) is related to the half-braiding. This fact corresponds exactly to the half-braiding between defects nested in an $l'$-dimensional defect and those in a $(l-1)$-dimensional wall which is only well-defined the $l'$-th direction that is normal to the wall. 

When $f$ and $g$ are invertible (corresponding to transparent domain walls), the bulk-to-wall maps $L_f, L_g, R_f, R_g$ are all invertible, then we can fully braid morphisms in $\hom(\id_{x/y}, \id_{x/y})$ with those in $\hom(f,g)$. More explicitly, we obtain the following two double braidings: 
\begin{align}
\psi_\ast \circ L_f \circ L_g^{-1} \circ R_g  & \xrightarrow{ \beta_L^{-1}\beta_R} \psi_\ast  \circ R_f, 
\label{eq:double-braiding-ncat-1} \\
\psi^\ast \circ  L_g \circ L_f^{-1} \circ R_f   & \xrightarrow{ \beta_L\beta_R^{-1}} \psi^\ast  \circ R_g. 
\label{eq:double-braiding-ncat-2}
\end{align}
An example of such double braiding and transparent domain wall is given in (\ref{eq:double-braiding-toric}) in the toric code model. 

\medskip
More general half braidings can be obtained by replacing the ``=" in 
equation (\ref{eq:h-v-compatibility-2}) by an higher isomorphism $\simeq$. This is nothing but the categorical description of the compatibility of two different orders of fusions in different direction (recall equation (\ref{eq:h-v-compatibility}) and Figure\,\ref{fig:h-v-compatibility}).

\medskip
In the remaining part of this subsection, we will discuss $k$-dimensional bulk-to-wall maps in an $(n+1)$-category. We assume that certain additive structure to an $n$-category and a symmetric tensor product $\boxtimes$ and the direct sum $\oplus$ for $n$-category is well-defined. When $n=1$, the symmetric tensor product $\boxtimes$ is the Deligne tensor product. 

\smallskip
Recall the 1-dimensional bulk-to-wall map (\ref{eq:bulk-to-wall})(\ref{eq:bulk-to-wall-2}) introduced in Section\,\ref{sec:defect-n-toporder}. Since the defect $x^{[l]}$ and $y^{[l]}$ correspond to two $l$-morphisms and $f^{[l+1]}$ is an $(l+1)$-morphism $f: x \to y$. The two-side 1-dimensional bulk-to-wall map in an $(n+1)$-category is the following $(n-l-1)$-functor (i.e. a functor between two $(n-l-1)$-categories): 
\begin{equation} \label{eq:LR-ncat}
\hom(\id_x,\id_x) \boxtimes \hom(\id_y,\id_y) \xrightarrow{L_f\boxtimes R_f} \hom(f,f)
\end{equation}
defined by the horizontal composition induced from the composition of $(l+1)$-morphisms as shown in the following diagrams:
\begin{equation} \label{diag:bulk-to-wall-ncat}
\raisebox{0.3em}{
\xymatrix@R=1em@C=0.2em{ 
x \ar@/^1.2pc/[rr]^{\id_x} \ar@/_1.2pc/[rr]_{\id_x} & \Downarrow \phi &  x }}
 \boxtimes  
\raisebox{0.3em}{
\xymatrix@R=1em@C=0.2em{ 
y \ar@/^1.2pc/[rr]^{\id_y} \ar@/_1.2pc/[rr]_{\id_y}  & \Downarrow \psi & y
}}
 \mapsto 
\raisebox{0.3em}{
\xymatrix@R=1em@C=0.2em{ 
x \ar@/^1.25pc/[rr]^{f} \ar@/_1.25pc/[rr]_{f} 
& \Downarrow \mbox{{\scriptsize $\psi \circ \id_f \circ \phi$}} &  y\, .
}}
\end{equation}
And the functor $L_f$/$R_f$ is called left/right 1-dimensional bulk-to-wall map. 
More generally, we will also refer to the following map 
$$
\hom(\id_x,\id_x) \boxtimes \hom(\id_y,\id_y) \xrightarrow{L_\psi \boxtimes R_\psi} \hom(f,g),
$$
where $L_\psi=\psi_\ast \circ L_f$ ($R_\psi=\psi_\ast \circ R_f$) is the left (right) 1-dimensional bulk-to-wall map.

The $k$-dimensional bulk-to-wall map introduced in (\ref{eq:k-bulk-to-wall}) is also automatically encoded in the horizontal composition induced from the composition of $(l+1)$-morphisms in an $(n+1)$-category. For example, let $f, g: x\to y$ be two $(l+1)$-morphisms and $z: f \Rightarrow g$ an $(l+2)$-morphism, a two-side $2$-dimensional bulk-to-wall map from two $x\boxtimes y$ to $z$ is an $(n-l-2)$-functor
\begin{equation} \label{eq:2-bulk-to-map-ncat}
\hom(\id_{\id_x}, \id_{\id_x}) \boxtimes \hom(\id_{\id_y}, \id_{\id_y})
\xrightarrow{} \hom(z, z),
\end{equation}
the definition of which is naturally included in the definition of composition of $(l+1)$-morphisms. More generally, if $z$ is a $(l+k)$-morphism, the left/right $k$-dimensional bulk-to-wall maps (\ref{eq:2-bulk-to-map-ncat}) are two $(n-l-k)$-functors:
\begin{equation} \label{eq:k-bulk-to-map-ncat-0}
\hom(\id_x^k, \id_x^k) \xrightarrow{L} \hom(z, z) \xleftarrow{R} \hom(\id_y^k, \id_y^k)
\end{equation}
or equivalently, a two-side bulk-to-wall map:
\begin{equation} \label{eq:k-bulk-to-map-ncat}
\hom(\id_x^k, \id_x^k) \boxtimes \hom(\id_y^k, \id_y^k)
\xrightarrow{L\boxtimes R} \hom(z, z),
\end{equation}
where the notation $\id_x^k$ means $\id_x^1:=\id_x$, $\id_x^2:=\id_{\id_x}$, so on and so forth. More generally, if $z'$ is another $(l+k)$-morphism, for a fixed morphism in $\hom(z,z')$, there is a two-side $k$-dimensional bulk-to-wall map from the right hand side of (\ref{eq:k-bulk-to-map-ncat}) to $\hom(z,z')$.

\medskip
In summary, the rich inner structures of higher category exactly catch the information of the fusion and braiding of defects in a topological order.

\subsection{Unitary $(n+1)$-categories and $\BF_{n+1}^{pre}$-categories}

In this section, we introduce the notion of unitary $(n+1)$-category and that of a $\BF_{n+1}^{pre}$-category. 

\medskip
A unitary $0$-category is a finite dimensional Hilbert space. 

\medskip
Now we introduce the definition of a unitary $1$-category. We recommend M\"{u}ger's review \Ref{mueger4}.
\begin{defn} \label{def:ast-cat}
A $\ast$-$1$-category $\EC$ is a $\Cb$-linear category equipped with a functor $\ast: \EC \to \EC^\op$ 
which acts trivially on the objects and is antilinear and involutive on morphisms, i.e. 
$\ast:\hom_\EC(x,y) \to \hom_\EC(y,x)$ is defined so that 
\begin{equation} \label{eq:dagger}
(g \circ f)^\ast = f^\ast \circ g^\ast, \quad\quad (\lambda f)^\ast = \bar{\lambda} f^\ast,\quad\quad f^{\ast\ast} = f. 
\end{equation}
for $f: u \to v$, $g: v \to w$, $h: x \to y$, $\lambda \in \Cb^\times$.
\end{defn}

\begin{defn} \label{def:finite-cat}
A $1$-category $\EC$ is called {\it finite} if $\EC$ is Abelian $\Cb$-linear and semisimple so that
there are only finite number of simple objects and hom spaces are all finite dimensional vector spaces over $\Cb$. 
\end{defn}

\begin{defn}
A $\ast$-$1$-category is called {\it unitary} if it is finite and $\ast$ satisfies the positivity condition: $f\circ f^\ast =0$ implies $f=0$. 
\end{defn}

\begin{defn}
A finite 1-category is a $\Cb$-linear Abelian semisimple category with only finitely many inequivalent simple objects and finite dimensional hom spaces. A fusion category is a finite monoidal category such that $\hom(\one,\one)\simeq \Cb$, where $\one$ is the tensor unit. 
\end{defn}

A unitary fusion 1-category has  a lot of nice properties. In particular, it is automatically spherical,  and automatically a $C^\ast$-category\cite{mueger3}, and the left dual of an object automatically coincides with the right dual. 

\medskip
In a $2$-category $\EC$, the direct sum $x\oplus y$ (or the coproduct) of two objects $x$ and $y$ is characterized by the property that $\hom(z, x\oplus y) \simeq \hom(z, x) \oplus \hom(z, y)$ 
as $1$-categories for all $z$. 
\begin{defn} \label{def:finite-cat}
A $2$-category $\EC$ is called {\it finite} if there are only finite number of simple objects; every object is a direct sum of simple objects; and all hom spaces are finite $1$-categories. 
\end{defn} 

\begin{rema}
Although we need the additive structure in a finite $2$-category, we don't require it to be Abelian. In particular, we expect to have non-trivial morphisms between two simple objects. 
\end{rema}

\begin{defn}
A $2$-category $\EC$ said to have adjoints for 1-morphisms if for every 1-morphism $f: X \to Y$, there exists another 1-morphism $g: Y \to X$, the unit 2-morphisms $\eta: \id_X \Rightarrow g\circ f$ and $\tilde{\eta}: \id_Y \Rightarrow f\circ g$ and the co-unit 2-morphisms $\epsilon: f\circ g \Rightarrow \id_Y$ and 
$\tilde{\epsilon}: g\circ f \Rightarrow \id_X$ such that they form an ambidextrous adjunction, i.e. left and right adjoints exist and coincide. 
\end{defn}

\begin{rema} \label{rema:adjoint}
The above notion is stronger than the usual notion of a $2$-category $\EC$ having adjoints for 1-morphisms, which only requires the existence of the right adjoint and the left adjoint \cite{lurie}. The 2-morphisms $\eta, \tilde{\eta}, \epsilon, \tilde{\epsilon}$ are called duality maps. 
\end{rema}

\begin{defn}  \label{def:2cat-unitary}
A unitary $2$-category is a finite $2$-category having adjoints for 1-morphisms such that all hom spaces are unitary $1$-categories and all coherence isomorphisms are unitary, i.e. 
$$
\alpha_{f,g,h}^\ast = \alpha_{f,g,h}^{-1}, \quad l_f^\ast = l_f^{-1}, \quad r_f^\ast = r_f^{-1}
$$ 
$$
\tilde{\epsilon} = \eta^\ast, \quad \tilde{\eta}= \epsilon^\ast, 
$$
for $1$-morphisms $f,g,h$, where $\alpha$, $l$, $r$ and $\eta, \tilde{\eta}, \epsilon, \tilde{\epsilon}$ are the associator, the left unit isomorphism, the right unit isomorphism and duality maps, respectively. 
\end{defn}

It is well-known that a $2$-category with a single object is a monoidal category (see Section\,\ref{sec:fb-in-ncat}). It is very easy to obtain the following result. 
\begin{lemma} \label{lemma:2-cat+ufc}
For a unitary $2$-category with a single object, the 1-category of 1-morphisms is a unitary fusion 1-category, in which the tensor product is given by the composition of 1-morphisms and the tensor unit is the identity $1$-morphism. 
\end{lemma}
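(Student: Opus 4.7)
The plan is to verify each defining property of a unitary fusion $1$-category for $\EC := \hom_\ED(\ast, \ast)$, where $\ED$ denotes the given unitary $2$-category with its single object $\ast$. I would organize the argument into four steps: (i) $\EC$ is a finite $1$-category, (ii) $\EC$ is monoidal with unit $\id_\ast$, (iii) $\EC$ is rigid, and (iv) $\EC$ carries a compatible unitary $\ast$-structure with $\End_\EC(\id_\ast)\simeq\Cb$.

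First I would observe that $\EC$ is a finite $1$-category essentially by unpacking Definition \ref{def:finite-cat}: in a finite $2$-category every hom $1$-category is finite, so $\EC$ is automatically $\Cb$-linear, Abelian, semisimple, with finitely many simple objects and finite-dimensional hom spaces. Next, to package the monoidal structure, I would take the horizontal composition of $1$-morphisms as the tensor product $\otimes: \EC \times \EC \to \EC$; the associator $\alpha_{f,g,h}$ and unit isomorphisms $l_f, r_f$ are part of the $2$-categorical data of $\ED$, and the pentagon and triangle identities for $\EC$ reduce directly to the corresponding coherence axioms of $\ED$, with $\id_\ast$ serving as the tensor unit by construction.

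Third, rigidity would be extracted from the ``adjoints for $1$-morphisms'' hypothesis: for each $f:\ast\to\ast$, the ambidextrous adjoint $f^\vee$ together with the four duality $2$-morphisms $\eta,\tilde\eta,\epsilon,\tilde\epsilon$ becomes, when reinterpreted inside $\EC$, a pair of morphisms $\id_\ast\to f^\vee\otimes f$, $f\otimes f^\vee\to\id_\ast$ (and their ``other handed'' counterparts) satisfying the snake identities, so $f^\vee$ is simultaneously a left and right dual. For the unitary structure, the $\ast$-involution on each space $\hom_\EC(f,g)$ comes from the unitary structure on the corresponding hom $1$-category of $\ED$. Definition \ref{def:2cat-unitary} moreover guarantees that all coherence isomorphisms and the duality maps are unitary, so the $\ast$-operation commutes appropriately with $\otimes$, associators, unitors and duals, and positivity is inherited term-wise.

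The main obstacle, and the only genuinely non-formal point, is establishing that the tensor unit is simple, i.e.\ $\End_\EC(\id_\ast)\simeq\Cb$. The hard part will be pinning down the precise content of ``single object'' in the context of a finite $2$-category. My intended reading, consistent with the paper's convention that a finite $2$-category is generated by its simple objects, is that the unique object $\ast$ is \emph{simple}; in the unitary semisimple setting this forces $\End(\id_\ast)$ to be a one-dimensional simple $C^{\ast}$-algebra, yielding the fusion (rather than merely multi-fusion) condition. Under this interpretation the lemma follows; otherwise one only recovers a unitary multi-fusion $1$-category, exactly the standard dichotomy between the two notions.
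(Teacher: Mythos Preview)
The paper does not supply a proof of this lemma; it merely prefaces the statement with ``It is very easy to obtain the following result'' and moves on. Your proposal is therefore not being compared against a written argument but against an implicit ``unpack the definitions'' gesture, and your four-step verification does exactly that, correctly.

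Your identification of the only non-formal point---simplicity of the tensor unit---is apt, and your resolution is consistent with the paper's conventions. Note that in the paper's Definition of a finite $2$-category, every object is a direct sum of simple objects; with a single object $\ast$ this forces $\ast$ itself to be simple (any nontrivial direct-sum decomposition would introduce further objects). Moreover, the paper's later recursive definition of a finite $n$-category explicitly requires that $\id_f$ be simple whenever $f$ is simple, so the intended reading is indeed that $\id_\ast$ is simple and hence $\End_\EC(\id_\ast)\simeq\Cb$. Your caveat about the multi-fusion alternative if $\ast$ were not assumed simple is exactly the dichotomy the paper acknowledges in its subsequent remark on multi-fusion $n$-categories.
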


We would like to define a unitary $n$-category recursively. We assume that a good definition of $n$-category is chosen with certain additive structure such that the direct sums (or coproducts) are well-defined. In an $n$-category, the direct sum $x\oplus y$ (or the coproduct) of two objects $x$ and $y$ (if defined) is characterized by the property that 
$\hom(z, x\oplus y) \simeq \hom(z, x) \oplus \hom(z, y)$ 
as $(n-1)$-categories for all $z$. An $i$-morphism that is not a direct sum of two non-zero $i$-morphisms is called {\it simple}. In the $n$-categories that we are interested in, finite coproducts (or direct sum) of $i$-morphisms always exist. 

\begin{defn}
An $n$-category $\EC$ is called {\it finite} if 
\begin{enumerate}
\item the homotopy 1-category $\mathrm{h}\EC$ obtained from $\EC$ by taking the same set of objects and defining 1-morphisms as the equivalence classes of $1$-morphisms in $\EC$ is (not Abelian in general) semisimple with only finite many simple objects, and the identity morphism
$\id_f$ is simple for all simple morphisms $f$; 
\item for any pair of object $x,y\in \EC$, the $(n-1)$-category $\hom_\EC(x,y)$ is finite. 
\end{enumerate}
\end{defn}

\begin{rema}
In a finite $(n+1)$-category, the fusion and braiding data can all be encoded efficiently by considering only the simple $l$-morphisms for $1\leq l \leq n+1$, which can be further reduced to a finite set of $(n+1)$-isomorphisms (recall Remark\,\ref{rema:n-braiding}), which should be directly measurable in a physical realization of this category. 
\end{rema}

We will follow the footsteps of Lurie in \Ref{lurie}. Let $\EC$ be an $n$-category for $n\geq 2$. Its homotopy $2$-category $\mathrm{h}_2\EC$ is a 2-category with the same objects and 1-morphisms as those in $\EC$ and 2-morphisms being the isomorphism classes of those in $\EC$. 
$\EC$ has adjoints for $1$-morphisms if $\mathrm{h}_2\EC$ has
adjoints for $1$-morphisms. For $1<l<n$, $\EC$ has adjoints for $k$-morphisms if, for any pair of objects $x, y\in \EC$, the $(n-1)$-category $\hom_\EC(x,y)$ has adjoints for all $(l-1)$-morphisms. $\EC$ is said to have adjoints if $\EC$ has adjoints for $l$-morphisms for all $0<l<n$. 

\begin{rema}
If the $n$-category $\EC$ has a monoidal structure, $\EC$ has duals for the objects if the homotopy 1-category $\mathrm{h}\EC$ is a rigid monoidal category such that two side duals coincide. Then $\EC$ is said to have duals if $\EC$ has duals for objects and adjoints for $l$-morphisms for all $0<l<n$. 
\end{rema}

We propose an incomplete definition of a unitary $n$-category recursively as follows.
\begin{defn} \label{def:unitary-n-cat}
A {\it unitary $n$-category} $\EC$ is a finite $n$-category such that it has adjoints and, for any pair $x,y\in \EC$, $\hom_\EC(x,y)$ is a unitary $(n-1)$-category and all coherence isomorphisms are unitary. If a unitary $n$-category $\EC$ has a monoidal structure with simple tensor unit and duals, it is called a {\it unitary fusion $n$-category}. 
\end{defn}

\begin{rema}
It is unclear to us how to generalize the definition the unitarity for all the coherence isomorphisms in Definition\,\ref{def:2cat-unitary} to higher categories.  This drawback does not play any role in our later discussion. 
\end{rema}

\begin{rema}
We have assumed that the tensor unit is simple in the unitary fusion $n$-category. If this condition is not satisfied, we will call such $n$-category as unitary multi-fusion $n$-category. This notion is also very useful. Multi-fusion $n$-categories can be naturally produced in the process of dimensional reduction or by the general tensor product $\boxtimes_{\EE_{n+1}}$ discussed in Remark\,\ref{rema:general-boxtimes}. For example, consider a toric code model defined on a strip of lattice such that boundaries on both left and right side of the strip are smooth (see Fig.\,\ref{toric}). When viewed from far away, this strip narrows down to a line, a closed $1+1$D topological order. Excitations on this line are given by the multi-fusion 1-category $\fun_{\hilb}(\rep_{\Zb_2},\rep_{\Zb_2})$, the center of which is again trivial\cite{eno}. More generally, for any $l$-morphism $x$ in a unitary $n$-category, the category $\hom(x\oplus x, x\oplus x)$ is a multi-fusion $(n-l-1)$-category. The topological phases associated to multi-fusion $n$-categories are not stable. For example, the stable condition (\ref{eq:gs-deg}) is violated for such multi-fusion $n$-categories. For this reason, we choose not to discuss them further in this work. We will do that in \Ref{kong-wen-zheng}. Also notice that fusion $n$-categories are closed under the stacking operation: $\boxtimes$. 
\end{rema}

\begin{defn}
A unitary $n$-functor $f$ between two unitary $n$-categories is an $n$-functor preserving the adjoints and the additive structures (e.g. direct sums). 
\end{defn}


\smallskip
Similar to Lemma\,\ref{lemma:2-cat+ufc}, we have the following result. 
\begin{lemma}
The fully subcategory of a unitary $(n+1)$-category consisting of a single object is automatically a unitary fusion $n$-category. These two notions are equivalent. 
\end{lemma}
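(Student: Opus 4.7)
The plan is to prove the two directions of the equivalence by unwinding the recursive Definition~\ref{def:unitary-n-cat}, using Lemma~\ref{lemma:2-cat+ufc} as the base case of an induction on $n$.

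First I would set up the forward direction. Let $\EC$ be a unitary $(n+1)$-category with a single object $\ast$, and set $\ED := \hom_\EC(\ast,\ast)$, which by Definition~\ref{def:unitary-n-cat} is a unitary $n$-category. The composition of $1$-morphisms in $\EC$ defines a functor $\otimes: \ED \times \ED \to \ED$, with tensor unit $\one_\ED := \id_\ast$; the associators and unit isomorphisms in $\EC$ provide the coherence data for $\otimes$ and are unitary by hypothesis. This endows $\ED$ with a monoidal structure. The fact that $\EC$ has adjoints for $1$-morphisms (as part of having adjoints) translates directly into the statement that every object of $\ED$ admits a two-sided dual satisfying ambidextrous adjunction; this is the duals condition on $\ED$. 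Finiteness of $\hom_\EC(\ast,\ast)$ gives finiteness of $\ED$, while the condition that $\id_f$ is simple for all simple $f$ in the homotopy 1-category (built into finiteness) specializes to $\id_\ast$ being simple, i.e.\ $\one_\ED$ is simple. Hence $\ED$ is a unitary fusion $n$-category.

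Next, for the reverse direction, I would carry out the standard delooping construction. Given a unitary fusion $n$-category $\ED$, define an $(n+1)$-category $\EB\ED$ with a single object $\ast$ and $\hom_{\EB\ED}(\ast,\ast) := \ED$, with composition of $1$-morphisms given by $\otimes_\ED$ and identity $\id_\ast := \one_\ED$. Associators and unitors of $\otimes_\ED$ furnish the coherence $2$-isomorphisms of $\EB\ED$, and all higher coherence data of $\EB\ED$ comes from the higher coherence data already present in the fusion structure on $\ED$. One then checks that $\EB\ED$ satisfies the clauses of Definition~\ref{def:unitary-n-cat}: $\hom_{\EB\ED}(\ast,\ast) = \ED$ is a unitary $n$-category by assumption; adjoints for $k$-morphisms with $k \geq 2$ come from adjoints for $(k-1)$-morphisms in $\ED$; adjoints for $1$-morphisms correspond to the duals in $\ED$ (simple tensor unit ensures that the ambidextrous adjunction data live in $\ED$ and are unitary); and the finiteness condition on $\EB\ED$ reduces to finiteness of $\ED$ together with simplicity of $\one_\ED$. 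The two constructions $\EC \mapsto \hom_\EC(\ast,\ast)$ and $\ED \mapsto \EB\ED$ are mutually inverse on the nose for objects with one point, and preserve unitary $n$-functors by an analogous delooping of $n$-functors, so they furnish an equivalence of the two notions.

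The main obstacle is matching, layer by layer, the recursive clauses of ``unitary,'' ``has adjoints,'' and ``has duals'' under the dimension shift $n \rightsquigarrow n+1$. Concretely, the subtle points are: (i) identifying the simplicity of $\id_\ast$ in $\EC$ with the simplicity of $\one_\ED$ in $\ED$, which is needed to land in the fusion (as opposed to multi-fusion) case; (ii) verifying that the ambidextrous adjunction for $1$-morphisms of $\EC$ is the same data as the two-sided duality in $\ED$ (Remark~\ref{rema:adjoint}), including unitarity of the duality maps $\tilde\epsilon = \eta^*$, $\tilde\eta = \epsilon^*$; and (iii) ensuring that the ``unitarity of all coherence isomorphisms'' in Definition~\ref{def:unitary-n-cat}, which is explicit only up to $n=2$ in Definition~\ref{def:2cat-unitary}, is preserved by delooping. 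Since the paper itself flags (iii) as unsettled in the remark following Definition~\ref{def:unitary-n-cat}, the proof is to this extent only as rigorous as the ambient definition permits; in the base case $n=1$ it reduces precisely to Lemma~\ref{lemma:2-cat+ufc}, and the induction step proceeds as above.
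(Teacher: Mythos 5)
The paper offers no proof of this lemma beyond the remark that it is ``similar to'' the $n=1$ statement that a unitary $2$-category with one object is the same thing as a unitary fusion $1$-category, so your layer-by-layer unwinding of the recursive definitions of ``finite,'' ``has adjoints,'' and ``unitary,'' together with the delooping construction $\ED\mapsto\EB\ED$ for the converse, is exactly the intended argument, carried out in more detail than the paper supplies. The only point deserving emphasis is the simplicity of the tensor unit $\id_\ast$: the finiteness clause you invoke literally concerns $\id_f$ for simple $1$-morphisms $f$ rather than for the object $\ast$ itself, and the paper's own observation that $\hom(x\oplus x,\,x\oplus x)$ is merely multi-fusion shows the lemma genuinely requires the chosen object to be simple; since you explicitly flag this as your delicate step (i), this is a looseness in the ambient definitions rather than a gap in your proof.
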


A unitary $(n+1)$-category with one object can describe a $(n+1)$-dimensional topological order with a given (not necessarily complete) set of defects. The 1-morphisms are 1-codimensional defects, 2-morphisms are 2-codimensional defects that are connecting two (not necessarily distinct) 1-codimensional defects, so  on and so forth. $(n+1)$-isomorphisms are instantons. For this reason, we will introduce a new terminology.

\begin{defn}
A {\it pre-$\BF_{n+1}$-category} or a {\it $\prebf_{n+1}$-category} is an $(n+1)$-category with a single object $\ast$ such that $\hom(\ast, \ast)$ is a unitary fusion $n$-category. 
\end{defn}

\begin{rema}
A $\prebf_{n+1}$-category is not a unitary $(n+1)$-category because we do not assume an additive structure on objects. An alternative definition is to define a $\prebf_{n+1}$-category as a unitary $(n+1)$-category with one simple object. So all finite coproducts (direct sums) are included. There is no essential difference between these two approaches. But our approach makes some later discussion easier. 
\end{rema}

\begin{expl}
For each $n>0$, there is a trivial and the smallest $\prebf_{n+1}$-category $\one_{n+1}$, which is defined as the smallest $\prebf_{n+1}$-category that contains $\{ \ast, \id_\ast, \id_{\id_\ast}, \cdots, \id_\ast^n \}$ and $\hom(\id_\ast^n, \id_\ast^n)=\Cb$. More explicitly, we have 
\bnu
\item for $n=0$, $\one_{0+1}$ is the 1-category with a single object $\ast$ and $\hom(\ast, \ast) = \Cb$; 
\item for $n=1$, $\one_{1+1}$ is the 2-category with a single object and $\hom(\ast, \ast) \simeq \hilb$; 
\item for $n=2$, $\one_{1+1}$ is a 3-category with a single object $\ast$ and a unique simple 1-morphism $\id_\ast$ and $\hom(\id_\ast, \id_\ast) \simeq \hilb$. An non-simple 1-morphism is a finite direct sum of $\id_\ast$. 
\enu
\end{expl}

\begin{expl} 
Some non-trivial examples of $\prebf_{n+1}$-categories in low dimensions: 
\bnu
\item A $\prebf_{0+1}$-category is just a simple (because we require $\id_\ast$ to be ``simple") $C^\ast$-algebra over $\Cb$, i.e. a matrix algebra over $\Cb$. 
\item A $\prebf_{1+1}$-category is a 2-category with one object $\ast$ such that $\hom(\ast, \ast)$ is a unitary fusion 1-category. 
\item A unitary braided fusion 1-category $\ED$ determines a $\prebf_{2+1}$-category, which consists of one object $\ast$ and only one simple 1-morphism $\id_\ast$ such that $\hom(\id_\ast, \id_\ast) =\ED$. 
\item A unitary fusion category $\EC$ determines a $\prebf_{2+1}$-category, which consists of one object $\EC$, 1-morphisms given by $\EC$-bimodules, 2-morphisms given by $\EC$-$\EC$-bimodule functors and 3-morphisms by natural transformations between bimodule functors, i.e. the full subcategory of $\EF\mathrm{us}$ consisting of a single object $\EC$ (recall Remark\,\ref{rema:fus-3-cat}). 
\enu
\end{expl}

Due to the connection to topological order, an $l$-morphism $f$ in a $\prebf_{n+1}$-category will also be called an $l$-codimensional defect. Higher morphisms in $\hom(f,f)$ are higher codimensional defects nested on $f$. The only object will be denoted by $\ast$ unless we specify it otherwise. 

\medskip
In order to study situations that involve multi topological phases, such as a topological phase with a gapped boundary, we would like to introduce the notion of an $\MBF_{n+1}^{pre}$-category, where the letter ``M'' stands for ``multi topological phases". 
\begin{defn}
An $\MBF_{n+1}^{pre}$-category is an $(n+1)$-category with only finitely many objects such that the hom space $\hom(x,y)$ for a pair of objects $(x,y)$ is a unitary $n$-category  and a unitary fusion $n$-category if $x=y$. 
\end{defn}

\begin{rema}
Notice that in an $\MBF_{n+1}^{pre}$-category, a 1-morphism in $\hom(x,y)$ might not have an adjoint if $x\neq y$ (not required by definition). But all other 1-morphisms and higher morphisms have adjoints. Also note that there is no additive structure on objects. It is a good thing. It allows us to describe real physical situations, in which there are finite many topological phases connected by gapped domain walls. But the additive structure of gapped domain walls (as objects in $\hom(x,y)$) is unavoidable because it can be generated automatically from the additive structure of $\hom(x,x)$ by the composition of morphisms. 
\end{rema}

\begin{expl} 
Examples of $\prembf_{2+1}$-category are full subcategories of the 3-category $\EF\mathrm{us}$ (see Remark\,\ref{rema:fus-3-cat}) consisting of only finite number of objects. 
\end{expl}

\begin{rema}  \label{rema:infty-cat}
We have ignored the coherence properties. Subtleness may arise in the study of the universal perturbative gravitational responses in one dimension higher (recall Section\,\ref{sec:univ-grav-response}). For example, it is possible that there is no non-trivial topological excitation in an $n$ space-time dimensional topological order (e.g. the $E_8$ quantum Hall system), corresponding to a trivial $\prebf_n$-category, but it has non-trivial gravitational responses, which can be captured by $(n+1)$-dimensional mathematical structures (see discussion in Section\,\ref{sec:univ-grav-response} and \ref{invTop}). In other words, an $n$-category is not adequate to describe an $n$-dimensional topological order with non-trivial gravitational responses. We believe that the universal perturbative gravitational responses can have non-trivial effects on the coherence property. Namely, the coherence properties of $n$-morphisms might not hold on the nose due to the gravitational anomalies. Then we have to introduce higher invertible morphisms. For this reason, perhaps, a more complete framework to describe topological order is to use $(\infty, n)$-category\cite{lurie} instead of $n$-category. In an $(\infty, n)$-category, $l$-morphisms exist for all $l\in \Zb_+$, and all $l$-morphisms are invertible if $l>n$. This is beyond the scope of this paper. We leave it to the future. 
\end{rema}

\subsection{Conceptual gaps between a $\BF^{pre}$-category and a topological order}

The reason that a pre-$\BF$-category is not yet a $\BF$-category is because a $\prebf_{n+1}$-category describes an $(n+1)$-dimensional topological order equipped with a chosen (not necessarily complete) set of defects. We can choose the set to contain only the trivial defects or choose one containing the large amount of defects, which are obtained from condensation and closed by fusion. Then it creates an ambiguity because different choices of the set of defects will give different $\prebf_{n+1}$-categories, which are all associated to the same topological order. We give an explicit example below. 

\begin{expl}  \label{expl:toric-code}
We can associate the following two different $\prebf_{2+1}$-categories to the same toric code model: 
\begin{enumerate}
\item the modular tensor category $Z(\rep_{\Zb_2})$, which is viewed as a unitary $3$-category with only one object $\rep_{\Zb_2}$, only one 1-morphism $\rep_{\Zb_2}$ (viewed as the trivial $\rep_{\Zb_2}$-bimodule), 2-morphisms given by $\rep_{\Zb_2}$-bimodule functors and 3-morphisms given by natural transformations between bimodule functors. Notice that the 1-category of 2-morphisms is nothing but the modular tensor category
$Z(\rep_{\Zb_2})=\fun_{\rep_{\Zb_2}|\rep_{\Zb_2}}(\rep_{\Zb_2}, \rep_{\Zb_2})$.

\item the full subcategory of $\EF\mathrm{us}$ with the only object $\rep_{\Zb_2}$, denoted by $\EF\mathrm{us}|_{\rep_{\Zb_2}}$. Its 1-morphisms includes not only the trivial $\rep_{\Zb_2}$-bimodule $\rep_{\Zb_2}$ but also all other bimodules such as ${}_{\rep_{\Zb_2}}\hilb_{\rep_{\Zb_2}}$. This is the toric code model enriched by gapped domain walls and defects of higher codimensions\cite{KK1251}.
\end{enumerate}
Notice that $Z(\rep_{\Zb_2})$ viewed as a $3$-category is a proper subcategory of the second $3$-category $\EF\mathrm{us}|_{\rep_{\Zb_2}}$. Moreover, when $Z(\rep_{\Zb_2})$ is viewed as a $2$-category with one object, it is the looping of the second $3$-category $\EF\mathrm{us}|_{\rep_{\Zb_2}}$, often denoted by $\Omega(\EF\mathrm{us}|_{\rep_{\Zb_2}})$. 
\end{expl}

This is a general phenomenon in Levin-Wen type of lattice models. More precisely, we can replace $\rep_{\Zb_2}$ by any unitary fusion 1-category $\EC$. The category $\EC$ determines a lattice model and all its extensions by defects. One can associate to this Levin-Wen model either by its bulk excitations which are given by the unitary modular tensor category $Z(\EC)$, or by the full-subcategory of $\EF\mathrm{us}$ consisting of only one object $\EC$. These two associated categories are different as $\prebf_3$-categories.

\medskip
The point of above examples is that one can associate different $\prebf_{n+1}$-categories to the same topological order. More generally, one can also paste a finite many quantum Hall systems to a $(2+1)$-dimensional defect in a higher dimensional topological phase without changing the phase. This process creates further ambiguities. We want to find a way to characterize topological orders in a unique way that is also minimal (or efficient) and complete. 

Ideally, we would like to find a minimal generating set of all excitations. In order to do that, we would like find precise mathematical descriptions of those {\it condensed} excitations that can be obtained from lower dimensional excitations by condensations, and those called {\it elementary excitations} and the mixture of these two types called {\it quasi-elementary excitations}. Our hope is to use elementary excitations only to characterize the topological phase. 

\subsection{Condensed/elementary excitations and the definition of a $\BF$-category}

In this subsection, we will discuss how to characterize condensed/elementary topological excitations in physical language. We will also propose a definition of a $\BF_{n+1}$-category. But we can not say that we have achieved our goal. We will discuss a few problems naturally associated to this definition. 

\medskip
In general, a simple $l$-codimensional excitation $x^{[l]}$ in a topological phase can be very complicated. For example, it can contain a quantum Hall system for $l=2$, or a closed $(l+1)$-dimensional topological order in general. More precisely, $x$ can be factorized as follows:
$$
x^{[l]} = x_0^{[l]} \boxtimes x_1^{[l]} \boxtimes \cdots \boxtimes x_k^{[l]}, 
$$
where $x_1, \cdots, x_k$ are all simple closed topological orders and $\boxtimes$ is the stacking operation. We will refer to $x_1, \cdots, x_k$ as closed factors of $x$. The fusion product between two excitations both with non-trivial closed factors is shown in the following equation: 
$$
(x_0^{[l]} \boxtimes x_1^{[l]}) \otimes (y_0^{[l]} \boxtimes y_1^{[l]}) = (x_0^{[l]} \otimes y_0^{[l]})\boxtimes x_1^{[l]} \boxtimes y_1^{[l]}. 
$$
If $x_0^{[l]}, x_1^{[l]}$ and $x_2^{[l]}$ are simple, so is $x_0^{[l]} \boxtimes x_1^{[l]} \boxtimes x_2^{[l]}$. Since the closed factors $x_1^{[l]}$ and $y_1^{[l]}$ are long-range entangled, they can not cancel each other via the stacking operation. For this reason, this closed factor are of infinite type. Namely, by applying fusion product repeatedly, we obtain infinite number of such factors. If we assume that there are only finitely many simple $l$-codimensional defects, then an $l$-codimensional defect can not contain any closed factor. We will assume this finiteness from now on so that we can ignore the closed factors completely for $\BF_{n+1}$-category.

\medskip
If an excitation $x$ can be obtain from a condensation of other excitations, then we would like to say that $x$ is condensed. But there are a lot of ambiguities in this terminology. For example, it is possible that a set $S_1$ of excitations can be obtained from another set $S_2$ of excitations via condensations. At the same time, $S_2$ can be obtained from $S_1$ via condensations. Let's consider a more explicit situation. If a defect $x$ can be obtained from a condensation involving an $p$-dimensional defect $y$, it seems reasonable that $x$ is at least $(p+1)$-dimensional. Our intuition is that this condensation involves a large number of $y$, one can fine tune the system before the condensation so that all of these $y$-defects are arranged sitting next to (but separated from) each other. They fill a subspace of dimension at least $(p+1)$. When you turn on the local interaction (or condensation), this subspace simply turns into a new excitation of dimension at least $(p+1)$. But what makes the situation much more complicated is that this new excitation can be a trivial one. This means that a pure excitation might be obtained from a condensation of defects that are domain walls between higher dimensional defects. On the other hand, at least some of the later defects can also be generated from pure excitations via condensation and fusion. So it does not make any sense to say that an excitation is condensed in general. We must specify where it is condensed from. In other words, the generating sets for all topological excitations are not unique in general. We have to make some choices. A natural choice is to include pure excitations in the generating set and hope that they can generate other defects. 

\begin{defn} \label{defn:quasi-elem-physics}
An excitation (or defect) $x$ is called {\it condensed} if $x$ can be obtained from pure excitations of lower dimensions via nontrivial condensations. An excitation that is not condensed is called {\it quasi-elementary}. 
\end{defn}

\begin{rema}
In above definition, a condensed excitation can be a pure excitation or a domain wall between two other defects, which can be domain walls between higher dimensional defects. 
\end{rema}

A pure particle-like excitation is automatically quasi-elementary. The trivial defect $1^{[l]}$ is quasi-elementary by definition because it can be reproduced only by the trivial condensation. In the 3+1D $\Zb_2$ gauge theory (recall Example\,\ref{C4Z2}), a particle and a vortex line are both quasi-elementary. 

The direct sum of two condensed/quasi-elementary excitations is still condensed/quasi-elementary. The fusion product of two condensed excitations is also condensed. In general, the fusion product of a condensed excitation and a non-trivial quasi-elementary excitation is quasi-elementary. The fusion product of two quasi-elementary excitations can be either condensed or quasi-elementary.  

\begin{rema}
A more complicated examples of quasi-elementary excitations can be obtained by considering the defect junctions. More precisely, in (\ref{eq:3-way}), if one of the three defect lines is condensed and one is quasi-elementary, then the defect junction is quasi-elementary. These generic defects can be obtained from the set of all elementary excitations via condensations followed by  fusions, thus can be ignored in a minimal description of a topological order. 
\end{rema}

Our main goal is to find a characterization of a subset of quasi-elementary excitations called elementary excitations. In general, a simple $l$-codimensional quasi-elementary excitation $x^{[l]}$ can be factorized as follows: 
\begin{equation}  \label{eq:elem-factors}
x^{[l]} = x_0^{[l]}\otimes c^{[l]},
\end{equation}
where $\otimes$ is the fusion product, $c$ is a simple condensed excitation, which will be called a {\it condensed factor} of $x$. It is tempting to say that $x_0$ in (\ref{eq:elem-factors}) does not contain any condensed factor and in some sense ``elementary". Unfortunately, such factorization is not unique in general. It is possible that $e_0 =e_0' \otimes c_0'$ and $c_0' \otimes c_0=1$, where $1^{[l]}$ is the trivial $l$-codimensional excitation. Therefore, it makes no sense to say that an excitation is free of condensed factors. We must find a better way to characterize an elementary excitation. 

\medskip
According to the discussion in Section\,\ref{unbele}, a condensed and finite $l$-codimensional pure excitation should have gapped domain walls with the trivial $l$-codimensional excitation. If the condensation produces a gapless domain wall that can not be gapped, similar to quantum Hall systems, we believe that the condensed excitation, which can be viewed as an anomalous topological order, is long range entangled and can not be canceled by fusion products. Therefore, it is reasonable to believe that such a condensed excitation can not be finite and should be ignored according to the finiteness of a $\prebf$-category. We will also refer to such condensation as of infinite type. 

A condensation is of finite type if the domain wall between the condensed and uncondensed phases can be gapped. The connection by gapped domain wall defines an equivalence relation, called Witt equivalence, on the set of simple $l$-codimensional excitations. For an $l$-codimensional defect $x$, we denote its equivalence class by $[x]$, which is called {\it Witt class} of $x$. Different equivalence classes are disconnected. It is clear that different elementary excitations must be disconnected.

\smallskip
This picture of Witt class is very important. Although it does not tell us how to select the elementary excitation from each Witt class. It immediately implies that in a real $\BF_{n+1}$ category, containing only elementary excitations, simple objects are all disconnected. Therefore, it suggests us to give a definition of $\BF$-category simply as follows: 
\begin{defn}  \label{def:bf-category}
A $\BF_{n+1}$-category is a $\prebf_{n+1}$-category such that $\hom(x,y)=0$ for any pair of  non-equivalent ($x\ncong y$) simple $l$-morphisms $x^{[l]}$ and $y^{[l]}$. 
\end{defn}

\begin{rema}
A $\BF_{n+1}$-category is still not Abelian because $\hom(x,x)$ is nontrivial for a simple $l$-morphism $x$ in general. Each simple $l$-morphism in a $\BF_{n+1}$-category corresponds to an elementary excitation. 
\end{rema}

\subsection{A $\BF_{n+1}$-category as the core of a $\prebf_{n+1}$-category}

There are a lot of natural questions associated to Definition\,\ref{def:bf-category}. For example, given a $\prebf_{n+1}$-category, can we determine which $\BF_{n+1}$-category is associated to it? Is it unique? It is not hard to see that the uniqueness will be a serious problem. For example, the dotted domain wall in Figure\,\ref{toric} can have gapped domain wall with the trivial domain wall as shown as the blue dot in Figure\,\ref{toric}. But we can also chose not to include the blue dot in our lattice model and create a $\prebf_3$-category, in which the morphism associated to the dotted wall in Figure\,\ref{toric} is not connected (by gapped walls) to that associated to the trivial domain wall. This choice is very arbitrary. So it seems that it is impossible to associated a $\BF$-category uniquely to a $\prebf$-category unless the $\prebf$-category satisfies certain maximal condition, i.e. all possible gapped domain walls and walls between walls are included in the $\prebf$-category. Even if the $\prebf$-category we start with satisfies the maximal condition, there is still a problem of selecting which one in a Witt class to be the elementary excitation.

\medskip
Given an $l$-codimensional excitation $y$, condensations of higher codimensional subdefects in $y$ will produce many different $l$-codimensional excitations, which can be further condensed. 
The condensation provides the excitations in a Witt class of $y$ a partial order $\leq$. As shown in Section\,6.2 in \Ref{kong-anyon}, in 2+1D, two anyon systems, which are given by two modular tensor categories and connected by a gapped domain wall, can be obtained from another anyon system via two condensations. We believe that this phenomenon generalizes to higher dimensions. 
More precisely, we believe that if two $l$-codimensional excitations can be connected by gapped domain wall, these two excitations together with the gapped domain wall can be obtained from condensations of the higher codimensional sub-defects in a possibly third excitations. This picture suggests the following conjecture. 
\begin{conj}  \label{conj:root}
The excitations in a Witt class with the particle order $\leq$ form a lattice, which is a mathematical notion and means a partially ordered set such that every two elements have a least upper bound and a greatest lower bound. Moreover, 
in each equivalence class, there is a unique (up to isomorphisms) minimum of the whole class, called the {\it root} of the class. In other words, all other defects in the class and the associated gapped domain walls can be obtained from the root via condensations. The trivial $l$-codimensional defect $1^{[l]}$ is the root of the class $[1^{[l]}]$. 
\end{conj}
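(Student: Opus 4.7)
The plan is to establish the lattice structure via a \emph{common-source construction} generalizing the 2+1D result of \Ref{kong-anyon} cited in the conjecture, and then show that iterating the ``meet" operation must terminate at a unique minimal element. First I would make precise the partial order $\leq$: for two simple $l$-codimensional excitations $x,x'$ in the same Witt class $[y]$, declare $x \leq x'$ if there exists a gapped $(l+1)$-codimensional domain wall $f: x' \to x$ that realizes $x$ as a condensation of higher-codimensional sub-defects on $x'$. Reflexivity and transitivity are immediate (the trivial wall and the composition of condensations), while antisymmetry should follow from a dimension/finiteness count: if $x\leq x'$ and $x'\leq x$, then the composed condensation must be the identity condensation because otherwise one could generate an infinite tower of distinct simple excitations in $[y]$, contradicting the finiteness assumed in the definition of a $\BF_{n+1}$-category.

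Next, for the existence of the meet (greatest lower bound), I would argue as follows. Given $x_1,x_2 \in [y]$, by hypothesis they are connected by a gapped domain wall $w_{12}$. Viewing the configuration $x_1 \xrightarrow{w_{12}} x_2$ as an $(n+1)$-dimensional defect junction and applying the dimensional reduction principle of Corollary~\ref{cor:dim-reduction} to the wall $w_{12}$, this wall itself admits a description as the condensation of higher-codimensional sub-defects on some $l$-codimensional excitation $z$, in the same Witt class, such that both $x_1$ and $x_2$ are obtained from $z$ by further condensations. This $z$ is a lower bound; to get the \emph{greatest} lower bound, I would take the fiber product (in an appropriate higher-categorical sense within $[y]$) of all such $z$ that dominate both $x_1$ and $x_2$. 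The universal property of this fiber product, together with the uniqueness of the \bulk established in Lemma~\ref{lemma:unique-bulk}, would ensure that the meet $x_1 \wedge x_2$ is well-defined up to isomorphism. The join $x_1 \vee x_2$ is then the maximal common condensed descendant, which exists by a similar argument applied ``in the opposite direction.''

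For the existence of a root, I would run a descending-chain argument in $[y]$ using the meet operation: pick any $x_0$, form $x_1 = x_0 \wedge x_0' $ for any $x_0'$ in the class, iterate. By the conjectured finiteness of the set of simple $l$-codimensional elementary excitations (built into our definition of a $\BF_{n+1}$-category), this descending chain must stabilize. The stabilization point is a minimum: it has the property that every other element dominates it. Uniqueness up to isomorphism follows from antisymmetry. Finally, one verifies that $1^{[l]} \in [1^{[l]}]$ is already minimal in its own class since the trivial excitation supports no non-trivial higher-codimensional sub-defects that could condense to a strictly smaller object, so it is the root of $[1^{[l]}]$.

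The hard part will be the construction of the common ``parent" $z$ dominating two Witt-equivalent excitations. In 2+1D, this rests on the concrete theory of Lagrangian algebras in modular tensor categories and anyon condensation, which provides an explicit parametrization of gapped domain walls. In higher dimensions, we do not yet have the analogous algebraic machinery (higher Lagrangian algebras in fusion $n$-categories), so the argument remains physical/conjectural at this level. A secondary obstacle is that the partial order is defined via ``condensations,'' a notion which itself requires a precise mathematical definition in higher dimensions; rigor here really demands the center-theoretic formulation promised in \Ref{kong-wen-zheng} and presumably the $E_n$-factorization-algebra framework alluded to in Remark~\ref{rema:center2=0}. For these reasons, the statement is, and should remain, a conjecture until the higher-categorical foundations are in place.
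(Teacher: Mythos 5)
The statement you are asked about is a \emph{conjecture} in the paper: the authors do not prove it, they only motivate it by (i) the 2+1D fact that two Witt-equivalent anyon systems joined by a gapped domain wall can both be obtained from a third system via condensation (Section~6.2 of the anyon-condensation reference), and (ii) the observation that such rooted-lattice structures already appear in the Witt classes of modular tensor categories. Your sketch follows exactly this route --- a common-parent construction plus a termination argument --- and you correctly conclude that the existence of the common parent and a precise higher-dimensional notion of condensation are the missing foundations, which is also the paper's own assessment (``the existence of least upper bound is a little hard to see''). So there is no proof in the paper to compare against; what can be assessed is the internal coherence of your outline, and there are three concrete problems.

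First, your partial order is stated backwards relative to how you use it: you declare $x\leq x'$ when $x$ is condensed from $x'$, which makes the common parent $z$ an \emph{upper} bound of $\{x_1,x_2\}$, yet you immediately call $z$ a lower bound and run a descending chain toward a ``minimum.'' For the root to be the minimum of the class, as the conjecture requires, the order must be the reverse of the one you wrote (parent $\leq$ condensed descendant). Second, your antisymmetry argument fails: two non-isomorphic simple excitations could each be obtained from the other by nontrivial condensations without generating an infinite tower of new simple objects --- the paper's own discussion of condensed versus quasi-elementary excitations explicitly flags this circularity ($S_1$ condensable from $S_2$ and vice versa) --- so at best you obtain a preorder, and the ``unique up to isomorphism'' clause is carrying real weight that your finiteness count does not supply. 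Third, Corollary~\ref{cor:dim-reduction} concerns realizing an anomalous theory as a boundary of a gapped system in one higher dimension; it does not produce an $l$-codimensional parent $z$ in the same Witt class from which both $x_1$ and $x_2$ condense. That step is precisely the unproven higher-dimensional generalization of the 2+1D Lagrangian-algebra story, and the subsequent ``fiber product of all such $z$'' presupposes exactly the universal object whose existence is the content of the conjecture. Your final verdict --- that the statement must remain conjectural pending a higher-categorical condensation theory --- is the right one and agrees with the paper.
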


The existence of least upper bound is a little hard to see. Its evidence comes from the theory of anyon condensation in 2+1D, in which such lattice structures have already appeared in the Witt classes of non-degenerate braided fusion categories (or modular tensor categories)\cite{dmno,kong-anyon}. We will call such lattice a rooted lattice. Notice that a rooted lattice is not a rooted tree because the former contains loops in general. 

\begin{defn}
A simple $l$-codimensional excitation $x$ called {\it elementary} if $x$ is the root of its class $[x]$. 
\end{defn}

\begin{rema}
Using above picture, condensed $l$-codimensional excitations are those excitations in the Witt class $[1^{[l]}]$ except the root $1^{[l]}$. All the rest are quasi-elementary. 
\end{rema}

A pure particle-like excitation is automatically elementary. The trivial defect $1^{[l]}$ is elementary by definition. In the 3+1D $\Zb_2$ gauge theory (recall Example\,\ref{C4Z2}), a particle and a vortex line are both elementary. If $x$ is (quasi-)elementary, then its anti-particle $\bar{x}$ is also (quasi-)elementary. The direct sum of elementary excitations are also called elementary. 

\begin{conj}
The fusion product of two elementary excitations is also elementary. 
\end{conj}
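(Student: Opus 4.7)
The plan is to argue by contradiction, assuming Conjecture \ref{conj:root} (on which the very notion of ``elementary excitation'' rests). Suppose $x^{[l]}$ and $y^{[l]}$ are elementary, but $z := x\otimes y$ is not. Then, in the Witt class $[z]$, the root $w$ satisfies $w\leq z$ and $w\neq z$, so $z$ is obtained from $w$ by a nontrivial condensation of higher-codimensional sub-defects, and $w$ is connected to $z$ by a gapped domain wall.

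First I would establish two compatibility lemmas that are natural in this framework and should follow from the physical meaning of the structures involved. (i) \emph{Witt equivalence is monoidal}: if $a \rwsim b$, then $a\otimes c \rwsim b\otimes c$, because the gapped domain wall witnessing $a\rwsim b$ can be stacked with the trivial domain wall on $c$ to yield a gapped domain wall between $a\otimes c$ and $b\otimes c$. (ii) \emph{The condensation order is monoidal}: if $u\leq v$, then $u\otimes c \leq v\otimes c$, because the same condensation of sub-defects producing $v$ from $u$ can be carried out inside the stacked layer containing $u\otimes c$. Combining these, the map $([x],[y])\mapsto [x\otimes y]$ is a well-defined binary operation on Witt classes, and $\otimes$ is order-preserving on each class.

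Next I would fuse the strict inequality $w < x\otimes y$ with $\bar{y}$, using that anti-particles of elementary excitations are elementary (stated just before the conjecture). We obtain $w\otimes \bar{y} \leq x\otimes y\otimes \bar{y}$ with $w\otimes \bar{y} \rwsim x\otimes y\otimes \bar{y}$. Now $y\otimes \bar{y}$ contains $1^{[l]}$ as a direct summand by the existence of the evaluation and coevaluation morphisms, so $x\otimes y\otimes \bar{y}$ contains $x$ as a summand, and the corresponding summand of $w\otimes \bar{y}$ lies strictly below $x$ in the rooted lattice of $[x]$. This contradicts the elementarity of $x$ (the minimality of its root), finishing the argument. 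Symmetrically one may fuse with $\bar{x}$ to rule out the case where the obstruction ``lives in $y$''.

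The hard part will be the summand-extraction step in the last paragraph: one must show that the strict inequality $w\otimes \bar{y} < x\otimes y\otimes \bar{y}$ descends to a strict inequality on the $1^{[l]}$-isotypic component of $y\otimes \bar{y}$, rather than being entirely absorbed into the $y\otimes\bar{y}\to 1^{[l]}$ condensation. Equivalently, one must prove multiplicativity of roots, $\mathrm{root}(a\otimes b)=\mathrm{root}(a)\otimes \mathrm{root}(b)$, which is essentially the content of the conjecture itself. I would expect to establish this by refining Conjecture \ref{conj:root} to the statement that the rooted-lattice structure is a lax tensor functor from the monoid of Witt classes to the category of rooted lattices, and then verifying this on the well-understood $2+1$D case (where, by \cite{dmno,kong-anyon}, the lattice structure on Witt classes of unitary modular tensor categories is compatible with $\boxtimes$) and lifting it dimension by dimension via the inductive description of higher-codimensional condensations used throughout Section~\ref{sec:math-def}.
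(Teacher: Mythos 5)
The paper offers no proof of this statement: it is recorded as a bare conjecture, on the same footing as Conjecture~\ref{conj:root} (the rooted-lattice structure of Witt classes) on which the very definition of ``elementary'' rests, and the surrounding text explicitly defers the required higher-dimensional condensation theory to future work. So there is no argument of the authors to compare yours against; your proposal has to stand on its own.

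It does not yet do so. The skeleton is sensible --- monoidality of Witt equivalence and of the condensation order, fusing the strict relation $w < x\otimes y$ with $\bar{y}$, and extracting the $x$-summand of $x\otimes y\otimes\bar{y}$ --- but, as you yourself flag, the decisive step is that strictness survives restriction to that summand, i.e.\ that the nontrivial condensation relating $w$ to $x\otimes y$ is not entirely absorbed into the $y\otimes\bar{y}\to 1^{[l]}$ channel. That is exactly multiplicativity of roots, which is the statement being proved; replacing it by the hypothesis that the rooted-lattice structure is a lax tensor functor is assuming a strictly stronger conjecture, not supplying a proof. There are also unaddressed formal obstacles earlier in the chain: the partial order $\leq$ is introduced only between simple excitations within a single Witt class, whereas $w\otimes\bar{y}$ and $(x\otimes y)\otimes\bar{y}$ are composite, so the inequality you propagate has no meaning until you establish compatibility of $\leq$ with direct-sum decompositions and a canonical matching of summands --- precisely the structure the paper says does not yet exist; moreover Conjecture~\ref{conj:root} itself, which you assume, is unproven, so even a complete version of your argument would only be a conditional reduction. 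Verifying the $2{+}1$D case against the known anyon-condensation results would be genuine evidence, but ``lifting dimension by dimension'' is not an argument. The honest conclusion is that the statement remains open, and your proposal locates where the difficulty sits rather than resolving it.
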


As a consequence, the set of all elementary excitations is closed under fusion product.

\begin{rema}  \label{rema:leaves}
By our assumption of the finiteness, the condensation can not go forever. This gives another interesting set of excitations in a given Witt class. They are sitting at the other ends of the rooted lattice, thus will be called {\it leaves} of the class. They can be obtained from the root excitation via a maximal condensation such that no further condensation is possible (see Section\,6.1 in \Ref{kong-anyon} for closed 2+1D cases). 
\end{rema}

The next important question is how to determine the root of each Witt class. We believe that if the complete set of observable data is given for a topological order, in principle, one should be able to determine the root. In order to achieve it, we need a yet-to-be-developed theory of condensation for higher dimensional topological order. 

\begin{conj}
A full-fledged condensation theory allow us to identify the elementary excitation as the root of each lattice associated to each Witt class. 
\end{conj}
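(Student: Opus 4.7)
The plan is to build up a higher-dimensional condensation theory that mirrors the rigorously established 2+1D anyon condensation theory (in which connected \'etale/Lagrangian algebras in a unitary modular tensor category classify gapped domain walls and condensed phases \cite{dmno,kong-anyon}), and then to use the resulting algebraic structure on each Witt class to extract a canonical minimum. Concretely, for each $l$-codimensional elementary excitation $x$ I would formalize a condensation of codimension $>l$ sub-defects inside $x$ as an ``condensable algebra'' $A$ living in the appropriate sub-$(n-l)$-category $\hat x$, together with a condensation functor producing a new $l$-codimensional excitation $x_A$ and a canonical gapped domain wall $x \to x_A$. The $2$+$1$D case $l=n-2$ is already known; the same algebraic skeleton (algebras, modules, local modules, Lagrangian condition for trivialization) should govern all dimensions once the fusion $n$-category framework of Section \ref{sec:math-def} is pinned down.

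Next, I would define the partial order $\leq$ on the Witt class $[x]$ by declaring $y \leq x$ iff $y \simeq x_A$ for some condensable $A$, and check antisymmetry using the physical uniqueness of the root in Conjecture \ref{conj:root} together with the finiteness assumption built into the definition of a $\prebf_{n+1}$-category (no closed quantum-Hall-like factors can hide in the condensation). Transitivity follows from composition of condensable algebras ($A \subset B$ gives a two-step condensation), and the fact that $\leq$ descends to Witt classes follows because every condensation produces a gapped wall by construction. I would then verify the lattice axioms: the join $y_1 \vee y_2$ is obtained by condensing the smallest algebra containing both $A_1$ and $A_2$, while the meet $y_1 \wedge y_2$ is produced by the folding construction of Figure \ref{TConj} and Lemma \ref{lem:dual-Z}, extracting from $y_1 \boxtimes \bar y_2$ (which by Proposition \ref{thm:w=gw} is exact, hence arises as $\cZ_{l}(\EE)$ for some $(l-1)$-codimensional defect $\EE$) a common predecessor in the Witt class.

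With the lattice structure in hand, the root is constructed by iterating the meet: starting from any $x \in [x]$, repeatedly replace $x$ by $x \wedge y$ for $y \in [x]$ strictly below $x$. Finiteness of the $\prebf_{n+1}$-category guarantees this process terminates, producing a minimum $r$. Uniqueness of $r$ follows abstractly from the lattice property: if $r_1, r_2$ are both minima, then $r_1 \wedge r_2 \leq r_1$ and $r_1 \wedge r_2 \leq r_2$ force $r_1 \simeq r_1 \wedge r_2 \simeq r_2$. The identification root $=$ elementary then reduces to showing that $r$ admits no non-trivial further condensation from strictly lower-codimensional pure defects: if it did, the resulting defect would be strictly below $r$ in $\leq$, contradicting minimality. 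Conversely, any non-root in $[x]$ sits above some condensation and is therefore \emph{condensed} in the sense of Definition \ref{defn:quasi-elem-physics}.

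The main obstacle, and the reason the statement must remain a conjecture at this stage, is the construction of the higher-dimensional condensation theory itself. One needs the correct notion of a condensable (or ``$E_k$-algebra with trivializing structure'') object inside a unitary fusion $n$-category, a proof that condensations compose and that their quotients inherit fusion $n$-categorical structure, a higher analogue of the Lagrangian criterion for full trivialization, and, crucially, the existence of joins and meets in the resulting poset. Each of these is nontrivial even in $3$+$1$D; the $E_n$-algebraic machinery sketched in Section \ref{sec:math-def} (compare also \Ref{lurie2} and the forthcoming \Ref{kong-wen-zheng}) is the natural setting, but a complete theory is not yet available. Once it is, the argument above should promote the conjecture to a theorem.
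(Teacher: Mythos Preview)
The statement is explicitly labeled a Conjecture in the paper, and the paper provides no proof for it. The surrounding text makes clear that the authors regard a ``full-fledged condensation theory'' in higher dimensions as not yet available; the conjecture is a programmatic claim about what such a theory, once developed, should deliver. There is therefore no paper proof to compare your proposal against.

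Your proposal is a reasonable sketch of a research program, and you correctly identify the central obstruction yourself in the final paragraph: the higher-categorical condensation machinery (condensable algebras in unitary fusion $n$-categories, their composition, the higher Lagrangian criterion) does not yet exist in the generality required. That is precisely why the paper states this as a conjecture rather than a theorem.

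One technical caution about your outline: your meet construction invokes Proposition~\ref{thm:w=gw} and Lemma~\ref{lem:dual-Z}, but those results concern closed $\BF_n$ categories under the stacking operation $\boxtimes$, whereas the Witt classes in this conjecture are equivalence classes of $l$-codimensional excitations \emph{inside} a fixed ambient $(n+1)$-dimensional topological order. The fusion product $\otimes$ of such sub-defects is not the same operation as the stacking $\boxtimes$ of free-standing phases, and the folding trick of Fig.~\ref{TConj} does not obviously apply to excitations nested in a nontrivial bulk. To make this step work you would need either to argue that $l$-codimensional excitations can be treated as anomalous $\BF_{n+1-l}$ categories in such a way that the results of Section~\ref{sec:witt-eq-relation} transfer, or to construct the meet intrinsically from the bulk-to-wall maps and condensable algebras internal to the ambient phase. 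This is not fatal to your outline, but it is a genuine gap that a complete argument would have to close.
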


A $\prebf$-category might not be maximal, but it already contains all the elementary defects. 
We propose the following conjecture. 
\begin{conj}
For a given $\prebf$-category $\EC$, a full-fledged condensation theory allows us to find all new defects obtained from condensations (of finite type, i.e. with gapped domain walls) of excitations in $\EC$. Then we can extend $\EC$ by adding all these new defects so that we obtain a maximal $\prebf$-category $\EC^{max}$ which contains $\EC$ as a sub-$\prebf$-category. 
\end{conj}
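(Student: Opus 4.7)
The plan is to construct $\EC^{max}$ by an inductive enlargement procedure that mirrors the hierarchy of codimensions in $\EC$. Concretely, I would proceed codimension-by-codimension, starting from the highest codimension defects (instantons) and working downwards. At each level $l$, the data to be added is governed by a condensation prescription on the set of $(l+1)$-morphisms already present: given simple $l$-codimensional defects $x, y$ in the current stage of the extension, one enumerates all finite-type condensations inside $\hom(x,y)$, that is, those algebra-like structures on $(l+2)$-morphisms whose condensed phase admits a gapped domain wall with the uncondensed phase. Each such condensation produces a new candidate $(l+1)$-morphism (a new gapped domain wall), which is then added to $\hom(x,y)$ along with all its composites with existing morphisms under the horizontal and vertical compositions of the higher category.

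The second step is to show that the resulting enlarged collection still assembles into an $(n+1)$-category with unitary fusion structure, satisfying the axioms of a $\prebf_{n+1}$-category. For this, I would verify the following closure properties: (i) the composition (fusion) of two condensed defects is again condensed, which follows because one can iterate condensations; (ii) every new defect automatically inherits adjoints, coming from the duality of the condensation data (the condensation defects in condensation theory are expected to come with canonical left and right adjoints given by the reversed condensation); and (iii) the half-braidings and coherence isomorphisms of the ambient $(n+1)$-category restrict consistently to the new morphisms, since condensation is a \emph{local} construction carried out within a specified hom $(n-l)$-category. The original $\EC$ sits inside $\EC^{max}$ as the full sub-$\prebf_{n+1}$-category consisting of those defects that were present before the procedure; this inclusion preserves fusion and braiding by construction.

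The third step is maximality. Here I would argue by induction on codimension that any defect obtainable from $\EC$ by a condensation of finite type is already accounted for at the stage where the relevant condensation data live, using Conjecture~\ref{conj:root} to identify each Witt class of defects with a rooted lattice whose non-root elements are exactly reachable from the root by iterated condensation. Since $\EC$ already contains the roots (the elementary defects of each class) by assumption, and the procedure adds every non-root element of each class by applying all possible condensations, nothing further can be produced; stability under the procedure is then the statement of maximality.

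The main obstacle will be step one: a precise, model-independent formulation of ``finite-type condensation'' in an arbitrary dimension. In $2{+}1$D this is encoded by Lagrangian algebras in braided fusion categories, and in $3{+}1$D there are partial proposals, but in general one needs a higher-categorical notion of $E_k$-algebra together with a module-theoretic characterization of when the condensed phase admits a gapped wall with the original one. Without a complete higher-dimensional condensation theory, the above construction is only a blueprint: once such a theory is available, the inductive enlargement and the verification of the $\prebf_{n+1}$ axioms should be essentially formal, and the maximality statement reduces to the rooted-lattice picture of Conjecture~\ref{conj:root}.
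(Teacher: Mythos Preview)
The statement you are attempting to prove is stated in the paper as a \emph{conjecture}, not a theorem; the paper offers no proof or proof sketch of its own. It is flagged explicitly as something whose validity is contingent on a ``full-fledged condensation theory'' that the authors do not possess, and the surrounding remark makes clear that the existence of $\EC^{max}$ is assumed as a conceptual device rather than established. So there is nothing in the paper to compare your argument against.

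That said, your blueprint is a reasonable articulation of what such a construction would look like, and your final paragraph correctly identifies the main obstruction: without a general higher-categorical definition of finite-type condensation, the inductive enlargement procedure cannot be made precise. Two further points are worth flagging. First, your maximality argument in step three leans on Conjecture~\ref{conj:root} (the rooted-lattice picture of Witt classes), which is itself unproven in the paper; even granting a condensation theory, you would still be trading one conjecture for another. Second, your assumption that ``$\EC$ already contains the roots (the elementary defects of each class)'' is something the paper asserts informally just before this conjecture, but it is not part of the definition of a $\prebf$-category, so it would need to be justified or added as a hypothesis. In short: your proposal is a sensible outline, but it is a program rather than a proof, and the paper treats the statement the same way.
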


\begin{rema}
For practical purpose, it is not necessary to work out the maximal $\prebf$-category before we identify the root of a Witt class. But it is important that the entire Witt class can be recovered in principle.
\end{rema}

Assuming this, it provides us a new way to define a $\BF_{n+1}$-category from a $\prebf_{n+1}$-category. 
\begin{defn}
For a given $\prebf_{n+1}$ category $\EC$, the core of a $\EC$ is the smallest sub-$\prebf_{n+1}$-category containing all elementary morphisms, each of which is defined by the root of a Witt class of morphisms in $\EC^{max}$. We denote the core of $\EC$ by $\core(\EC)$. 
\end{defn}

\begin{defn}
A $\BF_{n+1}$-category is a $\prebf_{n+1}$-category that is equivalent to the core of another $\prebf_{n+1}$-category.  
\end{defn}

\begin{rema}  \label{rema:closure}
Let $S$ be a set of morphisms in an $(n+1)$-category $\ED$. We will call the smallest sub-$(n+1)$-category that contains the set $S$ as the closure of $S$. In this language, $\core(\EC)$ is the closure of the set of elementary morphisms in $\EC$. The notion of closure only make sense in a given $\EC$. But when the set $S$ itself is already a special kind $(n+1)$-category, it is sometimes clear what it means by a smallest unitary sub-$(n+1)$-category containing it without referring to a large category in which it lives. For example, in the third example in Example\,\ref{expl:closed-prebf}, the extension only involves adding the additive structure to $S$. From now on, we will refer to such extension as a $\prebf$-closure of $S$. 
\end{rema}

\void{
\subsection{Old and useless stuffs}

There is a better way to characterize an excitation that does not contain any condensed factors or in some sense elementary. For a condensed $l$-dimensional excitation $y$, any pure excitations of dimension lower than $l$ fusing onto $y$ can not leave $y$ freely. This is also true for an excitation of the form (\ref{eq:elem-factors}). So for an excitation $z^{[l]}$ that can not contains a condensed factor, we expect that excitations of dimension lower than $l$ fused onto $z$ can leave $z$ freely. This intuition also supported by the $3+1$D $\Zb_2$ gauge theory (recall Example\,\ref{C4Z2}), in which there are particles and a vortex line and no interaction between them. A simple particle can fuse onto the vortex line and then leave it freely. Therefore, we propose the following physical definition of an elementary excitation.

A pure particle-like excitation is automatically elementary. If $x$ is (quasi-)elementary, then its anti-particle $\bar{x}$ is also (quasi-)elementary. The direct sum of (quasi-)elementary excitations are (quasi-)elementary. Moreover, it is clear that the fusion product of two elementary excitations is also elementary. Therefore, all elementary excitations are closed under fusion product.

\smallskip
In general, there might be a sub-defect in $x$ that does not come from pure excitations in the bulk. But it is also possible that there is no such sub-defect. An example of the latter case is given by the vortex line in the 3+1D $\Zb_2$ gauge theory. In this example, the vortex does not contain any intrinsic sub-defects, all of its sub-defects come from pure particle-like excitations, which fuses onto/leave the vortex line freely. It is unclear to us if this is a general phenomenon or not due to the lack of examples of high dimensional topological orders. In any case,  such elementary excitations certainly form an interesting class which deserves a name. 
\begin{defn}
An elementary excitation $x$ is called {\it atomic} if $x$
does not contain any sub-defect which is not from lower dimensional pure excitations. 
\end{defn}

If the topological order is closed, we can say more about the excitations in it. If it contains a pure non-atomic elementary excitation $x^{[l]}$, then there is a sub-defect $x_0^{[k]}$ (for $k>l$) that is not from lower dimensional pure excitations, or, more precisely, $x_0$ is not in the image of any bulk-to-wall maps. It is still possible to detect $x_0$ by braiding it with other defects for $l>1$. If $l=1$, no braiding is possible for 1-codimensional defects. In this case, excitations like $x_0$ must be detectable among themselves (see Remark\,\ref{rema:detect-wall-excitation}). As a consequence, the set of all such $x_0$ forms a closed topological phase. But according to our assumption on finiteness, it is impossible. So in a closed topological order, there is no non-atomic elementary 1-codimensional excitation. 

\begin{rema} \label{rema:detect-wall-excitation}
A defect can be detected by braiding it with other defects. But excitations in a domain wall is only half-braided with defects outside the domain wall. 
The half braiding is not enough to detect defects in a domain wall from excitations in a bulk. Therefore, in general, defects lie in a domain wall $f$ between $x^{[l]}$ and $y^{[l]}$ can not be detected by defects in $x$ and $y$. However, for those defects $m$ in $f$ that lie in the image of the bulk-to-wall map, it is possible to detect them via the full braiding with defects in the bulk\cite{L1355}. We will revisit this point in detail in \Ref{kong-wen-zheng}. 
\end{rema}

Moreover, for $l>1$, the non-atomic excitation $x$ can fuse onto the domain wall and creates an $l$-codimensional defect $x'$ which is a wall between higher dimensional domain walls. The sub-defect $x_0$ of $x$ is fused into a sub-defect $x_0'$ of $x'$. If this fusion is not a condensation, then $x_0'$ does not lie in the image of any bulk-to-wall maps and there are non-trivial 1-codimensional defects or domain walls, then such $x_0'$ can not be detected by pure excitations via braiding. Again, all such sub-defects in $x'$ must form a closed topological phase which is impossible by the finiteness. Therefore, such $x_0$ must be trivial. In other words, $x_0$ must be condensed on that domain wall. 

Above discussion also show that all sub-defects in a domain wall must lie in the image of bulk-to-wall maps. We believe that it is the necessary and sufficient condition for all these domain walls being obtained from the condensations of pure elementary excitations. For this reason, these domain walls, except the trivial one, are not necessary in a minimal description of the same topological order. For example, all the Levin-Wen models enriched by defects of all codimensions\cite{KK1251} are closed topological phases, and all its domain walls and walls between walls are all condensed (satisfying the dominance property). For a minimal description, the bulk excitations, which are given a unitary modular tensor category, is enough to determine the topological order. Moreover, non-atomic elementary excitations must be condensed on the domain walls, but atomic ones can survive on the domain wall.

\begin{rema} \label{rema:walls-are-condensed}
The same argument also works for all gapped domain walls between two different closed topological orders. So all non-trivial gapped domain walls in a closed $\prembf_{n+1}$-category should be all condensed. Non-atomic elementary excitations must be condensed, but atomic ones can survive on a domain wall (or a boundary). For example, in the 3+1D $\Zb_2$ gauge theory, particles can all condensed and leave only vortex lines on a gapped boundary, or vortex lines are all condensed and leave particles on the boundary. Both of these two boundary phases are anomalous. 
\end{rema}

}

\void{

\subsection{Condensed/elementary morphisms in a $\prebf_{n+1}$-category}



In this section, we briefly sketch a possible theory of condensation in higher dimensional topological order. 

\medskip
We will first recall a known result in a family of $(2+1)$-dimensional topological phases. It is well-known that a modular tensor category determines a $(2+1)$-dimensional topological phase. Let $\EC$ and $\ED$ be modular tensor categories. Excitations on a gapped domain wall between the $\EC$-phase and the $\ED$-phase form a unitary fusion 1-category. The excitations in $\EC$-phase and $\ED$-phase can move onto the wall and become wall excitations. This process gives arise to two monoidal functors $\EC \xrightarrow{L} \EE \xleftarrow{R} \ED$,\cite{KK1251,fsv,kong-anyon}  or equivalently a monoidal functor
$$
L\boxtimes R: \EC\boxtimes \ED^{\otimes^\op} \to \EE
$$
where the tensor product $\boxtimes$ is the Deligne tensor product and $\ED^{\otimes^\op}$ is the same category as $\ED$ but with opposition tensor product. This monoidal functor $L\boxtimes R$ is an example of 1-dimensional bulk-to-wall map. 

If the $\EE$-wall can be obtained from a condensation of anyons in $\EC\boxtimes \ED^{\otimes^\op}$, it was proved in \Ref{kong-anyon} that the bulk-to-wall map $L\boxtimes R$ is a dominant functor, which means that every object in $\EE$ is a subobject of an object in the image of $L\boxtimes R$. Hence, the dominance of the bulk-to-wall map is a necessary condition for a condensed domain wall. Conversely, by Lemma\,3.5 in \Ref{dmno}, it follows immediately that this condition is also sufficient. More explicitly, let $\one_\EE$ be tensor unit of $\EE$, the object $A=(L\boxtimes R)^\vee(\one_\EE)$, where $(L\boxtimes R)^\vee$ is the adjoint functor of $L\boxtimes R$, is a connect separable commutative algebra (or a condensible algebra) in the modular tensor category $\EC\boxtimes \ED^{\otimes^\op}$ and is condensible\cite{kong-anyon}. 
The gapped domain wall $\EE$ can be recovered as the category $(\EC\boxtimes \ED^{\otimes^\op})_A$ of $A$-modules in $\EC\boxtimes \ED^{\otimes^\op}$. The condensation is trivial if and only if $L$ and $R$ are both fully faithful. 
Consequentially, we have the following theorem. 
\begin{thm}  \label{thm:cond-3-d}
A gapped domain wall $\EE$ between two $(2+1)$-dimensional topological orders determined by  modular tensor categories $\EC$ and $\ED$ is condensed if and only if the bulk-to-wall map $L\boxtimes R: \EC\boxtimes \ED^{\otimes^\op} \to \EE$ is dominant and at least one of $L$ and $R$ is not fully faithful.
\end{thm}

\begin{rema}
It is very common that one of $L$ and $R$ is fully faithful. For example, when the $\ED$-phase is obtained from the $\EC$-phase by an anyon condensation, according to \Ref{kong-anyon}, there is a condensible algebra $A$ in $\EC$ such that $\ED \simeq \EC_A^{loc}$ and $\EE\simeq \EC_A$, where $\EC_A^{loc}$ is the category of local $A$-modules and $\EC_A$ the category of $A$-modules. Moreover, the functor $R$ in this case is simply the canonical embedding $\EC_A^{loc} \hookrightarrow \EC_A$. But the functor $L$ is given by the functor $-\otimes A: \EC \to \EC_A$ which is dominant and never fully faithful as long as the condensation is nontrivial, i.e. $A\ncong \one_\EC$. 
\end{rema}

We conjecture that above result generalizes to all (higher dimensional) topological orders. To state our conjecture more precisely, we first generalize the notion of dominance to higher functors.

\begin{defn}
A functor $F: \EC \to \ED$ between two $(n+1)$-categories $\EC$ and $\ED$ is called {\it dominant} if there exist an $l$-morphism $c\in \EC$ for each $l$-morphism $d\in \ED$ such that $\hom(d, F(c))$ is not empty (non-zero if there is an additive structure on $\ED$) for all $0\leq l \leq n$. 
\end{defn}

\begin{defn}  \label{def:dominance}
In an $(n+1)$-category, the $k$-dimensional bulk-to-wall map from an $l$-morphism $x^{[l]}\otimes y^{[l]}$ to an $(l+k)$-morphism $z^{[l+k]}$:
$$
L\otimes R: \hom(\id_x^k, \id_x^k) \otimes \hom(\id_y^k, \id_y^k)^\op \to \hom(z,z)
$$ 
is called {\it dominant} if it is dominant as an $(n-l-k)$-functor, and it is called {\it fully faithful} if both of $L$ and $R$ are fully faithful as $(n-l-k)$-functors.
\end{defn}

Let $z^{[l+1]}$ be a domain wall between two $l$-codimensional defects $x^{[l]}$ and $y^{[l]}$. We propose an higher-dimensional analogue of Theorem\,\ref{thm:cond-3-d}. 
\begin{conj}  \label{conj:condensation-1}
If the 1-dimensional bulk-to-wall map $L\otimes R: x\otimes y \to z$ is dominant and at least  one of $L$ and $R$ is not fully faithful, then the domain wall $z$ can be obtained from a condensation of $(l+2)$- (or higher) codimensional defects nested in $x\otimes y$. In general, if the bulk-to-wall map is not dominant, then those sub-defects of $z$ lying in the image (as a subject of an object in the image) are condensed (from lower dimensional sub-defects in $x\otimes y$). 
\end{conj}

We are not able to provide a rigorous proof of above conjecture in this work. But a possible proof of above conjecture can follow a logical path explained below. In this case, the bulk-to-wall map  
\begin{equation} \label{eq:bulk-to-wall-2}
L\otimes R: \hom(\id_x,\id_x)\otimes \hom(\id_y,\id_y)^\op \to \hom(z,z) 
\end{equation}
is an $(n-l-1)$-functor. Notice that $\hom(\id_x,\id_x)\boxtimes \hom(\id_y,\id_y)$ is a unitary braided fusion $(n-l-1)$-category and $\hom(z,z)$ is a unitary fusion $(n-l-1)$-category. The map $L\otimes R$ is actually a monoidal $(n-l-1)$-functor. Our assumption of the finiteness and unitarity guarantee that the adjoint functor $(L\otimes R)^\vee$ exists. Moreover, this adjoint pair forms an ambidextrous adjunction such that $(L\otimes R)^\vee$ also share some weak monoidal properties, which should be an higher category analogue of  the so-called Frobenius monoidal 1-functor\cite{kong-runkel}. As a consequence, the object $A=(L\otimes R)^\vee(\one_\EE)$ should be again a condensible algebra (connected, separable and commutative) in $\hom(\id_x,\id_x)\otimes \hom(\id_y,\id_y)$. Since the functor $L\otimes R$ is dominant, its image contains all the information of the target category. It is only reasonable to believe that its adjoint contains all the information and can recover the target category $\hom(z,z)$ completely. More precisely, we believe that the category $\hom(z,z)$ should be recovered as the category of $A$-modules. Notice that what we have just described is an higher categorical analogue of Lemma\,3.5 in \Ref{dmno}.

\begin{rema}
In Lemma\,3.5 in \Ref{dmno}, the domain category of the dominant functor is not required to be non-degenerate. It works for all braided fusion 1-categories. Similarly, we did not assume $x$ and $y$ in Conjecture\,\ref{conj:condensation-1} to be closed. Actually, in 2+1D cases, the bulk-to-wall map is also a central functor, i.e. respecting the half-braiding. It means that the map factors through a functor $f:x\otimes y \to Z(z)$, where $Z(z)$ is the center of $z$. If $x$ and $y$ are both closed, then the functor $f$ is an equivalence between two non-degenerate braided fusion 1-categories. We believe that this result also works for higher dimensional cases. We will discuss this issue in details in \Ref{kong-wen-zheng}.
\end{rema}

\begin{rema}
We have no doubt that what we have just described is a skeleton of the theory of condensation in higher dimensional topological phases. It is worth pointing out that the study of such problems is only possible after we have developed a precise mathematical language to describe higher dimensional topological orders. 
\end{rema}

More generally, consider a defect $z^{[l+k]}$ that is surrounded by $x^{[l]}$ and $y^{[l]}$, domain walls between $x$ and $y$ and walls between walls, etc. There is an $i$-dimensional bulk-to-wall maps for $i=1,2,\cdots, k$ from $x^{[l]} \boxtimes y^{[l]}$ to each of these domain walls and walls between walls, ..., and to $z^{[l+k]}$. The following statement is reasonable. 
\begin{conj}
If the associated $i$-dimensional bulk-to-wall maps, for $i=1,\cdots, k$, are all dominant but not all fully-faithful, then the defect $z^{[l+k]}$ and those domain walls surrounding $z^{[k]}$ are all condensed from defects in $x^{[l]} \boxtimes y^{[l]}$.
\end{conj}

\begin{rema} \label{rema:bulk-to-wall}
In above conjecture, if $z^{[l+k]}$ is a domain wall between $u^{[l+k-1]}$ and $v^{[l+k-1]}$, and if the bulk-to-wall map from $u^{[k-1]}\boxtimes v^{[k-1]}$ to $z$ is not dominant, then $z^{[k]}$ can not be obtained from $u\boxtimes v$ via condensation. But if the $k$-dimensional bulk-to-wall map from $x\boxtimes y$ to $z$, $(k-1)$-dimensional bulk-to-wall maps from $x\boxtimes y$ to $u, v$ and other $i$-dimensional bulk-to-wall maps for $i=1, \cdots, k-2$ are all dominant, then $z^{[k]}$ can be condensed from $x^{[l]} \boxtimes y^{[l]}$. 
\end{rema}

According to above conjectures, we introduce the following mathematical notion parallel to Definition\,\ref{defn:quasi-elem-physics}. 
\begin{defn} \label{defn:quasi-elem-math}
In a $\prebf_{n+1}$-category, a pure $n$-morphism is {\it quasi-elementary} if it is simple. For $0<l<n$, an $l$-morphism $z^{[l]}$ is called {\it quasi-elementary} if it is simple and there are morphisms (including the 0-morphism $z$) in $\hat{z}$ lying outside of the images of all bulk-to-wall maps to $z$ (recall (\ref{eq:k-bulk-to-map-ncat})). A morphism that is not quasi-elementary is called {\it condensed}. 
\end{defn}

\medskip
The physical definition of an elementary excitation in Definition\,\ref{def:elementary} motivates us to give the following mathematical definition. 
\begin{defn}  \label{def:elementary-condensed-math}
Let $x^{[l]}$ be a simple (not necessarily pure) $l$-morphism in a $\prebf_{n+1}$-category with a single object $\ast$. 
\begin{enumerate}
\item If both of the following bulk-to-wall maps
$$
\hom(\id_\ast^l, \id_\ast^l) \xrightarrow{L} \hom(x,x) \xleftarrow{R} \hom(\id_\ast^l, \id_\ast^l)
$$ 
are fully faithful, then the $l$-morphisms $x^{[l]}$ is called {\it elementary}. 
\item If one of $L$ and $R$ is not fully faithful and $L\boxtimes R$ is dominant, $x$ is called {\it condensed}. 
\end{enumerate}
A finite direct sum of simple elementary/condensed $l$-morphisms is also elementary/condensed. A morphism that is not condensed is called {\it quasi-elementary}. 
\end{defn}

For example, the trivial domain wall $\id_\ast$ is automatically elementary, so is $\id_\ast^k$ for $k=1,2,\cdots, n$. It is also clear that all the pure $n$-morphisms, i.e. objects in $\hom(\id_\ast^{n-1}, \id_\ast^{n-1})$, are elementary. 

\smallskip
Since $\id_\ast^l \otimes \id_\ast^l \simeq \id_\ast^l$, we expect that $\hom(\id_\ast^l, \id_\ast^l) \otimes \hom(\id_\ast^l, \id_\ast^l) \simeq \hom(\id_\ast^l, \id_\ast^l)$ with a properly defined tensor product of unitary higher categories. We assume this. Then using the definition of an elementary/condensed excitation, it is easy to see that if $x$ and $y$ are both elementary/condensed, so is $x\otimes y$. 

}

\subsection{Closed/anomalous $\BF_{n+1}$-categories} \label{sec:closed-bf-def}

In this subsection, we want to give a mathematical definition of a closed/anomalous $\BF_{n+1}$-category.



\medskip
We have shown in Section \ref{sec:fb-in-ncat} that if we view a $\prebf_{n+1}$-category as a topological phase with defects of all codimensions, $i$-morphisms corresponding to $i$-codimensional defect, the information of the braiding between any two defects $R_{x,y}: x\otimes y \to y\otimes x$ is automatically included in the defining structure of an $(n+1)$-category. Without lose of generality, we can assume $x$ and $y$ are of the same dimension. 
What motivates our definition of a closed $\prebf_{n+1}$-category is the conjecture that all excitations in a closed topological order can be detected by braiding with other defects. Therefore, if an excitation double braids with any other excitations does not give any physically detectable difference, this excitation must be the trivial excitation. 

\begin{defn}
In a $\prebf_{n+1}$-category with the object $\ast$, two pure morphisms $x^{[l]}$ and $y^{[l]}$  are said to be {\it mutually symmetric} if one of the following conditions is satisfied:
\begin{enumerate}
\item at least one of the braidings $x\otimes y \xrightarrow{\cong} y\otimes x$ and $y\otimes x \xrightarrow{\cong} x\otimes y$, or simply the double braiding, is not defined as a sub-structure of a $\prebf_{n+1}$-category, 
\item both of the braidings $R_{x,y}: x\otimes y \to y\otimes x$ and $R_{y,x}: y\otimes x \to x\otimes y$ are well-defined and double braiding $R_{y,x} \circ R_{x,y}$ is trivial in all higher homotopies. By ``trivial in all higher homotopies", we mean that $R_{y,x} \circ R_{x,y}=\id_{x\otimes y}$ if $i=n$, or for $1<i< n$, there are $r_{i+2}^{[i+2]}$ such that 
\begin{equation} \label{eq:xy-symmetric}
R_{y,x} \circ R_{x,y} \overset{{\scriptsize r_{i+2}^{[i+2]}}}{\simeq} \id_{x\otimes y},  
\end{equation} 
and $r_{i+2}'$ such that $r_{i+2} \circ r_{i+2}' \overset{r_{i+3}}{\simeq} \id_{\id_{x\otimes y}}$, so on and so forth, until $r_{n+1}\circ r_{n+1}'=\id_{x\otimes y}^{n-i+1}$.

Moreover, for all $x\xrightarrow{f} x$ and $y\xrightarrow{g} y$, we require that $R_{g,f} \circ R_{f,g}$, as shown in the following diagram, 
$$
\raisebox{1em}{
\xymatrix@R=1.2em@C=1.2em{ 
&  x \otimes y \ar[dr]^{R_{x,y}} & & \\
x \otimes y \ar[ur]^{f\otimes g} \ar[dr]_{R_{x,y}}  & \Downarrow R_{f,g} & y \otimes x \ar[rd]^{R_{y,x}} & \\
& y \otimes x \ar[ur]^{g\otimes f} \ar[dr]_{R_{y,x}} &  \Downarrow R_{g,f} & x\otimes y   \\
& & x\otimes y \ar[ur]_{f\otimes g}  &
}}
$$
is trivial in all higher homotopies. And the same is true for all higher morphisms in $\hom(x,x)$ and $\hom(y,y)$. 
\end{enumerate}
The two $(l+1)$-morphisms $f$ and $g$ above are said to be mutually symmetric if both $R_{y,x}\circ R_{x,y}$ and $R_{g,f} \circ R_{f,g}$ are trivial to higher homotopies and the same is true for all higher morphisms in $\hom(f,f)$ and $\hom(g,g)$. 
\end{defn}

\begin{rema}
The intuition for being trivial in all higher homotopies is that there should not be any detectable difference between $R_{y,x} \circ R_{x,y}$ and $\id_{x\otimes y}$ even with decoration by higher codimensional defects . 
\end{rema}

More generally, in a $\prebf_{n+1}$-category with the object $\ast$, the braiding between two (not necessarily pure) morphisms $f^{[l+1]}: x_1^{[l]} \to x_2^{[l]}$ and $g^{[l+1]}: y_1^{[l]} \to y_2^{[l]}$ can also be automatically defined by the axioms of an $(n+1)$-category. We want to define when they can be said to be mutually symmetric. For simplicity, we assume that $x_1,x_2$ and $y_1,y_2$ are pure. More general cases are similar. We omit the details. 
\begin{defn}
$f$ and $g$ are said to be {\it mutually symmetric} if one of the following conditions is satisfied:
\begin{enumerate}
\item the double braiding is not defined. 
\item All three double braidings: $R_{y_1,x_1} \circ R_{x_1,y_1}$, $R_{y_2,x_2} \circ R_{x_2,y_2}$ and $R_{g,f} \circ R_{f,g}$ in the following diagram: 
$$
\raisebox{1em}{
\xymatrix@R=1.2em@C=1.2em{ 
&  x_2 \otimes y_2 \ar[dr]^{R_{x_2,y_2}} & & \\
x_1 \otimes y_1 \ar[ur]^{f\otimes g} \ar[dr]_{R_{x_1,y_1}}  & \Downarrow R_{f,g} & y_2 \otimes x_2 \ar[rd]^{R_{y_2,x_2}} & \\
& y_1 \otimes x_1 \ar[ur]^{g\otimes f} \ar[dr]_{R_{y_1,x_1}} &  \Downarrow R_{g,f} & x_2\otimes y_2   \\
& & x_1\otimes y_1 \ar[ur]_{f\otimes g}  &
}}
$$
are trivial in all higher homotopies. The same is true for all higher morphisms in $\hom(f,f)$ and $\hom(g,g)$. 
\end{enumerate}
\end{defn}

In a $\prebf_{n+1}$-category or a $\BF_{n+1}$-category, the pure identity $n$-morphism $\id_\ast^n$ is automatically mutually symmetric to all other morphisms. The set of all excitations that are mutually symmetric to all excitations including themselves forms a sub-$\prebf$-category which is called the braiding-center of $\EC$, denoted by $Z(\EC)$. Notice that for any $\prebf$-category, its braiding-center must contain the trivial $\prebf$-category $\one_{n+1}$, which is the smallest $\BF_{n+1}$-category that contains $\{ \ast, \id_\ast, \id_{\id_\ast}, \cdots, \id_\ast^n \}$.

\begin{rema}
The notation for the braiding-center $Z(\EC)$ is slightly different from the bulk (or center) $\cZ(\EC)$ of $\EC$ because the bulk $\cZ(\EC)$ is a $\prebf_{n+2}$-category. But this two notions are related. We will study them in \Ref{kong-wen-zheng}. 
\end{rema}

\begin{defn}  \label{def:closed-prebf}
A closed $\BF_{n+1}$-category $\EC_{n+1}$ is a $\BF_{n+1}$-category such that $Z(\EC_{n+1}) \simeq \one_{n+1}$. A $\BF_{n+1}$-category that is not closed is called {\it anomalous}. 
 \end{defn}

The simplest closed $\BF_{n+1}$-category is just the trivial one $\one_{n+1}$.
The following result follows immediately from the definition. 
\begin{lemma} \label{lemma:1-morphism-condensed}
In a closed $\BF_{n+1}$-category, the only simple 1-morphism is $\id_\ast$.
\end{lemma}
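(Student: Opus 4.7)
The plan is to exploit the fact that codimension-one defects cannot braid with anything, which will automatically place every simple $1$-morphism in the braiding-center $Z(\EC)$; closedness then collapses this to the trivial one. More precisely, fix a simple $1$-morphism $f:\ast\to\ast$ in a closed $\BF_{n+1}$-category $\EC$ and take any other morphism $g$ in $\EC$. First I would invoke the discussion in Section~\ref{sec:fb-in-ncat}: in the axioms of an $(n+1)$-category, braiding structures appear only starting at the level of $2$-morphisms in $\hom(\id_\ast,\id_\ast)$ (by the Eckmann--Hilton type commutativity that requires a free dimension transverse to composition). Equivalently, a codimension-$1$ defect separates the ambient space locally into two half-spaces, so there is no room to slide another defect around it. Hence the braiding $R_{f,g}: f\otimes g \to g\otimes f$ (and $R_{g,f}$) is not defined as a sub-structure of the $\prebf_{n+1}$-category $\EC$.

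Next I would apply the definition of mutually symmetric morphisms: by the first clause of that definition, ``braiding not defined'' immediately forces $f$ and $g$ to be mutually symmetric. The same conclusion applies to the higher morphisms in $\hat f$, since they live over a $1$-morphism and any would-be braiding with morphisms outside $\hat f$ is also not defined at the $1$-morphism level. Thus $\hat f \subseteq Z(\EC)$, and in particular the simple $1$-morphism $f$ itself lies in the braiding-center.

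Finally I would invoke closedness: by Definition~\ref{def:closed-prebf}, $Z(\EC)\simeq \one_{n+1}$. Unwinding the explicit description of $\one_{n+1}$ (the smallest $\prebf_{n+1}$-category containing $\{\ast,\id_\ast,\ldots,\id_\ast^n\}$ with $\hom(\id_\ast^n,\id_\ast^n)=\Cb$), the only simple $1$-morphism is $\id_\ast$. Therefore $f\simeq \id_\ast$, proving the lemma.

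The main obstacle is the rigor of ``the braiding is not defined as a sub-structure'' when the underlying notion of higher category has not been pinned down precisely (cf.\ the caveats in Remark~\ref{rema:infty-cat}). Physically the statement is transparent, but making it into a clean categorical argument requires either a precise specification of which coherence data count as ``sub-structure'' of a $\prebf_{n+1}$-category, or an alternative route in which one supposes a braiding of $1$-morphisms does exist and derives a contradiction with simplicity of $f$ together with the $\BF$-disconnectedness $\hom(f,\id_\ast)=0$ for $f\not\simeq\id_\ast$. I expect the former route is what is intended, and the rest of the proof is then a one-line consequence of the definitions.
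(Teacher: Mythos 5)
Your proof is correct and is essentially the paper's own argument: the paper states this lemma as following ``immediately from the definition'' of a closed $\BF_{n+1}$-category, and the immediate step is exactly the one you spell out --- a codimension-one defect admits no braiding, so by the first clause of the mutual-symmetry definition every simple $1$-morphism lies in the braiding-center $Z(\EC)$, which closedness forces to be the trivial $\prebf_{n+1}$-category, whose only simple $1$-morphism is $\id_\ast$. The caveat you raise about the rigor of ``braiding not defined as a sub-structure'' is one the paper itself accepts at this level of formality, so no further work is expected.
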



\begin{expl} \label{expl:closed-prebf}
We give a few families of examples of closed $\BF_{n+1}$-categories. 
\bnu

\item All $\BF_{0+1}$-categories are closed. 

\item We know that any unitary $2$-category with one object is equivalent to a unitary fusion 1-category. Since simple 1-morphisms in a closed $\BF_{2}$ mutually symmetric to all 1-morphisms because there is no braiding, the only closed $\BF_{2}$-category must be the trivial one $\one_2$, or equivalently, the unitary 1-category $\hilb$. Any unitary $\BF_2$-category with at least one non-trivial simple 1-morphisms is anomalous.

\item An non-degenerate braided fusion category (including all modular tensor categories) $\EC$ can be viewed as $3$-category with a single object $\ast$ and a single 1-morphism $\id_\ast$ and with $\hom(\id_\ast, \id_\ast)=\EC$. 
Its $\prebf$-closure determines a closed $\prebf_3$-category by adding additive structures to the 1-morphisms and associated 2,3-morphisms. 
This family of examples includes all quantum Hall systems. A braided fusion 1-category that is not non-degenerate gives an anomalous $\prebf_{3}$-category. 


\item The $3+1$-dimensional $\Zb_2$ gauge theory gives a closed $\prebf_4$-category. In this case, all the particle-like excitations are mutually symmetric to each other, but they can be distinguished by braiding them with the vortex line. 

\enu
\end{expl}

\begin{lemma}
If $\EC$ and $\ED$ are closed $\prebf_{n+1}$-categories, then $\EC\boxtimes \ED$ is also closed. If one of them is anomalous, then $\EC\boxtimes \ED$ is anomalous.
\end{lemma}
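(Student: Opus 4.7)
The plan is to reduce both statements to the identity
\begin{equation}
Z(\EC \boxtimes \ED) \simeq Z(\EC) \boxtimes Z(\ED), \nonumber
\end{equation}
which I expect to hold for any pair of $\prebf_{n+1}$-categories $\EC$ and $\ED$. Granting this, closedness of both factors gives $Z(\EC\boxtimes \ED) \simeq \one_{n+1} \boxtimes \one_{n+1} \simeq \one_{n+1}$, hence $\EC\boxtimes \ED$ is closed. Conversely, if say $\EC$ is anomalous then $Z(\EC)$ contains a simple morphism $x$ not in $\one_{n+1}$, and then $x \boxtimes \id_\ast^k$ (with $k$ the appropriate codimension) is a simple morphism of $Z(\EC\boxtimes\ED)$ not in $\one_{n+1}$, so $\EC\boxtimes \ED$ is anomalous.

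The core of the work is therefore to establish the displayed identity. First I would recall the physical picture of stacking: $\EC\boxtimes\ED$ is realized by placing a $\EC$-phase on top of a $\ED$-phase with no interactions between the layers. Consequently every simple morphism of $\EC\boxtimes\ED$ of codimension $l$ decomposes as $x\boxtimes y$ with $x$ a codimension-$l$ morphism in $\EC$ and $y$ a codimension-$l$ morphism in $\ED$, and composition, fusion and braiding all factor through the two layers, e.g.
\begin{equation}
R_{x_1\boxtimes y_1,\, x_2\boxtimes y_2} = R_{x_1,x_2}\boxtimes R_{y_1,y_2}, \nonumber
\end{equation}
with the analogous factorization for the higher-coherence morphisms $r_{i+2}, r_{i+3}, \ldots$ appearing in (\ref{eq:xy-symmetric}). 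In particular the double braiding of $x_1\boxtimes y_1$ with $x_2\boxtimes y_2$ is trivial in all higher homotopies if and only if both the $\EC$-double-braiding of $x_1$ with $x_2$ and the $\ED$-double-braiding of $y_1$ with $y_2$ are trivial in all higher homotopies. This is exactly the condition for $x\boxtimes y$ to lie in $Z(\EC\boxtimes\ED)$, which gives the inclusion $Z(\EC)\boxtimes Z(\ED) \subseteq Z(\EC\boxtimes\ED)$; the reverse inclusion follows because an element of $Z(\EC\boxtimes\ED)$ must in particular be mutually symmetric to all morphisms of the form $x\boxtimes \id$ and $\id\boxtimes y$, which forces its two tensor factors into $Z(\EC)$ and $Z(\ED)$ respectively.

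The main technical obstacle I anticipate is the higher-coherence part of mutual symmetry: one needs to verify not only that the double braiding itself factors as a tensor product, but that the whole tower of witnesses $r_{i+2}, r'_{i+2}, r_{i+3},\ldots$ required for ``trivial in all higher homotopies'' also factor, so that triviality of the product double braiding is genuinely equivalent to joint triviality of the two factors. A related subtlety is that the tensor product $\boxtimes$ of $\prebf_{n+1}$-categories has only been described physically, and making the factorization of higher morphisms precise requires a model of $\boxtimes$ at the level of $(n+1)$-categories (e.g.\ the obvious componentwise construction on morphisms of each codimension, extended by the universal property). Once this factorization is in hand, the rest of the argument is purely formal. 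For the anomalous half of the statement a reader might prefer a direct argument: if $x\in Z(\EC)$ is a simple morphism not isomorphic to $\id_\ast^l$, then $x\boxtimes \id_\ast^l$ is a simple morphism of $\EC\boxtimes\ED$ whose double braiding with every morphism of the form $x'\boxtimes y'$ is trivial in all higher homotopies (since the $\EC$-factor is by hypothesis mutually symmetric to $x'$ and the $\ED$-factor is an identity), yet $x\boxtimes \id_\ast^l \not\simeq \id_\ast^l\boxtimes\id_\ast^l$, so $Z(\EC\boxtimes \ED)\not\simeq \one_{n+1}$.
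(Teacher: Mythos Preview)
The paper states this lemma without proof; it appears immediately after Definition~\ref{def:closed-prebf} and is followed directly by the main conjecture of that subsection. So there is no ``paper's own proof'' to compare against.

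Your approach via the factorization $Z(\EC\boxtimes\ED)\simeq Z(\EC)\boxtimes Z(\ED)$ of the braiding-center is the natural one, and it is exactly parallel to what the paper asserts (also without proof) for the \bulk in \eqn{eq:Z-boxtimes}, namely $\cZ_n(\EC_n\boxtimes\ED_n)\simeq\cZ_n(\EC_n)\boxtimes\cZ_n(\ED_n)$. The physical justification you give---stacking two non-interacting layers so that morphisms, fusion, and braidings all factor componentwise---is precisely the reasoning the paper relies on throughout. You have also correctly flagged the real obstacle: the paper's framework for $\prebf_{n+1}$-categories and for $\boxtimes$ is explicitly conjectural and incomplete (see the opening of Section~\ref{sec:math-def} and Remark~\ref{rema:infty-cat}), so the tower of higher-homotopy witnesses in the definition of ``mutually symmetric'' cannot be made fully rigorous here. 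Within the level of rigor the paper itself operates at, your argument is adequate and is almost certainly what the authors had in mind.

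One small point: for the anomalous direction your ``direct argument'' at the end is cleaner than routing through the full identity $Z(\EC\boxtimes\ED)\simeq Z(\EC)\boxtimes Z(\ED)$, since it only uses the easy inclusion $Z(\EC)\boxtimes Z(\ED)\subseteq Z(\EC\boxtimes\ED)$ together with the fact that $x\boxtimes\id_\ast^l$ is simple and nontrivial when $x$ is. That avoids having to argue the reverse inclusion, which is where the coherence subtleties bite hardest.
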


The following conjecture is one of the main goal of this work. We believe that it is true up to additional anomalies such as the spins and universal perturbative gravitational responses. 
\begin{conj}
There is a one-to-one correspondence between the equivalence classes of closed/anomalous $\BF_{n+1}$-categories and the set of closed/anomalous topological orders. 
\end{conj}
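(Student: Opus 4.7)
The plan is to construct two maps between the set of equivalence classes of (closed/anomalous) topological orders and the set of equivalence classes of (closed/anomalous) $\BF_{n+1}$-categories, and then show that they are mutually inverse. Throughout, I will work modulo the subtleties involving universal perturbative gravitational responses (in particular, the gCS anomalous orders in $4k+3$ space-time dimensions), which must be treated separately as indicated in the main text; the present argument establishes the correspondence up to the monoid of gCS anomalous orders, consistent with the statement~\eqref{topoBF}.

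First, I would define the forward map $\Phi$: given a topological order $x$ (i.e.\ an equivalence class of LRE gapped quantum liquids under LU-transformations), extract the collection of elementary topological excitations of all codimensions and organize their fusion, braiding, and higher coherence data into an $(n+1)$-category with one object, following the dictionary of Section~\ref{sec:defect-n-toporder} and Section~\ref{sec:fb-in-ncat}. Taking the $\core(\cdot)$ of the resulting $\prebf_{n+1}$-category (so as to retain only elementary excitations, one per Witt class) produces a $\BF_{n+1}$-category $\Phi(x)$. To see that $\Phi$ is well-defined, I would invoke the fact that LU-transformations preserve fusion and braiding data of topological excitations (robustness against local perturbations), so LU-equivalent states yield equivalent BF categories. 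To see that $\Phi$ lands in the closed subclass exactly when $x$ is anomaly-free, I would use Definition~\ref{def:closed-prebf}: a topological order is anomaly-free (realizable by an lbH system in the same dimension) iff every non-trivial excitation is detected by braiding with some other excitation, which by the discussion of mutual symmetry is precisely the statement that $Z(\Phi(x)) \simeq \one_{n+1}$.

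Next I would define the backward map $\Psi$: given a $\BF_{n+1}$-category $\EC$, construct a representative topological order. For exact $\EC$ the construction is concrete, using the tensor-network/topological path integral realizations promised in Section~\ref{TNappr} (items about exact and closed \lBF{} categories); for general closed $\EC$ one uses Corollary~\ref{cor:dim-reduction} together with the fact that every closed BF category arises as the bulk of itself to produce a lattice model in the same dimension via dimensional reduction from its $\cZ$-image. For a general (potentially anomalous) $\EC$, one instead produces a boundary of the bulk $\cZ_n(\EC)$ realizing $\EC$. Well-definedness at the level of equivalence classes uses Lemma~\ref{lemma:unique-bulk} and the fact that two realizations of the same BF category can be connected by a gap-preserving smooth family of Hamiltonians (Lemma~\ref{lemma:H-H'}), hence by an LU-transformation.

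Then I would verify that $\Phi \circ \Psi \simeq \id$ and $\Psi \circ \Phi \simeq \id$. One direction is straightforward: the tensor-network realization $\Psi(\EC)$ is designed so that its elementary topological excitations reproduce exactly the data of $\EC$, hence $\Phi(\Psi(\EC)) \simeq \EC$. The other direction, $\Psi(\Phi(x)) \sim_{\mathrm{LU}} x$, is the substantive content: it says that the fusion/braiding data of elementary excitations is a \emph{complete} invariant of the topological order. This step rests on Conjecture~\ref{conj:ele-topological-order} (elementary excitations plus perturbative gravitational responses fully characterize the topological order) together with the classification of probes in Section~\ref{PMtopo}, which asserts that the vector bundles of degenerate ground states over moduli spaces of closed surfaces can be recovered from the BF-categorical data.

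The hard part will be making the $\Psi \circ \Phi \simeq \id$ step rigorous. Two obstacles stand out. The first is the current lack of a fully rigorous mathematical definition of a $\BF_{n+1}$-category for $n \geq 3$: coherence data in higher categories and their unitarity are not spelled out, so ``equivalence of BF categories'' itself is only provisionally defined (Remark~\ref{rema:infty-cat}). The second, and more fundamental, is the completeness conjecture itself: one must show that two topologically ordered phases with isomorphic BF categories (and matching gravitational Chern--Simons data) are LU-equivalent. In low dimensions $n \leq 2$ this is known (using that UMTCs classify $2+1$D anomaly-free orders up to $E_8$ stacking, per diagram~\eqref{diag:E8-umtc}), and I would argue in higher dimensions by induction on $n$: using Lemma~\ref{lemma:unique-bulk}, one reduces the equivalence of two $n$-dimensional orders with the same BF data to equivalence of their $(n-1)$-dimensional gapped boundaries, where the inductive hypothesis applies. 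Closing this induction requires the full $(\infty,n)$-categorical machinery hinted at in Remark~\ref{rema:infty-cat}, which is beyond the scope of the present work; hence the conclusion is stated as a conjecture.
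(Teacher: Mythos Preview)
The paper does not prove this statement: it is explicitly stated as a \emph{conjecture}, with the preceding sentence ``We believe that it is true up to additional anomalies such as the spins and universal perturbative gravitational responses.'' There is no proof in the paper to compare your proposal against.

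Your outline is a reasonable sketch of what a proof strategy could look like, and it is broadly consistent with the paper's framing (it leans on Conjecture~\ref{conj:ele-topological-order}, Lemma~\ref{lemma:unique-bulk}, Lemma~\ref{lemma:H-H'}, the tensor-network constructions of Section~\ref{TNappr}, and the detection arguments of Section~\ref{sec:closed-bf-def}). You also correctly identify the two genuine obstructions: the absence of a precise definition of a unitary $(n+1)$-category with all coherence data (Remark~\ref{rema:infty-cat}), and the completeness claim that isomorphic BF data forces LU-equivalence of phases. These are exactly why the paper leaves the statement as a conjecture rather than a theorem.

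One point where your plan overreaches: the inductive step you propose for $\Psi\circ\Phi\simeq\id$ --- reducing equivalence of $n$-dimensional orders to equivalence of their $(n-1)$-dimensional gapped boundaries via Lemma~\ref{lemma:unique-bulk} --- does not work as stated. Lemma~\ref{lemma:unique-bulk} goes in the wrong direction (same boundary implies same bulk), and closed-but-not-exact orders have no gapped boundary at all, so the induction cannot even get started for those. The paper offers no mechanism to close this gap; it simply asserts the conjecture as a guiding principle for the framework.
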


\subsection{Closed $\prebf_{n+1}$-categories}

Sometimes, it is also convenient to have a notion of a closed $\prebf_{n+1}$-category because they can describe many $\prebf_{n+1}$-categories constructed from lattice models. 

\medskip
Since a $\prebf_{n+1}$-category is not a minimal way of describing topological order, we expect to have non-trivial domain walls and walls between walls in a closed $\prebf_{n+1}$-category. But these walls and walls between walls must be condensed. By Conjecture\,\ref{conj:root}, all simple 1-morphisms in a closed $\prebf_{n+1}$-category should be connected to the trivial domain wall $\id_\ast$ by gapped domain walls. 
But in a generic $\prebf_{n+1}$-category, gapped walls between walls are randomly included. In other words, a condensed domain wall might be superficially disconnected to $\id_\ast$. Therefore, we would like to find an alternative description of a condensed domain wall.

A necessary condition for a domain wall to be condensed $x$ is that the two-side 1-dimensional bulk-to-wall maps 
$$
\hom(\id_\ast, \id_\ast) \boxtimes \hom(\id_\ast, \id_\ast) \xrightarrow{L \boxtimes R} \hom(x,x) 
$$
are dominant. For a generic $l$-codimensional defect $x$, the dominance of the two-side $l$-dimensional bulk-to-wall map to $x$ is not a sufficient condition for $x$ to be condensed. For example, the vortex line in 2+1D $\Zb_2$ gauge theory satisfies this dominance condition, but it is an elementary excitation
and is detectable by double braiding with particles. 

Situation is quite different for a sub-defect in a 1-codimensional domain wall. It can only be half-braided (but not double-braid) with pure excitations in the bulk. This half-braiding is not enough to detect such a sub-defect by pure excitations. However, motivated by Levin's work \Ref{L1309}, we propose the following conjecture. 
\begin{conj}  \label{conj:dominance}
In a domain wall, if an sub-excitation lies in the image of the two-side bulk-to-wall map, it is detectable by bulk pure excitations, otherwise, it is not detectable by bulk pure excitations. 
\end{conj}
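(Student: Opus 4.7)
The plan is to split the biconditional into its two directions and treat them separately, using the categorical language set up in Section \ref{sec:fb-in-ncat} (in particular the bulk-to-wall maps (\ref{eq:LR-ncat}), (\ref{eq:k-bulk-to-map-ncat}) and the half-braiding isomorphisms (\ref{eq:left-right-half-braiding-ncat-1})--(\ref{eq:left-right-half-braiding-ncat-2})). First I will make precise what ``detectable by bulk pure excitations'' means: given two wall sub-excitations $x, y$ in $\hom(z,z)$, we say $x$ is \emph{distinguishable} from $y$ by bulk pure excitations if there exists a bulk excitation $a \in \hom(\id_\ast, \id_\ast)$ (or more generally $a\boxtimes b$) whose full braiding (or, when only half-braiding is available, its iterated half-braidings in both directions) with $x$ and with $y$ produce inequivalent higher morphisms. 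A sub-excitation $x$ is detectable when it is distinguishable in this sense from the trivial sub-excitation $1_z$.

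For the forward direction, I would argue as follows. Suppose $x$ is a subobject of $(L\boxtimes R)(a\boxtimes b)$ for some bulk excitations $a,b$. Then by the ambidextrous adjunction between $L\boxtimes R$ and its adjoint (whose existence I expect to follow from the unitary and finiteness assumptions on $\prebf_{n+1}$-categories), $x$ ``lifts'' along the bulk-to-wall map: there is a canonical embedding of $x$ into an excitation that originates from the bulk. One can then move the bulk witness $a$ (or $b$) off the wall into the interior of $\hat\ast$, where, being a pure bulk excitation, it admits a full braiding with any other pure bulk excitation. The detection invariant is the induced double braiding $R_{c,a}\circ R_{a,c}$ composed with the bulk-to-wall functor; a Yoneda-type argument shows that this invariant descends to a non-trivial invariant on $x$ unless $a$ is already mutually symmetric to every bulk excitation. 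The remaining step is to invoke Conjecture \ref{msta} / Lemma\,\ref{lemma:1-morphism-condensed} on the closed bulk to rule out the symmetric case for generic $x$.

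For the reverse direction, I would show the contrapositive: if $x$ is detectable, then $x$ lies in the image of $L\boxtimes R$. The only interaction of a bulk pure excitation $c$ with a genuine wall sub-excitation $x$ is encoded by the half-braidings $\beta_L, \beta_R$ attached to $c$ via diagram (\ref{diag:id-f-g-phi-2}). Because the wall $z$ is of codimension $1$, there is no second, ``opposite'' braiding that would compose with $\beta_L, \beta_R$ into a genuine double braiding, \emph{unless} the path of $c$ can cross the wall---which is precisely the statement that $c$ (or a summand of it) is recorded inside $\hom(z,z)$ via $L\boxtimes R$. Hence the only non-trivial invariants available are the ``mixed'' invariants factoring through $L\boxtimes R$, so a detectable $x$ must pair non-trivially with some object in the image of $L\boxtimes R$, from which, by a rigidity/Yoneda argument in the unitary fusion $(n-1)$-category $\hom(z,z)$, one concludes $x$ itself is a subobject of an object in that image.

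The principal obstacle is step two of the forward direction: making rigorous the claim that the ``double braiding invariant'' from the bulk genuinely distinguishes $x$ from $1_z$, rather than collapsing due to higher-coherence triviality of the braiding. This is sensitive to the yet-unfixed definition of higher braiding in $\prebf_{n+1}$-categories (see Remark \ref{rema:n-braiding}) and to whether the braiding-center $Z(\EC)$ of the boundary fusion $n$-category $\hom(z,z)$ coincides with the essential image of $L\boxtimes R$. A secondary difficulty is controlling detectability when the wall $z$ is itself non-elementary (i.e. composite), where one must first decompose $z$ into its Witt class representatives before applying the argument above.
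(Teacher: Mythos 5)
This statement is a \emph{conjecture} in the paper, not a theorem: the authors give no proof, only the physical motivation drawn from Levin's work and the ``mutual half-braiding'' detection scheme of Fig.~\ref{hbrd} and Conjecture~\ref{msta}. So there is no argument in the paper to compare yours against; what can be assessed is whether your strategy would actually close the gap the authors left open, and in its present form it does not.

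Two steps are genuinely missing. In the forward direction, the crux is not moving the witness $a$ into the bulk and double-braiding it there, but showing that the resulting quantity is an invariant of the wall sub-excitation $x$ rather than of the chosen lift $a$. The two-side bulk-to-wall map $L\boxtimes R$ is many-to-one --- when the wall is condensed, a large subcategory of bulk excitations is sent to the trivial wall excitation --- so $x$ does not determine $a$, and different lifts of the same $x$ can braid differently with a bulk probe $c$. The well-definedness of your ``detection invariant'' is precisely the assertion that the half-braiding of $c$ with $x$ remembers enough of the bulk braiding; that is the content of the conjecture, not something you may assume. The paper's intended mechanism --- the expectation value of a membrane operator whose boundary lies in the wall, Fig.~\ref{hbrd} --- is designed exactly to produce a lift-independent quantity, and your sketch never recovers it. In the reverse direction, the closing inference ``$x$ pairs non-trivially with some object in the image of $L\boxtimes R$, hence $x$ is a subobject of an object in that image'' is not a valid categorical implication: pairing non-degenerately with a subcategory does not place an object inside it (in a modular tensor category every non-trivial object pairs non-trivially with something, regardless of what the image of a given functor is). What is actually needed is that the entire interaction of $x$ with the bulk is encoded in the adjoint image $(L\boxtimes R)^\vee(x)$ and that this datum is trivial exactly when $x$ has no non-zero component in the image --- which is again a restatement of the conjecture rather than a proof of it.
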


As a consequence, the dominance of bulk-to-wall maps to a domain wall $x$ is sufficient for $x$ to be a condensed domain wall. Otherwise, the topological order can not be closed. Similar results should also hold for walls between walls. Therefore, we propose the following definition of a closed $\prebf_{n+1}$-category. 
\begin{defn}
A $\prebf_{n+1}$-category $\EC$ is called closed if the only simple morphism in $\hat{\id_\ast} \cap Z(\EC)$ is $\id_\ast^n$, and all domain walls and walls between walls satisfy the dominance condition. 
\end{defn}

\begin{expl}
Any $\prebf_3$-category arises from Levin-Wen type of lattice models enriched by defects of all codimensions\cite{KK1251} 
is closed. More precisely, such a $\prebf_3$-category is 
the full subcategory of $\EF\mathrm{us}$ consisting of a single unitary fusion category $\EC$. The 1-morphisms are $\EC$-bimodules. They are condensed from bulk excitations\cite{KK1251,kong-anyon}. This follows from the mathematical fact that the monoidal functor $L\boxtimes R: Z(\EC)^{\boxtimes 2} \to \fun_{\EC|\EC}(\EM, \EM)$ is dominant for any indecomposable semisimple $\EC$-bimodule $\EM$.\cite{dmno} The gapped domain walls (2-morphisms) between the 1-morphisms $\EM$ and  $\EC$ is given by the category $\fun_{\EC|\EC}(\EC, \EM)$ (see the blue dot in Fig.\,\ref{toric} in the toric code model for an example of wall of this kind).
\end{expl}

According to Lemma\,\ref{lemma:1-morphism-condensed}, the only simple 1-morphism in a core of a closed $\prebf_{n+1}$-category is $\id_\ast$. For $n=2$, the core of a closed $\prebf_3$-category is the $\prebf$-closure of $\ast, \id_\ast$ and $\hom(\id_x, \id_x)$, which is a non-degenerate unitary braided fusion 1-category. In other words, the core of an closed $\prebf_3$-category is equivalent to an non-degenerate unitary braided fusion 1-category. 


\begin{defn}
Two closed $\prebf_{n+1}$-categories $\EC$ and $\ED$ are called {\it core equivalent} if
$\core(\EC) \simeq \core(\ED)$ as $\prebf_{n+1}$-categories. 
\end{defn}

\begin{conj}
The set of core equivalence classes of closed $\prebf_{n+1}$-categories are one-to-one corresponding to the set of closed $\BF_{n+1}$-categories. 
\end{conj}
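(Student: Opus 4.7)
The natural candidate map is $\Phi([\EC]) := \core(\EC)$, sending a core-equivalence class of closed $\prebf_{n+1}$-categories to a closed $\BF_{n+1}$-category. The plan is to verify that $\Phi$ is well-defined, injective, and surjective. Well-definedness on equivalence classes is tautological from the definition of core equivalence. The substantive content is to check (i) that $\core(\EC)$ is actually a closed $\BF_{n+1}$-category when $\EC$ is a closed $\prebf_{n+1}$-category, and (ii) that every closed $\BF_{n+1}$-category arises this way.

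For (i), I would proceed as follows. By definition $\core(\EC)$ is a $\BF_{n+1}$-category, so what must be shown is closedness in the sense of Definition\,\ref{def:closed-prebf}, i.e.\ $Z(\core(\EC)) \simeq \one_{n+1}$. First, by Lemma\,\ref{lemma:1-morphism-condensed} (or its analogue for closed $\prebf$), the only simple $1$-morphism of $\core(\EC)$ is $\id_\ast$, because in a closed $\prebf_{n+1}$-category every simple $1$-morphism is a condensed wall lying in the Witt class $[\id_\ast]$, whose unique root is $\id_\ast$ by Conjecture\,\ref{conj:root}. Iterating the same argument at each codimension, an elementary morphism that is symmetric to every other elementary morphism is already symmetric to every morphism of $\EC$ via the bulk-to-wall inclusions, hence lies in $\hat{\id_\ast}\cap Z(\EC)$, which by closedness of $\EC$ contains only $\id_\ast^n$ up to equivalence. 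Thus $Z(\core(\EC))$ is trivial as a $\BF_{n+1}$-category.

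For (ii), I would take any closed $\BF_{n+1}$-category $\ED$ and exhibit it as the core of a closed $\prebf_{n+1}$-category—namely itself. The point is that a $\BF_{n+1}$-category consists entirely of elementary morphisms already (each Witt class is represented by a single root), so $\core(\ED)\simeq \ED$ by Definition\,\ref{def:elementary-excitation} and the construction of the core as the $\prebf$-closure of its elementary morphisms. Closedness of $\ED$ as a $\prebf$-category then reduces to verifying (a) that $\hat{\id_\ast}\cap Z(\ED)$ has only $\id_\ast^n$ as simple object, which follows immediately from $Z(\ED)\simeq \one_{n+1}$, and (b) the dominance condition on walls, which here is vacuous since every simple $1$-morphism in $\ED$ is $\id_\ast$ and dominance for $\id_\ast$ is automatic. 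Injectivity of $\Phi$ is then immediate from the definition of core equivalence.

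The main obstacle is reconciling the two inequivalent-looking notions of closedness. On the $\BF$ side closedness is the single condition $Z(\EC)\simeq \one_{n+1}$, while on the $\prebf$ side it is a combination of a symmetry condition on $\hat{\id_\ast}$ and a dominance condition on every $1$- and $2$-codimensional wall. Passing from a closed $\prebf$-category to its core must eliminate \emph{all} condensed walls in favor of the unique root $\id_\ast$, and this rests on Conjecture\,\ref{conj:dominance} (that a wall excitation lies in the image of a bulk-to-wall map iff it is detectable by bulk braiding) together with Conjecture\,\ref{conj:root} (uniqueness of roots in each Witt class). Without a full-fledged higher-dimensional condensation theory these two conjectures remain the genuine bottleneck; once they are in place, the bijection $\Phi$ follows by the structural argument sketched above.
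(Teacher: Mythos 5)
The first thing to say is that the paper contains no proof of this statement: it is presented as a conjecture, and the entire surrounding apparatus (Witt classes of morphisms, roots, the maximal extension $\EC^{max}$, the core) is itself conditional on Conjecture~\ref{conj:root} and Conjecture~\ref{conj:dominance}, which the authors explicitly leave open pending a condensation theory for higher-dimensional topological orders. So there is no paper argument to compare yours against; your proposal has to stand on its own, and, as you note at the end, it is a conditional reduction rather than a proof.

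Within that framing, your outline is the natural one --- the map $[\EC]\mapsto\core(\EC)$ is clearly the intended correspondence, and well-definedness and injectivity are tautological from the definition of core equivalence --- but two steps need more than you give them, beyond the two conjectures you already flag. First, to get $Z(\core(\EC))$ trivial you pass from ``mutually symmetric to every elementary morphism'' to ``mutually symmetric to every morphism of $\EC$, hence lying in $\hat{\id_\ast}\cap Z(\EC)$.'' That requires the double braiding with a condensed defect to be controlled by the (half-)braidings with the elementary defects it is condensed from; this is in the spirit of Conjecture~\ref{conj:dominance} but does not follow formally from the bulk-to-wall inclusions, and is an additional, unstated compatibility of braiding with condensation. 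Second, for surjectivity you must check that a closed $\BF_{n+1}$-category $\ED$ (in the sense that $Z(\ED)$ is trivial) satisfies the dominance clause in the definition of a closed $\prebf_{n+1}$-category. This is not vacuous: although Lemma~\ref{lemma:1-morphism-condensed} kills the nontrivial simple $1$-morphisms, the definition also demands dominance for walls between walls, i.e.\ for the $2$-codimensional defects, which in a $\BF$-category can be nontrivial and can carry sub-defects; dominance of the $2$-dimensional bulk-to-wall maps onto them holds in the $3{+}1$D $\Zb_2$ example but is not derived from triviality of the braiding center. You also silently use $\core(\ED)\simeq\ED$, which needs the simple morphisms of $\ED$ to remain the roots of their Witt classes inside $\ED^{max}$ --- again Conjecture~\ref{conj:root}. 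None of this makes the outline wrong, but it does mean the statement remains exactly what the paper says it is: a conjecture.
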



\void{

\subsection{Invertible domain walls and symmetry enriched topological orders}

In the last subsection, we have shown that a closed $\prebf_{n+1}$-category $\tilde{\EC}$, which is associated to a given closed $\BF_{n+1}$-category, can have domain walls. There is a special kind of domain walls that are interesting to us in particular. In such a wall, the left/right 1-dimensional bulk-to-wall map $L$ and $R$ are not only dominant but also fully faithful. In other words, both $L$ and $R$ are equivalences (or invertible). Such phenomenon is quite common for an $l$-codimensional excitation. For example, the vortex line in the 3+1D $\Zb_2$ gauge theory satisfies. But for a domain wall in a closed $\prebf_{n+1}$-category, it has a special meaning. In this case, if a domain wall $x$ is so that $
R^{-1}\circ L=\id_{\hom(\id_\ast, \id_\ast)}$, then there is no detectable different between $x$ and $\id_\ast$. In a closed $\prebf_{n+1}$-category $\EC$, we must have $x\simeq \id_\ast$. But if $R^{-1}\circ L \neq \id_{\hom(\id_\ast, \id_\ast)}$, this map provides a detectable difference between $x$ and $\id_\ast$. Such a domain wall is called {\it transparent} (or {\it invertible}) domain wall. All such transparent domain walls (including $\id_\ast$) form a group with the multiplication given by the fusion product and unit given by $\id_\ast$. This group is a naturally a subgroup of the group $\text{Aut}(\EC)$ of automorphisms of $\EC$. Conversely, it is reasonable to conjecture that any automorphism $\phi$ of $\EC$ should be physically realizable by an transparent domain wall. 

\smallskip
Notice that all of transparent domain walls are detectable (as labeled) by group elements (called charge) in $\text{Aut}(\EC)$. Transparency allows us to define the double braiding between a pure excitation and an excitation in a transparent domain wall $x$. But the excitations nested on a transparent domain wall 
are not detectable by double braiding with the pure excitations. More precisely, $\id_x^{n-1}$ is mutually symmetric to all pure excitations for all $x \in \text{Aut}(\EC)$. In other words, $\id_x^{n-1}$ would lie in the braiding center of $\EC$. Therefore, Definition\,\ref{def:closed-prebf} will fail for a closed $\prebf_{n+1}$-category. 

This nuance suggests something very interesting. For a given closed $\BF_{n+1}$-category $\EC$ with automorphism group $G=\aut(\EC)$, we can enrich it to a closed $\prebf_{n+1}$-category $\tilde{\EC}$ by all transparent domain walls. Notice that the braiding center such a closed $\prebf_{n+1}$-category $\tilde{\EC}$ consists of $\id_x^{n-1}$ for all $x\in \text{Aut}(\EC)$. This braiding center is exactly what we expects for a symmetry protect topological order (SPT). For $n=1$, such a braiding center is equivalent to the category $\vect_G^{\omega}$ of $G$-graded vector spaces, where $G=\text{Aut}(\EC)$
and $\omega$ is a 3-cocycle in $Z^3(G,\Cb^\times)$\cite{ostrik}. For general $n$, we will refer to the sub $n$-category of $\hom(\ast,\ast)$ generated by $\{\id_x^{n-1} | \forall x\in \text{Aut}(\EC)\}$ as a $G$-pointed $n$-category. 

This phenomenon also suggests us to slightly relax the condition in Definition\,\ref{def:closed-prebf} for a $\BF$-category with a non-trivial automorphism group. We propose the following definition of closed symmetry-enriched $\BF_{n+1}$-category. 
\begin{defn}
A closed symmetry-enriched $\BF_{n+1}$-category $\EC$ is a $\BF_{n+1}$-category such that 
the braided center of $Z(\EC)$ is the $\BF$-closure of a $G$-pointed $n$-category, where $G$ is
the automorphism group of the sub-$(n+1)$-category consisting of $\ast$ and $\hat{\id_\ast}$.
\end{defn}

\begin{rema}  \label{rema:SET-toric}
There are at least two different points of view of the toric code model: 
\begin{enumerate}
\item The toric code model can be viewed as an example of closed $\prebf_3$-category with the minimal description given by the non-degenerate unitary braided fusion category $Z(\rep_{\Zb_2})$. 
\item The toric code model can also be viewed as a closed symmetry enriched $\prebf_3$-category with the minimal description given by a $\Zb_2$-extension of $Z(\rep_{\Zb_2})$ by adding the transparent domain wall in Figure\,\ref{toric}. This $\Zb_2$-extension is actually a braided $\Zb_2$-crossed fusion category containing eight simple objects\cite{} (recall Remark\,\ref{rema:Z2-crossed}). 
\end{enumerate}
\end{rema}

}

\subsection{Closed $\prembf_{n+1}$-category}

Using the same idea in the last subsection, we can define a closed $\prembf_{n+1}$ as follows: 
\begin{defn}  \label{def:closed-prembf}
An $\prembf_{n+1}$-category $\EC$ is closed if 
\bnu
\item $\hat{x}$ is a closed $\prebf_{n+1}$-category for all $x\in \ob(\EC)$, 
\item all two-side bulk-to-wall maps to walls and walls between walls lying outside of $\hat{x}$ for all $x\in \ob(\EC)$ are dominant.
\enu
\end{defn}

\begin{rema}
The intuition is that a closed $\prembf_{n+1}$-category should contain only condensed domain walls. According to Conjecture\,\ref{conj:dominance}, we believe that the dominance of the two-side bulk-to-wall maps guarantees that all domain walls and walls between walls in a closed $\prembf_{n+1}$-category are condensed. If the left/right bulk-to-wall maps to a domain wall $f^{[1]}\in \hom(x,y)$ are dominant and also fully faithful, then two maps are equivalences, and $f$ is a transparent domain wall between $x$ and $y$. All of these results are not necessarily true if the $\prembf$-category $\EC$ is not closed. 
\end{rema}

\begin{defn} \label{defn:eq-closed-prembf}
Two closed $\prembf_{n+1}$-categories $\EC$ and $\ED$ are equivalent if there is an $(n+1)$-functor which gives bijection on objects and unitary $n$-equivalences on $\hom(x,y) \to \hom(F(x), F(y))$ for all $x,y\in \ob(\EC)$.   
\end{defn}

Note that a closed $\prembf_{n+1}$-category is not necessarily connected. For example, it is known that two quantum Hall systems can not be connected by gapped domain walls if these two systems belong to different Witt classes\cite{dmno, fsv, kong-anyon}. More generally, it is not possible to connect a closed $(n+1)$-dimensional topological phase to an anomalous $(n+1)$-dimensional phase by a gapped domain wall because they must share the same $(n+2)$-dimensional bulk.

\begin{expl} We give a few families of closed $\prembf_{n+1}$-categories below. 
\bnu
\item  Any $\prembf_3$-category arising from Levin-Wen models enriched by domain walls and defects are closed. More precisely, any full subcategories of the 3-category $\EF\mathrm{us}$ (see Remark\,\ref{rema:fus-3-cat}) consisting of only finite number of objects are closed $\prembf_3$-categories.
In particular, the 3-category $\TC_3^b$ from the toric code model with gapped boundaries is an example of closed $\prembf_{3}$-category.

\item A few quantum Hall systems with a gapped domain walls form a closed $\MBF_{n+1}^{pre}$-category. If all quantum Hall systems selected are Witt equivalent\cite{dmno,fsv,kong-anyon}, then such a closed $\prembf_{n+1}$-category is also connected. 

\enu
\end{expl}

\begin{defn}  \label{def:gapped-boundary}
For a closed $\prebf_{n+1}$-category $\EC$, 
a {\it gapped boundary of $\EC$} is a closed
$\prembf_{n+1}$-category $\EC_f^b$ consisting of only two objects $x$ and $e$, only one 1-morphism $f: x\to e$ and no 1-morphism from $e$ to $x$ except the zero morphism, such that
\bnu
\item $\hat{x}=\EC$ and $\hat{e}=\one_{n+1}$ as $\prebf_{n+1}$-categories,
\item $\hat{f}$ is a $\prebf_n$-category and $\hat{f} \simeq \mathrm{Im}(L) \boxtimes \ED$, where $\ED$ is a closed $\prebf_n$-category and $L$ is the bulk-to-wall map $L: \hom(\id_\ast, \id_\ast) \to \hom(f,f)$, which is a dominant central monoidal $(n-1)$-functor. ``Central" means that there are half braidings 
(recall \eqref{eq:half-braiding})
$$c_{L(a),b}: L(a) \bullet b \xrightarrow{\simeq} b\bullet L(a)$$ 
for $a\in \hom(\id_\ast, \id_\ast)$ and $b\in \hom(f,f)$.  
\enu
\end{defn}

\begin{rema}
Above definition is motivated from the anyon condensation theory in 2+1D\cite{kong-anyon}. 
The dominance condition in Definition\,\ref{def:gapped-boundary} is automatic by the definition of an $\prembf_{n+1}$-category. It gives a good control of higher morphisms lying in the domain wall $f$. Note that the category $\hom(\id_\ast, \id_\ast)$ is a braided monoidal $(n-1)$-category. That $L$ is a central monoidal $(n-1)$-functor implies that the functor $L$ factors through the monoidal center of the monoidal $(n-1)$-category $\hom(f,f)$. 
$$
\xymatrix{
\hom(\id_\ast, \id_\ast) \ar@{.>}[r]^{\exists ! \, \tilde{L}}  \ar[dr]_L  & \cZ(\hom(f,f)) \ar[d]^{forget} \\
& \hom(f,f)\, .
}
$$
By the dominance and the centralness, we conjecture that the monoidal $(n-1)$-category $\hom(f,f)$ can be recovered from $\hom(\id_\ast, \id_\ast)$
as the category of modules over the commutative algebra $L^\vee(\id_f)$ in $\hom(\id_\ast, \id_\ast)$, where $L^\vee$ is the two-side adjoint functor of $L$. One can see that this is nothing but a higher categorical analog of the condensation theory of 2+1D developed in \Ref{kong-anyon}. We will give more details in the future. 
\end{rema}

\begin{defn}
If a closed $\BF_{n+1}$-category $\EC$ has a gapped boundary, it is called {\it exact}. 
\end{defn}

\void{
Unfortunately, these examples do not provide correct quantifier for us to define a general $\BF_n$-category. The main problem in above examples is that we don't know how to determine if two $\BF_{n}$-categories are equivalent. It is impossible to talk about the equivalence of two $\BF_{n}$-categories without talking about the ambient category they live in. This ambient category is not unique in general.

The idea that can guide us to a definition is the fact that anomalous topological phase can always be realized as a gapped boundary of a closed topological phase. 

\begin{defn}  \label{def:gapped-boundary}
For a closed $\BF_{n+1}$-category $\EC$, 
a {\it gapped boundary of $\EC$} is a
$(n+1)$-category $\EC_f^b$ consisting of only two objects $x$ and $e$, only one 1-morphism $f: x\to e$ and no 1-morphism from $e$ to $x$ except the zero morphism, such that
\bnu
\item $\hat{x}=\EC$ and $\hat{e}=\one_{n+1}$ as weak $\BF_{n+1}$-categories,
\item $\hat{f}$ is a weak $\BF_n$-category and $\hat{f} \simeq \mathrm{Im}(L) \boxtimes \ED$, where $\ED$ is a closed $\BF_n$-category and $L$ is the bulk-to-wall map $L: \hom(\id_\ast, \id_\ast) \to \hom(f,f)$.
\enu
\end{defn}

\begin{defn}
If a closed $\BF_{n+1}$-category $\EC$ has a gapped boundary, it is called {\it exact}. 
\end{defn}

The dominance condition in Definition\,\ref{def:gapped-boundary} give a good control of higher morphisms lying in the domain wall $f$. In particular, $\hat{f}$ as a $\BF_n$-category can be obtained from $\EC$ via condensation. 
 
 
\begin{defn} 
An anomalous $\BF_{n}$-category is a $\BF_n$-category $\EB$, together with a closed $\BF_{n+1}$-category $\EC$ and a gapped boundary $\EC_f^b$ of $\EC$, such that $\EB \simeq \hat{f}$ as $\BF_n$-categories.  
\end{defn}

In other words, a $\BF_{n}$-category is a triple $(\EC, \EC_f^b, \hat{f})$. A $\BF_n$-category is closed if $\EC=\one_{n+1}$. In this case, $\mathrm{Im}(L)$ is trivial. $\hat{f}=\ED$. This coincides with our previous definition of a closed $\BF_n$-category. If $\EC\ncong \one_{n+1}$, it is anomalous. 

\begin{defn}
Two $\BF_n$-categories $(\EC, \EC_f^b, \hat{f})$ and $(\ED, \ED_g^b, \hat{g})$ are called equivalent if there is an equivalence $\EC_f^b \xrightarrow{\cong}  \ED_g^b$. 
\end{defn}

\begin{expl}
A unitary fusion category $\EC$ and an unitary semisimple indecomposable left $\EC$-module ${}_\EC\EM$ give arise to a $\BF_{1+1}$-category, i.e. a triple $(\ED, \ED_g^b, \hat{g})$, where the closed $\BF_{2+1}$-category $\ED$ is given by the modular tensor category $Z(\EC)$ (recall third example in Example\,\ref{expl:closed-prebf}), $g=\EM$ and $\hat{g}$ is given by the unitary fusion 1-category $\fun_\EC(\EM, \EM)$. When the unitary fusion 1-category $\fun_\EC(\EM, \EM) \ncong \hilb$, this triple is an anomalous $\BF_2$-category. For example, when $\EC=\rep_{\Zb_2}$ and $\EM=\Zb_2$, $\fun_\EC(\EM, \EM) \cong \rep_{\Zb_2}$. It gives an anomalous $\BF_2$-category. By varying $\EC$, we obtain a family of examples of $\BF_2$-categories, all of which can be realized by Levin-Wen type of lattice models with gapped boundaries\cite{KK1251,kong-icmp12}. All the modular tensor categories $Z(\EC)$ are not only closed but also exact. 
\end{expl}


}

\section{Tensor network (TN) approach to topological phases in any dimensions}
\label{TNappr}

We have discussed \hBF{}  category and topological order in any dimensions via
their universal properties (\ie their topological invariants).  
In this section, we are going to discuss how to realize those  \hBF{}
categories in physical systems defined by path integrals.  

However, path integrals described by tensor network naturally
describe a lbL system, and \lBF{}  categories.  This is because the path
integrals constructed using tensors can be defined on space-time of any
topologies. 
The  path integrals that describe \hBF{}  categories are only required to be
well defined on space-time which is a mapping torus.  So it is more
natural to realize  \lBF{} categories through tensor network and path
integrals.  

Therefore, we will first concentrate on the  tensor network and path integral
construction of \lBF{}  categories.  We will try to write down the most general
form of path integrals using tensors, hoping our construction can produce all
the  \lBF{} categories.  This way, we can obtain an alternative definition and
a classification of \lBF{}  categories in any dimensions.  We have seen that is
quite difficult to define \lBF{}  category via their topological invariants
(such as the fusion and the braiding properties).  The tensor network and
path-integral way to define and to classify \lBF{} categories may be more
practical.  
Since  \lBF{}  categories form a subset of \hBF{}  category, we can at least
understand a subset of \hBF{}  categories this way.  

\subsection{TN realization of exact \lBF{}  category}
\label{TNeBF}

A \lBF{}  category is a description of the topological properties of
$p$-dimensional topological excitations, such as their fusion and braiding
properties.  It is quite difficult to formulate such a theory.  On the other
hand, we may use TN and path integrals\cite{LN0701,GW0931} to realize exact
\lBF{}  categories in a concrete physical way.  Thus it is possible to study the
exact \lBF{}  category via its TN realization or its path integral realization,
without directly involving the fusion and braiding of $p$-dimensional
topological excitations.  In this section, we will describe such an approach.

All lbL systems, include L-type topologically ordered
states, are described by path-integrals.  A path-integral can be described by a
TN with finite dimensional tensors defined on a space-time lattice (or a
space-time complex) (see Appendix \ref{path}).  Even though L-type
topologically ordered states are all gapped, only some of them can be described
by fixed-point path-integrals which are \emph{topological path integrals}:
\begin{defn} \textbf{Topological path integral} \\
(1) A topological path integral has an action 
amplitude that can be described by 
a TN with \emph{finite dimensional} tensors.\\
(2)  It is a sum of the action amplitudes for all the paths.
(The summation corresponds to the tensor contraction.) \\ 
(3)  Such a sum (called the partition function $Z(M)$) on a closed space-time
$M$ only depend on the topology of the space-time.  The partition function is
invariant under the local deformations and reconnections of the TN.\\
(4)  To describe a lbL system, we 
require the partition function to be well defined
on space-time with any topology.\\
(5)  To describe a local Hamiltonian qubit system, we only
require the partition function to be well defined
on space-time which is a mapping torus.
\end{defn}\noindent
In the next a few sections, we will give concrete examples of the topological
path integrals. The topological path integrals are closely related to exact
\lBF{} categories.\cite{TV9265,LWstrnet,kirillov,balsam-kirillov} We like to conjecture that 
\begin{conj} 
All exact \lBF{}  categories (\ie topological states with gapped boundary)
are described by topological path integrals.
\end{conj}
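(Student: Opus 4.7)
The plan is to proceed by induction on the space-time dimension $n+1$, exploiting the fact (Section~\ref{sec:boundary-bulk-relation}) that an exact \lBF{n+1} category is by definition of the form $\cZ_n(\EC_n)$ for some \lBF{n} category $\EC_n$ on its gapped boundary. The base cases are available: in $1{+}1$D the only exact \lBF{2} category is $\one_2$, realized by the trivial TN partition function, while in $2{+}1$D the exact \lBF{3} categories are the Drinfeld centers $\cZ(\EC)$ of unitary fusion categories $\EC$, realized by the Turaev--Viro--Barrett--Westbury / Levin--Wen state sum, which is manifestly a topological path integral with a finite-dimensional TN presentation on a triangulated space-time.

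For the inductive step, I would construct, from the boundary datum $\EC_n$, a higher-categorical state sum on a triangulated closed $(n{+}1)$-manifold $M$. The recipe is to label $k$-simplices ($1\leq k \leq n+1$) of a triangulation of $M$ by simple elementary excitations of $\EC_n$ of codimension $k$ (i.e.\ by simple $k$-morphisms in the $\prebf_{n}$ presentation of $\EC_n$), assign to each $(n{+}1)$-simplex a $\Cb$-valued amplitude built from the fusion, half-braiding, and coherence data of $\EC_n$, and define $Z(M)$ as the sum over admissible labellings. Finiteness of $\EC_n$ guarantees that the resulting object is a finite-dimensional TN. One then verifies that $Z(M)$ is topologically invariant by checking invariance under the $(n{+}1)$-dimensional Pachner moves; each such move translates into an algebraic identity among the coherence maps of $\EC_n$, and these identities are precisely the axioms of a higher fusion category encoding an \lBF{n} category. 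Finally, the topological excitations of the TN are computed by inserting codimension-$k$ defects into $Z(M)$ and comparing with the bulk-to-wall map of Section~\ref{sec:defect-n-toporder}, thereby recovering $\cZ_n(\EC_n)$ and so realizing the prescribed exact \lBF{n+1} category.

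The main obstacle lies precisely in the Pachner-move verification in arbitrary dimension: it presupposes a fully worked-out, unitary definition of a fusion $n$-category with all coherence isomorphisms explicit, whereas the treatment in Section~\ref{sec:math-def} is deliberately conjectural and only complete through $n \leq 2$. Carrying this out requires matching, level by level, the algebraic consistency of the state-sum amplitudes to the coherence data of $\EC_n$, and at present only the $(n{+}1) = 3$ and (with substantial extra work) $(n{+}1) = 4$ cases are on fully rigorous footing. A secondary obstacle is that the conjecture asserts that \emph{every} exact \lBF{n+1} category arises from some topological path integral, not merely that every ``state-sum-able'' one does; closing this gap requires a renormalization-group argument showing that every exact topological order can be deformed to a fixed point whose effective theory admits a TN presentation of the above form -- an input that, at present, must be taken as an assumption rather than proved within the categorical framework.
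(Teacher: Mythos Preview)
The statement you are attempting to prove is explicitly a \emph{conjecture} in the paper, not a theorem, and the paper offers no proof. What follows the conjecture in the paper is a paragraph of physical motivation: the authors believe the tensor-network representation they introduce is the most general one, so that any gapped quantum liquid with gappable boundary flows under renormalization to a fixed-point tensor of finite dimension, and such fixed-point tensors are by definition topological path integrals. The finite-dimensionality of the fixed-point tensor forces a gapped entanglement spectrum, which is the signature of an exact (rather than merely closed) \lBF{} category. That is the entire content of the paper's ``argument'' --- a renormalization-group heuristic, not a construction or a verification.

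Your proposal is therefore not to be compared against a proof but against this heuristic. What you have written is a reasonable outline of a \emph{research program} for establishing the conjecture via higher state sums, and you have correctly located the two genuine obstructions: (i) the Pachner-move invariance demands a fully coherent definition of a unitary fusion $n$-category, which the paper itself declares to be conjectural and incomplete beyond $n\leq 2$; and (ii) surjectivity --- that \emph{every} exact \lBF{n+1} category arises from some such state sum --- is exactly the content of the conjecture and cannot be obtained from the construction alone without the RG-flow input you flag at the end. In effect, your ``secondary obstacle'' is the paper's entire argument, and your proposal supplies the algebraic half of a would-be proof while leaving the analytic half (existence of a fixed point reachable by RG) as an assumption, just as the paper does. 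So there is no gap in your reasoning relative to the paper; rather, you have gone further than the paper attempts, while honestly acknowledging that the hard steps remain open.
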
 \noindent
We make such a conjecture because we believe that the tensor network
representation that we are going to discuss is the most general one.  It can
capture all possible fixed-point tensors under renormalization flow, and those
fixed-point tensors give rise to  topological path integrals.  Note that the
fixed point tensors with a finite dimensions always produce entanglement
spectrums\cite{LH0804} that have a finite gap.  As a result, the topological
path integrals (described by tensors of finite dimensions) can only produce the
exact  \lBF{}  categories.  This is because exact \lBF{}  categories have gapped
boundary and gapped entanglement spectrum, while closed  \lBF{}  categories that
are not exact have gapless boundary and gapless entanglement spectrum.

We also like to remark that we cannot say that all topological path integrals
describe  exact \lBF{}  categories, since some  topological path integrals are
stable while others are unstable (see Definition \ref{stTopP}).  We will see
that only the stable topological path integrals describe  exact \lBF{}
categories, and unstable topological path integrals do not describe exact
\lBF{}  categories.  
The definition of  stable topological path integrals is given in Definition
\ref{stTopP}.
Here we like to point out that
\begin{conj} 
\label{Zstable}
A topological path integral in $(n+1)$-dimensional space-time constructed with
finite dimensional tensors is stable iff $|Z(S^1\times S^n)|=1$.
\end{conj}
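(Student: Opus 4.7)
The plan is to reduce the claim to the standard TQFT identification $Z(S^1\times M^n) = \dim \mathcal{H}(M^n)$, where $\mathcal{H}(M^n)$ is the finite-dimensional Hilbert space the topological path integral assigns to a closed spatial manifold $M^n$. Concretely, I would first verify this identification directly from the tensor-network description: contracting the tensors on $[0,1]\times M^n$ with prescribed boundary indices produces a finite-rank linear operator $U$ on $\mathcal{H}(M^n)$; topological invariance under local reconnection/deformation moves forces $U^2 = U$, i.e.\ $U$ is a projector onto the physical ground-state subspace; and gluing $[0,1]\times M^n$ along the identity to form $S^1\times M^n$ computes $\mathrm{Tr}(U) = \dim \mathcal{H}(M^n) = \mathrm{GSD}(M^n)$. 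Since this is a non-negative integer, $|Z(S^1\times S^n)| = Z(S^1\times S^n) = \mathrm{GSD}(S^n)$.

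Next I would translate stability into a statement about $\mathcal{H}(S^n)$. A stable topological path integral is, by intent, one that realizes a single exact \lBF{n+1} category rather than a formal direct sum of several; equivalently, its ground-state sector on the sphere should be a unique vacuum. If instead the path integral is unstable, then (after an appropriate change of basis on the virtual bonds) its tensor data decomposes as a direct sum of stable topological path integrals $Z = \sum_i Z_i$, each contributing its own vacuum sector, so $\mathcal{H}(S^n) = \bigoplus_i \mathcal{H}_i(S^n)$ with $\dim \mathcal{H}_i(S^n)=1$ for every stable summand.

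Combining these two ingredients finishes both directions. Forward: stability gives $\dim \mathcal{H}(S^n) = 1$, hence $|Z(S^1\times S^n)| = 1$. Converse: if $|Z(S^1\times S^n)| = 1$ then $\dim \mathcal{H}(S^n) = 1$, which precludes any non-trivial direct-sum decomposition of the fixed-point tensors (each such summand would add one to the sphere dimension), so the path integral is stable.

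The hard part will be the second step: making rigorous the claim that instability of a tensor-network topological path integral is precisely equivalent to a direct-sum decomposition of the fixed-point tensors, with each stable summand contributing exactly one-dimensional $\mathcal{H}(S^n)$. This requires a canonical normal form for fixed-point tensors, analogous to the decomposition of a finite-dimensional $C^\ast$-algebra into simple factors, together with the verification that such a decomposition is compatible with the topological invariance axioms. A secondary technical nuisance is the projector property $U^2=U$ used in step one, which relies on invariance under the particular move that collapses two copies of $[0,1]\times M^n$ stacked in the time direction; this must be established from the local reconnection rules satisfied by the tensors and is the reason finite-dimensionality of the bonds enters essentially.
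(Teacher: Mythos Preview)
The statement you are trying to prove is labeled a \emph{Conjecture} in the paper, and the paper does not give a proof. What follows the conjecture is only a heuristic physical argument: $Z(S^1\times S^n)$ is interpreted as the ground-state degeneracy on $S^n$; if that degeneracy equals $1$, there is no local order parameter to distinguish states, so small perturbations cannot split any degeneracy and the fixed point is stable. The paper argues only this direction (GSD $=1 \Rightarrow$ stable), cites an entanglement-entropy result from another paper as supporting evidence, and does not address the converse at all.

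Your proposal is therefore not competing with a proof but with a heuristic, and your outline is substantially more structural than anything in the paper. The first step --- identifying $Z(S^1\times M^n)$ with $\dim\mathcal{H}(M^n)$ via the projector property of the transfer operator --- is standard and matches the paper's informal identification. The real issue is your second step. The paper's definition of stability (Definition~\ref{stTopP}) is dynamical: a fixed-point tensor $T$ is stable if \emph{every} perturbation $T+\delta T$ flows back, under renormalization, to a tensor with the same topological partition function. Your argument replaces this with the algebraic criterion ``$T$ does not split as a direct sum of stable tensors.'' These are not obviously equivalent: an unstable fixed point need not, a priori, be a direct sum --- it could be a single indecomposable tensor sitting at a multicritical point that flows away under generic perturbation. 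You recognize this as ``the hard part,'' and it is indeed the entire content of the conjecture. In the $0{+}1$D example the paper works out, the unstable fixed points are exactly the higher-rank projectors, which do decompose as direct sums; but the paper offers no argument that this picture persists in higher dimensions, and neither does your proposal. So your plan is reasonable as a programme, but the step you flag as hard is not merely technical --- it is the conjecture itself, restated.
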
\noindent
Note that $Z(S^1\times S^n)$ is the ground state degeneracy on $n$-dimensional
space $S^n$.  If a system has a gap and the ground degeneracy is 1, a small
perturbation cannot do much to destabilize the state.  So $Z(S^1\times S^n)=1$
is the sufficient condition for a stable  topological path integral.  This
argument implies that if the ground degeneracy is 1 on $S^n$, then the system
has no locally distinguishable ground state, and the ground state degeneracy on
space with other topologies are all robust against any small perturbations.  In
\Ref{KW13}, it is shown that if the entanglement entropy of the region $M$ has
the following area-law structure $S_M=c A+\ga A^{0}+o(1/A)$, then
the ground state degeneracy is indeed 1 on $S^n$.  Here $A$ is the
``area'' of the boundary of the region $M$.

Since the topological path integrals are independent of re-triangulation of the
space-time, the partition function on a closed space-time only depends on the
topology of the space-time.
\begin{conj} 
\label{LBF categoryeq}
Two exact \lBF{}  categories are the same iff their topological path integrals
produce the topology-dependent partition functions that belong to the same
connected component in the space of topological partition functions
$V^L_{Z(M)}$.
\end{conj}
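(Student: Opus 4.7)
The plan is to prove the two directions separately, treating each exact \lBF{n+1} category as a fixed point of the renormalization flow on tensor networks and using the previous conjecture that exact \lBF{} categories are realized by stable topological path integrals. Throughout, the ambient space we work in is the set of TN path integrals whose partition function on every closed space-time depends only on its topology, modulo the allowed trivial redefinitions $W^{\chi(M^n)}\, e^{\ii \sum \phi_{n_1 n_2\cdots}\int_{M^n} P_{n_1 n_2\cdots}}$ coming from the trivial \lBF{n} category (see \eqn{ZElBF}). Connected components of this quotient are what the statement really refers to.

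For the ``only if'' direction, suppose two topological path integrals $Z_1, Z_2$ describe the same exact \lBF{n+1} category $\EC_{n+1}$. Because $\EC_{n+1}$ corresponds to a single phase, any two lbL realizations of $\EC_{n+1}$ can be connected by a gapped path of lbL theories (the L-type analogue of the Hamiltonian deformation lemma used in Lemma~\ref{lemma:H-H'}). Along this path we apply blocking/coarse-graining to flow each intermediate theory to its fixed-point tensor. Since the gap stays open, the fixed-point tensor varies continuously, producing a continuous family of topological path integrals that interpolates between $Z_1$ and $Z_2$. The Euler/Pontryagin ambiguities enter only through the overall normalization and change smoothly along the path, so the image stays inside a single connected component of the quotient space.

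For the ``if'' direction, suppose $Z_0$ and $Z_1$ lie in the same connected component, i.e. there is a continuous family $Z_t$ ($t\in[0,1]$) of topological partition functions interpolating them. The stability criterion $|Z_t(S^1\times S^n)|=1$ (Conjecture~\ref{Zstable}) is an open condition, so after reparametrization we may assume every $Z_t$ is stable, hence corresponds to an exact \lBF{n+1} category $\EC_{n+1}^t$. The elementary topological data that defines $\EC_{n+1}^t$ (fusion numbers, braiding phases, $S$- and $T$-like matrices, \ldots) can each be read off, by a standard cut-and-paste argument, from $Z_t$ on finitely many mapping tori and Heegaard-like closed manifolds. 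Since these outputs take values in finite/discrete sets but depend continuously on $t$, they must be constant in $t$, so $\EC_{n+1}^0=\EC_{n+1}^1$.

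The main obstacle will be the last continuity-to-discreteness step: making rigorous the claim that the entire exact \lBF{n+1} category can be reconstructed from the partition functions $Z_t(M^{n+1})$ on finitely many closed space-times, and that these reconstructed data are locally constant along the path. Concretely, one must (i) exhibit a ``universal'' finite list of closed manifolds whose partition functions determine the fusion and braiding of all elementary topological excitations, (ii) show that the Euler/Pontryagin ambiguities are the only continuous ones, and (iii) control the jumps that could in principle occur when the tensor ranks or the number of simple excitations change along the path. Step (iii) is where the stability condition $|Z_t(S^1\times S^n)|=1$ is crucial, since it forbids such rank jumps and collapses the homotopy in the space of partition functions to a homotopy in the space of exact \lBF{n+1} categories. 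Until a satisfactory definition of the topology on the space of topological partition functions is pinned down, the statement must remain at the level of a conjecture.
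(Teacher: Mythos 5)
The paper offers no proof of this statement: it is posed as a conjecture, and the only supporting discussion is the observation that a topological path integral yields a partition function depending only on the topology of space-time, together with the remark that the unquantized factors $W^{\chi(M)}$ times the Pontryagin-class phase factors generate continuous families inside a connected component while describing the trivial \lBF{} category (Conjecture~\ref{ZZpEq}). Your sketch follows the same physical heuristic --- continuous gapped deformation cannot change the phase, and the only continuous moduli of topological partition functions are the Euler and Pontryagin terms --- so in spirit you are elaborating the paper's motivation rather than departing from it.

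That said, two steps you treat as routine are genuine gaps beyond the caveats you acknowledge. First, in the ``if'' direction, the claim that the fusion and braiding data of the exact \lBF{n+1} category can be reconstructed from $Z_t(M^{n+1})$ on finitely many closed manifolds is itself equivalent in strength to the paper's separate, unproven conjecture that distinct exact \lBF{n} categories always have distinct topological partition functions; once you grant that the partition function determines the category (up to the trivial factors), local constancy of discrete data is essentially the whole content of the statement, so the argument is circular relative to what is being proved. Second, in the ``only if'' direction you invoke an L-type analogue of Lemma~\ref{lemma:H-H'} to connect two lbL realizations by a gapped path and then assert that the fixed-point tensor varies continuously along it; the paper establishes no such lemma for path integrals, and commuting the renormalization flow with the deformation --- i.e.\ passing from a continuous family of microscopic tensors to a continuous family of \emph{fixed-point} tensors --- is precisely the kind of step that can fail when tensor ranks or the number of simple excitations jump. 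A minor point: by the paper's definition $V^L_{Z(M)}$ already consists only of partition functions of \emph{stable} topological path integrals, so your reparametrization to enforce $|Z_t(S^1\times S^n)|=1$ is unnecessary; conversely, if you intend to work in a larger space of not-necessarily-stable topological partition functions, the stability condition is an equality, not an open condition, so the reparametrization argument as stated does not go through.
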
\noindent
Here $V^L_{Z(M)}$ is defined as
\begin{defn} 
A stable topological path integral produces a topology-dependent partition
function $Z(M)$ for  closed space-time $M$ with any orientable topologies.
$V^L_{Z(M)}$ is the space of all such topological partition functions.
\end{defn} \noindent
In fact, we know that each connected component in $V^L_{Z(M)}$ contains
many topology-dependent partition functions (see Section \ref{TNtrivial}).
Two topological path integrals, $Z(M)$ and $Z'(M)$, can belong to the same
connected component in the space of topological partition functions if the two
topological path integrals differ by 
\begin{align}
\label{eupon}
 Z'(M)/Z(M)=W^{\chi(M)}  
\ee^{\ii \sum_{\{n_i\}} \phi_{n_1n_2\cdots} \int_M P_{n_1n_2\cdots}} ,
\end{align}
where $\chi(M)$ is the Euler number of $M$ and $P_{n_1n_2\cdots}$ are
combinations of Pontryagin classes: $P_{n_1n_2\cdots}=p_{n_1}\wedge
p_{n_2}\wedge \cdots$ on $M$.  $Z(M)$ and $Z'(M)$ are connected since complex
numbers $W$ and $\phi_{n_1n_2\cdots}$ are not quantized.  

Since the path integrals are local, thus the ratio $Z'(M)/Z(M)$ is also local.
This means that if we triangulate the space-time manifold $M$ into a complex $C$,
then $Z'(M)/Z(M)$ can be expressed as a product of the amplitudes from each
simplex in $C$.  Eqn. (\ref{eupon}) may be the only topological
invariant local path integral that is not quantized (\ie $W$ and
$\phi_{n_1n_2\cdots}$ can be any complex numbers). Thus
\begin{conj} 
\label{ZZpEq}
It is possible that
 $Z(M)$ and $Z'(M)$ are connected iff they are related by \eqn{eupon}.
\end{conj}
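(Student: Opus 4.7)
The plan is to treat the two directions of the biconditional separately. The "if" direction is essentially a continuous-deformation argument: given $Z'(M)/Z(M) = W^{\chi(M)} \ee^{\ii \sum_{\{n_i\}} \phi_{n_1n_2\cdots} \int_M P_{n_1n_2\cdots}}$, introduce a parameter $t \in [0,1]$, set $W_t := W^t$ and $\phi_{n_1n_2\cdots, t} := t\,\phi_{n_1n_2\cdots}$, and let $Z_t(M)$ be $Z(M)$ times the corresponding deformed prefactor. Since the prefactor is built from local characteristic densities (the Pfaffian/Gauss--Bonnet integrand for $\chi$, Pontryagin forms for the $P_{n_1n_2\cdots}$), it can be absorbed into the local per-simplex tensor weights of the network defining $Z$ without destroying the contraction structure. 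So $Z_t$ is again a topological path integral and interpolates continuously between $Z$ and $Z'$ inside $V^L_{Z(M)}$.

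The substantive direction is the "only if". Suppose $Z$ and $Z'$ are joined by a continuous family $\{Z_t\}$ of topological path integrals and set $f(M) := \log[Z'(M)/Z(M)]$. I would first extract three structural properties of $f$ inherited from the tensor-network setup: (i) \emph{topological invariance}, so $f(M)$ depends only on the diffeomorphism class of the closed orientable $(n+1)$-manifold $M$; (ii) \emph{locality}, so that for any triangulation $C$ of $M$ one can write $f(M)=\sum_{\sigma \in C} f_\sigma$ with each $f_\sigma$ determined by local combinatorial data near $\sigma$; and (iii) \emph{continuity}, so that $f$ may be deformed continuously to $0$ through topological invariants and is therefore not quantized.

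The heart of the argument is then to classify local, topological, continuously-deformable invariants of closed orientable $(n+1)$-manifolds. Pachner-move invariance promotes the cell-wise locality to genuine triangulation independence; passing to a continuum limit by iterated subdivision converts the combinatorial sum into the integral over $M$ of a diffeomorphism-invariant local density. On an oriented manifold such closed invariant densities are, modulo exact forms, exhausted by characteristic forms of the tangent bundle, and for orientable real bundles the Chern--Weil classification gives the Euler class together with the Pontryagin classes as generators. Integrating these over $M$ yields precisely $\chi(M)$ (zero in odd dimensions, by Gauss--Bonnet--Chern in even dimensions) and the Pontryagin numbers $\int_M P_{n_1n_2\cdots}$. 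Reassembling the $\chi$-coefficient multiplicatively as $\log W$ and the remaining coefficients as $\ii\phi_{n_1n_2\cdots}$ gives the form claimed in \eqn{eupon}.

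The main obstacle I expect is the rigorous version of step (iii) combined with the classification: one must rule out other \emph{a priori} possible local topological invariants, in particular torsion-valued cobordism invariants, Reidemeister torsion, and $\eta$-type spectral invariants, all of which are local and topological but admit no continuous deformation to zero through topological path integrals. The continuity hypothesis is the key leverage, as it forces $f$ to take values in the image of a continuous one-parameter family of local topological invariants, discarding discrete and torsion contributions and leaving only the real-valued characteristic-number invariants. Turning this heuristic into a theorem---namely that the space of continuously deformable local topological invariants of closed orientable manifolds is exhausted by $\chi$ and the Pontryagin numbers---is the delicate point; it likely requires a careful interplay between finite-dimensional tensor-network realizability and the oriented bordism ring tensored with $\Rb$, and I expect it to be the reason the result is stated only as a conjecture.
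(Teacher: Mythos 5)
The statement you are trying to prove is posed in the paper only as a conjecture (note even its hedged phrasing ``It is possible that\dots''), and the paper supplies no proof: the entire justification given is the observation that, since both path integrals are local, the ratio $Z'(M)/Z(M)$ is itself a product of local amplitudes over the simplices of a triangulation, together with the remark that Eq.~(\ref{eupon}) ``may be the only topological invariant local path integral that is not quantized.'' Your proposal follows essentially this same route --- locality of the ratio, passage to a continuum limit where the local weight becomes a diffeomorphism-invariant density, and the claim that the only continuously deformable local topological invariants of closed oriented manifolds are $\chi(M)$ and the Pontryagin numbers --- but elaborates it considerably, and correctly isolates the missing step: a classification theorem excluding torsion-valued bordism invariants, Reidemeister torsion, and $\eta$-type spectral invariants from the space of invariants attainable along a continuous family of topological path integrals. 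That gap is genuine and is precisely why the authors leave the statement as a conjecture; your write-up should therefore be read as a faithful (and somewhat more careful) expansion of the paper's heuristic rather than as a proof, and the ``only if'' direction remains open in both your account and the paper's.
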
\noindent
In other words, if two  topological path integrals produce two
topology-dependent partition functions that differ by a factor $W^{\chi(M)}
\ee^{\ii \sum_{\{n_i\}} \phi_{n_1n_2\cdots} \int_M P_{n_1n_2\cdots}}$, then the
two topological path integrals describe the same exact \lBF{} category.

Summarizing the above discussions:\\
(1) All exact \lBF{}  categories (\ie all L-type topological states with gapped
boundary) are described by \emph{stable} topological path integral constructed
with finite dimensional tensors.\\
(2) All stable topological path integrals describe  exact \lBF{}  categories.
\\
So, we may view the stable topological path integrals as a concrete definition
of the exact \lBF{}  categories.  The stable topological path integrals also
classify the exact \lBF{}  categories.  
Since exact \lBF{}  categories are also exact \hBF{} categories, the above
topological path integrals and TN also describe a subset of exact \hBF{}
categories.

\subsection{Examples of TN realization of \lBF{n}  category}

\subsubsection{TN realization of 0+1D exact \lBF{1}  category}

\begin{figure}[tb]
\begin{center}
\includegraphics[scale=0.5]{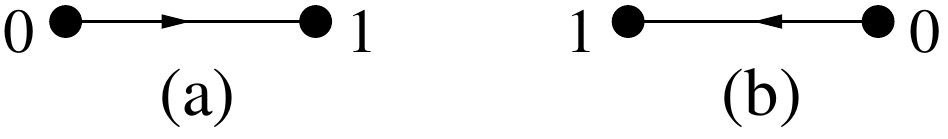}
\end{center}
\caption{The tensor $\As01{\pm}$ is associated with a segment, with a branching
structure.  The branching structure gives the vertices a local order: the
$i^{th}$ vertex has $i$ incoming edges.  The segment in (a) has an orientation
$s_{01}=+$  and the segment in (b) has an orientation $s_{01}=-$.  
}
\label{seg}
\end{figure}

The topological path integral that describes a 0+1D topologically ordered state
can be constructed from two complex tensors $\As01{\pm}$.  The tensor
$\As01{\pm}$ can be associated with a segment, which has a branching structure.
(For details about the branching structure, see Appendix \ref{stcomp}.)  A
branching structure is a choice of orientation of each edge (see Fig.
\ref{seg}).  Here the $v_0$ index is associated with the vertex-0 (See Fig.
\ref{seg}).  It represents the degrees of freedom on the vertices.

Using the tensors, we can define the topological path integral on any 1-complex
that has no boundary:
\begin{align}
 Z&=\sum_{ v_0,\cdots}
\prod_\text{edge} \As01{s_{01}}
\end{align}
where $\sum_{v_0,\cdots}$ sums over all the vertex indices,
$s_{01}=+$ or
$-$ depending on the orientation of edge(see Fig.  \ref{seg}), and
\begin{align}
 \As01+ &= \A01 ,
\nonumber\\
 \As01- &= \Amm01 .
\end{align}
We want to choose the tensors $\As01{\pm}$ such that the path integral is
re-triangulation invariant.  Such a topological path integral describes a
topologically ordered state in 1-space-time dimension and also define an exact
\hBF{1}  category in 1 dimension.

\begin{figure}[tb]
\begin{center}
\includegraphics[scale=0.5]{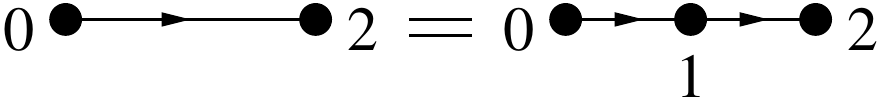}
\end{center}
\caption{
A triangulation of a 1D complex.
}
\label{1to2}
\end{figure}

The invariance of $Z$ under the re-triangulation in Fig. \ref{1to2} requires
that
\begin{align}
\label{AA12}
\A02
&=
\sum_{v_1} 
\A01
\A12 .
\end{align}
We would like to mention that there are other similar conditions for different
choices of the branching structures.  We obtain a total of three conditions
\begin{align}
\label{AA}
\A02 &= \sum_{v_1} \A01 \A12 ,
\nonumber\\
\A01 &= \sum_{v_2} \A02 \Amm12 ,
\nonumber\\
\A12 &= \sum_{v_0} \Amm01 \A02 ,
\end{align}
If we  view 
$\A01$ as a matrix $A$, 
and $\Amm01$ as a matrix $\t A$, 
the above can be rewritten as
\begin{align}
 A= A A= A \tilde A = \tilde A A.
\end{align}
We see that the fixed point $A$ is a matrix with a form
\begin{align}
\label{AUI}
 A=\t A &=U^{-1} \begin{pmatrix}
 I_{n\times n} & 0_{n\times m}\\
 0_{m\times n} & 0_{m\times m} \\
\end{pmatrix} U
\end{align}

Here, we would like to introduce a notion of stable fixed point.  If $A$ is not
a fixed point, when we combine two segments into one segment, $A$ transforms as
\begin{align}
\label{AA}
 & \sum_{v_1} \A01 \A12 \to \Ga \A02,
\nonumber\\
 & \sum_{v_2} \A02 \Amm12 \to \Ga \A01,
\nonumber\\
 & \sum_{v_0} \Amm01 \A02 \to \Ga \A12,
\end{align}
where $\Ga$ is a rescaling factor.  Such a transformation defines a
renormalization flow. Then, $A$ is a stable fixed point if $A+\del A$ will
alway flow back to fixed-point tensor $A'$ that have the same
topology-dependent partition function for any $\del A$ and a proper  rescaling
factor $\Ga$.  This leads to the following definition of stable topological
path integral:
\begin{defn} 
\label{stTopP}
A topological path integral described by a finite dimensional tensor $T$
produces a partition function $Z(M)$ that only depend on the topology of the
closed space-time.  If we perturb the tensor $T\to T+\del T$, the perturbed
tensor may flow to another fixed-point tensor $T'$ under a renormalization flow
which give rise to another topological path integral.  The new topological path
integral produce a topological partition function $Z'(M)$.  If $Z'(M)$ and
$Z(M)$ belong to the same connected component of $V^H_{Z(M)}$ for any
perturbation $\del T$, then the topological path integral described by $T$ is a
stable topological path integral, and $T$ is a stable fixed-point tensor.
\end{defn}\noindent
The above definition describes the physical meaning of stable  topological path
integral. However, it is very hard to use such a definition to determine if a
topological path integral is stable or not. However, if the Conjecture
\ref{Zstable} is true, it will allow us to determine which topological path
integrals are stable and which are not.

Most of the fixed points in \eq{AUI} are not stable.  The only stable fixed
point $A$ has  a form
\begin{align}
 A=\tilde A= U^{-1}\begin{pmatrix}
 1 & 0_{1\times m}\\
 0_{m\times 1} & 0_{m\times m} \\
\end{pmatrix}
U
\end{align}
$A$ for different $U$ give rise to the same topological partition function, and
describe the same exact \lBF1 category.  Thus there is only a trivial
topological order in 0+1D.
\begin{cor} 
All 1-dimensional exact \lBF{1}  categories and exact \hBF{1}  categories are trivial.
\end{cor}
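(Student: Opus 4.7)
The plan is to extract the statement directly from the fixed-point analysis just carried out, and then upgrade the conclusion about tensors into a conclusion about BF categories using the conjectures already stated.

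First I would solve the fixed-point system \eqref{AA} structurally. Viewing $A$ and $\tilde A$ as matrices, the three equations reduce to $A^2 = A$, $A\tilde A = A$, and $\tilde A A = A$. The first equation says $A$ is idempotent, so there is an invertible $U$ with $A = U^{-1}\,\mathrm{diag}(I_{n\times n},0_{m\times m})\,U$ as in \eqref{AUI}. The other two equations then force $\tilde A$ to act as the identity on the range of $A$, which combined with the branching-symmetry of the setup gives $\tilde A = A$. This catalogs all fixed-point tensors up to a change of basis encoded by $U$.

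Next I would invoke stability. By Conjecture \ref{Zstable}, a fixed-point tensor $A$ defines a stable topological path integral iff $|Z(S^1\times S^{0})| = 1$. But $Z(S^1)$ is computed by closing the interval into a loop, which yields $Z(S^1) = \mathrm{tr}(A) = n$. Hence stability forces $n=1$, so the only stable fixed points are rank-one idempotents $A = U^{-1}\,\mathrm{diag}(1,0,\dots,0)\,U$. For any such $A$ and any closed $1$-manifold $M$ (a disjoint union of circles), the partition function is $Z(M) = 1^{\#\text{components}} = 1$, independent of $U$. Thus every stable topological path integral in $0{+}1$D produces the same topological partition function as the trivial one, and by Conjecture \ref{LBF categoryeq} all such path integrals realize the same exact $\BF^{\mathrm L}_1$ category, namely the trivial one $\one_1$.

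For the Hamiltonian version, the conclusion is then immediate from the monoid homomorphism in Remark \ref{LHclosed}: any exact $\BF^{\mathrm L}_1$ category is automatically an exact $\BF^{\mathrm H}_1$ category, and the image of $\one_1$ is $\one_1$. To see that there are no further exact $\BF^{\mathrm H}_1$ categories beyond those coming from the L-side, note that by Definition of exactness an exact $\BF^{\mathrm H}_1$ category must arise as a gapped boundary of a closed $\BF^{\mathrm H}_2$ category; but the only closed $\BF^{\mathrm H}_2$ category is the trivial one (as will be shown in Section \ref{TN2D}), and the only gapped boundary of the trivial $1{+}1$D phase is the trivial $0{+}1$D phase by Corollary \ref{cor:Z2=0} applied to $\one_2 = \cZ_1(\EC_1)$ together with the uniqueness statement of Lemma \ref{lemma:unique-bulk}.

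The main obstacle here is purely logical rather than computational: we are relying on Conjecture \ref{Zstable} to convert the a priori non-constructive notion of stability (Definition \ref{stTopP}) into the concrete trace condition $|Z(S^1)|=1$, and on Conjecture \ref{LBF categoryeq} to conclude that two stable topological path integrals with identical topological partition functions define the same exact \lBF{} category. Once these two conjectures are granted, the rest is a one-line matrix calculation; without them, one would have to exhibit explicit interpolations $A_t$ between any two rank-one fixed points while controlling the induced renormalization flow, which is the genuinely hard step.
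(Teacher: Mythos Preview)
Your argument for the L-type case is essentially the paper's own: solve the idempotent equations, isolate the stable (rank-one) fixed points, and observe that they all produce the same trivial partition function. The paper does not explicitly invoke Conjecture~\ref{Zstable} at this point---it appeals directly to the renormalization-flow picture---but your use of it is in the same spirit and leads to the same conclusion. One small slip: $S^1\times S^0$ is two disjoint circles, so $Z(S^1\times S^0)=(\mathrm{tr}\,A)^2=n^2$ rather than $n$; this does not affect the conclusion $n=1$.

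Your argument for the H-type case, however, misreads the definition of ``exact.'' An exact $\BF^{\mathrm H}_1$ category is not one that \emph{arises as} the boundary of a closed $\BF^{\mathrm H}_2$ category; it is a closed $\BF^{\mathrm H}_1$ category that itself \emph{admits} a gapped boundary, i.e.\ one of the form $\cZ_0(\EC_0)$ for some $\BF^{\mathrm H}_0$ category $\EC_0$. Your appeal to Lemma~\ref{lemma:unique-bulk} is also backwards: that lemma says the bulk is determined by the boundary, not that the boundary is determined by the bulk (a single bulk can have many inequivalent boundaries). And the reference to Section~\ref{TN2D} is forward.

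The clean fix is that in $0{+}1$D the L/H distinction collapses: every closed oriented $1$-manifold is a disjoint union of circles, and every circle is already a mapping torus (of a point). Hence the requirement that the partition function be defined on all closed $1$-manifolds (L-type) coincides with the requirement that it be defined on all mapping tori (H-type). The single tensor-network analysis therefore covers both cases, and no separate argument is needed---which is why the paper simply states the corollary for both L and H without further comment.
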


\subsubsection{TN realization of 1+1D exact \lBF{2}  category}
\label{TN2D}

\begin{figure}[tb]
\begin{center}
\includegraphics[scale=0.5]{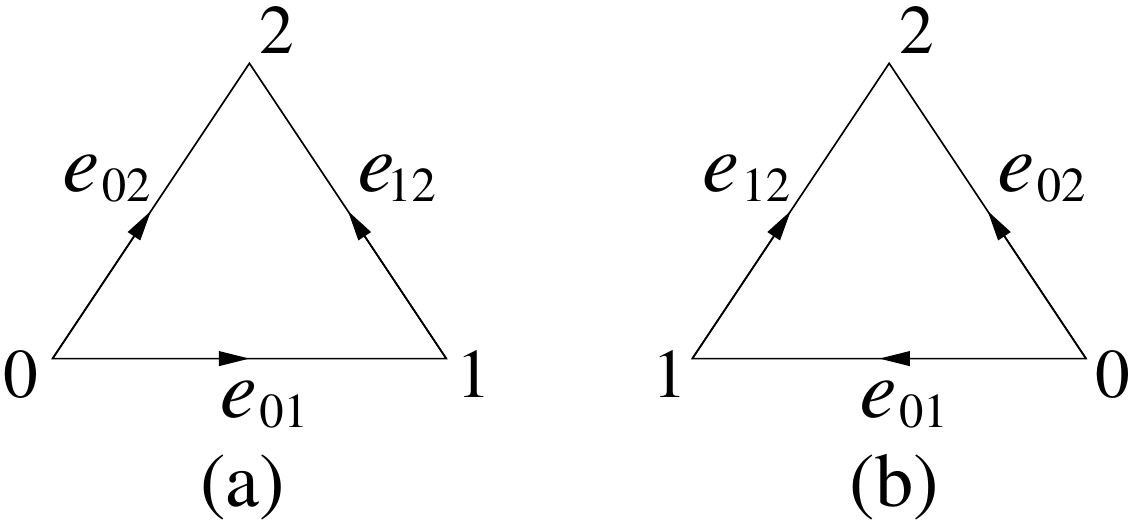}
\end{center}
\caption{The tensor $\Bs012{\pm}$ is associated with a triangle, with a
branching structure.  The branching structure gives the vertices a local order:
the $i^{th}$ vertex has $i$ incoming edges.  The triangle in (a) has an
orientation $s_{012}=+$  and the triangle in (b) has an orientation
$s_{012}=-$.  
}
\label{tri}
\end{figure}

The topological path integral that describes a 1+1D topologically ordered state
can be constructed from a tensor set $T_2$ of two complex tensors
$T_2=(w_{v_0}, \Bs012{\pm})$.  The tensor $\B012$ is complex and can be
associated with a triangle, which has a branching structure (see Fig.
\ref{tri}).  $w_{v_0}$ is real and can be associated with a vertex.  A
branching structure is a choice of orientation of each edge in the complex so
that there is no oriented loop on any triangle (see Fig.  \ref{tri} and Fig.
\ref{tetr}).  Here the $v_0$ index is associated with the vertex-0, the
$e_{01}$ index is associated with the edge-$01$ (See Fig. \ref{tri}).  They
represent the degrees of freedom on the vertices and the edges.

Using the tensors, we can define the topological path integral on any 3-complex
that has no boundary:
\begin{align}
\label{Z2d}
 Z&=\sum_{ v_0,\cdots; e_{01},\cdots}
\prod_\text{vertex} w_{v_{0}} 
\prod_\text{face} \Bs012{s_{012}}
\end{align}
where $\sum_{v_0,\cdots; e_{01},\cdots}$ sums over all the vertex indices and
the edge indices,  $s_{012}=+$ or $-$ depending on the orientation of
triangle (see Fig.  \ref{tri}), and
\begin{align}
\Bs012+ &= \B012 ,
\nonumber\\
\Bs012- &= \Bmm012 .
\nonumber\\ 
\end{align}
We want to choose the tensors $(w_{v_0}$, $\Bs012{\pm})$ such that the
path integral is re-triangulation invariant.  Such a topological path integral,
if stable, describes a topologically ordered state in 2-space-time dimensions
and also define an exact \lBF{2}  category.

\begin{figure}[tb]
\begin{center}
\includegraphics[scale=0.5]{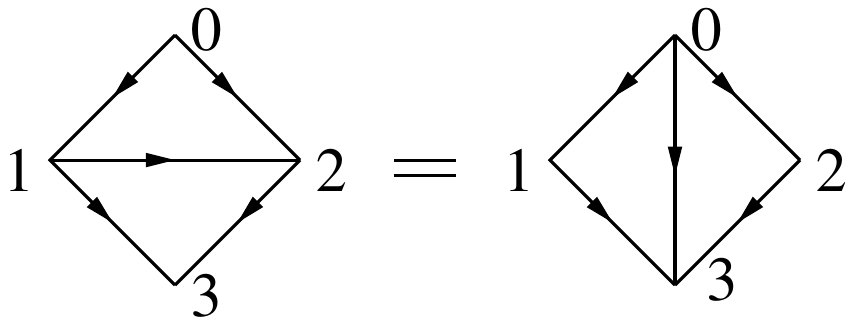}
\end{center}
\caption{
A triangulation of a 2D complex.
}
\label{2to2}
\end{figure}

\begin{figure}[tb]
\begin{center}
\includegraphics[scale=0.5]{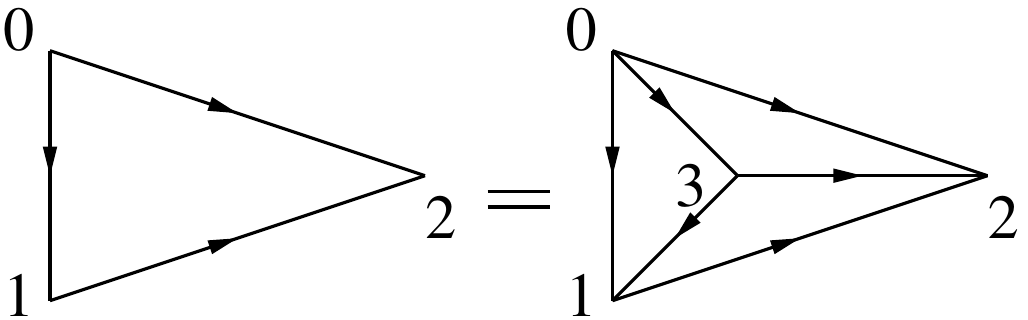}
\end{center}
\caption{
A triangulation of another 2D complex.
}
\label{1to3}
\end{figure}

The invariance of $Z$ under the re-triangulation in Fig. \ref{2to2}
requires that
\begin{align}
\label{BB22}
\sum_{e_{12}}
\B012
\Bmm123
&=
\sum_{e_{03}} 
\B013
\Bmm023 .
\end{align}
We would like to mention that there are other
similar conditions for different choices of the branching structures.  The branching structure of a tetrahedron affects the labeling of the vertices.


The invariance of $Z$ under the triangulation in Fig. \ref{1to3}
requires that
\begin{align}
\label{BB13}
& \B012 =
\\
&
\sum_{e_{03}e_{13}e_{23},v_3} w_{v_3}
\Bmm031
\B312
\B032.
\nonumber 
\end{align}
Again there are other similar conditions for different choices of the branching
structures.

The above two types of the conditions are sufficient for producing a
topologically invariant partition function $Z$.

Here we would like to point out that two different solutions
are regarded as equivalent (\ie describe the same exact \lBF{} category) if\\
(1) they can be connected by a one parameter family of the solitons continuously.\\
(2) they can be mapped into each other by the relabeling of the indices
$i \to \t i=f(i)$.\\
There may be additional equivalence relations.  In general, two fixed-point
tensor sets $T_2$ and $T'_2$ are regarded as equivalent if their corresponding
topological partition functions for any closed orientable space-time mapping
tori are the same: $Z(M)=Z'(M)$.

It turns out that in 1+1D, all the stable solutions have a trivial
topology-dependent partition function $Z(M)=1$, since there is no nontrivial
topological order in 1+1D.\cite{VCL0501,CGW1107}  Thus 
\begin{cor} 
all 2-dimensional exact \lBF{2}  categories and exact \hBF{2}  categories are trivial.
\end{cor}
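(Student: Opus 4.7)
The plan is to exploit the algebraic structure forced on a fixed-point tensor set $T_2=(w_{v_0},\Bs012{\pm})$ by the Pachner move conditions \eqn{BB22} and \eqn{BB13}, together with the stability criterion of Conjecture~\ref{Zstable}. First I would strip the problem of the (harmless) vertex labels $v_i$ by fixing them to a single value; the remaining content of $\B012$ is a tensor $\B{\!\!\,}{}{}: V_{e_{01}}\otimes V_{e_{12}}\to V_{e_{02}}$, and $\Bmm{\!\!\,}{}{}$ the analogous map in the opposite direction. Interpreting $\B{\!\!\,}{}{}$ as a multiplication $m$, the $2$–$2$ Pachner identity \eqn{BB22} is exactly the associativity of $m$, while the $1$–$3$ identity \eqn{BB13}, after summing over the internal vertex index against the weight $w$, says that the element $\eta:=\sum_{v}w_v\,\B{\,}{}{}(-;v)$ functions as a two-sided unit and that $(m,\Bmm{\!\!\,}{}{})$ form a Frobenius pair. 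Hence every fixed-point tensor set determines a finite-dimensional semisimple (by the existence of the nondegenerate Frobenius form) associative algebra $A$ over $\mathbb{C}$, and conversely every such algebra yields a fixed-point $T_2$.

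Next I would compute the partition function on an arbitrary closed oriented surface in terms of $A$. Using the triangulation invariance to reduce to a standard polygon presentation of a genus-$g$ surface $\Sigma_g$, a direct contraction (equivalent to the well-known Fukuma–Hosono–Kawai computation) gives $Z(\Sigma_g)=\dim Z(A)^{1-g}\cdot (\text{something})^{\chi(\Sigma_g)}$ up to an overall Euler-character weight, which by Conjecture~\ref{ZZpEq} does not affect the equivalence class of the exact \lBF{2} category. In particular $Z(S^1\times S^1)=Z(\Sigma_1)=\dim Z(A)$, where $Z(A)$ is the center of $A$.

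Now invoke stability. By Conjecture~\ref{Zstable}, the topological path integral is stable iff $|Z(S^1\times S^1)|=1$, so $\dim Z(A)=1$, which forces $A$ to be a simple matrix algebra $M_N(\mathbb{C})$. All such algebras are Morita equivalent to the trivial one-dimensional algebra $\mathbb{C}$, and a short calculation shows that the corresponding $Z(\Sigma_g)$ depends only on $\chi(\Sigma_g)$ via a factor of the form $W^{\chi(\Sigma_g)}$; by Conjecture~\ref{LBF categoryeq} and~\ref{ZZpEq}, every stable fixed-point $T_2$ is therefore equivalent to the trivial one, i.e. defines the trivial exact \lBF{2} category. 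Since exact \hBF{2} categories are a subset of exact \lBF{2} categories by Remark~\ref{LHclosed}, the same conclusion transfers to the H-type case.

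The main obstacle I anticipate is not the algebraic classification itself, which is classical, but the justification of the stability criterion $|Z(S^1\times S^n)|=1$ and of the equivalence declared in Conjecture~\ref{ZZpEq}: removing these conjectural inputs would require an independent argument that the Morita class of $A$, rather than $A$ itself, is what is physically detectable. Modulo these inputs, the proof is essentially the recognition that stable 2d topological state sums are classified by simple matrix algebras and all give the trivial partition function, consistent with the previously known result of \Ref{VCL0501,CGW1107} that 1+1D hosts no nontrivial topological order.
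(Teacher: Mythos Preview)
Your argument for the L-type case is correct and is in fact more explicit than what the paper does. The paper simply asserts that ``in 1+1D, all the stable solutions have a trivial topology-dependent partition function $Z(M)=1$, since there is no nontrivial topological order in 1+1D'' and cites \Ref{VCL0501,CGW1107}. You instead run the Fukuma--Hosono--Kawai identification directly: the Pachner conditions \eqref{BB22} and \eqref{BB13} force $T_2$ to encode a semisimple Frobenius algebra $A$, $Z(T^2)=\dim Z(A)$, and stability (Conjecture~\ref{Zstable}) then forces $A$ to be simple, hence Morita-trivial, so $Z(\Sigma_g)$ is of the form $W^{\chi}$ and trivial by Conjecture~\ref{ZZpEq}. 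This is a self-contained derivation that the paper omits; it buys you independence from the cited Hamiltonian classification, at the cost of leaning on Conjectures~\ref{Zstable} and~\ref{ZZpEq}.

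There is, however, a genuine gap in your passage to the H-type statement. You write that ``exact \hBF{2} categories are a subset of exact \lBF{2} categories by Remark~\ref{LHclosed}'', but the direction in that remark is the opposite: the monoid homomorphism runs $\{\text{exact }\BF^{\text{L}}_n\}\to\{\text{exact }\BF^{\text{H}}_n\}$, i.e.\ every exact \lBF{n} category yields an exact \hBF{n} category, not conversely. Triviality of all exact \lBF{2} therefore does not, by this remark alone, force triviality of all exact \hBF{2}. The paper closes this gap the other way around: the absence of nontrivial 1+1D topological order in \Ref{VCL0501,CGW1107} is already an H-type statement (about gapped local Hamiltonian systems), so the \hBF{2} case is the \emph{input}, and the \lBF{2} case follows because L-type sits inside H-type. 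To repair your argument you should invoke that Hamiltonian classification directly for the H-type claim, rather than trying to deduce it from the L-type result.
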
\noindent
Since the
boundary of 1+1D gapped state can always be gapped,
\begin{cor} 
all 2-dimensional closed \lBF{2}  categories and closed \hBF{2}  categories are exact.
\end{cor}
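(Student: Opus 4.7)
The plan is to reduce the claim to what has already been established. Since the statement concerns closed $\BF_2$ categories (not just exact ones), my strategy is to show \emph{closed} $\Rightarrow$ \emph{trivial} $\Rightarrow$ \emph{exact} in $1{+}1$D, so that the stronger-sounding claim collapses to an easy existence statement.

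First, I would invoke the classification results \Ref{VCL0501,CGW1107} cited in the introduction: in $1{+}1$D there are no nontrivial anomaly-free topological orders. By the conjectured correspondence between closed $\BF_n$ categories and anomaly-free topologically ordered phases, this implies that any closed \hBF{2} category must be equivalent to $\one_2$. The same conclusion for closed \lBF{2} categories follows either directly from the triviality of $1{+}1$D TQFTs with a single object, or from the monoid homomorphism $\{\text{closed }\lBF{n}\}\to\{\text{closed }\hBF{n}\}$ of Remark\,\ref{LHclosed} combined with the fact that in the TN construction of Section \ref{TN2D} the equations (\ref{BB22})--(\ref{BB13}) admit, up to the equivalence discussed there, only the trivial fixed-point tensor set when the stable-topological-path-integral condition $|Z(S^1\times S^1)|=1$ is imposed.

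Second, I would show that the trivial $\BF_2$ category $\one_2$ is exact. This amounts to exhibiting a gapped $0{+}1$D boundary. For the H-type case, take the product-state Hamiltonian $H_0=\sum_i h_i$ on a $1$D chain with open ends; the restriction to a half-line remains gapped with a unique boundary ground state, so the boundary is described by the trivial \hBF{1} category and $\cZ_1(\one_1)\simeq \one_2$. For the L-type case, the topological path integral with tensors $w_{v_0}=1$, $\Bs012{\pm}=1$ (the trivial solution of (\ref{BB22})--(\ref{BB13})) extends to a well-defined path integral on $2$-complexes with boundary, with a trivial $0{+}1$D boundary theory given by $A=1$; this provides the required exact realization.

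Combining these two steps yields the corollary: every closed \hBF{2} (resp.\ \lBF{2}) category is equivalent to $\one_2$, and $\one_2$ is exact, hence every closed $\BF_2$ category is exact. The only real subtlety --- and what I expect to be the main point requiring care --- is the first step, specifically justifying that the absence of nontrivial $1{+}1$D topological orders really excludes \emph{all} closed $\BF_2$ categories rather than only those that admit a lattice Hamiltonian realization of a specific form. However, this is precisely the content of the conjectured bijection between closed $\BF_n$ categories and anomaly-free topological orders stated in Section \ref{ceBF}, which the paper takes as a working hypothesis, so within the framework of this paper the reduction is legitimate and the rest of the argument is essentially formal.
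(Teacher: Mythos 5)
Your proof reaches the right conclusion but by a genuinely different route from the paper's. The paper argues in the opposite order: it first establishes (from the tensor-network fixed-point analysis) that all \emph{exact} $\BF_2$ categories are trivial, and then derives the present corollary directly from the single physical observation that \emph{the $0{+}1$D boundary of any gapped $1{+}1$D system can always be gapped} --- a $0{+}1$D boundary is just a finite-dimensional quantum-mechanical system, so there is no obstruction to gapping it; hence closed $\Rightarrow$ exact, and only afterwards does it conclude closed $\Rightarrow$ trivial. You instead prove closed $\Rightarrow$ trivial first (importing the classification of $1{+}1$D topological orders from \cite{VCL0501,CGW1107} together with the conjectured bijection between closed $\BF_n$ categories and anomaly-free topological orders) and then check that $\one_2$ is exact. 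Both arguments are valid within the paper's framework; the paper's is more economical and avoids leaning on the classification conjecture, while yours has the virtue of making explicit that the corollary is really a consequence of triviality. One caution about your L-type sub-argument: the fixed-point equations (\ref{BB22})--(\ref{BB13}) classify \emph{exact} \lBF{2} categories (topological path integrals), so using them to conclude that all \emph{closed} \lBF{2} categories are trivial presupposes closed $\Rightarrow$ exact --- the very statement being proved --- and the monoid homomorphism of Remark\,\ref{LHclosed} does not close this gap either, since a nontrivial closed \lBF{} category can in principle map to a trivial closed \hBF{} one. The safe version of your first step for the L-type case is the paper's own observation in Section\,\ref{GBFexample} that $\cZ(\EC)\simeq\hilb$ iff $\EC\simeq\hilb$, or simply to adopt the paper's boundary-gappability argument, which makes the whole issue moot.
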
\noindent
As a result,
\begin{cor} 
all 2-dimensional closed \lBF{2}  categories and closed \hBF{2}  categories are trivial.
\end{cor}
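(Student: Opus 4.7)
The plan is very short because the statement is an immediate consequence of the two corollaries stated just above it. I would simply chain the implications: let $\EC_2$ be a $2$-dimensional closed \lBF{2} (or \hBF{2}) category. By the preceding corollary, every $2$-dimensional closed BF$_2$ category is exact, so $\EC_2$ is exact. By the corollary before that, every $2$-dimensional exact BF$_2$ category is trivial, so $\EC_2$ is trivial. Combining these two steps gives the claim.

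So the only real content lies in the two inputs, both of which are already established in the excerpt: (i) the exactness of closed BF$_2$ categories, which rests on the physical fact that any $1+1$D gapped state admits a gapped boundary (cited to \Ref{VCL0501,CGW1107}); and (ii) the triviality of exact BF$_2$ categories, which was obtained from the tensor-network/topological-path-integral analysis of Section~\ref{TN2D}, where the re-triangulation conditions \eqref{BB22} and \eqref{BB13} on the data $(w_{v_0}, \Bs012{\pm})$ are shown to force every stable fixed point to yield the trivial partition function $Z(M)=1$.

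There is no genuine obstacle to carrying out this particular step; the serious work has already been done in proving the two ingredient corollaries. If I wanted to make the argument fully self-contained rather than citing, the only nontrivial piece would be re-deriving the absence of nontrivial topological order in $1+1$D, i.e.\ classifying stable solutions of \eqref{BB22}--\eqref{BB13} up to the equivalences introduced in Section~\ref{TN2D} and checking that all such solutions have trivial topological partition function. That classification is precisely what was already invoked, so I would keep the proof as the two-line composition above.
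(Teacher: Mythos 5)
Your proof matches the paper's own argument exactly: the paper introduces this corollary with ``As a result,'' meaning it is obtained by composing the two immediately preceding corollaries (closed $\Rightarrow$ exact, and exact $\Rightarrow$ trivial), which is precisely your two-line chain. No gap; the substantive content indeed lives in the two ingredient corollaries, as you note.
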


\subsubsection{TN realization of 2+1D exact \hBF{3}  category}

\begin{figure}[tb]
\begin{center}
\includegraphics[scale=0.6]{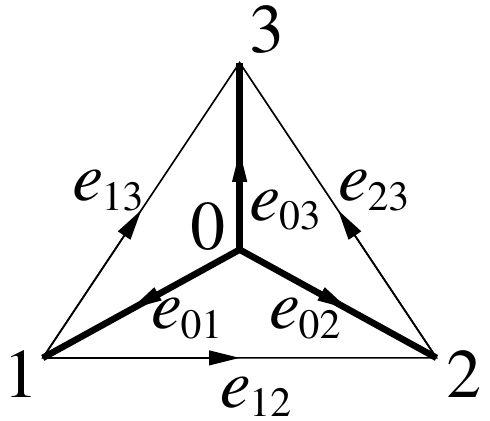}
\end{center}
\caption{
The tensor $\Cs0123{\pm}$ is associated with a tetrahedron, which has a branching
structure.  If the vertex-0 is above the triangle-123, then the tetrahedron
will have an orientation $s_{0123}=-$.  If the vertex-0 is below the
triangle-123, the tetrahedron will have an orientation $s_{0123}=+$. The
branching structure gives the vertices a local order: the $i^{th}$ vertex has
$i$ incoming edges.  
}
\label{tetr}
\end{figure}

The topological path integral that describes a 2+1D topologically ordered state
with a gapped boundary can be constructed from a tensor set $T_3$ of two real
and one complex tensors $T_3=(w_{v_0}, \Aw01,\Cs0123{\pm})$.  The complex
tensor $\Cs0123{\pm}$ can be associated with a tetrahedron, which has a
branching structure (see Fig.  \ref{tetr}).  A branching structure is a choice
of orientation of each edge in the complex so that there is no oriented loop on
any triangle (see Fig.  \ref{tetr}).  Here the $v_0$ index is associated with
the vertex-0, the $e_{01}$ index is associated with the edge-$01$, and the
$\phi_{012}$ index is associated with the triangle-$012$.  They represents the
degrees of freedom on the vertices, edges, and the triangles.

Using the tensors, we can define the topological path integral on any 3-complex
that has no boundary:
\begin{align}
\label{Z3d}
 Z=\sum_{ v_0,\cdots; e_{01},\cdots; \phi_{012},\cdots}
&\prod_\text{vertex} w_{v_{0}} 
\prod_\text{edge} \Aw01\times
\\
&
\prod_\text{tetra} \Cs0123{s_{0123}}
\nonumber 
\end{align}
where $\sum_{v_0,\cdots; e_{01},\cdots; \phi_{012},\cdots}$ sums over all the
vertex indices, the edge indices, and face indices, 
$s_{0123}=+$ or $-$
depending on the orientation of tetrahedron (see Fig.  \ref{tetr}), and
\begin{align}
 \Cs0123+ &= \C0123 
\nonumber\\
 \Cs0123- &= \Cmm0123
\end{align}
We want to choose the tensors $(w_{v_0}$, $\Aw01$, $\Cs0123{\pm})$ such that
the path integral is re-triangulation invariant.  Such a topological path
integral describes a L-type topologically ordered state in 3-space-time
dimensions and also define an exact \lBF{3}  category.

\begin{widetext}
The invariance of $Z$ under the re-triangulation in Fig. \ref{2to3}
requires that
\begin{align}
\label{CC23}
&\ \ \
\sum_{\phi_{123}} \C0123 \C1234
\nonumber\\
&=
\sum_{e_{04}} \Aw04
\sum_{ \phi_{014} \phi_{024} \phi_{034} }
\C0124 
\Cmm0134
\C0234 .
\end{align}
We would like to mention that there are other similar conditions for different
choices of the branching structures.  The branching structure of a tetrahedron
affects the labeling of the vertices.


The invariance of $Z$ under the re-triangulation in Fig. \ref{1to4}
requires that
\begin{align}
\label{CC14}
&
\C0234
=
\sum_{e_{01}e_{12}e_{13}e_{14},v_1} w_{v_1} 
\Aw01 \Aw12 \Aw13 \Aw14 
\sum_{ 
\phi_{012} 
\phi_{013} 
\phi_{014} 
\phi_{123} 
\phi_{124} 
\phi_{134} 
}
\\
&\ \ \ \ \ \ \ \ \ \ \ \ 
\C0123
\Cmm0124 
\C0134
\C1234
\nonumber 
\end{align}
Again there are other similar conditions for different choices of the branching
structures.
\end{widetext}

The above two types of the conditions are sufficient for producing a
topologically invariant partition function $Z$, which is nothing but the
topological invariant for three manifolds introduced by Turaev and
Viro.\cite{TV9265} Again, two different solutions are regarded as equivalent if
they produces the same topology-dependent partition function 
for any closed space-time.

\begin{figure}[b]
\begin{center}
\includegraphics[scale=0.5]{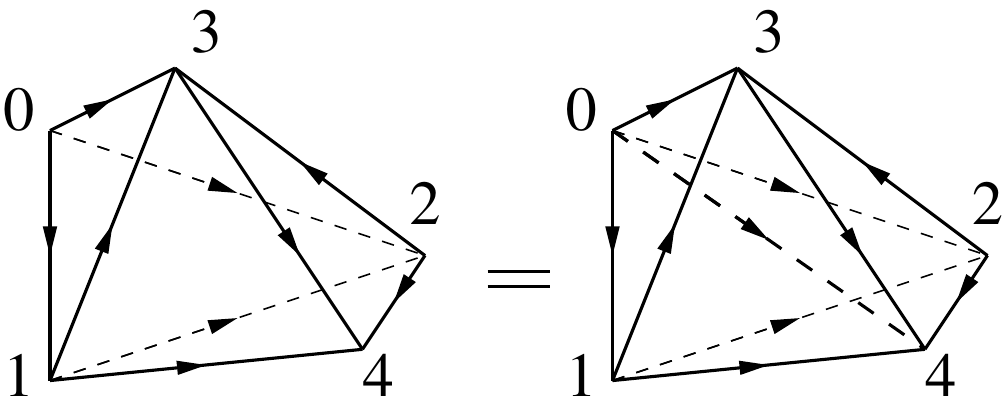}
\end{center}
\caption{
A triangulation of a 3D complex.
}
\label{2to3}
\end{figure}
\begin{figure}[b]
\begin{center}
\includegraphics[scale=0.5]{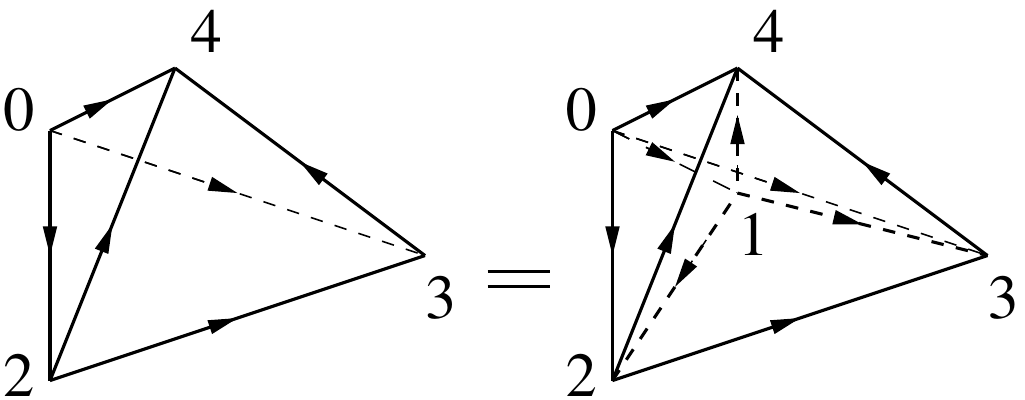}
\end{center}
\caption{
A triangulation of another 3D complex.
}
\label{1to4}
\end{figure}

\subsection{The unitarity condition}

Let we choose the space-time to have a form $M^d\times I$, where $M^d$ is the
space and the 1D segment $I$ is the time.  The path integral on $M^d\times I$
gives rise to a transfer matrix $U=\ee^{-\tau H}$. The indices on one boundary
of $M^d\times I$ correspond of the row index of $U$ and the indices on the
other boundary of $M^d\times I$ correspond of the column index of $U$.  Since
$H$ is hermitian, $U$ must has non-negative eigenvalues.  This is the unitarity
condition for the path integral.  One can show that if the two real and one
complex tensors $(w_{v_0}, \Aw01,\Cs0123{\pm})$ satisfy (see Appendix
\ref{pathham})
\begin{align}
& w_{v_0}>0, \ \ \ \  \Aw01>0,
\\
& \Cs0123+= (\Cs0123-)^*,
\nonumber 
\end{align}
then the path integral described by $(w_{v_0}, \Aw01,\Cs0123{\pm})$ is unitary.
The above is for 2+1D path integral. We have similar unitarity condition for
TN path integral in other dimensions.

Now it is clear that the above construction of topological path integrals can
be easily generalized to any dimensions.  Such a construction can be viewed as
concrete definition of exact \lBF{} categories (or L-type topological orders
with gappable boundaries).

\subsection{TN realization of generic \lBF{}  category}
\label{TNBF}

After using TNs and topological path integral to define/construct exact \lBF{}
categories, in this section, we would like to use TNs and topological path
integrals to construct generic \lBF{}  categories.
The idea is every simple, for a TN
defined by a stable topological path integral, its boundary can give us an
explicit construction of a generic \lBF{}  category.

Let us try to construct a path integral that gives rise to a generic
topological theory described by a generic \lBF{}  category. Note that such a
topological theory may be anomalous.  So the topological theory may not be
described by a path integral in the same dimension.  The trick is try to
describe the  topological theory using a path integral in higher dimension.

To define  a generic and potentially anomalous topological theory (\ie a
generic \lBF{n}  category $\EC_n$) in $n$-dimensional space-time, let us assume
that space-time cell-complex $M^n$ has a topology of $S^1\times S^{n-1}$.  We
then view $S^{n-1}$ as a boundary of $n$-dimensional solid ball $D^n$ and
extend $M^n$ to $M^{n+1}=S^1\times D^n$.  (For a more general discussion, see
Appendix \ref{extF}.) Now we can give  $M^n$ a triangulation, and then extend
that triangulation to $M^{n+1}$.  

Next, we put a topological path integral described by a  tensor set $T_{n+1}$
of finite-dimensional tensors on cell-complex $M^{n+1}$. Such a topological
path integral defines a $(n+1)$-dimensional exact \lBF{n+1}  category
$\EC^\text{exact}_{n+1}$.  More generally, we can also modify the tensor set on
the boundary from $T_{n+1}$ to $T^b_{n+1}$, such that the path integral on
$M^{n+1}$ is still a topological path integral (\ie re-triangularization
independent).  

We note that the  topological path integral on $M^{n+1}$, defined by the pair
of tensor sets $(T_{n+1},T_{n+1}^b)$, only depends on the fields (the indices
of the tensors) on the boundary $M^n$, since the topological path integral does
not change under the re-triangularization on $M^{n+1}$, as long as we fix the
triangularization and the fields on the boundary $M^n$.  So the  topological
path integral on $M^{n+1}$ also defines a path integral (a quantum theory) on
$M^n$ (an $n$-dimensional space-time).  Since the topological path integral is
invariant under the retrianglations both in the bulk and on the boundary, we
like to make the following conjecture:
\begin{conj} 
Assume that the tensor set $T_{n+1}$ describes a stable topological path
integral in $(n+1)$-dimensional space-time, and the pair $(T_{n+1},T_{n+1}^b)$
describes  a topological path integral in $(n+1)$-dimensional space-time
$M^{n+1}$ with a boundary $M^n$.  Such a theory is gapped in the bulk and on
the boundary.  
\end{conj}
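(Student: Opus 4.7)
The plan is to prove gappedness separately in the bulk and on the boundary by constructing the appropriate transfer matrices from the tensor data and showing each has a finite spectral gap, reducing everything to the stability hypothesis and Conjecture~\ref{Zstable}.

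First I would address the bulk. Choose a foliation of $M^{n+1}$ by slices $\Sigma^n \times \{t\}$ far from the boundary and build the Euclidean transfer matrix $U_{\text{bulk}}$ on $\Sigma^n$ from the tensor set $T_{n+1}$ applied to one layer of the triangulation. Retriangulation invariance of $T_{n+1}$ forces $U_{\text{bulk}}^k$ to equal (up to an overall scalar) $U_{\text{bulk}}$ itself, so $U_{\text{bulk}}$ is (after rescaling) a projector onto the image of the ground state subspace. The stability assumption gives $|Z(S^1\times S^n)|=1$ via Conjecture~\ref{Zstable}, hence the ground state on $S^n$ is unique and locally indistinguishable. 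By the entanglement-entropy/area-law argument of \Ref{KW13} cited in Section~\ref{TNeBF}, this forces the area-law form $S_M=cA+\gamma+o(1)$, which is in turn equivalent to the existence of a finite bulk gap in the parent Hamiltonian associated to $U_{\text{bulk}}$ (as constructed in Appendix~\ref{pathham} under the unitarity conditions).

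Next I would handle the boundary. The idea is to repeat the transfer-matrix construction along a direction parallel to $M^n$, but now using the combined tensor data $(T_{n+1},T_{n+1}^b)$ restricted to a slab of width equal to the bulk correlation length, which is finite by the previous paragraph. Retriangulation invariance of the pair $(T_{n+1},T_{n+1}^b)$ on $M^{n+1}$ implies an analogous idempotent relation for the boundary transfer matrix $U_{\text{bnd}}$ on $\Sigma^{n-1}$: contracting many copies together reproduces a single copy up to scalar. One then shows that $|Z_{M^n}(S^1 \times S^{n-1})|$ (the boundary partition function on a boundary mapping torus obtained by letting the bulk cap off the $S^{n-1}$ by a disc) is bounded by a finite dimension, which by the same stability-implies-gap logic should give a finite boundary gap. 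A clean way to package this is to observe that the pair $(T_{n+1},T_{n+1}^b)$ produces a boundary BF theory which is the ``restriction'' of an exact \lBF{n+1} to its gapped boundary in the sense of Section~\ref{sec:boundary-bulk-relation}, and to invoke the general principle that any boundary of an exact \lBF{} realized by a topological path integral of finite bond dimension is gapped because the boundary entanglement spectrum is again finite-dimensional and separated from zero.

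The main obstacle is the boundary step: in the absence of a precise definition of stability for a boundary topological path integral, one must either formulate and prove a boundary analogue of Conjecture~\ref{Zstable} (something like $|Z_{M^n=S^1\times S^{n-1}}|<\infty$ implying a finite boundary gap) or give a direct entanglement-area-law argument on a half-space. A secondary obstacle is that nothing in the hypothesis rules out the possibility that $T_{n+1}^b$ is itself unstable even while the bulk is stable, i.e.\ the boundary partition function could have accidental degeneracies that would normally signal a first-order transition rather than a gap; one may need to add the auxiliary hypothesis $|Z_{(T_{n+1},T_{n+1}^b)}(D^n\times S^1)|=1$ (boundary analogue of stability) to close the argument, and verify in examples (toric code with rough or smooth boundary, Turaev--Viro with a Lagrangian algebra boundary) that this hypothesis is automatic.
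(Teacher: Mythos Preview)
The paper does not prove this statement: it is explicitly a \emph{conjecture}, and the only motivation the paper offers is the sentence immediately preceding it, namely that the topological path integral is invariant under retriangulations both in the bulk and on the boundary. There is no further argument. So there is no ``paper's proof'' to compare your proposal against.

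That said, your proposal does contain the right kernel: retriangulation invariance forces the transfer matrix built from one time-slice to square to itself (up to scalar), hence to be a projector, and a projector transfer matrix means the associated parent Hamiltonian from Appendix~\ref{pathham} is gapped (eigenvalues $0$ and $1$ only). This is essentially what the paper's one-line motivation is gesturing at. However, your detour through Conjecture~\ref{Zstable} and the area-law results of \Ref{KW13} is misplaced for this purpose: those address \emph{stability} of the phase and uniqueness of the ground state on $S^n$, not the existence of a spectral gap. If the transfer matrix is already a projector, the gap is automatic; stability is a separate question about robustness under perturbations of the tensors. Conversely, invoking Conjecture~\ref{Zstable} to \emph{deduce} a gap is circular, since that conjecture itself presupposes one is working with a topological (hence gapped) path integral.

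You are right that the boundary step is where the real difficulty lies, and you correctly identify the missing ingredient: nothing in the hypotheses rules out that $T_{n+1}^b$ is an unstable boundary tensor even when the bulk is stable, and the paper gives no boundary analogue of Conjecture~\ref{Zstable}. Your suggested fix (adding $|Z(D^n\times S^1)|=1$ as an auxiliary hypothesis) is reasonable but amounts to assuming part of the conclusion. In short, your sketch captures the physical intuition the paper has in mind, but it does not close to a proof, and the paper does not claim otherwise---hence the statement is labeled a conjecture.
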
 

Since the boundary is gapped. The topological path integral on $M^{n+1}$ also
defines the gapped topological excitations on the boundary $M^n$ which is
described by an $n$-dimensional \lBF{n}  category $\EC_n$.  We have
$\EC^\text{exact}_{n+1}=\cZ_n(\EC_n)$.  Now, we can see that
\begin{conj} 
\label{SdDd1}
The theory on $M^n$, defined by a pair of stable tensor sets
$(T_{n+1},T_{n+1}^b)$ as outlined above, describes a generic \lBF{n}  category
$\EC_n$ in $n$ dimension.  The center of $\EC_n$ is given by
$\EC^\text{exact}_{n+1}=\cZ_n(\EC_n)$, where $\EC^\text{exact}_{n+1}$ is the
\hBF{n+1} category realized by the $(n+1)$-dimensional topological path
integral defined by the tensor set $T_{n+1}$.  
\end{conj}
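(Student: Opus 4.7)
The plan is to combine the topological invariance built into the pair $(T_{n+1}, T_{n+1}^b)$ with the boundary-bulk relation for BF categories proved in Section \ref{sec:boundary-bulk-relation} (specifically Lemma \ref{lemma:unique-bulk}). First I would verify that $(T_{n+1},T_{n+1}^b)$ genuinely defines a gapped lbL system on $M^{n+1}$ with a gapped boundary on $M^n$: the bulk gap follows because $T_{n+1}$ is stable (so by Conjecture \ref{Zstable}, $|Z(S^1\times S^n)|=1$ and the bulk realizes the exact \lBF{n+1} category $\EC^{\text{exact}}_{n+1}$), and the boundary gap follows because $(T_{n+1},T_{n+1}^b)$ is a topological path integral, hence produces a finite-dimensional space of boundary states whose transfer matrix on $I\times N^{n-1}$ is retriangularization-invariant and has no low-lying modes separating from the ground subspace.

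Next I would identify the topological excitations on $M^n$ with boundary defects: a $p$-dimensional defect corresponds to a modification of $T_{n+1}^b$ on a $p$-dimensional sub-complex $\Sigma^p\subset M^n$ that preserves the retriangularization invariance away from $\Sigma^p$. Equivalence of excitations (Definition \ref{def:3-eq-rel}) translates into the statement that two such modifications give the same boundary data after local unitary redefinitions of $T_{n+1}^b$. Because the partition function is topologically invariant, two parallel $\Sigma^p$'s produce a well-defined fusion product (from nearby-then-coalesce), and a pair with linking data produces a well-defined braiding automorphism (from the mapping-class-group action on the surrounding sphere); these assemble into the fusion/braiding data of a \lBF{n} category $\EC_n$ in the sense of Section \ref{sec:math-def}. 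The finiteness assumption on $T_{n+1}^b$ ensures that only finitely many simple boundary excitations arise, as required.

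For the center statement, I would invoke the uniqueness of the \bulk (Lemma \ref{lemma:unique-bulk}) together with Definition \ref{def:bulk}: by construction, $\EC_n$ is realized as the gapped boundary of a lbL system whose bulk realizes $\EC^{\text{exact}}_{n+1}$. Hence $\EC^{\text{exact}}_{n+1}$ satisfies the defining property of $\cZ_n(\EC_n)$, and by uniqueness we conclude $\cZ_n(\EC_n) = \EC^{\text{exact}}_{n+1}$. The consistency check $\cZ_{n+1}(\EC^{\text{exact}}_{n+1})= \one_{n+2}$ (Corollary \ref{cor:Z2=0}) is automatic from Lemma \ref{lemma:unique-bulk} applied twice, which confirms the identification with the exact \lBF{n+1} category.

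The main obstacle is the first step: rigorously showing that the collection of boundary defects constructed from decorations of $T_{n+1}^b$ really exhausts the topological excitations on $M^n$ and that the fusion/braiding data they carry satisfy all the coherence conditions of a BF category. This requires a complete mathematical definition of \lBF{n}  category, which at the current stage is itself conjectural (Section \ref{sec:math-def}). In practice one must either work within a restricted class where the coherence data can be read off explicitly from the tensors (as in the Turaev--Viro case \eqref{CC23}--\eqref{CC14}), or assume the conjectural higher-categorical framework and derive the BF axioms as consequences of the pentagon-like consistency equations the pair $(T_{n+1},T_{n+1}^b)$ is required to satisfy under retriangularization of $M^{n+1}$. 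A secondary technical obstacle is to prove that the construction is independent of the choice of triangulation of $M^{n+1}$ extending a given triangulation of $M^n$; this reduces to checking that the elementary bulk moves are generated by those that do not touch the boundary, which follows from the stable Pachner-like moves on manifolds with boundary.
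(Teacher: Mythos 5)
The statement you are asked to prove is labelled a \emph{conjecture} in the paper, and the authors supply no proof: their entire justification is the observation that the topological path integral on $M^{n+1}$ depends only on the boundary fields and is retriangulation-invariant, so it defines a theory on $M^n$ that "is enough to give rise to $p$-dimensional topological excitations \ldots and determine their fusion and braiding properties." Your sketch follows exactly this route, merely filling in more detail: boundary excitations as decorations of $T_{n+1}^b$ on sub-complexes, fusion and braiding from topological invariance, and the center statement from Definition~\ref{def:bulk} together with Lemma~\ref{lemma:unique-bulk}. That last step is sound within the paper's framework (the \bulk is \emph{defined} as the unique exact BF category admitting $\EC_n$ as a gapped boundary, so once $\EC_n$ is exhibited as the boundary of the $T_{n+1}$-bulk the identification $\cZ_n(\EC_n)=\EC^{\text{exact}}_{n+1}$ is essentially definitional), and you correctly name the genuine obstruction: without a complete mathematical definition of an \lBF{n} category one cannot verify that the decorated-tensor defects exhaust all excitations or that their fusion/braiding data satisfy the required coherence.

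One point to adjust: you present the gappedness of the boundary theory as something that "follows" from retriangulation invariance of $(T_{n+1},T_{n+1}^b)$, but in the paper this is itself a separate, unproved conjecture stated immediately before Conjecture~\ref{SdDd1} ("Such a theory is gapped in the bulk and on the boundary"). Your transfer-matrix heuristic does not establish it --- retriangulation invariance constrains the partition function, not the spectrum of a Hamiltonian extracted from it --- so this step should be flagged as an additional assumption rather than a verification. With that caveat, your proposal is a faithful and appropriately honest elaboration of the argument the authors intend, and no more can be expected given the conjectural status of the underlying definitions.
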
\noindent 
Note that the $n$-dimensional theory is only defined on $M^n=S^1\times S^{n-1}$
with a particular extension $M^{n+1}=S^1\times D^n$. Since
$\EC^\text{exact}_{n+1}$ is nontrivial, the different extensions of $M^n$ to
different $(n+1)$-dimensional manifolds may leads to different values of the
path integral (see discussions in the next section). This is a sign of
gravitational anomaly.

We also note that the path integral defined above on $M^n$ is enough to give
rise to $p$-dimensional topological excitations on the space $S^{n-1}$ and
determine their fusion and braiding properties.  This is why we believe the
Conjecture \ref{SdDd1}.  We like to further conjecture that
\begin{conj} 
All  generic \lBF{n}  categories $\EC_n$ in $n$ dimension can be realized this
way by pairs of tensor sets $(T_{n+1},T_{n+1}^b)$ of finite dimensional
tensors.
\end{conj}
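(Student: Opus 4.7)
The plan is to reduce the claim to the preceding conjecture on exact \lBF{} categories via the boundary-bulk correspondence. Given a generic \lBF{n} category $\EC_n$, Lemma~\ref{lemma:unique-bulk} together with the uniqueness of the \bulk tells us that there is a canonical exact \lBF{n+1} category $\cZ_n(\EC_n)=\EC_{n+1}^{\text{exact}}$ such that $\EC_n$ appears as a gapped boundary of it. By the earlier conjecture that every exact \lBF{n+1} category admits a realization by a stable topological path integral, we obtain a finite-dimensional bulk tensor set $T_{n+1}$ with partition function representing $\EC_{n+1}^{\text{exact}}$. The task therefore reduces to exhibiting a companion boundary tensor set $T_{n+1}^b$ such that the pair $(T_{n+1},T_{n+1}^b)$ defines a retriangulation-invariant path integral on $M^{n+1}$ with boundary, and such that the resulting boundary excitations reproduce the data of $\EC_n$.

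The construction of $T_{n+1}^b$ I would attempt as follows. First I would encode the gapped boundary condition as a choice of condensation in $\EC_{n+1}^{\text{exact}}$ that trivializes the bulk at the boundary; categorically this corresponds to a module/algebra object in the appropriate higher category, mirroring the Lagrangian-algebra picture used in Section~\ref{sec:boundary-bulk-relation}. I would then translate this algebraic datum into a prescription on the boundary simplices of the triangulation: on each boundary simplex, replace the bulk amplitude by a modified amplitude that projects the boundary indices onto the subspace determined by the condensation. Invariance under boundary Pachner moves should follow from the associativity and unit conditions of the boundary algebra, in exactly the same way that bulk Pachner invariance follows from the pentagon-type relations such as \eqref{CC23} and \eqref{CC14}. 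Invariance under bulk Pachner moves touching the boundary reduces to a compatibility between the bulk tensors $T_{n+1}$ and the boundary tensors $T_{n+1}^b$, which is a finite system of coherence equations analogous to Eq.~\eqref{BB13}.

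Having fixed $T_{n+1}^b$, the next step is to identify the boundary theory it defines with $\EC_n$. One extracts the boundary excitations by inserting defects into the boundary triangulation and computing their fusion and braiding via the partition functions on $S^1\times S^{n-1}$ with various decorations. By Conjecture~\ref{SdDd1}, these data assemble into a \lBF{n} category, and by the uniqueness of the \bulk (Lemma~\ref{lemma:unique-bulk}) the resulting category must coincide with $\EC_n$, since both share the same exact bulk $\EC_{n+1}^{\text{exact}}$ and the same condensation pattern at the boundary.

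The main obstacle is the finiteness of the boundary tensors, and more fundamentally the structural claim that every gapped boundary of an exact \lBF{n+1} category admits a local, finite, retriangulation-invariant tensor description. In low dimensions this is supported by the Levin-Wen/Turaev-Viro style constructions (Remark~\ref{rema:bulk=center} and \Ref{KK1251}), where boundary modules yield explicit boundary tensors, but in general dimension it requires an analogue of the classification of Lagrangian algebras in higher categories that is not currently available. Thus the proof ultimately hinges on (i) the preceding conjecture that all exact \lBF{n+1} categories admit finite tensor-network realizations, and (ii) a higher-dimensional condensation/module theory guaranteeing that every gapped boundary can be obtained from such a bulk by a finite algebraic datum encoded locally on boundary simplices. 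I would flag (ii) as the primary gap and present the rest as a reduction modulo this.
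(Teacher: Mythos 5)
First, note that the statement you are proving is labeled as a conjecture in the paper and the paper supplies no proof of it; what the paper does supply is exactly the construction you describe (bulk tensor set $T_{n+1}$ realizing the exact bulk $\cZ_n(\EC_n)$, plus a compatible boundary tensor set $T_{n+1}^b$), together with Conjecture~\ref{SdDd1} asserting that such a pair produces a generic \lBF{n} category with the right center. Your reduction — realize $\cZ_n(\EC_n)$ by the prior conjecture on exact categories, then encode the gapped boundary as a condensation/algebra datum turned into boundary amplitudes, with boundary Pachner invariance following from associativity/unit relations — is precisely the route the authors intend, and you correctly isolate the genuine open ingredient: a higher-dimensional condensation theory guaranteeing that every gapped boundary of an exact \lBF{n+1} category is captured by a finite algebraic datum placed locally on boundary simplices. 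So as a reconstruction of the paper's reasoning your proposal is faithful, and honest about where it stops being a proof.

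One step is logically misstated, however. In the final identification you invoke Lemma~\ref{lemma:unique-bulk} to conclude that the boundary theory defined by $(T_{n+1},T_{n+1}^b)$ "must coincide with $\EC_n$, since both share the same exact bulk." That lemma runs in the opposite direction: it says the boundary determines the bulk uniquely, not that the bulk (even together with a loosely specified "condensation pattern") determines the boundary. A fixed exact bulk generically admits many inequivalent gapped boundaries — the rough and smooth boundaries of the toric code in Fig.~\ref{toric} both have bulk $\cZ(\rep_{\Zb_2})$ — so the identification of the constructed boundary theory with $\EC_n$ cannot come from uniqueness of the bulk. It has to come from tracking the full boundary datum, i.e.\ the pair $(\EC_n,\, \cZ_n(\EC_n)\xrightarrow{f}\EC_n)$ of Section~\ref{sec:boundary-bulk-relation}, through the passage from algebraic data to boundary tensors and back; showing that this round trip is the identity is an additional (and nontrivial) part of the missing condensation theory, not a corollary of Lemma~\ref{lemma:unique-bulk}. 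With that correction, your item (ii) should be enlarged to include not only existence of a local finite boundary tensor description for every gapped boundary, but also faithfulness of the correspondence between boundary tensor sets and boundary \lBF{n} categories equipped with their bulk-to-boundary maps.
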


\subsection{TN realization of closed \lBF{}  category}
\label{TNcBF}

With the tensor  formulation of generic \lBF{}  categories discussed in last
section, it is quite natural to reduce it into a tensor formulation of closed
\lBF{}  category.  We just need to figure out which (generic) \lBF{}  categories
constructed above are closed \lBF{}  categories.

First, let us start with the  tensor  formulation of a generic $n$-dimensional
\lBF{n}  category $\EC_n$ defined by a pair of tensor sets
$(T_{n+1},T_{n+1}^b)$.  
We conjecture that
\begin{conj} 
$(T_{n+1},T_{n+1}^b)$ describes a closed \lBF{n} category if $T_{n+1}$ gives
rise to a topological partition function $Z(M^{n+1})$ that describes a trivial
topological phase. Every closed \lBF{n} category can be obtained this way.
\end{conj}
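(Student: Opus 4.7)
The plan is to reduce the statement to Conjecture \ref{SdDd1} together with the definition of closedness. Recall that Conjecture \ref{SdDd1} asserts that for any pair $(T_{n+1}, T_{n+1}^b)$, the resulting $n$-dimensional boundary \lBF{n} category $\EC_n$ has \bulk $\cZ_n(\EC_n) = \EC^{\text{exact}}_{n+1}$, where $\EC^{\text{exact}}_{n+1}$ is the exact \lBF{n+1} category realized by the bulk tensor $T_{n+1}$ alone. Closedness of $\EC_n$ is precisely the statement $\cZ_n(\EC_n) \simeq \one_{n+1}$. So the whole task is to match these two conditions on the bulk tensor.

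For the ``if'' direction I would argue as follows. Suppose $Z(M^{n+1})$ associated to $T_{n+1}$ takes the trivial form of \eqn{eupon}, i.e.\ it lies in the connected component of $V^L_{Z(M)}$ containing the partition function $W^{\chi(M^{n+1})} \ee^{\ii \sum \phi_{n_1n_2\cdots}\int P_{n_1n_2\cdots}}$. By Conjecture \ref{LBF categoryeq} (and Conjecture \ref{ZZpEq}), all such tensors define the same exact \lBF{n+1} category, namely $\EC^{\text{exact}}_{n+1} \simeq \one_{n+1}$. Applying Conjecture \ref{SdDd1} then gives $\cZ_n(\EC_n) \simeq \one_{n+1}$, so $\EC_n$ is closed.

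For the converse (``every closed \lBF{n} category arises this way''), I would start from a closed $\EC_n$, so $\cZ_n(\EC_n) \simeq \one_{n+1}$. Since $\EC_n$ is closed, it admits a realization by an lbL (path-integral) theory in $n$ space-time dimensions, which I take to be given by a tensor network. I would then ``thicken'' this realization into the boundary of an $(n+1)$-dimensional TN by choosing $T_{n+1}$ to be a stable topological path integral whose partition function has the trivial form \eqn{eupon}, and defining $T_{n+1}^b$ by gluing the original $n$-dimensional amplitudes of $\EC_n$ to $T_{n+1}$ in a way that preserves retriangulation invariance of the combined $(n+1)$-dimensional path integral with boundary. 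Conjecture \ref{SdDd1} then produces, from this $(T_{n+1}, T_{n+1}^b)$, a boundary \lBF{n} category with trivial \bulk; uniqueness of the \bulk/boundary correspondence (Lemma \ref{lemma:unique-bulk}) together with the fact that $T_{n+1}^b$ was constructed to encode $\EC_n$ should identify the reconstructed boundary theory with $\EC_n$.

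The main obstacle is entirely in the converse. Two prior conjectural inputs are needed: completeness, i.e.\ that every closed \lBF{n} category actually admits a TN realization (in same or one higher dimension), and the ability to build $T_{n+1}^b$ explicitly so that the combined $(T_{n+1}, T_{n+1}^b)$ is a topological path integral and its induced boundary theory is $\EC_n$ on the nose. The subtle point here is that the freedom within the trivial class \eqn{eupon} is not a harmless overall factor: as emphasized in Section \ref{TNcBF}, different ``trivial'' choices of $T_{n+1}$ support different closed boundary $\EC_n$'s, so the $W$, $\phi_{n_1n_2\cdots}$ data (and, more importantly, the entire gauge freedom in how we write a trivial bulk as a TN) must be used to parametrize the closed $\EC_n$'s. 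Showing that this parametrization is surjective onto all closed \lBF{n} categories is the only nontrivial content of the reverse direction and is where I expect the real work (and the real conjectural leap) to lie.
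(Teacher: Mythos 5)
Your ``if'' direction reaches the right conclusion but by a genuinely different route from the paper's, and the difference is substantive. You route everything through the categorical formalism: trivial bulk partition function $\Rightarrow$ trivial exact bulk category (via Conjectures \ref{LBF categoryeq} and \ref{ZZpEq}) $\Rightarrow$ $\cZ_n(\EC_n)\simeq \one_{n+1}$ (via Conjecture \ref{SdDd1}) $\Rightarrow$ closed. The last arrow invokes ``closed iff the bulk is trivial,'' but for \lBF{} categories ``closed'' is \emph{defined} as realizability by a local bosonic path integral in the same space-time dimension, and the equivalence of that with triviality of the bulk is essentially what this conjecture is meant to establish in the L-type setting — so your argument is close to circular at that step. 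The paper's own justification supplies the missing mechanism directly and more elementarily: one stacks the bulk $T_{n+1}$ with a trivial topological state carrying the \emph{inverse} partition function $W^{-\chi(M^{n+1})}\exp\bigl(-\mathrm{i}\sum_{\{n_i\}}\phi_{n_1n_2\cdots}\int_{M^{n+1}}P_{n_1n_2\cdots}\bigr)$, so that the combined bulk partition function is identically $1$. Only then is the $(n+1)$-dimensional path integral on an extension $N^{n+1}$ of $M^n$ independent of the choice of extension and of the interior triangulation, hence a function of the boundary data alone — i.e., an honest $n$-dimensional path integral realizing $\EC_n$ in the same dimension, which is literally the definition of closed. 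Without this cancellation a ``trivial'' bulk still carries the nonconstant factor $W^{\chi}\exp(\mathrm{i}\sum\phi\int P)$ and the boundary path integral genuinely depends on the extension; the Remark at the end of Section \ref{TNcBF} emphasizes that only the \emph{combined} system yields a well-defined $n$-dimensional theory. Your approach never performs this step.

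On the converse, the paper records no argument at all, so there is nothing to compare against; your sketch (realize a closed $\EC_n$ by a tensor network, thicken it into the boundary data $T^b_{n+1}$ of a trivial bulk $T_{n+1}$, and appeal to uniqueness of the bulk) is a plausible outline, and you correctly isolate surjectivity — that every closed \lBF{n} category is hit by some pair $(T_{n+1},T^b_{n+1})$ with trivial bulk — as the genuine conjectural leap.
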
\noindent

To understand the conjecture, we note that the topological partition function
for a trivial phase has a form
\begin{align}
Z(M^{n+1})=
W^{\chi(M^{n+1})}
\ee^{\ii \sum_{\{n_i\}} \phi_{n_1n_2\cdots} \int_{M^{n+1}} P_{n_1n_2\cdots}}
\end{align}
We can introduce a trivial topological state described by the
following partition function
$$
Z^\text{tri}(M^{n+1})=
W^{-\chi(M^{n+1})}
\ee^{-\ii \sum_{\{n_i\}} \phi_{n_1n_2\cdots} \int_{M^{n+1}} P_{n_1n_2\cdots}}
$$
(The boundary of such a trivial topological state is the gCS anomalous
topological state discussed in Section \ref{tprod}.) 
We then stack the two systems. The partition function for the
combined system satisfies
\begin{align}
 Z^\text{combined}(M^{n+1})=1.
\end{align} 

%

We see that for the combined system, the
$(n+1)$-dimensional path integral on any two $(n+1)$-dimensional cell complex,
$N^{n+1}$ and  $N^{\prime n+1}$, will be the same, as long as \\
(1) the boundaries of $N^{n+1}$ and  $N^{\prime
n+1}$ are the same:  $\prt N^{n+1}=\prt N^{\prime n+1} =M^n$,\\
(2) the triangulations on $N^{n+1}$ and  $N^{\prime n+1}$ reduce to the same
triangulations on $M^n$,\\
Thus, the combined $(n+1)$-dimensional  path integral on $N^{n+1}$ only
depend on the fields on the $n$-dimensional boundary $M^n$.  The  path integral
does not depend on how we extend $M^n$ into $N^{n+1}$.  Therefore the
$(n+1)$-dimensional  path integral on $N^{n+1}$ define an
$n$-dimensional path integral on $M^n$, which in turn describes a well defined
topological theory in $n$-dimensional space-time, whose excitations are
described by a closed \lBF{n}  category in $n$ dimensions.

\begin{rema}
We note that although the  $(n+1)$-dimensional path integral on $N^{n+1}$
described by the tenser set $(T_{n+1},T_{n+1}^b)$ is topological (\ie
independent of the retriangulizations both in the bulk and on the boundary),
Only the path integral of the combined system leads to a well defined
$n$-dimensional path integral on $M^n=\prt N^{n+1}$ which describes a gapped
topological state, a closed \lBF{n} category.  Such an $n$-dimensional path
integral on $M^n$ for the combined system may not be re-triangularization
invariant.  Thus the $n$-dimensional path integral on $M^n$ may not describe an
exact \lBF{n} category.
\end{rema}

\section{Examples of TN realization of topologically ordered states}

\subsection{TN realization of exact \lBF{}  categories}

\subsubsection{TN realization of the trivial \lBF{}  category in any dimensions}
\label{TNtrivial}

One way to obtain topological path integral in any dimensions is to assume that
all tensors in the tensor set $T_n$ are 1-dimensional (\ie all the indices in
the tensor have a range 1).  Such kind of tensor set describes a trivial
topological state. Here we like to use such a trivial example to perform some
nontrivial check for some of our conjectures.

When all the tensor are 1-dimensional, we assign a weight
$W_k$ to each $k$-cell in the $n$-dimensional space-time complex $M^n$,
where $W_k$ is real for $k=0,\cdots,n-1$ and is complex for $k=n$.
The partition function has a form 
\begin{align}
 Z(M^n)= W_n^{N^+_n} (W_n^*)^{N^-_n} \prod_{k=0}^{n-1} W_k^{N_k} .
\end{align}
where $N_k$ is the number of $k$-cells
and $N^\pm_n$ is the number of $n$-cells with the $\pm$ orientation.

The re-triangulation invariance requires that
\begin{align}
 W_k=W^{(-)^k}, \ \  W\in \R.
\end{align}
In this case
\begin{align}
 Z(M^n)=W^{\chi(M^n)}
\end{align}
where $\chi(M)$ is the Euler characteristics of cell complex $M$.  We see that
a trivial topological theory can give rise to a nontrivial partition function
with a nontrivial dependence on the topology of the space-time. Such a
seemingly non-trivial ``topological'' partition function actually describes a
trivial topological order since $W$ is not quantized.  $W\neq 1$ and $W= 1$
correspond to the same phase.


\subsubsection{TN realization of a 1+1D unstable topological path integral}

One way to construct 1+1D topological path integral is to use the elements of a
finite group $G$ to label the edge degrees of freedom and assume there is no
degrees of freedom on the vertices (\ie the range of the vertex index is 1:
$v_i=1$).
We choose
$\B012$, $\Bmm012$, and $w_{v_0}$ in \eq{Z2d} as
\begin{align}
 B_{111;e_{02}}^{ e_{01} e_{12}}&=
 B^{111;e_{02}}_{ e_{01} e_{12}}=
1 \text{ if } e_{01}e_{12}=e_{02},
\nonumber\\
 B_{111;e_{02}}^{ e_{01} e_{12}}&=
 B^{111;e_{02}}_{ e_{01} e_{12}}=
0 \text{ otherwise}, \ \
w_1 =|G|^{-1},
\end{align}
where $e_{ij}\in G$ and $|G|$ is the number of the elements in $G$.  The
resulting path integral is a topological path integral (\ie re-triangulation
invariant).

We find that the partition function on $S^1\times S^1$ is
\begin{align}
 Z(S^1\times S^1)=|G|.
\end{align}
According to the Conjecture \ref{Zstable}, the topological path integral is
unstable.  This is a correct result. The  topological path integral actually
describes a 1+1D gauge theory in zero coupling limit, where $G$ is the gauge
group.  In 1+1D, a gauge theory alway confine for any finite coupling even for
discrete gauge group.  Thus 1+1D gauge theory is unstable and does not describe
a topological phase.

\subsubsection{TN realization of the 3+1D trivial BF$_4$  category}
\label{WWmdl}

In \Ref{CY9362,W06,WW1132}, a 3+1D topological path integral is constructed
using the data of a UMTC.  The partition
function of the 3+1D topological theory on a closed 3+1D space-time is given by
\begin{align}
\label{Z4dUMTC}
Z(M^4)= \ee^{\frac{2 \pi\ii}{8} (c_R-c_L) \si(M^4) }W^{\chi(M^4)},
\end{align}
where $c_R-c_L$ is the chiral central charge of the UMTC, $\si(M^4)$ the
signature of $M^4$, and $W$ an arbitrary real number that can be
continuously deformed to $1$.

Now we would like to show that the construction using UMTC data give rise to a
trivial BF$_4$ category (\ie a trival 3+1D topological order).
First, we can have 3+1D bosonic lattice model
that breaks the time-reversal symmetry
and produces an effective action
\begin{align}
 S = \ii \ka_{gCS} \int_{M^4} p_1,
\end{align}
where $p_1$ is the first Pontryagin class.  The lattice model has no
topological excitations and is a trivial topologically ordered state, since it
is continuously connected to the product state as $\ka_{gCS}\to 0$.

If we stack the lattice model with the  3+1D topological path integral
constructed from UMTC, we can make the combined theory to have a trivial
partition function $Z(M^4)=1$, if we choose $\ka_{gCS}=-\frac{2 \pi\ii}{24}
(c_R-c_L) $, since the signature  $\si(M^4)$ can be expressed as
\begin{align}
 \si(M^4) = \int_{M^4} p_1/3.
\end{align}
According to Conjecture \ref{LBF categoryeq}, the combined theory realize a
trivial topological order (\ie a trivial BF category).  The original theory
must be trivial since its stacking with a trivial phase is trivial.


%

By choosing different UMTC's, we can construct many different topological path
integrals that describe the same trivial  category.  Later in Section
\ref{WWmdl1}, we will take advantage of this many-to-one representation of
trivial BF  category and use them to construct closed BF  categories in one
lower dimension.

\subsubsection{TN realization of a 2+1D exact \lBF{3}  category}

One way to construct a 2+1D topological path integral is to use the elements of
a finite group $G$ to label the edge degrees of freedom and assume there is no
degrees of freedom on the vertices and the faces (\ie the range of the index is
1: $v_i=1$ and $\phi_i=1$).  We choose $\C0123$, $\Cmm0123$, $w_{v_0}$, and
$\Aw01$ in \eq{Z3d} as
\begin{align}
&\ \ \ \C0123=\Cmm0123
\nonumber\\
&=1, \text{ if every } e_{ij}e_{jk}=e_{ik},
\nonumber\\
&\ \ \ \C0123=\Cmm0123
\nonumber\\
&=0, \text{ otherwise}, 
\nonumber\\
& 
w_{v_0} =|G|^{-1},\ \ \
\Aw01=1.
\end{align}
where $e_{ij}\in G$.  The resulting path integral is a topological path
integral (\ie re-triangulation invariant).

We find that the partition function on $S^1\times S^2$ is
\begin{align}
 Z(S^1\times S^2)=1.
\end{align}
According to the Conjecture \ref{Zstable}, the topological path integral is
stable.  This is a correct result. The  topological path integral actually
describes a 2+1D gauge theory in zero coupling limit, where $G$ is the gauge
group.  In 2+1D, a discrete gauge theory is alway in the deconfined phase for
small enough coupling.  Thus 2+1D gauge theory is stable and describe a
topological phase.

We can construct a more general 2+1D exact \lBF{3}  category by twisting the
above topological path integral by the cocycle $\om(g_0,g_1,g_2)$ in the group
cohomology class $H^3(G,\R/\Z)$:\cite{DW9093,HW1267,HWW1295}
\begin{align}
\label{T3d}
&\C0123 =\om(e_{01},e_{12},e_{23}),
\nonumber\\
&\ \ \ \ \ \text{ if every } e_{ij}e_{jk}=e_{ik},
\nonumber\\
&\Cmm0123 =\om^*(e_{01},e_{12},e_{23}),
\nonumber\\
&\ \ \ \ \ \text{ if every } e_{ij}e_{jk}=e_{ik},
\nonumber\\
&\ \ \ \C0123=\Cmm0123
\nonumber\\
&=0, \text{ otherwise}, 
\nonumber\\
& 
w_{v_0} =|G|^{-1},\ \ \
\Aw01=1.
\end{align}
Certainly, such a construction can also be easily generalized to any higher
dimensions.

\subsection{TN realization of generic \lBF{}  categories}

Eq. \ref{T3d} gives rise to an exact \lBF{}  category in 2+1D.  Its boundary
will give rise an generic \lBF{}  category in 1+1D.  So using the TN
realization of exact \lBF{}  categories, we can obtain the TN realization of
generic \lBF{}  categories in one lower dimensions.

\subsection{TN realization of closed \lBF{}  categories}

\label{WWmdl1}

In \Ref{CY9362,W06,WW1132}, a 3+1D topological path integral is constructed
using the data of a UMTC, which describe a trivial topologically ordered state
in 3+1D as discussed before (see Section \ref{WWmdl}).  However, the natural
boundary of the 3+1D topological path integral is very interesting. It is
a 2+1D topologically ordered state whose particle-like topological
excitations are described by the same UMTC that was used to construct the 3+1D
topological path integral.\cite{WW1132,KBS1307,BCFV1372,CFV1350} So all the
closed 2+1D \lBF{3} categories that correspond to the UMTC can be described by
TN and its topological path integral in one higher dimensions.  Since the 3+1D
topological path integral describes a trivial topologically ordered state in
3+1D, we believe that we can also use a path integral in 2+1D to describe the
closed 2+1D \lBF{3}  categories that correspond to UMTC's. 

Suppose we have a 2+1D path integral that describe a UMTC, what is the nature
of its fixed-point partition function (\ie the volume-independent partition
function $Z_0(M^3)$). Since the UMTC describes a gapped topological state,
should we expect the fixed-point partition function $Z_0(M^3)$ to be
topological, \ie independent of the deformation of the shape of space-time
$M^3$.  The answer is no if the chiral central charge $c_R-c_L$ of the UMTC is
not multiple of 8.  The volume-independent partition function $Z_0(M^3)$
must contain a gravitational Chern-Simons term
\begin{align}
\label{ZM3UMTC}
 Z_0(M^{3}) = 
=\ee^{\ii \frac{2\pi (c_R-c_L)}{24} \int_{M^{d+1}} \om_{3} }
\end{align}
which make $Z_0(M^{3})$ not topological.

On the other hand, if we introduce a framing to $M^3$ and allow the fixed-point
partition function to depend on the framing, or we extend $M^3$ to a $M^4$ with
$\prt M^4=M^3$, then we can obtain a fixed-point partition function that is
topological (see \eqn{Z4dUMTC}).  But such a topological partition function for
the 2+1D UMTC is ``anomalous'' since it either depends on the framing, or
depends the signature of its 4-dimensional extension.  However, such an anomaly
can be canceled by stacking with an invertible gCS anomalous topological order
described in Section \ref{tprod}. The price we pay is that the combined anomaly free partition
function (see \eqn{ZM3UMTC}) cannot be topological unless $c_R-c_L=0$ mod 8.


\section{Probing and measuring BF  categories 
(\ie topological orders and gravitational anomalies)}
\label{PMtopo}

In this paper, we pointed out a direct connection between gravitational
anomalies and topological orders in one higher dimension.  Using such a
connection, we have developed a systematic theory of topological order and
gravitational anomaly in any dimensions.  In this section, we will discuss
another important issue: How to probe and measure different topological orders
and gravitational anomalies. Or in other words, how to probe and measure
different BF  categories.  Here ``probe and measure'' means the methods in
experiments and/or numerical calculations that allow us to distinguish
different topological orders and gravitational anomalies.  

\subsection{How to probe and measure the closed \lBF{}  categories described by
non-fixed-point path integrals}

If the path integral described by a TN has no long range correlations, it will
describe a closed \lBF{} category.  But how to determine which closed \lBF{}
category the path integral can produce?  How to determine if two path
integrals give rise to the same closed \lBF{} category or not?

Let us consider, for simplicity, a 2+1D path integral defined by a TN on a 2+1D
space-time complex $M$.  We assume that the path integral has no long range
correlations.  We consider the limit where the space-time complex is formed by
many 3-cells (the thermal dynamical limit).  In this limit, the partition
function have a form
\begin{align}
 Z_\text{path}(M^3)&=\ee^{c_0N_0+c_1N_1+c_2 N_2+c_3N_3} 
Z_0(M)
\\
&\ \ \ \
\ee^{ O(1/N_0)+ O(1/N_1)+ O(1/N_2)+ O(1/N_3)}
\nonumber 
,
\end{align}
where $N_d$ is the number of the $d$-cells in the space-time complex.  Note that
that term $c_iN_i$ is proportional to the volume of space-time.  So $c_i$'s
correspond to the density of ground state energy and are not universal.  On the
other hand, $Z_0(M^3)$ is the volume-independent part of the partition function
which contains universal structures.

To understand the universal structures in $Z_0(M^{d+1})$, 
let us use $\cM_{M^{d+1}}$ to denote the
moduli space the closed space-time $M^{d+1}$ with different metrics but the
same topology.  Then  the volume-independent part of the partition function
$Z_0(\cdot)$ can be viewed as a map from $\cM_{M^{d+1}}$ to $\Cb$.
\begin{conj} 
If $Z_0(M^{d+1}_0) \neq 0$ for a point $M^{d+1}_0$ in  $\cM_{M^{d+1}}$, then
$Z_0(M^{d+1}) \neq 0$ for every point $M^{d+1}$ in  $\cM_{M^{d+1}}$.
\end{conj}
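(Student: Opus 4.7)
The plan is to reduce the conjecture to two more basic claims: (i) $Z_0(M^{d+1},g)$ depends continuously on the metric $g \in \cM_{M^{d+1}}$, and (ii) its non-topological, volume-independent dependence on $g$ is packaged into a nowhere-vanishing local factor of the form $e^{S_{\mathrm{univ}}[g]}$. Granting these, $Z_0(M^{d+1},g)$ factorizes as $Z_0(M^{d+1},g)=e^{S_{\mathrm{univ}}[g]}\,Z_{\mathrm{top}}(M^{d+1})$, and since $e^{S_{\mathrm{univ}}[g]}$ is never zero while $Z_{\mathrm{top}}(M^{d+1})$ is independent of $g$, the vanishing locus of $Z_0$ is either empty or all of $\cM_{M^{d+1}}$.

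First I would establish continuity of $Z_0(M^{d+1},g)$ on $\cM_{M^{d+1}}$. Because the underlying path integral is a tensor contraction of finite-dimensional tensors on a space-time complex, refining the triangulation to match the metric $g$ gives a net whose limit exists (by the no-long-range-correlations assumption) and depends continuously on $g$ under small smooth deformations. The volume-dependent piece $\exp(\sum_i c_i N_i)$ extracts a local density that varies smoothly with $g$, so the remainder $Z_0(M^{d+1},g)$ is a smooth $\Cb$-valued function on $\cM_{M^{d+1}}$.

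Next I would argue that the only $g$-dependence of $Z_0$ comes from a \emph{local, nowhere-vanishing} prefactor. By the gap hypothesis and locality, any metric-dependent contribution to $\ln Z_0(M^{d+1},g)$ that survives the subtraction of the volume terms must be built from local diffeomorphism-covariant densities — combinations of the Euler density, Pontryagin densities, and (in dimensions $4k+3$) gravitational Chern-Simons forms — as already invoked in Conjecture \ref{ZZpEq} and \eqn{eupon}. Each such contribution enters as $\exp(\ii\int_{M^{d+1}} P[g])$ or $\exp(W\int_{M^{d+1}}\!\sqrt{g})$-type expression with finite real or imaginary coefficient, and is therefore a nowhere-vanishing function on $\cM_{M^{d+1}}$. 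Pulling this factor out of $Z_0$ leaves a residue that is metric-independent, i.e. a topological invariant $Z_{\mathrm{top}}(M^{d+1})$ of the topological type of $M^{d+1}$. The desired dichotomy follows.

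The hard part will be step (ii): rigorously proving that no further, more exotic $g$-dependence can appear in a volume-independent piece of a gapped local path integral. This is essentially the classification of universal gravitational responses of a gapped system, which is still conjectural in general dimension (cf.\ Section \ref{sec:univ-grav-response}) and underlies much of the framework of this paper. A clean proof would likely proceed by showing that an infinitesimal variation $\delta g$ produces a change $\delta \ln Z_0 = \int_{M^{d+1}} T^{\mu\nu}\delta g_{\mu\nu}$ where $\langle T^{\mu\nu}\rangle$ is, after subtracting volume counterterms, a total derivative of local curvature polynomials — so that $\ln Z_0$ is determined up to a topological constant by such local data. Once this Wess-Zumino-type result is in hand, the nowhere-vanishing property of the exponential completes the argument.
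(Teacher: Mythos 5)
Your proposal follows essentially the same route as the paper's own discussion: the statement is left as a \emph{conjecture} there, and the paragraph immediately following it argues exactly as you do — $\cM_{M^{d+1}}$ is connected, two metrics give partition functions of the same gapped phase, hence their ratio is the nowhere-vanishing local counterterm factor $W^{\chi(M)}\ee^{\ii\sum\phi_{n_1n_2\cdots}\int_M P_{n_1n_2\cdots}}$ of Eq.~(\ref{eupon}), so $Z_0$ lands in $\Cb\setminus\{0\}$ once it is nonzero anywhere. You correctly identify that the entire weight rests on step (ii), the classification of the universal metric dependence of a gapped local path integral, which is precisely Conjecture~\ref{ZZpEq} and is not proved in the paper either; so your write-up is an honest expansion of the paper's heuristic rather than a proof, and it locates the irreducible gap in the same place the authors do.
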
\noindent
We note that $\cM_{M^{d+1}}$ is connected. So $Z_0(M^{d+1}_0)$ and $Z_0(M^{d+1}_1)$
for two points $M^{d+1}_0, M^{d+1}_1 \in \cM_{M^{d+1}}$ are partition functions of
topological states that belong to the same gapped phase. As a result
$Z_0(M^{d+1}_0)/Z_0(M^{d+1}_1)=W^{\chi(M)}  \ee^{\ii \sum_{\{n_i\}}
\phi_{n_1n_2\cdots} \int_M P_{n_1n_2\cdots}}$ (see Conjecture \ref{ZZpEq}).  So
the partition function $Z_0(\cdot)$ is actually a map $Z_0: \cM_{M^{d+1}} \to \Cb
-\{0\}\sim U(1)$.  If $\pi_1(\cM_{M^{d+1}})\neq 0$, such map may have a non-trivial
winding number. 

To understand the winding number, let us use $G_\text{homeo}(M^{d+1})$ to
denote the homeomorphism group of the space-time $M^{d+1}$.  Note that
$G_\text{homeo}(M^{d+1})$ only depends on the topology of $M^{d+1}$  and
is the same for every point $M^{d+1} \in \cM_{M^{d+1}}$.
Let us use $G^0_\text{homeo}(M^{d+1})$ to denote the subgroup of
$G_\text{homeo}(M^{d+1})$ which is the connected component of
$G_\text{homeo}(M^{d+1})$ that contain identity.  The mapping class group is
formed by the discrete components of the homeomorphism group:
\begin{defn} \textbf{mapping class group}\\
$\text{MCG}(M^{d+1})\equiv  G_\text{homeo}(M^{d+1})/G^0_\text{homeo}(M^{d+1})=\pi_0[G_\text{homeo}(M^{d+1})]$.
\end{defn} \noindent
We note that every homeomorphism $f: M^{d+1}\to M^{d+1}$ in
$\text{MCG}(M^{d+1})$ defines a mapping torus $M^{d+1} \rtimes_f S^1$ that
describes how $M^{d+1}$ deform around a loop $S^1$, and correspond to an
element in $\pi_1(\cM_{M^{d+1}})$.

Since $\pi_1(\cM_{M^{d+1}})=\text{MCG}(M^{d+1})$, the winding number is a group
homomorphism $\text{MCG}(M^{d+1}) \to \Zb$.  So the  winding numbers (\ie the
group homomorphisms) always form integer classes $\Z$. This leads us to believe
that the  winding numbers (or the group homomorphism $\text{MCG}(M^{d+1}) \to
\Zb$) are always realized by the partition function $Z_0(M^{d+1})$ that
contains the gravitational Chern-Simons term $\om_{d+1}$ 
\begin{align}
\label{Zom1}
 Z_0(M^{d+1})\sim \ee^{\ii \ka_{gCS} \int_{M^{d+1}} \om_{d+1} }
\end{align}
where $\dd \om_{d+1}=P_{n_1n_2\cdots}$ is a combination of Pontryagin classes
which are the only integer characteristic classes of oriented manifolds.  If the
values of $P_{n_1n_2\cdots}$ on mapping tori are not always non-zero, then
$\ka_{gCS}$ is quantized since we require
\begin{align}
 \ee^{\ii \ka_{gCS} \int_{M^{d+1}\rtimes_f S^1} P_{n_1n_2\cdots} } =1
\end{align}
for any mapping torus $M^{d+1}\rtimes_f S^1$.  In this case,
$\frac{\ka_{gCS}}{2\pi} \int_{M^{d+1}\rtimes_f S^1} P_{n_1n_2\cdots}$ is the
winding number for the loop in  $\cM_{M^{d+1}}$ described by the mapping torus
$M^{d+1}\rtimes_f S^1$.

Such type of winding numbers and the  partition function exist only when
$d+1=4k+3$.  We also note that there is always one and only one combination of
Pontryagin classes for each $d+1=4k+3$ whose value on mapping torus is always
zero.  (They correspond to the signature $\si$ of the manifold.) For such
Pontryagin class, the corresponding gravitational Chern-Simons term
$\om_{d+1}^\si$ can have a unquantized coefficient.

Clearly, two bosonic systems that give rise to partition functions with
different winding numbers must belong to two different phases.  So the  winding
numbers of partition functions are a type of topological invariants that can be
used to probe and measure the closed \lBF{d+1}  categories.

To gain a better understanding of what part of the \lBF{d+1}  categories 
that the winding numbers characterize, we note
that invertible topological order are described by partition functions that are
pure $U(1)$ phase. In particular the $\Zb$-class of invertible topological
order (see Section \ref{invTop}), such as the $E_8$ quantum Hall state in
$d+1=3$, are described by
\begin{align}
 Z_0(M^{d+1}) = \ee^{\ii \ka_{gCS} \int_{M^{d+1}} \om_{d+1} }
\end{align}
and
$\ka_{gCS}$ is quantized since we require
\begin{align}
 \ee^{\ii \ka_{gCS} \int_{M^{d+2}} P_{n_1n_2\cdots} } =1
\end{align}
for any closed $M^{d+2}$.
We note that even $\om_{d+1}^\si$ is required to have a quantize coefficient in
order to be diffeomorphic invariant.  For example, in 2+1D
\begin{align}
\label{Zom2}
 Z_0(M^{3}) = 
\ee^{\ii \ka_{gCS} \int_{M^{d+1}} \om_{3} }
=\ee^{\ii \frac{2\pi c}{24} \int_{M^{d+1}} \om_{3} }
\end{align}
where $c\equiv  12\ka_{gCS}/\pi$ must be quantized as 0 mod 8.  In fact $c$ is
the chiral central charge of the edge states and the above partition function
describes the stacking of $c/8$  $E_8$ quantum Hall states.

We note that the group homomorphism $\text{MCG}(M^{d+1}) \to \Zb$ is additive
under the stacking $\boxtimes$ operation. By comparing \eqn{Zom1} and
\eqn{Zom2}, we find that
\begin{thm} 
For any \lBF{d+1} category $\EC_{d+1}$, there always exists
an invertible \lBF{d+1} category $\EC^\text{invertible}_{d+1}$, such that
the partition function $Z_0(\cdot)$ for the combined
\lBF{d+1} category $\EC_{d+1}\boxtimes \EC^\text{invertible}_{d+1}$
has vanishing winding numbers.
\end{thm}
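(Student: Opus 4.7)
The plan is to exploit three structural facts assembled earlier in the paper: (i) the winding numbers of $Z_0(\cdot)$ form an integer-valued group homomorphism $\mathrm{MCG}(M^{d+1})\to\Zb$ and are fully captured, via \eqn{Zom1}, by the quantized coefficients $\ka_{gCS}$ of the genuine gravitational Chern-Simons terms $\om_{d+1}$ (those whose underlying Pontryagin polynomial $P_{n_1n_2\cdots}$ takes nonzero values on some mapping torus); (ii) under the stacking operation $\boxtimes$, partition functions multiply, so the winding-number map from $\BF_{d+1}$ categories to $\Hom(\mathrm{MCG}(M^{d+1}),\Zb)$ is a monoid homomorphism and hence winding numbers add; (iii) by comparing with \eqn{Zom2} and its higher-dimensional analogues, invertible $\BF_{d+1}$ categories (such as the $E_8$ state in $d{+}1=3$) realize these quantized Chern-Simons terms and so generate all integer winding numbers in the relevant dimensions $d+1=4k+3$.

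First I would show that for a given $\EC_{d+1}$ only finitely many winding numbers are needed: the winding-number homomorphism factors through the finite-rank free abelian group generated by those Pontryagin polynomials $P_{n_1n_2\cdots}$ whose integrals over some mapping torus are nonzero. Call the resulting tuple of integers $(\nu_1,\dots,\nu_r)$. Next I would invoke the classification of invertible topological orders sketched in Section \ref{invTop}: in dimension $d+1=4k+3$ the group of invertible $\BF_{d+1}$ categories contains a $\Zb^{\oplus r}$ subgroup detected precisely by these winding numbers (the $E_8$ state witnessing $r=1$ for $d+1=3$, and the $\Zb\oplus\Zb$ class for $d+1=7$). In dimensions $d+1\neq 4k+3$ there are no nontrivial winding numbers to cancel and the statement is vacuous, so we may take $\EC^\text{invertible}_{d+1}=\one_{d+1}$.

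With these ingredients in place, the main step is constructive: choose $\EC^\text{invertible}_{d+1}$ to be the stacking $\bigboxtimes_{a=1}^{r}(\EC^\text{inv}_a)^{\boxtimes(-\nu_a)}$, where $\EC^\text{inv}_a$ is the generator of the $a$-th $\Zb$-summand of invertible $\BF_{d+1}$ categories (and negative powers mean stacking with the time-reversed dual $\overline{\EC^\text{inv}_a}$, using Lemma~\ref{lem:dual-Z}). By additivity of winding numbers under $\boxtimes$, the combined partition function for $\EC_{d+1}\boxtimes \EC^\text{invertible}_{d+1}$ has winding number $\nu_a-\nu_a=0$ on every generator of $\mathrm{MCG}(M^{d+1})$, proving the theorem.

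The main obstacle, and the step where one must be careful, is the claim that invertible $\BF_{d+1}$ categories fully saturate the integer classes of winding numbers in every $d+1=4k+3$. For $d+1=3$ this is the well-known fact that $c_R-c_L$ must be a multiple of $8$ and the $E_8$ state realizes the minimal value, but for higher $k$ this relies on the classification of the invertible monoid as $\Zb\oplus\Zb$ in $6{+}1$D and, more generally, on matching the rank of the Pontryagin-winding lattice with the rank of invertible orders. I would therefore treat this matching as an input from Section~\ref{invTop} and from \Ref{FT1292,F14,freed2014}, so that the present argument is purely a cancellation statement built on top of that classification.
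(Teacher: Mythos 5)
Your proposal follows essentially the same route as the paper's: winding numbers are additive under $\boxtimes$, they are carried entirely by the quantized gravitational Chern--Simons terms of Eq.~(\ref{Zom1}), and the invertible \lBF{d+1} categories of Eq.~(\ref{Zom2}) and Section~\ref{invTop} realize exactly these terms, so one cancels by stacking with suitable (time-reversed) powers of the generators. One factual slip: in $d+1=3$ the only Pontryagin polynomial is $p_1$, which computes the signature and therefore vanishes on every mapping torus, so the $E_8$ state has \emph{zero} winding numbers (the paper says this explicitly right after the theorem) and does not witness $r=1$; the $2{+}1$D case is vacuous, and more generally the signature combination must be excluded from your rank count in each $d+1=4k+3$. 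Your explicit flagging of the remaining saturation question --- whether invertible orders realize every integer winding-number class achievable by generic \lBF{d+1} categories --- is a fair reading of what the paper compresses into ``comparing Eq.~(\ref{Zom1}) and Eq.~(\ref{Zom2})''.
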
 \noindent
We like to stress that having vanishing winding numbers does not imply the
partition function must be constant locally.  In fact, the $E_8$ quantum Hall
state is an example that the  partition function has zero winding numbers
(since the Pontryagin number for $p_1$ is always zero for mapping torus), but
the partition function is not a constant due to the non-zero thermal Hall
effect.

\subsection{How to probe and measure the exact \lBF{}  categories described by
non-fixed-point path integrals}  \label{sec:detect-excitation}

We know that an exact \lBF{} can be described by a topological path integral
that is independent of retriangulation of space-time and independent of local
change of space-time metrics (\ie $Z_0(M^{d+1})$ is constant on $\cM_{M^{d+1}}$
locally).  Such a topological path integral is a fixed-point of the
renormalization group transformation.

For a non-fixed-point path integral that describes an exact \lBF{}  category,
it will flow to a fixed-point path integral under renormalization group
transformation.\cite{LN0701,GW0931} Since the renormalization group
transformation change the volume of the space-time, the fixed-point path
integral has no volume dependent part and correspond to the volume-independent
partition function $Z_0(M^{d+1})$.  The fixed-point path integral should be
closely related to the topological path integral:
\begin{conj} 
The topological path integral that describes
an exact \lBF{} category coincide with the
volume-independent part $Z_0(M^{d+1})$ of the partition function that realizes
the \lBF{} category.  Thus we can use the volume-independent part of the
partition functions, $Z_0(M^{d+1})$, to probe the topological orders described
by exact \lBF{} categories.
\end{conj}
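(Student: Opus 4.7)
The plan is to establish this conjecture by combining the tensor network renormalization group (TNRG) picture of \Ref{LN0701,GW0931} with the volume/volume-independent factorization of the partition function already derived in this section. First, I would write a generic partition function realizing an exact \lBF{d+1} category as $Z(M^{d+1}) = \ee^{\sum_k c_k N_k(M^{d+1})} Z_0(M^{d+1}) [1 + O(1/N_k)]$, where $N_k$ counts $k$-cells. The strategy is to run TNRG on the tensor network defining $Z$: each coarse-graining step replaces a block of tensors by a single renormalized tensor, thereby rescaling all $N_k$, and by assumption of exactness the bond dimensions remain finite along the flow so that a fixed-point tensor set $T^\ast_{d+1}$ is reached.

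Next, I would track the partition function under this flow. The non-universal exponential prefactor $\ee^{\sum_k c_k N_k}$ gets absorbed into (i) the normalization of the renormalized tensors and (ii) a rescaling of the lattice spacing; schematically, one writes $Z^{(n)}(M^{d+1}) = \ee^{\sum_k c_k^{(n)} N_k^{(n)}(M^{d+1})} Z_0^{(n)}(M^{d+1})$ after $n$ RG steps, and shows that $c_k^{(n)} \to 0$ (after being subtracted into the tensor normalization at each scale), while $Z_0^{(n)}(M^{d+1})$ is RG-invariant because it is independent of the volume (the only quantities it can depend on are topological or the local shape-dependent Euler/Pontryagin terms of \eqref{eupon}, which are invariant under local coarse-graining). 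Consequently $\lim_{n\to\infty} Z^{(n)}(M^{d+1}) = Z_0(M^{d+1})$, and the limiting tensor set $T^\ast_{d+1}$ produces a partition function that is re-triangulation invariant, hence a topological path integral in the sense of the definition at the beginning of Section \ref{TNeBF}.

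The third step is to verify that this fixed-point topological path integral realizes the \emph{same} exact \lBF{d+1} category as the original one. Here I would invoke the fact that any two gapped Hamiltonians in the same phase are connected by a LU transformation (Section \ref{topdef}), combined with the Lagrangian counterpart: a finite-depth TN disentangler relates the original TN to $T^\ast_{d+1}$, and such a relation preserves the fusion and braiding data of the topological excitations. Using Conjecture \ref{LBF categoryeq}, which identifies exact \lBF{} categories with connected components of $V^L_{Z(M)}$, together with the stability criterion $|Z_0(S^1\times S^d)|=1$ from Conjecture \ref{Zstable}, one concludes that $Z_0(M^{d+1})$ and $T^\ast_{d+1}$ sit in the same connected component and therefore define the same \lBF{d+1} category.

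The main obstacle will be the first half of step two: rigorously showing that the non-universal volume-dependent prefactor can be completely removed by finite-dimensional TNRG without generating additional topological phases or obstructions, i.e.\ controlling the irrelevant $O(1/N_k)$ corrections and proving that the RG flow converges. This is essentially the content of the stability assumption, and it is where framing anomalies and the gravitational Chern--Simons ambiguity of \eqref{eupon} might in principle enter; handling these will require treating the gCS pieces separately (as done throughout the paper) and possibly restricting to the quotient by gCS anomalous topological orders, after which the equality $Z^\ast = Z_0$ should hold up to the expected $W^{\chi(M^{d+1})} \ee^{\ii \sum \phi_{n_1n_2\cdots} \int P_{n_1n_2\cdots}}$ redundancy of Conjecture \ref{ZZpEq}.
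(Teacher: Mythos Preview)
The paper does not prove this statement: it is explicitly labeled a \emph{Conjecture}, and immediately after stating it the paper only remarks that ``this conjecture has lead to some related researches and is confirmed for simple exact \lBF{} categories'' with citations to \Ref{HW1339,MW1418,HMW1457}. There is no argument in the paper beyond this sentence. So there is nothing to compare your proposal against---you are attempting to supply a justification that the authors deliberately left open.

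As for the content of your sketch: the physical picture you outline (TNRG flows the non-universal volume factors away and the fixed-point tensor network reproduces $Z_0$) is exactly the heuristic the authors have in mind, and is consistent with the surrounding discussion in Sections~\ref{TNeBF} and~\ref{PMtopo}. However, your argument is circular in a way you should flag more clearly: steps two and three invoke Conjectures~\ref{Zstable}, \ref{LBF categoryeq}, and \ref{ZZpEq} as inputs, so at best you are showing that this conjecture follows from those other conjectures, not establishing it independently. The genuine technical obstruction you identify---proving convergence of finite-bond-dimension TNRG to a re-triangulation-invariant fixed point without generating spurious topological terms---is precisely why the statement remains a conjecture rather than a theorem, and the paper makes no claim to have resolved it.
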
\noindent
This conjecture has lead to some related researches and is confirmed for simple
exact \lBF{} categories.\cite{HW1339,MW1418,HMW1457} Since the topological
path integral is re-triangulation invariant, we see that $Z_0(M^{d+1})$ is not
only independent of volume, it is also independent of shape. It only depends on
the topology of $M^{d+1}$.  Therefore, the topological partition function
$Z_0(M^{d+1})$ is a topological invariant for $d+1$-manifold $M^{d+1}$, and
different \lBF{d+1} categories give different topological invariants for
$M^{d+1}$.  In 2+1D,  the topological invariants from exact \lBF{3} categories
are the Turaev-Viro invariants for 3-manifolds.\cite{TV9265}

We have introduced several related concepts, non-fixed point partition
functions $Z(M^{d+1})$ of local bosonic path integral, volume-independent
partition functions $Z_0(M^{d+1})$, topological partition functions
$Z_\text{top}(M^{d+1})$ (assumed to be stable here), and closed/exact \lBF{d+1}
categories.  We will summarize their relations here.  We first note that all of
them are monoids under the stacking operation $\boxtimes$. Thus we can describe
their relations using surjective monoid homomorphisms
\begin{align}
& \text{non-fixed point partition functions
$Z(M^{d+1})$} 
\twoheadrightarrow
\nonumber\\
& \text{volume-independent partition
functions $Z_0(M^{d+1})$}
\twoheadrightarrow
\nonumber\\
&\text{closed \lBF{d+1} categories} 
\end{align}
The reduction from non-fixed point partition functions $Z(M^{d+1})$ to
volume-independent partition functions $Z_0(M^{d+1})$ is the renormalization
group flow.  Volume-independent partition functions $Z_0(M^{d+1})$ may have
non-zero winding numbers that force them to have a non-trivial dependence on the
metrics of space-time (via the gravitational Chern-Simons terms).
The relation between volume-independent partition functions $Z_0(M^{d+1})$
and the closed \lBF{d+1} categories is many-to-one.

We also have a short exact sequence
\begin{align}
\label{ZElBF}
1\to & \{
\text{
$ W^{\chi(M)}
\ee^{\ii \sum_{\{n_i\}} \phi_{n_1n_2\cdots} \int_M P_{n_1n_2\cdots}}$
} \}
 \to
\nonumber\\
&\text{topological partition functions
$Z_\text{top}(M^{d+1})$} \to
\nonumber\\
&\text{exact \lBF{d+1} categories} \to 1
\end{align}
The relation between volume-independent partition functions $Z_0(M^{d+1})$ and
exact \lBF{d+1} categories is one-to-one only if we mod out the factor like
$W^{\chi(M)} \ee^{\ii \sum_{\{n_i\}} \phi_{n_1n_2\cdots} \int_M
P_{n_1n_2\cdots}}$.
Last, we have
\begin{align}
&1\to \text{topological partition functions
$Z_\text{top}(M^{d+1})$} \to
\nonumber\\
& \text{volume-independent partition
functions $Z_0(M^{d+1})$.}  \nonumber
\end{align}

\subsection{How to probe and measure the closed \hBF{}  categories }

A closed \lBF{d+1}  category is described by a local bosonic path integral that
is required to be well defined for arbitrary space-time $M^{d+1}$, while a
closed \hBF{d+1}  category is described by a local bosonic Hamiltonian that is
required to be well defined for arbitrary space $\Si^d$.  
Since closed \hBF{d+1}  categories are gapped, we require the Hamiltonian on
a closed space $\Si^d$ to be gapped, whose degenerate ground states form a
finite dimensional vector space $V$ which is a subspace of the total Hilbert space $H_{\Si^d}$ of the boson system.  Let $\cM_{\Si^d}$ be the moduli space for
closed space $\Si^d$ with different metrics and $\cM$ the disjoint union of these moduli spaces.  We see that we have a ground-state
vector space $V$ for every point in $\cM_{\Si^d}$. Therefore, for each $\Si^d$, a closed
\lBF{d+1} category gives us a complex vector bundle on $\cM_{\Si^d}$, which is a sub-bundle of the trivial bundle $\cM_{\Si^d} \times H_{\Si^d}$.
\begin{conj} 
The complex vector bundle of degenerate ground states on $\cM$ 
may fully characterize the closed \hBF{d+1} category.
\end{conj}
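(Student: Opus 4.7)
The plan is to establish the conjecture in two steps: (i) show that equivalent topological orders give rise to isomorphic collections of ground-state bundles $\{V \to \cM_{\Sigma^d}\}$, and (ii) show the converse, that this collection of bundles contains enough data to reconstruct the full closed $\hBF_{d+1}$ category, i.e.\ the fusion and braiding of all elementary topological excitations (and the universal perturbative gravitational responses, which are already visible in the bundle's Berry curvature). The easy direction (i) follows from the fact that an LU transformation in the sense of Section~\ref{topdef} can be performed fiberwise in a metric-independent way: two Hamiltonians related by such a transformation give bundles related by a unitary bundle isomorphism varying continuously over $\cM_{\Sigma^d}$, so the bundle data is a topological invariant of the phase.

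For the hard direction (ii), the first step is to equip each bundle $V \to \cM_{\Sigma^d}$ with its natural adiabatic (Berry) connection; its holonomy around loops encodes the projective representation of $\mathrm{MCG}(\Sigma^d) = \pi_0(G_\text{homeo}(\Sigma^d))$ on the degenerate ground space, while its curvature in the $\cM_{\Sigma^d}$-directions encodes the universal perturbative gravitational responses discussed in Section~\ref{sec:univ-grav-response}. Next, I would generalize from closed $\Sigma^d$ to $\Sigma^d$ with excitations inserted: working with a family of Hamiltonians $H_0 + \Delta H$ as in Definition~\ref{def:p-excitation}, the lowest-energy subspace furnishes a generalized fusion space $\cV^F(\Sigma^d,\cT_{i_1 i_2\cdots}; i_1,i_2,\cdots)$, and letting the positions, shapes and metrics vary produces a bundle over an enlarged moduli space. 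The claim to verify is that these ``decorated'' bundles can be recovered from the original collection $\{V \to \cM_{\Sigma^d}\}$ by a limiting procedure: excitations of type $i$ are created by pinching $\Sigma^d$ along a small sphere and choosing a limit of metrics that sends that sphere's radius to zero, which in moduli space corresponds to a boundary stratum; boundary conditions labelling $i$ are then detected as super-selection sectors of the bundle fiber in this limit. This is the categorical content of the gluing axioms of a fully extended TQFT, and the key formal input is the cobordism-type reconstruction used in Remark~\ref{LHclosed}.

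With decorated fusion spaces and their bundles in hand, I would then recover the fusion tensors $N_{ij}^k$ by dimension counting as in Section~\ref{uniprop}, and recover the braiding by identifying the monodromy around loops in the configuration moduli that exchange two excitations. Together with the already-recovered $\mathrm{MCG}$ representations and perturbative gravitational responses, Conjecture~\ref{conj:ele-topological-order} then implies that the $\BF_{d+1}$ category is determined. Finally, to promote ``determined'' to ``determined as a closed $\hBF_{d+1}$ category'', one invokes Conjecture~\ref{msta} together with the fact that nontrivial braiding between every nontrivial elementary excitation is automatic once the bundles on all $\Sigma^d$ (including $\Sigma^d = S^d$, which controls the vacuum) are specified.

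The main obstacle is step (ii), specifically the reconstruction of the extended, excitation-decorated bundles from the bare ones on closed $\Sigma^d$. In 2+1D this is essentially known because UMTCs are determined by their mapping-class-group representations on higher-genus surfaces (up to the $E_8$ ambiguity of diagram~\eqref{diag:E8-umtc}), but in higher dimensions the analogous statement requires a full theory of extended TQFTs for $\hBF$ categories, which is exactly what we do not yet have in the generality assumed here. I expect that making this step rigorous will require either the precise definition of $\hBF_n$ categories promised in Section~\ref{sec:math-def} together with an analogue of the cobordism hypothesis, or the factorization-algebra framework alluded to in Remark~\ref{rema:center2=0}. For this reason the statement is phrased as a conjecture (``may fully characterize'') and a complete proof is left to future work, where the open-space/decorated versions of the bundles should be built in as part of the data.
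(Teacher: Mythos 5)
The statement you are addressing is presented in the paper as a \emph{conjecture}, and the paper supplies no proof of it: the surrounding text only offers heuristic support (the monodromy of the bundle gives projective representations of $\text{MCG}(\Si^d)$ generalizing the non-Abelian geometric phases of \Ref{Wrig,KW9327}, the determinant line bundle and its Chern numbers encode the gravitational Chern--Simons responses via \eqn{ChPon}, and in 2+1D the MCG representations determine the UMTC up to the $E_8$ ambiguity of \eqn{diag:E8-umtc}). Your plan is consistent with all of that, and you correctly identify that the statement cannot currently be proved; in that sense you and the paper end up in the same place.

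Where you go beyond the paper is step (ii), the reconstruction of the excitation-decorated fusion spaces from the bare bundles over closed $\Si^d$ via a pinching/degeneration of metrics toward a boundary stratum of moduli space. The paper never attempts this, and it is the genuinely hard and genuinely questionable step: for a gapped lattice Hamiltonian, degenerating the metric on $\Si^d$ does not obviously produce the super-selection sectors of Definition~\ref{def:p-excitation}, which are defined by local modifications $H_0+\Del H$ supported on a submanifold rather than by limits of metrics, and the moduli space $\cM_{\Si^d}$ as the paper defines it (metrics of fixed topology) has no such boundary strata built in. Invoking the gluing axioms of a fully extended TQFT here is circular in the H-type setting, since Remark~\ref{LHclosed} explicitly distinguishes \hBF{} categories (path integral required only on mapping tori) from fully extended theories, and the extension theorem cited there is exactly what one would need to prove first. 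So your sketch is a reasonable and somewhat more detailed research program than the paper offers, but the step that would upgrade ``collection of MCG representations plus gravitational responses'' to ``full fusion and braiding data of elementary excitations'' remains the unproven core, just as it does in the paper; your final paragraph acknowledges this correctly.
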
\noindent
We note that, $\pi_1(\cM_{\Si^d})=\text{MCG}(\Si^{d})$.  Along a loop $g$ in
$\pi_1(\cM_{\Si^{d}})$, the fiber bundle gives us a monodromy $U(g)$ which is a
unitary matrix acting on the ground state vector space $V$.  We may view $g$ as
an element in the group $\text{MCG}(\Si^{d})$.  So $U(g)$ gives an projective
representation of $\text{MCG}(\Si^{d})$.

To understand why we only get a projective representation, we note that the
topological robustness of the ground state degeneracy implies that the unitary
matrix $U_0$ for contractible loop must be a pure over-all phase (which can be
path dependent),  so that $U_0$ cannot distinguish (or split) the degenerate
ground states.  Similarly, $U(g)$ may also have a path-dependent over-all
phase, which leads to the projective representation of $\text{MCG}(\Si^{d})$.
We also like to mention that the trace of $U(g)$ is the partition function on
the corresponding mapping torus:
$$
 \Tr \, U(g) =Z_0(\Si^{d} \rtimes_g S^1) .
$$
As a result, we obtain
$$
 |Z_0(\Si^{d} \times S^1)| =\text{ground state degeneracy on }\Si^d .
$$

For space with different topologies, we will get different projective
representations.  Those finite dimensional projective representations are the
non-Abelian geometric phases of the degenerate ground states introduced in
\Ref{Wrig,KW9327}.  Certainly, the non-Abelian geometric phases contain more
information than the projective representations.  They contain all the
information about the vector bundle $\cE_{\Si^d}$ on $\cM_{\Si^d}$, and thus fully characterize the
closed \hBF{d+1} category.

If the vector bundle $\cE_{\Si^d}$ is not flat, the partition function on mapping torus
$Z_0(\Si^{d} \rtimes_g S^1)$ cannot be topological.  It will depend on the
metrics of the space-time $\Si^d \rtimes_g S^1$.  It is very strange since the
bosonic system has short range correlation and a finite energy gap.  In the
thermal dynamical limit, the space-time becomes flat, and bosonic system should
not be able to sense the geometry of the space-time.  The fact that the
partition function does depend on the metrics of the space-time means that the
entanglement in the ground state can still sense the geometry of the space in
the flat limit.  We like to link such a geometry sensitivity to the gapless
nature of boundary excitations and entanglement spectrum:
\begin{conj} 
\label{flatVB}
The boundary of a closed \hBF{d+1} category is gappable iff the ground state
vector bundle $\cE_{\Si^d}$ over $\cM_{\Si^d}$ is flat.
\end{conj}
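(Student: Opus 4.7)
\medskip
\noindent\textbf{Proof proposal.}
My plan is to prove the two directions of the conjecture separately, using the topological-path-integral characterization of exact categories developed in Section \ref{TNeBF} together with the relation $\Tr U(g) = Z_0(\Si^d \rtimes_g S^1)$ emphasized just before the conjecture.

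For the easy direction, suppose the boundary is gappable, i.e.\ $\EC_{d+1}$ is exact. By the conjecture at the end of Section \ref{TNeBF}, $\EC_{d+1}$ is realized by a stable topological path integral, so the volume-independent partition function $Z_0(M^{d+1})$ is a genuine topological invariant of $M^{d+1}$; in particular it is metric-independent. Applied to a mapping torus $\Si^d \rtimes_g S^1$, this gives that $\Tr U(g)$ depends only on the isotopy class $[g] \in \text{MCG}(\Si^d)$, and more generally the full partition function on $\Si^d \rtimes_\gamma S^1$ where $\gamma$ is any loop in $\cM_{\Si^d}$ depends only on the homotopy class of $\gamma$. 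To upgrade this from traces to the full monodromy $U(\gamma)$, I would use the insertion of topological excitations: by placing pairs of topological defects that span a basis of $V$ and computing the corresponding path integrals on $(\Si^d \rtimes_\gamma S^1)$ with defect insertions (still metric-independent, since the path integral is topological), one obtains every matrix element of $U(\gamma)$ as a topological invariant. Hence $U(\gamma)$ depends only on $[\gamma] \in \pi_1(\cM_{\Si^d})$, which is the very definition of a flat connection on $\cE_{\Si^d}$.

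For the harder direction, suppose $\cE_{\Si^d}$ is flat for every closed $\Si^d$. Flatness gives a well-defined projective representation $U: \text{MCG}(\Si^d) \to \PU(V)$, and therefore a well-defined topological partition function on every mapping torus $\Si^d \rtimes_g S^1$ via $\Tr U(g)$. I would then argue in two steps. First, I would extend these invariants from mapping tori to all closed orientable $(d+1)$-manifolds: every such manifold admits a Heegaard-type decomposition into pieces glued along mapping-class-group elements, and the flatness plus naturality under dimensional reduction (Section \ref{sec:boundary-bulk-relation}) gives gluing formulas, so the mapping-torus data determines a candidate $Z_0(M^{d+1})$ on all closed $(d+1)$-manifolds. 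Second, I would show that this $Z_0$ is realized by a stable topological path integral, i.e.\ by a tensor network of the type in Section \ref{TNeBF}: unitarity plus flatness ensure the candidate invariants satisfy Pachner-move identities at the level of projective representations, so after renormalization one lands at a fixed point with $|Z_0(S^1\times S^d)|=1$ (the GSD on $S^d$ is finite-dimensional and the conjecture at the end of Section \ref{TNeBF} identifies this with stability). The existence of such a stable topological path integral then gives, via the construction in Section \ref{TNeBF}, a gapped boundary for $\EC_{d+1}$.

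The main obstacle will be the second step of the reverse direction: passing from the flat monodromy data (which lives only over loops in $\cM_{\Si^d}$) to an honest topological path integral on all $(d+1)$-manifolds, including those not of mapping-torus type. In low dimensions one can appeal to the theorem of \Ref{W06,morrison-walker-1108,morrison-walker-1009} quoted in Remark \ref{LHclosed} that any finite unitary x-$(n-1)$-$\cdots$-$0$ extended TQFT extends to a fully extended one, which would close the argument; in higher dimensions this extension statement is exactly what is not yet proved, and I expect the proof will have to be conditional on a higher-dimensional analogue of that result (or on the conjectural $\BF$-categorical framework of Section \ref{sec:math-def}). The forward direction, by contrast, I expect to be essentially rigorous modulo the conjectural identification of exactness with existence of a topological path integral.
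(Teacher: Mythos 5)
The first thing to say is that the paper does not prove this statement: it is presented as a conjecture, and the only support offered is the heuristic paragraph immediately preceding it, namely that a non-flat $\cE_{\Si^d}$ forces $Z_0(\Si^d\rtimes_g S^1)$ to depend on the space-time metric, which the authors ``link'' to gaplessness of the boundary and of the entanglement spectrum. The forward direction (``gappable $\Rightarrow$ flat'') is asserted again without proof in the following subsection (``For an exact \hBF{} category, the ground state vector bundle is always flat''), and the reverse direction is given no argument at all. So there is no proof in the paper to compare yours against; what can be assessed is whether your proposal actually closes the gap the authors left open.

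It does not, and you have correctly located where it fails. Your forward direction is a reasonable conditional sketch: it rests on the (itself conjectural) identification of exact categories with stable topological path integrals from Section \ref{TNeBF}, and on the further claim that defect-decorated partition functions on $\Si^d\rtimes_\gamma S^1$ pin down the full holonomy $U(\gamma)$ \emph{including its overall phase} -- which is exactly the delicate point, since the paper's own analysis shows the entire obstruction to flatness lives in the determinant line bundle and is measured by Pontryagin numbers of $\Si^d\rtimes B$ (equivalently, gravitational Chern--Simons terms). Your argument is consistent with that picture but does not add rigor beyond it. The reverse direction is where the conjecture has real content, and your two-step plan (flat monodromy data $\to$ topological invariants of all closed $(d+1)$-manifolds $\to$ stable topological path integral $\to$ gapped boundary) founders at exactly the step you flag: there is no construction that promotes a flat bundle over the moduli spaces $\cM_{\Si^d}$ to a local tensor-network path integral, and the extension theorem of Remark \ref{LHclosed} that you invoke is only established in low dimensions and is itself part of what would need to be proved. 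Moreover, even granting such an extension, one still needs the Heegaard-type gluing formulas you posit, which presuppose assigning data to manifolds with boundary -- information not contained in the flat bundle over closed-space moduli. In short, your proposal faithfully reproduces the physical reasoning behind the conjecture and honestly isolates the missing ingredient, but it should be understood as a proof strategy conditional on several other open conjectures of the paper rather than as a proof.
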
\noindent

What is the obstruction that prevent the vector bundle to be flat?  First, for a
contractible loop $g$, $U(g)$ is a pure $U(1)$ phase. So the
non-flat part is only contained in the $U(1)$ phase of the complex vector
bundle.  We can examine it by considering the determinant bundle $\cE^\text{det}_{\Si^d}$ of the vector
bundle $\cE_{\Si^d}$, which is a complex line bundle over $\cM_{\Si^d}$.
Let us consider closed submanifold $B \subset \cM_{\Si^d}$.
Then, Chern number of the line bundle $\cE^\text{det}_{\Si^d}$ on $B$ should be given by a
certain Pontryagin number on $\Si^d\rtimes B$:
\begin{align}
\label{ChPon}
 \int_B C = \int_{\Si^d\rtimes B} P_{n_1n_2\cdots}
\end{align}
due to some localness consideration.  Here $\Si^d\rtimes B$ is a fiber bundle
with the space $\Si^d$ as the fiber and $B$ as the base manifold.  We see that
the Pontryagin classes in all dimensions could be the obstructions to have a
flat vector bundle $\cE_{\Si^d}$.

Let us consider an example of 2+1D theory whose gravitational response contain
the gravitational Chern-Simons term:
\begin{align}
 Z_0(\Si^2\rtimes S^1) = 
\ee^{\ii \frac{2\pi c}{24} \int_{\Si^2\rtimes S^1} \om_{3} }
\end{align}
where $c$ is the chiral central charge of the edge states.
For such a theory, the Chern number in \eqn{ChPon} is given by
\begin{align}
 \int_{B^2} C = \frac{c}{24} D_g \int_{\Si^2\rtimes B^2} p_1 = \text{integer},
\end{align}
for any surface bundle $\Si^2\rtimes B^2$, where $D_g$ is the ground state
degeneracy on $\Si^2$,  and $g$ is the genus of $\Si^2$.  

Since $\int_{\Si^2\rtimes B^2} p_1 \neq 0$ for some surface bundle, $\int_{B^2}
C \neq 0$ for some $B$ and the vector bundle $\cE_{\Si^d}$ is not flat if
$c\neq 0$.  So the appearance of the gravitational Chern-Simons term implies
the gapless edge excitations.  

It was shown that $\int_{\Si^2\rtimes B^2} p_1 =0$ mod 12 for any orientable
surface bundles.\cite{CFT1275,GMT0759} 
If the genus of the fiber $\Si^2$ is less than 2, then  $\int_{\Si^2\rtimes
B^2} p_1 =0$.\cite{CFT1275,E0612}
If the genus of the fiber $\Si^2$ is greater than 2, then
we can always find a base manifold $B^2$ with a genus equal or less than 111,
such  that there is a surface bundle $\Si^2\rtimes B^2$ with $\int_{\Si^2\rtimes
B^2} p_1 =\pm 12$.\cite{E9815}  Thus 
\begin{thm} 
For a 2+1D gapped quantum liquid (\ie a closed \hBF{3} category), 
the chiral central charge of the edge state is quantized 
as $c D_g/2=$ an integer, for each $g>2$.  
\end{thm}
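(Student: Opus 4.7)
The plan is to deduce the quantization directly from the integrality of Chern numbers of the determinant line bundle, combined with the cited classification of first Pontryagin numbers of orientable surface bundles over surfaces.

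First I would fix a genus $g > 2$ and consider the determinant line bundle $\mathcal{E}^{\det}_{\Sigma^2}$ over the moduli space $\mathcal{M}_{\Sigma^2}$. For any closed oriented 2-manifold $B^2 \subset \mathcal{M}_{\Sigma^2}$, the pair $(\Sigma^2, B^2)$ assembles into an oriented surface bundle $\Sigma^2 \rtimes B^2$, and the localness relation \eqref{ChPon} specializes (given that $Z_0$ contains the gravitational Chern--Simons term with coefficient $\frac{2\pi c}{24}$ and that the fiber degeneracy is $D_g$) to
\begin{align}
\int_{B^2} C \;=\; \frac{c}{24}\, D_g \int_{\Sigma^2 \rtimes B^2} p_1.
\end{align}
Since $\mathcal{E}^{\det}_{\Sigma^2}$ is a genuine complex line bundle over $B^2$, its first Chern number $\int_{B^2} C$ must be an integer for every choice of $B^2$.

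Next I would invoke the cited result of Endo \cite{E9815}: for every genus $g > 2$ there exists a base surface $B^2$ (of genus at most $111$) and an oriented $\Sigma^2$-bundle over it with $\int_{\Sigma^2 \rtimes B^2} p_1 = \pm 12$. Plugging this particular bundle into the displayed identity yields
\begin{align}
\pm\,\frac{c\, D_g}{2} \;=\; \int_{B^2} C \;\in\; \mathbb{Z},
\end{align}
which is exactly the claimed quantization $c D_g/2 \in \mathbb{Z}$.

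The only subtle step is the one I am taking as input, namely the existence of oriented surface bundles realizing $p_1 = \pm 12$ in every genus $g > 2$; the bound $p_1 \equiv 0 \pmod{12}$ from \cite{CFT1275,GMT0759} shows that $12$ is the best divisibility one can hope for, so the resulting quantization $c D_g/2 \in \mathbb{Z}$ is sharp and cannot be improved by this method. The remaining work is conceptual rather than computational: justifying \eqref{ChPon} in the present setting (i.e.\ that the determinant bundle of the ground-state bundle picks up exactly the gravitational Chern--Simons contribution on a mapping torus fibered over $B^2$), which follows from the standard adiabatic computation of Berry curvature from $Z_0(\Sigma^2 \rtimes_g S^1)$ together with the locality of the partition function. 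I expect this identification to be the main obstacle to making the argument fully rigorous, whereas the integrality step itself is then immediate.
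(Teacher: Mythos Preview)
Your proposal is correct and follows essentially the same argument as the paper: the paper likewise computes the Chern number of the determinant line bundle as $\frac{c}{24} D_g \int_{\Sigma^2 \rtimes B^2} p_1$, invokes the integrality of Chern numbers, and then appeals to Endo's result that for fiber genus $g>2$ one can realize $\int p_1 = \pm 12$ to conclude $c D_g/2 \in \Zb$. Your remark that the identification \eqref{ChPon} is the non-rigorous input is also in line with the paper, which treats that relation as a locality-motivated conjecture rather than a proven fact.
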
\noindent
\begin{app} 
For a bosonic quantum Hall state with one branch of edge mode (\ie
$c=1$), the ground state degeneracy $D_g$ must be even for $g>2$.  
\end{app} 
\begin{app} 
For closed
\hBF{3} categories with fusion rule $i\otimes j = \oplus_k N^k_{ij} k$, the
ground state degeneracy $D_g$ is given by\cite{BW0932}
\begin{align}
 D_g=\sum_i (N_i N_{\bar i})^{g-1}
\end{align}
where $\bar i$ is the antiparticle of $i$ and the matrix $N_i$ is given by
$(N_i)^k_j=N_{ij}^k$.
For $\nu=1$ bosonic Pfaffian quantum Hall state, we have
\begin{align}
 N_1=\bpm
1 & 0 & 0\\
0 & 1 & 0\\
0 & 0 & 1\\
\epm,\ \
 N_\psi=\bpm
0 & 1 & 0\\
1 & 0 & 0\\
0 & 0 & 1\\
\epm, \ \
 N_\si=\bpm
0 & 0 & 1\\
0 & 0 & 1\\
1 & 1 & 0\\
\epm.
\end{align}
We find that $D_1=3$, $D_2=10$, $D_3=36$, $D_4=136$, $D_5= 528$, \etc.
Therefore the chiral central charge must be quantized as $c=0$ mod $1/2$, which
agrees with $c=3/2$.  We also see that $c D_g/2=$ integer is not valid for
$g=2$. This allows us to prove that 
\begin{cor} for a 4-dimensional orientable
surface bundle $E$ with fiber of genus 2,
$\int_{E} p_1 =0$ mod 24 (or the signature is 0 mod 8).
\end{cor}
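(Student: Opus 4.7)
The plan is to derive the statement as a direct consequence of the integrality constraint
\[
\int_{B^2} C \;=\; \frac{c}{24}\,D_g \int_{\Sigma^2 \rtimes B^2} p_1 \;\in\; \mathbb{Z},
\]
established earlier in this section for every closed \hBF{3} category with chiral central charge $c$ and genus-$g$ ground-state degeneracy $D_g$, combined with the already cited topological fact that $\int_E p_1 \equiv 0 \pmod{12}$ for any orientable surface bundle.

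First, I would pick a specific closed \hBF{3} category whose values $(c, D_2)$ produce a numerically informative constraint, namely the $\nu=1$ bosonic Pfaffian quantum Hall state, for which the edge theory gives $c=3/2$ and the fusion-rule computation displayed just above yields $D_2 = 10$. Note that this choice is exactly the reason the remark just above this corollary insists that ``$c D_g/2 \in \mathbb{Z}$ is not valid for $g=2$'': the Pfaffian provides a counterexample at genus $2$, and it is precisely this failure that makes the resulting integrality statement non-trivial.

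Second, I would apply the integrality constraint to any orientable genus-$2$ surface bundle $E = \Sigma^2 \rtimes B^2$:
\[
\frac{c\,D_2}{24}\int_{E} p_1 \;=\; \frac{(3/2)\cdot 10}{24}\int_{E} p_1 \;=\; \frac{5}{8}\int_{E} p_1 \;\in\; \mathbb{Z}.
\]
Since $\gcd(5,8)=1$, this forces $\int_E p_1 \equiv 0 \pmod{8}$. Combining this with the known congruence $\int_E p_1 \equiv 0 \pmod{12}$ valid for all orientable surface bundles gives $\int_E p_1 \equiv 0 \pmod{\mathrm{lcm}(8,12)} = 0 \pmod{24}$. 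The equivalent statement for the signature follows from the Hirzebruch signature formula $\sigma(E) = \tfrac{1}{3}\int_E p_1$, which turns $\int_E p_1 \equiv 0 \pmod{24}$ into $\sigma(E) \equiv 0 \pmod{8}$.

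The main obstacle is really a sanity-check rather than a computational one: I need to make sure that the $\nu=1$ bosonic Pfaffian state is in fact a legitimate closed \hBF{3} category in the sense used here, so that the integrality of $\int_B C$ is genuinely applicable to it, and that the genus-$2$ degeneracy $D_2=10$ coming from the displayed fusion matrices $N_1, N_\psi, N_\sigma$ is the correct one entering that formula. Once these inputs are accepted, the rest of the argument is just the two-line arithmetic above plus the Chinese remainder step combining the $\bmod\,8$ and $\bmod\,12$ congruences.
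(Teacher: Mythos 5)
Your proposal is correct and is exactly the argument the paper intends: it uses the $\nu=1$ bosonic Pfaffian state (with $c=3/2$, $D_2=10$) as a physically realized closed \hBF{3} category so that $\frac{cD_2}{24}\int_E p_1 = \frac{5}{8}\int_E p_1 \in \Zb$ forces $\int_E p_1 \equiv 0 \pmod 8$, which combined with the cited $\bmod\ 12$ congruence for orientable surface bundles gives $\bmod\ 24$, and hence $\sigma(E)=\frac13\int_E p_1 \equiv 0 \pmod 8$. The sanity checks you flag (that the Pfaffian is a legitimate closed \hBF{3} category and that $D_2=10$ from the displayed fusion matrices) are precisely the physical inputs the paper takes for granted.
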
 
\end{app} 
\begin{app} 
The chiral central charge of invertible \hBF{3} category is quantized as $c=0$
mod 2, since $D_g=1$.  
However, at the moment, we do not know if the minimal chiral central
charge $c=2$ can be realized by an invertible \hBF{3} category.  In
contrast, the chiral central charge of invertible \lBF{3} category is quantized
as $c=0$ mod 8, where the minimal chiral central charge $c=8$ is realized by
the $E_8$ quantum Hall state.  
\end{app} 

If we have a fermionic system in 2+1D, both $\Si^d$ and $\Si^2\rtimes B^2$
should be chosen to be spin manifolds.  In this case $\int_{\Si^2\rtimes B^2}
p_1 =0$ mod 48 for any spin surface bundles.\cite{CFT1275,E0612} We find that 
\begin{thm} 
For fermionic  invertible topological orders,
the chiral central charge is quantized as $c=0$ mod 1/2.  
\end{thm}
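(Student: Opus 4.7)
The plan is to mimic the bosonic argument given just before the theorem, but replace the orientable cobordism bound $\int_{\Sigma^2 \rtimes B^2} p_1 \equiv 0 \pmod{12}$ by its spin refinement $\int_{\Sigma^2 \rtimes B^2} p_1 \equiv 0 \pmod{48}$ (cited from \Ref{CFT1275,E0612}), and to exploit that $D_g = 1$ for invertible orders so that the quantization of the Chern number of the determinant line bundle becomes a statement about $c$ alone.

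Concretely, I would proceed as follows. First, I would invoke Conjecture \ref{flatVB} together with the relation \eqn{ChPon} between the Chern number of the determinant line bundle $\cE^\text{det}_{\Si^2}$ and the Pontryagin number of the total space of a surface bundle $\Si^2 \rtimes B^2$. Since the system is fermionic, the total Hilbert space is only well-defined on spin manifolds, so both the fiber $\Si^2$ and the base $B^2$ should be taken to carry spin structures; hence the surface bundles entering \eqn{ChPon} must be spin surface bundles. Next, specializing the formula
\begin{align*}
\int_{B^2} C = \frac{c}{24} D_g \int_{\Si^2 \rtimes B^2} p_1
\end{align*}
to an invertible \hBF{3} category, where $D_g = 1$ for every genus, reduces the integrality condition to
\begin{align*}
\frac{c}{24} \int_{\Si^2 \rtimes B^2} p_1 \in \Zb
\end{align*}
for every spin surface bundle $\Si^2 \rtimes B^2$.

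Then I would plug in the sharp divisibility statement for spin surface bundles: $\int_{\Si^2 \rtimes B^2} p_1 \equiv 0 \pmod{48}$, and moreover the value $48$ is attained by some explicit spin surface bundle with fiber of sufficiently large genus (by the construction in \Ref{E9815} used in the bosonic case, upgraded to the spin setting using the cited spin bounds). Evaluating the integrality condition on this extremal bundle gives $\frac{c}{24}\cdot 48 = 2c \in \Zb$, i.e., $c \in \tfrac12 \Zb$, which is the claim. I would then remark that, conversely, the divisibility $48 \mid \int p_1$ prevents any stronger quantization from following by this argument alone, so the bound $c \in \tfrac12 \Zb$ is the best one obtainable from the determinant line bundle.

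The main obstacle I foresee is not the algebraic manipulation but the sharpness of the spin surface-bundle bound: one needs to know both that $48$ divides $\int_{\Si^2 \rtimes B^2} p_1$ for all spin surface bundles and that $48$ is actually realized. The first half is available from \Ref{CFT1275,E0612}; the second half requires exhibiting, or citing, an explicit spin surface bundle attaining $p_1 = 48$ (an analogue of the genus-$\le 111$ construction of \Ref{E9815} in the spin category). A secondary subtlety is checking that the argument is insensitive to the choice of spin structures on fiber and base, so that the Chern-number-Pontryagin-number identity \eqn{ChPon} really does apply with the coefficient $c/24$ unchanged; this should follow from the locality reasoning already used in the bosonic case, applied in the spin bordism category instead of the oriented one.
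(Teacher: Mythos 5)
Your argument is essentially the paper's own: restrict to spin surface bundles, combine the mod-48 divisibility of $\int_{\Sigma^2\rtimes B^2}p_1$ \cite{CFT1275,E0612} with $D_g=1$ for invertible orders in the Chern-number formula of Eq.~(\ref{ChPon}), and read off $2c\in\Zb$. If anything you are more explicit than the text, which quotes only the divisibility statement and leaves implicit the attainment of $\int p_1=\pm 48$ by some spin surface bundle --- the half of the argument that actually forces the quantization $c\in\tfrac12\Zb$ rather than merely showing that no stronger bound follows.
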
\noindent
The minimal chiral central charge $c=1/2$ can be realized by $p+\ii p$
superconductor, which contain no non-trivial topological excitations.

Next let us consider bosonic 1+1D topological orders (\ie closed \hBF{2}
categories).  Since $\text{MCG}(S^1)$ is trivial, $\cM_{\Si^1}$ is simply
connected.  Since the Pontryagin classes for circle bundle $S^1\rtimes B$ all
vanishes, the determinant bundle of the vector bundle $\cE^\text{det}_{\Si^1}$ over $\cM_{\Si^1}$ is flat.
Thus the vector bundle $\cE_{\Si^1}$ is flat, and the  vector bundle is trivial since
$\cM_{\Si^1}$ is simply connected.  Therefore, all bosonic closed \hBF{2}
categories are trivial (if we assume that all non-trivial closed \hBF{2}
categories have non-trivial vector bundle $\cE_{\Si^d}$).  

It appears that the vector bundle $\cE_{\Si^d}$ on $\cM_{\Si^d}$ is a high
resolution characterization of the closed \hBF{d+1} category.  The non-trivial
closed \hBF{d+1} category should lead to a non-trivial vector bundle
$\cE_{\Si^d}$.  On the other hand, since the structure of the vector bundle can
be so rich, it is very likely that not every allowed  vector bundle
$\cE_{\Si^d}$ on $\cM_{\Si^d}$ can be realized by  closed \hBF{d+1} categories.

\subsection{How to probe and measure the exact \hBF{}  categories }

For an exact \hBF{} category, the ground state vector bundle is always flat and
the partition function on mapping torus always topological.  The Conjecture
\ref{flatVB} implies the reverse: a flat vector bundle always correspond to an
exact \hBF{} category.  Also, for a flat vector bundle, the unitary matrices
$U(g)$ form a representation of the mapping class group $\text{MCG}(\Si^{d})$
which fully characterize the flat bundle. Thus
\begin{conj} 
An exact \hBF{d+1} category is fully characterized by a collection of
representations of the mapping class groups $\text{MCG}(\Si^{d})$ for various
spatial topologies.
\end{conj}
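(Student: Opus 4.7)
The plan is to reduce the stated conjecture to two earlier assertions in the paper plus a classical fact about flat bundles. The two earlier inputs are: (i) the conjecture that the collection of ground-state vector bundles $\cE_{\Si^d}\to \cM_{\Si^d}$ over all closed spatial topologies fully characterizes the closed \hBF{d+1} category, and (ii) Conjecture \ref{flatVB}, which identifies exactness of the category with flatness of each $\cE_{\Si^d}$. Modulo these inputs, only the classical Riemann--Hilbert-type statement remains: a flat Hermitian bundle over a connected base is determined up to isomorphism by its holonomy representation of the fundamental group of the base.

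Concretely, I would proceed in three steps. First, invoke Conjecture \ref{flatVB} to conclude that for an exact \hBF{d+1} category $\EC_{d+1}$, every bundle $\cE_{\Si^d}$ is flat. Second, apply the holonomy correspondence: a flat bundle on $\cM_{\Si^d}$ is reconstructed (up to gauge equivalence) from its monodromy $U_{\Si^d}:\pi_1(\cM_{\Si^d})\to U(V_{\Si^d})$, and by the definition $\text{MCG}(\Si^d)=\pi_0[G_\text{homeo}(\Si^d)]=\pi_1(\cM_{\Si^d})$ this monodromy is exactly a (projective) representation of the mapping class group. As emphasized in the text, the topological robustness of the ground-state degeneracy forces the holonomy along contractible loops to be a path-dependent overall phase, so $U_{\Si^d}$ is only canonically defined projectively; this is built into the statement of the conjecture and does not obstruct the reconstruction. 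Third, combine with input (i): since the flat bundle $\cE_{\Si^d}$ is recoverable from $U_{\Si^d}$ and since the collection $\{\cE_{\Si^d}\}$ determines $\EC_{d+1}$, the collection $\{U_{\Si^d}\}$ also determines $\EC_{d+1}$.

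The main obstacle is the faithfulness direction: proving that two inequivalent exact \hBF{d+1} categories cannot yield isomorphic projective MCG-representations on every $\Si^d$. On a fixed connected $\cM_{\Si^d}$ this is immediate from Riemann--Hilbert, but comparing two categories globally requires leveraging input (i) in its full strength, which itself is only conjectural. A secondary technical worry is that the ambient Hermitian structure on $\cE_{\Si^d}$, inherited from embedding into the trivial bundle $\cM_{\Si^d}\times H_{\Si^d}$, is part of the data being reconstructed; one must check that this metric is already encoded once the unitary monodromies $U_{\Si^d}\in U(V_{\Si^d})$ are specified, so that only the projective cocycle class is lost. Finally, one should verify that universal perturbative gravitational responses do not provide additional invariants beyond those captured by the projective cocycles: since exactness (by Conjecture \ref{flatVB}) is equivalent to flatness, the Pontryagin obstructions described in \eqn{ChPon} all vanish, so no gravitational Chern--Simons data is being dropped, and the projective MCG-representations really do exhaust the invariants.
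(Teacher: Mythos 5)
Your reduction is essentially the paper's own reasoning: the text preceding this conjecture argues exactly that exactness gives flatness of each $\cE_{\Si^d}$ (via Conjecture \ref{flatVB}), that a flat bundle is fully characterized by its monodromy representation of $\pi_1(\cM_{\Si^d})=\text{MCG}(\Si^d)$, and that the collection of these bundles characterizes the category by the earlier conjecture. Note only that the statement is itself a conjecture resting on those two conjectural inputs, so your added care about faithfulness and the projective-versus-linear distinction (the paper asserts the representation becomes an honest one in the flat case) is a refinement of, not a departure from, the paper's argument.
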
\noindent
In particular, the representations of $\text{MCG}(\Si^{d})$ can be computed via
the universal wave function overlap\cite{HW1339,MW1418,HMW1457} or tensor
network calculations.\cite{ZGT1251,TZQ1251,ZMP1233,CV1308}

\subsection{How to probe the gravitational anomaly through quasiparticle
statistics}

\begin{figure}[tb]
  \centering
  \includegraphics[scale=0.6]{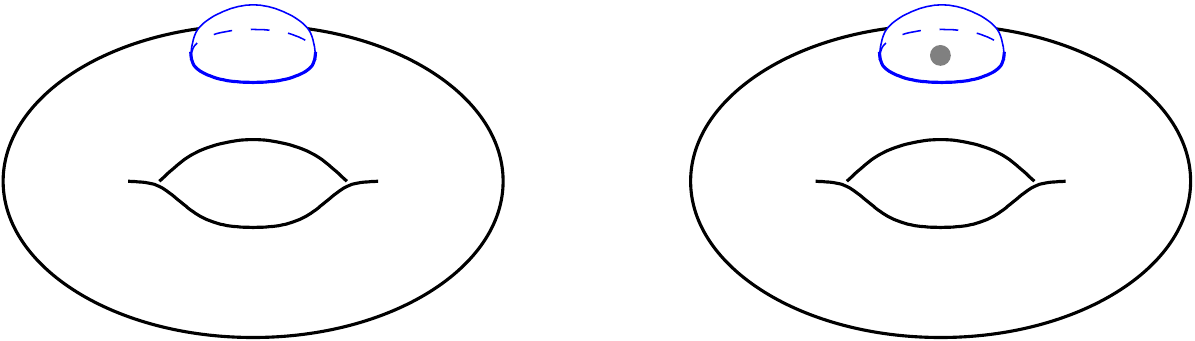}
  \caption{
The thick blue line is a string-like topological excitation in $d$-dimensional
space created at the boundary of a membrane operator.  The 1-dimensional
topological excitations condense on a membrane-like topological excitation
which form a torus. A particle-like topological excitation on the membrane-like
topological excitation can be probed by calculating the average of the membrane
operator with its boundary on the torus which may or may not enclose the
 particle-like topological excitation.
}
  \label{hbrd}
\end{figure}

We also have the following two useful conjectures. The first one is
\begin{conj} 
\label{msta}
$\EC_n$ is  a closed \hBF{n}  category iff\\
(1) any nontrivial pure $p$-dimensional topological excitations in $\EC_n$ can
be detected by their nontrivial mutual braiding properties with some other
topological excitations.\\
(2) any nontrivial pure $p'$-dimensional topological excitations on a
$p$-dimensional topological excitation $M^p$ can be detected by their
nontrivial mutual braiding properties with some other topological excitations
on $M^D$ or by their  different ``mutual half-braiding'' properties with
some other topological excitations in $\EC_n$ which condense on $M^p$.
\end{conj}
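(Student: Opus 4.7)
The plan is to prove the iff by unpacking the definition of ``closed'' given in Definition\,\ref{def:closed-prebf} (trivial braiding-center $Z(\EC_n)\simeq \one_n$) and translating it into the physical detection criteria of conditions (1) and (2). I would treat the two directions separately, with the reverse direction decomposed further into the pure-bulk case (condition (1)) and the nested-wall case (condition (2)).

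For the forward direction (closed $\Rightarrow$ (1) and (2)), I would proceed as follows. First, for a pure simple $p$-dimensional excitation $x$, the statement that $x$ has trivial mutual double-braiding with every other pure excitation is precisely the statement $x\in Z(\EC_n)$. Since $\EC_n$ is closed, $Z(\EC_n)\simeq \one_n$, and the only simple morphism in $\one_n$ at that codimension is the identity $\id_\ast^{n-p}$. Hence any undetectable $x$ is trivial, proving (1). For (2), let $M^p$ be a defect and $y$ a pure $p'$-codimensional sub-defect on $M^p$. By the discussion in Section\,\ref{sec:fb-in-ncat}, any such $y$ lies in $\hom(M^p,M^p)$, and its detectability by bulk excitations is controlled by the left/right half-braidings $\beta_L,\beta_R$ of \eqref{eq:left-right-half-braiding-ncat-1}--\eqref{eq:left-right-half-braiding-ncat-2}. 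The sub-defects of $M^p$ that lie in the image of the bulk-to-wall map (the condensed part, cf.\ Conjecture\,\ref{conj:dominance}) are distinguished by their full double-braiding with the corresponding bulk anyons, reducing to (1) after applying the center-of-center result Corollary\,\ref{cor:Z2=0}. The sub-defects that lie outside the image are, by Conjecture\,\ref{conj:dominance}, detectable only among themselves via half-braiding with bulk excitations that condense on $M^p$; closedness of $\EC_n$ then forces any such $y$ with trivial half-braidings to be trivial in $\core(\hom(M^p,M^p))$.

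For the reverse direction ((1) and (2) $\Rightarrow$ closed), I would argue by contrapositive. Suppose $\EC_n$ is not closed, so $Z(\EC_n)\not\simeq \one_n$. By Definition\,\ref{def:closed-prebf} there exists a simple morphism $x$ of some codimension, different from the identity, that is mutually symmetric to every morphism of $\EC_n$ in all higher homotopies. If $x$ is a pure excitation, then by definition its double-braiding with every other pure excitation is trivial, violating (1). If $x$ instead lives as a $p'$-codimensional sub-defect on some $M^p$, then the mutual-symmetry assumption, interpreted in the $(n+1)$-categorical language of \eqref{eq:left-right-half-braiding-ncat-1}--\eqref{eq:left-right-half-braiding-ncat-2}, says both that $x$ has trivial braiding with all wall excitations and that the associated left/right half-braidings with all bulk excitations condensing on $M^p$ are trivial; this violates (2). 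In either case the contrapositive is established.

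The main obstacle I expect is the rigorous handling of the half-braiding condition in (2). Double-braiding between pure bulk excitations is a clean $(n+1)$-isomorphism in $\EC_n$, but the half-braiding between a bulk excitation and a sub-defect on a non-transparent wall is only defined relative to the choice of bulk-to-wall map $L$ (or $R$), and the statement ``different mutual half-braidings'' must be made invariant under the ambiguities of that choice. I would address this by working inside $\core(\EC_n)$ and using the dominance and centralness of the bulk-to-wall map in Definition\,\ref{def:gapped-boundary} together with Conjecture\,\ref{conj:dominance} to identify, canonically, the subset of wall defects detectable from the bulk; the remaining wall defects would then be shown, via an extension of the 2+1D argument of \Ref{L1309} that Levin's detection principle generalizes to higher codimensions, to be detectable among themselves on $M^p$ iff the corresponding $\prebf$-subcategory on $M^p$ is itself closed. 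Assembling these pieces---especially the inductive use of closedness at lower codimensions---is the step where the argument relies most heavily on the yet-to-be-fully-developed higher-dimensional condensation theory.
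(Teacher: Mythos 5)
The paper does not prove this statement: it is stated as a conjecture, and immediately after stating it the authors remark that it ``is not fully formulated'' and is given ``just to illustrate an idea.'' There is therefore no proof in the paper against which to compare yours; the closest thing is the theorem in Section~\ref{preM} that a non-anomalous unitary premodular category must be modular, whose one-line argument simply \emph{asserts} the remote-detectability principle rather than deriving it.

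Your proposal has a genuine circularity at its core. What you establish is that conditions (1) and (2) are equivalent to the \emph{mathematical} notion of closedness, $Z(\EC_n)\simeq \one_n$ from Definition~\ref{def:closed-prebf} --- but that definition was engineered precisely to encode the braiding-detectability criterion, so both directions of your argument amount to unwinding a definition. Conjecture~\ref{msta}, however, concerns the \emph{physical} notion of a closed \hBF{n} category from Section~\ref{ceBF}: realizability of the given set of topological properties by a well-defined local bosonic Hamiltonian system in the same dimension. The identification of these two notions is itself the paper's central unproven claim (the conjectured one-to-one correspondence between closed $\BF_{n+1}$-categories and anomaly-free topological orders at the end of Section~\ref{sec:closed-bf-def}); by invoking it you assume the substance of what is to be shown. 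The physical content a proof would actually need --- why an undetectable nontrivial excitation obstructs realization by a lattice model in the same dimension, and conversely why full remote detectability guarantees such a realization --- is not supplied, and your own closing paragraph concedes that the half-braiding bookkeeping in condition (2) rests on Conjecture~\ref{conj:dominance} and on a higher-dimensional condensation theory that does not yet exist. As written, the proposal is a useful reformulation of the conjecture in the paper's categorical language, not a proof of it.
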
\noindent
We like to point out that the above conjecture is not fully formulated.  We
state it here just to illustrate an idea.

Here the mutual braiding mean that we fix one topological excitation and move
other  topological excitations around the first one.  A nontrivial mutual
braiding property means that the mutual braiding generate a nontrivial
(non-Abelian) geometric phase. See Section\,\ref{sec:closed-bf-def} 
for a mathematical description. 

Let us explain what is the  ``mutual half-braiding'' property (see Fig.
\ref{hbrd}).\cite{L1309} We know that a $d$-dimensional excitation in the bulk
$\EC_n$ can be created at the boundary of a $(d+1)$-brane operator $\hat
O_{d+1}$.  If a  $d$-dimensional excitation condense on the subspace $M^p$,
then we have $\<\Psi_{M^p,0}|\hat O_{d+1}|\Psi_{M^p,0}\>\neq 0$ if the boundary
of the $(d+1)$-brane operator $\hat O_{d+1}$ lie within $M^p$.  Here
$|\Psi_{M^p,0}\>$ is the wave function of the system where the describe a pure
$p$-dimensional topological excitation on $M^p$.  Let $|\Psi_{M^p,i}\>$ be
the wave function of the system where $M^p$ contains some other topological
excitations.  Then a topological excitation on $M^D$ can be distinguished by
their  different ``mutual half-braiding'' properties  with some other
topological excitations in $\EC_n$ if 
\begin{align}
\frac{\<\Psi_{M^p,i}|\hat
O_{D+1}|\Psi_{M^p,i}\>}{\<\Psi_{M^p,0}|\hat
O_{D+1}|\Psi_{M^p,0}\>}
\neq 1
\end{align}
when the topological excitation $i$ on $M^p$ is enclosed by the boundary of
$\hat O_{p+1}$.

The second one is a generalization of a result by Levin\cite{L1309}:
\begin{conj} An $\EC_n$ is the bulk (or center) of a \hBF{n}  category
$\EC_{n-1}$ iff they satisfy the following condition: all the topological
excitations in $\EC_{n-1}$ can be distinguished by their different mutual
braiding properties with some other topological excitations in $\EC_{n-1}$, or
by their  different ``mutual half-braiding'' properties with some other
topological excitations in $\EC_n$ which condense on the boundary.
\end{conj}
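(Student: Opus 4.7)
The plan is to prove the two directions of the iff separately, treating it, like Conjecture~\ref{msta} which it generalizes, as a physical theorem at the level of rigor of the rest of the paper. The two main inputs I would use are Lemma~\ref{lemma:unique-bulk} (uniqueness of the \bulk of a given boundary) and Conjecture~\ref{msta} (closed \hBF{n} categories are characterized by full detectability via mutual braiding). I would also use Corollary~\ref{cor:Z2=0}, which tells us that $\cZ_n(\EC_n) = \one_{n+2}$ whenever $\EC_n = \cZ_{n-1}(\EC_{n-1})$, so that the candidate \bulk $\EC_n$ is automatically closed.

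\medskip
For the forward direction $\Rightarrow$, I would argue as follows. Assume $\EC_n = \cZ_{n-1}(\EC_{n-1})$; then $\EC_n$ is closed, so by Conjecture~\ref{msta} every nontrivial excitation in $\EC_n$ is detected by mutual braiding with some other bulk excitation. Now take any nontrivial pure $p$-dimensional excitation $x$ of $\EC_{n-1}$. Consider the bulk-to-boundary functor $f: \EC_n \to \EC_{n-1}$ associated to the gapped boundary; the bulk excitations that condense on the boundary are precisely those in $\ker f$. Two subcases arise. If $x$ lies in the image of $f$, pick a lift $\tilde x \in \EC_n$. Because $\EC_n$ is closed, some $y \in \EC_n$ detects $\tilde x$ by mutual braiding in the bulk. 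Pushing this braiding to the boundary yields either a mutual braiding of $x$ with $f(y)$ in $\EC_{n-1}$ (if $f(y)$ is nontrivial), or, when $y$ itself is condensed, a nontrivial half-braiding of the type pictured in Fig.~\ref{hbrd}. If instead $x$ is not in the image of $f$, one shows directly that no condensed bulk excitation can hide $x$, and that $x$ must then be detected by mutual braiding with other boundary-only excitations (otherwise $x$ would be a transparent boundary object not controlled by the bulk, contradicting the categorical meaning of $\cZ_{n-1}$).

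\medskip
For the converse $\Leftarrow$, I would use the uniqueness of the \bulk. Assume the detection condition holds for the pair $(\EC_n, \EC_{n-1})$ and let $\ED_n := \cZ_{n-1}(\EC_{n-1})$ denote the true \bulk. The detection data --- mutual braidings within $\EC_{n-1}$ together with half-braidings against condensed objects of $\EC_n$ --- assemble, in the spirit of Levin's 2+1D argument~\cite{L1309} and of Remark~\ref{rema:bulk=center}, into a faithful half-braiding structure on $\EC_{n-1}$ with values in $\EC_n$. This is exactly the universal data that defines the categorical center. Hence there is a canonical functor $\EC_n \to \ED_n$ realizing $\EC_n$ as a candidate \bulk, and the detection condition forces this functor to be an equivalence: an object of $\EC_n$ that failed to be detected on the boundary could be deleted without changing the boundary theory, while a missing object of $\ED_n$ would correspond to a boundary excitation outside the detection hypothesis. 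Lemma~\ref{lemma:unique-bulk} then gives $\EC_n \simeq \ED_n$.

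\medskip
The hard part will be making the ``mutual half-braiding'' genuinely rigorous in arbitrary dimension, because the requisite higher-categorical structures (the $E_k$-center and the full condensation theory discussed in Sections~\ref{sec:fb-in-ncat}--\ref{sec:closed-bf-def}) are only sketched in this paper. Concretely, I expect the serious obstacle in the $\Leftarrow$ direction to be ruling out ``ghost'' bulk excitations: an object of $\ED_n$ whose only boundary signature is through a trivial half-braiding on every boundary object, yet which is not isomorphic to an object already present in $\EC_n$. In $n=3$ this is exactly Levin's non-degeneracy/modularity argument and goes through; in higher $n$ one must invoke the (conjectural) equivalence between the physical \bulk and the categorical $E_n$-center promised in~\cite{kong-wen-zheng}, and then check that the generalized detection condition is the nondegeneracy condition for this center construction. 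A secondary, expected obstacle is that the statement must implicitly exclude the gCS anomalous sector in $4k+3$ dimensions (cf.\ the quotient taken in \eqn{topoBF}), since those orders lie outside the BF-categorical framework the detection criterion is formulated in.
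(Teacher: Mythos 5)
You should first note that the paper gives no proof of this statement: it is presented as a Conjecture, described only as ``a generalization of a result by Levin\cite{L1309},'' and the surrounding text (the discussion of Fig.~\ref{hbrd} and of the ``mutual half-braiding'' average $\<\Psi_{M^p,i}|\hat O_{D+1}|\Psi_{M^p,i}\>$) is offered as motivation for the definition of the detection criterion, not as an argument for the equivalence. So there is no proof in the paper for your plan to be measured against; the only question is whether your plan itself closes the conjecture, and it does not.

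The concrete gaps are these. First, your forward direction rests on Conjecture~\ref{msta}, which is itself unproven, so at best you obtain a conditional implication; moreover, the subcase in which the boundary excitation $x$ does not lie in the image of the bulk-to-boundary functor is handled by assertion (``one shows directly that\ldots''), and this is precisely where the genuinely boundary-confined excitations live --- the excitations that the half-braiding clause of the conjecture exists to detect. Second, in the converse direction the pivotal claim --- that the detection data ``assemble into a faithful half-braiding structure\ldots exactly the universal data that defines the categorical center'' --- is not a step of a proof but a restatement of the conjecture: the identification of the physical \bulk with the categorical center is itself only promised in \cite{kong-wen-zheng} and is established in this paper only for $n=3$ (Remark~\ref{rema:bulk=center}), and the nondegeneracy of that center under the stated detection hypothesis is the whole content of what is to be shown. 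Your closing paragraph correctly isolates the ``ghost bulk excitation'' problem as the crux, but naming the obstacle is not the same as overcoming it. What you have is a useful reduction of the conjecture to (a) Conjecture~\ref{msta}, (b) the bulk-equals-center theorem, and (c) a nondegeneracy statement for the higher center --- an organization consistent with how the authors evidently view the problem, and one that does yield a genuine proof in the $n=3$ case where Levin's argument and the $Z(\EC)$ machinery are available, but not a proof in the generality claimed.
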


\section{Topological orders that have no
non-trivial topological excitations}
\label{invTop}

As an application of the above conjectures, in this section, we are going to
try to classify a very simple class of topological orders
that has no non-trivial topological excitations in the bulk.  One may wonder,
without any non-trivial topological excitations, such class of topological
orders may only contain the trivial one. In fact, even  without any non-trivial
topological excitations in the bulk, the topological order can still be
non-trivial since the boundary may be non-trivial. The $E_8$ bosonic quantum
Hall state in 2+1D is an example of such kind of topological order, whose
boundary must be gapless. 

\void{
We like to show that
\begin{thm} 
A  H-type topological order (a closed \hBF{n} category) has no non-trivial
elementary topological excitations iff it is invertible.
\end{thm} \noindent
\pf
First,
let us assume that $\EC_n \in $ \hBF{n} is invertible under the stacking
(or the tensor product $\boxtimes$) operation.  This implies that
$\EC_n\boxtimes \overline{\EC}_n =\one_n$. So $\EC_n\boxtimes \overline{\EC}_n$ contains no
non-trivial elementary excitations.  Since the stacking $\EC_n\boxtimes \bar
\EC_n$ contains all the elementary excitations of $\EC_n$ and $\overline{\EC}_n$,
therefore, $\EC_n$ must contain no non-trivial elementary excitations.  Next,
let us assume that $\EC_n$ contains no non-trivial elementary excitations. Then
$\EC_n\boxtimes \overline{\EC}_n$ also contain no non-trivial elementary
excitations.  Since $\EC_n\boxtimes \overline{\EC}_n$ has a gapped boundary and there
are no non-trivial elementary excitations on both sides of the boundary, so
such a  boundary can be viewed as a transparent domain wall between the
$\EC_n\boxtimes \overline{\EC}_n$ state and the trivial state (the vacuum).
Therefore, $\EC_n\boxtimes \overline{\EC}_n$ is equivalent to a trivial state:
$\EC_n\boxtimes \overline{\EC}_n =\one_n$.
\epf


Since L-type topological order is a subset of H-type topological order,
the above result should also apply to L-type topological orders
\begin{thm} 
A  L-type topological order (a closed \lBF{n} category) has no non-trivial
elementary topological excitations iff it is invertible.\cite{FT1292,F14,freed2014}
\end{thm} \noindent

}


We know that in 2+1D, the number of point-like topological excitations is equal
to the ground state degeneracy on $T^2$.  In higher dimensions, the ground
state degeneracy on $S^1\times S^n$ and on other spatial topologies are
directly related to the number of point-like and other topological excitations.
Thus we have the following result. 
\begin{thm} 
A H-type topological order (a closed \hBF{n} category) has no non-trivial
elementary topological excitations iff it has no ground state degeneracy on any
closed spaces.
\end{thm}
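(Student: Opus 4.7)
The plan is to prove both directions by relating the ground state degeneracy on specific closed spatial manifolds to counts of elementary topological excitations of various spatial dimensions, exploiting the fusion-space formalism of Section \ref{uniprop} and the detectability principle of Conjecture \ref{msta}.

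For the forward direction (no nontrivial elementary excitations $\Rightarrow$ trivial GSD on every closed space), the plan is the following. By Conjecture \ref{conj:ele-topological-order} the entire BF category, and in particular every generalized fusion space $\cV^F(M^n,\cT_{i_1,\ldots};i_1,\ldots)$, is determined by the elementary excitations; if the only simple elementary $l$-morphism is $1^{[l]}$ for every $l$, then every such fusion space reduces to the decoration-free fusion space, i.e.\ to $\text{GSD}(M^n)$. It then suffices to show $\text{GSD}(M^n) = 1$ for all closed $M^n$. I would do this by induction on the handle decomposition of $M^n$: the base case is the stability condition $\hom_{S^n}(1^{[n]},1^{[n]})=\Cb$ from \eqref{eq:gs-deg}, and each elementary surgery gluing along $S^k\times S^{n-1-k}$ contributes only through fusion/braiding of excitations that could be threaded through the surgery circles — all of which are trivial by hypothesis — so the GSD is preserved as $1$.

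For the backward direction (no GSD on any closed space $\Rightarrow$ no nontrivial elementary excitations), the plan is to realize each simple elementary excitation as an independent vector in the ground state space of a carefully chosen spatial manifold. The prototype is $n=3$, where $\text{GSD}(T^2)$ equals the number of simple anyons. The generalization I would use is: for a simple elementary $p$-dimensional excitation $x$ in an $n$-dimensional space, insert a pair $x,\bar x$ supported on an $S^p\subset S^p\times S^{n-1-p}$ linking nontrivially with the complementary $S^{n-1-p}$, and use a Hopf-link / braiding-trace argument to produce a distinguished ground state vector on $S^p\times S^{n-1-p}$. By Conjecture \ref{msta} (detectability: every nontrivial elementary excitation is distinguished by mutual braiding), distinct simple $x$'s yield linearly independent vectors, so
\begin{equation*}
\text{GSD}(S^p\times S^{n-1-p}) \;\geq\; \#\{\text{simple elementary $p$-dim.\ excitations}\}.
\end{equation*}
Requiring $\text{GSD}=1$ on every such $S^p\times S^{n-1-p}$ for $0\leq p\leq n-2$ then forces the only simple elementary excitation in every codimension to be trivial.

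The main obstacle is making the second direction rigorous in arbitrary $n$: the inequality $\text{GSD}(S^p\times S^{n-1-p})\geq \#\{\text{simple elementary }p\text{-dim.\ excitations}\}$ is classical for $n=3$ (modular $S$-matrix / Verlinde), but in higher dimensions it requires the categorified trace formalism sketched in Sections \ref{sec:defect-n-toporder}--\ref{sec:fb-in-ncat} and rests on Conjecture \ref{msta} to ensure the states produced by distinct excitations are actually linearly independent. A complete argument would proceed excitation-by-excitation, using the half-braiding structure \eqref{eq:half-braiding} and the $k$-dimensional bulk-to-wall maps \eqref{eq:k-bulk-to-map-ncat} to write the linking pairing and identify it with a nondegenerate bilinear form on the span of simple elementary excitations, so that orthogonality in GSD follows from nontriviality of mutual braiding.
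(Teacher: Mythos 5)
Your proposal follows essentially the same route as the paper: the paper's entire justification for this theorem is the remark immediately preceding it, namely that in 2+1D the number of point-like excitations equals $\text{GSD}(T^2)$ and that in higher dimensions the ground state degeneracy on $S^1\times S^n$ and other spatial topologies is ``directly related'' to the number of excitations of each dimension, so both of your directions are a (far more detailed) unpacking of exactly that one-sentence argument, and your honest accounting of where Conjectures \ref{msta} and \ref{conj:ele-topological-order} must be invoked matches the paper's own level of rigor. One concrete correction is needed, though: the spatial manifold that detects $p$-dimensional excitations in $(n-1)$-dimensional space is $S^{p+1}\times S^{n-2-p}$ (fill with $S^{p+1}\times D^{n-1-p}$ and wrap the $(p+1)$-dimensional worldvolume on a noncontractible cycle), not $S^{p}\times S^{n-1-p}$. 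As written, your key inequality fails already in your own prototype: for $n=3$, $p=0$ it asks about $S^0\times S^2$, i.e.\ two disjoint spheres with $\text{GSD}=1$ for any anyon content, whereas $S^{0+1}\times S^{3-2-0}=T^2$ recovers the anyon count. The same re-indexing fixes the linking statement (an $S^{p}$ and an $S^{n-2-p}$ link in the $(n-1)$-dimensional space, consistent with $p+(n-2-p)=(n-1)-1$); with that change the Hopf-pairing argument and the rest of your sketch go through as the paper intends.
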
 

For such a topological order $\EC_n$ (a closed $\BF_n$-category), due to the absence of ground state degeneracy, topological partition function $Z(M^n)$ on the space-time $M^n$ (with/without boundaries) is an non-zero $\Cb$-number. Due to unitarity, it must be a pure $U(1)$ phase on any closed
space-time $M^n$ which is a mapping torus for $H$-type theory.  
We have a parallel result for $L$-type. 
\begin{thm} 
A L-type topological order (a closed \lBF{n} category) has no non-trivial
elementary topological excitations iff its topological partition function
$Z(M^n)$ is alway a pure $U(1)$ phase on any closed orientable space-time, up
to a factor $W^{\chi(M^n)}  \ee^{\ii \sum_{\{n_i\}} \phi_{n_1n_2\cdots} \int_{M^n}
P_{n_1n_2\cdots}}$.
\end{thm}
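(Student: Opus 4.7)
The plan is to reduce the $L$-type statement to the $H$-type theorem proved just above, using the Hamiltonian-Lagrangian dictionary $|Z_0(S^1 \times \Sigma^{n-1})| = \text{GSD}(\Sigma^{n-1})$. This identity holds because imaginary-time evolution along the $S^1$ factor traces the identity on the ground-state space $V_{\Sigma^{n-1}}$ of the Hamiltonian on $\Sigma^{n-1}$. Combined with the previous theorem, which says that the absence of non-trivial elementary topological excitations in a closed \hBF{n} category is equivalent to having trivial ground-state degeneracy on every closed space $\Sigma^{n-1}$, this is the engine behind both directions.

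For the converse direction, I would first restrict the hypothesis to the family of mapping tori $M^n = S^1 \times \Sigma^{n-1}$. Such manifolds have vanishing Euler characteristic and vanishing Pontryagin numbers, so the trivial factor $W^{\chi(M^n)} \ee^{\ii \sum \phi_{n_1 n_2 \cdots} \int_{M^n} P_{n_1 n_2 \cdots}}$ evaluates to $1$. The hypothesis then yields $|Z_0(S^1 \times \Sigma^{n-1})| = 1$, i.e.\ $\text{GSD}(\Sigma^{n-1}) = 1$ for every closed $\Sigma^{n-1}$. Since a closed \lBF{n} category is automatically a closed \hBF{n} category by Remark\,\ref{LHclosed}, the previous $H$-type theorem applies and forces $\EC_n$ to contain no non-trivial elementary topological excitations.

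For the forward direction, suppose $\EC_n$ has no non-trivial elementary topological excitations. The $H$-type theorem gives $\text{GSD}(\Sigma^{n-1}) = 1$ on every closed $\Sigma^{n-1}$, hence $|Z_0(S^1 \times \Sigma^{n-1})| = 1$ on every mapping torus. To promote this to an arbitrary closed orientable $M^n$, I would invoke the cutting-and-gluing structure of the topological path integral: choose a codimension-$1$ decomposition of $M^n$ along a union of closed $(n-1)$-manifolds, write $Z(M^n)$ as an inner product of vectors in the corresponding $V_{\Sigma^{n-1}}$'s, and exploit $\dim V_{\Sigma^{n-1}} = 1$ to reduce each such inner product to a single complex number whose absolute value is controlled entirely by local data.

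The main obstacle will be this last step. It demands identifying precisely which local topological quantities of an oriented $n$-manifold can enter the modulus of a stable topological partition function. Conjecture\,\ref{ZZpEq} asserts that the only such local invariants are $\chi(M^n)$ and the Pontryagin numbers $\int_{M^n} P_{n_1 n_2 \cdots}$, so one expects $|Z(M^n)| = |W|^{\chi(M^n)}$ for a positive real $W$, while the residual $Z(M^n)/|Z(M^n)|$ splits into the Pontryagin-phase factor and an unavoidable overall $U(1)$ phase. Making this watertight rests entirely on that classification of local topological invariants, which is itself conjectural in the excerpt; a full proof would therefore either rely on it directly or replace it with an explicit cobordism-theoretic argument that the above list exhausts all $U(1)$-valued multiplicative local invariants of oriented $n$-manifolds.
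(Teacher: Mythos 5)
Your proposal matches the paper's (very terse) argument: the paper likewise reduces to the preceding H-type theorem through the identity $|Z_0(S^1\times\Sigma^{n-1})|=\mathrm{GSD}(\Sigma^{n-1})$, observes that the absence of ground-state degeneracy makes the partition function on any space-time with or without boundary a nonzero complex number (one-dimensional state spaces in the gluing decomposition), and absorbs the residual modulus into the trivial $W^{\chi(M^n)}$/Pontryagin factor --- exactly your cutting-and-gluing step. The paper supplies no more detail than you do, and your explicit flagging of the reliance on Conjecture~\ref{ZZpEq} for controlling the modulus in the forward direction accurately reflects the conjectural status of that step in the text.
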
 

Given such a topological phase $\EC_n$ without any ground state degeneracy, if
we stack the time-reversed system $\bar{\EC}_n$ on the top of $\EC$, all the
phases are canceled, and we must obtain the trivial topological order, in which
all topological partition functions are $1\in \Cb$. Namely, we must have $\EC_n
\boxtimes \bar{\EC}_n = \one_n$. So such kind of topological orders are {\it
invertible}.  For a generic closed \lBF{n} category $\EC_n$, the inverse of its
partition function, $1/Z_{\EC_n}({M^n})$, may not be the partition function of any
topological order. But when $Z_{\EC_n}({M^n})$ is a pure $U(1)$ phase,
$1/Z_{\EC_n}({M^n})$ will be a partition function of a topological order. In fact
$1/Z_{\EC_n}({M^n})=Z_{\overline{\EC}_n}({M^n})$.  So when the partition function
$Z_{\EC_n}({M^n})$ is a pure $U(1)$ phase, the corresponding topological order
$\EC_n$ is invertible.
An non-zero quantum field theory (L-type theory) with
1-dimensional state spaces is also called {\it invertible} by Freed and
Teleman\cite{FT1292,F14,freed2014}. 

\smallskip
Let us use $C_n$ to denote the $n$-complex obtained by triangulating the
space-time.  Due to locality, we require that, at least for some simple
space-time topologies,  the $U(1)$ phase $Z(M^n)$ comes from the product of
local $U(1)$ phases for each $n$-simplex:\cite{W1313}
$$
 Z(M^n)=
\<C_{n},\om_{n}\>
=\prod_{i\in C_n} \<S_n^{(i)},\om_n\>,\ \ \ \
\<S_n^{(i)},\om_n\> \in U(1),
$$
where $S_n^{(i)}$ is the $i^\text{th}$ $n$-simplex in the complex $C_n$, and
$\om_n$ is a $U(1)$-valued $n$-cochain.  Such a partition function will be
called \emph{local}. In general $\om_n$ may  depend on some local geometric
structures (such as connections and vielbein\cite{Z1253} on the $n$-complex)
that can still affect the gapped ground state.

To find the $n$-cochains $\om_n$ that can describe invertible closed L-type
\lBF{n} categories, let us consider a partition function constructed via the
Pontryagin classes $P_{n_1n_2\cdots}=p_{n_1}\wedge p_{n_2}\wedge \cdots$:
\begin{align}
  Z(M^n)=
\<C_{n},\om_{n}\>
=\ee^{\ii \sum_{n_1n_2\cdots} \phi_{n_1n_2\cdots} \int_{M^n} P_{n_1n_2\cdots} }
\end{align}
Such a partition function is local and is a pure $U(1)$ phase that does not
depend on the volume of the space-time.  So it describes an invertible \lBF{n}
category.  But such an invertible \lBF{n} category is trivial since the
partition function can be continuously deformed to 1.  We see that, although
Pontryagin classes can give rise to local topological partition functions,
since the coefficients $\phi_{n_1n_2\cdots}$ of the Pontryagin classes are not
quantized, they do not give rise to non-trivial invertible \lBF{n} categories.
So a key to obtain non-trivial invertible \lBF{n} categories is to find
topological terms with quantized coefficients.

We note that the cobordism group of 5-dimensional closed oriented manifolds is
$\Om^{SO}_5=\Zb_2$ (see Appendix \ref{cob}).  It was proposed recently in
\Ref{K1459}, that there is a corresponding quantized topological term given by
a Stiefel-Whitney class $w_2\wedge w_3$:
\begin{align}
  Z(M^5)=
\<C_{5},\om_{5}\>
=\ee^{\pi \ii \int_{M^5} w_2\wedge w_3}.
\end{align}
Let us assume that there exists a 4+1D gapped local bosonic theory, integrating
out the matter field will produce the above partition function.  Such a model
realizes a non-trivial exact \lBF{5} category $\EC_5^{L,w_2w_3}$ since the
value of the partition function is a non-trivial $-1$ on $M^5=SU(3)/SO(3)$
(see Appendix \ref{cob}) and the partition function is a topological
invariant.  Such a model also realizes a non-trivial exact \hBF{5} category
$\EC_5^{H,w_2w_3}$ since the value of the partition function is non-trivial
on a 5-dimensional mapping torus $\Cb P^2 \rtimes_* S^1$ generated by the
complex conjugation $*: \Cb P^2\to \Cb P^2$ (see Appendix \ref{cob}).  The
above local topological partition function, being a pure $U(1)$ phase,
describes an invertible  \lBF{5} category $\EC_5^{L,w_2w_3}$ (also an
invertible  \hBF{5} category $\EC_5^{H,w_2w_3}$), which is its own inverse,
i.e. $\EC_5^{L,w_2w_3}\boxtimes \EC_5^{L,w_2w_3} =\one_5$.  We believe that
$\ee^{\pi \ii \int_M w_2\wedge w_3}$ is the only quantized topological term in
5-dimensional space-time. Thus in 4+1D, the invertible \lBF{5} categories form
a group $\Zb_2$. 

The boundary of the exact  \lBF{5} category $\EC_5^{L,w_2w_3}$ gives rise to an
anomalous \lBF{4} category $\EC_4^{L,w_2w_3}$.  The partition function for
$\EC_4^{L,w_2w_3}$ is not gauge invariant on $\Cb P^2$.  Under the complex
conjugation $*: \Cb P^2\to \Cb P^2$, it changes sign $Z_0(\Cb P^2)\to -Z_0(\Cb
P^2)$, since the phase change of the partition function  is given by $\ee^{\ii
\pi \int_{\Cb P^2 \rtimes_* S^1} w_2w_3}=-1$.  This represents a new type of
global gravitational anomaly in a 3+1D bosonic theory.

The above example describes one class of  quantized topological terms, which
leads to one class of invertible \lBF{n} categories.  The partition functions
for this class of topological orders is a topological invariant. Therefore,
this class of topological orders is exact and has gapped boundaries which
contain non-trivial topological excitations.

There is another class of quantized topological terms.  Let us consider a 2+1D
example.  Let $\om_3^{p_1}$ be the three form whose derivative is the first
Pontryagin class: $\dd \om_3^{p_1} =p_1$.  $\om_3^{p_1}$ is a gravitational
Chern-Simons term.\cite{Z1253} We can use $\om_3^{p_1}$ to construct a local
topological partition function integral in 2+1D:
\begin{align}
  Z(M^3)= \<C_{3},\om_{3}\>
=\ee^{\ii \phi \int_{M^3} \om_3^{p_1}}.
\end{align}
However, for some 3-manifold $M^3$, gravitational Chern-Simons term
$\om_3^{p_1}$ is only well defined on patches of $M^3$, with discontinuity
between the patches.  In this case $\int_{M^3} \om_3^{p_1}$ is not well
defined.  Since the cobordism group of 3-dimensional closed oriented manifolds
is $\Om^{SO}_3=0$ (see Appendix \ref{cob}), we can view $M^3$ as a boundary of
$M^4$: $\prt M^4=M^3$, and rewrite the 2+1D topological partition function as
\begin{align}
 Z(M^3)= \<C_{3},\om_{3}\>
=\ee^{\ii \phi \int_{M^4} p_1} .
\end{align}
The above is well defined only if it does not depend on how we extend $M^3$ to
$M^4$.  This requires $\phi$ to be quantized as $\phi=0$ mod $2\pi/3$ (see
Appendix \ref{cob}). So gravitational Chern-Simons term gives rise to a
quantized topological term:
\begin{align}
  Z(M^3)= \<C_{3},\om_{3}\>
=\ee^{2\pi \ii k \int_{M^3} \om_3^{p_1}/3},\ \ \ k\in \Zb .
\end{align}
We see that, in 2+1D, the invertible \lBF{3} categories form a group $\Zb$ .
Such invertible \lBF{3} categories are generated by the $E_8$ quantum Hall
state\cite{PMN1372} (see Example \ref{E8}).

Similarly, using the properties of Pontryagin classes
$p_1$ and $p_2$ in 8-dimensions (see Appendix \ref{cob}):
\begin{align}
 \int_{M^8} \frac{p_1^2-2p_2}{5} \in \Z, \ \ \
 \int_{M^8} \frac{-2p_1^2+5p_2}{9} \in \Z .
\end{align}
we can construct the following topological
partition function in 6+1D:
\begin{align}
Z(M^7) &= \exp\Big(2\pi\ii k_1 \int_{M^7} \frac{\om_7^{p_1^2}-2\om_7^{p_2}}{5} \Big) \times
\nonumber\\
 &\ \ \ \
 \exp\Big(2\pi\ii k_2 \int_{M^7} \frac{-2\om_7^{p_1^2}+5\om_7^{p_2}}{9} \Big) ,
\nonumber\\
\dd \om_7^{p_1^2}&=p_1^2, \ \ \ 
\dd \om_7^{p_2}=p_2,\ \ \ k_1,k_2 \in \Zb.
\end{align}
We see that the invertible \lBF{7} categories
are classified by two integers $(k_1,k_2)$ and form a group $\Z\oplus \Z$.

The partition functions for this second class of topological orders is a
topological invariant up to $U(1)$ phases.  Thus the second class of
topological orders is closed and not exact.  Their boundary must be gapless.

\section{Acknowledgement}

We would like to thank Dan Freed, Zheng-Cheng Gu, Anton Kapustin, Jacob Lurie,
Dmitri Nikshych, Ulrike Tillmann, Kevin Walker, Bei Zeng, and Hao Zheng for
many very useful discussions.  LK is supported by the Basic Research Young
Scholars Program and the Initiative Scientific Research Program of Tsinghua
University, and NSFC under Grant No.  11071134. X-G. W is supported by NSF
Grant No.  DMR-1005541 and NSFC 11274192.  He is also supported by the BMO
Financial Group and the John Templeton Foundation.  Research at Perimeter
Institute is supported by the Government of Canada through Industry Canada and
by the Province of Ontario through the Ministry of Research.

\appendix

\section{Lattice model defined by a path integral} 
\label{path}

\begin{figure}[tb]
\begin{center}
\includegraphics[scale=0.6]{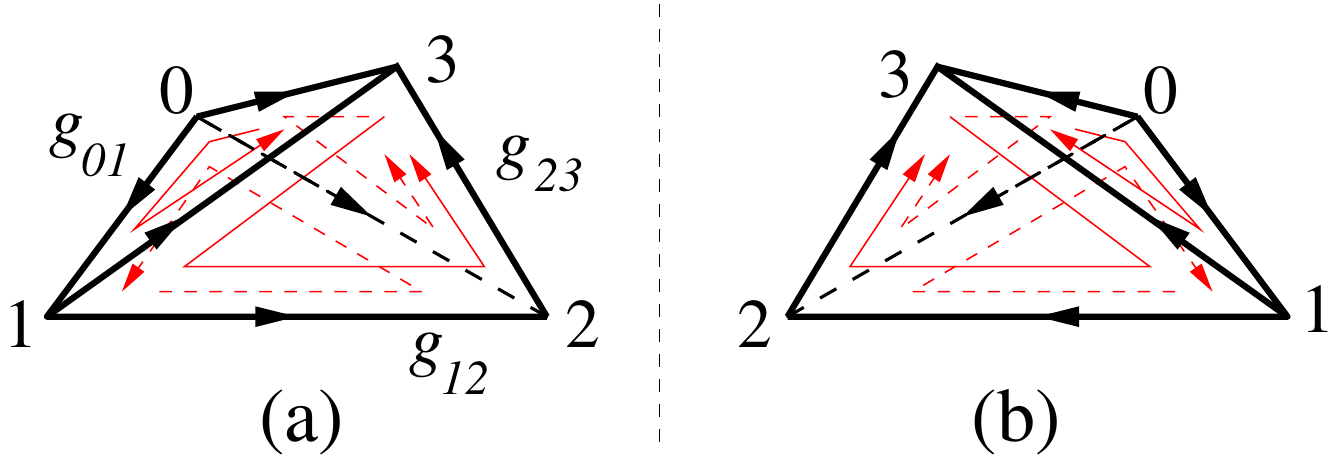} \end{center}
\caption{ (Color online) Two branched simplices with opposite orientations.
(a) A branched simplex with positive orientation and (b) a branched simplex
with negative orientation.  }
\label{mir}
\end{figure}

\subsection{Space-time complex}
\label{stcomp}

To define a lattice model through a space-time path integral, we first
triangulate of the $n$-dimensional space-time to obtain a space-time complex
$M_\text{tri}$.  We will call a cell in the space-time complex as a simplex.
In order to define a generic lattice theory on the space-time complex
$M_\text{tri}$, it is important to give the  vertices of each simplex a local
order.  A nice local scheme to order  the  vertices is given by a branching
structure.\cite{C0527,CGL1172,CGL1204} A branching structure is a choice of
orientation of each edge in the $n$-dimensional complex so that there is no
oriented loop on any triangle (see Fig. \ref{mir}).

The branching structure induces a \emph{local order} of the vertices on each
simplex.  The first vertex of a simplex is the vertex with no incoming edges,
and the second vertex is the vertex with only one incoming edge, \etc.  So the
simplex in  Fig. \ref{mir}a has the following vertex ordering: $0,1,2,3$.

The branching structure also gives the simplex (and its sub simplexes) an
orientation denoted by $s_{ij \cdots k}=\pm$.  Fig. \ref{mir} illustrates two
$3$-simplices with opposite orientations $s_{0123}=+$ and $s_{0123}=-$.  The
red arrows indicate the orientations of the $2$-simplices which are the
subsimplices of the $3$-simplices.  The black arrows on the edges indicate the
orientations of the $1$-simplices.

\subsection{Path integral on a space-time complex}
\label{stpath}

The degrees of freedom of our lattice model live on the vertices  (denoted by
$g_i$ where $i$ labels the vertices), on the edges (denoted by $h_{ij}$ where
$ij$ labels the edges), and on other high dimensional cells of the space-time
complex.  The action amplitude $\ee^{-S_\text{cell}}$ for an $n$-cell $(ij
\cdots k)$ is complex function of $g_i$, $h_{ij},\cdots$: $V_{ij \cdots
k}(\{g_i\},\{h_{ij}\},\cdots)$.  The total action amplitude $\ee^{-S}$ for
a configuration $\{g_i\},\{h_{ij}\},\cdots$ (or a path) is given by
\begin{align}
\label{eS}
\ee^{-S}=
\prod_{(ij \cdots k)} [V_{ij \cdots k}(\{g_i\},\{h_{ij}\},\cdots)]^{s_{ij \cdots k}}
\end{align}
where $\prod_{(ij \cdots k)}$ is the product over all the $n$-cells $(ij \cdots
k)$.  Note that the contribution from an $n$-cell $(ij \cdots k)$ is
$V_{ij \cdots k}(\{g_i\},\{h_{ij}\},\cdots)$ or $V^*_{ij \cdots k}(\{g_i\},\{h_{ij}\},\cdots)$
depending on the orientation $s_{ij \cdots k}$ of the cell.
Our lattice theory is defined
by the following imaginary-time path integral (or partition function)
\begin{align}
\label{Zpath}
 Z=\sum_{ \{g_i\},\{h_{ij}\},\cdots }
\prod_{(ij \cdots k)} [V_{ij \cdots k}(\{g_i\},\{h_{ij}\},\cdots)]^{s_{ij \cdots k}}
\end{align}

We would like to point out that, in general, the path integral may also depend
on some additional weighting factors $w_{g_i}$, $A_{g_i,g_j}^{h_{ij}}$, \etc
(see \eq{Z2d} and \eq{Z3d}).  In this section, for simplicity, we will assume
those  weighting factors are all equal to $1$.

Here, we like to introduce an important concept:
\begin{defn}
\textbf{Uniform path integral}\\
In the above path integral \eq{Zpath}, we have assigned the same action
amplitude $V_{ij \cdots k}(\{g_i\},\{h_{ij}\},\cdots)$ to each simplex $(ij
\cdots k)$.  Such a  path integral is called a uniform path integral.
\end{defn} \noindent
In this paper, we only study systems described by uniform path integral.  In
physics, they correspond to ``locally'' translation invariants system in space
and time directions, where the breaking of the exact translation symmetries
only come from the global topology of space-time.  (We also like to point out
that the doubling of unit cell does not break the ``local'' translation
invariance, since after doubling the unit cell, there still some  ``local''
translation invariance left.)

\begin{figure}[tb]
\begin{center}
\includegraphics[scale=0.5]{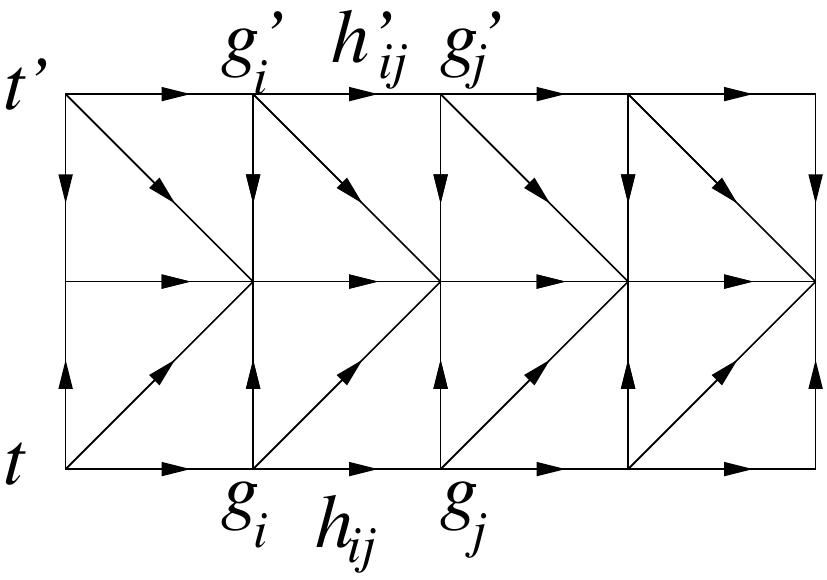} \end{center}
\caption{
Each time-step of evolution is given by the path integral on a particular form
of branched graph.  Here is an example in 1+1D. 
}
\label{tStep}
\end{figure}

\begin{figure}[tb]
\begin{center}
\includegraphics[scale=0.45]{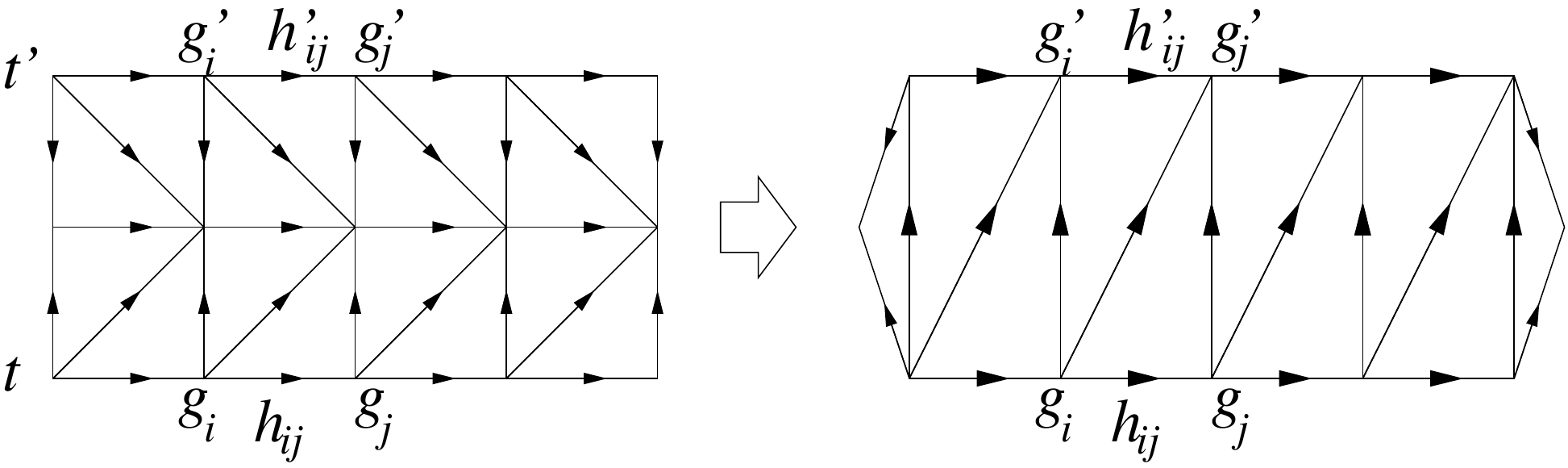} \end{center}
\caption{
The reduction of double-layer time-step to single-layer time-step on space with
boundary for an 1+1D topological path integral.
}
\label{stStep}
\end{figure}

\subsection{Path integral and Hamiltonian}
\label{pathham}

Consider a space-time complex of topology $M_\text{space}\times I$ where
$I=[t,t']$ represents the time dimension and $M_\text{space}$ is a closed space
complex (see Fig. \ref{tStep}).  The space-time complex $M_\text{space}\times
I$ has two boundaries: one at time $t$ and another at time $t'$.  A path
integral on the space-time complex $M_\text{space}\times I$ give us an
amplitude $Z[\{ g_i', h_{ij}',\cdots  \}, \{ g_i, h_{ij},\cdots \}]$ from a
configuration $\{ g_i, h_{ij},\cdots \}$ at $t$ to another configuration $\{
g_i', h_{ij}',\cdots  \}$ at $t'$.  Here, $\{ g_i, h_{ij},\cdots \}$ and $\{
g_i', h_{ij}',\cdots  \}$ are the degrees of freedom on the boundaries (see
Fig. \ref{tStep}).  We like to interpret $Z[\{ g_i', h_{ij}',\cdots  \}, \{
g_i, h_{ij},\cdots  \}]$ as the amplitude of an evolution in imaginary time by
a Hamiltonian:
\begin{align}
& \ \ \ \
 Z[\{ g_i', h_{ij}',\cdots  \}, \{ g_i, h_{ij},\cdots \}] 
\nonumber\\
&=\< g_i', h_{ij}',\cdots  | \ee^{-(t'-t)H} |
g_i, h_{ij},\cdots  \> .
\end{align}
However, such an interpretation may not be valid since $ Z[\{ g_i',
h_{ij}',\cdots  \}, \{ g_i, h_{ij},\cdots \}]$ may not give raise to a
Hermitian matrix.  It is a worrisome realization that path integral and
Hamiltonian evolution may not be directly related.

Here we would like to use the fact that the path integral that we are
considering are defined on the branched graphs with a ``reflection'' property
(see \eq{eS}). We like to show that such path integral are better related
Hamiltonian evolution.  The key is to require that each time-step of evolution
is given by  branched graphs of the form in Fig. \ref{tStep}.  One can show
that $Z[\{ g_i', h_{ij}',\cdots  \}, \{ g_i, h_{ij},\cdots \}]$ obtained by summing over all in
the internal indices in the  branched graphs Fig. \ref{tStep}
has a form
\begin{align}
&\ \ \ \
 Z[\{ g_i', h_{ij}',\cdots  \}, \{ g_i, h_{ij},\cdots \}]
\\
&=\sum_{\{ g_i'', h_{ij}'',\cdots  \}}
 U^*[\{ g_i'', h_{ij}'',\cdots  \}, \{ g_i', h_{ij}',\cdots \}]
\nonumber\\
&\ \ \ \ \ \ \ \ \ \ \ \ \ \ \ \ \
 U[\{ g_i'', h_{ij}'',\cdots  \}, \{ g_i, h_{ij},\cdots \}]
\nonumber
\end{align}
and represents a positive-definite Hermitian matrix.  Thus the path integral of
the form \eq{eS} always correspond to a Hamiltonian evolution in imaginary
time.  In fact, the above $Z[\{ g_i', h_{ij}',\cdots  \}, \{ g_i, h_{ij},\cdots
\}]$ can be viewed as an imaginary-time evolution $T=\ee^{-\Del \tau H}$ for a
single time step.

\subsection{Time-reversal transformation}

Consider a lattice model $\La$ described by a space-time path integral defined
by the action amplitude $V^\La_{ij \cdots k}(\{g_i\},\{h_{ij}\},\cdots)$. If we
fold the time direction as Fig. \ref{TConj}, we will get a time-reversal
transformed lattice model $\bar\La$.  The lattice model $\bar\La$ is described
by a different space-time path integral defined by the action amplitude
$V^{\bar\La}_{ij \cdots k}(\{g_i\},\{h_{ij}\},\cdots)$, which is given by 
$$
 V^{\bar\La}_{ij \cdots k}(\{g_i\},\{h_{ij}\},\cdots)
=[V^\La_{ij \cdots k}(\{g_i\},\{h_{ij}\},\cdots)]^*.
$$
The above defines the time-reversal transformation.

\section{Simple and composite \hBF{}  categories}
\label{SCBF}

Consider a $p$-dimensional topological excitation on a $p$-dimensional
subspace $M^p$ of the space $M^d$. We note that the  $p$-dimensional space
$M^p$ can support excitations whose dimensions are less than $p$.  So we can
view the $p$-dimensional topological excitation as a \hBF{p+1} category
(note that $p$ is the space dimension and $p+1$ is the space-time dimension).
This suggests that a $p$-dimensional topological excitation corresponds to a
\hBF{p+1} category.

Since a $p$-dimensional topological excitations can be simple or  
composite, the  \hBF{p+1} categories can also be simple or composite.
Following the definition of the simple and composite topological excitations
in Section \ref{simpcomp},
we can have the following definition of simple and composite \hBF{n}
categories:
\begin{defn} \textbf{Simple/composite  \hBF{n} category}:\\
A \hBF{n} category is simple if its ground state degeneracy on any
closed space is robust against any small perturbations.
Otherwise, the \hBF{n} category is composite.
\end{defn} \noindent

The fractional quantum Hall states and the 2+1D $Z_2$ spin liquid are example
of simple \hBF{3} categories.  To give an example of composite \hBF{3}
category, let us consider a family of Hamiltonian $H(g)$ parametrized by $g$.
The ground state of $H(0)$ is a product state with trivial topological order
and the ground state of $H(1)$ is the  2+1D $Z_2$ spin liquid.  At $g=g_c$
there is a \emph{first order} phase transition between the  product state and
the $Z_2$ spin liquid state.  Then the gapped ground state of $H(g_c)$ (at the
transition point) is an example of composite \hBF{3} category, which can be
expressed as a sum ($\oplus$) of a trivial \hBF{3} category and a 2+1D $Z_2$
topological order.

We see that the composite \hBF{} categories are unstable. 
For simplicity, in this paper, we will use ``BF category'' and ``topological
order'' to only refer simple BF category. We will use ``potentially composite
BF category'' to refer the generic BF category that can be simple or
composite.

\section{Examples of BF categories}
\label{app:examples}

\subsection{Examples of exact BF categories (\ie gapped 
phases of qubit models with gapped boundaries)}

\subsubsection{2+1D $\Zb_2$ topological order} 

The 3-dimensional \lBF{3}  category $\EC_3^{\Zb_2}$ in Example \ref{C3Z2} is an
exact \lBF{3}  category.  (Note that an exact \lBF{3}  category is also an
exact \hBF{3}  category.) It has three and only three particle-like topological
excitations labeled by $e$, $v$, and $\eps$.  Those  topological excitations
are their own anti-particles (\ie satisfy a $\Zb_2$ fusion rule).  $e$ and $v$
are bosons, while $\eps$ is a fermion.  Such a  3-dimensional \lBF{3}  category
$\EC_3^{\Zb_2}$ can be realized by a toric code model\cite{K032} in 2+1 dimensions.
As a topological phase, it coincides with the $\Zb_2$-spin-liquid\cite{RS9173,W9164,MS0181}.  
Since \lBF{3}  category corresponds to effective theory in
physics, we write a \lBF{3}  category as a gapped effective theory.  In fact we
have 
\begin{align}
\EC_3^{\Zb_2} &= \text{2+1D $\Zb_2$ gauge theory}, 
\end{align}
where $e$ is the $\Zb_2$ charge, $v$ the $\Zb_2$ vortex, and $\eps$ the bond
state of $e$ and $v$.  Note that $\EC_3^{\Zb_2}$ also correspond to a
$U(1)\times U(1)$ Chern-Simons theory in \eqn{csK} (see
\Ref{BW9045,R9002,FK9169,WZ9290,BM0535,KS1193})
\begin{align}
\EC_3^{\Zb_2} 
&= U(1)\times U(1) \text{ Chern-Simons theory }
\nonumber\\
&\ \ \ \  \text{ with } K\text{-matrix }
K=\begin{pmatrix}
 0&2\\
 2&0\
\end{pmatrix}
\end{align}
where $e$ is the unit-charge of the first $U(1)$ and $v$ the unit-charge of the
second $U(1)$.

As an exact \lBF{3} category, $\EC_3^{\Zb_2}$ must a center of some \lBF{2}
category. In fact $\EC_3^{\Zb_2}$ can be a center of the $\EC_2^{F\Zb_2}$
category discussed in Example \ref{C2FZ2}.

\subsubsection{Double semion model} 

The 3-dimensional \lBF{3}  category $\EC_3^{\Zb_2ds}$ in Example \ref{C3Z2ds}
is another exact \lBF{3}  category.  It has three and only three particle-like
topological excitations labeled by $e$, $v$, and $\eps$.  Those  topological
excitations are their own anti-particles.  $e$ and $v$ are independent semions
with statistics $\pm \pi/2$, while $\eps$ is the bound state of $e$ and $v$ and
is a boson.  Such a  3-dimensional \lBF{3}  category $\EC_3^{\Zb_2ds}$ can be
realized by the so called double-semion string-net model\cite{LWstrnet} or a
double-layer $(2,-2,0)$ fractional quantum Hall state\cite{H8375} in 2+1
dimensions. In fact $\EC_3^{\Zb_2ds}$ is a $U(1)\times U(1)$ Chern-Simons
theory described in \eqn{csK} (see
\Ref{BW9045,R9002,FK9169,WZ9290,BM0535,KS1193})  
\begin{align}
\EC_3^{\Zb_2ds} &= U(1)\times U(1) \text{ Chern-Simons theory }
\nonumber\\
&\ \ \ \  \text{ with } K\text{-matrix }
K=\begin{pmatrix}
 2&0\\
 0&-2\\
\end{pmatrix}
, 
\end{align}
where $e$ is the unit-charge of the first $U(1)$ and $v$ the unit-charge of the
second $U(1)$.

\subsubsection{3+1D $\Zb_2$ topological order} 

The 4-dimensional \lBF{4}  category $\EC_4^{\Zb_2}$ in Example \ref{C4Z2} is
also an exact \lBF{4}  category.  It has one particle-like topological
excitation denoted by $e$ and one string-like topological excitation denoted by
$s$, and no other topological excitations.  Such a 4-dimensional \lBF{4}
category $\EC_4^{\Zb_2}$ can be realized by a $\Zb_2$-spin-liquid\cite{HZW0507}
in 3+1 dimensions. In fact  \begin{align} \EC_4^{\Zb_2} = \text{3+1D $\Zb_2$
gauge theory}, \end{align} where $e$ is the $\Zb_2$ charge and $s$ the $\Zb_2$
vortex-line.

All the above topological states can have a gapped boundary. Thus they are
exact \lBF{n}  categories. They are also exact \hBF{n}  categories, since every
exact \lBF{n}  category is an exact \hBF{n}  category.

\subsection{Examples of closed BF categories (\ie gapped phases of qubit
models)}

\subsubsection{$\nu=1/2$ bosonic Laughlin state} 
\label{nuhalf}

The 3-dimensional \hBF{3}  category $\EC_3^{F\Zb_2s}$ in Example \ref{C3FZ2s} is a
closed \hBF{3}  category.  It has only one particle-like topological excitation
labeled by $e$, which is its own anti-particles and has a semion statistics.
Such a 3-dimensional \hBF{3}  category $\EC_3^{F\Zb_2s}$ can be realized by a
filling-fraction $\nu=1/2$ fractional quantum Hall state, the Laughlin state,
in 2+1 dimensions. In fact
\begin{align}
\EC_3^{F\Zb_2s} &= U(1) \text{ Chern-Simons theory }
\nonumber\\
&\ \ \ \  \text{ with } K\text{-matrix }
K=\begin{pmatrix}
 2\\
\end{pmatrix}
, 
\end{align}
where $e$ is the unit-charge of the $U(1)$.

The closed \hBF{3}  category $\EC_3^{F\Zb_2s}$ illustrates the Conjecture
\ref{msta}.  Every topological excitation in  $\EC_3^{F\Zb_2s}$ (which is the
semion $e$) has a nontrivial mutual statistics with at least one other
topological excitation (which is also $e$).  According to the Conjecture
\ref{msta},  the \hBF{3}  category $\EC_3^{F\Zb_2s}$ should be closed.

We like to point out that 2+1D topological theory with the semion as the only
type of topological excitation is a closed \hBF{3}  category, but it is not
closed \lBF{3}  category.  It is an anomalous \lBF{3}  category.  This is
because the theory has a L-type gravitational anomaly. It cannot be defined as a
lbL system in 2+1D, because the lbL system is required to be well defined on
space-time with any topology that is orientable.  The theory can only be
defined as a boundary of a lbL system in 3+1D.  So the theory corresponds to a
non-closed (\ie anomalous) \lBF{3}  category.  

In contrast, the theory has no H-type gravitational anomaly.  It can be realized
by a qubit model on a 2D lattice.  Hence, the corresponding \hBF{3}  category
$\EC_3^{F\Zb_2s}$ is a closed \hBF{3}  category.  Note that to be a closed
\hBF{3} category, we only require the path integral representation of the
theory to be well defined on space-time which is a mapping torus.  Thus the same theory can be free of H-type gravitational
anomaly but not free of L-type gravitational anomaly.  For more details, see
Sections \ref{WWmdl} and \ref{WWmdl1}.

\subsubsection{A three-fermion $\Zb_2$ topological state} 

The 3-dimensional \hBF{3}  category $\EC_3^{\Zb_2f^3}$ in Example \ref{C3Z2f3}
is the second closed \hBF{3}  category.  It has three and only three
particle-like topological excitations labeled by $e$, $v$ and $\eps$, which are
their own anti-particles.  All the three topological excitations are fermions
with mutual $\pi$ statistics.  Such a 3-dimensional \hBF{3}  category
$\EC_3^{\Zb_2f^3}$ can be realized by a four-layer fractional quantum Hall
state in 2+1 dimensions. In fact (see \Ref{VS1306})  
\begin{align}
\EC_3^{\Zb_2f^3} &= U^4(1) \text{ Chern-Simons theory }
\nonumber\\
&\ \ \ \  \text{ with } K\text{-matrix }
K=\begin{pmatrix}
 2 & 1 &1 &1 \\
 1 & 2 &0 &0 \\
 1 & 0 &2 &0 \\
 1 & 0 &0 &2 \\
\end{pmatrix}
.
\end{align}
Again, the above topological theory has a L-type gravitational anomaly,
although it has no H-type gravitational anomaly.  Thus it is a closed \hBF{3} 
category but not a closed \lBF{3}  category.

\subsubsection{Gapless edge state and chiral central charge} 

The above two examples are not exact \hBF{3}  categories, since they have gapless
edge excitations that are robust against any local interactions on the edge.
The \hBF{3}  category $\EC_3^{F\Zb_2s}$ has an edge with a chiral central charge
$c_R-c_L=1$, and $\EC_3^{\Zb_2f^3}$ has an edge with a chiral central charge
$c_R-c_L=4$.  The non-zero chiral central charge implies gapless edge states.
In fact, there is a quite direct relation between the chiral central charge and
the statistics of the topological excitations:\cite{K062,Wang10}
\begin{align}
 \frac{1}{\sqrt{ \sum_\al d_\al^2 } }\sum_\al
d_\al^2 \ee^{\ii \th_\al} =\ee^{\ii 2\pi (c_R-c_L)/8}
\end{align}
where $\al$ labels all the particle-like topological excitations (including the
trivial one).  Here $ \th_\al$ is the statistical angle and $d_\al$ the quantum
dimension of the topological excitations. Such a relation can help us to
determine which \hBF{3}  category cannot be exact.  Also, when $c_R-c_L \neq 0$ mod
8, the topological theory will have a L-type gravitational anomaly.  For more
details, see Sections \ref{WWmdl} and \ref{WWmdl1}.

\subsection{Examples of anomalous BF categories (\ie gapped anomalous theories)}

\subsubsection{An anomalous \lBF{2} category as an edge of 2+1D $\Zb_2$ topological state} 
\label{FZ2}

The 2-dimensional \lBF{2}  category $\EC_2^{F\Zb_2}$ in Example \ref{C2FZ2} is an
anomalous \lBF{2}  category.  It has only one particle-like topological excitation
labeled by $e$, which is its own anti-particles and has
a Bose statistics.  Such a 2-dimensional \lBF{2} 
category $\EC_2^{F\Zb_2}$ can be realized by a \emph{boundary} of a
$\Zb_2$-spin-liquid\cite{RS9173,W9164,MS0181,K032} (described by $\EC_3^{\Zb_2}$)
in 2+1 dimensions.  In other words,
$$
\cZ_2(\EC_2^{F\Zb_2}) = \text{ 2+1D } \Zb_2 \text{ gauge theory }
=\EC_3^{\Zb_2},
$$
where $e$ is the $\Zb_2$ charge.  The 2+1D  $\Zb_2$-spin-liquid has two
particle-like topological excitations: the $\Zb_2$ charge $e$ and the $\Zb_2$
vortex $v$.  The 2+1D $\Zb_2$-spin-liquid can have many different kinds of
boundaries.  The boundary created by the condensation of the $\Zb_2$ vortices $v$
realizes the \lBF{2}  category $\EC_2^{F\Zb_2}$, which is the simplest example
of UFC.

\subsubsection{An anomalous \lBF{3} category as a boundary of 3+1D $\Zb_2$ topological state} 

Similarly, the 3-dimensional \lBF{3}  category $\EC_3^{F\Zb_2b}$ in Example
\ref{C3FZ2b} is another anomalous \lBF{3}  category.  It has only one particle-like
topological excitation labeled by $e$, which is its own anti-particles and has
a Bose statistics.  Such a  3-dimensional \lBF{3}  category $\EC_3^{F\Zb_2b}$ can be
realized by the \emph{boundary} of a $\Zb_2$-spin-liquid\cite{HZW0507} (described
by $\EC_4^{\Zb_2}$) in 3+1 dimensions.  In other words,
$$
\cZ_3(\EC_3^{F\Zb_2b}) = \text{ 3+1D } \Zb_2 \text{ gauge theory }
=\EC_4^{\Zb_2},
$$
where $e$ is the $\Zb_2$ charge.  The 3+1D  $\Zb_2$-spin-liquid has a particle-like
and a string-like topological excitations $e$ and $s$.  The boundary created by
the condensation of the string-like topological excitations $s$ realizes the \lBF{3} 
category $\EC_3^{F\Zb_2b}$.

\subsubsection{Another anomalous \lBF{3} category as another boundary of 3+1D $\Zb_2$ topological state} 
\label{C3sFZ2long}

The 3-dimensional \lBF{3}  category $\EC_3^{sF\Zb_2}$ in Example \ref{C3sFZ2} is also
an anomalous \lBF{3}  category.  It has only one string-like topological excitation
labeled by $s$, which satisfies a $\Zb_2$ fusion rule.  Such a  3-dimensional \lBF{3} 
category $\EC_3^{sF\Zb_2}$ can be realized by the \emph{boundary} of a
$\Zb_2$-spin-liquid\cite{HZW0507} (described by $\EC_4^{\Zb_2}$) in 3+1 dimensions.
In other words,
$$
\cZ_3(\EC_3^{sF\Zb_2}) = \text{ 3+1D } \Zb_2 \text{ gauge theory }
=\EC_4^{\Zb_2},
$$
where the above string-like topological excitation $s$ correspond to the $\Zb_2$
vortex line in the $\Zb_2$-spin-liquid.  The boundary created by the condensation
of the particle-like topological excitations $e$ in the $\Zb_2$-spin-liquid
realizes the \lBF{3}  category $\EC_3^{sF\Zb_2}$.

\subsubsection{An anomalous \lBF{3} category as a boundary of 3+1D twisted $\Zb_2$ topological state
with emergent fermions} 

The 3-dimensional \lBF{4}  category $\EC_3^{F\Zb_2f}$ in Example \ref{C3FZ2f} is yet
another anomalous \lBF{4}  category.  It has only one particle-like topological
excitation labeled by $e$, which is its own anti-particles and has a Fermi
statistics.  Such a 3-dimensional \lBF{4}  category $\EC_3^{F\Zb_2b}$ can be realized
by the \emph{boundary} of a twisted $\Zb_2$ string-net
state\cite{LW0316,LWstrnet} (described by the \lBF{4}  category $\EC_4^{t\Zb_2}$) in
3+1 dimensions.  In other words,
$$
\cZ_3(\EC_3^{F\Zb_2f}) = \text{3+1D twisted } \Zb_2 \text{ gauge theory }
=\EC_4^{t\Zb_2},
$$
where $e$ is the $\Zb_2$ charge.  The 3+1D twisted $\Zb_2$ gauge theory is a $\Zb_2$
gauge theory where the $\Zb_2$ charge carries a Fermi statistics.  Such a twisted
$\Zb_2$ gauge theory can emerge from a lattice qubit model.\cite{LW0316,LWstrnet}
The 3+1D  twisted $\Zb_2$-spin-liquid also has a particle-like and a string-like
topological excitations $e$ and $s$.  The boundary created by the condensation
of the string-like topological excitations $s$ realizes the \lBF{3}  category
$\EC_3^{F\Zb_2f}$.

It is very strange to see that a simple theory with one fermion is anomalous.
We like to point out that it is easy to realize a gapped effective theory with
a fermion excitation using a \emph{fermionic} local Hamiltonian system in the
same dimension.  However, we cannot realize a gapped state, whose \emph{only}
type of topological excitations is fermionic, using a bosonic local Hamiltonian
system (\ie a lattice qubit model) in the same dimension.  This implies that,
by definition, a gapped theory with only one type of fermion excitation is
anomalous. Such a theory has to be a boundary of a gapped qubit state in
one-higher dimension.

The anomalous \lBF{3}  category $\EC_3^{F\Zb_2f}$ also illustrate the Conjecture
\ref{msta}.  The fermion $e$ in $\EC_3^{F\Zb_2f}$ has a trivial mutual statistics
with all other topological excitations (which is also $e$).  According to the
Conjecture \ref{msta},  the \lBF{3}  category $\EC_3^{F\Zb_2f}$ should be anomalous.

\subsubsection{An anomalous \lBF{4} category as a boundary of 4+1D $\Zb_2$ topological state} 

The 4-dimensional \lBF{4}  category $\EC_4^{mF\Zb_2}$ in Example \ref{C4mFZ2} is an
anomalous \lBF{4}  category.  It has only one membrane-like topological excitation
labeled by $m$, which satisfies a $\Zb_2$ fusion rule.  Such a  4-dimensional \lBF{4} 
category $\EC_4^{mF\Zb_2}$ can be realized by the \emph{boundary} of a
$\Zb_2$-spin-liquid (described by $\EC_5^{\Zb_2}$) in 4+1 dimensions.  In other
words,
$$
\cZ_4(\EC_4^{mF\Zb_2}) = \text{ 4+1D } \Zb_2 \text{ gauge theory }
=\EC_5^{\Zb_2}.
$$
The 4+1D  $\Zb_2$-spin-liquid $\EC_5^{\Zb_2}$ has a particle-like and a
membrane-like topological excitations, $\Zb_2$ charge particle and $\Zb_2$ vortex
membrane.  The above membrane-like topological excitation $m$ corresponds to
the $\Zb_2$ vortex membrane in $\EC_5^{\Zb_2}$.  The 4+1D $\Zb_2$-spin-liquid can
have many different kinds of boundaries.  The boundary created by the
condensation of the $\Zb_2$ charge particles realizes the \lBF{4}  category
$\EC_4^{mF\Zb_2}$.

\subsubsection{An anomalous \lBF{4} category as a boundary of a 4+1D membrane condensed state} 

The 4-dimensional \lBF{4}  category $\EC_4^{sF\Zb_2}$ in Example \ref{C4sFZ2} is our
last example of anomalous \lBF{4}  category.  It has only one string-like topological
excitation labeled by $s$, which satisfies a $\Zb_2$ fusion rule.  Such a
4-dimensional \lBF{4}  category $\EC_4^{sF\Zb_2}$ can be realized by the
\emph{boundary} of a non-oriented membrane condensed state (described by
$\EC_5^{\Zb_2m}$) in 4+1 dimensions.  In other words, 
$$
\cZ_4(\EC_4^{mF\Zb_2}) = \text{ 4+1D membrane condensed state} =\EC_5^{\Zb_2m} .  
$$ 
Such a 4+1D \lBF{4+1}  category is described by the following
effective Lagrangian
\begin{align}
\label{cs5}
 {\cal L}= \frac{K_{IJ}}{4\pi} 
b_{I\mu\nu} \prt_\la b_{J\rho\si}\eps^{\mu\nu\la\rho\si} ,
\end{align}
with
\begin{align}
K=\begin{pmatrix}
 0&2\\
 2&0\
\end{pmatrix} .
\end{align}
The 4+1D  membrane condensed state $\EC_5^{\Zb_2m}$ has two kinds of string-like
topological excitations labeled by $s_1$ and $s_2$.  The above string-like
topological excitations $s$ corresponds $s_1$ in  $\EC_5^{\Zb_2m}$.  The 4+1D
membrane condensed state can have many different kinds of boundaries.  The
boundary created by the condensation of $s_2$ realizes the \lBF{4}  category
$\EC_4^{mF\Zb_2}$.

The gapped effective theories discussed in this section all have   L-type (as
well as the H-type) gravitational anomalies. They represent global
gravitational anomalies.

\section{A brief introduction to category theory}
\label{CatT}

\newcommand\Ob       {\text{Ob}}
\newcommand\Mor       {\hom}

In this section, we will give a brief introduction to category theory, which is
basically an abstract theory about relations (maps) and the composition of the
relations (maps). The language of the category theory is used in the main text
to define \hBF{n} category as an unitary $n$-category.

\begin{defn}
\label{definition-category}
A 1-{\it category} $\EC$ consists of the following data:
\begin{enumerate}
\item A set of objects $\Ob(\EC)$.
\item For each pair $x, y \in \Ob(\EC)$ a set of morphisms
$\Mor_\EC(x, y)$.  \item For each triple $x, y, z\in \Ob(\EC)$
a composition map $\Mor_\EC(y, z) \times \Mor_\EC(x, y) \to
\Mor_\EC(x, z) $, denoted as 
$(f, g) \mapsto f \circ g$ for $f
\in  \Mor_\EC(x, y),\ g \in  \Mor_\EC(y, z)$.
\end{enumerate}
These data are to satisfy the following rules:
\begin{enumerate}
\item For every element $x\in \Ob(\EC)$ there exists a
morphism $\text{id}_x\in \Mor_\EC(x, x)$ such that
$\text{id}_x \circ \phi = \phi$ and $\psi \circ \text{id}_x = \psi $ whenever
these compositions make sense.
\item Composition is associative, i.e., $(f \circ g) \circ h =
f \circ ( g \circ h)$ whenever these compositions make sense.
\end{enumerate}
\end{defn}

We list a few examples of 1-category below: 
\begin{enumerate}
\item The category $\Set$ of sets consists of sets as objects and maps as 1-morphisms. The composition of 1-morphisms is just the usual composition of maps. The identity morphism is just the identity map. 

\item The category $\vect$ of vector spaces over $\Cb$ consists of vector spaces as objects and linear maps as morphisms, i.e. $\hom_{\vect}(x,x)=\End_\Cb(x)$. 

\item The category $\rep_G$ of representations of a group $G$ consists of representations of the group $G$ as objects and linear maps that intertwine the $G$-action as 1-morphisms. 

\end{enumerate}

\begin{rema}
A morphism $f \in  \Mor_\EC(x, y)$ is also referred as an arrow from
$x$ to $y$, denoted by $x\xrightarrow{f} y$. 
A morphism $f : x \to y$ is called an {\it isomorphism} of the category
$\EC$ if there exists a morphism $g : y \to x$
such that $f \circ g = \text{id}_y$ and
$g \circ f = \text{id}_x$.
\end{rema}

\begin{defn}
\label{definition-subcategory}
A {\it subcategory} of a category $\EB$ is
a category $\EA$ whose objects and arrows
form subsets of the objects and arrows
of $\EA$ and such that source, target
and composition in $\EA$ agree with those
of $\EB$. We say $\EA$ is a
{\it full subcategory} of $\EB$ if $\Mor_\EA(x, y)
= \Mor_\EB(x, y)$ for all $x, y \in \Ob(\EA)$.
We say $\EA$ is a {\it strictly full} subcategory of $\EB$
if it is a full subcategory and given $x \in \Ob(\EA)$ any
object of $\EB$ which is isomorphic to $x$ is also in $\EA$.
\end{defn}

\begin{defn}
\label{definition-opposite-category}
For any category $\EC$, the \emph{opposite category} $\EC^\op$ is defined so that $\ob(\EC^\op)=\ob(\EC)$ and $\hom_{\EC^\op}(x,y)=\hom_\EC(y,x)$ for all $x,y\in \ob(\EC)$. 
\end{defn}

\begin{rema}
\label{remark-unique-identity}
It follows directly from the definition that any two identity morphisms
of an object $x$ of $\EA$ are the same. Thus we may and will
speak of {\it the} identity morphism $\text{id}_x$ of $x$.
\end{rema}

\begin{defn}
\label{definition-functor}
A {\it functor} $F : \EA \to \EB$
between two categories $\EA, \EB$ is given by the
following data:
\begin{enumerate}
\item A map $F : \Ob(\EA) \to \Ob(\EB)$.
\item For every $x, y \in \Ob(\EA)$ a map
$F : \Mor_\EA(x, y) \to \Mor_\EB(F(x), F(y))$ such that $f \mapsto F(f)$.
\end{enumerate}
These data should be compatible with composition and identity morphisms
in the following manner: $F(f \circ g) =
F(f) \circ F(g)$ for a composable pair $(f, g)$ of
morphisms of $\EA$ and $F(\text{id}_x) = \text{id}_{F(x)}$.
\end{defn}

\noindent
Note that there is an {\it identity} functor $\text{id}_\EA$ for every category $\EA$. 
In addition, given a functor $G : \EB \to \EC$
and a functor $F : \EA \to \EB$ there is
a {\it composition} functor $G \circ F : \EA \to \EC$
defined in an obvious manner.

\begin{defn}
\label{definition-faithful}
Let $F : \EA \to \EB$ be a functor.
\begin{enumerate}
\item We say $F$ is {\it faithful} if
for any objects $x, y$ of $\Ob(\EA)$ the map
$$
F : \Mor_\EA(x, y) \to \Mor_\EB(F(x), F(y))
$$
is injective.
\item If these maps are all bijective then $F$ is called
{\it fully faithful}.
\item
The functor $F$ is called {\it essentially surjective} if for any
object $y \in \Ob(\EB)$ there exists an object
$x \in \Ob(\EA)$ such that $F(x)$ is isomorphic to $y$ in
$\EB$.
\end{enumerate}
\end{defn}

\begin{defn}
A natural transformation $\beta: F\to G$ between two functors $F,G: \EA \to \EB$ is a set of maps $\{ \beta_x: F(x) \to G(x) \}_{x\in \ob(\EA)}$ such that the following diagram: 
$$
\xymatrix{
F(x) \ar[r]^{\beta_x} \ar[d]_{F(f)} & G(x) \ar[d]^{G(f)} \\
F(y) \ar[r]^{\beta_y} & G(y)
}
$$
is commutative. 
\end{defn}

Let $\one$ be the 1-category with a single object $\bullet$ and a single morphism $\id_\bullet$.

\begin{defn} \label{def:bicat} 
A 2-category $\EE$ consists of a set of objects and a 1-category of morphisms $\hom(A,B)$ for each pair of objects $a$ and $b$ together with 
\begin{enumerate}
\item {\it identity morphism}: there is a functor $\one_a:  \one \to \Mor(a,a)$ for all $a\in \ob(\EE)$.
\item {\it composition functor}: 
\begin{eqnarray}
\circ_{a,b,c}: \hom(a, c) \times \hom(a, b) &\to& \hom(a, c) \nn
(f, g) &\mapsto& f\circ g   \nonumber 
\end{eqnarray}
\item {\it associativity isomorphisms}: for $a, b, c, d\in \ob(\EE)$, there is a natural isomorphism:
$$
\alpha:  \circ_{a,c,d} \circ (\circ_{a,b,c} \times \id) \rightarrow 
\circ_{a,b,d} \circ (\id \times \circ_{b,c,d}). 
$$
\item {\it left and right unit isomorphisms}:
$l: \circ_{a,a,b} \circ (\one_a \times \id ) \to \id$ and 
$r: \circ_{a,b,b} \circ (\id \times \one_b)\to  \id$. 
\end{enumerate}
satisfying the following coherence conditions:
\begin{enumerate}

\item {\it associativity coherence}: 
\begin{equation}  \label{diag:asso-bicat}
\xymatrix{
((e \circ f)\circ g) \circ h \ar[r]^{\alpha 1}  
\ar[d]_{\alpha}  & (e\circ (f \circ g)) \circ h \ar[d]^{\alpha} \\
(e\circ f) \circ (g\circ h) \ar[d]_{\alpha} &   e\circ ((f\circ g) \circ h) \ar[ld]^{1\alpha} \\
e\circ (f \circ (g\circ h)) &  
} 
\end{equation}

\item {\it identity coherence}: 
\begin{equation}  \label{diag:triangle-bicat}
\xymatrix{
(f \circ \one_b) \circ g  \ar[rd]_{r1} \ar[rr]^{\alpha} & & f\circ (\one_b \circ g) \ar[ld]^{1l}  \\
& f\circ g & 
}
\end{equation}
\end{enumerate}
\end{defn}

\begin{rema}
It immediately follows from the axioms of the 2-category that the 1-category $\hom(x,x)$ is a monoidal 1-category with the tensor product given by the composition and the tensor unit by the identity 1-morphism $\id_x:=\one_x(\bullet)$. 
\end{rema}

\section{An extension problem}
\label{extF}

Let $M^n$ be a fiber bundle with $S^1$ as the base and $(n-1)$-dimensional
manifold $F^{n-1}$ as the fiber (\ie $M^n$ is a mapping torus). Assume $M^n$ is
oriented.  Is $M^n$ always a boundary of a manifold $M^{n+1}$, where $M^{n+1}$
is a fiber bundle with $S^1$ as the base and $n$-dimensional manifold $F^n$ as
the fiber.  The answer is no. This is because $F^n$ needs to be the boundary of
$F^{n+1}$ and some manifolds cannot be realized as boundaries of other
manifolds (such as 4-manifolds with non-zero signature).

Now let us consider a slightly different extension problem.  Let $M^n$ be a
fiber bundle with $S^1$ as the base and $(n-1)$-dimensional cell complex
$F^{n-1}$ as the fiber. Assume $M^n$ is oriented.  Is $M^n$ always a boundary
of a cell-complex $M^{n+1}$, where $M^{n+1}$ is a fiber bundle with $S^1$ as
the base and $n$-dimensional cell-complex $F^n$ as the fiber.  The answer is
yes, since we can take $F^{n+1}$ to be the cone on $F^n$.\cite{W1315} 

We also has the following result: the signature is multiplicative in oriented
fiber bundles of odd fiber dimension.\cite{E0919} Combine the above two
results, we conclude that the path integral of Chern-Simons theory can always
be defined in 2+1 dimension where the space-time is a mapping torus. 

\section{The oriented cobordism groups}
\label{cob}

Two oriented smooth $n$-dimensional manifolds $M$ and $N$ are said to be
equivalent if $M\cup (-N)$ is a boundary of another manifold, where $-N$ is the
$N$ manifold with a reversed orientation.  With the multiplication given by the disjoint union, the corresponding equivalence classes has a structure of an Abelian group $\Om^{SO}_n$, which is called the cobordism group
of closed oriented smooth manifolds.  For low dimensions, we have\cite{OCob}\\
$\Om^{SO}_{0}=\Z$, generated by a point.\\
$\Om^{SO}_{1}=0$, since circles bound disks.\\
$\Om^{SO}_{2}=0$, since all oriented surfaces bound handlebodies.\\
$\Om^{SO}_{3}=0$.\\
$\Om^{SO}_{4}=\Z$, generated by $\Cb P^2$.\\
$\Om^{SO}_{5}=\Z_2$, generated by the Wu manifold $SU(3)/SO(3)$,\\
{\white{.}} ~~~~~~ detected by the deRham invariant or\\
{\white{.}} ~~~~~~ Stiefel-Whitney number $\int_M w_2\wedge w_3$.\\
$\Om^{SO}_{6}=0$.\\
$\Om^{SO}_{7}=0$.\\
$\Om^{SO}_{8}=\Z\oplus \Z$ generated by $\Cb P^4$ and $\Cb P^2\times \Cb P^2$.
\\
The free part of the cobordism groups $\Om^{SO}_n$ can be fully detected by the
Pontryagin numbers $P_{n_1n_2\cdots}(M)$.  In particular, we have\cite{S1421}
\begin{align}
P_1(\Cb P^2) &= \int_{\Cb P^2} p_1 =3;
\nonumber\\
P_{1,1}(\Cb P^2\times \Cb P^2) &= \int_{\Cb P^2\times \Cb P^2} p_1^2 =18,  
\nonumber\\
P_{2}(\Cb P^2\times \Cb P^2) &= \int_{\Cb P^2\times \Cb P^2} p_2 =9,  
\nonumber\\
P_{1,1}(\Cb P^4) &=\int_{\Cb P^4} p_1^2 =25,  
\nonumber\\ 
 P_{2}(\Cb P^4) &=\int_{\Cb P^4} p_2 =10 .
\end{align}
We also have the following fundamental theorem:
\begin{thm}
Two closed oriented $n$-manifolds $M_0$ and $M_1$ are cobordism equivalent iff
they have the same Stiefel-Whitney and Pontryagin numbers.
\end{thm}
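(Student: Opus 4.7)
\medskip

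\noindent\textbf{Proof proposal.} The plan is to prove the two directions separately, with the easy direction serving only as a warm-up and the hard direction requiring the full machinery of Thom--Pontryagin theory together with Wall's refinement for the $2$-torsion in $\Omega^{SO}_\ast$.

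First I would dispose of the \emph{only if} direction. Let $W^{n+1}$ be an oriented cobordism with $\partial W = M_0 \sqcup (-M_1)$. The tangent bundle of $W$ restricts on $\partial W$ to $TM_i \oplus \nu_i$ where $\nu_i$ is the (trivial) normal line bundle, so the total Pontryagin and Stiefel--Whitney classes of $W$ pull back to those of $M_i$. Given any characteristic number $c(M)=\langle c,[M]\rangle$ where $c$ is a polynomial in Pontryagin or Stiefel--Whitney classes of $TM$, one has $\langle c(W), [W,\partial W]\rangle$ defined and, by the boundary formula $\partial [W,\partial W]=[\partial W]$, the number $c(M_0)-c(M_1)=\langle c(W),\partial[W,\partial W]\rangle=\langle \delta c(W),[W,\partial W]\rangle=0$ since $c(W)$ is closed. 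So cobordant manifolds share all Pontryagin and Stiefel--Whitney numbers.

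For the \emph{if} direction, the backbone is the Thom--Pontryagin isomorphism $\Omega^{SO}_n \cong \pi_{n+k}(MSO(k))$ (stable range $k\gg n$), identifying oriented cobordism with the stable homotopy of the Thom spectrum $MSO$. The strategy then splits into a rational and a $2$-local part.
\begin{enumerate}
\item \textbf{Rational/free part (Thom).} I would invoke Thom's computation that $\Omega^{SO}_\ast \otimes \mathbb{Q}$ is a polynomial algebra $\mathbb{Q}[x_4,x_8,x_{12},\dots]$ with generators in degrees $4k$ represented by $\mathbb{C}P^{2k}$. The Pontryagin numbers of the monomials $\prod \mathbb{C}P^{2k_i}$ form a nonsingular pairing with the set of partition-indexed Pontryagin numbers $P_{n_1 n_2\cdots}$ (this is a Vandermonde-type computation on the Pontryagin classes of $\mathbb{C}P^{2k}$). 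Hence Pontryagin numbers separate elements of $\Omega^{SO}_\ast/\text{Torsion}$.
\item \textbf{Torsion part (Wall).} I would cite Wall's theorem that $\Omega^{SO}_\ast$ has only $2$-torsion and, more precisely, that the forgetful map $\Omega^{SO}_\ast \to \mathfrak{N}_\ast$ to the unoriented cobordism ring injects the torsion subgroup. Since Thom proved that $\mathfrak{N}_\ast$ is a polynomial $\mathbb{F}_2$-algebra detected entirely by Stiefel--Whitney numbers, the $2$-torsion in $\Omega^{SO}_\ast$ is consequently detected by Stiefel--Whitney numbers.
\end{enumerate}
Combining the two: if $[M_0]-[M_1]\in \Omega^{SO}_n$ has vanishing Pontryagin numbers, it is torsion by step (1); if additionally its Stiefel--Whitney numbers vanish, it is zero by step (2). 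Therefore $M_0$ and $M_1$ are cobordant.

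The main obstacle is step (2): controlling the torsion. One cannot get it from rational Hodge-style arguments; it requires either Wall's structure theorem for $\Omega^{SO}_\ast$ (via an Adams spectral sequence or a careful analysis of $H_\ast(MSO;\mathbb{Z}_{(2)})$ as a comodule over the Steenrod algebra) or an alternative routed through the Atiyah--Hirzebruch spectral sequence. In a self-contained exposition I would treat Wall's result as a black box and focus the proof on extracting the characteristic-number statement from it, since reconstructing Wall's argument would be a substantial detour beyond the topological-order context of this paper.
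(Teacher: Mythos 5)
The paper does not prove this statement at all: it is quoted in the appendix as a known ``fundamental theorem'' (the classical Thom--Wall theorem) and used as a black box, so there is no in-paper argument to compare yours against. Your reconstruction of the standard proof is correct in outline. The ``only if'' direction is the usual observation that characteristic classes of $\partial W$ extend over $W$, so their evaluation on $[\partial W]=\partial[W,\partial W]$ vanishes by exactness ($\delta\circ i^*=0$); your phrasing ``since $c(W)$ is closed'' is the de~Rham shorthand for this, and is fine as a sketch. For the ``if'' direction, the split into (1) Thom's rational computation, under which vanishing of all Pontryagin numbers forces the class $[M_0]-[M_1]$ to be torsion, and (2) Wall's theorem (all torsion is $2$-torsion and the torsion subgroup injects into unoriented cobordism, hence is detected by Stiefel--Whitney numbers) is exactly how the result is established in the literature; your remark that step (2) cannot be obtained by soft arguments and must be imported from Wall's structure theory is accurate. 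One should be aware that citing Wall here is close to citing the theorem itself --- Wall's paper proves the characteristic-number statement as its main corollary --- but since the paper under review likewise treats the result as external input, treating Wall as a black box is the appropriate level of rigor in this context.
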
 \noindent


In \Ref{K7659,M7973,B8337}, some cobordism groups $\Omega_n^{MT}$ of
$n$-dimensional mapping tori are obtained
\begin{align}
\Omega_{2k}^{MT} &= 
\Omega_{2k-1}^{SO}\oplus 
\hat \Omega_{2k}^{SO}, \ \ \text{ for } k>2
\nonumber\\
\Omega_{4}^{MT} &= 0,
\end{align}
where $\hat \Omega_{2k}^{SO}$ is the subgroup of  $\Omega_{2k}^{SO}$ with
vanishing signature.
The structure of $\Omega_{2k}^{MT}$ suggests that we can use the following two types of
Stiefel-Whitney and Pontryagin numbers to detect/distinguish the elements
of $\Omega_{2k}^{MT}$:
\begin{align}
 &\int_M P_{n_1n_2\cdots}, & & \int_M \dd \th \wedge P_{n_1n_2\cdots},
\nonumber\\
 &\int_M W_{n_1n_2\cdots}, & & \int_M \dd \th \wedge W_{n_1n_2\cdots},
\end{align}
where $\dd\th$ is the one form on the base manifold $S^1$ which is parametrized
by $\th \in [0,2\pi)$, $P_{n_1n_2\cdots}$ are combinations of Pontryagin
classes: $P_{n_1n_2\cdots}=p_{n_1}\wedge p_{n_2}\wedge \cdots$, and
$W_{n_1n_2\cdots}$ are combinations of Stiefel-Whitney classes:
$W_{n_1n_2\cdots}=w_{n_1}\wedge w_{n_2}\wedge \cdots$ on $M$.

For mapping tori with odd dimensions, the result is more complicated:
for $k>1$, the homomorphism
\begin{align}
\Omega_{2k+1}^{MT} & \to 
\Omega_{2k+1}^{SO}\oplus \hat \Omega_{2k}^{SO}
\oplus W_{(-)^k}(\Zb,\Zb)
\end{align}
is an isomorphism (for $k$ even) or is injective with cokernel $Z_2$ (for $k$
odd).  Here $W_\pm(\Zb,\Zb)\simeq \Zb^\infty \oplus \Zb_2^\infty\oplus \Zb_4^\infty$ is the Witt group of isometries of free
finite-dimensional $\Zb$-modules with a symmetric (antisymmetric) unimodular
bilinear form. We also have\cite{B8337}
\begin{align}
\Omega_{3}^{MT} &= \Zb^\infty \oplus \Zb_2^\infty,
\end{align}
We note that  $\Omega_{5}^{MT}$ contains $\Omega_{5}^{SO}$.  In fact
$\Omega_{5}^{SO}$ is also generated by the mapping torus of the
complex-conjugation-map $\Cb P^2\to \Cb P^2$.\cite{K1414} This implies that the
invertible \hBF{5} categories contain a $\Zb_2$ class, which is also the
$\Zb_2$ class for invertible \lBF{5} categories.


\vfill\eject

\bibliography{./wencross,./all,./publst,./local} 

\end{document}